\newtheorem{thm}{Theorem}[section]
\newtheorem{pro}{Proposition}[section]
\newtheorem{lem}{Lemma}[section]
\theoremstyle{definition}
\newtheorem{Ap}{Assumption}[section]
\newtheorem{remark}{Remark}[section]
\newtheorem{Def}{Definition}[section]
\newcommand{\mfh}{\mathfrak{h}}
\newcommand{\mfc}{\mathfrak{c}}
\newcommand{\mfa}{\mathfrak{a}}
\newcommand{\mfb}{\mathfrak{b}}
\newcommand{\mfm}{\mathfrak{m}}
\newcommand{\mfv}{\mathfrak{v}}
\newcommand{\mfC}{\mathfrak{C}}
\newcommand{\bbA}{\boldsymbol{A}}
\newcommand{\bbB}{\boldsymbol{B}}
\newcommand{\bbC}{\boldsymbol{C}}
\newcommand{\bbF}{\boldsymbol{F}}
\newcommand{\bbI}{\boldsymbol{I}}
\newcommand{\bbM}{\boldsymbol{M}}
\newcommand{\bbP}{\boldsymbol{P}}
\newcommand{\bbQ}{\boldsymbol{Q}}
\newcommand{\bbR}{\boldsymbol{R}}
\newcommand{\bbS}{\boldsymbol{S}}
\newcommand{\bbT}{\boldsymbol{T}}
\newcommand{\bbU}{\boldsymbol{U}}
\newcommand{\bbV}{\boldsymbol{V}}
\newcommand{\bbW}{\boldsymbol{W}}
\newcommand{\bbX}{\boldsymbol{X}}
\newcommand{\bba}{\boldsymbol{a}}
\newcommand{\bbb}{\boldsymbol{b}}
\newcommand{\bbc}{\boldsymbol{c}}
\newcommand{\bbe}{\boldsymbol{e}}
\newcommand{\bbf}{\boldsymbol{f}}
\newcommand{\bbg}{\boldsymbol{g}}
\newcommand{\bbh}{\boldsymbol{h}}
\newcommand{\bbm}{\boldsymbol{m}}
\newcommand{\bbq}{\boldsymbol{q}}
\newcommand{\bbr}{\boldsymbol{r}}
\newcommand{\bbt}{\boldsymbol{t}}
\newcommand{\bbu}{\boldsymbol{u}}
\newcommand{\bbv}{\boldsymbol{v}}
\newcommand{\bbw}{\boldsymbol{w}}
\newcommand{\bbx}{\boldsymbol{x}}
\newcommand{\bby}{\boldsymbol{y}}
\newcommand{\bbz}{\boldsymbol{z}}
\newcommand{\bbGa}{\boldsymbol{\Gamma}}
\newcommand{\bbLa}{\boldsymbol{\Lambda}}
\newcommand{\bbPhi}{\boldsymbol{\Phi}}
\newcommand{\bbPi}{\boldsymbol{\Pi}}
\newcommand{\bbTheta}{\boldsymbol{\Theta}}
\newcommand{\bbdel}{\boldsymbol{\delta}}
\newcommand{\bbve}{\boldsymbol{\varepsilon}}
\newcommand{\bbtau}{\boldsymbol{\tau}}
\newcommand{\mbC}{\mathbb{C}}
\newcommand{\mbE}{\mathbb{E}}
\newcommand{\mbP}{\mathbb{P}}
\newcommand{\mbN}{\mathbb{N}}
\newcommand{\mbR}{\mathbb{R}}
\newcommand{\mbS}{\mathbb{S}}
\newcommand{\mrO}{\mathrm{O}}
\newcommand{\mro}{\mathrm{o}}
\newcommand{\mcA}{\mathcal{A}}
\newcommand{\mcB}{\mathcal{B}}
\newcommand{\mcC}{\mathcal{C}}
\newcommand{\mcD}{\mathcal{D}}
\newcommand{\mcE}{\mathcal{E}}
\newcommand{\mcF}{\mathcal{F}}
\newcommand{\mcG}{\mathcal{G}}
\newcommand{\mcL}{\mathcal{L}}
\newcommand{\mcN}{\mathcal{N}}
\newcommand{\mcP}{\mathcal{P}}
\newcommand{\mcR}{\mathcal{R}}
\newcommand{\mcS}{\mathcal{S}}
\newcommand{\mcT}{\mathcal{T}}
\newcommand{\mcV}{\mathcal{V}}
\newcommand{\mcW}{\mathcal{U}}
\newcommand{\msB}{\mathscr{B}}
\newcommand{\msD}{\mathscr{D}}
\newcommand{\msO}{\mathscr{O}}
\newcommand{\tr}{\operatorname{Tr}}
\newcommand{\diag}{\operatorname{diag}}
\newcommand{\Cov}{\operatorname{Cov}}
\newcommand{\Var}{\operatorname{Var}}
\title{Alignment and matching tests for high-dimensional tensor signals by tensor contraction}
\author{Ruihan Liu\thanks{E-mail: rhliu@connect.hku.hk}\quad Zhenggang Wang\thanks{E-mail: zggwang@seu.edu.cn}\quad Jianfeng Yao\thanks{E-mail: jeffyao@cuhk.edu.cn}\\
{\small $^*$Department of Statistics and Actuarial Science, The University of Hong Kong.}\\
{\small $^\dagger$School of Statistics and Data Science, Southeast University.}\\
{\small $^\ddagger$School of Data Science, The Chinese University of Hong Kong (Shenzhen).}}
\date{}
\begin{document}
\maketitle
\begin{abstract}
  We consider two hypothesis testing problems for low-rank and
  high-dimensional tensor  signals, namely the tensor signal
  alignment and tensor signal matching problems. These problems are
  challenging due to the high dimension of tensors  and the lack of suitable
  test statistics. By exploiting a recent tensor contraction method, we propose and validate relevant test statistics using eigenvalues of a 
  data matrix resulting from the tensor contraction.
  The matrix entries exhibit long-range dependence, which  makes the
  analysis of the matrix challenging, involved,  and distinct from standard
  random matrix theory. Our approach provides a novel framework for addressing hypothesis testing problems in the context of high-dimensional tensor signals. 
\end{abstract}

\noindent{\bf Keywords:} high-dimensional tensors; low-rank
tensors; tensor signal alignment; tensor signal matching; tensor
contraction; linear spectral statistics; random matrix theory.

\medskip

\noindent{\bf MSC 2010 subject classifications:} Primary 62H15; Secondary 60B20,62H10

\section{Introduction}\label{sec:intro}

In the era of ``big data", the analysis of high-dimensional tensor
data has become increasingly important in various fields, including
genomics, economics, image analysis, and machine learning. High-order
tensor data often exhibit intrinsic low-rank structures
\cite{kolda2009tensor,udell2019big}. To capture these low-rank
structures, the
``signal plus noise"   tensor model has been widely adopted \cite{lesieur2017statistical,han2022tensor,huang2022power}.
Let $n_1,\ldots,n_d\in\mathbb{Z}^+$ denote $d$ dimension parameters, where $d\ge 3$, and let $N=n_1+\cdots+n_d$. The $d$-fold rank-$R$ spiked tensor model is defined as:
\begin{align}
  \bbT=\sum_{r=1}^R\beta_r\bbx^{(r,1)}\otimes\cdots\otimes\bbx^{(r,d)}+\frac{1}{\sqrt{N}}\bbX,
  \label{Main of Eq of general spiked tensor model}
\end{align}
where \(\beta_1\geq\cdots\geq\beta_R>0\) are the signal-to-noise
ratios (SNRs), 
\(\{\bbx^{(1,l)},\cdots,\bbx^{(R,l)}\}\) are mutually orthogonal unit
vectors in
\(\mbR^{n_l}\) for each \(1\leq l\leq d\) \cite{kolda2001orthogonal},
and
$\bbX=(X_{i_1\cdots i_d})_{n_1\times\cdots\times  n_d}\in\mbR^{n_1\times\cdots\times n_d}$ is a noise tensor with
independent and identically distributed (i.i.d.) entries, each having mean zero and unit variance.
Specifically, the rank-1 spiked tensor model \cite{richard2014statistical} is given by:
\begin{align}
\bbT=\beta\bbx^{(1)}\otimes\cdots\otimes\bbx^{(d)}+\frac{1}{\sqrt{N}}\bbX,\label{Main of Eq of spiky tensor model}
\end{align}
where $\beta>0$ is the single SNR of the model.

The primary focus of most of the existing literature is on recovering the signal vectors $\{\bbx^{(1,l)}, \ldots, \bbx^{(R,l)}\},~ 1 \leq l \leq d$ from the observed tensor $\bbT$, with a particular emphasis on the computational efficiency of recovery algorithms. In the case of the rank-1 model~\eqref{Main of Eq of spiky tensor model} with symmetric and i.i.d. Gaussian noise \(\bbX\), \cite{hillar2013most} showed that computing the maximum likelihood (ML) estimator of \(\beta\bbx^{(1)}\otimes\cdots\otimes\bbx^{(d)}\) is in general NP-hard, and \cite{arous2019landscape} provided a comprehensive discussion on the relationship between the computational complexity of the ML estimator and the value of the SNR $\beta$. To reduce the computational complexity, \cite{richard2014statistical} proposed  the use of  the power iteration method and approximate message passing (AMP) algorithms. These two methods  have been extensively investigated by \cite{lesieur2017statistical,chen2019phase,perry2020statistical,jagannath2020statistical,arous2020algorithmic} for AMP and by \cite{huang2022power} for power iteration. Moreover, \cite{richard2014statistical}  introduced the tensor unfolding method, which involves unfolding the tensor data $\bbT$ into matrices, enabling the recovery of signals through Principal Component Analysis (PCA).  \cite{arous2021long}  conducted a comprehensive study of  the tensor unfolding method for the general asymmetric model~\eqref{Main of Eq of spiky tensor model} under fairly general noise distribution assumptions.

However, when the SNRs fall below the phase transition threshold, these recovery methods often fail. In such cases, a more modest but achievable goal is to test the alignment of a signal in $\bbT$ with a given tensor $\bba^{(1)} \otimes \cdots \otimes \bba^{(d)}$, where ${\bba^{(l)}, 1 \leq l \leq d}$ are $d$ given directional unit vectors in $\mathbb{R}^{n_l}$. This leads to the following {\em tensor signal alignment test} between two hypotheses:
\begin{align}
  \begin{array}{l}
    H_0:\bba^{(l)}\perp\bbx^{(r,l)}\ \ {\rm for\ }1\leq l\leq d,\ 1\leq r\leq R.\\
    H_1:{\rm there\ exists\ at\ least\ one\ }1\leq l\leq d,\ 1\leq r\leq R\ {\rm such\ that\ \bba^{(l)}\not\perp\bbx^{(r,l)}}.
  \end{array}\label{Main of Eq of hypothesis test 1}
\end{align}
{Although the tensor signal alignment test appears more tractable than signal recovery, to the best of our knowledge, there is no rigorously justified procedure for addressing this problem. The main obstacle is the absence of a test statistic whose null distribution is tractable in the high-dimensional setting.}

{ In practice, the hypothesis test \eqref{Main of Eq of hypothesis test 1} arises naturally in tensor-based classification problems. A representative example is human action recognition from video data \cite{ji20123d,sun2015human}. Since a video inherently forms a 3D tensor (a spatial 2D field evolving along a temporal dimension), much of the existing literature (e.g., \cite{kim2008canonical,lu2008mpca,guha2011learning,yang2014sparse,yang2017tensor}) assumes that human actions admit a low-rank structure as in \eqref{Main of Eq of general spiked tensor model} and applies supervised tensor learning for classification \cite{tao2005supervised,camps2005kernel,lu2008mpca,kim2008canonical,guha2011learning,yang2014sparse}. This framework proceeds in two stages: first, low-rank signal structures are recovered from labeled training samples to obtain reference directions $\bba^{(l)}$; then, new samples are classified by testing whether their signals align with these learned references. This second stage corresponds precisely to the hypothesis test \eqref{Main of Eq of hypothesis test 1}. Beyond video analysis, the same tensor-based framework applies to other high-dimensional data. In neuroimaging, for instance, diagnosing brain disorders often relies on 3D functional magnetic resonance imaging (fMRI) data, where classification is performed by testing whether the low-rank structures of new scans align with those from labeled patient or control groups \cite{cichocki2011tensor,he2014dusk,sun2017store,he2017multi}.

Although deep neural networks have achieved strong empirical performance in tensor-based classification tasks \cite{makantasis2015deep,vakalopoulou2015building,sun2015human}, they are typically employed as ``black boxes'' \cite{benitez1997artificial} without explicit modeling assumptions such as \eqref{Main of Eq of general spiked tensor model}. This lack of interpretability can be a significant drawback in applications where understanding the underlying signal structure is essential \cite{makantasis2018tensor}. In contrast, classical statistical learning methods \cite{tan2012logistic,zhou2013tensor,sun2017store,makantasis2018tensor} offer interpretable models but may lack power when the signal-to-noise ratio is low. Our goal is to bridge this gap by developing a theoretically grounded test statistic for \eqref{Main of Eq of hypothesis test 1} that maintains good power even when the SNRs are relatively small.

Our strategy leverages random matrix theory by reducing the tensor to a lower-dimensional matrix whose spectral properties encode signal alignment. Specifically, we employ the tensor contraction operator $\bbPhi_d$ from \cite{Seddik2024203}, which maps a tensor $\bbT$ and reference directions $\{\bba^{(l)}\}$ to a symmetric 
$N \times N$ matrix by collapsing all but two dimensions along the given directions:}

\begin{align}
  &\bbPhi_d:\mbR^{n_1\times\cdots\times n_d}\times\mathbb{S}^{n_1-1}\times\cdots\times\mathbb{S}^{n_d-1}\longrightarrow\mbR^{N\times N},\notag\\
  &(\bbT,\boldsymbol{a}^{(1)},\cdots,\boldsymbol{a}^{(d)})\longmapsto
    \bbR=\left(\begin{array}{cccc}
            \boldsymbol{0}_{n_1\times n_1}&\bbT^{12}&\cdots&\bbT^{1d}\\(\bbT^{12})'&\boldsymbol{0}_{n_2\times n_2}&\cdots&\bbT^{2d}\\\vdots&\vdots&\ddots&\vdots\\(\bbT^{1d})'&(\bbT^{2d})'&\cdots&\boldsymbol{0}_{n_d\times n_d}
          \end{array}\right).\label{Main of Eq of tensor contraction}
\end{align}
Here, for a pair of indices  \(1\leq
j_1<j_2\leq d\),
$\bbT^{j_1j_2}$ is an $n_{j_1}\times n_{j_2}$ matrix, called the {\em  second order contraction
matrix of $\bbT$ along the directions
$\{\bba^{(j_1)},\bba^{(j_2)}\}$}, as introduced in \cite{lim2005singular}.
It is defined by: 
\begin{align}
  \bbT^{j_1j_2}=\Bigg[\sum_{i_j=1,j\neq j_1,j_2}^{n_j}T_{i_1\cdots
  i_d}\prod_{l=1,l\neq j_1,j_2}^da_{i_l}^{(l)}\Bigg]_{n_{j_1}\times
  n_{j_2}}. \label{Main of Eq of operator 1}
\end{align}
{  Since $\bbPhi_d$ is linear in $\bbT$, under \eqref{Main of Eq of general spiked tensor model},} we have 
\begin{align}
\bbR&=\bbPhi_d(\bbT,\bba^{(1)},\cdots,\bba^{(d)})\notag \\
    &=\sum_{r=1}^R\beta_r\bbPhi_d(\bbx^{(r,1)}\otimes\cdots\otimes\bbx^{(r,d)},\bba^{(1)},\cdots,\bba^{(d)})+\frac{1}{\sqrt{N}}\bbPhi_d(\boldsymbol{X},\bba^{(1)},\cdots,\bba^{(d)}),\notag\\
    & = \bbS + \bbM.  \label{Main of Eq of R and M}
\end{align}  {where $\bbS$ is the contracted signal matrix and $\bbM$ is the residual noise matrix. Under $H_0$, we have $\bbS = \mathbf{0}$ so that $\bbR = \bbM$, whereas under $H_1$, the signal component $\bbS \neq \mathbf{0}$ shifts $\bbR$ away from pure noise. Further, since the squared Frobenius norm $\|\bbS\|_2^2 =\sum_{i,j=1}^NS_{i,j}^2= \sum_{r=1}^R\beta_r^2\sum_{l=1}^d\langle\bbx^{(r,l)},\bba^{(l)}\rangle^2$ directly encodes the alignment terms in \eqref{Main of Eq of hypothesis test 1}, we use $\|\bbR\|_2^2$ as our test statistic. Moreover, $\|\bbR\|_2^2$ is a linear spectral statistic (LSS) of $\bbR$, enabling us to leverage random matrix theory for establishing its asymptotic distribution.
} Central limit theorems (CLT) for LSS of random matrices have been extensively studied; see \cite{bai2004clt,10.1214/09-AOP452,BJYZ09,PanZhou08,ZBY15a} for classical references.

When $d=2$, the tensor model \eqref{Main of Eq of general spiked tensor model} reduces to a spiked random matrix. In this case, the signal alignment test \eqref{Main of Eq of hypothesis test 1} can be viewed as a tensor extension of existing tests for detecting spikes along specified directions \cite{hallin2010optimal,silin2018bayesian,naumov2019bootstrap,silin2020hypothesis,bao2022statistical}.  However, for $d \geq 3$, a fundamental difference arises: the entries $\bbT^{j_1j_2}$ of the contracted matrix $\bbR$ become correlated. This substantially complicates the analysis of its eigenvalue distribution and the asymptotic behavior of $\widehat{T}_N^{(d)}$, necessitating the development of new techniques.

We first establish that the eigenvalue distribution of $\bbR$ has a limit $\nu$ when the $d$ dimensions $\{n_j\}$ grow to infinity at comparable rates. Based on this, we introduce the test statistic
\begin{align}
  \widehat{T}_N^{(d)}=\Vert\bbR\Vert_2^2-N\int_{-\infty}^{\infty}x^2\nu(dx).\label{Main of Eq of hat TN (d)}
\end{align}
We show that after proper centering and scaling,
\begin{align}
\frac{\widehat{T}_N^{(d)}-\xi_N^{(d)}}{\sigma_N^{(d)}}\overset{d}{\longrightarrow}\mcN(0,1) \quad \text{under } H_0,\label{Main of Eq of 1}
\end{align}
where $\xi_N^{(d)}$ and $\sigma_N^{(d)}$ are explicit parameters that can be computed numerically. Under the alternative $H_1$, a positive mean drift $\mcD^{(d)}/\sigma_N^{(d)}$ emerges:
\begin{align}
\frac{\widehat{T}_N^{(d)}-\xi_N^{(d)}}{\sigma_N^{(d)}}-\mcD^{(d)}/\sigma_N^{(d)}\overset{d}{\longrightarrow}\mcN(0,1);\label{Main of Eq of 2}
\end{align}
see \S\ref{sec of hypothesis} for details. The asymptotic normality in \eqref{Main of Eq of 1} enables test construction at any significance level $\alpha$, while \eqref{Main of Eq of 2} guarantees positive power depending on the magnitude of $\mcD^{(d)}/\sigma_N^{(d)}$.


The main contributions of this article are as follows.
\begin{enumerate}[(i)]
\item 
We conduct an in-depth analysis of the contracted data matrix $\bbR$, {whose entries display significant correlations and deviate from traditional random matrix models in which the elements of the noise matrix are typically assumed to be independent of one another,}
  including\\[1mm]
  (a) The characterization of its limiting spectral distribution (LSD) through a vector Dyson equation, along with entrywise behavior of the resolvent. \\[1mm]
  (b) The establishment of CLT for a broad class of its LSS.
  
\item We establish a rigorous procedure for  the tensor signal
  alignment test \eqref{Main of Eq of hypothesis test 1} by establishing
  the asymptotic normality of the test statistic  and deriving its power
  function under a general  alternative hypothesis.

\item { We also consider the setting where the prior information is given as tensor data rather than directional vectors. In this case, we test for signal matching between high-dimensional low-rank tensors against a reference tensor, requiring a mild condition on the reference signal strength. The procedure follows a two-step approach analogous to the tensor signal alignment test and is detailed in \S\ref{sec of generalized procedure}.}
\end{enumerate}

The contributions presented in this article are novel. One notable innovation is that our tensor signal model in \eqref{Main of Eq of general spiked tensor model} allows for non-Gaussian and non-symmetric signals. This sets our work apart from most existing literature on high-dimensional tensor data models, which typically assumes symmetry or Gaussianity for either the tensor signal, the tensor noise, or both.


The rest of this article is organized as follows. In $\S\ref{sec of LSD}$, we establish several important asymptotic spectral properties of the random matrix
$\bbR$, including its LSD, vector Dyson equation, and entrywise behaviors. For the sake of clarity, in $\S\ref{Sec of main d=3}$, we consider the case of 3-fold tensors and establish a CLT for linear spectral statistics (LSS) of the matrix $\bbR$. The corresponding CLT for the general case of $d$-fold tensors is  presented later in $\S\ref{Sec of general Main}$. In $\S\ref{sec of hypothesis}$, we establish new procedures
for testing tensor signal alignments and tensor signal matching. {A real data analysis is presented in \S\ref{Sec of real data}.} {Finally, numerical experiments evaluating the performance our CLT and two testing procedures introduced in $\S\ref{sec of hypothesis}$ and the proofs of all our results are included in the Supplementary Materials \cite{supplementary}.}

We end this section with some notations.
\begin{enumerate}[(i)]
\item Given \(z\in\mbC\), \(\Re(z)\) and \(\Im(z)\) are the real and imaginary part of \(z\), respectively.
\item We use a vector in \(\mbR^{n_1\times\cdots\times n_d}\) to represent the \(d\)-fold real tensor with size \(n_1\times\cdots\times n_d\).
\item Given a matrix \(A=[a_{i,j}]_{n\times n}\), \(\tr(A)=\sum_{i=1}^n a_{i,i}\) and \(A'\) denotes the transpose of \(A\) and \(\diag(A)\) is the diagonal matrix made with the main diagonal of \(A\). Moreover, \(\Vert A\Vert\) denotes the spectral norm of \(A\) and \(\Vert A\Vert_k=(\sum_{i,j}|a_{i,j}|^k)^{1/k}\) for \(k\in\mbN^+\). For any tensor or matrix, \(\bbT\in\mbR^{n_1\times\cdots\times n_d}\), \(\Vert\bbT\Vert_k=(\sum_{i_1\cdots i_d}|T_{i_1\cdots i_d}|^k)^{1/k}\).
\item Given an integrable random variable \(X\), \(X^c:=X-\mathbb{E}[X]\) denotes its centered version.
\item Given \(\eta>0\), define \(\mathbb{C}_{\eta}^+:=\{z\in\mathbb{C}:\Im(z)>\eta\}\) and \(\mathbb{C}^+:=\{z\in\mathbb{C}:\Im(z)>0\}\).
\item Given two matrices \(A,B\) of size \(m\times n\), when \(B_{ij}\neq0\) for all \(i,j\),
  \begin{align}
    \frac{A}{B}=[A_{ij}B_{ij}^{-1}]_{m\times n}.\label{Main of Eq of division of matrices}
  \end{align}
\item Let \(X=(X_n)\) and \(Y=(Y_n)\) be two sequences of nonnegative random variables. We say $Y$ stochastically dominates $X$ if for all (small) \(\epsilon>0\) and (large) \(D>0\),
  $$\mathbb{P}(X_n>n^{\epsilon}Y_n)\leq n^{-D}$$
  when \(n\geq n_0(\epsilon,D)\) is sufficiently large. This property is denoted by \(X\prec Y\) or \(X=\mrO_{\prec}(Y)\).
\end{enumerate}

\section{The limiting spectral distribution of the matrix \texorpdfstring{$\bbM$}{M}}\label{sec of LSD}

In this section, we explore several key spectral properties of the matrix \(\bbM\) (\ref{Main of Eq of R and M}). The matrix $\bbM$ is symmetric with mean-zero entries of variance $N^{-1}$, as in a standard Wigner matrix; however, unlike the Wigner case, the entries of $\bbM$ are generally correlated.
 For example, when \(d=3\), given \(\{i,j,k\}=\{1,2,3\}\), the \((i,j)\)-th block of \(\bbM\) is \(N^{-1/2}\bbX^{ij}\) by (\ref{Main of Eq of tensor contraction}), where the \((s_1,t_1)\)-th entry of \(\bbX^{ij}\) is \(X_{s_1,t_1}^{ij}=\sum_{i_k=1}^{n_k}X_{s_1t_1i_k}a_{i_k}^{(k)}\), we have
\begin{align}
  \Cov(X_{s_1,t_1}^{i j},X_{s_2,t_2}^{ik})=\delta_{s_1,s_2}a_{t_1}^{(j_1)}a_{t_2}^{(j_2)},\label{Main of Eq of cov of entries d=3}
\end{align}
where \(\delta_{s_1,s_2}\) is the Kronecker delta. Therefore, elements in the same row of \(\bbM\) are dependent. Likewise, when \(d\geq4\), it can be shown that
\begin{align}
  \Cov(X_{s_1,t_1}^{i_1 j_1},X_{s_2,t_2}^{i_2 j_2})=a_{s_1}^{(i_1)}a_{s_2}^{(i_2)}a_{t_1}^{(j_1)}a_{t_2}^{(j_2)},\label{Main of Eq of cov of entries d=4}
\end{align}
where \(1\leq i_1<j_1\leq d, 1\leq i_2<j_2\leq d\) such that \((i_1,j_1)\neq(i_2,j_2)\). In this case, the dependence is even more widespread.

{ Several recent works \cite{che2017universality,ajanki2019stability,alt2020correlated} have studied symmetric random matrices with correlated entries, typically under the assumption that correlations decay rapidly with the distance between indices. Our model $\bbM$, however, does not satisfy such a decay condition. Indeed, by \eqref{Main of Eq of cov of entries d=3} and \eqref{Main of Eq of cov of entries d=4}, if $\bba^{(j)} = (n_j^{-1/2}, \ldots, n_j^{-1/2})'$ for all $j$, then the entries of $\bbM$ exhibit long-range correlations.}

To study the LSD of $\bbM$, we start by examining its resolvent matrix:
\begin{align}
  \bbQ(z):=(\bbM-z\bbI_N)^{-1},\quad z\in\mbC^+.\label{Main of Eq of Q}
\end{align}
Similar to (\ref{Main of Eq of tensor contraction}), we split \(\bbQ(z)=[\bbQ^{ij}(z)]_{d\times d}\) into \(d\times d\) blocks where \(\bbQ^{ij}(z)\in\mbC^{n_i\times n_j}\). For \(i\)-th diagonal block \(\bbQ^{ii}(z), 1\leq i\leq d\), let
\begin{align}
  \left\{\begin{array}{ll}
           \rho_i(z):=N^{-1}\tr(\bbQ^{ii}(z)),&\rho(z):=\sum_{i=1}^d\rho_i(z),\\
           \mfm_i(z):=\mathbb{E}[\rho_i(z)],&\mfm(z):=\sum_{i=1}^d\mfm_i(z)=\mbE[\rho(z)].
         \end{array}\right.\label{Main of Eq of mi}
\end{align}
{
While the Stieltjes transform $\rho(z)$ characterizes the limiting spectral distribution, establishing our CLT requires finer control over the entire resolvent of $\bbM$. To achieve this, we analyze the vector Dyson equation induced by $\bbM$. Let $\bbm(z)=(\mfm_1(z),\ldots,\mfm_d(z))'$ where $\mfm_i(z)=\mbE[\rho_i(z)]$. We show that $\bbm(z)$ approximately satisfies
\begin{align}
-\frac{\mfc}{\bbm(z)}=z+\bbS_d\bbm(z)+\boldsymbol{\varepsilon}(z),\quad\bbS_d:=\boldsymbol{1}_{d\times d}-\bbI_d,\label{Main of Eq of approximation Dyson}
\end{align}
where the perturbation vector $\bbve(z)\to\mathbf{0}$ as $N\to\infty$. The limiting behavior is thus governed by the exact vector Dyson equation
\begin{align}
-\frac{\mfc}{\bbg(z)}=z+\bbS_d\bbg(z),\label{Main of Eq of MDE 3 order}
\end{align}
whose solution $\bbg(z)=(g_1(z),\ldots,g_d(z))'$ determines the LSD $\nu$ of $\bbM$. Beyond identifying $\nu$, our analysis establishes stability of \eqref{Main of Eq of MDE 3 order} under perturbations—a property essential for the CLT, as the inverse stability operator $\bbPi^{(3)}(z_1,z_2)^{-1}$ appears explicitly in the mean and variance expressions (Propositions~\ref{Main of Thm of Mean function} and~\ref{Main of Thm of Variance}). See Theorems~\ref{Main of Thm of Dyson} and~\ref{Main of Thm of approximation} for precise statements.}

{ Moreover, the resolvent $\mathbf{Q}(z)$ is approximated by a block-diagonal matrix whose $i$-th block is a diagonal matrix $\mathfrak{c}_i^{-1} g_i(z)\operatorname{I}_{n_i}$, so the vector Dyson equation solution $\mathbf{g}(z)$ directly determines the leading-order diagonal structure of the resolvent.}

The following assumptions are made for the general tensor model~\eqref{Main of Eq of general spiked tensor model}.

\begin{Ap}[Subexponential tails]\label{Main of Aq of general noise}
  The noise variables \(X_{i_1\cdots i_d}\) are i.i.d. with mean zero, unit variance and subexponential tails, that is, for some \(\theta>0\), 
  $$\limsup_{x\to\infty}e^{x^{\theta}}\mbP(|X_{i_1\cdots i_d}|\geq x)<\infty,$$
  
\end{Ap}

\begin{Ap}[High-dimensionality scheme]\label{Main of Aq of dimension}
  The tensor dimensions \(n_1,\cdots,n_d\) all tend to infinity in such a way that 
  $$\lim_{n_1,\cdots,n_d\to\infty}\frac{n_j}{n_1+\cdots+n_d}=\mfc_j\in(0,1),\quad 1\leq j\leq d.$$
  This limiting framework is denoted simply  as \(N:=n_1+\cdots+n_d\to\infty\) and we define
  \begin{align}
      \mfc=(\mfc_1,\cdots,\mfc_d)'.\label{Main of Eq of mfc}
  \end{align}
\end{Ap}\noindent

\begin{thm}\label{Main of Thm of Dyson}
  Under Assumptions {\rm \ref{Main of Aq of general noise}} and {\rm \ref{Main of Aq of dimension}}, we have
  \begin{enumerate}
  \item The vector Dyson equation {\rm (\ref{Main of Eq of MDE 3 order})} admits a unique analytic solution \(\bbg(z)=(g_1(z),\cdots,g_d(z))'\) on \(\mbC^+\).
  \item The function
    \begin{align}
      g(z)=\sum_{i=1}^d g_i(z).\label{Main of Eq of g(z)}
    \end{align}
    is the Stieltjes transform of a probability measure \(\nu\). Furthermore, the support of \(\nu\) is bounded, denoted by \([-\zeta,\zeta]\), where the edge $\zeta$ is nonnegative, finite, and determined by
    \begin{align}
      \zeta:=\inf\Big\{E>0:\lim_{\eta\downarrow0}\Im(g(E+{\rm i}\eta))=0\Big\}.\label{Main of Eq of support boundary}
    \end{align}
  \item The Stieltjes transform \(g(z)\) has a unique singularity at \(z=0\), i.e., \(\nu\) has a unique point mass at \(0\), if and only if \(\max\{\mfc_1,\cdots,\mfc_d\}\geq1/2\).
  \end{enumerate}
\end{thm}
The proof of Theorem \ref{Main of Thm of Dyson} is provided in \S\ref{Sec of Dyson} and \S\ref{Sec of LSD} of the supplement. The next theorem establishes the convergence of the LSD to the measure determined by  \eqref{Main of Eq of MDE 3 order}.
\begin{thm}\label{Main of Thm of approximation}
  Under Assumptions {\rm \ref{Main of Aq of general noise}} and {\rm \ref{Main of Aq of dimension}}, {let \(\mfv_d=2(d-1)\sum_{i=1}^d\sqrt{\mfc_i}\), we have
  $$\mbP(\Vert\bbM\Vert>\mfv_d+t)= \mro(N^{-l})$$
  for any \(t,l>0\). Next, for any \(\eta_0>0\), define the region
  \begin{align}
    \mcS_{\eta_0}:=\{z\in\mbC^+:|\Re(z)|,|\Im(z)|\leq\eta_0^{-1},{\rm dist}(z,[-\max\{\mfv_d,\zeta\},\max\{\mfv_d,\zeta\}])>\eta_0\},\label{Main of Eq of stable region}
  \end{align}}
  the vector function \(\bbve(z)\)
  \begin{align}
      \bbve(z):=\frac{\mfc}{\bbm(z)}+z+\bbS_d\bbm(z),\quad z\in\mcS_{\eta_0}\label{Main of Eq of perturbation}
  \end{align}
  satisfies that for any \(\omega\in(1/2-\delta,1/2)\), where $\delta>0$ is a sufficiently small constant
  $$\sup_{z\in\mcS_{\eta_0}}\Vert\bbve(z)\Vert_{\infty}=\mrO(\eta_0^{-11}N^{-2\omega}),$$
  where \(\bbm(z)\) and \(\bbS_d\) are defined in {\rm (\ref{Main of Eq of mi})} and {\rm (\ref{Main of Eq of approximation Dyson})}, then 
$$\lim_{N\to\infty}\sup_{z\in\mcS_{\eta_0}}\Vert\bbm(z)-\boldsymbol{g}(z)\Vert_{\infty}=0.$$
  Consequently, the measure \(\nu\) defined in Theorem {\rm \ref{Main of Thm of Dyson}} is the LSD of \(\bbM\).
\end{thm}
The proof of Theorem \ref{Main of Thm of approximation} is provided in \S\ref{sec of proof entrywise law d=3} of the supplement.


Theorem~\ref{Main of Thm of approximation} above expresses a crucial stability of the vector Dyson equation (\ref{Main of Eq of MDE 3 order}):   if a vector-valued function $\bbm(z)$ satisfies a perturbed version of the vector Dyson equation with a small perturbation term $\bbve(z)$  uniformly controlled over a given region $\mcS_{\eta_0}$ as in \eqref{Main of Eq of perturbation}, then the difference between $\bbm(z)$ and the solution $\boldsymbol{g}(z)$ of the original equation \eqref{Main of Eq of MDE 3 order} is also small uniformly over  $\mcS_{\eta_0}$. This stability immediately implies the asymptotic equivalence of  $\bbm(z)$ and $\boldsymbol{g}(z)$, and the measure $\nu$ associated with $\boldsymbol{g}(z)$ is the LSD of the matrix $\bbM$.


Furthermore, we provide the following approximation for general entries of the resolvent matrix $\bbQ(z)$.
\begin{thm}[Entrywise law]\label{Main of Thm of entrywise law d=3}
  Under Assumptions {\rm \ref{Main of Aq of general noise}} and {\rm \ref{Main of Aq of dimension}}, for any \(\eta_0>0\) and \(z\in\mcS_{\eta_0}\) in {\rm (\ref{Main of Eq of stable region})} and $\omega\in(1/2-\delta,1/2)$, where $\delta>0$ is a sufficiently small constant, and \(s,t\in\{1,\cdots,d\}\), 
we have
  \begin{align}\label{eq:entrywiselaw}
    \Bigg|Q_{i_si_t}^{st}(z)-\mathfrak{c}_s^{-1}g_s(z)\Bigg[\delta_{st}\delta_{i_si_t}+(a_{i_s}^{(s)})^2\sum_{k\neq s}^d(g(z)-g_s(z)-g_k(z))W_{sk}^{(d)}(z)\Bigg]\Bigg|\prec\mrO(\eta_0^{-21}N^{-\omega}),
  \end{align}
  where \(Q_{i_si_t}^{st}(z)\) is the \((i_s,i_t)\)-th entry of \(\bbQ^{st}\) and \(a_{i_s}^{(s)}\) is the \(i_s\)-th entry of \(\bba^{(s)}\), and $W_{sk}^{(d)}$ is defined later in \eqref{Main of Eq of bbW general}.	
\end{thm}
The proof of Theorem \ref{Main of Thm of entrywise law d=3} is provided in \S\ref{Sec of entrywise law} of the supplement.
\begin{remark}
  Note that the diagonal entries of the resolvent matrix $\bbQ(z)$ depend on both the vector $\bbg(z)$ and the given unit vectors $\bba^{(1)},\cdots,\bba^{(d)}$. The definition of localization is provided in \eqref{Main of Eq of localized}. For delocalized $\bba$, the entries will be close to $\mathfrak{c}_s^{-1}g_s(z)$, while localized $\bba$ will result in additional terms. For example, when $d=3$ and $\bba^{(1)}=(n_1^{-1/2},\cdots,n_1^{-1/2})'$ (a delocalized vector), then $(a_{i_1}^{(1)})^2=n_1^{-1}$ and we can show that $(a_{i_s}^{(s)})^2\big|\sum_{k\neq s}^d(g(z)-g_s(z)-g_k(z))W_{sk}^{(d)}(z)\big|\leq\mrO(\eta_0^{-2}N^{-1})$, then have
\begin{align*}
    \big|Q_{ii}^{11}(z)-\mathfrak{c}_1^{-1}g_1(z)\big|\prec\mrO(\eta_0^{-21}N^{-\omega}+\eta_0^{-3}N^{-1}),
  \end{align*}
for all $1\leq i\leq n_1$. In contrast, when $\bba^{(1)}=(1,0,\cdots,0)'$ (a localized vector), we have
 \begin{align*}
    \big|Q_{11}^{11}(z)-\mathfrak{c}_1^{-1}g_1(z)\big[1+g_2(z)W_{13}^{(3)}(z)+g_3(z)W_{12}^{(3)}(z)\big]\big|\prec\mrO(\eta_0^{-21}N^{-\omega}),
  \end{align*}
where an additional nonvanishing term $\mathfrak{c}_1^{-1}g_1(z)[g_2(z)W_{13}^{(3)}(z)+g_3(z)W_{12}^{(3)}(z)]$ appears.
\end{remark}
\begin{remark}\label{Rem of MLE}
  If the noise in  the rank-1 model (\ref{Main of Eq of spiky tensor model}) is Gaussian, \cite{Seddik2024203} showed that there exists a \(\beta_s>0\) such that when \(\beta\in(\beta_s,+\infty)\), the ML estimator of \(\beta\bbx^{(1)}\otimes\cdots\otimes\bbx^{(d)}\) in \eqref{Main of Eq of spiky tensor model},
  \begin{align}
		(\lambda_*,\bbu_*^{(1)},\cdots,\bbu_*^{(d)}):=\underset{\lambda\in\mathbb{R}^+,(\boldsymbol{u}^{(1)}\cdots\boldsymbol{u}^{(d)})\in\mathbb{S}^{n_1-1}\times\cdots\times\mathbb{S}^{n_d-1}}{\rm argmin}\Vert\boldsymbol{T}-\lambda\boldsymbol{u}^{(1)}\otimes\cdots\otimes\boldsymbol{u}^{(d)}\Vert_2^2,\label{Main of Eq of ML estimator}
	\end{align}
  satisfies that
  \begin{align}
    \left\{\begin{array}{l}
             \lambda_*\overset{a.s.}{\longrightarrow}\lambda^{\infty}(\beta)\\
             |\langle\bbx^{(i)},\bbu_*^{(i)}\rangle|\overset{a.s.}{\longrightarrow}q_i(\lambda^{\infty}(\beta))
           \end{array}\right.,\label{Main of Eq of limit tuple}
  \end{align}
  where 
  $$q_i(z):=\Bigg(\frac{\alpha_i(z)^{d-3}}{\prod_{j\neq i}\alpha_j(z)}\Bigg)^{\frac{1}{2d-4}}\quad{\rm and}\quad\alpha_i(z):=\frac{\beta}{z+g(z)-g_i(z)},$$
  and \(\lambda^{\infty}(\beta)\) satisfies \(f(\lambda^{\infty}(\beta),\beta)=0\), where \(f(z,\beta)=z+g(z)-\beta\prod_{i=1}^d q_i(z)\), and  \(\lambda^{\infty}(\beta)\) is a constant on \(\beta\in[0,\beta_s]\). This implies that when \(\beta<\beta_s\), no inference about \(\beta\) is possible based on \(\lambda_*\). It can be shown that (\ref{Main of Eq of limit tuple}) also holds for general non-Gaussian noises satisfying Assumption \ref{Main of Aq of general noise} by employing the same techniques used in \S\ref{Sec of entrywise law} of the supplementary document to analyze the ML estimator, though we do not pursue it in detail in this paper.
\end{remark}

	\section{CLT for linear spectral statistics of \texorpdfstring{$\bbM$}{M} when \texorpdfstring{$d=3$}{d=3}}\label{Sec of main d=3}

{ Under $H_0$ in \eqref{Main of Eq of hypothesis test 1}, the test statistic $\widehat{T}_N^{(d)}$ in \eqref{Main of Eq of hat TN (d)} reduces to a linear spectral statistic of $\bbM$. We now establish CLTs for general linear spectral statistics, focusing on $d=3$ for clarity; the general case appears in \S~\ref{Sec of general Main}.}


We first present our main results in \S \ref{sec of CLT LSS d=3}, and then provide the explicit formulas for the asymptotic mean and variance, which are relatively complex and tedious, in \S \ref{sec of mean variance}. Finally, a brief outline of the proofs is presented in \S \ref{sec of CLT}.
	
	Now consider \(d=3\) with dimensions \((n_1,n_2,n_3)\) satisfying Assumption \ref{Main of Aq of dimension}. Let \(\bba^{(1)}\in\mbR^{n_1},\bba^{(2)}\in\mbR^{n_2},\bba^{(3)}\in\mbR^{n_3}\) be three deterministic unit vectors, and \(\bbX=[X_{i_1i_2i_3}]\in\mbR^{n_1\times n_2\times n_3}\) be a random tensor satisfying Assumption \ref{Main of Aq of general noise}. Denote
	{\small \begin{align*}
		&\bbM=\frac{1}{\sqrt{N}}\boldsymbol{\Phi}_3(\boldsymbol{X},\bba^{(1)},\bba^{(2)},\bba^{(3)}),\quad\bbQ(z)=(\bbM-z\bbI_N)^{-1}=\left(\begin{array}{ccc}
			\bbQ^{11}(z)&\bbQ^{12}(z)&\bbQ^{13}(z)\\(\bbQ^{12}(z))'&\bbQ^{22}(z)&\bbQ^{23}(z)\\(\bbQ^{13}(z))'&(\bbQ^{23}(z))'&\bbQ^{33}(z)
		\end{array}\right),
	\end{align*}}\noindent
 where \(z\in\mbC^+\) and \(N=n_1+n_2+n_3\). As stated in Theorem \ref{Main of Thm of Dyson}, the support of the LSD $\nu$ is $[-\zeta,\zeta]$, defined in (\ref{Main of Eq of support boundary}), and $\Vert\bbM\Vert$ is stochastically bounded by $\mfv_3=4(\sqrt{\mfc_1}+\sqrt{\mfc_2}+\sqrt{\mfc_3})$. Define $v_B^{(3)}:=\max\{\zeta,\mfv_3\}$ and consider the class of functions
    \begin{align}
        &\mathfrak{F}_3:=\big\{f(z):f\ {\rm is\ analytic\ on\ an\ open\ set\ containing\ }\big[-v_B^{(3)},v_B^{(3)}\big]\big\}.\label{Main of Eq of mcU}
    \end{align}
    For \(f\in\mathfrak{F}_3\), consider an LSS of \(\bbM\) of the form:
    \begin{align}
		\mcL_{\bbM}(f):=\frac{1}{N}\sum_{l=1}^N f(\lambda_l)=\int_{\mbR}f(x)\nu_N(dx),\label{Main of Eq of LSS}
	\end{align}
    where \(\lambda_1,\cdots,\lambda_N\) are the eigenvalues of \(\bbM\) and \(\nu_N=N^{-1}\sum_{j=1}^N\delta_{\lambda_j}\) is the empirical spectral distribution (ESD) of \(\bbM\). By Theorem \ref{Main of Thm of approximation}, \(\nu_N\) converges to \(\nu\) almost surely, we thus consider
	\begin{align}
		G_N(f):=N\int_{-\infty}^{\infty}f(x)(\nu_N(dx)-\nu(dx))=N\Bigg(\mcL_{\bbM}(f)-\int_{-\infty}^{\infty}f(x)\nu(dx)\Bigg).\label{Main of Eq of Gf d=3}
	\end{align}
	Our main goal is to derive the asymptotic distribution of \(G_N(f)\).
	\subsection{Main results}\label{sec of CLT LSS d=3}
	Before presenting our main theorem, we introduce some auxiliary notation. First, we define the third and fourth cumulants of \(X_{i_1\cdots i_d}\) as follows:
\begin{equation}
    \begin{aligned}
\kappa_3:=\mbE[X_{i_1\cdots i_d}^3],\quad{\rm and}\quad\kappa_4:=\mbE[X_{i_1\cdots i_d}^4]-3.\label{Main of Eq of 3 4 cumulant}
\end{aligned}
\end{equation}	
	Given \(k\in\{1,\cdots,d\}\), define 
	\begin{align}
		\mfb_k^{(1)}:=\frac{1}{\sqrt{N}}\sum_{i_k=1}^{n_k}a_{i_k}^{(k)}.\label{Main of Eq of mfb}
	\end{align}
	Let $k_1, k_2, \dots, k_l$ be distinct integers in $\{1, \dots, d\}$. Define for \(r\in\mbN,r\geq2\)
	\begin{align}
		\mcA_{i_1\cdots i_d}^{(k_1,\cdots,k_l)}:=\prod_{j\neq k_1\cdots k_l}a_{i_j}^{(j)},\quad\mcB_{(r)}^{(k_1,\cdots,k_l)}:=\sum_{i_j=1,j\neq k_1\cdots k_l}^{n_j}(\mcA_{i_1\cdots i_d}^{(k_1,\cdots,k_l)})^r,\label{Main of Eq of mcB}
	\end{align}
	Moreover, we say the vector \(\bba^{(j)}\) is \emph{delocalized} if
	\begin{align}
		\lim_{n_j\to\infty}\Vert\bba^{(j)}\Vert_{\infty}=\lim_{n_j\to\infty}\max_{1\leq i_j\leq n_j}|a_{i_j}^{(j)}|=0.\label{Main of Eq of localized}
	\end{align}
	Otherwise, \(\bba^{(j)}\) is \emph{localized}.
	\begin{remark}\label{Rem of vectors}
		The purpose of defining \(\mfb_k^{(1)}\) and \(\mcB_{(4)}^{(k_1,k_2)}\) is that they will appear in the asymptotic mean and variance of the CLTs in the forthcoming Propositions \ref{Main of Thm of Mean function} and \ref{Main of Thm of Variance}. For example, when \(d=3\), we know that \(\mfb_k^{(1)}=N^{-1/2}\sum_{i_k=1}^{n_k}a_{i_k}^{(k)}\in[-\mfc_k,\mfc_k]\) and \(\mcB_{(4)}^{(1,2)}=\Vert\bba^{(3)}\Vert_4^4\in[0,1]\) due to \(\Vert\bba^{(k)}\Vert_2=1\). When \(d=4\), we have \(\mcB_{(4)}^{(1,2)}=\Vert\bba^{(3)}\Vert_4^4\times\Vert\bba^{(4)}\Vert_4^4\in[0,1]\). Notably, by (\ref{Main of Eq of localized}), if all \(\bba^{(l)}\) are delocalized, then \(\lim_{n_l\to\infty}\Vert\bba^{(l)}\Vert_4=0\), implying that all \(\lim_{N\to\infty}\mcB_{(4)}^{(k_1,k_2)}=0\).
	\end{remark}
	\begin{thm}\label{Main of Thm of CLT LSS d=3}
		Under Assumptions  \ref{Main of Aq of general noise} and \ref{Main of Aq of dimension}  with \(d=3\), for any \(f\in\mathfrak{F}_3\) in {\rm (\ref{Main of Eq of mcU})} and deterministic unit vectors \(\bba^{(1)}\in\mbR^{n_1},\bba^{(2)}\in\mbR^{n_2},\bba^{(3)}\in\mbR^{n_3}\), we have
		\begin{align*}
			\frac{G_N(f)-\xi_N^{(3)}}{\sigma_N^{(3)}}\overset{d}{\longrightarrow}\mcN(0,1),
		\end{align*}
		where
		\begin{align}
			\xi_N^{(3)}:&=-\frac{1}{2\pi{\rm i}}\oint_{\mfC_1}f(z)\mu_N^{(3)}(z;\kappa_3,\kappa_4,\bba^{(1)},\bba^{(2)},\bba^{(3)})dz,\label{Main of Eq of LSS mean d=3}\\
			(\sigma_N^{(3)})^2:&=-\frac{1}{4\pi^2}\oint_{\mfC_1}\oint_{\mfC_2}f(z_1)f(z_2)\mcC_N^{(3)}(z_1,z_2;\kappa_4,\bba^{(1)},\bba^{(2)},\bba^{(3)})dz_1dz_2,\label{Main of Eq of LSS variance d=3}
		\end{align}
		where \(\mfC_{1}\) and  \(\mfC_{2}\) are two disjoint rectangular contours with vertices \(\pm E_{1}\pm{\rm i}\eta_{1}\) and \(\pm E_{2}\pm{\rm i}\eta_{2}\), respectively, such that \(E_{1},E_{2}\geq v_B^{(3)}+t\), where \(t>0\) is a fixed constant and \(\eta_{1},\eta_{2}>0\). The mean function $\mu_N^{(3)}(z;\kappa_3,\kappa_4,\bba^{(1)},\bba^{(2)},\bba^{(3)})$ and the variance function $\mcC_N^{(3)}(z_1,z_2;\kappa_4,\bba^{(1)},\bba^{(2)},\bba^{(3)})$ are defined later in \eqref{Main of Eq of limiting mean} and \eqref{Main of Eq of limiting variance}, respectively.
	\end{thm}

 The proof of Theorem \ref{Main of Thm of CLT LSS d=3} is provided in \S\ref{Sec of mean and covariance} and \S\ref{Sec of CLT} of the supplement. For notational simplicity, we denote $\mu_N^{(3)}(z;\kappa_3,\kappa_4,\bba^{(1)},\bba^{(2)},\bba^{(3)})$ and $\mcC_N^{(3)}(z_1,z_2;\kappa_4,\bba^{(1)},\bba^{(2)},\bba^{(3)})$ by $\mu_N^{(3)}(z)$ and $\mcC_N^{(3)}(z_1,z_2)$, respectively. 

	\begin{remark}\label{Rem of LSS numerical}
		 {Despite their apparent complexity, the centering constant $\xi_N^{(3)}$ and variance $(\sigma_N^{(3)})^2$ in Theorem~\ref{Main of Thm of CLT LSS d=3} are readily computable. Both reduce to contour integrals of $\mu_N^{(3)}(z)$ and $\mcC_N^{(3)}(z_1,z_2)$, which are explicit functions of the Dyson equation solution $\bbg(z)$. Standard numerical quadrature yields accurate approximations; see \S\ref{sec of Existence and uniqueness Dyson equation} in the supplement for illustrations. In the balanced case $\mfc_1 = \mfc_2 = \mfc_3 = 1/3$, closed-form expressions are available in Remark \ref{Rem of close form} of the supplement.}
	\end{remark}
	\subsection{Mean and variance functions \texorpdfstring{\(\mu_N^{(3)}\)}{mu} and \texorpdfstring{\(\mcC_N^{(3)}\)}{sigma}}\label{sec of mean variance}
 {  The mean and variance functions \eqref{Main of Eq of LSS mean d=3} and \eqref{Main of Eq of LSS variance d=3} involve several auxiliary quantities that arise naturally from the resolvent analysis. We now define these quantities, which can be computed explicitly from the solution $\mathbf{g}(z)$ of the vector Dyson equation.}
    \begin{enumerate}
        \item For any \(z\in\mbC^+\), let 
		\begin{align*}
			\bbGa^{(3)}(z):=(z+g(z))\bbI_3-\diag(\bbg(z))+g(z)\bbS_3-\diag(\bbg(z))\bbS_3-\bbS_3\diag(\bbg(z)).
		\end{align*}
		And define
		\begin{align}
			\bbW^{(3)}(z):=-\bbGa^{(3)}(z)^{-1}.\label{Main of Eq of W limit d=3}
		\end{align}
		\item For any \(z_1,z_2\in\mbC^+\), let
		\begin{align}
			&\bbPi^{(3)}(z_1,z_2):=\boldsymbol{I}_3-{\rm diag}(\mfc^{-1}\circ\boldsymbol{g}(z_1)\circ\bbg(z_2))\boldsymbol{S}_3,\label{Main of Eq of invertible 2}
		\end{align}
		then define
		\begin{align}
			\bbV^{(3)}(z_1,z_2):=\bbPi^{(3)}(z_1,z_2)^{-1}\diag(\mfc^{-1}\circ\bbg(z_1)\circ\bbg(z_2)).\label{Main of Eq of bbV limit d=3}
		\end{align}
    \end{enumerate}
	\begin{pro}[Mean function \(\mu_N^{(3)}(z)\) for \(d=3\)]\label{Main of Thm of Mean function}
		Under Assumptions {\rm \ref{Main of Aq of general noise}} and {\rm \ref{Main of Aq of dimension}}, for any \(\eta_0>0\) and \(z\in\mcS_{\eta_0}\) in \eqref{Main of Eq of stable region}, let 
		$$\overrightarrow{M}_N^{(3)}(z)=\big(M_{1,N}^{(3)}(z),M_{2,N}^{(3)}(z),M_{3,N}^{(3)}(z)\big)',$$
		where for \(1\leq i\leq 3\)
		\begin{small}
		\begin{align}
			M_{i,N}^{(3)}(z):&=g_i(z)\sum_{r\neq i}^3\sum_{w\neq i,r}^3W_{rw}^{(3)}(z)+\sum_{l\neq i}^3\big[(g(z)-g_i(z)-g_l(z))W_{il}^{(3)}(z)+V_{il}^{(3)}(z,z)\big]\notag\\
			&-2\kappa_3(\mfc_1\mfc_2\mfc_3)^{-1}g_1(z)g_2(z)g_3(z)\mfb_1^{(1)}\mfb_2^{(1)}\mfb_3^{(1)}+\kappa_4\mfc_i^{-1}g_i(z)^2\sum_{l\neq i}^3\mcB_{(4)}^{(i,l)}\mfc_l^{-1}g_l(z)^2,\label{Main of Eq of MiN}
		\end{align}
		\end{small}\noindent
		and \(\mfb_k^{(1)},\mcB_{(4)}^{(i,l)},W_{jk}^{(3)}(z),V_{ij}^{(3)}(z,z)\) are defined in {\rm (\ref{Main of Eq of mfb}), (\ref{Main of Eq of mcB}), (\ref{Main of Eq of W limit d=3}), (\ref{Main of Eq of bbV limit d=3})}, respectively. Then we have
		\begin{align}
			\lim_{N\to\infty}\Vert N(\bbm(z)-\bbg(z))-\boldsymbol{\Pi}^{(3)}(z,z)^{-1}\diag(\boldsymbol{\mfc}^{-1}\circ\bbg(z))\overrightarrow{M}_N^{(3)}(z)\Vert=0,\label{Main of Eq of Mean function}
		\end{align}
		where \(\boldsymbol{\Pi}^{(3)}(z,z)\) is defined in {\rm (\ref{Main of Eq of invertible 2})}. Consequently, we obtain that
        $$\lim_{N\to\infty}\mbE[\tr(\bbQ(z))]-Ng(z)-\mu_N^{(3)}(z)=0,$$
        where
		\begin{align}
			\mu_N^{(3)}(z)&:=\boldsymbol{1}_3'\boldsymbol{\Pi}^{(3)}(z,z)^{-1}\diag(\boldsymbol{\mfc}^{-1}\circ\bbg(z))\overrightarrow{M}_N^{(3)}(z).\label{Main of Eq of limiting mean}
		\end{align}
	\end{pro}
	To introduce the covariance function \(\mcC_N^{(3)}(z_1,z_2)\), we need the functions \(\mcV_{st}^{(3)}(z_1,z_2)\) and \(\mcW_{st,N}^{(3)}(z_1,z_2)\) for \(1\leq s,t\leq3\) defined as follows:
    \begin{enumerate}
        \item For any \(s,t,r\in\{1,2,3\}\) and \(z_1,z_2\in\mbC_{\eta}^+\), define
        \begin{align}
			&\tilde{\bbV}_r^{(3)}(z_1,z_2)\label{Main of Eq of bbV_l d=3}\\
            &:=\bbPi^{(3)}(z_1,z_2)^{-1}\diag(\mfc^{-1}\circ\bbg(z_1))\diag(\bbV_{\cdot r}^{(3)}(z_2,z_2))(\bbI_3+\bbS_3\bbV^{(3)}(z_1,z_2)),\notag
		\end{align}
		where \(\bbV_{\cdot r}^{(3)}(z_1,z_2)\) is the \(r\)-th column of \(\bbV^{(3)}(z_1,z_2)\) defined in (\ref{Main of Eq of bbV limit d=3}). Let \(\tilde{V}_{str}^{(3)}(z_1,z_2)\) be the \((s,t)\)-th entry of \(\tilde{\bbV}_r^{(3)}(z_1,z_2)\), define
		\begin{align}
			\mcV_{st}^{(3)}(z_1,z_2)=\sum_{l\neq s}^3\tilde{V}_{slt}^{(3)}(z_1,z_2).\label{Main of Eq of mcV limiting d=3}
		\end{align}
		\item For any \(s,t\in\{1,2,3\}\) and \(z_1,z_2\in\mbC_{\eta}^+\), let
		\begin{align}
			\mathring{\bbV}^{(3)}(z_1,z_2):=\diag(\mfc^{-1}\circ\bbg(z_1))\bbV^{(3)}(z_1,z_2),\label{Main of Eq of bbV circ d=3}
		\end{align}
		and \(\mathring{V}_{st}^{(3)}(z_1,z_2)\) be the \((s,t)\)-th entry of \(\mathring{\bbV}^{(3)}(z_1,z_2)\), then define
		\begin{small}
		\begin{align}
			&\mcW_{st,N}^{(3)}(z_1,z_2)\label{Main of Eq of mcW limiting d=3}\\
			&:=\mfc_s^{-1}g_s(z_1)g_s(z_2)\sum_{l\neq s}^3\mcB_{(4)}^{(s,l)}\mathring{V}_{lt}^{(3)}(z_1,z_2)+\mathring{V}_{st}^{(3)}(z_1,z_2)\sum_{l\neq s}^3\mcB_{(4)}^{(s,l)}\mfc_l^{-1}g_l(z_1)g_l(z_2).\notag
		\end{align}
		\end{small}
    \end{enumerate}
	\begin{pro}[Covariance function \(\mcC_N^{(3)}(z_1,z_2)\) for \(d=3\)]\label{Main of Thm of Variance}
		Under Assumptions {\rm \ref{Main of Aq of general noise}} and {\rm \ref{Main of Aq of dimension}}, for any \(\eta_0>0\) and \(z_1,z_2\in\mcS_{\eta_0}\) in {\rm (\ref{Main of Eq of stable region})}, let
		\begin{small}
		\begin{align}
			\mcC_{st,N}^{(3)}(z_1,z_2):=\Cov(\tr(\bbQ^{ss}(z_1)),\tr(\bbQ^{tt}(z_2))),\quad\bbC_N^{(3)}(z_1,z_2):=[\mcC_{st,N}^{(3)}(z_1,z_2)]_{3\times3},\label{Main of Eq of mcC d=3}
		\end{align}
		\end{small}\noindent
		where \(s,t\in\{1,2,3\}\). Further define 
        \begin{small}
        \begin{align}
			\bbF_N^{(3)}(z_1,z_2)=[\mcF_{st,N}^{(3)}(z_1,z_2)]_{3\times3},\quad\mcF_{st,N}^{(3)}(z_1,z_2):=2\mcV_{st}^{(3)}(z_1,z_2)+\kappa_4\mcW_{st,N}^{(3)}(z_1,z_2),\label{Main of Eq of bbF d=3}
		\end{align}
        \end{small}\noindent
        where \(\mcV_{st}^{(3)}(z_1,z_2)\) and \(\mcW_{st,N}^{(3)}(z_1,z_2)\) are determined by the system of equations {\rm (\ref{Main of Eq of mcV limiting d=3})} and {\rm (\ref{Main of Eq of mcW limiting d=3})}, respectively. Then we have
		\begin{align}
			\lim_{N\to\infty}\Vert\bbC_N^{(3)}(z_1,z_2)-\bbPi^{(3)}(z_1,z_2)^{-1}\diag(\mfc^{-1}\circ\bbg(z_1))\bbF_N^{(3)}(z_1,z_2)\Vert=0,\label{Main of Eq of bbC d=3}
		\end{align}
		where \(\bbPi^{(3)}(z_1,z_2)\) is defined in {\rm (\ref{Main of Eq of invertible 2})}. Consequently, \(\Var(\tr(\bbQ(z)))\) is bounded by \(C_{\eta_0,\mfc}\) for any \(z\in\mcS_{\eta_0}\) and 
        $$\lim_{N\to\infty}\big|\Cov\big(\tr(\bbQ(z_1)),\tr(\bbQ(z_2))\big)-\mcC_N^{(3)}(z_1,z_2)\big|=0,$$
        where
		\begin{align}
			\mcC_N^{(3)}(z_1,z_2):=\boldsymbol{1}_3'\bbPi^{(3)}(z_1,z_2)^{-1}\diag(\mfc^{-1}\circ\bbg(z_1))\bbF_N^{(3)}(z_1,z_2)\boldsymbol{1}_3.\label{Main of Eq of limiting variance}
		\end{align}
	\end{pro}
	The proofs of Propositions \ref{Main of Thm of Mean function} and \ref{Main of Thm of Variance} are provided in \S\ref{Sec of mean and covariance} of the supplement.
	\begin{remark}
		The functions \(\mu_N^{(3)}(z)\) and \(\mcC_N^{(3)}(z_1,z_2)\), introduced in Propositions \ref{Main of Thm of Mean function} and \ref{Main of Thm of Variance}, involve the inverse of the matrix \(\bbPi^{(3)}(z_1,z_2)\) defined in (\ref{Main of Eq of invertible 2}). Consequently, it is necessary to establish the invertibility of \(\bbPi^{(3)}(z_1,z_2)\). Similarly, the invertibility of \(\bbGa(z)\) needs to be proven due to its appearance in (\ref{Main of Eq of W limit d=3}). The proofs of these invertibility results can be found in \S\ref{Sec of Stability operator} of the supplement.
	\end{remark}
	\begin{remark}\label{Rem of numerical}
		Given the Stieltjes transform \(g(z)\) and the vectors \(\bba^{(1)},\bba^{(2)},\bba^{(3)}\), the functions \(\bbW^{(3)}(z),\bbV^{(3)}(z_1,z_2),\mcV_{st}^{(3)}(z_1,z_2)\) and \(\mcW_{st,N}^{(3)}(z_1,z_2)\) can be calculated through (\ref{Main of Eq of W limit d=3}), (\ref{Main of Eq of bbV limit d=3}), (\ref{Main of Eq of mcV limiting d=3}) and (\ref{Main of Eq of mcW limiting d=3}), respectively. Combined with \(\kappa_3,\kappa_4\), we can further calculate the values of $\mu_N^{(3)}(z)$ and $\mcC_N^{(3)}(z_1,z_2)$. Furthermore, the Stieltjes transform $g(z)$ can be evaluated numerically using a fixed-point algorithm, see Lemma \ref{Lem of contraction} of the supplement for details.
	\end{remark}
	\begin{remark}\label{Rem of vectors type}
        By Propositions \ref{Main of Thm of Mean function} and \ref{Main of Thm of Variance}, \(\mu_N^{(3)}(z)\) depends on the third and fourth cumulants \(\kappa_3,\kappa_4\), and the unit vectors \(\bba^{(1)},\bba^{(2)},\bba^{(3)}\), while \(\mcC_N^{(3)}(z_1,z_2)\) depends on the fourth cumulant \(\kappa_4\) and the unit vectors \(\bba^{(1)},\bba^{(2)},\bba^{(3)}\). Precisely, for the mean function \(\mu_N^{(3)}(z)\), note that \(M_{i,N}^{(3)}(z)\) contains the functions
		\begin{align*}
			\kappa_3(\mfc_1\mfc_2\mfc_3)^{-1}g_1(z)g_2(z)g_3(z)\mfb_1^{(1)}\mfb_2^{(1)}\mfb_3^{(1)}\quad{\rm and}\quad\kappa_4\mfc_i^{-1}g_i(z)^2\sum_{l\neq i}^3\mcB_{(4)}^{(i,l)}\mfc_l^{-1}g_l(z)^2.
		\end{align*}
        For example, if \(\bba^{(l)}=(1,0,\cdots,0)'\) for some \(l\in\{1,2,3\}\), then \(\mfb_l^{(1)}=N^{-1/2}\sum_{i_l=1}^{n_l}a_{i_l}^{(l)}=\mrO(N^{-1/2})\) and \(\mu_N^{(3)}(z)\) will be independent of \(\kappa_3\); if all \(\bba^{(l)}\) are delocalized, then \(\mcB_{(4)}^{(i,l)}=\Vert\bba^{(k)}\Vert_4^4\to0\) by Remark \ref{Rem of vectors} and \eqref{Main of Eq of localized}, then \(\mu_N^{(3)}(z)\) will be also independent of \(\kappa_4\). Similarly, for the variance function \(\mcC_N^{(3)}(z_1,z_2)\), by (\ref{Main of Eq of mcW limiting d=3}), \(\mcW_{st,N}^{(3)}(z_1,z_2)\) depends on \(\mcB_{(4)}^{(s,l)}\), so \(\lim_{N\to\infty}|\mcW_{st,N}^{(3)}(z_1,z_2)|=0\) if all \(\bba^{(l)}\) are delocalized. By (\ref{Main of Eq of bbC d=3}) and (\ref{Main of Eq of bbF d=3}), we have
		\begin{align*}
			\bbC_N^{(3)}(z_1,z_2)=2\bbPi^{(3)}(z_1,z_2)^{-1}\diag(\mfc^{-1}\circ\bbg(z_1))\mcV^{(3)}(z_1,z_2)+\mro(1)\boldsymbol{1}_{3\times3},
		\end{align*}
		which is independent of \(\kappa_4\), so does \(\mcC_N^{(3)}(z_1,z_2)\) due to (\ref{Main of Eq of limiting variance}). 
	\end{remark}
   The following proposition follows from Remark \ref{Rem of vectors type}.
    \begin{pro}\label{Rem of comparison}
        \begin{enumerate}
            \item In general, the asymptotic mean \(\xi_N^{(3)}=\xi_N^{(3)}(\kappa_3,\kappa_4,\bba^{(1)},\bba^{(2)},\bba^{(3)})\) of LSS in \eqref{Main of Eq of LSS mean d=3} depends on \(\kappa_3,\kappa_4\) and \(\bba^{(1)},\bba^{(2)},\bba^{(3)}\).
            \item In general, the asymptotic variance \(\sigma_N^{(3)}=\sigma_N^{(3)}(\kappa_4,\bba^{(1)},\bba^{(2)},\bba^{(3)})\) of LSS in \eqref{Main of Eq of LSS variance d=3} depends on \(\kappa_4\) and \(\bba^{(1)},\bba^{(2)},\bba^{(3)}\).
            \item If the third and fourth cumulants of random noises are zero (e.g., the noise tensor \(\bbX\) is Gaussian), \(\xi_N^{(3)}\) and \(\sigma_N^{(3)}\) will be independent of \(\bba^{(1)},\bba^{(2)},\bba^{(3)}\).
            \item If all \(\bba^{(1)},\bba^{(2)},\bba^{(3)}\) are delocalized, \(\xi_N^{(3)}\) and \(\sigma_N^{(3)}\) will be independent of \(\kappa_4\).
        \end{enumerate}
    \end{pro}
   For further illustrations of these conclusions, readers may refer to the numerical experiments reported in Table \ref{Tab of LSS mean and variance} in \S\ref{Main of sec of experiment 1} of the supplement for more details.

{

\subsection{Outline of the proof}\label{sec of CLT}

The central analytical challenge of our theory arises from the intrinsic correlation structure of the contracted matrix $\bbM = N^{-1/2}\bbPhi_d(\bbX, \bba^{(1)}, \ldots, \bba^{(d)})$. Unlike classical random matrix ensembles with independent entries, the contraction operation introduces complex dependencies: each block $\bbM^{st}$ involves sums over shared tensor indices weighted by directional vectors $\bba^{(l)}$. This fundamentally distinguishes our setting from standard random matrix models.

We now outline the proof of Theorem \ref{Main of Thm of CLT LSS d=3}, emphasizing how the spectral-level results developed in preceding sections feed into the central limit theorem. The argument proceeds through a hierarchy of resolvent estimates, each building on the previous to achieve the fine control necessary for distributional convergence.

\textit{Reduction to resolvent analysis.}
Following the general framework of Chapter 9 in \cite{bai2010spectral}, consider the event $\mcE_{\bbM}:=\{\Vert\bbM\Vert\leq v_B^{(3)}+t\}$ for a fixed constant $t>0$. By Theorem \ref{Main of Thm of Dyson}, $\mbP(\mcE_{\bbM})\geq1-\mro(N^{-l})$ for any $l>0$, so outlier eigenvalues are negligible and $G_N(f)\mathbf{1}_{\mcE_{\bbM}}\overset{\mbP}{\longrightarrow}G_N(f)$. The Cauchy integral formula then reduces our analysis to
\begin{align*}
-\frac{1}{2\pi{\rm i}}\oint_{\mathfrak{C}}f(z)\{\tr(\bbQ(z))-Ng(z)\}\,dz,
\end{align*}
where $\mathfrak{C}$ is a rectangular contour with vertices $\pm E_0\pm{\rm i}\eta_0$ such that $E_0\geq v_B^{(3)}+t$ and $\eta_0>0$ is sufficiently small. The choice of $E_0$ together with the event $\mcE_{\bbM}$ ensures that the integrand is well-defined on $\mathfrak{C}$.

\textit{Hierarchy of resolvent estimates.}
The proof relies on three levels of resolvent control, established in increasing order of precision. Theorem \ref{Main of Thm of Dyson} provides a global spectral control, establishing concentration of $\|\bbM\|$ near the right edge $v_B^{(3)}$ and thereby confining eigenvalues to a bounded region with overwhelming probability. Theorem \ref{Main of Thm of approximation} then establishes averaged resolvent approximation: $N^{-1}\tr(\bbQ(z)) \approx g(z)$, yielding the deterministic centering $Ng(z)$ in our CLT. Finally, Theorem \ref{Main of Thm of entrywise law d=3} shows that the entrywise resolvent behavior is governed by coupled equations \eqref{eq:entrywiselaw}. This finest level of precision is essential for the cumulant expansion, enabling replacement of random resolvent entries with deterministic quantities at controlled error cost.

\begin{remark}[Limit theory Beyond LSD]\label{Rem of LSD to CLT}
It is worth clarifying why the machinery developed for the limiting spectral distribution is indispensable for the CLT. While Theorem \ref{Main of Thm of approximation} identifies the centering $Ng(z)$, establishing Gaussian fluctuations requires controlling error terms in cumulant expansions to precision $o(1)$. Each expansion step generates products of resolvent entries, and naive bounds using only $\|\bbQ(z)\| \leq \eta^{-1}$ are insufficient. Theorem \ref{Main of Thm of entrywise law d=3} provides the refined estimates needed: by approximating $Q_{ii}^{ss}$ with the deterministic term as in \eqref{eq:entrywiselaw} and exploiting the block structure, we identify which terms contribute to the mean and variance at order $O(1)$ versus those that vanish as $N \to \infty$. Furthermore, a critical step is establishing invertibility of auxiliary matrix $\bbPi^{(3)}(z_1, z_2)$ which appears over and over again in the mean and variance expressions of our CLT such as the covariance kernel $\mcC_N^{(3)}(z_1, z_2)$ appearing in the CLT. Thus,  rather than merely identifying the LSD, the full resolvent control via analyzing Dyson equation limiting behavior is essential for the fluctuation analysis and our inferential goals. See \S\ref{Sec of Dyson} and \S\ref{Sec of LSD} in the supplement for details.
\end{remark}

\textit{Cumulant expansion and term classification.}
With entrywise control in hand, we analyze the centered resolvent trace through systematic cumulant expansions following \cite{khorunzhy1996asymptotic}. The correlation structure of $\bbM = N^{-1/2}\bbPhi_3(\bbX, \bba^{(1)}, \bba^{(2)}, \bba^{(3)})$ necessitates expanding to fourth-order cumulants, since non-Gaussian noise introduces contributions from $\kappa_3$ and $\kappa_4$ to both the mean and variance. A key technical innovation is our classification scheme exploiting the $d \times d$ block structure $\bbQ(z) = [\bbQ^{st}(z)]_{1\le s,t\le d}$: terms containing off-diagonal products $Q_{i_{t_1} i_{t_2}}^{t_1 t_2}$ with $t_1 \neq t_2$ contribute at order $O(N^{-1/2}\|\bbQ\|^k)$, enabling systematic identification of the dominant ``major'' terms.

\textit{Mean, variance, and Gaussian convergence.}
The cumulant analysis yields explicit characterizations of the first two moments. Proposition \ref{Main of Thm of Mean function} identifies the mean function, revealing the bias of the empirical spectral distribution, while Proposition \ref{Main of Thm of Variance} computes the covariance structure with the stability operator $\bbPi^{(3)}(z_1,z_2)^{-1}$ appearing explicitly. For Gaussian convergence, we first establish tightness:
\begin{pro}\label{Main of Thm of Tightness}
			Under Assumptions {\rm \ref{Main of Aq of general noise}} and {\rm \ref{Main of Aq of dimension}}, for any \(\eta_0>0\), \(\tr(\boldsymbol{Q}(z))-\mathbb{E}[\tr(\boldsymbol{Q}(z))]\) is tight on \(\mcS_{\eta_0}\) in {\rm (\ref{Main of Eq of stable region})}, i.e., 
			$$\sup_{\substack{z_1,z_2\in\mcS_{\eta_0}\\z_1\neq z_2}}\frac{\mathbb{E}\left[|\tr(\boldsymbol{Q}(z_1)-\boldsymbol{Q}(z_2))-\mathbb{E}[\tr(\boldsymbol{Q}(z_1)-\boldsymbol{Q}(z_2))|^2\right]}{|z_1-z_2|^2}<C_{\eta_0}.$$
		\end{pro}
\noindent We then analyze  the joint characteristic function of the real part and imaginary part of \(\tr(\bbQ(z))-\mbE[\tr(\bbQ(z))]\), obtaining differential equations whose solutions converge to a Gaussian form:
\begin{pro}\label{Main of Thm of CLT}
			Under Assumptions {\rm \ref{Main of Aq of general noise}} and {\rm \ref{Main of Aq of dimension}}, for any sufficiently small constant \(\eta_0>0\), \(\tr(\boldsymbol{Q}(z))-\mathbb{E}[\tr(\boldsymbol{Q}(z))]\) weakly converges to a Gaussian random process on \(\mcS_{\eta_0}\).
		\end{pro}
\noindent The proofs of Propositions \ref{Main of Thm of Tightness} and \ref{Main of Thm of CLT} are provided in \S\ref{Sec of CLT} of the supplement.

\textit{Completing the proof.}
Decomposing the contour as $\mfC=\mfC^h\cup\mfC^v$, where $\mfC^h:=\{x\pm{\rm i}\eta_0:x\in[-E_0,E_0]\}$ and $\mfC^v:=\mfC\backslash\mfC^h$, we have $\mfC^h\subset\mcS_{\eta_0}$. By Propositions \ref{Main of Thm of Mean function}, \ref{Main of Thm of Variance}, \ref{Main of Thm of Tightness}, and \ref{Main of Thm of CLT},
$$-\frac{1}{2\pi{\rm i}\sigma_N^{(3)}}\oint_{\mfC^h}f(z)\{\tr(\bbQ(z))-Ng(z)\}dz-\xi_N^{(3)}/\sigma_N^{(3)}\overset{d}{\longrightarrow}\mcN(0,1).$$
The proof is completed by showing that the vertical contribution is asymptotically negligible:
$$\lim_{\eta_0\downarrow0}\limsup_{N\to\infty}\mbE\Bigg|\oint_{\mfC^v}f(z)\{\tr(\bbQ(z))-Ng(z)\}dz\Bigg|^2=0.$$

\begin{remark}\label{Rem of novelty}
Several aspects of this work distinguish it from existing random matrix theory literature. First, our block-structured cumulant expansion provides a systematic treatment of random matrices with dependent entries, departing from classical independent-entry assumptions. Second, we accommodate non-Gaussian noise satisfying only moment conditions and show that both the third and fourth cumulants $\kappa_3$ and $\kappa_4$ of the noise could contribute to both the mean and variance in our CLT (see Remark \ref{Rem of vectors type} and Proposition \ref{Rem of comparison}), different from the traditional RMT regime where $\kappa_3$ typically does not count. Third, our tensor model \eqref{Main of Eq of general spiked tensor model} allows asymmetric, non-Gaussian signals, in contrast to existing literature \cite{de2022random,Seddik2024203} that typically requires symmetry or Gaussianity. Fourth, while we present $d=3$ in detail for clarity, the combinatorial growth of cross-block interactions for general $d\geq 3$ requires careful treatments developed in \S\ref{Sec of general Main}. Finally, the connection between spectral properties of $\bbR$ and the alignment hypothesis \eqref{Main of Eq of hypothesis test 1} is novel, yielding asymptotically valid tests with explicit power calculations under alternatives.
\end{remark}
}

    \section{Tests for tensor signals when \texorpdfstring{\(d=3\)}{d=3}}\label{sec of hypothesis}

 { Having established the CLT for linear spectral statistics of $\mathbf{M}$, we now apply these results to construct hypothesis tests for signal alignments and signal matchings under mild conditions.} These problems are formulated in equations \eqref{Main of Eq of hypothesis test 1} and \eqref{Main of Eq of hypothesis test 2}, respectively.

    \subsection{Testing for tensor signal alignments}\label{sec of basic procedure}
	When \(d=3\), recall our spiked tensor model (\ref{Main of Eq of general spiked tensor model}):
	\begin{align*}
		\bbT=\sum_{r=1}^R\beta_r\bbx^{(r,1)}\otimes\bbx^{(r,2)}\otimes\bbx^{(r,3)}+\frac{1}{\sqrt{N}}\bbX.
	\end{align*}
    Given three unit vectors \(\bba^{(1)}\in\mbR^{n_1},\bba^{(2)}\in\mbR^{n_2},\bba^{(3)}\in\mbR^{n_3}\), we construct the following statistic:
	\begin{align}
		\widehat{T}_N^{(3)}:=\widehat{T}_N^{(3)}(\bba^{(1)},\bba^{(2)},\bba^{(3)}):=\Vert\bbR\Vert_2^2-N\int_{-\infty}^{\infty}x^2\nu(dx),\label{Main of Eq of test statistic 1}
	\end{align}
    where \(\nu\) is the LSD of \(\bbM\), and \(\bbR\) and \(\bbM\) are defined in (\ref{Main of Eq of R and M}). The following proposition is provided:
    \begin{pro}\label{Pro of test statistic CLT}
        Under Assumptions {\rm \ref{Main of Aq of general noise}} and {\rm \ref{Main of Aq of dimension}}, for the spiked tensor model {\rm \eqref{Main of Eq of general spiked tensor model}} and three unit vectors \(\bba^{(1)}\in\mbR^{n_1},\bba^{(2)}\in\mbR^{n_2},\bba^{(3)}\in\mbR^{n_3}\), the statistic \(\widehat{T}_N^{(3)}\) in {\rm \eqref{Main of Eq of test statistic 1}} satisfies that
        \begin{align*}
            \big(\widehat{T}_N^{(3)}-\xi_N^{(3)}-\mcD^{(3)}\big)/\sigma_N^{(3)}\overset{d}{\longrightarrow}\mcN(0,1),
        \end{align*}
        where
        \begin{align}
		  \mcD^{(3)}:=2\sum_{r=1}^R\beta_r^2\sum_{l=1}^3\langle\bbx^{(r,l)},\bba^{(l)}\rangle^2\geq0,\label{Main of Eq of mean drift}
	    \end{align}
        and \(\xi_N^{(3)},\sigma_N^{(3)}\) are derived from \eqref{Main of Eq of LSS mean d=3} and \eqref{Main of Eq of LSS variance d=3} by setting \(f(z)=z^2\), i.e.,
	    \begin{align*}
		  \xi_N^{(3)}&=-\frac{1}{2\pi{\rm i}}\oint_{\mfC_1}z^2\mu_N^{(3)}(z;\kappa_3,\kappa_4,\bba^{(1)},\bba^{(2)},\bba^{(3)})dz,\\
          (\sigma_N^{(3)})^2&=-\frac{1}{4\pi^2}\oint_{\mfC_1}\oint_{\mfC_2}z_1^2z_2^2\mcC_N^{(3)}(z_1,z_2;\kappa_4,\bba^{(1)},\bba^{(2)},\bba^{(3)})dz_1dz_2.
	    \end{align*}
    \end{pro}
    \begin{proof}
        Note that \(\bbR=\bbM+\sum_{r=1}^R\beta_r\Delta^{(r)}\), where
    \begin{small}
    $$\Delta^{(r)}=\bbU_r\left(\begin{array}{ccc}
		0&\langle\bba^{(3)},\bbx^{(r,3)}\rangle&\langle\bba^{(2)},\bbx^{(r,2)}\rangle\\
		\langle\bba^{(3)},\bbx^{(r,3)}\rangle&0&\langle\bba^{(1)},\bbx^{(r,1)}\rangle\\
		\langle\bba^{(2)},\bbx^{(r,2)}\rangle&\langle\bba^{(1)},\bbx^{(r,1)}\rangle&0
	\end{array}\right)\bbU_r',\quad\bbU_r:=\left(\begin{array}{ccc}
		\bbx^{(r,1)}&\boldsymbol{0}_{n_1}&\boldsymbol{0}_{n_2}\\
		\boldsymbol{0}_{n_2}&\bbx^{(r,2)}&\boldsymbol{0}_{n_2}\\
		\boldsymbol{0}_{n_3}&\boldsymbol{0}_{n_3}&\bbx^{(r,3)}
	\end{array}\right),$$
    \end{small}\noindent
    since \(\{\bbx^{(1,l)},\cdots,\bbx^{(R,l)}\}\) are orthogonal in \(\mbR^{n_l}\) for \(1\leq l\leq 3\), then we have \(\bbU_{r_1}\bbU_{r_2}'=\boldsymbol{0}_{N\times N}\) for \(r_1\neq r_2\) and
	$$\Vert\bbR\Vert_2^2=\Vert\bbM\Vert_2^2+\sum_{r=1}^R\beta_r^2\Vert\Delta^{(r)}\Vert_2^2+2\sum_{r=1}^R\beta_r\tr(\bbM\Delta^{(r)}).$$
    Moreover, for each \(1\leq r\leq R\), since
    \begin{align*}
		&\tr(\bbM\Delta^{(r)})=\frac{2}{\sqrt{N}}\big(\langle\bbx^{(r,3)},\bba^{(3)}\rangle(\bbx^{(r,1)})'\bbX(\bba^{(3)})\bbx^{(r,2)}+\langle\bbx^{(r,2)},\bba^{(2)}\rangle(\bbx^{(r,1)})'\bbX(\bba^{(2)})\bbx^{(r,3)}\notag\\
		&+\langle\bbx^{(r,1)},\bba^{(1)}\rangle(\bbx^{(r,2)})'\bbX(\bba^{(1)})\bbx^{(r,3)}\big)=\frac{2}{\sqrt{N}}\sum_{i_1,i_2,i_3=1}^{n_1,n_2,n_3}X_{i_1i_2i_3}\big(\langle\bbx^{(r,3)},\bba^{(3)}\rangle x_{i_1}^{(r,1)}x_{i_2}^{(r,2)}a_{i_3}^{(3)}\notag\\
		&+\langle\bbx^{(r,2)},\bba^{(2)}\rangle x_{i_1}^{(r,1)}a_{i_2}^{(2)}x_{i_3}^{(r,3)}+\langle\bbx^{(r,1)},\bba^{(1)}\rangle a_{i_1}^{(1)}x_{i_2}^{(r,2)}x_{i_3}^{(r,3)}\big),
	\end{align*}
    then we have \(\mbE[\tr(\bbM\Delta^{(r)})]=0\) and
	\begin{small}
	\begin{align}
		&\Var(\tr(\bbM\Delta^{(r)}))\notag\\
		&\leq\frac{4}{N}\sum_{i_1,i_2,i_3=1}^{n_1,n_2,n_3}\big(\langle\bbx^{(r,3)},\bba^{(3)}\rangle x_{i_1}^{(r,1)}x_{i_2}^{(r,2)}a_{i_3}^{(3)}+\langle\bbx^{(r,2)},\bba^{(2)}\rangle x_{i_1}^{(r,1)}a_{i_2}^{(2)}x_{i_3}^{(r,3)}+\langle\bbx^{(r,1)},\bba^{(1)}\rangle a_{i_1}^{(1)}x_{i_2}^{(r,2)}x_{i_3}^{(r,3)}\big)^2\notag\\
		&\leq\frac{12}{N}\sum_{i_1,i_2,i_3=1}^{n_1,n_2,n_3}\big[(x_{i_1}^{(r,1)}x_{i_2}^{(r,2)}a_{i_3}^{(3)})^2+(x_{i_1}^{(r,1)}a_{i_2}^{(2)}x_{i_3}^{(r,3)})^2+(a_{i_1}^{(1)}x_{i_2}^{(r,2)}x_{i_3}^{(r,3)})^2\big]\leq\frac{36}{N}.\notag
	\end{align}
	\end{small}\noindent
	Hence, by Chebyshev's inequality, \(\tr(\bbM\Delta^{(r)})\overset{\mbP}{\longrightarrow}0\) and
    \begin{align}
        \widehat{T}_N^{(3)}\overset{\mbP}{\longrightarrow}\Vert\bbM\Vert_2^2-N\int_{-\infty}^{\infty}x^2\nu(dx)+\sum_{r=1}^R\beta_r^2\Vert\Delta^{(r)}\Vert_2^2.\label{Main of Eq of widehat T P convergence}
    \end{align}
    According to Theorem \ref{Main of Thm of CLT LSS d=3}, since \(N^{-1}\Vert\bbM\Vert_2^2\) is an LSS of \(\bbM\), we have
    \begin{align}
		\frac{\Vert\bbM\Vert_2^2-N\int_{-\infty}^{\infty}x^2\nu(dx)-\xi_N^{(3)}}{\sigma_N^{(3)}}\overset{d}{\longrightarrow}\mcN(0,1).\notag
	\end{align}
    Finally, recall that \(|\langle\bbx^{(r_1,l)},\bbx^{(r_2,l)}\rangle|=\delta_{r_1,r_2}\) for \(1\leq l\leq 3\), we can derive that \(\Vert\Delta\Vert_2^2=2\sum_{l=1}^3\langle\bbx^{(r,l)},\bba^{(l)}\rangle^2\), which concludes this proposition.
    \end{proof}
    Here, let \(\tilde{\mcT}_N^{(3)}:=(\widehat{T}_N^{(3)}-\xi_N^{(3)})/\sigma_N^{(3)}\). Under \(H_0\), since \(\bba^{(l)}\perp\bbx^{(r,l)}\) for all \(1\leq r\leq R,1\leq l\leq d\), then \(\mcD^{(3)}=0\); under \(H_1\), since there exists at least one \(1\leq r\leq R\) and \(1\leq l\leq d\) such that \(\bba^{(l)}\not\perp\bbx^{(r,l)}\), then it implies that \(\mcD^{(3)}>0\). We conclude from Proposition \ref{Pro of test statistic CLT} that
	\begin{align}
		\left\{\begin{array}{ll}
		     \tilde{\mcT}_N^{(3)}\overset{d}{\longrightarrow}\mcN(0,1)&{\rm under\ }H_0,\\
		     \tilde{\mcT}_N^{(3)}-\mcD^{(3)}/\sigma_N^{(3)}\overset{d}{\longrightarrow}\mcN(0,1)&{\rm under\ }H_1. 
		\end{array}\right.\label{Main of Eq of statistic H0 H1}
	\end{align}
	Given a significance level \(\alpha\in(0,1)\), the rejection region of our test procedure is
	\begin{align}
		\big\{{\rm Reject\ }H_0\ {\rm if\ }\tilde{\mcT}_N^{(3)}>z_{\alpha}\big\},\label{Main of Eq of rejection region}
	\end{align}
	where \(z_{\alpha}\) is the \(\alpha\)-th upper quantile of the standard normal. Moreover, the asymptotic power of our test satisfies that
	\begin{align}
		\lim_{N\to\infty}\mbP(\tilde{\mcT}_N^{(3)}>z_{\alpha}|H_1)-1+\Phi(z_{\alpha}-\mcD^{(3)}/\sigma_N^{(3)})=0,\label{Main of Eq of power}
	\end{align}
	where \(\Phi(\cdot)\) is the cumulative distribution function of the standard normal.
	
	The implementation of the test procedure requires numerical values of \(\xi_N^{(3)},\sigma_N^{(3)}\), which can be found following Remarks \ref{Rem of LSS numerical} and \ref{Rem of numerical}. This procedure also needs the values of \(\kappa_3,\kappa_4\), we can estimate \(\kappa_3,\kappa_4\) using straightforward moment estimators:
 \begin{equation}\label{Main of Eq of estimate k3 k4}
     \begin{aligned}	\hat{\kappa}_3=&\frac{(\mfc_1\mfc_2\mfc_3)^{-1}}{N^{3/2}}\sum_{i_1,i_2,i_3=1}^{n_1,n_2,n_3}T_{i_1i_2i_3}^3,\\
\hat{\kappa}_4=&\frac{(\mfc_1\mfc_2\mfc_3)^{-1}}{N}\sum_{i_1,i_2,i_3=1}^{n_1,n_2,n_3}T_{i_1i_2i_3}^4-3.
	\end{aligned}
 \end{equation}
	
	We can show that \(\hat{\kappa}_3\overset{\mbP}{\longrightarrow}\kappa_3\) by the law of large numbers, and similarly for  \(\hat{\kappa}_4\). For \(\hat{\kappa}_3\), note that
	\begin{align*}
		T_{i_1i_2i_3}^3&=\frac{X_{i_1i_2i_3}^3}{N^{3/2}}+\frac{3X_{i_1i_2i_3}^2}{N}\sum_{r=1}^R\beta_rx_{i_1}^{(r,1)}x_{i_2}^{(r,2)}x_{i_3}^{(r,3)}\\
        &+\frac{3X_{i_1i_2i_3}}{N^{1/2}}\Bigg(\sum_{r=1}^R\beta_rx_{i_1}^{(r,1)}x_{i_2}^{(r,2)}x_{i_3}^{(r,3)}\Bigg)^2+\Bigg(\sum_{r=1}^R\beta_rx_{i_1}^{(r,1)}x_{i_2}^{(r,2)}x_{i_3}^{(r,3)}\Bigg)^3.
	\end{align*}
	Since \(\bbx^{(r,1)},\bbx^{(r,2)},\bbx^{(r,3)}\) are unit vectors for \(1\leq r\leq R\), then by H\"older's inequality, it yields that
	\begin{align*}
		&\frac{1}{N^{3/2}}\sum_{i_1,i_2,i_3=1}^{n_1,n_2,n_3}\Bigg|\sum_{r=1}^R\beta_rx_{i_1}^{(r,1)}x_{i_2}^{(r,2)}x_{i_3}^{(r,3)}\Bigg|^3\leq\frac{R^2}{N^{3/2}}\sum_{r=1}^R\beta_r^3\prod_{l=1}^3\Vert\bbx^{(r,l)}\Vert_3^3=\mrO(N^{-3/2}).
	\end{align*}
	Moreover, \(N^{-2}\sum_{i_1,i_2,i_3=1}^{n_1,n_2,n_3}X_{i_1i_2i_3}\big(\sum_{r=1}^R\beta_rx_{i_1}^{(r,1)}x_{i_2}^{(r,2)}x_{i_3}^{(r,3)}\big)^2\) have mean zero and variance of  
    $$N^{-4}\sum_{i_1,i_2,i_3=1}^{n_1,n_2,n_3}\big(\sum_{r=1}^R\beta_rx_{i_1}^{(r,1)}x_{i_2}^{(r,2)}x_{i_3}^{(r,3)}\big)^4\leq\mrO(N^{-4}).$$
    Similarly, the absolute mean of
    $$N^{-5/2}\sum_{i_1,i_2,i_3=1}^{n_1,n_2,n_3}X_{i_1i_2i_3}^2\sum_{r=1}^R\beta_rx_{i_1}^{(r,1)}x_{i_2}^{(r,2)}x_{i_3}^{(r,3)}$$
	is upper bounded by \(N^{-5/2}\sum_{r=1}^R\beta_r\sum_{i_1,i_2,i_3=1}^{n_1,n_2,n_3}|x_{i_1}^{(r,1)}x_{i_2}^{(r,2)}x_{i_3}^{(r,3)}|\leq\mrO(N^{-1})\), and its variance is equal to \(N^{-5}\sum_{i_1,i_2,i_3=1}^{n_1,n_2,n_3}\big(\sum_{r=1}^R\beta_rx_{i_1}^{(r,1)}x_{i_2}^{(r,2)}x_{i_3}^{(r,3)}\big)^2\leq\mrO(N^{-5})\). Therefore, we have
	\begin{align*}
		\frac{(\mfc_1\mfc_2\mfc_3)^{-1}}{N^{3/2}}\sum_{i_1,i_2,i_3=1}^{n_1,n_2,n_3}T_{i_1i_2i_3}^3\overset{\mbP}{\longrightarrow}\frac{1}{n_1n_2n_3}\sum_{i_1,i_2,i_3=1}^{n_1,n_2,n_3}X_{i_1i_2i_3}^3=\hat{\kappa}_3\overset{\mbP}{\longrightarrow}\kappa_3,
	\end{align*}
	so does \(\hat{\kappa}_4\). 
	
	To summarize, we propose the following test procedure:
	\begin{enumerate}[(i)]
		\item Given the observation \(\bbT\in\mbR^{n_1\times n_2\times n_3}\) such that the dimensions \((n_1,n_2,n_3)\) satisfy Assumption \ref{Main of Aq of dimension}, we first compute \(\hat{\kappa}_3,\hat{\kappa}_4\) using (\ref{Main of Eq of estimate k3 k4}).
		\item Based on \((\mfc_1,\mfc_2,\mfc_3)=N^{-1}(n_1,n_2,n_3)\), given any \(z\in\mbC^+\), solve \(\bbg(z)\) by the iterative method mentioned in Remark \ref{Rem of numerical}. After obtaining \(g(z)=\sum_{j=1}^3g_j(z)\), the LSD \(\nu(E)=\lim_{\eta\downarrow0}\pi^{-1}\Im(g(E+{\rm i}\eta))\), which allows us to compute \(\int_{-\infty}^{\infty}x^2\nu(dx)\) numerically.
		\item Based on \(\bbg(z)\) and \(\bba^{(1)},\bba^{(2)},\bba^{(3)}\), compute \(\bbW^{(3)}(z),\bbV^{(3)}(z,z),\mcV_{st}^{(3)}(z_1,z_2)\) and \(\mcW_{st,N}^{(3)}(z_1,z_2)\) by (\ref{Main of Eq of W limit d=3}), (\ref{Main of Eq of bbV limit d=3}), (\ref{Main of Eq of mcV limiting d=3}) and (\ref{Main of Eq of mcW limiting d=3}); combining with \(\hat{\kappa}_3,\hat{\kappa}_4\) in (\ref{Main of Eq of estimate k3 k4}), we further obtain \(\mu_N^{(3)}(z)\) and \(\mcC_N^{(3)}(z_1,z_2)\) by (\ref{Main of Eq of limiting mean}) and (\ref{Main of Eq of limiting variance}), then the asymptotic mean and variance of \(\widehat{T}_N^{(3)}\) can be numerically estimated by
		\begin{align*}
		&\hat{\xi}_N^{(3)}=-\frac{1}{2\pi{\rm i}}\oint_{\mfC_1}z^2\mu_N^{(3)}(z)dz\quad(\hat{\sigma}_N^{(3)})^2=-\frac{1}{4\pi^2}\oint_{\mfC_1}\oint_{\mfC_2}z_1^2z_2^2\mcC_N^{(3)}(z_1,z_2)dz_1dz_2,
		\end{align*}
        where the precise definitions of contours \(\mfC_1,\mfC_2\) are presented in Theorem \ref{Main of Thm of CLT LSS d=3}.
		\item Given a significance level \(\alpha\in(0,1)\), we reject \(H_0\) if 
		\begin{align*}
			&\tilde{\mcT}_N^{(3)}=\frac{\widehat{T}_N^{(3)}-\hat{\xi}_N^{(3)}}{\hat{\sigma}_N^{(3)}}>z_{\alpha}.
		\end{align*}
	\end{enumerate}

\subsection{Testing for tensor signal matching with a reference tensor}\label{sec of generalized procedure}
{In many applications, the reference directions $\bba^{(l)}$ in the hypothesis test \eqref{Main of Eq of hypothesis test 1} are not directly available. Instead, one may have access to a reference tensor that encodes these directions implicitly through its own signal components. A natural question then arises: given a new tensor observation, can we test whether its signal structure is related to that of the reference tensor?

To formalize this, consider two independent tensor observations \(\bbT^{(0)}\) and \(\bbT^{(1)}\) from the model~\eqref{Main of Eq of general spiked tensor model}:}
\begin{align}
  \left\{\begin{array}{l}
           \bbT^{(0)}=\sum_{r_0=1}^{R_0}\beta_{r_0,0}\bbx^{(r_0,1)}\otimes\bbx^{(r_0,2)}\otimes\bbx^{(r_0,3)}+\frac{1}{\sqrt{N}}\bbX^{(0)},\\
           \bbT^{(1)}=\sum_{r_1=1}^{R_1}\beta_{r_1,1}\bby^{(r_1,1)}\otimes\bby^{(r_1,2)}\otimes\bby^{(r_1,3)}+\frac{1}{\sqrt{N}}\bbX^{(1)}, 
         \end{array}\right.\label{Main of Eq of two spiked tensor}
\end{align}
where \(\bbX^{(0)}\) and \(\bbX^{(1)}\) are independent and \(\bbx^{(r_0,l)},\bby^{(r_1,l)}\in\mbR^{n_l}\) are deterministic unit vectors for \(1\leq l\leq 3\) and \(1\leq r_0\leq R_0,1\leq r_1\leq R_1\). {The goal is to test whether the signal components of $\bbT^{(1)}$ share any directional structure with those of $\bbT^{(0)}$. In this subsection, we will treat $\bbT^{(0)}$ as a reference tensor from which signal directions are inferred, while $\bbT^{(1)}$ is merely the observation to be tested. This asymmetry is both statistically and practically motivated (see Remark~\ref{Rem of asymmetric}). }
This leads to the following hypothesis test:
\begin{align}
  \begin{array}{l}
    H_0:\bbx^{(r_0,l)}\perp\bby^{(r_1,l)}\ {\rm for\ any\ }1\leq r_0\leq R_0,1\leq r_1\leq R_1\ {\rm and\ }1\leq l\leq 3,\\
    H_1:\text{there $\exists$ at least one } 1\leq r_0\leq R_0,1\leq r_1\leq R_1\ {\rm and\ }1\leq l\leq 3\ {\rm s.t.\  }\bbx^{(r_0,l)}\not\perp\bby^{(r_1,l)}.
  \end{array}\label{Main of Eq of hypothesis test 2}
\end{align}

To build our test procedure, define  for \(1\leq r_0\leq R_0\),  
    \begin{align*}
        &\bbR^{(r_0,1)}:=\bbPhi_d(\bbT^{(1)},\bbx^{(r_0,1)},\bbx^{(r_0,2)},\bbx^{(r_0,3)}),
    \end{align*}
    and
    \begin{align}
        &\widehat{T}_{r_0,N}^{(3)}=\widehat{T}_{r_0,N}^{(3)}(\bbx^{(r_0,1)},\bbx^{(r_0,2)},\bbx^{(r_0,3)}):=\Vert\bbR^{(r_0,1)}\Vert_2^2-N\int_{-\infty}^{\infty}x^2\nu(dx).\label{Main of Eq of widehat T r0}
    \end{align}
    Similar to Proposition \ref{Pro of test statistic CLT}, the following proposition establishes the asymptotic normality of the statistic $\widehat{T}_{r_0,N}^{(3)}$. 
    
    \begin{pro}\label{Pro of independent copy}
        Under Assumptions {\rm \ref{Main of Aq of general noise}} and {\rm \ref{Main of Aq of dimension}}, for two tensor data \(\bbT^{(0)}\) and \(\bbT^{(1)}\) in \eqref{Main of Eq of two spiked tensor} and \(1\leq r_0\leq R_0\), the statistic \(\widehat{T}_{r_0,N}^{(3)}\) \eqref{Main of Eq of widehat T r0} satisfies that
        \begin{align*}
            \big(\widehat{T}_{r_0,N}^{(3)}-\xi_N^{(r_0,3)}-\mcD^{(r_0,3)}\big)/\sigma_N^{(r_0,3)}\overset{d}{\longrightarrow}\mcN(0,1),
        \end{align*}
        where
        \begin{align*}
		  \mcD^{(r_0,3)}:=2\sum_{r_1=1}^{R_1}\beta_{r_1,1}^2\sum_{l=1}^3\langle\bbx^{(r_0,l)},\bby^{(r_1,l)}\rangle^2\geq0,
	    \end{align*}
        and \(\xi_N^{(r_0,3)},\sigma_N^{(r_0,3)}\) are derived from \eqref{Main of Eq of LSS mean d=3} and \eqref{Main of Eq of LSS variance d=3} by setting \(f(z)=z^2\), i.e.,
	    \begin{align*}
		  \xi_N^{(r_0,3)}&=-\frac{1}{2\pi{\rm i}}\oint_{\mfC_1}z^2\mu_N^{(3)}(z;\kappa_3,\kappa_4,\bbx^{(r_0,1)},\bbx^{(r_0,2)},\bbx^{(r_0,3)})dz,\\
          (\sigma_N^{(r_0,3)})^2&=-\frac{1}{4\pi^2}\oint_{\mfC_1}\oint_{\mfC_2}z_1^2z_2^2\mcC_N^{(3)}(z_1,z_2;\kappa_4,\bbx^{(r_0,1)},\bbx^{(r_0,2)},\bbx^{(r_0,3)})dz_1dz_2.
	    \end{align*}
    \end{pro}
    The proof of the above proposition is the same as Proposition \ref{Pro of test statistic CLT}. We omit the details here. Proposition~\ref{Pro of independent copy} provides the theoretical foundation for testing signal matching when the reference directions $\bbx^{(r_0,l)}$ are known. In practice, however, these directions must be estimated from the reference tensor $\bbT^{(0)}$. This introduces an additional layer of complexity, as the test must now be conducted using estimated directions $\hat{\bbx}^{(r_0,l)}$ rather than the true ones. Note that \(\bbX^{(0)}\) and \(\bbX^{(1)}\) are independent, then \(\bbT^{(0)}\) and \(\bbT^{(1)}\) are also independent. We can apply some existing algorithms in the literature, e.g., the tensor unfolding method \cite{richard2014statistical,arous2021long}, to estimate \(\bbx^{(r_0,1)}\otimes\bbx^{(r_0,2)}\otimes\bbx^{(r_0,3)}\). After obtaining the estimation \(\hat{\bbx}^{(r_0,1)}\otimes\hat{\bbx}^{(r_0,2)}\otimes\hat{\bbx}^{(r_0,3)}\) for \(1\leq r_0\leq R_0\), the statistic \(\widehat{T}_{r_0,N}^{(3)}\) in Proposition \ref{Pro of independent copy} yields the following empirical hypothesis test
    \begin{align}
        \begin{array}{l}
            H_0^{(r_0)}:\hat{\bbx}^{(r_0,l)}\perp\bby^{(r_1,l)}\ {\rm for\ any\ }1\leq r_1\leq R_1\ {\rm and\ }1\leq l\leq 3,\\
            H_1^{(r_0)}:{\rm there\ exists\ at\ least\ one\ }1\leq r_1\leq R_1\ {\rm and\ }1\leq l\leq 3\ {\rm such\ that\ }\hat{\bbx}^{(r_0,l)}\not\perp\bby^{(r_1,l)}, 
        \end{array}\label{Main of Eq of hypothesis test 2 empirical}
    \end{align}
    through the procedures in \S\ref{sec of basic procedure}. We accept \(H_0\) in (\ref{Main of Eq of hypothesis test 2}) only if we accept \(H_0^{(r_0)}\) for all \(1\leq r_0\leq R_0\). We address the validity of this substitution below.
    
    {  In general, the hypothesis tests \eqref{Main of Eq of hypothesis test 2} and \eqref{Main of Eq of hypothesis test 2 empirical} are not automatically equivalent, since the latter relies on estimated directions $\hat{\bbx}^{(r_0,l)}$ rather than the true $\bbx^{(r_0,l)}$. Specifically, we reject $H_0^{(r_0)}$ in \eqref{Main of Eq of hypothesis test 2 empirical} when 
    \begin{align*}
        \big(\widehat{T}_{r_0,N}^{(3)}(\hat{\bbx}^{(r_0,1)},\hat{\bbx}^{(r_0,2)},\hat{\bbx}^{(r_0,3)})-\hat{\xi}_N^{(r_0,3)}\big)/\hat{\sigma}_N^{(r_0,3)}>z_{\alpha},
    \end{align*}
    where $\hat{\xi}_N^{(r_0,3)}$ and $ \big(\hat{\sigma}_N^{(r_0,3)}\big)^2$ are "plug-in" estimators obtained by replacing  $\bbx^{(r_0,l)}$ with their empirical counterparts $\hat{\bbx}^{(r_0,l)}$ in \eqref{Main of Eq of LSS mean d=3} and \eqref{Main of Eq of LSS variance d=3}. The equivalence of these two tests hinges on the estimation error for the signal directions. Define
\begin{align}
    \mathfrak{r}_{r_0,l}:=1-|\langle\bbx^{(r_0,l)},\hat{\bbx}^{(r_0,l)}\rangle|, \quad 1\leq l\leq 3,\label{Main of Eq of estimation error}
\end{align}
which measures the misalignment between the true and estimated directions, with $\mathfrak{r}_{r_0,l}=0$ corresponding to perfect recovery (up to sign) and $\mathfrak{r}_{r_0,l}=1$ to orthogonality. A natural question is whether the test based on estimated directions has the same asymptotic properties as the oracle test using true directions. The following proposition shows that this holds when estimation is consistent.

\begin{pro}\label{Prop:equivalence}
    Under Assumptions {\rm \ref{Main of Aq of general noise}} and {\rm \ref{Main of Aq of dimension}}, if $\max_{1\leq l\leq 3}\mathfrak{r}_{r_0,l}\overset{a.s.}{\longrightarrow}0$, then 
    \begin{align}
        &\lim_{N\to\infty}\big|\mbP\big(\big(\widehat{T}_{r_0,N}^{(3)}(\hat{\bbx}^{(r_0,1)},\hat{\bbx}^{(r_0,2)},\hat{\bbx}^{(r_0,3)})-\hat{\xi}_N^{(r_0,3)}\big)/\hat{\sigma}_N^{(r_0,3)}>z_{\alpha}\big)\notag\\
        &\quad-\mbP\big(\big(\widehat{T}_{r_0,N}^{(3)}(\bbx^{(r_0,1)},\bbx^{(r_0,2)},\bbx^{(r_0,3)})-\xi_N^{(r_0,3)}\big)/\sigma_N^{(r_0,3)}>z_{\alpha}\big)\big|=0.\label{Main of Eq of consistent estimation}
    \end{align}
\end{pro}
    The proof is given in \S\ref{sec:additionalillust} of the supplement. Existing literature provides convergence rates for $\mathfrak{r}_{r_0,l}$ under various estimation methods. For instance, using the MLE approach of Remark \ref{Rem of MLE}, \cite{Seddik2024203} established that $\max_{1\leq l\leq 3}\mathfrak{r}_{r_0,l}=\mrO(\beta_{r_0,0}^{-2})$ almost surely when the reference signal strength $\beta_{r_0,0}$ is moderately large. Thus, if $\beta_{r_0,0}\to\infty$ as $N\to\infty$, the condition of Proposition \ref{Prop:equivalence} is satisfied.

    Moreover, even a moderately strong reference signal (e.g., $\beta_{1,0}=2.5$ or $3$) yields power comparable to the oracle test. See Figure \ref{Fig of 3} in \S\ref{Main of sec of experiment 3} of the supplement for numerical illustrations. We also discuss the power under imperfect estimation in Remark \ref{Rem of power imperfect} of the supplement.
}

{
\begin{remark}[On the asymmetric role of the reference tensor]\label{Rem of asymmetric}
From a statistical perspective, the matching hypothesis \eqref{Main of Eq of hypothesis test 2} concerns whether the \emph{true} signal directions $\{\bbx^{(r_0,l)}\}$ and $\{\bby^{(r_1,l)}\}$ are aligned. The symmetric case where both $\beta_{r_0,0}$ and $\beta_{r_1,1}$ are weak amounts to asking whether two patterns share structural similarity when neither is distinguishable from noise. 
However, when the reference signal $\beta_{r_0,0}$ falls far below the estimation threshold, the signal directions become statistically unidentifiable in the sense that no consistent estimator exists regardless of the method employed \cite{lesieur2017statistical,perry2020statistical}. The hypothesis then involves quantities that cannot be meaningfully inferred from data, rendering the testing problem quite challenging. This parallels assumptions in related high-dimensional testing literature, where minimum signal strength conditions are required for consistent inference \cite{mukherjee2015hypothesis}. When the reference tensor does carry extractable signal structure, it provides well-defined directions against which $\bbT^{(1)}$ can be tested, without requiring exact recovery or a strong target signal.

From a practical perspective, such asymmetry arises naturally in many application domains involving tensor data. In neuroimaging, group-level templates constructed by aggregating data from multiple subjects benefit from variance reduction and serve as reliable references for individual patient scans that are inherently noisier due to motion artifacts or limited acquisition time \cite{cichocki2011tensor}. In hyperspectral imaging, laboratory-calibrated spectral signatures obtained under controlled conditions serve as references against which field measurements---subject to variable lighting and sensor degradation---are compared for material identification. In chemometrics, reference tensors derived from controlled experiments with known compositions are used to analyze new samples measured under less favorable conditions. The common thread is that reference tensors often enjoy higher effective signal strength through aggregation, controlled measurement, or repeated observations, while the target tensor $\bbT^{(1)}$ represents a single noisy realization whose structural relationship to the reference is the quantity of inferential interest.
\end{remark}
}

\section{The general case of \texorpdfstring{$d$}{d}-fold tensors (\texorpdfstring{\(d\geq3\)}{d>3})}\label{Sec of general Main}

\subsection{CLT for LSS of \texorpdfstring{$\bbM$}{M}}

In this section, we extend Theorem \ref{Main of Thm of CLT LSS d=3} in \S\ref{Sec of main d=3} for general \(d\geq3\). The formulas below parallel those in \S\ref{Sec of main d=3}; we use superscript $(d)$ to distinguish the general case. Let
	\begin{align}
	    \bbM=\frac{1}{\sqrt{N}}\bbPhi_d(\bbX,\bba^{(1)},\cdots,\bba^{(d)})\quad{\rm and}\quad\bbQ(z)=(\bbM-z\bbI_N)^{-1},\label{Main of Eq of bbM d}
	\end{align}
	where \(\bba^{(i)}, 1\leq i\leq d\) are \(d\) deterministic unit vectors and \(N=\sum_{i=1}^dn_i\) where the dimension \(n_1,\cdots,n_d\) satisfy Assumption \ref{Main of Aq of dimension}. \(\bbX=[X_{i_1\cdots i_d}]_{n_1\times\cdots\times n_d}\in\mbR^{n_1\times\cdots\times n_d}\) is a random tensor with entries satisfying Assumption \ref{Main of Aq of general noise}. Similarly to \S\ref{sec of mean variance}, we define the mean function \(\mu_N^{(d)}(z)\) and covariance function \(\mcC_N^{(d)}(z_1,z_2)\) using the following functions, which are defined for any sufficiently small \(\eta>0\) and \(z,z_1,z_2\in\mbC_{\eta}^+\):
	\begin{enumerate}
		\item Let
		\begin{align}
			\bbGa^{(d)}(z):=(z+g(z))\bbI_d-\diag(\bbg(z))+g(z)\bbS_d-\diag(\bbg(z))\bbS_d-\bbS_d\diag(\bbg(z))\notag
		\end{align}
		and 
		\begin{align}
			\bbW^{(d)}(z)=[W_{st}^{(d)}(z)]_{d\times d}=-\bbGa^{(d)}(z)^{-1}.\label{Main of Eq of bbW general}
		\end{align}
		\item Let \(\bbPi^{(d)}(z_1,z_2)=\bbI_d-\diag(\mfc^{-1}\circ\bbg(z_1)\circ\bbg(z_2))\bbS_d\) and
		\begin{align}
			&\bbV^{(d)}(z_1,z_2):=\bbPi^{(d)}(z_1,z_2)^{-1}\diag(\mfc^{-1}\circ\bbg(z_1)\circ\bbg(z_2)).\label{Main of Eq of bbV}
		\end{align}
		\item Given \(r,k_1,k_2\in\{1,\cdots,d\}\), let
		\begin{align}
			\tilde{\bbV}_r^{(d)}(z_1,z_2):&=\bbPi^{(d)}(z_1,z_2)^{-1}\diag(\mfc^{-1}\circ\bbg(z_1))\diag(\bbV_{\cdot r}^{(d)}(z_2,z_2))(\bbI_d+\bbS_d\bbV^{(d)}(z_1,z_2)),\notag
		\end{align}
		where \(\bbV_{\cdot r}^{(d)}(z_1,z_2)\) is the \(r\)-th column of \(\bbV^{(d)}(z_1,z_2)\). Moreover, denote
		\begin{align}
			\mcV_{k_1k_2}^{(d)}(z_1,z_2):=\sum_{l\neq k_1}^d\tilde{V}_{k_1k_2l}^{(d)}(z_1,z_2).\label{Main of Eq of mcV limit general d}
		\end{align}
		\item Given \(k_1,k_2\in\{1,\cdots,d\}\), let \(\mathring{\bbV}^{(d)}(z_1,z_2):=\diag(\mfc^{-1}\circ\bbg(z_1))\bbV^{(d)}(z_2,z_2)\) and
		\begin{small}
		\begin{align}
			&\mcW_{k_1k_2,N}^{(d)}(z_1,z_2):=\label{Main of Eq of mcW limit general d}\\
			&\mfc_{k_1}^{-1}g_{k_1}(z_1)g_{k_1}(\bar{z}_2)\sum_{l\neq k_1}^d\mcB_{(4)}^{(k_1,l)}\mathring{V}_{lk_2}^{(d)}(z_1,z_2)+\mathring{V}_{k_1k_2}^{(d)}(z_1,z_2)\sum_{l\neq k_1}^d\mcB_{(4)}^{(k_1,l)}\mfc_l^{-1}g_l(z_1)g_l(\bar{z}_2),\notag
		\end{align}
		\end{small}\noindent
		where \(\mcB_{(4)}^{(k_1,l)}\) is defined in (\ref{Main of Eq of mcB}).
	\end{enumerate}
    Similar to \eqref{Main of Eq of mcU}, let \(v_B^{(d)}:=\max\{\mfv_d,\zeta\}\) and
    \begin{align*}
        &\mathfrak{F}_d:=\big\{f(z):f\ {\rm is\ analytic\ on\ an\ open\ set\ containing\ }\big[-v_B^{(d)},v_B^{(d)}\big]\big\}.
    \end{align*}
	Now, for any \(f\in\mathfrak{F}_d\), we present the extension of Theorem \ref{Main of Thm of CLT LSS d=3} as follows:
	\begin{thm}\label{Main of Thm of general CLT LSS}
		Under Assumptions  \ref{Main of Aq of general noise} and  \ref{Main of Aq of dimension}, for any \(f\in\mathfrak{F}_d\) and deterministic unit vectors \(\bba^{(1)}\in\mbR^{n_1},\cdots,\bba^{(d)}\in\mbR^{n_d}\), let 
        \begin{align*}
            G_N(f)=N\int_{-\infty}^{\infty}f(x)(\nu_N(dx)-\nu(dx)),
        \end{align*}
        where \(\nu_N\) and \(\nu\) are the ESD and LSD of \(\bbM\) in \eqref{Main of Eq of bbM d}, respectively. Then we have 
        \begin{align*}
			\frac{G_N(f)-\mu_N^{(d)}}{\sigma_N^{(d)}}\overset{d}{\longrightarrow}\mcN(0,1).
		\end{align*}
		where 
        \begin{align}
			\xi_N^{(d)}&:=-\frac{1}{2\pi{\rm i}}\oint_{\mfC_1}f(z)\mu_N^{(d)}(z)dz,\label{Main of Eq of mean general d}\\
			(\sigma_N^{(d)})^2&:=-\frac{1}{4\pi^2}\oint_{\mfC_1}\oint_{\mfC_2}f(z_1)f(z_2)\mcC_N^{(d)}(z_1,z_2)dz_1dz_2,\label{Main of Eq of variance general d}
		\end{align}
        where \(\mfC_{1}\) and  \(\mfC_{2}\) are two disjoint rectangular contours with vertices \(\pm E_{1}\pm{\rm i}\eta_{1}\) and \(\pm E_{2}\pm{\rm i}\eta_{2}\), respectively, such that \(E_{1},E_{2}\geq v_B^{(d)}+t\), where \(t>0\) is fixed and \(\eta_{1},\eta_{2}>0\). Here, the mean function \(\mu_N^{(d)}(z)\) is defined as follows:
		\begin{align}
			\mu_N^{(d)}(z):=\boldsymbol{1}_d'\bbPi^{(d)}(z,z)^{-1}\diag(\mfc^{-1}\circ g(z))\overrightarrow{M}_N^{(d)}(z),\notag
		\end{align}
		where \(\overrightarrow{M}_N^{(d)}(z)=(M_{1,N}^{(d)}(z),\cdots,M_{d,N}^{(d)}(z))'\) and for \(1\leq i\leq d\)
		\begin{align}
			&M_{i,N}^{(d)}(z):=g_i(z)\sum_{r\neq i}^d\sum_{w\neq i,r}^dW_{rw}^{(d)}(z)+\sum_{l\neq i}^d\big[(g(z)-g_i(z)-g_l(z))W_{il}^{(d)}(z)+V_{il}^{(d)}(z,z)\big]\notag\\
			&-2\kappa_3\sum_{l\neq i}^d\sum_{t\neq l,i}^d\mcB_{(3)}^{(i,l,t)}(\mfc_i\mfc_l\mfc_t)^{-1}g_i(z)g_l(z)g_t(z)\mfb_i^{(1)}\mfb_l^{(1)}\mfb_t^{(1)}+\kappa_4\mfc_i^{-1}g_i(z)^2\sum_{l\neq i}^d\mcB_{(4)}^{(i,l)}\mfc_l^{-1}g_l(z)^2,\notag
		\end{align}
        and \(\mfb_i^{(1)},\mcB_{(3)}^{(i,l,t)},W_{il}^{(d)}(z),V_{il}^{(d)}(z,z)\) are defined in {\rm (\ref{Main of Eq of mfb}), (\ref{Main of Eq of mcB}), (\ref{Main of Eq of bbW general})} and {\rm (\ref{Main of Eq of bbV})}, respectively. The variance function \(\mcC_N^{(d)}(z_1,z_2)\) is defined as follows:
		\begin{align}
			\mcC_N^{(d)}(z_1,z_2):=\boldsymbol{1}_d'\bbPi^{(d)}(z_1,z_2)^{-1}\diag(\mfc^{-1}\circ\bbg(z_1))\bbF_N^{(d)}(z_1,z_2)\boldsymbol{1}_d,\notag
		\end{align}
		where
		\begin{align*}
			\bbF_N^{(d)}(z_1,z_2)=[\mcF_{st,N}^{(d)}(z_1,z_2)]_{d\times d}\quad\mcF_{st,N}^{(d)}(z_1,z_2):=2\mcV_{st}^{(d)}(z_1,z_2)+\kappa_4\mcW_{st,N}^{(d)}(z_1,z_2),
		\end{align*}
        and \(\mcV_{st}^{(d)}(z_1,z_2),\mcW_{st,N}^{(d)}(z_1,z_2)\) are defined in {\rm (\ref{Main of Eq of mcV limit general d})} and {\rm (\ref{Main of Eq of mcW limit general d})}, respectively.
	\end{thm}
    The proof of Theorem \ref{Main of Thm of general CLT LSS} is provided in \S\ref{Sec of General cases} of the supplement.
	\subsection{Testing for tensor signals}
    Given the tensor data \(\bbT\) generated by \eqref{Main of Eq of general spiked tensor model} and \(d\) deterministic unit vectors \(\bba^{(l)}\in\mbR^{n_l}\) for \(1\leq l\leq d\), let's define
    $$\bbR=\bbR(\bba^{(1)},\cdots,\bba^{(d)})=\bbPhi_d(\bbT,\bba^{(1)},\cdots,\bba^{(d)}),$$
    and
    $$\widehat{T}_N^{(d)}=\widehat{T}_N^{(d)}(\bba^{(1)},\cdots,\bba^{(d)})=\Vert\bbR\Vert_2^2-N\int_{-\infty}^{\infty}x^2\nu(dx).$$
    Now, we present the generalization of Proposition \ref{Pro of test statistic CLT} for \(d\geq3\).
    \begin{pro}\label{Pro of test statistic CLT d}
        Under Assumptions {\rm \ref{Main of Aq of general noise}} and {\rm \ref{Main of Aq of dimension}}, for any deterministic unit vectors \(\bba^{(1)}\in\mbR^{n_1},\cdots,\bba^{(d)}\in\mbR^{n_d}\), we have
        \begin{align*}
            \big(\widehat{T}_N^{(d)}-\xi_N^{(d)}-\mcD^{(d)}\big)/\sigma_N^{(d)}\overset{d}{\longrightarrow}\mcN(0,1),
        \end{align*}
        where
        $$\mcD^{(d)}=\sum_{r=1}^R\beta_r^2\sum_{k\neq l}^d\prod_{j\neq k,l}^d\langle\bbx^{(r,j)},\bba^{(j)}\rangle^2\geq0,$$
        and \(\xi_N^{(d)},\sigma_N^{(d)}\) are derived from \eqref{Main of Eq of mean general d} and \eqref{Main of Eq of variance general d} as follows:
        \begin{align*}
            \xi_N^{(d)}&=-\frac{1}{2\pi{\rm i}}\oint_{\mfC_1}z^2\mu_N^{(d)}(z)dz,\\
			(\sigma_N^{(d)})^2&=-\frac{1}{4\pi^2}\oint_{\mfC_1}\oint_{\mfC_2}z_1^2z_2^2\mcC_N^{(d)}(z_1,z_2)dz_1dz_2.
        \end{align*}
    \end{pro}
    The proof of the above proposition follows the same steps as the proof of Proposition \ref{Pro of test statistic CLT} and is thus omitted.  The key step is to show that \(\bbR=\bbM+\sum_{r=1}^R\beta_r\bbU_r\bbB^{(r)}\bbU_r'\), where \(\bbB^{(r)}=[B_{k,l}^{(r)}]\in\mbR^{d\times d}\) such that \(B_{k,l}=(1-\delta_{k,l})\prod_{j\neq k,l}^d\langle\bbx^{(r,j)},\bba^{(j)}\rangle\) and 
	$$\bbU_r=\left(\begin{array}{cccc}
		\bbx^{(r,1)}&\boldsymbol{0}_{n_1}&\cdots&\boldsymbol{0}_{n_1}\\
		\boldsymbol{0}_{n_2}&\bbx^{(r,2)}&\cdots&\boldsymbol{0}_{n_2}\\
		\vdots&\ddots&\ddots&\vdots\\
        \boldsymbol{0}_{n_d}&\cdots&\boldsymbol{0}_{n_d}&\bbx^{(r,3)}
	\end{array}\right)\in\mbR^{N\times d}.$$
	Let \(\tilde{\mcT}_N^{(d)}:=\big(\widehat{T}_N^{(d)}-\xi_N^{(d)}\big)/\sigma_N^{(d)}\), Proposition \ref{Pro of test statistic CLT} implies that
	$$\left\{\begin{array}{lc}
		\tilde{\mcT}_N^{(d)}\overset{d}{\longrightarrow}\mcN(0,1)&{\rm under\ }H_0,\\
		\tilde{\mcT}_N^{(d)}-\mcD^{(d)}/\sigma_N^{(d)}\overset{d}{\longrightarrow}\mcN(0,1)&{\rm under\ }H_1.
	\end{array}\right.$$
 The test procedures for general \(d\geq3\) are nearly identical to those introduced in \S\ref{sec of basic procedure}, with the only difference being the estimations of \(\hat{\kappa}_3,\hat{\kappa}_4\), which now become:  
    \begin{align*}
		\left\{\begin{array}{l}
			\hat{\kappa}_3=\prod_{l=1}^d\mfc_l\times N^{3/2-d}\sum_{i_1\cdots i_d=1}^{n_1\cdots n_d}T_{i_1\cdots i_d}^3\\
			\hat{\kappa}_4=\prod_{l=1}^d\mfc_l\times N^{2-d}\sum_{i_1\cdots i_d=1}^{n_1\cdots n_d}T_{i_1\cdots i_d}^4-3
			\end{array}\right..
	\end{align*}
    One can show that  \(\hat{\kappa}_3\overset{\mbP}{\longrightarrow}\kappa_3\) and \(\hat{\kappa}_4\overset{\mbP}{\longrightarrow}\kappa_4\) using the same arguments as in \S\ref{sec of basic procedure}, we omit the details here. 

 {
\section{Real Data Analysis}\label{Sec of real data}

In this section, we apply the tensor signal matching test \eqref{Main of Eq of hypothesis test 2} to human action recognition in video data. This application provides a natural testbed for our methodology, as video data are inherently three-dimensional tensors (height $\times$ width $\times$ time), and action categories may be characterized by shared signal structures.

\subsection{Data and preprocessing}

We select six video samples from the UCF Sports Action Dataset \cite{rodriguez2008action}: three lifting videos (denoted $L_1, L_2, L_3$) and three diving videos (denoted $D_1, D_2, D_3$). Each video undergoes the following preprocessing steps: (i) resizing to $400 \times 400$ pixels per frame, (ii) conversion to grayscale, and (iii) centering along the time axis. After preprocessing, each video is represented as a tensor of dimension $400 \times 400 \times 55$. We model that each video tensor (e.g., $L_1$) admits a low-rank signal:
\[
L_1 = \sum_{k=1}^{K_{L_1}} \beta_{k,L_1} \, \bbx_{L_1}^{(k,1)} \otimes \bbx_{L_1}^{(k,2)} \otimes \bbx_{L_1}^{(k,3)},
\]
where $K_{L_1} \in \mathbb{N}^+$ denotes the rank.

Raw video data typically contain noise from heterogeneous sources, e.g., sensor imperfections, compression artifacts, lighting variations, whose statistical characteristics differ  from the homogeneous additive noise assumed in our theoretical framework. To address this mismatch, we adopt a controlled noise injection protocol: by adding synthetic Gaussian noise of sufficient magnitude, the injected noise dominates the original heterogeneous corruption, equalizing effective noise levels across videos and aligning the data with our model assumptions. This approach is well-established in the video analysis literature for evaluating methods under controlled conditions \cite{grushin2013robust}.

Formally, for each sample (e.g., $L_1$), we first rescale it so that
\[
\big\| L_1(\bbx_{L_1}^{(1,1)}, \bbx_{L_1}^{(1,2)}, \bbx_{L_1}^{(1,3)}) \big\|_2^2 = 2 z_{0.95} \hat{\sigma}_N^{(3)},
\]
where $\hat{\sigma}_N^{(3)}$ is defined in Proposition~\ref{Pro of independent copy} and $z_{0.95}$ denotes the upper $5\%$ quantile of the standard normal distribution. This rescaling calibrates the signal strength relative to the test's variance, ensuring that when testing a video against itself, the rejection rate is approximately $95\%$. We then construct noisy observations by adding Gaussian noise:
\[
\bbT_{L_1} := L_1 + N^{-1/2} \bbX_{L_1}, \quad \text{where } X_{i_1 i_2 i_3} \overset{\text{i.i.d.}}{\sim} \mathcal{N}(0,1) \text{ and } N = 855.
\]
Repeating this for all six videos produces noisy samples ${\bbT_{L_1}, \bbT_{L_2}, \bbT_{L_3}, \bbT_{D_1}, \bbT_{D_2}, \bbT_{D_3}}$.

\subsection{Hypothesis test}

For each pair of noisy videos, we test whether their leading signal components are aligned. For instance, given $\bbT_{L_1}$ and $\bbT_{L_2}$, we consider:
\begin{align}
\begin{array}{l}
H_0^{(L_1,L_2)}: \bbx_{l,L_1}^{(1)} \perp \bbx_{l,L_2}^{(k)} \ \text{ for all } l \in \{1,2,3\} \text{ and } k \in \{1,\ldots,K_{L_2}\}, \\[1mm]
H_1^{(L_1,L_2)}: \exists\, l \in \{1,2,3\} \text{ and } k \in \{1,\ldots,K_{L_2}\} \text{ such that } \bbx_{l,L_1}^{(1)} \not\perp \bbx_{l,L_2}^{(k)}.
\end{array}
\label{Main of Eq of real data test}
\end{align}
Intuitively, videos depicting the same action should share aligned signal components, leading to rejection of $H_0$. In contrast, videos of different actions should have non-matching signals, resulting in acceptance of $H_0$.

The testing procedure is as follows:
\begin{enumerate}
\item \textbf{Signal estimation:} Apply the tensor unfolding method to $\bbT_{L_1}$ to estimate $\hat{\beta}_{1,L_1} \hat{\bbx}_{L_1}^{(1,1)} \otimes \hat{\bbx}_{L_1}^{(1,2)} \otimes \hat{\bbx}_{L_1}^{(1,3)}$.

\item \textbf{Test statistic computation:} Compute $\hat{\xi}_N^{(3)}$ and $\hat{\sigma}_N^{(3)}$ using the formulas in Proposition~\ref{Pro of independent copy}, and construct the standardized statistic
\[
\widetilde{\mathcal{T}}_{L_1,L_2}^{(3)} := \frac{\widehat{T}_{L_1,L_2}^{(3)} - \hat{\xi}_N^{(3)}}{\hat{\sigma}_N^{(3)}},
\]
where
\[
\widehat{T}_{L_1,L_2}^{(3)} := \big\| \bbT_{L_2}(\hat{\bbx}_{L_1}^{(1,1)}, \hat{\bbx}_{L_1}^{(1,2)}, \hat{\bbx}_{L_1}^{(1,3)}) \big\|_2^2 - N \int_{-\infty}^{\infty} x^2 \, \nu_N(dx),
\]
and $\nu_N$ is the limiting spectral distribution derived from the dimensions $(400, 400, 55)$.

\item \textbf{Decision:} Reject $H_0^{(L_1,L_2)}$ if $\widetilde{\mathcal{T}}_{L_1,L_2}^{(3)} > z_{0.95}$.
\end{enumerate}
   
\subsection{Results}

We repeat the above procedure 100 times for each pair of videos and record the acceptance rates. The results are summarized in Table~\ref{Tab of real data analysis}, with a visual summary in Figure~\ref{Fig of histogram}.

\begin{table}[!hbtp]
		\centering
		\renewcommand\arraystretch{1.2}
		\caption{  {Empirical acceptance rates of the null hypothesis $H_0^{(\cdot,\cdot)}$ based on 100 independent repetitions. Each entry represents the proportion of times the null hypothesis (no signal matching) was accepted. Within-group pairs  show low acceptance rates, indicating that videos of the same action share aligned signal components. Between-group pairs (the off-diagonal blocks) exhibit high acceptance rates, confirming that videos of different actions have non-matching signal structures.}}
		\begin{tabular}{c|ccc|ccc}
             & $L_1$ & $L_2$ & $L_3$ & $D_1$ & $D_2$ & $D_3$ \\\hline
            $L_1$ &  & 0.18 & 0.26 & 0.78 & 0.74 & 0.70 \\
            $L_2$ & 0.15 &  & 0.22 & 0.78 & 0.77 & 0.72 \\
            $L_3$ & 0.18 & 0.23 &  & 0.72 & 0.77 & 0.84 \\\hline
            $D_1$ & 0.77 & 0.67 & 0.81 &  & 0.24 & 0.18 \\
            $D_2$ & 0.50 & 0.71 & 0.60 & 0.28 &  & 0.15 \\
            $D_3$ & 0.73 & 0.76 & 0.74 & 0.18 & 0.13 &  \\\hline
        \end{tabular}
		\label{Tab of real data analysis}
	\end{table}

\begin{figure}
        \centering
        \includegraphics[width=0.5\linewidth]{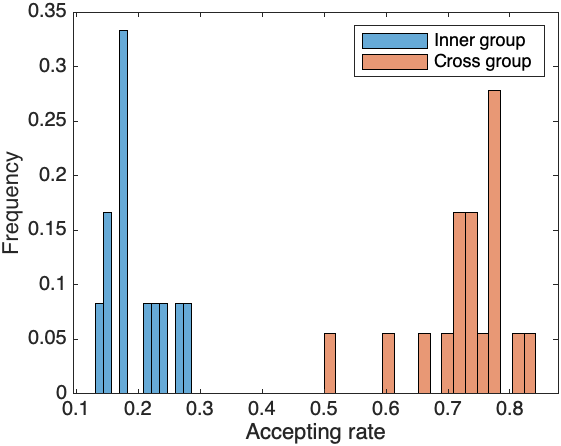}
        \caption{{Histogram of acceptance rates from Table \ref{Tab of real data analysis}, confirming that videos of different actions show distinct signal structures with our test.} }
        \label{Fig of histogram}
    \end{figure}

The results reveal a clear separation between the two types of comparisons. For within-group pairs (both videos from the same action category), the acceptance rates range from 0.13 to 0.28, while for between-group pairs (videos from different action categories), the rates range almost from 0.60 to 0.84. Although the within-group acceptance rates exceed the nominal 5\% level, which is expected and interpretable. Videos within the same action category are performed by different athletes with natural variations in execution speed, body posture, camera angle, and movement style. Consequently, exact signal alignment cannot be anticipated; rather, we expect partial alignment reflecting the common action pattern.  Nevertheless, the clear gap between within-group and between-group acceptance rates demonstrates that our test effectively distinguishes shared action structure from genuinely distinct motions. Within-group pairs consistently show much stronger evidence against the null hypothesis of non-matching signals than between-group pairs, confirming that videos of the same action share partially aligned signal components while videos of different actions do not.
}

\section*{Acknowledgments}

Jianfeng Yao was partially supported by NSFC RFIS Grant No. 12350710179 and Guangdong Province Program No. 2023JC10X022. 

\newpage
\appendix
\begin{center}
{\LARGE {\bf Supplementary Materials of the paper ``Alignment and matching tests for high-dimensional tensor signals
via tensor contraction''}}
\end{center}

{

This supplementary material provides complete proofs of all theoretical results stated in the main manuscript. Below, we outline the structure and explain the purpose of each section.

\medskip

\noindent\textbf{Additional figures and auxiliary results (\S\ref{sec:additionalillust}).} This section collects auxiliary remarks and formulas that support or illustrate the main theoretical developments.

\medskip

\noindent\textbf{Numerical experiments (\S\ref{sec of Numerical}).} This section conducts several numerical experiments to investigate the performance of our theorems and hypothesis tests.
\medskip

\noindent\textbf{Proofs of main results (\S\ref{sec of Basic settings}--\S\ref{Sec of General cases}).} The remainder of this supplement is devoted to proving all theorems and propositions from the main manuscript. These sections are written to be self-contained: all necessary notation, definitions, and assumptions are introduced in \S\ref{sec of Basic settings}, so readers need not refer back to the main text.

The proofs are organized around three core technical components, each building toward our main goal of establishing a central limit theorem (CLT) for linear spectral statistics (LSS) of the contracted tensor $\bbM$.

\begin{itemize}
    \item \textbf{The vector Dyson equation (\S\ref{Sec of Dyson}).} A central object in our analysis is a system of self-consistent equations--known as the vector Dyson equation--that characterizes the asymptotic behavior of $\bbM$'s resolvent. In this section, we establish three fundamental properties: the existence and uniqueness of solutions (\S\ref{sec of Existence and uniqueness Dyson equation}), the invertibility of an associated stability operator (\S\ref{Sec of Stability operator}), and the stability of solutions under perturbations (\S\ref{sec of Stability}). These results form the analytical foundation for all subsequent developments, including the characterization of $\bbM$'s limiting spectral distribution and the fine-grained control of its resolvent entries.

    \item \textbf{Spectral properties of $\bbM$ (\S\ref{Sec of LSD}).} Here we study the limiting spectral distribution (LSD) of $\bbM$, corresponding to Theorems 2.1 and 2.2 in the main manuscript. We first show that both the empirical and limiting spectral distributions have bounded support (\S\ref{Sec of Stable region ESD}--\S\ref{Sec of finite support}), then provide a necessary and sufficient condition for the LSD to have a point mass at zero (\S\ref{Sec of Singularity}). The boundedness of the spectral support is not merely a technical curiosity--it ensures that the contour integration techniques used to establish the CLT are well-defined.

    \item \textbf{Entrywise control of the resolvent (\S\ref{Sec of entrywise law}).} A key step toward the CLT is to show that individual entries of $\bbM$'s resolvent concentrate around their deterministic limits. This "entrywise local law" (Theorem 2.3 in the main manuscript) is established in this section for the case $d=3$. We focus on this case for clarity of exposition; the extension to general $d \geq 3$ follows the same approach but involves more elaborate notation. Section~\ref{sec of Preliminary Lemmas} contains auxiliary lemmas that will be reused throughout the subsequent analysis.

    \item \textbf{The CLT for linear spectral statistics (\S\ref{Sec of mean and covariance}--\S\ref{Sec of CLT}).} With the preceding machinery in place, we prove the main CLT (Theorem 3.1) in two stages. First, in \S\ref{Sec of mean and covariance}, we derive explicit formulas for the asymptotic mean and covariance of the LSS. These formulas arise from systematic equations whose solutions we compute in \S\ref{Sec of majors}. Second, in \S\ref{Sec of CLT}, we establish Gaussian convergence. The strategy is to first show that $\bbM$'s resolvent, viewed as a stochastic process, is tight and asymptotically Gaussian on contours enclosing the spectrum (\S\ref{Sec of Tightness}--\S\ref{Sec of CLT sub}). The CLT for LSS then follows by expressing these statistics as contour integrals of the resolvent (\S\ref{sec of proof CLT d=3}).

    \item \textbf{Extension to general tensor order (\S\ref{Sec of General cases}).} Finally, we extend the entrywise law and CLT to tensors of arbitrary order $d \geq 3$ (Theorems 2.3 and 5.1). Since the proof structure mirrors the $d=3$ case, we focus on highlighting the differences that arise in the calculations.
\end{itemize}
}

{
\section{Additional Illustrative Material and Auxiliary Results}\label{sec:additionalillust}
\setcounter{equation}{0}
\def\theequation{\thesection.\arabic{equation}}
\setcounter{subsection}{0}

\begin{remark}[Tensor Contraction: An Illustrative Examples]\label{rem:tensor_contraction}
	Let $\mathcal{T} \in \mathbb{R}^{n_1 \times n_2 \times \cdots \times n_d}$ be an $d$th-order tensor. The \emph{mode-$k$ product} of $\mathcal{T}$ with a vector $a^{(k)} \in \mathbb{R}^{n_k}$ computes a linear combination along mode $k$, reducing the tensor order by one:
	\begin{equation}
		\mathcal{T} \times_k a^{(k)} \quad \Longrightarrow \quad \text{$(d-1)$th-order tensor}.
	\end{equation}
	
	\medskip
	We illustrate the construction for the case $d = 3$, i.e., $\mathcal{T} \in \mathbb{R}^{n_1 \times n_2 \times n_3}$. The general framework extends naturally to arbitrary order $d$. For visual clarity, we depict the case where $n_1 < n_3 < n_2$.
	
	\begin{center}
		\begin{tikzpicture}[scale=0.85, every node/.style={font=\small}]
			
			\pgfmathsetmacro{\nOne}{1.0}    
			\pgfmathsetmacro{\nTwo}{2.0}    
			\pgfmathsetmacro{\nThree}{1.4}  
			\pgfmathsetmacro{\depthFactor}{0.5}   
			\pgfmathsetmacro{\depthX}{\depthFactor*\nThree}
			
			\begin{scope}[shift={(-5,0)}]
				\fill[blue!20, opacity=0.7] (0,0) -- (\nTwo,0) -- (\nTwo,\nOne) -- (0,\nOne) -- cycle;
				\fill[blue!40, opacity=0.7] (0,0) -- (\depthX,\depthX) -- (\depthX,{\nOne+\depthX}) -- (0,\nOne) -- cycle;
				\fill[blue!55, opacity=0.7] (0,\nOne) -- (\depthX,{\nOne+\depthX}) -- ({\nTwo+\depthX},{\nOne+\depthX}) -- (\nTwo,\nOne) -- cycle;
				\fill[blue!30, opacity=0.8] (\depthX,\depthX) -- ({\nTwo+\depthX},\depthX) -- ({\nTwo+\depthX},{\nOne+\depthX}) -- (\depthX,{\nOne+\depthX}) -- cycle;
				\fill[blue!45, opacity=0.7] (\nTwo,0) -- ({\nTwo+\depthX},\depthX) -- ({\nTwo+\depthX},{\nOne+\depthX}) -- (\nTwo,\nOne) -- cycle;
				\fill[blue!15, opacity=0.7] (0,0) -- (\nTwo,0) -- ({\nTwo+\depthX},\depthX) -- (\depthX,\depthX) -- cycle;
				
				\draw[thick] (0,0) -- (\nTwo,0) -- (\nTwo,\nOne) -- (0,\nOne) -- cycle;
				\draw[thick] (\depthX,\depthX) -- ({\nTwo+\depthX},\depthX) -- ({\nTwo+\depthX},{\nOne+\depthX}) -- (\depthX,{\nOne+\depthX}) -- cycle;
				\draw[thick] (0,0) -- (\depthX,\depthX);
				\draw[thick] (\nTwo,0) -- ({\nTwo+\depthX},\depthX);
				\draw[thick] (\nTwo,\nOne) -- ({\nTwo+\depthX},{\nOne+\depthX});
				\draw[thick] (0,\nOne) -- (\depthX,{\nOne+\depthX});
				
				\node at ({(\nTwo+\depthX)/2}, -1.25) {$\mathcal{T}\in\mathbb{R}^{n_1\times n_2\times n_3}$};
				
				\draw[<->, thick, gray] (-0.25,0) -- (-0.25,\nOne) node[midway, left] {$n_1$};
				\draw[<->, thick, gray] (0,-0.25) -- (\nTwo,-0.25) node[midway, below] {$n_2$};
				\draw[<->, thick, gray] ({\nTwo+0.15},-0.15) -- ({\nTwo+\depthX+0.15},{\depthX-0.15}) node[midway, below right=-2pt] {$n_3$};

			\end{scope}
			
			\draw[-{Stealth[scale=1.2]}, thick, blue!70!black] (-1.6,0.85) -- (0.4,0.85) 
			node[midway, above] {$\times_1 a^{(1)}$};
			
			\begin{scope}[shift={(1.0,0)}]
				\fill[green!35] (0,0) rectangle (\nTwo,\nThree);
				\draw[thick] (0,0) rectangle (\nTwo,\nThree);
				\foreach \x in {0.25,0.65,1.05,1.45,1.85} {
					\foreach \y in {0.25,0.7,1.15} {
						\fill (\x,\y) circle (1.8pt);
					}
				}
				\node[below] at ({\nTwo/2},-1) {$T^{23} \in \mathbb{R}^{n_2 \times n_3}$};
				\draw[<->, thick, gray] ({\nTwo+0.2},0) -- ({\nTwo+0.2},\nThree) node[midway, right] {$n_3$};
				\draw[<->, thick, gray] (0,-0.4) -- (\nTwo,-0.4) node[midway, below] {$n_2$};
			\end{scope}
		\end{tikzpicture}
	\end{center}
	
	\noindent The mode-1 product is defined elementwise as:
	\begin{equation}
		\left(\mathcal{T} \times_1 a^{(1)}\right)_{j,k} = \sum_{i=1}^{n_1} \mathcal{T}_{i,j,k} \cdot a_i^{(1)}, \qquad j \in [n_2],\; k \in [n_3].
	\end{equation}
	
	\vspace{0.5em}
	\noindent\textbf{Blockwise Contractions.} For a third-order tensor, contracting along each mode yields three matrices:
	
	\begin{center}
		\begin{tikzpicture}[scale=0.8, every node/.style={font=\small}]
			
			\pgfmathsetmacro{\nOne}{1.0}    
			\pgfmathsetmacro{\nTwo}{1.8}    
			\pgfmathsetmacro{\nThree}{1.3}  
			\pgfmathsetmacro{\depthFactor}{0.4}
			\pgfmathsetmacro{\depthX}{\depthFactor*\nThree}
			
			\begin{scope}[shift={(0,0)}]
				\fill[blue!20, opacity=0.7] (0,0) -- (\nTwo,0) -- (\nTwo,\nOne) -- (0,\nOne) -- cycle;
				\fill[blue!40, opacity=0.7] (0,0) -- (\depthX,\depthX) -- (\depthX,{\nOne+\depthX}) -- (0,\nOne) -- cycle;
				\fill[blue!55, opacity=0.7] (0,\nOne) -- (\depthX,{\nOne+\depthX}) -- ({\nTwo+\depthX},{\nOne+\depthX}) -- (\nTwo,\nOne) -- cycle;
				\fill[blue!30, opacity=0.8] (\depthX,\depthX) -- ({\nTwo+\depthX},\depthX) -- ({\nTwo+\depthX},{\nOne+\depthX}) -- (\depthX,{\nOne+\depthX}) -- cycle;
				\fill[blue!45, opacity=0.7] (\nTwo,0) -- ({\nTwo+\depthX},\depthX) -- ({\nTwo+\depthX},{\nOne+\depthX}) -- (\nTwo,\nOne) -- cycle;
				\draw[thick] (0,0) -- (\nTwo,0) -- (\nTwo,\nOne) -- (0,\nOne) -- cycle;
				\draw[thick] (\depthX,\depthX) -- ({\nTwo+\depthX},\depthX) -- ({\nTwo+\depthX},{\nOne+\depthX}) -- (\depthX,{\nOne+\depthX}) -- cycle;
				\draw[thick] (0,0) -- (\depthX,\depthX);
				\draw[thick] (\nTwo,0) -- ({\nTwo+\depthX},\depthX);
				\draw[thick] (\nTwo,\nOne) -- ({\nTwo+\depthX},{\nOne+\depthX});
				\draw[thick] (0,\nOne) -- (\depthX,{\nOne+\depthX});
				\node at ({(\nTwo+\depthX)/2}, {(\nOne+\depthX)/2}) {$\mathcal{T}$};
			\end{scope}
			
			\coordinate (TensorRight) at ({(\nTwo+\depthX)}, {(\nOne+\depthX)/2});
			\coordinate (TensorTop) at ({(\nTwo+\depthX)/2}, {(\nOne+\depthX)});
			\coordinate (TensorBottomRight) at ({\nTwo+\depthX*0.7}, {\depthX*0.3});
			
			\begin{scope}[shift={(5.5,0.1)}]
				\fill[green!40] (0,0) rectangle (\nTwo,\nThree);
				\draw[thick] (0,0) rectangle (\nTwo,\nThree);
				\foreach \x in {0.3,0.7,1.1,1.5} \foreach \y in {0.3,0.65,1.0} \fill (\x,\y) circle (1.5pt);
				\node[right] at ({\nTwo+0.15},{\nThree/2}) {$T^{23}$};
				\node[below, font=\footnotesize, green!50!black] at ({\nTwo/2},-0.2) {$n_2 \times n_3$};
			\end{scope}
			\draw[-{Stealth[scale=1.1]}, thick, green!60!black] (TensorRight) -- ({5.5},0.75) 
			node[midway, above] {$\times_1 a^{(1)}$};
			
			\begin{scope}[shift={(0.4,3.8)}]
				\fill[red!30] (0,0) rectangle (\nThree,\nOne);
				\draw[thick] (0,0) rectangle (\nThree,\nOne);
				\foreach \x in {0.25,0.65,1.05} \foreach \y in {0.35,0.7} \fill (\x,\y) circle (1.5pt);
				\node[above] at ({\nThree/2},{\nOne+0.1}) {$T^{13}$};
				\node[right, font=\footnotesize, red!50!black] at ({\nThree/2},-0.2) {$n_1 \times n_3$};
			\end{scope}
			\draw[-{Stealth[scale=1.1]}, thick, red!60!black] (TensorTop) -- (1.05,3.7) 
			node[midway, right] {$\times_2 a^{(2)}$};
			
			\begin{scope}[shift={(5.5,-2.8)}]
				\fill[orange!40] (0,0) rectangle (\nTwo,\nOne);
				\draw[thick] (0,0) rectangle (\nTwo,\nOne);
				\foreach \x in {0.3,0.7,1.1,1.5} \foreach \y in {0.35,0.7} \fill (\x,\y) circle (1.5pt);
				\node[right] at ({\nTwo+0.15},{\nOne/2}) {$T^{12}$};
				\node[below, font=\footnotesize, orange!60!black] at ({\nTwo/2},-0.2) {$n_1 \times n_2$};
			\end{scope}
			\draw[-{Stealth[scale=1.1]}, thick, orange!70!black] (TensorBottomRight) -- (5.4,-2.3) 
			node[midway, above, sloped] {$\times_3 a^{(3)}$};
			
		\end{tikzpicture}
	\end{center}
	
	\noindent The superscript $T^{ij}$ denotes the matrix retaining modes $i$ and $j$. Explicitly:
	\begin{align}
		T^{23}_{j,k} &= \textstyle\sum_{i=1}^{n_1} \mathcal{T}_{i,j,k} \cdot a_i^{(1)}, &
		T^{13}_{i,k} &= \textstyle\sum_{j=1}^{n_2} \mathcal{T}_{i,j,k} \cdot a_j^{(2)}, &
		T^{12}_{i,j} &= \textstyle\sum_{k=1}^{n_3} \mathcal{T}_{i,j,k} \cdot a_k^{(3)}.
	\end{align}
	
	\vspace{0.5em}
	\noindent\textbf{Full Contraction Matrix.} These contraction matrices assemble into a symmetric block matrix $R \in \mathbb{R}^{N \times N}$ with $N = n_1 + n_2 + n_3$:
	
	\begin{center}
		\begin{tikzpicture}[scale=0.8, every node/.style={font=\small}]
			
			\pgfmathsetmacro{\bOne}{1.2}    
			\pgfmathsetmacro{\bTwo}{2.2}    
			\pgfmathsetmacro{\bThree}{1.6}  
			\pgfmathsetmacro{\totalSize}{\bOne+\bTwo+\bThree}  
			
			\pgfmathsetmacro{\matX}{1.5}
			\pgfmathsetmacro{\matY}{0}
			
			\pgfmathsetmacro{\colOneX}{\matX + \bOne/2}
			\pgfmathsetmacro{\colTwoX}{\matX + \bOne + \bTwo/2}
			\pgfmathsetmacro{\colThreeX}{\matX + \bOne + \bTwo + \bThree/2}
			
			\pgfmathsetmacro{\rowOneY}{\matY + \bThree + \bTwo + \bOne/2}
			\pgfmathsetmacro{\rowTwoY}{\matY + \bThree + \bTwo/2}
			\pgfmathsetmacro{\rowThreeY}{\matY + \bThree/2}
			
			\begin{scope}[shift={(-5.5,0)}]
				
				\fill[orange!40] (0,3.5) rectangle (1.8,4.5);
				\draw[thick] (0,3.5) rectangle (1.8,4.5);
				\node at (0.9,4.0) {$T^{12}$};
				
				\fill[red!30] (0,1.8) rectangle (1.4,2.8);
				\draw[thick] (0,1.8) rectangle (1.4,2.8);
				\node at (0.7,2.3) {$T^{13}$};
				
				\fill[green!40] (0,0) rectangle (1.8,1.3);
				\draw[thick] (0,0) rectangle (1.8,1.3);
				\node at (0.9,0.65) {$T^{23}$};
				
				\node[above, font=\normalsize\bfseries, align=center] at (0.9,4.9) {Contraction\\[-2pt]Matrices};
			\end{scope}
			
			\coordinate (T12Src) at (-3.7, 4.0);
			\coordinate (T13Src) at (-4.1, 2.3);
			\coordinate (T23Src) at (-3.7, 0.65);
			
			\fill[red!20] (\matX,\matY) rectangle ({\matX+\bOne},{\matY+\bThree});
			\node at (\colOneX,\rowThreeY) {\footnotesize$(T^{13})^\top$};
			
			\fill[green!25] ({\matX+\bOne},\matY) rectangle ({\matX+\bOne+\bTwo},{\matY+\bThree});
			\node at (\colTwoX,\rowThreeY) {\footnotesize$(T^{23})^\top$};
			
			\fill[gray!20] ({\matX+\bOne+\bTwo},\matY) rectangle ({\matX+\totalSize},{\matY+\bThree});
			\node at (\colThreeX,\rowThreeY) {$0$};
			
			\fill[orange!25] (\matX,{\matY+\bThree}) rectangle ({\matX+\bOne},{\matY+\bThree+\bTwo});
			\node at (\colOneX,\rowTwoY) {\footnotesize$(T^{12})^\top$};
			
			\fill[gray!20] ({\matX+\bOne},{\matY+\bThree}) rectangle ({\matX+\bOne+\bTwo},{\matY+\bThree+\bTwo});
			\node at (\colTwoX,\rowTwoY) {$0$};
			
			\fill[green!40] ({\matX+\bOne+\bTwo},{\matY+\bThree}) rectangle ({\matX+\totalSize},{\matY+\bThree+\bTwo});
			\node at (\colThreeX,\rowTwoY) {$T^{23}$};
			
			\fill[gray!20] (\matX,{\matY+\bThree+\bTwo}) rectangle ({\matX+\bOne},{\matY+\totalSize});
			\node at (\colOneX,\rowOneY) {$0$};
			
			\fill[orange!40] ({\matX+\bOne},{\matY+\bThree+\bTwo}) rectangle ({\matX+\bOne+\bTwo},{\matY+\totalSize});
			\node at (\colTwoX,\rowOneY) {$T^{12}$};
			
			\fill[red!30] ({\matX+\bOne+\bTwo},{\matY+\bThree+\bTwo}) rectangle ({\matX+\totalSize},{\matY+\totalSize});
			\node at (\colThreeX,\rowOneY) {$T^{13}$};
			
			\draw[very thick] (\matX,\matY) rectangle ({\matX+\totalSize},{\matY+\totalSize});
			\draw[thick] ({\matX+\bOne},\matY) -- ({\matX+\bOne},{\matY+\totalSize});
			\draw[thick] ({\matX+\bOne+\bTwo},\matY) -- ({\matX+\bOne+\bTwo},{\matY+\totalSize});
			\draw[thick] (\matX,{\matY+\bThree}) -- ({\matX+\totalSize},{\matY+\bThree});
			\draw[thick] (\matX,{\matY+\bThree+\bTwo}) -- ({\matX+\totalSize},{\matY+\bThree+\bTwo});
			
			\node[left] at ({\matX-0.15},\rowOneY) {$n_1$};
			\node[left] at ({\matX-0.15},\rowTwoY) {$n_2$};
			\node[left] at ({\matX-0.15},\rowThreeY) {$n_3$};
			
			\node[below] at (\colOneX,{\matY-0.15}) {$n_1$};
			\node[below] at (\colTwoX,{\matY-0.15}) {$n_2$};
			\node[below] at (\colThreeX,{\matY-0.15}) {$n_3$};
			
			\node[above, font=\normalsize\bfseries] at ({\matX+\totalSize/2},{\matY+\totalSize+0.4}) {Unified Matrix $R$};
			
			\draw[-{Stealth}, thick, orange!70!black, dashed] 
			(T12Src) to[out=0, in=135] (\colTwoX, \rowOneY);
			
			\draw[-{Stealth}, thick, red!60!black, dashed] 
			(T13Src) to[out=0, in=150] (\colThreeX, \rowOneY);
			
			\draw[-{Stealth}, thick, green!60!black, dashed] 
			(T23Src) to[out=0, in=180] (\colThreeX, \rowTwoY);
			
		\end{tikzpicture}
	\end{center}
	
	\begin{equation}
		R = \Phi_3\left(\mathcal{T}; a^{(1)}, a^{(2)}, a^{(3)}\right) := 
		\begin{pmatrix}
			0_{n_1 \times n_1} & T^{12} & T^{13} \\[4pt]
			(T^{12})^\top & 0_{n_2 \times n_2} & T^{23} \\[4pt]
			(T^{13})^\top & (T^{23})^\top & 0_{n_3 \times n_3}
		\end{pmatrix} \in \mathbb{R}^{N \times N}.
	\end{equation}
	
	\noindent This block-symmetric structure with zero diagonal blocks encapsulates all pairwise mode contractions. The construction generalizes to $d$th-order tensors, yielding a symmetric block matrix with $\binom{d}{2}$ distinct off-diagonal blocks.
\end{remark}

\begin{remark}\label{Rem of close form}
    As mentioned in Remark 3.2 of the main manuscript, when $d=3$ and $\mfc_1=\mfc_2=\mfc_3=1/3$ and $\kappa_3=\kappa_4=0$, we have explicit expressions for the mean function $\mu_N^{(3)}(z)$ and the covariance function $\mcC_N^{(3)}(z_1,z_2)$ as follows:
\begin{footnotesize}
    \begin{align*}
        &\mu_N(z)=\frac{108g^9(z)+270g^7(z)-1782g^5(z)}{8g^{10}(z)+36g^8(z)-378g^6(z)-189g^4(z)+2430g^2(z)-2187},\\
        &\mcC_N^{(3)}(z_1,z_2)=\frac{-324g^4(z_1)g^4(z_2)+1944g^3(z_1)g^3(z_2)}{16g^6(z_1)g^6(z_2)-216g^4(z_1)g^4(z_2)+216g^3(z_1)g^3(z_2)+729g^2(z_1)g^2(z_2)-1458g(z_1)g(z_2)+729},
    \end{align*}
\end{footnotesize}\noindent
where $g(z)=\frac{3}{4}\left(\sqrt{z^2-\frac{8}{3}}-z\right)$.

\begin{tikzpicture}[
	scale=1.2,
	>={Stealth[length=5pt]},
	contour/.style={thick, blue!70!black, 
		decoration={markings, 
			mark=at position 0.12 with {\arrow{>}},
			mark=at position 0.37 with {\arrow{>}},
			mark=at position 0.62 with {\arrow{>}},
			mark=at position 0.87 with {\arrow{>}}
		}, postaction={decorate}},
	contour1/.style={thick, blue!70!black, 
		decoration={markings, 
			mark=at position 0.12 with {\arrow{>}},
			mark=at position 0.37 with {\arrow{>}},
			mark=at position 0.62 with {\arrow{>}},
			mark=at position 0.87 with {\arrow{>}}
		}, postaction={decorate}},
	contour2/.style={thick, red!70!black, 
		decoration={markings, 
			mark=at position 0.12 with {\arrow{>}},
			mark=at position 0.37 with {\arrow{>}},
			mark=at position 0.62 with {\arrow{>}},
			mark=at position 0.87 with {\arrow{>}}
		}, postaction={decorate}}
	]
	
	\begin{scope}[shift={(-3.2,0)}]
		\draw[->,gray] (-2.3,0) -- (2.3,0) node[right] {\small $\mathrm{Re}$};
		\draw[->,gray] (0,-1.6) -- (0,1.6) node[above] {\small $\mathrm{Im}$};
		
		\draw[line width=3pt, black!50!green] (-1.3,0) -- (1.3,0);
		\node[below=6pt, black!50!green, font=\footnotesize] at (0,0) {$\mathrm{supp}(\nu)$};
		
		\fill[black!50!green] (-1.3,0) circle (1.8pt);
		\fill[black!50!green] (1.3,0) circle (1.8pt);
		
		\draw[contour] (0,0) ellipse (1.9 and 1.1);
		\node[blue!70!black] at (2.15,0.9) {$\mathcal{C}$};
		
		\fill[blue!70!black] (1.35,0.78) circle (2pt) node[above right] {$z$};
		
		\node[font=\bfseries] at (0,2.1) {Mean};
		
		\node[align=center] at (0,-2.2) {
			$\displaystyle \mathbb{E}[X_f] = \frac{1}{2\pi \mathrm{i}} \oint_{\mathcal{C}} f(z)\, \mu_N(z)\, dz$
		};
	\end{scope}
	
	\begin{scope}[shift={(3.2,0)}]
		\draw[->,gray] (-2.3,0) -- (2.3,0) node[right] {\small $\mathrm{Re}$};
		\draw[->,gray] (0,-1.6) -- (0,1.6) node[above] {\small $\mathrm{Im}$};
		
		\draw[line width=3pt, black!50!green] (-1.3,0) -- (1.3,0);
		\node[below=6pt, black!50!green, font=\footnotesize] at (0,0) {$\mathrm{supp}(\nu)$};
		
		\fill[black!50!green] (-1.3,0) circle (1.8pt);
		\fill[black!50!green] (1.3,0) circle (1.8pt);
		
		\draw[contour1] (0,0) ellipse (1.9 and 1.1);
		\node[blue!70!black] at (2.15,0.9) {$\mathcal{C}_1$};
		
		\draw[contour2] (0,0) ellipse (1.55 and 0.75);
		\node[red!70!black] at (1.8,0.5) {$\mathcal{C}_2$};
		
		\fill[blue!70!black] (1.35,0.78) circle (2pt) node[above right] {$z_1$};
		\fill[red!70!black] (1.35,0.41) circle (2pt) node[right] {$z_2$};
		
		\node[font=\bfseries] at (0,2.1) {Covariance};
		
		\node[align=center] at (1,-2.2) {
			$\displaystyle \mathrm{Cov}(X_f, X_g) = \frac{-1}{4\pi^2} \oint_{\mathcal{C}_1} \oint_{\mathcal{C}_2} f(z_1) g(z_2)\, \mathcal{C}_N(z_1,z_2)\, dz_1 dz_2$
		};
	\end{scope}
	
	
	\draw[gray, dashed] (0,-2.8) -- (0,2.5);
	
\end{tikzpicture}

\end{remark}
\begin{pro}
    Under Assumptions 2.1 and 2.2 in the main manuscript, if $\max_{1\leq l\leq 3}\mathfrak{r}_{r_0,l}\overset{a.s.}{\longrightarrow}0$, then
    \begin{align*}
        \widehat{T}_{r_0,N}^{(3)}(\hat{\bbx}^{(r_0,1)},\hat{\bbx}^{(r_0,2)},\hat{\bbx}^{(r_0,3)})\overset{d}{\longrightarrow}\widehat{T}_{r_0,N}^{(3)}(\bbx^{(r_0,1)},\bbx^{(r_0,2)},\bbx^{(r_0,3)}).
    \end{align*}
\end{pro}
\begin{proof}
    Recall that $\mathfrak{r}_{r_0,l}=1-|\langle\bbx^{(r_0,l)},\hat{\bbx}^{(r_0,l)}\rangle|$ for $1\leq r_0\leq R$ and $1\leq l\leq3$, when $\max_{1\leq l\leq 3}\mathfrak{r}_{r_0,l}\overset{a.s.}{\longrightarrow}0$, it implies that
    \begin{align*}
        &\big|\langle\bby^{(r_1,l)},\bbx^{(r_0,l)}\rangle^2-\langle\bby^{(r_1,l)},\hat{\bbx}^{(r_0,l)}\rangle^2\big|\leq\big|\langle\bby^{(r_1,l)},\bbx^{(r_0,l)}-\hat{\bbx}^{(r_0,l)}\rangle\big|\cdot\big|\langle\bby^{(r_1,l)},\bbx^{(r_0,l)}+\hat{\bbx}^{(r_0,l)}\rangle\big|\\
        &\leq\sqrt{2}\min\{\Vert\bbx^{(r_0,l)}-\hat{\bbx}^{(r_0,l)}\Vert_2,\Vert\bbx^{(r_0,l)}+\hat{\bbx}^{(r_0,l)}\Vert_2\}\leq2\mathfrak{r}_{r_0,l}^{1/2},
    \end{align*}
    then
    \begin{align}
        &\big|\mcD^{(r_0,3)}-\hat{\mcD}^{(r_0,3)}\big|=2\Bigg|\sum_{r_1=1}^{R_1}\beta_{1,r_1}^2\sum_{l=1}^3\langle\bby^{(r_1,l)},\bbx^{(r_0,l)}\rangle^2-\sum_{r_1=1}^{R_1}\beta_{1,r_1}^2\sum_{l=1}^3\langle\bby^{(r_1,l)},\hat{\bbx}^{(r_0,l)}\rangle^2\Bigg|\notag\\
        &\leq2\sum_{r_1=1}^{R_1}\beta_{r_1}^2\sum_{l=1}^3\mathfrak{r}_{r_0,l}^{1/2}\leq6\max_{1\leq l\leq 3}\mathfrak{r}_{r_0,l}^{1/2}\sum_{r_1=1}^{R_1}\beta_{r_1}^2\overset{a.s.}{\longrightarrow}0.\label{Eq of mean drift convergence}
    \end{align}
    Moreover, conditional on $\{\hat{\bbx}^{(r_0,l)}:1\leq l\leq3\}$,
    \begin{align}
         &\big(\widehat{T}_{r_0,N}^{(3)}(\hat{\bbx}^{(r_0,1)},\hat{\bbx}^{(r_0,2)},\hat{\bbx}^{(r_0,3)})-\hat{\mcD}^{(r_0,3)}-\hat{\xi}_N^{(r_0,3)}\big)/\hat{\sigma}_N^{(r_0,3)}\label{Eq of empirical statistic CLT}\\
        &=\left(\Vert\bbT(\hat{\bbx}^{(r_0,1)},\hat{\bbx}^{(r_0,2)},\hat{\bbx}^{(r_0,3)})\Vert_2^2-N\int_{-\infty}^{\infty}x^2\nu(dx)-\hat{\mcD}^{(r_0,3)}-\hat{\xi}_N^{(r_0,3)}\right)/\hat{\sigma}_N^{(r_0,3)}\overset{d}{\longrightarrow}\mcN(0,1),\notag
    \end{align}
    where
    \begin{align*}
		\hat{\xi}_N^{(r_0,3)}&=-\frac{1}{2\pi{\rm i}}\oint_{\mfC_1}z^2\mu_N^{(3)}(z;\kappa_3,\kappa_4,\hat{\bbx}^{(r_0,1)},\hat{\bbx}^{(r_0,2)},\hat{\bbx}^{(r_0,3)})dz,\\
        \big(\hat{\sigma}_N^{(r_0,3)}\big)^2&=-\frac{1}{4\pi^2}\oint_{\mfC_1}\oint_{\mfC_2}z_1^2z_2^2\mcC_N^{(3)}(z_1,z_2;\kappa_4,\hat{\bbx}^{(r_0,1)},\hat{\bbx}^{(r_0,2)},\hat{\bbx}^{(r_0,3)})dz_1dz_2.
	\end{align*}
    By Propositions 3.1 and 3.2 in the main manuscript, the mean function $\mu_N^{(3)}$ and the covariance function $\mcC_N^{(3)}$ contain terms $\mfb_l^{(1)}=\langle\boldsymbol{1}_{n_l},\hat{\bbx}^{(r_0,l)}\rangle/\sqrt{N}$ and $\mcB_{(4)}^{(i,k)}=\Vert\hat{\bbx}^{(r_0,l)}\Vert_4^4$ for $\{i,k,l\}=\{1,2,3\}$ relating with $\{\hat{\bbx}^{(r_0,1)},\hat{\bbx}^{(r_0,2)},\hat{\bbx}^{(r_0,3)}\}$. Note that
    \begin{align*}
        &\big|\langle\boldsymbol{1}_{n_l},\hat{\bbx}^{(r_0,l)}-\bbx^{(r_0,l)}\rangle/\sqrt{N}\big|\leq\Vert\hat{\bbx}^{(r_0,l)}-\bbx^{(r_0,l)}\Vert_2\leq\sqrt{2}\mathfrak{r}_{r_0,l}^{1/2}\overset{a.s.}{\longrightarrow}0,\\
        &\big|\Vert\hat{\bbx}^{(r_0,l)}\Vert_4-\Vert\bbx^{(r_0,l)}\Vert_4\big|\leq\Vert\hat{\bbx}^{(r_0,l)}-\bbx^{(r_0,l)}\Vert_4\leq2^{1/4}\Vert\hat{\bbx}^{(r_0,l)}-\bbx^{(r_0,l)}\Vert_2^{1/2}\leq\sqrt{2}\mathfrak{r}_{r_0,l}^{1/4}\overset{a.s.}{\longrightarrow}0,
    \end{align*}
    which implies that 
    \begin{align}
        \hat{\xi}_N^{(r_0,3)}\overset{a.s.}{\longrightarrow}\xi_N^{(r_0,3)}\quad\text{and}\quad\hat{\sigma}_N^{(r_0,3)}\overset{a.s.}{\longrightarrow}\sigma_N^{(r_0,3)},\label{Eq of mean and variance convergence}
    \end{align}
    where
    \begin{align*}
		\xi_N^{(r_0,3)}&=-\frac{1}{2\pi{\rm i}}\oint_{\mfC_1}z^2\mu_N^{(3)}(z;\kappa_3,\kappa_4,\bbx^{(r_0,1)},\bbx^{(r_0,2)},\bbx^{(r_0,3)})dz,\\
        \big(\sigma_N^{(r_0,3)}\big)^2&=-\frac{1}{4\pi^2}\oint_{\mfC_1}\oint_{\mfC_2}z_1^2z_2^2\mcC_N^{(3)}(z_1,z_2;\kappa_4,\bbx^{(r_0,1)},\bbx^{(r_0,2)},\bbx^{(r_0,3)})dz_1dz_2.
	\end{align*}
    Finally, combined with \eqref{Eq of mean drift convergence}, \eqref{Eq of empirical statistic CLT} and \eqref{Eq of mean and variance convergence}, we have
    \begin{align*}
         \big(\widehat{T}_{r_0,N}^{(3)}(\hat{\bbx}^{(r_0,1)},\hat{\bbx}^{(r_0,2)},\hat{\bbx}^{(r_0,3)})-\mcD^{(r_0,3)}-\xi_N^{(r_0,3)}\big)/\sigma_N^{(r_0,3)}\overset{d}{\longrightarrow}\mcN(0,1),
    \end{align*}
    by Proposition 4.2 in the main manuscript, we complete our proof.
\end{proof}

\begin{remark}[Power under imperfect estimation]\label{Rem of power imperfect}
    Even when $\max_{1\leq l\leq 3}\mathfrak{r}_{r_0,l}$ does not vanish asymptotically, for instance, when the reference signal $\beta_{r_0,0}$ is moderate rather than diverging, the test based on estimated directions can still achieve substantial power. To illustrate, consider the rank-1 case with parallel signals:
    $$\bbT^{(0)}=\beta_{1,0}\bbx^{(1)}\otimes\bbx^{(2)}\otimes\bbx^{(3)}+\frac{1}{\sqrt{N}}\bbX^{(0)},\quad
    \bbT^{(1)}=\beta_{1,1}\bbx^{(1)}\otimes\bbx^{(2)}\otimes\bbx^{(3)}+\frac{1}{\sqrt{N}}\bbX^{(1)}.$$
    Let $\hat{\bbx}^{(l)}$ denote estimates obtained from $\bbT^{(0)}$. Under $H_1$ (signal matching), the test statistic satisfies
    $$\big(\widehat{T}_{1,N}^{(3)}(\hat{\bbx}^{(1)},\hat{\bbx}^{(2)},\hat{\bbx}^{(3)})-\hat{\xi}_N^{(1,3)}-\hat{\mcD}^{(1,3)}\big)/\hat{\sigma}_N^{(1,3)}\overset{d}{\longrightarrow}\mcN(0,1),$$
    where the effective drift is $\hat{\mcD}^{(1,3)}=2\beta_{1,1}^2\sum_{l=1}^3(1-\mathfrak{r}_{1,l})^2$. The asymptotic power is therefore $1-\Phi(z_{\alpha}-\hat{\mcD}^{(1,3)}/\hat{\sigma}_N^{(1,3)})$. This implies that even with nonvanishing estimation error, the power remains high provided $\beta_{1,1}$ is not too small. For example, using the bound $\mathfrak{r}_{1,l}\leq C\beta_{1,0}^{-2}$ from \cite{Seddik2024203}, we have
    $$\hat{\mcD}^{(1,3)}\geq 6\beta_{1,1}^2(1-C\beta_{1,0}^{-2})^2,$$
    which approaches the oracle drift $6\beta_{1,1}^2$ as $\beta_{1,0}$ increases. 
\end{remark}

\begin{remark}[On Statistical Motivation and High-Dimensional Geometry]
\label{rmk:motivation}
A natural question concerns testing signal alignment in high dimensions, given that random unit vectors tend to be nearly orthogonal to any fixed vector. We clarify that the reference directions $\mathbf{a}^{(l)}$ in our framework are not arbitrary vectors. They arise from scientific hypotheses, domain knowledge, or are estimated from a reference tensor containing signal. In the meantime, the tendency of random vectors toward orthogonality is precisely what gives our test its power: rejection of the null provides strong evidence that the observed alignment reflects genuine signal structure rather than chance. The ``near-orthogonality by default'' phenomenon serves as a natural baseline against which detected alignments become statistically meaningful.

On the other hand, unlike matrix singular vectors, CP tensor components lack orthogonality across ranks. This, combined with identifiability challenges, motivates directly testing alignment without requiring full decomposition--advantageous when signals are moderate. Complementary tests for parallelism among tensor singular vectors remain an interesting future direction.
\end{remark}

}

\section{Numerical Experiments}\label{sec of Numerical}
In this section, we conduct numerical experiments to investigate the performance of our theorems and hypothesis tests. First, we provide several examples to demonstrate the validity of our CLT results presented in Theorem \ref{Main of Thm of CLT LSS d=3}. 

{As discussed in \S\ref{sec:intro}, the tensor alignment test \eqref{Main of Eq of hypothesis test 1} is closely connected to tensor-based classification problems. In \S\ref{Main of sec of experiment 2}, we consider the setting where the reference directions $\bba^{(l)}$ are known, and examine the empirical power of our test statistic across varying signal strengths $\beta$. Finally, in \S\ref{Main of sec of experiment 3}, we turn to the more realistic scenario where prior information is available only in the form of a reference tensor rather than explicit signal directions. There, we demonstrate the performance of the tensor signal matching test developed in \S\ref{sec of generalized procedure}.}  

For simplicity of presentation, we focus on the case \(d=3\) and \(n_1=n_2=n_3=100\), i.e., \(\mfc_1=\mfc_2=\mfc_3=1/3\). The LSD $\nu(x)$ is obtained by solving  (\ref{Main of Eq of MDE 3 order}). We have
	\begin{align}
		g(z)=\frac{3}{4}\Bigg(\sqrt{z^2-\frac{8}{3}}-z\Bigg),\quad{\rm and}\quad\nu(x)=\frac{3}{4\pi}\sqrt{\frac{8}{3}-x^2}, \quad |x|\le\sqrt{\frac{8}{3}}. \label{Main of Eq of numerical LSD}
	\end{align}
{Additional numerical experiments under alternative settings are provided in \S A.1 of the supplement.}

 \subsection{Experiment 1: verification of the CLT}\label{Main of sec of experiment 1}

In this subsection,  we compare the empirical values of $\mathbb{E}[G_N(f)]$ and $\Var(G_N(f))$ with their theoretical limiting values given in Equations \eqref{Main of Eq of LSS mean d=3} and \eqref{Main of Eq of LSS variance d=3}, respectively. Additionally, we assess the normality of the statistics using quantile-quantile plots.

 To further illustrate the influence of unit vectors ${\bba^{(1)}, \bba^{(2)}, \bba^{(3)}}$ and the cumulants $\kappa_3$ and $\kappa_4$ of random noises on the asymptotic mean $\xi_N^{(d)}$ and variance $\sigma_N^{(3)}$ of the CLT for the LSS, as discussed in Proposition \ref{Rem of comparison} and Remark \ref{Rem of vectors type}, we consider several different test functions, two types of vector selection and two distributions for entries of  noise tensors.  The results are summarized in Table \ref{Tab of LSS mean and variance}.

Specifically, for vector selection, we consider $\bba^{(l)} = (1, 0, \ldots, 0)', l=1,2,3$, which we abbreviate as "localized" vectors, and $\bba^{(l)} = n_l^{-1/2}(1, \ldots, 1)',l=1,2,3$, abbreviated as "delocalized" vectors. When $\bba^{(l)} = (1, 0, \ldots, 0)'$, $l=1,2,3$, $\mu_N^{(3)}$ will be asymptotically independent of $\kappa_3$. If all $\bba^{(l)}$ are delocalized, both $\mu_N^{(3)}$ and $\sigma_N^{(3)}$ become independent of $\kappa_4$ as $N \to \infty$. For the noise  tensors, we consider those with elements following a standard normal distribution, $\mathcal{N}(0, 1)$, which has zero third and fourth cumulants, and those with elements uniformly distributed on $[-\sqrt{3}, \sqrt{3}]$, denoted as $\text{Unif}(\pm\sqrt{3})$, which have a third cumulant of 0 and a fourth cumulant of $-1.2$, { and those with elements following the centered and normalized binomial distribution $B(3,0.25)$, which have a third cumulant of $2/3$ and a fourth cumulant of $-2/9$.} When the noise follows a normal distribution, the influence of $\kappa_3$ and $\kappa_4$ on the asymptotic mean $\xi_N^{(d)}$ and variance $\sigma_N^{(3)}$ of LSS vanishes.

	\begin{table}[!hbtp]
		\centering
		\renewcommand\arraystretch{1}
        \caption{{Empirical mean and standard deviation of $G_N(f)$ from 1000 independent trials versus theoretical limits in (3.8) and (3.9), for $n_1=n_2=n_3=100$ with various test functions, noise distributions, and vector types.} }
		\begin{tabular}{ccc|cc|cc}
                &&& \multicolumn{2}{c}{\(\mbE[G_N(f)]\) } & \multicolumn{2}{c}{ \(\operatorname{Std}(G_N(f))\)}\\
			\(f(x)\)&Noise type&Vector type&Empirical &Limit&Empirical &Limit\\
			\hline
			\(x^2\)&\(\mathcal{N}(0,1)\)&all types&\(0.0240\)&\(0\) &\(1.6206\)&\(1.6218\)\\
			\(3^x\)&\(\mathcal{N}(0,1)\)&all types&\(0.1463\)&\(0.1442\)&\(1.3051\)&\(1.3092\)\\
			\(\cos(2x)\)&\(\mathcal{N}(0,1)\)&all types&\(0.7092\)&\(0.7247\)&\(1.2568\)&\(1.2722\)\\[1mm]\hline
			\(x^2\)&\({\rm Unif}(\pm\sqrt{3})\)&localized&\(-0.0132\)&\(0\)&\(1.0069\)&\(1.0259\)\\
			\(3^x\)&\({\rm Unif}(\pm\sqrt{3})\)&localized&\(0.0793\)&\(0.0872\)&\(0.8408\)&\(0.8541\)\\
			\(\cos(2x)\)&\({\rm Unif}(\pm\sqrt{3})\)&localized&\(0.4306\)&\(0.4200\)&\(0.8978\)&\(0.8903\)\\[1mm]\hline
			\(x^2\)&\({\rm Unif}(\pm\sqrt{3})\)&delocalized&\(0.0152\)&\(0\)&\(1.6297\)&\(1.6218\)\\
			\(3^x\)&\({\rm Unif}(\pm\sqrt{3})\)&delocalized&\(0.1617\)&\(0.1442\)&\(1.3242\)&\(1.3092\)\\
			\(\cos(2x)\)&\({\rm Unif}(\pm\sqrt{3})\)&delocalized&\(0.7101\)&\(0.7247\)&\(1.2593\)&\(1.2722\)\\[1mm]\hline
            \(x^2\)&\(B(1,0.25)\)&localized&\(-0.0204\)&\(0\)&\(1.3226\)&\(1.3243\)\\
			\(3^x\)&\(B(1,0.25)\)&localized&\(0.1008\)&\(0.1125\)&\(1.0953\)&\(1.0803\)\\
			\(\cos(2x)\)&\(B(1,0.25)\)&localized&\(0.5704\)&\(0.5554\)&\(1.0812\)&\(1.0706\)\\[1mm]\hline
            \(x^2\)&\(B(1,0.25)\)&delocalized&\(0.0121\)&\(0\)&\(1.6312\)&\(1.6218\)\\
			\(3^x\)&\(B(1,0.25)\)&delocalized&\(0.5141\)&\(0.5031\)&\(1.3513\)&\(1.3092\)\\
			\(\cos(2x)\)&\(B(1,0.25)\)&delocalized&\(0.7220\)&\(0.7247\)&\(1.2695\)&\(1.2722\)\\[1mm]\hline
            \(x^2\)&\(B(3,0.25)\)&localized&\(0.0072\)&\(0\)&\(1.5288\)&\(1.5291\)\\
			\(3^x\)&\(B(3,0.25)\)&localized&\(0.1446\)&\(0.1336\)&\(1.2409\)&\(1.2376\)\\
			\(\cos(2x)\)&\(B(3,0.25)\)&localized&\(0.6486\)&\(0.6683\)&\(1.2143\)&\(1.2106\)\\[1mm]\hline
            \(x^2\)&\(B(3,0.25)\)&delocalized&\(0.0151\)&\(0\)&\(1.5961\)&\(1.6218\)\\
			\(3^x\)&\(B(3,0.25)\)&delocalized&\(0.3756\)&\(0.3514\)&\(1.3152\)&\(1.3092\)\\
			\(\cos(2x)\)&\(B(3,0.25)\)&delocalized&\(0.7315\)&\(0.7247\)&\(1.2957\)&\(1.2722\)\\[1mm]\hline
            \(x^2\)&\(B(5,0.25)\)&localized&\(-0.0164\)&\(0\)&\(1.5630\)&\(1.5668\)\\
			\(3^x\)&\(B(5,0.25)\)&localized&\(0.1213\)&\(0.1378\)&\(1.2744\)&\(1.2667\)\\
			\(\cos(2x)\)&\(B(5,0.25)\)&localized&\(0.6690\)&\(0.6909\)&\(1.2657\)&\(1.2356\)\\[1mm]\hline
            \(x^2\)&\(B(5,0.25)\)&delocalized&\(-0.103\)&\(0\)&\(1.6554\)&\(1.6218\)\\
			\(3^x\)&\(B(5,0.25)\)&delocalized&\(0.3128\)&\(0.3047\)&\(1.3640\)&\(1.3092\)\\
			\(\cos(2x)\)&\(B(5,0.25)\)&delocalized&\(0.7534\)&\(0.7247\)&\(1.2696\)&\(1.2722\)\\[1mm]\hline
		\end{tabular}
		\label{Tab of LSS mean and variance}
	\end{table}
    {Table \ref{Tab of LSS mean and variance} shows close agreement between empirical and theoretical values. As predicted by Proposition 3.3, the limiting values under Gaussian noise are identical across vector types. The QQ plots in Figure \ref{Fig of 1} further validate the asymptotic normality of $G_N(f)$.}
    {Table \ref{Tab of LSS mean and variance} shows close agreement between empirical and theoretical values. As predicted by Proposition \ref{Rem of comparison}, the limiting values under Gaussian noise are identical across vector types. The QQ plots in Figure \ref{Fig of 1} further validate the asymptotic normality of $G_N(f)$.}
  
    \begin{figure}[!htbp]
		\centering
		\subfigure[\(f(x)=x^2,\bbX\sim\mcN(0,1)\), delocalized \(\bbx^{(i)}\).]{\includegraphics[width=0.49\linewidth]{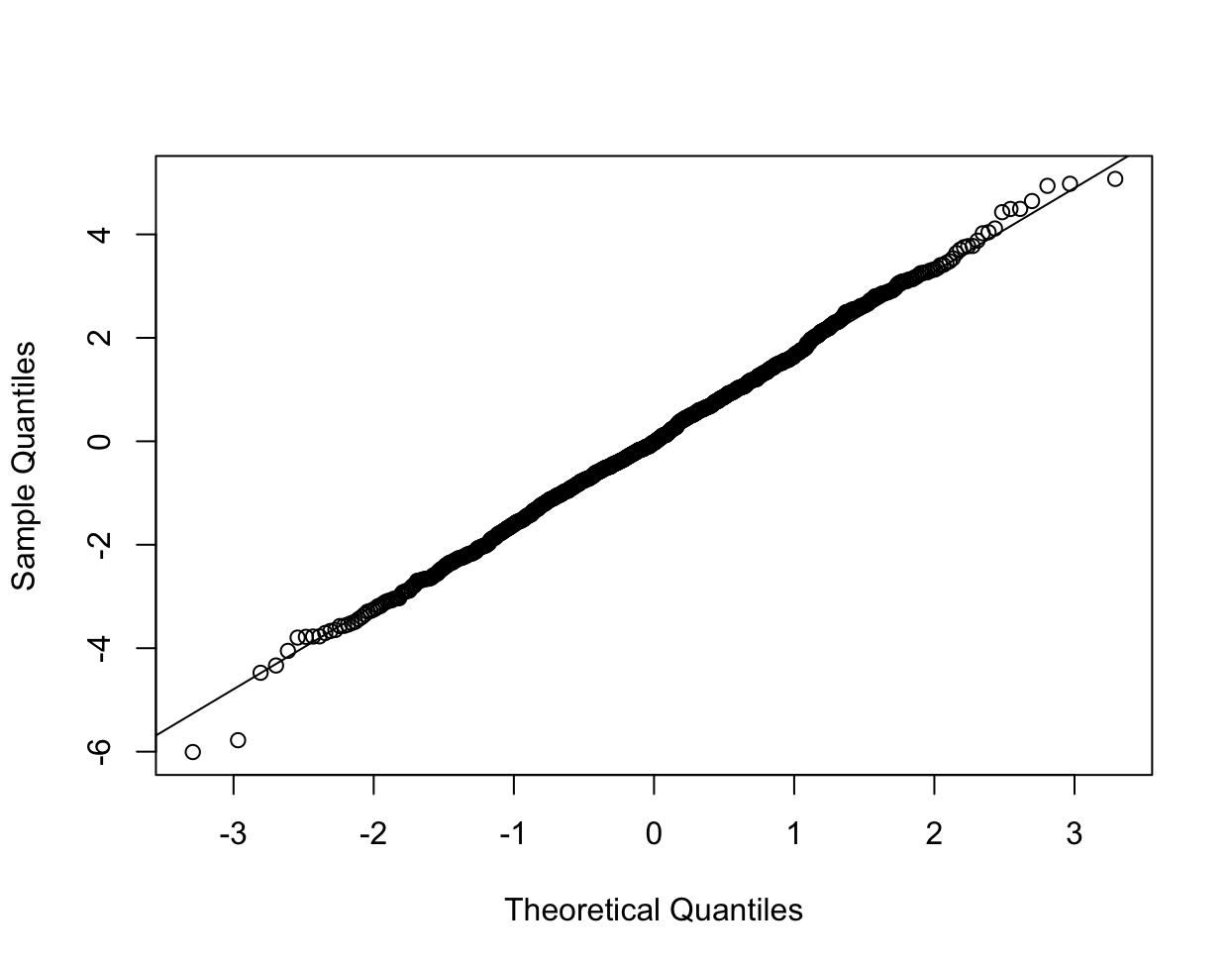}}
		\subfigure[\(f(x)=3^x,\bbX\sim{\rm Unif}(\pm\sqrt{3})\), localized \(\bbx^{(i)}\).]{\includegraphics[width=0.49\linewidth]{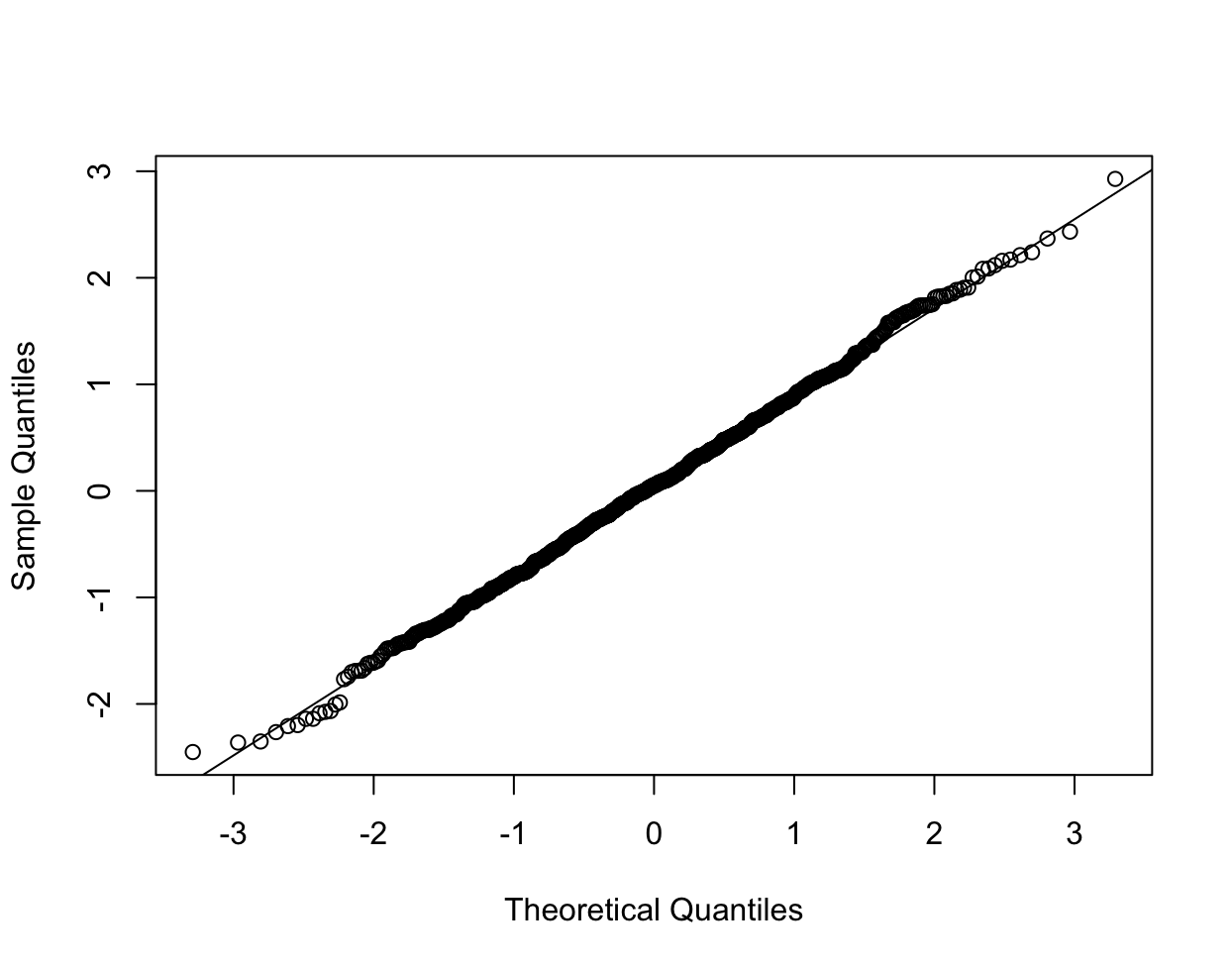}}
		\caption{QQ plots of \(G_N(f)\) from 1000 independent repetitions.}
		\label{Fig of 1}
	\end{figure}
	
\subsection{Experiment 2: tensor signal alignment test}\label{Main of sec of experiment 2}	
This experiment focuses on the tensor signal alignment test (\ref{Main of Eq of hypothesis test 1}). We generate the observation \(\bbT\) using (\ref{Main of Eq of spiky tensor model}) with varying values of $\beta$. We are particularly interested in the test's performance when the signal is below the phase transition threshold, i.e., \(\beta\in(0,\beta_s]\). { For the symmetric case $\mfc_1=\mfc_2=\mfc_3=1/3$, the phase transition threshold is $\beta_s=2/\sqrt{3}$ (see Corollary 3 of \cite{Seddik2024203}). This threshold characterizes the boundary below which consistent signal recovery via maximum likelihood estimation becomes impossible.}  According to (\ref{Main of Eq of statistic H0 H1}), we then have
	\begin{align*}
		\left\{\begin{array}{ll}
			\tilde{\mcT}_N^{(3)}(\bbx^{(1)},\bbx^{(2)},\bbx^{(3)})\overset{d}{\longrightarrow}\mcN(0,1),&{\rm under\ }H_0,\\
			\tilde{\mcT}_N^{(3)}(\bbx^{(1)},\bbx^{(2)},\bbx^{(3)})-6\beta^2/\sigma_N^{(3)}\overset{d}{\longrightarrow}\mcN(0,1),&{\rm under\ }H_1.
		\end{array}\right.
	\end{align*}
In Figure \ref{Fig of 2}, we use the same settings as in \S\ref{Main of sec of experiment 1}, with a significance level of $\alpha=0.05$.  We compute the test's empirical power for different $\beta$ values with 200 repetitions.


\begin{figure}
		\subfigure[\(\bbX\sim\mcN(0,1)\), delocalized \(\bbx^{(i)}\).]{\includegraphics[width=0.49\linewidth,height=3cm]{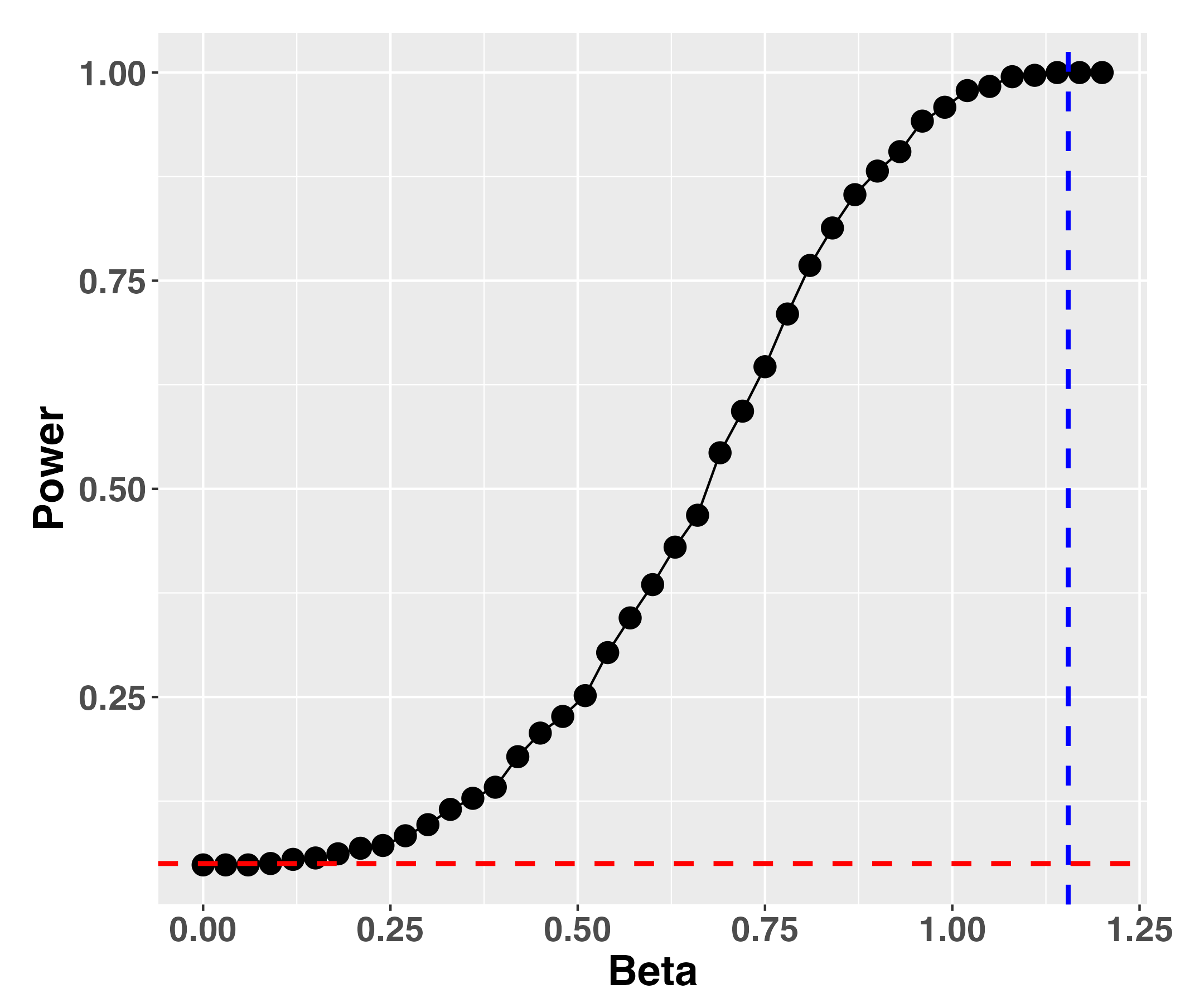}}
		\subfigure[\(\bbX\sim{\rm Unif}(\pm\sqrt{3})\), delocalized \(\bbx^{(i)}\).]{\includegraphics[width=0.49\linewidth,height=3cm]{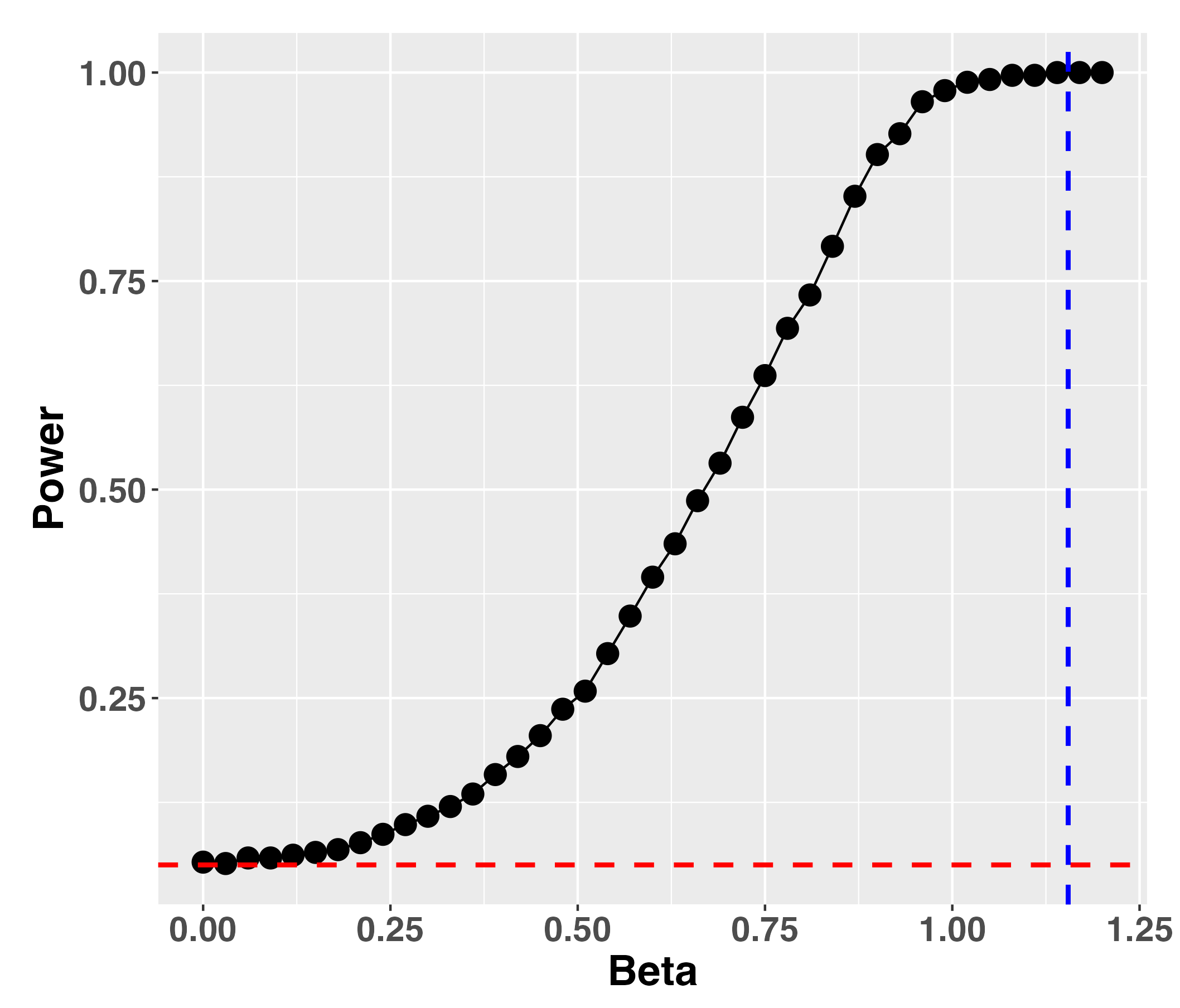}}
        \subfigure[\(\bbX\sim\mcN(0,1)\), localized \(\bbx^{(i)}\).]{\includegraphics[width=0.49\linewidth,height=3cm]{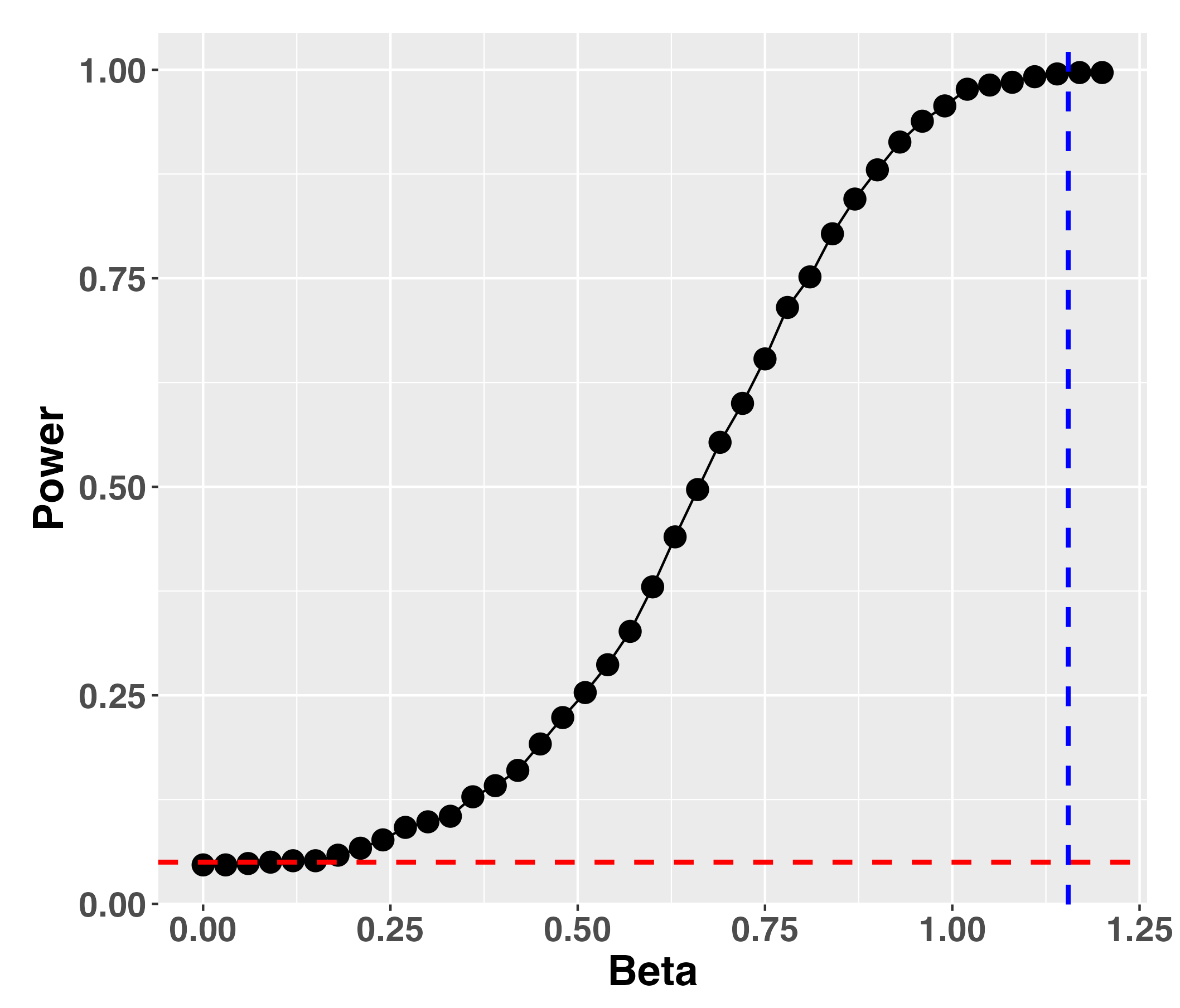}}
        \subfigure[\(\bbX\sim{\rm Unif}(\pm\sqrt{3})\), localized \(\bbx^{(i)}\).]{\includegraphics[width=0.49\linewidth,height=3cm]{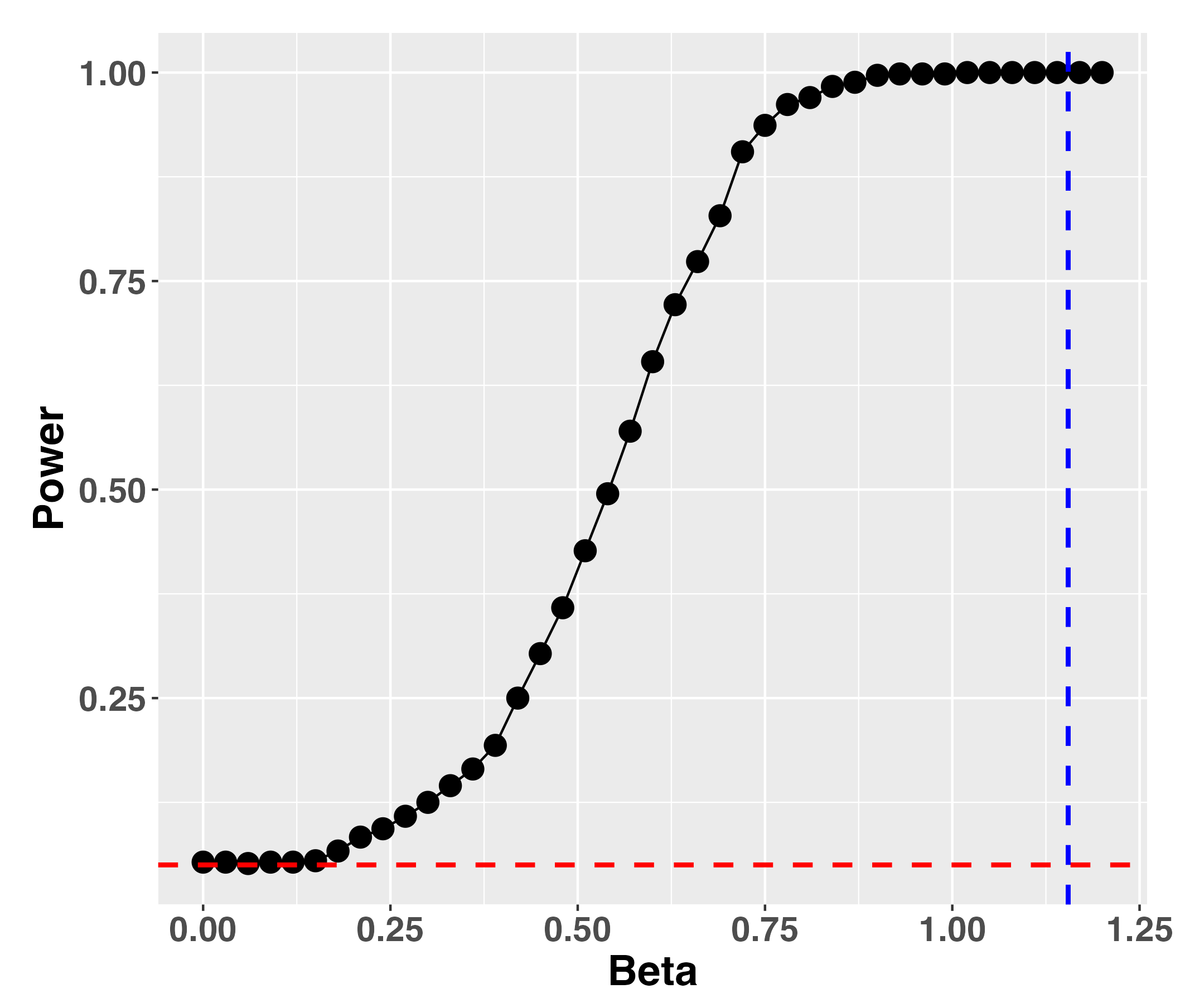}}
        \subfigure[\(\bbX\sim B(1,0.25)\), delocalized \(\bbx^{(i)}\).]{\includegraphics[width=0.49\linewidth,height=3cm]{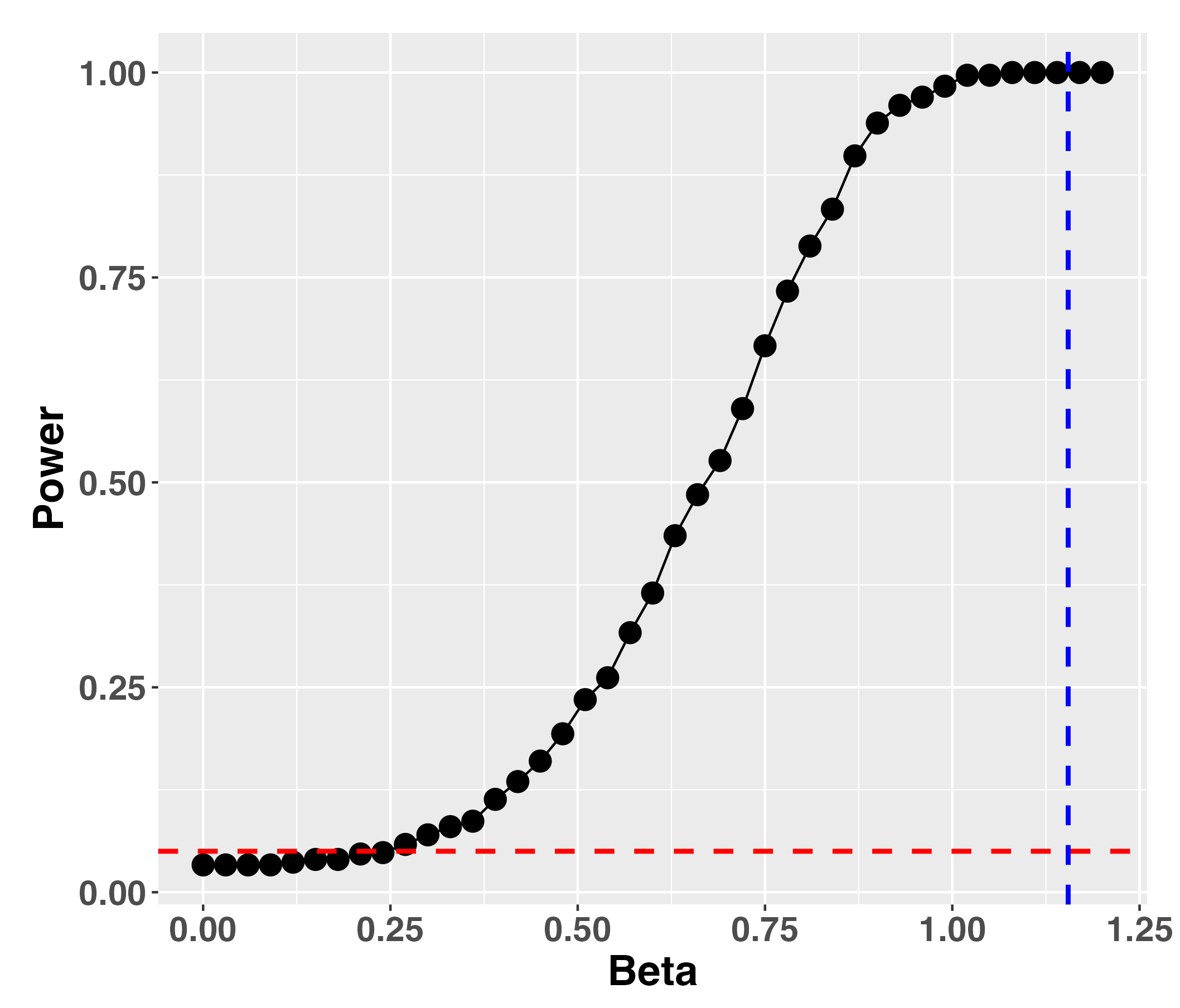}}
        \subfigure[\(\bbX\sim B(1,0.25)\), localized \(\bbx^{(i)}\).]{\includegraphics[width=0.49\linewidth,height=3cm]{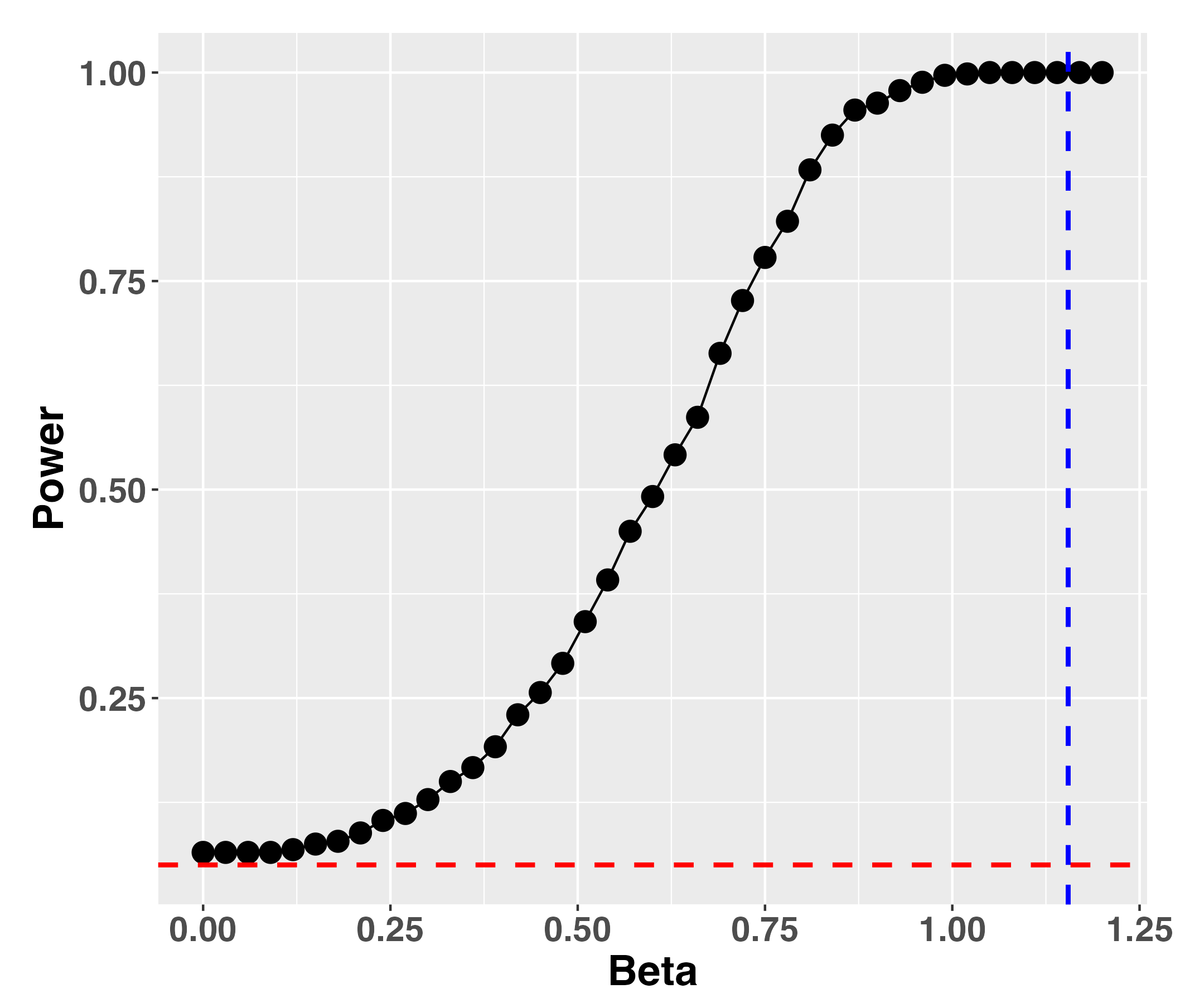}}
        \subfigure[\(\bbX\sim B(3,0.25)\), delocalized \(\bbx^{(i)}\).]{\includegraphics[width=0.49\linewidth,height=3cm]{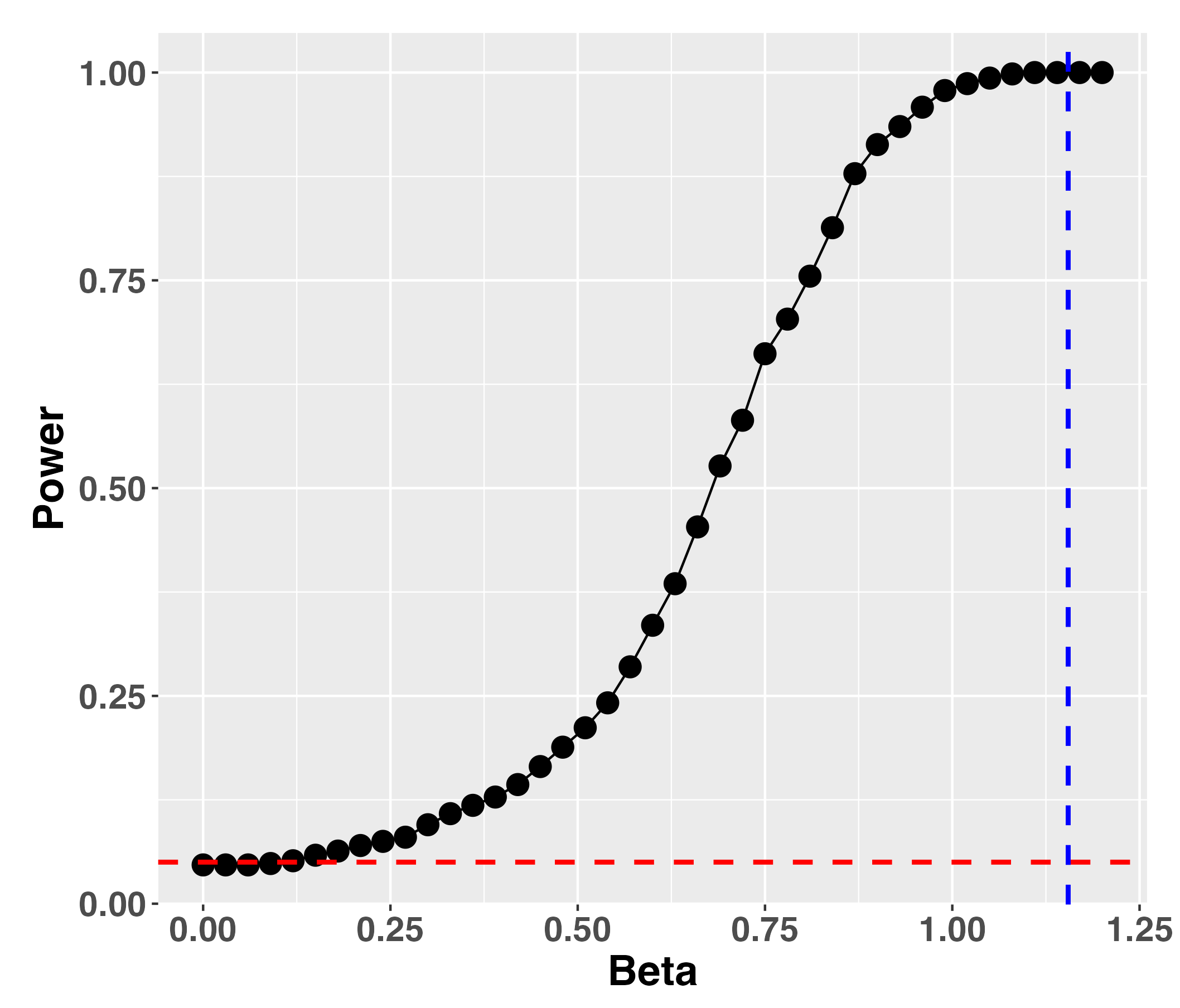}}
        \subfigure[\(\bbX\sim B(3,0.25)\), localized \(\bbx^{(i)}\).]{\includegraphics[width=0.49\linewidth,height=3cm]{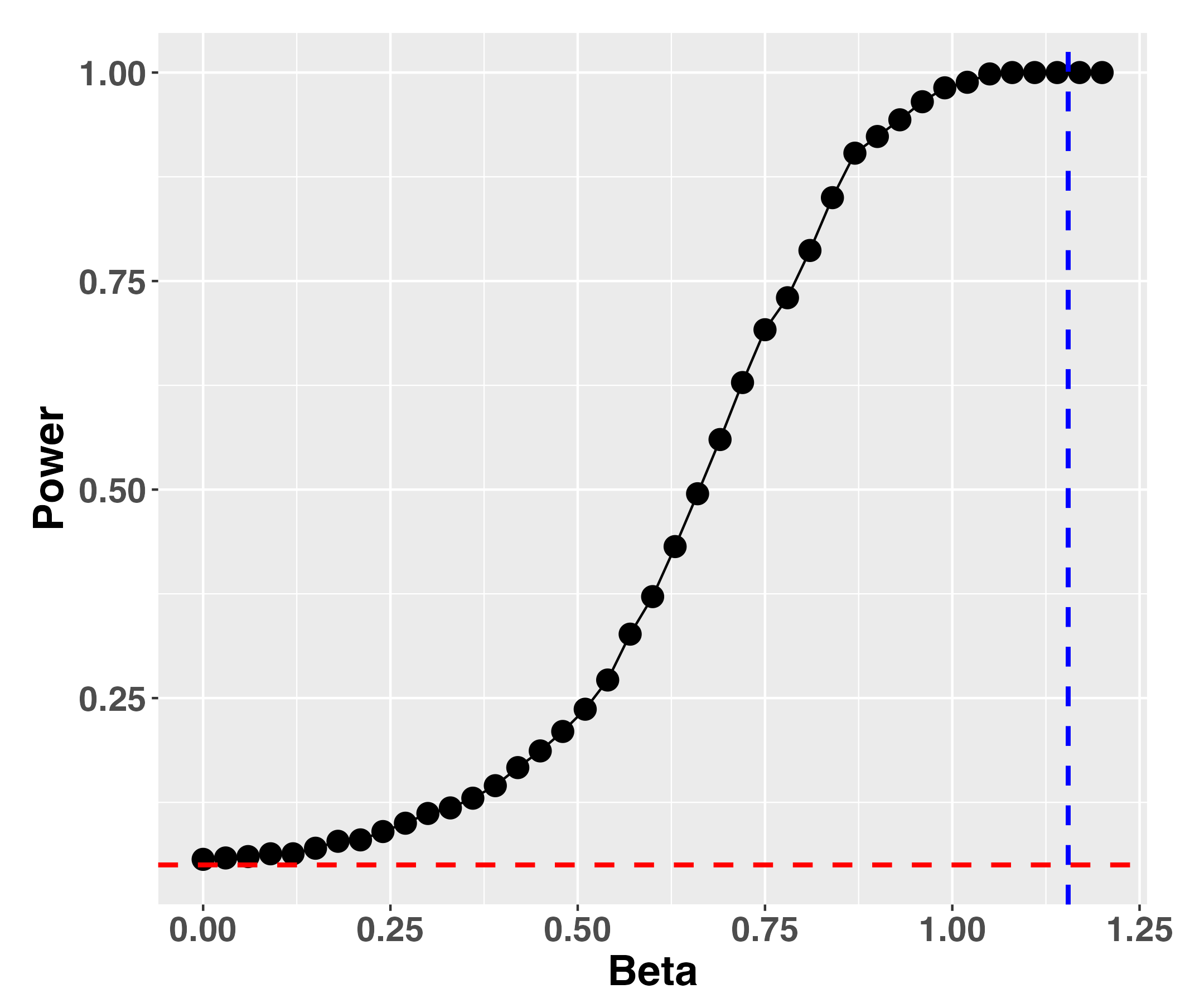}}
        \subfigure[\(\bbX\sim B(5,0.25)\), delocalized \(\bbx^{(i)}\).]{\includegraphics[width=0.49\linewidth,height=3cm]{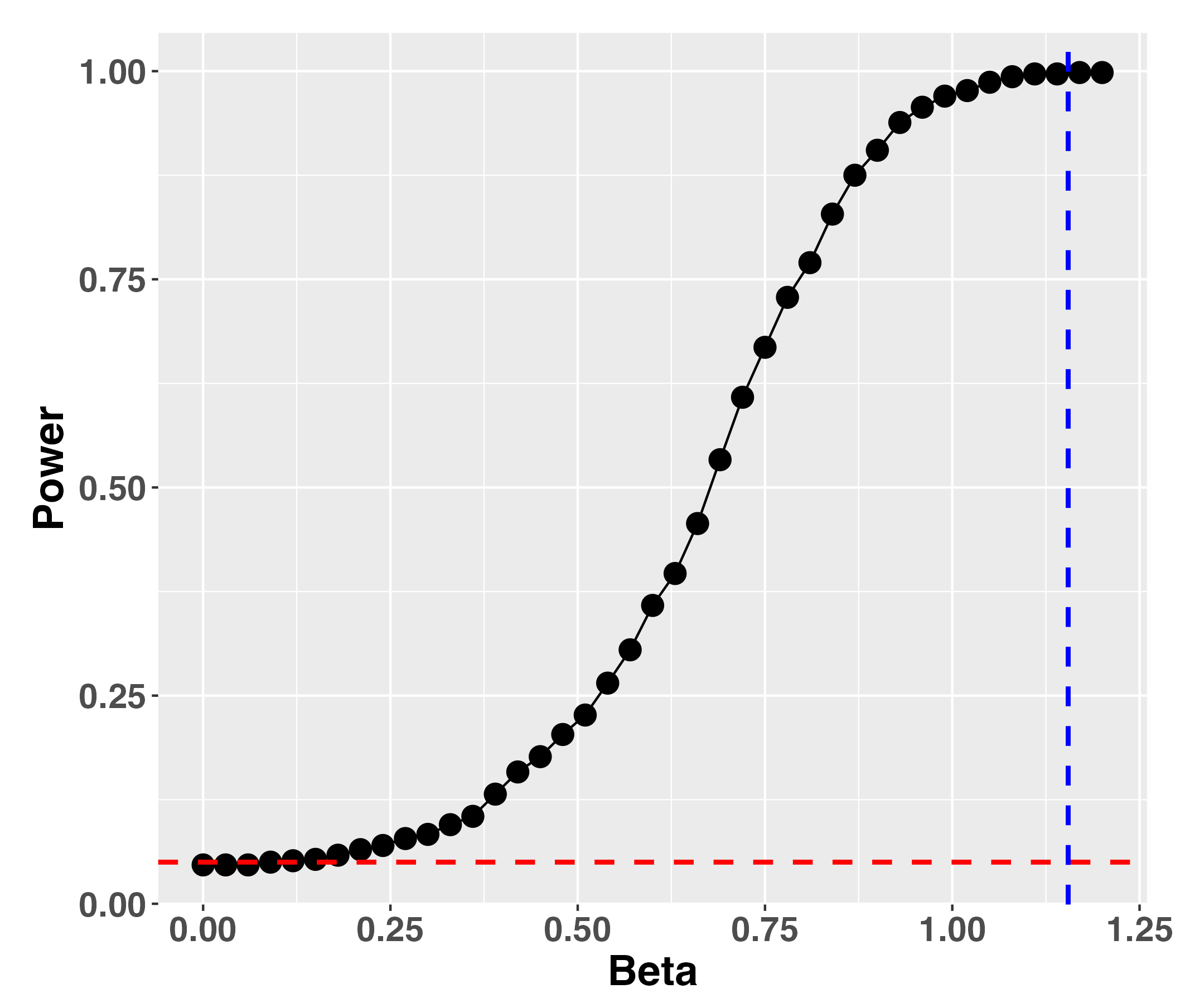}}
        \subfigure[\(\bbX\sim B(5,0.25)\), localized \(\bbx^{(i)}\).]{\includegraphics[width=0.49\linewidth,height=3cm]{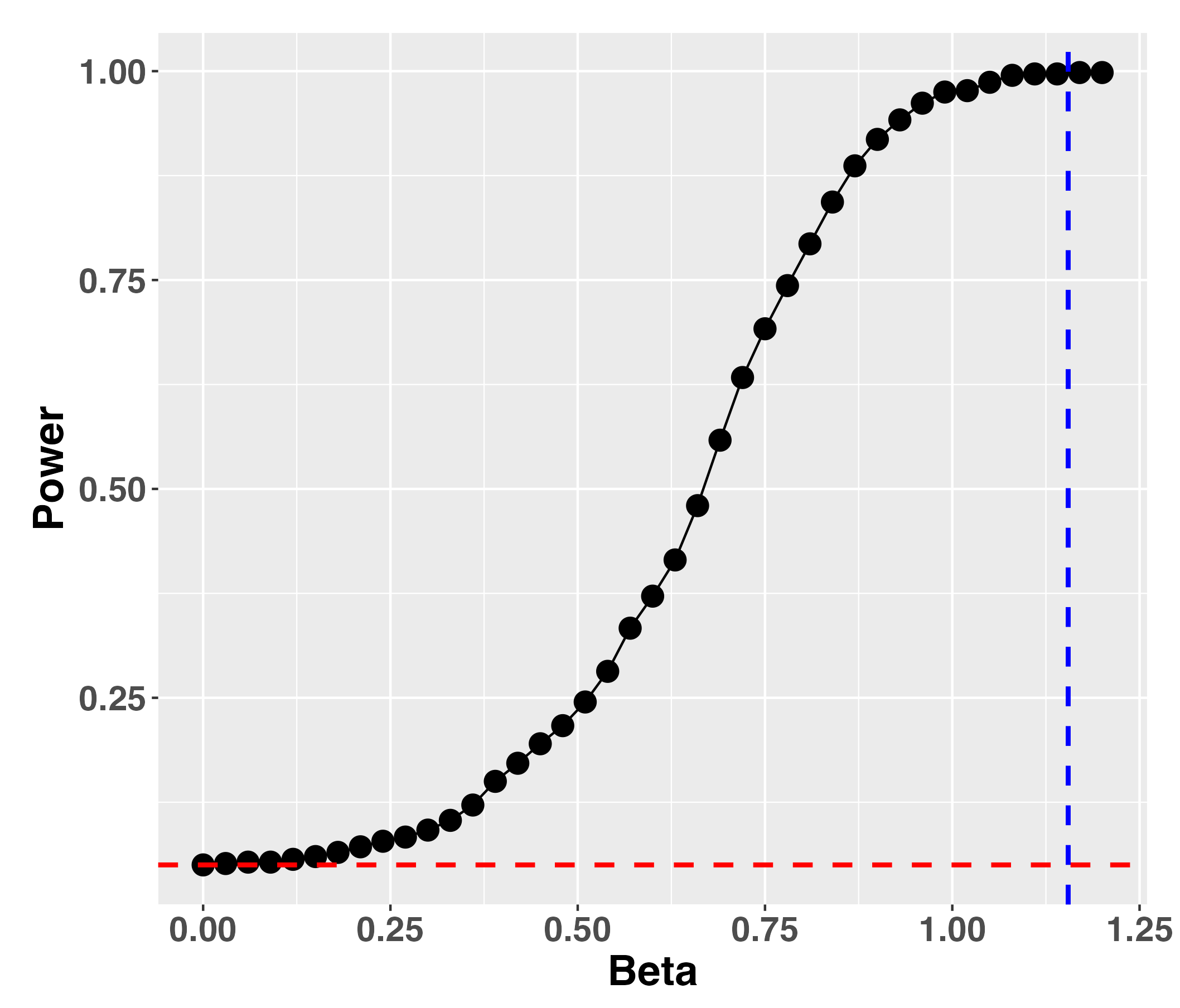}}
        \caption{Power plots of \(\tilde{\mcT}_N^{(3)}(\bbx^{(1)},\bbx^{(2)},\bbx^{(3)})\) under different \(\beta\)'s and types of noises \(\bbX\) and vectors \(\bbx^{(i)}\), where the dashed red line is the significance level \(\alpha=0.05\) and the dashed blue line is the threshold of phase transition.\label{Fig of 2}}
	\end{figure}

    Figure \ref{Fig of 2} provides several insights into the performance of the tensor signal alignment test. Firstly, the empirical sizes are close to the nominal level of 5\%. This suggests that the test maintains the desired significance level reasonably well. Secondly, as the SNR $\beta$ increases, the power of the test rapidly approaches 1 in all four scenarios. This indicates that the test is highly effective in detecting the presence of a signal when the SNR is sufficiently large. Most notably, even for $\beta$ values below the critical transition value $\beta_s=2/\sqrt{3}$, such as $\beta=1$, the test achieves a power close to one.  The test's ability to detect the presence of a signal in such challenging conditions highlights its sensitivity and effectiveness.

    \subsection{Experiment 3: tensor signal matching test}\label{Main of sec of experiment 3}
    In this subsection, we focus on the tensor signal matching test (\ref{Main of Eq of hypothesis test 2}). We generate two independent samples, $\bbT^{(0)}$ and  $\bbT^{(1)}$, using the following model:
    \begin{align*}
        \left\{\begin{array}{l}
             \bbT^{(0)}=\beta_0\bbx^{(1)}\otimes\bbx^{(2)}\otimes\bbx^{(3)}+\frac{1}{\sqrt{N}}\bbX^{(0)},\\
             \bbT^{(1)}=\beta_1\bbx^{(1)}\otimes\bbx^{(2)}\otimes\bbx^{(3)}+\frac{1}{\sqrt{N}}\bbX^{(1)},
        \end{array}\right.
    \end{align*}
    where the noise tensors \(\bbX^{(0)}\) and \(\bbX^{(1)}\) are independent,
    and the two rank-1 tensor signals are parallel but have different strengths. 
    Following the procedures described in \S\ref{sec of generalized procedure}, we first apply the tensor unfolding method to estimate \(\hat{\bbx}^{(1)}\otimes\hat{\bbx}^{(2)}\otimes\hat{\bbx}^{(3)}\) using the first tensor data \(\bbT^{(0)}\). Then, we test whether \(\bbT^{(1)}\) contains a signal along \(\hat{\bbx}^{(1)}\otimes\hat{\bbx}^{(2)}\otimes\hat{\bbx}^{(3)}\) or not. 
    
    The main objective of this experiment is to investigate how the values of \(\beta_0\) and \(\beta_1\) affect the power of (\ref{Main of Eq of hypothesis test 2}) and to compare it with the power of \(\tilde{\mcT}_N^{(3)}(\bbx^{(1)},\bbx^{(2)},\bbx^{(3)})\) when using known directional vectors. 
    
    We set \(\beta_0=2,2.5,3\) and estimate \(\hat{\bbx}^{(1)},\hat{\bbx}^{(2)},\hat{\bbx}^{(3)}\) for each \(\beta_0\). The rest of the setting is essentially the same as in \S\ref{Main of sec of experiment 2}, with the addition of \(\beta_1\in [0,1.2]\). We compute the empirical power of \(\tilde{\mcT}_N^{(3)}(\hat{\bbx}^{(1)},\hat{\bbx}^{(2)},\hat{\bbx}^{(3)})\) and present the power plots in Figure \ref{Fig of 3}.
    \begin{figure}
		\subfigure[\(\bbX\sim\mcN(0,1)\), delocalized \(\bbx^{(i)}\).]{\includegraphics[width=0.49\linewidth,height=3cm]{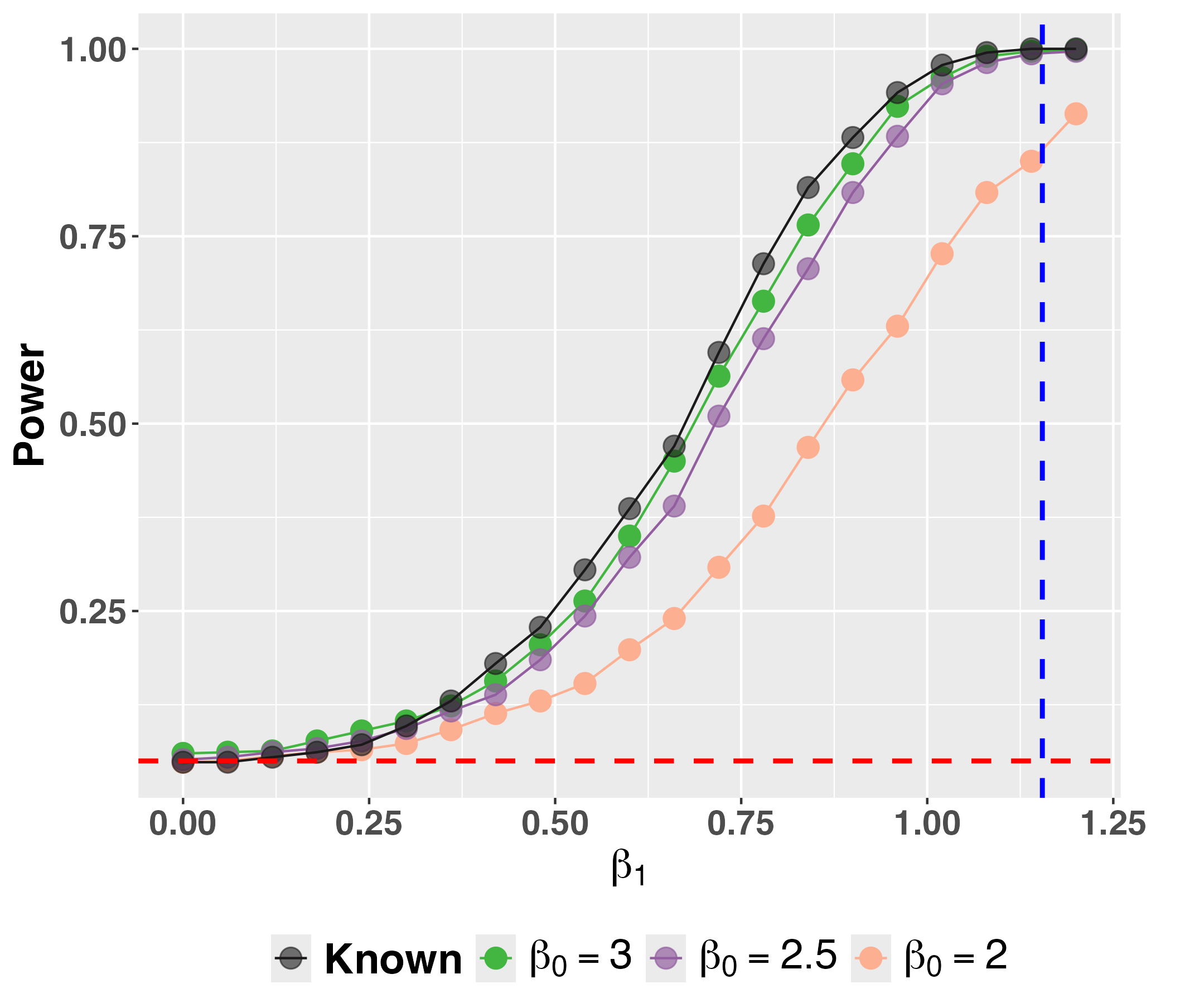}}
		\subfigure[\(\bbX\sim{\rm Unif}(\pm\sqrt{3})\), delocalized \(\bbx^{(i)}\).]{\includegraphics[width=0.49\linewidth,height=3cm]{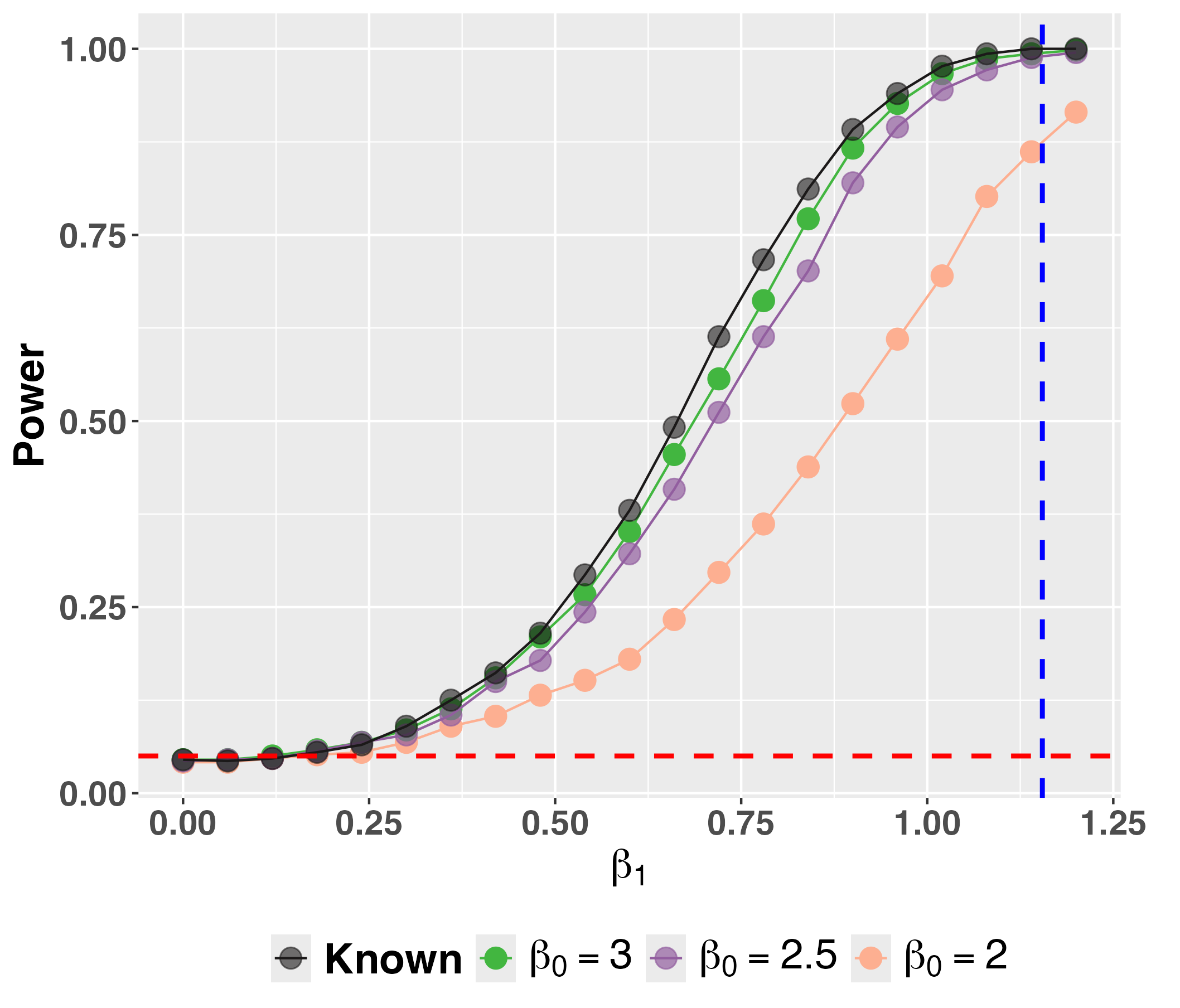}}
        \subfigure[\(\bbX\sim\mcN(0,1)\), localized \(\bbx^{(i)}\).]{\includegraphics[width=0.49\linewidth,height=3cm]{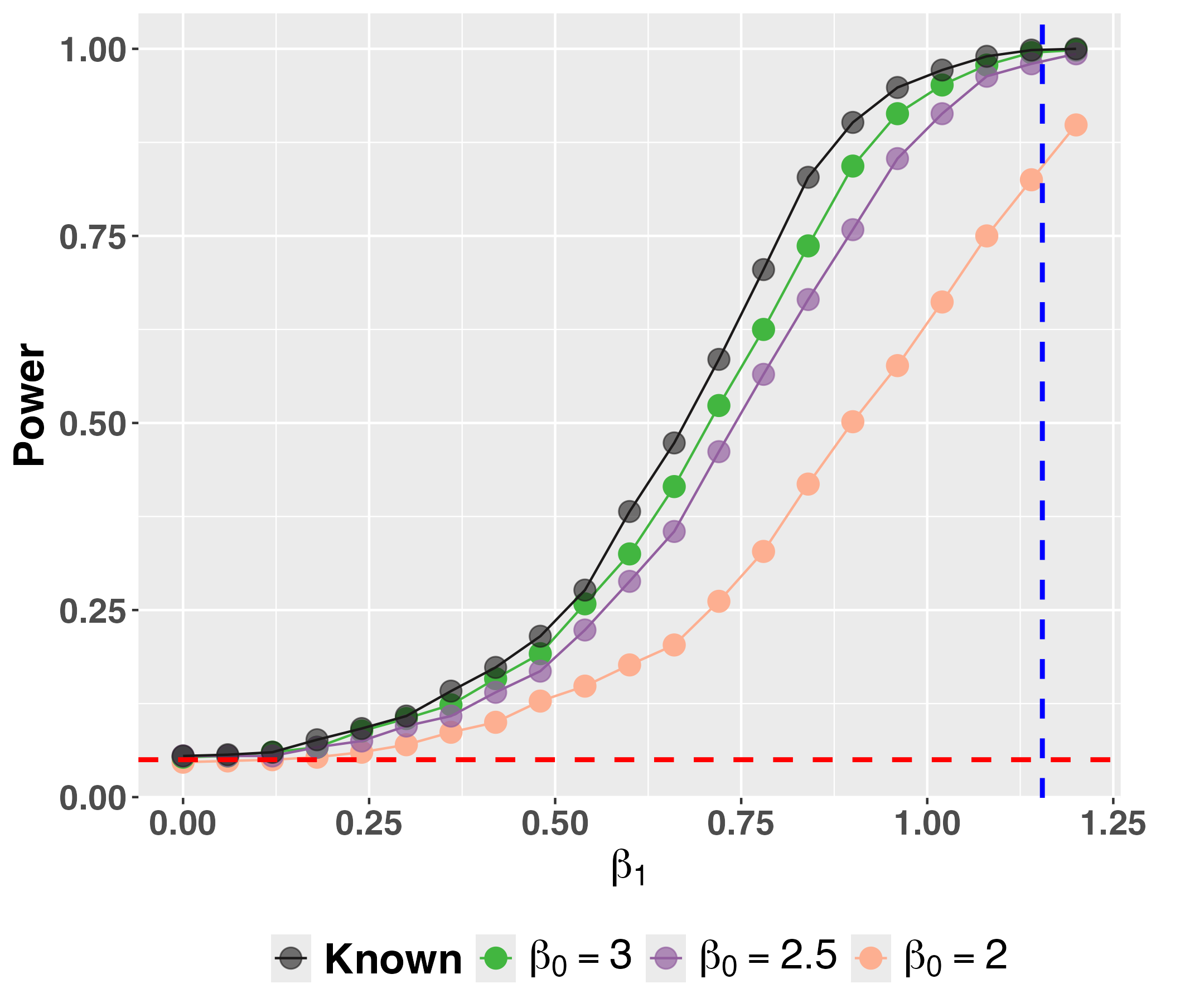}}
        \subfigure[\(\bbX\sim{\rm Unif}(\pm\sqrt{3})\), localized \(\bbx^{(i)}\).]{\includegraphics[width=0.49\linewidth,height=3cm]{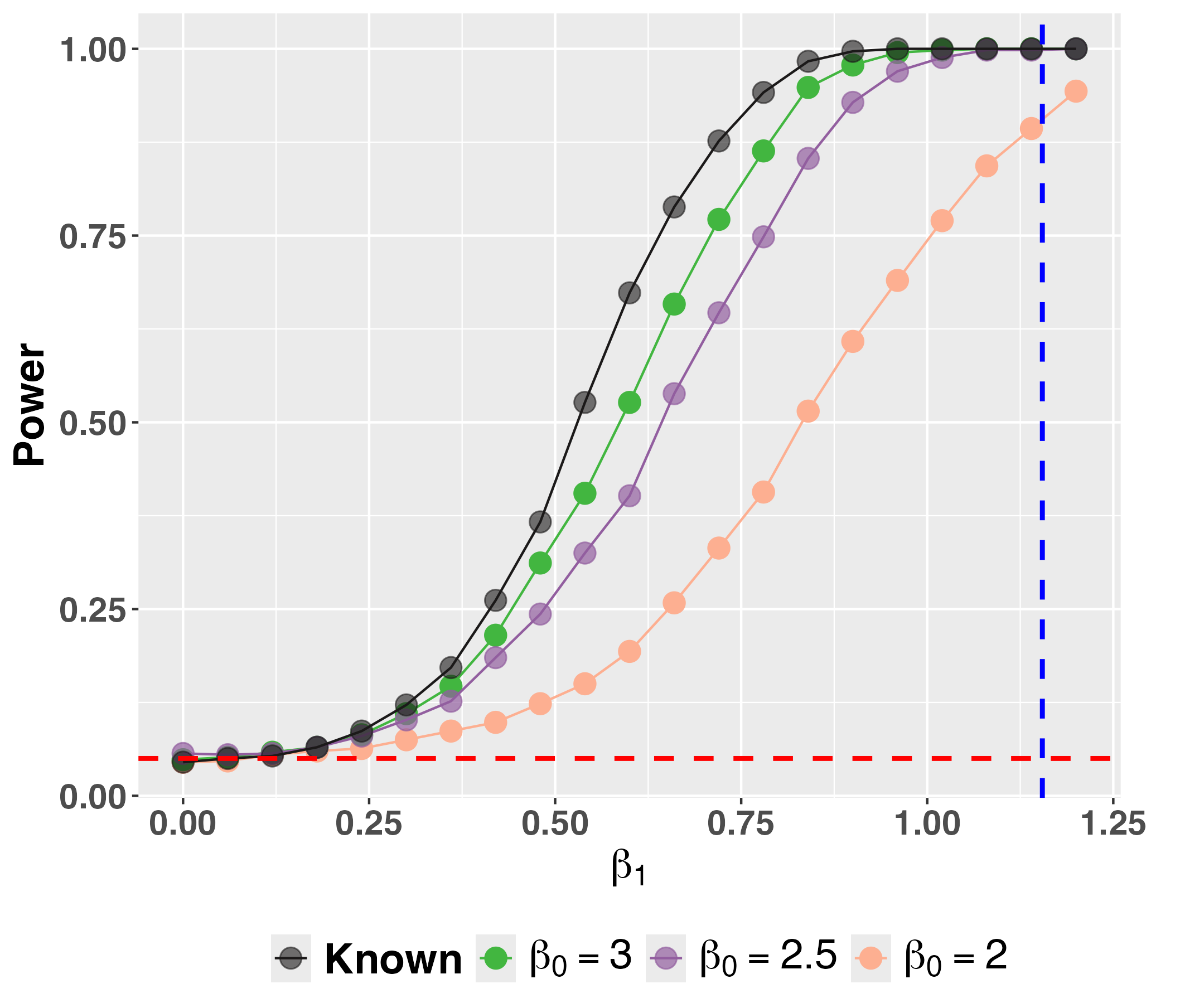}}
        \subfigure[\(\bbX\sim B(1,0.25)\), delocalized \(\bbx^{(i)}\).]{\includegraphics[width=0.49\linewidth,height=3cm]{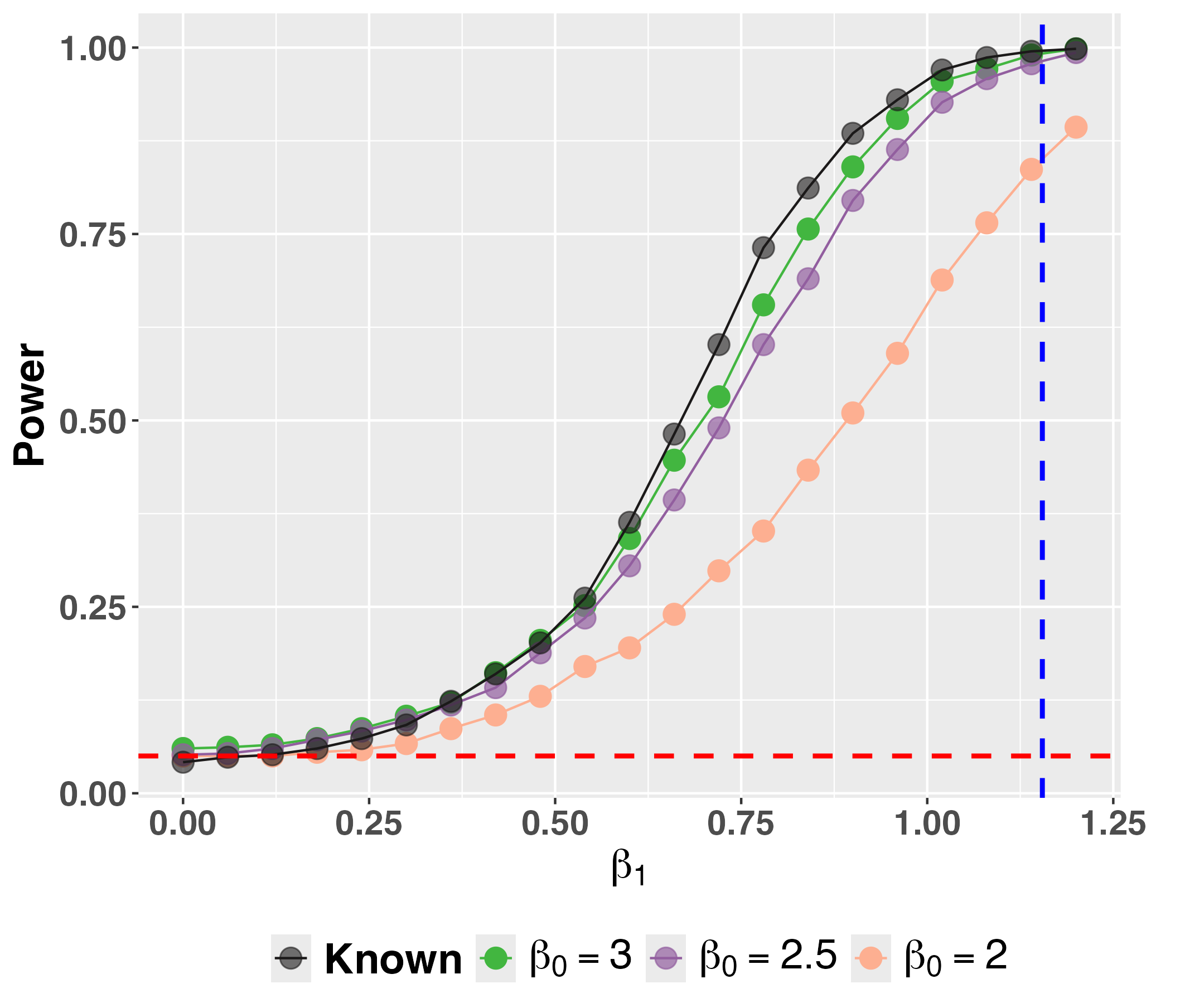}}
        \subfigure[\(\bbX\sim B(1,0.25)\), localized \(\bbx^{(i)}\).]{\includegraphics[width=0.49\linewidth,height=3cm]{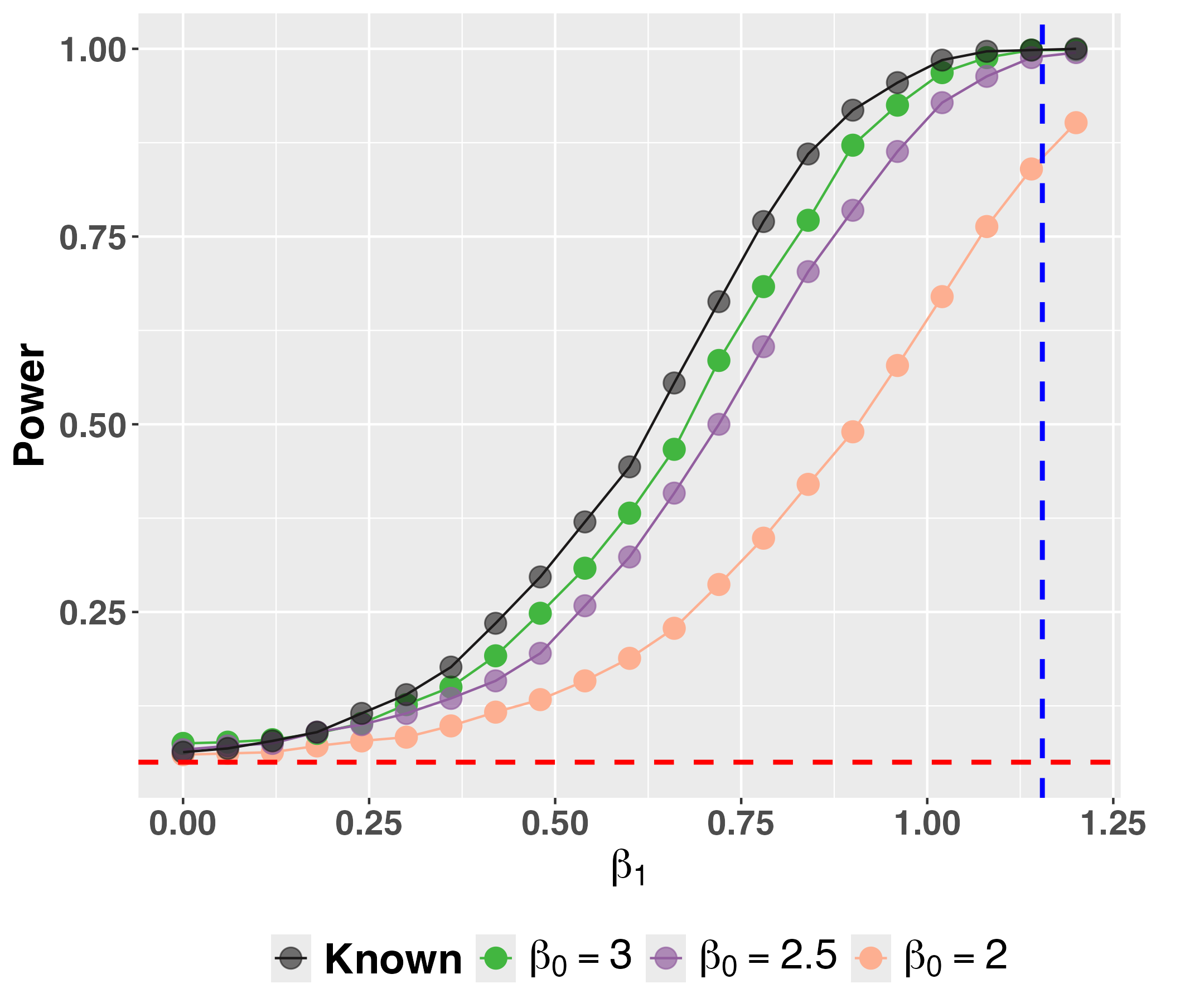}}
        \subfigure[\(\bbX\sim B(3,0.25)\), delocalized \(\bbx^{(i)}\).]{\includegraphics[width=0.49\linewidth,height=3cm]{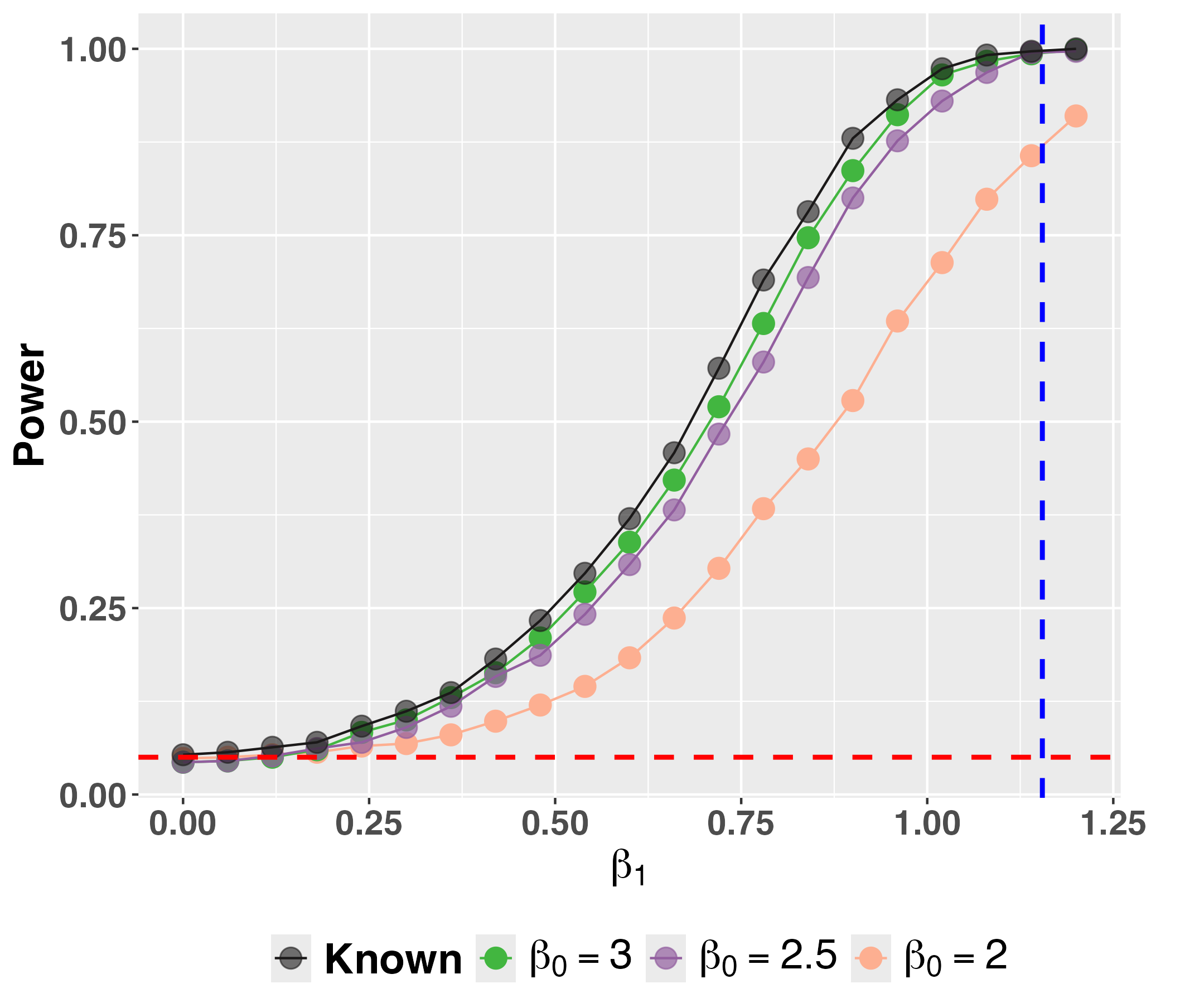}}
        \subfigure[\(\bbX\sim B(3,0.25)\), localized \(\bbx^{(i)}\).]{\includegraphics[width=0.49\linewidth,height=3cm]{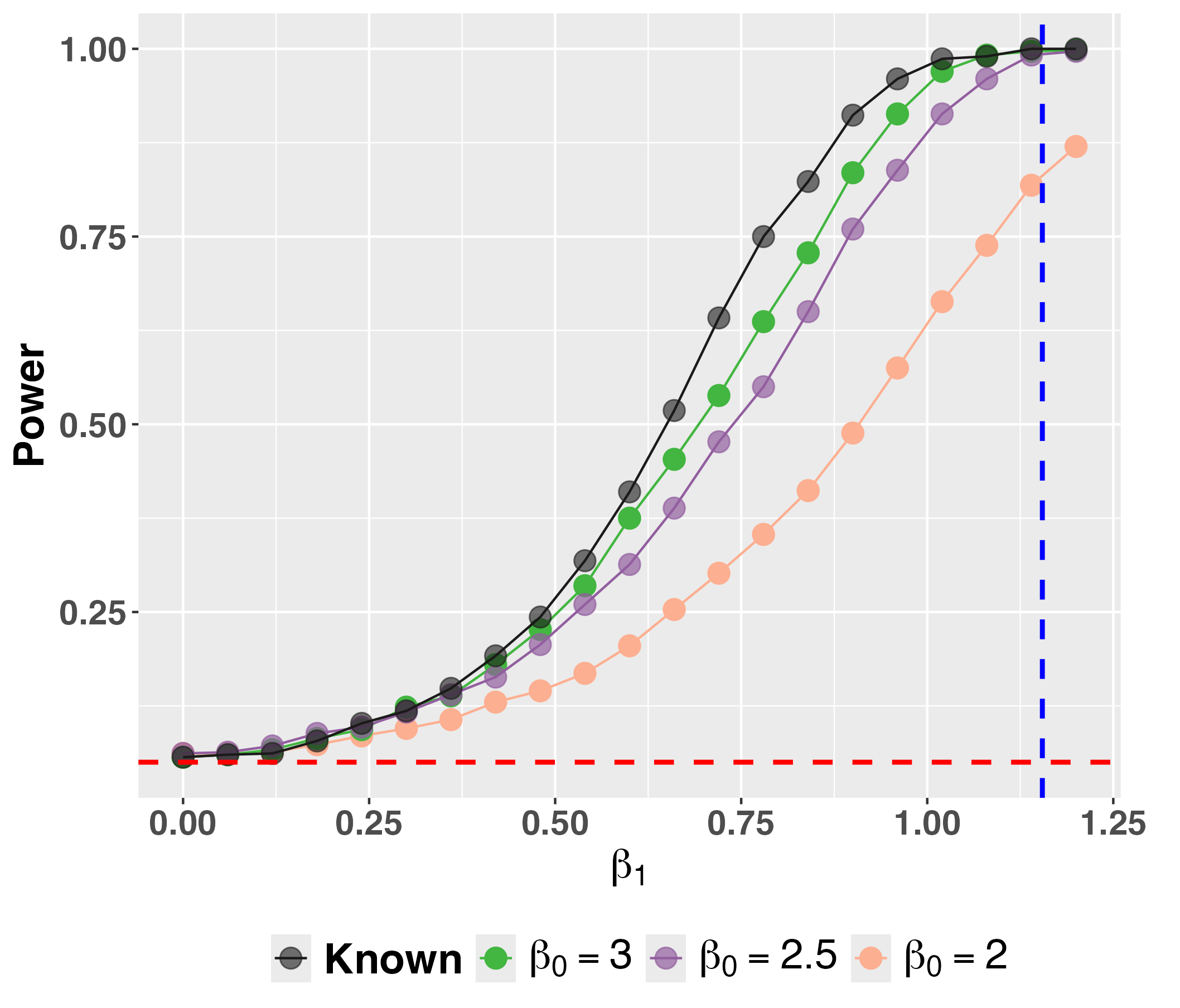}}
        \subfigure[\(\bbX\sim B(5,0.25)\), delocalized \(\bbx^{(i)}\).]{\includegraphics[width=0.49\linewidth,height=3cm]{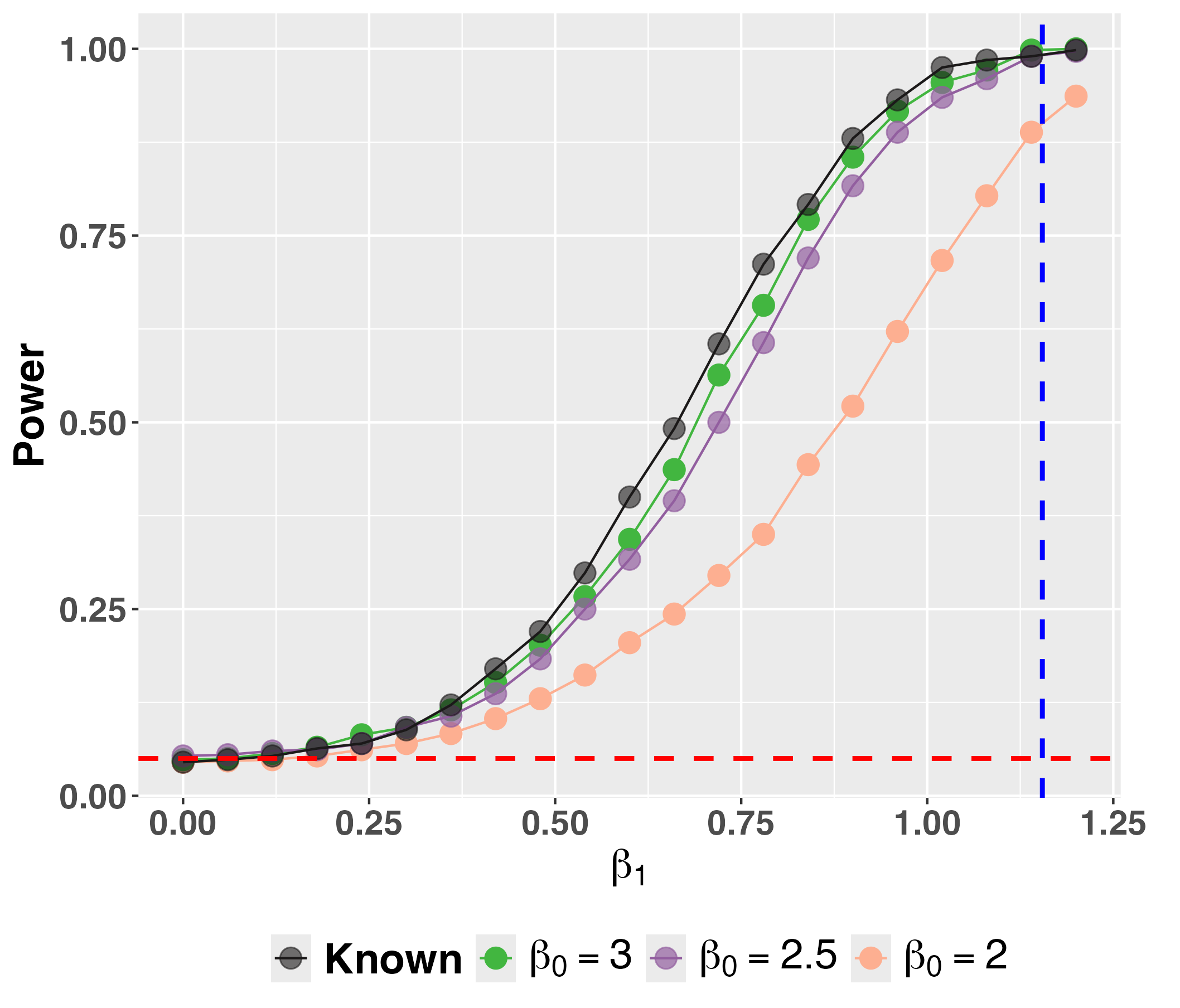}}
        \subfigure[\(\bbX\sim B(5,0.25)\), localized \(\bbx^{(i)}\).]{\includegraphics[width=0.49\linewidth,height=3cm]{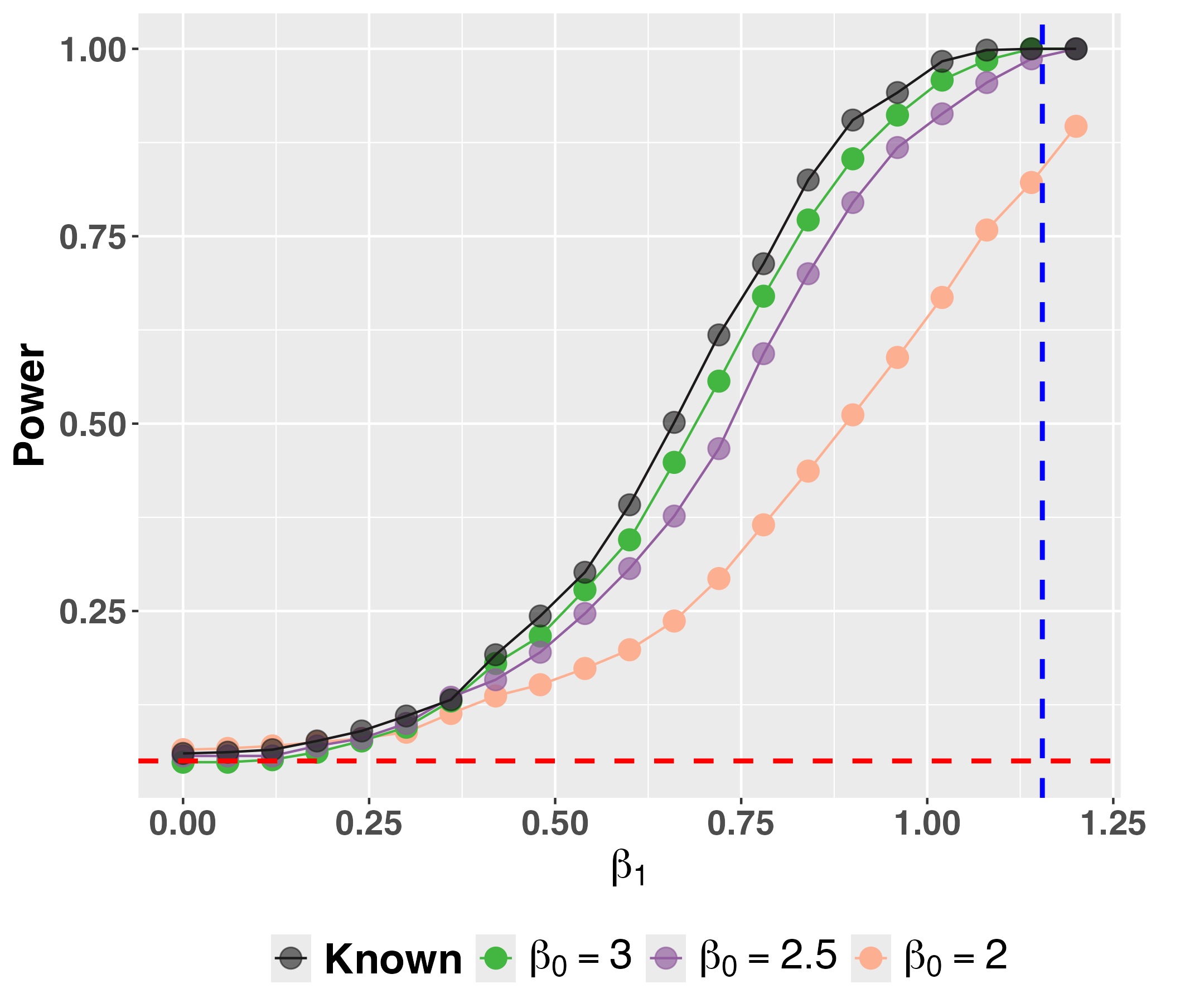}}
		\caption{Power plots of \(\tilde{\mcT}_N^{(3)}(\hat{\bbx}^{(1)},\hat{\bbx}^{(2)},\hat{\bbx}^{(3)})\) under different \(\beta_0,\beta_1\) and types of noises \(\bbX\) and vectors \(\bbx^{(i)}\). ``Known'' denotes the empirical power of \(\tilde{\mcT}_N^{(3)}(\bbx^{(1)},\bbx^{(2)},\bbx^{(3)})\), while ``Beta0=$a$'' represents the empirical power of \(\tilde{\mcT}_N^{(3)}(\hat{\bbx}^{(1)},\hat{\bbx}^{(2)},\hat{\bbx}^{(3)})\) when \(\beta_0=a\), $a=2,2.5,3$. The dashed red line and blue line indicate the significance level \(\alpha=0.05\) and the threshold of phase transition  $\beta_s=2/\sqrt 3=1.1547$, respectively.\label{Fig of 3}}
	\end{figure}


{ Figure \ref{Fig of 3} compares the empirical power of $\tilde{\mcT}_N^{(3)}$ using estimated directional vectors $(\hat{\bbx}^{(1)},\hat{\bbx}^{(2)},\hat{\bbx}^{(3)})$ versus known directional vectors $(\bbx^{(1)},\bbx^{(2)},\bbx^{(3)})$. Across all settings, the test based on estimated vectors exhibits lower power, as expected. However, when the signal strength in $\bbT^{(0)}$ is moderate (e.g., $\beta_0 = 2.5$), the two approaches still achieve comparable performance. This suggests that our two-step testing procedure remains effective under moderate signal conditions. Moreover, when the reference signal is moderately large ($\beta_0 \geq 2.5$), similar to the tensor alignment test, our matching test with reference maintains power close to 1 even for $\beta_1$ slightly below the phase transition threshold (e.g., $\beta_1 = 1$), demonstrating robust detection of signal matching even under weak target signal conditions.}


\section{Basic settings}\label{sec of Basic settings}
\setcounter{equation}{0}
\def\theequation{\thesection.\arabic{equation}}
\setcounter{subsection}{0}
For the sake of completeness and the readability of this supplement, we start by introducing some notations, definitions and assumptions, even though they may have been encountered earlier in the manuscript.
\begin{enumerate}[(i)]
	\item Given \(z\in\mbC\), \(\Re(z)\) and \(\Im(z)\) are the real and imaginary part of \(z\) respectively.
	\item We use an element in \(\mbR^{n_1\times\cdots\times n_d}\) to represent the \(d\)-fold real tensor of size \(n_1\times\cdots\times n_d\).
	\item Given \(A=[a_{ij}]_{n\times n}\), \(\tr(A)=\sum_{i=1}^n a_{ii}\) and \(A'\) denotes the transpose of \(A\) and \(\diag(A)\) is the diagonal matrix made with the main diagonal of \(A\). Moreover, \(\Vert A\Vert\) denotes the spectral norm of \(A\) and \(\Vert A\Vert_k=(\sum_{i,j}|a_{ij}|^k)^{1/k}\) for any \(k\in\mbN^+\).
    \item Given a matrix \(A=[a_{ij}]_{n\times n}\), \(A_{i\cdot}\) and \(A_{\cdot j}\) denote the \(i\)-th row and \(j\)-th column of \(A\), respectively.
	\item The \(n\)-dimensional unit sphere is defined as \(\mbS^{n-1}:=\{\bbx\in\mbR^n:\Vert\bbx\Vert_2=1\}\).
	\item \(C_{\eta}\) represents a positive constant which depends on some parameters \(\eta\).
	\item Given an integrable random variable \(X\), we define its centered version as \(X^c:=X-\mathbb{E}[X].\)
	\item Given \(\eta>0\), define \(\mathbb{C}_{\eta}^+:=\{z\in\mathbb{C}:\Im(z)>\eta\}\) and \(\mathbb{C}^+:=\{z\in\mathbb{C}:\Im(z)>0\}\).
	\item For a real sequence \(\{a_n\}\), \(a_n=\mro(n^{-r})\) for \(r\geq0\) means that \(\lim_{n\to\infty}a_n n^r=0\); \(a_n=\mrO(n^{-r})\) means \(a_nn^r\) is bounded.
	\item The asymptotic almost sure convergence, convergence in probability and in distribution are denoted by \(\overset{a.s.}{\longrightarrow},\overset{\mbP}{\longrightarrow}\) and \(\overset{d}{\longrightarrow}\), respectively.
	\item Given two matrices \(A,B\) of size \(m\times n\), when \(B_{ij}\neq0\) for all \(i,j\),
	\begin{align}
		\frac{\bbA}{\bbB}=[A_{ij}B_{ij}^{-1}]_{m\times n}.\label{Eq of division of matrices}
	\end{align}
	\item Let \(X=\{X_n\}\) and \(Y=\{Y_n\}\) be two sequences of nonnegative random variables. We say $Y$ stochastically dominates $X$ if for all (small) \(\epsilon>0\) and (large) \(D>0\),
	\begin{align}
		\mathbb{P}(X_n>n^{\epsilon}Y_n)\leq n^{-D}\label{Eq of stochastic domination}
	\end{align}
	for all \(n\geq n_0(\epsilon,D)\), which is denoted by \(X\prec Y\) or \(X\prec\mrO(Y)\).
\end{enumerate}
Let \(d\geq3\) be a positive integer, and let \(n_1,\cdots,n_d\in\mbN^+\) be \(d\) positive integers, the \(d\)-fold rank-\(R\) spiked tensor model is defined as:
\begin{align}
	\bbT=\sum_{r=1}^R\beta_r\bbx^{(r,1)}\otimes\cdots\otimes\bbx^{(r,d)}+\frac{1}{\sqrt{N}}\bbX,\label{Eq of spiky tensor model}
\end{align}
where \(\beta_r>0\), \(\bbx^{(r,i)}\in\mbS^{n_i-1}\), \(N:=\sum_{j=1}^dn_j\) and \(\bbX=[X_{i_1\cdots i_d}]_{n_1\times\cdots\times n_d}\in\mbR^{n_1\times\cdots\times n_d}\) is a random tensor whose entries \(X_{i_1\cdots i_d}\) are i.i.d. copies of a centered random variable \(X\) with unit variance and subexponential tails, i.e.,
\begin{Ap}\label{Ap of general noise}
	$$\limsup_{x\geq0}e^{x^{\theta}}\mbP(|X|\geq x)<\infty,$$
	where \(\theta>0\). Moreover, \(\mathbb{E}[X]=0,\Var(X)=1,\) and its third and fourth cumulants are denoted by
	$$ \kappa_3:=\mathbb{E}[X^3]\quad{\rm and}\quad\kappa_4:=\mathbb{E}[X^4]-3.$$
\end{Ap}
\begin{Ap}\label{Ap of dimension}
	The tensor dimensions \(n_1,\cdots,n_d\) all tend to infinity such that 
	$$\lim_{n_1,\cdots,n_d\to\infty}\frac{n_j}{n_1+\cdots+n_d}=\mfc_j\in(0,1),\quad 1\leq j\leq d.$$
	This limiting framework is simply denoted as \(N\to\infty\) (where \(N:=n_1+\cdots+n_d\)) and let
	$$\mfc=(\mfc_1,\cdots,\mfc_d)'.$$
\end{Ap}
Let \(\bba^{(1)}\in\mbS^{n_1-1},\cdots,\bba^{(d)}\in\mbS^{n_d-1}\) be \(d\) deterministic unit vectors such that the vector dimensions \(n_1,\cdots,n_d\) satisfy Assumption \ref{Ap of dimension}. Next, we further define several auxiliary notations as follows:
	\begin{itemize}
		\item Given \(k\in\{1,\cdots,d\}\), define
		\begin{align}
			\mfb_k^{(1)}:=\sum_{i_k=1}^{n_k}a_{i_k}^{(k)}.\label{Eq of mfb}
		\end{align}
		\item For any \(l\) pairwise distinct integers \(k_1,k_2,\cdots,k_l\in\{1,\cdots,d\}\), i.e., $k_i\neq k_j$ for all $i\neq j$, define
		\begin{align}
			\mcB_{(r)}^{(k_1,\cdots,k_l)}:=\sum_{i_j=1,j\neq k_1\cdots k_l}^{n_j}(\mcA_{i_1\cdots i_d}^{(k_1,\cdots,k_l)})^r,\label{Eq of mcB}
		\end{align}
		where \(r\geq2,r\in\mbN\) and
		\begin{align}
			\mcA_{i_1\cdots i_d}^{(k_1,\cdots,k_l)}:=\prod_{j\neq k_1\cdots k_l}a_{i_j}^{(j)}.\label{Eq of mcA}
		\end{align}
	\end{itemize}
	Moreover, we say \(\bba^{(j)}\) is \emph{delocalized} if
	\begin{align}
		\lim_{n_j\to\infty}\Vert\bba^{(j)}\Vert_{\infty}=\lim_{n_j\to\infty}\max_{1\leq i_j\leq n_j}|a_{i_j}^{(j)}|=0,\label{Eq of localized}
	\end{align}
	otherwise, \(\bba^{(j)}\) is \emph{localized}.
 
    As the core tool of this article, for any \(d\)-fold tensor \(\bbT\in\mbR^{n_1\times\cdots\times n_d}\) and vectors \(\bba^{(j)}=(a_1^{(j)},\cdots,a_{n_j}^{(j)})'\in\mbR^{n_j},1\leq j\leq d\) the \(d\)-fold \emph{blockwise tensor contraction operator} \(\bbPhi_d\) is defined by 
    \begin{align}
	&\bbPhi_d:\mbR^{n_1\times\cdots\times n_d}\times\mathbb{S}^{n_1-1}\times\cdots\times\mathbb{S}^{n_d-1}\longrightarrow\mbR^{N\times N},\notag\\
	&\bbPhi_d(\bbT,\boldsymbol{a}^{(1)},\cdots,\boldsymbol{a}^{(d)})\longrightarrow\left(\begin{array}{cccc}
		\boldsymbol{0}_{n_1\times n_1}&\bbT^{12}&\cdots&\bbT^{1d}\\(\bbT^{12})'&\boldsymbol{0}_{n_2\times n_2}&\cdots&\bbT^{2d}\\\vdots&\vdots&\ddots&\vdots\\(\bbT^{1d})'&(\bbT^{2d})'&\cdots&\boldsymbol{0}_{n_d\times n_d}
	\end{array}\right),\label{Eq of tensor contraction}
\end{align}
where
$$\bbT^{ij}=\bbT(\bba^{(1)},\cdots,\boldsymbol{a}^{(i-1)},\boldsymbol{a}^{(i+1)},\cdots,\boldsymbol{a}^{(j-1)},\boldsymbol{a}^{(j+1)},\cdots,\bba^{(d)})\in\mbR^{n_i\times n_j}\quad{\rm for\ }i<j.$$
and
\begin{align}
	\bbT(\{\bba^{(1)},\cdots,\bba^{(d)}\}\backslash\{\bba^{(j_1)},\bba^{(j_2)}\}):=\left[\sum_{i_j=1,j\neq j_1,j_2}^{n_j}T_{i_1,\cdots,i_d}\mcA_{i_1\cdots i_d}^{(j_1,j_2)}\right]_{n_{j_1}\times n_{j_2}}\label{Eq of operator 1}
\end{align}
is \emph{second order contraction matrix} for any \(1\leq j_1\neq j_2\leq d\). In this article, we will study the asymptotic spectral properties of
\begin{align}
	\bbM:=\frac{1}{\sqrt{N}}\Phi_d(\boldsymbol{X},\bba^{(1)},\cdots,\bba^{(d)})\quad{\rm and}\quad\bbQ(z)=(\bbM-z\bbI_N)^{-1},\label{Eq of N and Q}
\end{align}
where \(\bbQ(z)\) is the resolvent of \(\bbM\) for any \(z\in\mbC^+\). Similar to (\ref{Eq of tensor contraction}), we also split \(\bbQ(z)=[\bbQ^{ij}(z)]_{d\times d}\) into \(d\times d\) blocks such that \(\bbQ^{ij}(z)\in\mbC^{n_i\times n_j}\), then for each diagonal block, let
\begin{align}
	\rho_i(z):=N^{-1}\tr(\bbQ^{ii}(z)),\quad\rho(z):=\sum_{i=1}^d\rho_i(z),\quad\mfm_i(z):=\mathbb{E}\left[\rho_i(z)\right],\quad\mfm(z):=\mbE[\rho(z)],\label{Eq of mi}
\end{align}
and
\begin{align}
	\bbm(z):=(\mfm_1(z),\cdots,\mfm_d(z))'\quad{\rm and}\quad\mfc:=(\mfc_1,\cdots,\mfc_d)'.\label{Eq of bbm and mfc}
\end{align}
\section{Properties of vector Dyson equation induced by the matrix \emph{M}}\label{Sec of Dyson}
\setcounter{equation}{0}
\def\theequation{\thesection.\arabic{equation}}
\setcounter{subsection}{0}
In this section, we will investigate several important properties of the vector Dyson equation induced by \(\bbM\), which is defined as follows:
\begin{align}
	-\frac{\mfc}{\bbg(z)}=z+\bbS_d\bbg(z),\label{Eq of MDE 3 order}
\end{align}
where \(\bbg(z)=(g_1(z),\cdots,g_d(z))'\) is the solution of (\ref{Eq of MDE 3 order}) and ``\(\frac{\mfc}{\bbg(z)}\)'' is the entrywise division as in (\ref{Eq of division of matrices}) and
\begin{align}
	\bbS_d:=\boldsymbol{1}_{d\times d}-\bbI_d.\label{Eq of bbS d}
\end{align}
The main reason for studying the vector Dyson equation \eqref{Eq of MDE 3 order} is that the mean of the trace of resolvent \(\bbQ(z)\) satisfies that (see Theorem \ref{Thm of approximation} for more details of \eqref{Eq of Dyson example})
\begin{align}
    -\frac{\mfc}{\bbm(z)}=z+\bbS_d\bbm(z)+\bbdel(z),\label{Eq of Dyson example}
\end{align}
where \(\bbm(z)=(N^{-1}\mbE[\tr(\bbQ^{11}(z))],\cdots,N^{-1}\mbE[\tr(\bbQ^{dd}(z))])'\) defined in \eqref{Eq of bbm and mfc} and \(\bbdel(z)\) is a small perturbation term such that $\lim_{N\to\infty}\Vert\bbve(z)\Vert_{\infty}=0$. It is easy to see that \eqref{Eq of MDE 3 order} is the limiting form of \eqref{Eq of Dyson example}. Therefore, the vector Dyson equation \eqref{Eq of MDE 3 order} is an important tool to investigate the asymptotic properties of $N^{-1}\mbE[\tr(\bbQ(z))]$. To be precise, we investigate the following properties of the vector Dyson equation \eqref{Eq of MDE 3 order}:
\begin{enumerate}
    \item \eqref{Eq of MDE 3 order} admits a unique analytic solution on \(\mbC^+\);
    \item For any vector-valued analytic function \(\bbv(z):\mathbb{C}^+\rightarrow\mbC^d\) satisfying \(-\frac{\mfc}{\bbv(z)}=z+\bbS_d\bbv(z)+\bbve(z)\), \(\bbve(z)\) here is a small perturbation term uniformly controlled over a given region, then $\Vert\bbg(z)-\bbv(z)\Vert_{\infty}$ is also small uniformly over the same region.
\end{enumerate}
Particularly, Property 2 above is called the \emph{stability of the vector Dyson equation \eqref{Eq of MDE 3 order}}. This stability immediately implies the asymptotic equivalence of \(\bbm(z)\) and \(\bbg(z)\). Further combining the fact that there exists a probability measure \(\nu\) associated with \(\bbg(z)\) (see Theorem \ref{Lem of finite support}), we can determine \(\nu\) is indeed the limiting spectral distribution of the matrix \(\bbM\).

Technically, to establish the stability of the vector Dyson equation \eqref{Eq of MDE 3 order}, we prove that the stability operator of \eqref{Eq of MDE 3 order}, which is a \(d\times d\) complex matrix (see \eqref{Eq of stability operator} later), is invertible in \S\ref{Sec of Stability operator}. Moreover, this stability operator also appears in the asymptotic mean and variance of the linear spectral statistics of the matrix \(\bbM\).

For a comprehensive discussion of the Dyson equation of random matrices, readers can refer to \cite{ajanki2019quadratic}. Without loss of generality, we assume \(\mfc_1=\max_{1\leq l\leq d}\mfc_l\) in \S\ref{Sec of Dyson}.
\subsection{Existence and uniqueness for the solution of (\ref{Eq of MDE 3 order})}\label{sec of Existence and uniqueness Dyson equation}
\begin{thm}\label{Thm of Dyson}
	Under Assumption {\rm \ref{Ap of dimension}}, {\rm (\ref{Eq of MDE 3 order})} admits a unique analytic solution on \(\mbC^+\).
\end{thm}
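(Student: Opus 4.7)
The plan is to recast \eqref{Eq of MDE 3 order} as a componentwise fixed-point problem on the polydomain $(\mbC^+)^d$ and apply complex-analytic tools. First, I would observe that \eqref{Eq of MDE 3 order} can be rewritten coordinatewise as $\bbg=F_z(\bbg)$, where $F_z:(\mbC^+)^d\to(\mbC^+)^d$ is defined by
\begin{equation*}
[F_z(\bby)]_i \;=\; -\frac{\mfc_i}{z+\sum_{j\neq i}y_j},\qquad 1\le i\le d.
\end{equation*}
Since $\mfc_i>0$ and the M\"obius transformation $w\mapsto -\mfc_i/w$ maps $\mbC^+$ into itself, and $z+\sum_{j\ne i}y_j\in\mbC^+$ whenever $z,y_j\in\mbC^+$, the map $F_z$ is a holomorphic self-map of $(\mbC^+)^d$.

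Next, I would derive uniform a priori estimates by taking the imaginary part of the equivalent form $g_i(z+g(z)-g_i(z))=-\mfc_i$. This yields simultaneously $\Im g_i>0$ and the upper bound $|g_i|\le\mfc_i/\Im z$, which together localize every candidate fixed point inside a relatively compact hyperbolic subdomain of $(\mbC^+)^d$. Existence of a solution $\bbg(z)\in(\mbC^+)^d$ will then follow from the Earle--Hamilton fixed-point theorem (or, alternatively, the Schauder--Tychonoff theorem applied to the closure of a convex $F_z$-invariant set). Analyticity of $z\mapsto\bbg(z)$ on $\mbC^+$ is either built into the Earle--Hamilton statement through holomorphic parameter dependence, or can be obtained from the implicit function theorem applied to the polynomial system $g_i(z+\sum_{j\ne i}g_j)+\mfc_i=0$, once the associated Jacobian is shown to be nonsingular (the same nonsingularity will recur in the later stability analysis).

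For uniqueness, which I expect to be the main obstacle, the cleanest route is to exploit the fact that $F_z$ sends $(\mbC^+)^d$ into a proper hyperbolic ball, so $F_z$ is a strict contraction in the Kobayashi metric and its fixed point is unique, with the Picard iterates converging to it. As a more elementary alternative, given two putative solutions $\bbg^{(1)},\bbg^{(2)}\in(\mbC^+)^d$, the identity
\begin{equation*}
g_i^{(1)}-g_i^{(2)} \;=\; \mfc_i\cdot\frac{\sum_{j\ne i}\bigl(g_j^{(1)}-g_j^{(2)}\bigr)}{\bigl(z+\sum_{j\ne i}g_j^{(1)}\bigr)\bigl(z+\sum_{j\ne i}g_j^{(2)}\bigr)}
\end{equation*}
combined with the lower bound $|z+\sum_{j\ne i}g_j^{(k)}|\ge \Im z+\sum_{j\ne i}\Im g_j^{(k)}$, sharpened using the identity $|z+\sum_{j\ne i}g_j^{(k)}|=\mfc_i/|g_i^{(k)}|$ read off from the equation itself, should yield $\|\bbg^{(1)}-\bbg^{(2)}\|_\infty\le\rho\,\|\bbg^{(1)}-\bbg^{(2)}\|_\infty$ with contraction factor $\rho<1$ uniformly on $\{\Im z\ge\eta\}$ for any $\eta>0$; the delicate point will be optimizing these denominator estimates so that the contraction factor is genuinely strict rather than merely nonexpansive, after which uniqueness on all of $\mbC^+$ follows by letting $\eta\downarrow 0$ and invoking analyticity.
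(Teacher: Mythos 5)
Your plan is correct and follows essentially the same route as the paper: the paper (Lemma~\ref{Lem of contraction}) shows \(\boldsymbol{\Psi}_d(z,\cdot)\), your \(F_z\), maps the localized set \(\mathscr{B}_{\eta_0}^d\) into itself and is a strict contraction in the hyperbolic-type semi-metric \(D_{\mbC^+}(z_1,z_2)=|z_1-z_2|^2/(\Im z_1\,\Im z_2)\) with factor \((1+\eta_0^2\Vert\bbS_d\Vert^{-1})^{-2}\), then invokes Banach on each \(\mathbb{H}_{\eta_0}\) and sends \(\eta_0\downarrow 0\), which is precisely the explicit form of your Earle--Hamilton/Kobayashi-contraction argument. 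The only small caveat in your write-up: to get a relatively compact image you also need the uniform \emph{lower} bound \(\Im[F_z(\bby)]_i\gtrsim\eta_0^3\) (the paper records this explicitly), not merely \(\Im g_i>0\) and the upper bound \(|g_i|\le\mfc_i/\Im z\).
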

First, we will show that (\ref{Eq of MDE 3 order}) has a unique solution within the domain
$$\mathscr{B}_{\eta_0}^d:=\left\{\bbu(z)\in\mathscr{B}_+^d:\Vert\bbu\Vert_{\infty}\leq\eta_0^{-1}\mfc_1,\quad\min_{1\leq i\leq d}\Im(u_i(z))\geq\frac{\eta_0^3\mfc_d^2\mfc_1^{-1}}{[1+\mfc_1(d-1)]^2}\right\},$$
where
$$\mathscr{B}_+^d:=\left\{\bbu(z)\in\mbC^d{\rm\ is\ analytic\ for\ }z\in\mbC^+\ {\rm and\ }\min_{1\leq i\leq d}\Im(u_i(z))>0\right\}.$$
Here, we introduce the following metric:
$$D_{\mbC^+}(z_1,z_2)=\frac{|z_1-z_2|^2}{\Im(z_1)\Im(z_2)}\quad{\rm for\ }\forall z_1,z_2\in\mbC^+.$$
Besides, we define a function mapping \(\boldsymbol{\Psi}_d:\mathscr{B}_+^d\to\mathscr{B}_+^d\) as follows:
$$\boldsymbol{\Psi}_d(z,\bbu)=-\frac{\mfc}{z+\boldsymbol{S}_d\bbu(z)}.$$
We have the following result:
\begin{lem}[\(\boldsymbol{\Psi}_d\) is a contraction mapping]\label{Lem of contraction}
	Under Assumption {\rm \ref{Ap of dimension}}, for any \(\eta_0>0\), let 
	\begin{align}
		\mathbb{H}_{\eta_0}:=\{z\in\mbC_{\eta_0}^+,|z|\leq\eta_0^{-1}\},\label{Eq of H0 region}
	\end{align}
	then \(\boldsymbol{\Psi}_d(z,\cdot)\) maps \(\mathscr{B}_{\eta_0}^d\) to itself such that
	$$\max_{1\leq j\leq d}D_{\mbC^+}(\boldsymbol{\Psi}_d(z,\bbu)_j,\boldsymbol{\Psi}_d(z,\bbw)_j)\leq(1+\eta_0^2\Vert\boldsymbol{S}_d\Vert^{-1})^{-2}\max_{1\leq j\leq d}D_{\mbC^+}(u_j(z),w_j(z)),$$
	for any \(z\in\mathbb{H}_{\eta_0}\) and \(\bbu,\bbw\in\mathscr{B}_{\eta_0}^+\), where \(\boldsymbol{\Psi}_d(z,\bbu)_j\) represents the \(j\)-th entry of \(\boldsymbol{\Psi}_d(z,\bbu)\).
\end{lem}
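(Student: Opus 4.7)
The plan is to verify, coordinatewise, both the invariance $\boldsymbol{\Psi}_d(z,\cdot):\mathscr{B}_{\eta_0}^d\to\mathscr{B}_{\eta_0}^d$ and the contraction estimate, exploiting the explicit form $\boldsymbol{\Psi}_d(z,\bbu)_j=-\mfc_j/A_j(\bbu)$ with $A_j(\bbu):=z+\sum_{k\neq j}u_k(z)$, together with the two elementary identities $|-1/w|=1/|w|$ and $\Im(-1/w)=\Im(w)/|w|^2$ valid for $w\in\mbC^+$.

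First I would establish invariance. Since $\Im(u_k)>0$ for all $k$, we have $\Im(A_j(\bbu))\ge\Im(z)\ge\eta_0$, so $A_j(\bbu)$ does not vanish on $\mathbb{H}_{\eta_0}$ and $\boldsymbol{\Psi}_d(z,\bbu)$ is well-defined and analytic in $z$. The upper bound $|\boldsymbol{\Psi}_d(z,\bbu)_j|\le \mfc_j/\Im(A_j(\bbu))\le\mfc_j/\eta_0\le\mfc_1/\eta_0$ is immediate. For the imaginary part, the crude bound $|A_j(\bbu)|\le|z|+(d-1)\Vert\bbu\Vert_\infty\le\eta_0^{-1}(1+(d-1)\mfc_1)$ combined with the identity above gives $\Im(\boldsymbol{\Psi}_d(z,\bbu)_j)\ge\mfc_j\eta_0^3/(1+(d-1)\mfc_1)^2$, and since $\mfc_j\ge\mfc_d^2/\mfc_1$ (using $\mfc_1=\max_l\mfc_l$), this matches the lower bound in the definition of $\mathscr{B}_{\eta_0}^d$.

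For the contraction estimate, a direct algebraic manipulation yields $\boldsymbol{\Psi}_d(z,\bbu)_j-\boldsymbol{\Psi}_d(z,\bbw)_j=\mfc_j\bigl(A_j(\bbu)-A_j(\bbw)\bigr)/(A_j(\bbu)A_j(\bbw))$, and combining with the imaginary-part formula above, the factors of $\mfc_j$ and of $|A_j|^2$ cancel exactly, leaving $D_{\mbC^+}(\boldsymbol{\Psi}_d(z,\bbu)_j,\boldsymbol{\Psi}_d(z,\bbw)_j)=|A_j(\bbu)-A_j(\bbw)|^2/(\Im(A_j(\bbu))\Im(A_j(\bbw)))$. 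Setting $M:=\max_k D_{\mbC^+}(u_k,w_k)$ and using the definitional identity $|u_k-w_k|\le\sqrt{M}\sqrt{\Im(u_k)\Im(w_k)}$, a Cauchy--Schwarz applied to $\sum_{k\neq j}\sqrt{\Im(u_k)}\cdot\sqrt{\Im(w_k)}$ gives $|A_j(\bbu)-A_j(\bbw)|^2\le M\bigl(\sum_{k\neq j}\Im(u_k)\bigr)\bigl(\sum_{k\neq j}\Im(w_k)\bigr)$.

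Finally, to extract the exact prefactor I would invoke the monotonicity of $x\mapsto x/(a+x)$ for $a,x\ge0$: with $a=\Im(z)\ge\eta_0$ and $\sum_{k\neq j}\Im(u_k)\le(d-1)\eta_0^{-1}\mfc_1\le(d-1)\eta_0^{-1}$, this produces $\sum_{k\neq j}\Im(u_k)/\Im(A_j(\bbu))\le(1+\eta_0^2/(d-1))^{-1}$, and the analogous bound for $\bbw$. Multiplying the two and noting $\Vert\boldsymbol{S}_d\Vert=d-1$ gives the claimed factor $(1+\eta_0^2\Vert\boldsymbol{S}_d\Vert^{-1})^{-2}$, after which taking $\max_j$ completes the argument. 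I expect the main obstacle to be picking the right form of Cauchy--Schwarz in the bound on $|A_j(\bbu)-A_j(\bbw)|^2$: the splitting $\sqrt{M}\sqrt{\Im(u_k)\Im(w_k)}=\sqrt{M}\sqrt{\Im(u_k)}\cdot\sqrt{\Im(w_k)}$ is precisely what produces $\Im(A_j(\bbu))\Im(A_j(\bbw))$ in the denominator and hence the sharp constant; coarser pairings (e.g.\ bounding by $\sum_k D_{\mbC^+}(u_k,w_k)$) would destroy the symmetric structure and would not yield a contraction factor strictly below $1$ uniformly in $\bbu,\bbw\in\mathscr{B}_{\eta_0}^d$.
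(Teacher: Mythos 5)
Your argument is correct, and it takes a genuinely different route from the paper. The paper establishes the contraction estimate by appealing to Lemma~4.2 of \cite{ajanki2019quadratic}, which packages the relevant properties of the half-plane metric $D_{\mbC^+}$: invariance under the M\"obius map $v\mapsto -\mfc/v$, the shrinking bound $D_{\mbC^+}(\ii a+v_1,\ii a+v_2)\le(1+a/\Im v_1)^{-1}(1+a/\Im v_2)^{-1}D_{\mbC^+}(v_1,v_2)$ for $a>0$, and the fact that $D_{\mbC^+}\bigl((\bbS_d\bbu)_j,(\bbS_d\bbw)_j\bigr)\le\max_k D_{\mbC^+}(u_k,w_k)$. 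You instead derive all of these from scratch: the algebraic identity $D_{\mbC^+}(-\mfc_j/A_j(\bbu),-\mfc_j/A_j(\bbw))=|A_j(\bbu)-A_j(\bbw)|^2/\bigl(\Im A_j(\bbu)\,\Im A_j(\bbw)\bigr)$ reproves the M\"obius invariance, the Cauchy--Schwarz split $\sum_{k\ne j}\sqrt{\Im u_k}\sqrt{\Im w_k}\le\sqrt{\sum_{k\ne j}\Im u_k}\cdot\sqrt{\sum_{k\ne j}\Im w_k}$ reproves the max-bound for the linear map $\bbS_d$, and the monotonicity of $x/(a+x)$ reproves the shrinking estimate with the explicit constant $\|\bbS_d\|=d-1$. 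The paper's route is shorter and delegates to a citation; yours is self-contained and, as you note, makes visible exactly why the factor $(1+\eta_0^2\|\bbS_d\|^{-1})^{-2}$ appears -- the symmetric $\sqrt{\Im u_k}\cdot\sqrt{\Im w_k}$ pairing is what matches the denominator $\Im A_j(\bbu)\Im A_j(\bbw)$ -- whereas a coarser pairing would lose the uniform contraction. Incidentally, your version also sidesteps a small typo in the paper, which writes $(\bbS_d\bbu)_j$ rather than $\Im(\bbS_d\bbu)_j$ in the denominator of the shrinking factors. One point worth making explicit when you write this up: the lower bound $\mfc_j\ge\mfc_d^2/\mfc_1$ in the invariance step tacitly uses the section's convention that $\mfc_d=\min_i\mfc_i$ (alongside $\mfc_1=\max_i\mfc_i$), which the paper also uses without comment in its own invariance argument.
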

\begin{proof}
	First, notice that
	$$|\boldsymbol{\Psi}_d(z,\bbu)_j|\leq\Im(z+\boldsymbol{S}_d\bbu)_j^{-1}\max_{1\leq i\leq d}\mathfrak{c}_i\leq\eta_0^{-1}\mfc_1$$
	and
	$$|\boldsymbol{\Psi}_d(z,\bbu)_j|\geq\frac{\min_{1\leq i\leq d}\mathfrak{c}_i}{|z|+|(\boldsymbol{S}_d\bbu)_j|}\geq\frac{\eta_0\mfc_d}{1+\mfc_1(d-1)},$$
	where the last inequality is valid due to \(|z|\leq\eta_0^{-1}\) for \(z\in\mathbb{H}_{\eta_0}\), \(\Vert\bbu\Vert_{\infty}\leq\mfc_{(1)}\eta_0^{-1}\) and \(|(\boldsymbol{S}_d\bbu)_j|\leq\sum_{k=1,k\neq j}^d|u_k|\) for \(\bbu\in\msB_{\eta_0}^d\), which implies that
	$$\Im(\boldsymbol{\Psi}_d(z,\bbu)_j)=\frac{\mfc_j\Im(z+\boldsymbol{S}_d\bbu)_j}{|z+(\boldsymbol{S}_d\bbu)_j|^2}\geq\Im(z)|\boldsymbol{\Psi}_d(z,\bbu)_j|^2\mathfrak{c}_j^{-1}\geq \frac{\eta_0^3\mfc_d^2\mfc_1^{-1}}{[1+\mfc_1(d-1)]^2}.$$
	Hence, \(\boldsymbol{\Psi}_d(z,\cdot)\) maps \(\mathscr{B}_{\eta_0}^+\) to itself. Next, for any \(\bbu,\bbw\in\mathscr{B}_{\eta_0}^+\), we have
	\begin{align}
		&D_{\mbC^+}(\boldsymbol{\Psi}_d(z,\bbu)_j,\boldsymbol{\Psi}_d(z,\bbw)_j)=D_{\mbC^+}(z+(\boldsymbol{S}_d\bbu)_j,z+(\boldsymbol{S}_d\bbw)_j)\notag\\
		&=D_{\mbC^+}({\rm i\ Im}z+(\boldsymbol{S}_d\bbu)_j,{\rm i\ Im}z+(\boldsymbol{S}_d\bbw)_j)\notag\\
		&\leq\left(1+\frac{\Im(z)}{(\boldsymbol{S}_d\bbu)_j}\right)^{-1}\left(1+\frac{\Im(z)}{(\boldsymbol{S}_d\bbw)_j}\right)^{-1}D_{\mbC^+}((\boldsymbol{S}_d\bbu)_j,(\boldsymbol{S}_d\bbw)_j)\notag\\
		&\leq(1+\eta_0^2\Vert\boldsymbol{S}_d\Vert^{-1})^{-2}\max_{1\leq j\leq d}D_{\mbC^+}(u_j(z),w_j(z)),\notag
	\end{align}
	where we use some basic properties of \(D_{\mbC^+}(\cdot ,\cdot)\) in proving above inequalities, readers can refer to Lemma 4.2 in \cite{ajanki2019quadratic} for details.
\end{proof}
Now, the existence and uniqueness of (\ref{Eq of MDE 3 order}) for \(z\in\mathbb{H}_{\eta_0}\) can be proved by Lemma \ref{Lem of contraction} and Banach fixed-point theorem. Since \(\eta_0\) is an arbitrary positive number, we can extend this conclusion  to \(\mathscr{B}_+^d\) by letting \(\eta_0\to0\), which completes the proof of Theorem \ref{Thm of Dyson}.
\subsection{The invertibility of stability operators}\label{Sec of Stability operator}
We study the stability operator induced by the vector Dyson equation \eqref{Eq of MDE 3 order} in preparation for Theorem \ref{Thm of Stability}. To formalize, we first define the self-energy operator as follows:
\begin{align}
	\boldsymbol{F}^{(d)}:=\boldsymbol{F}^{(d)}(z)={\rm diag}(|\mfc^{-1}\circ\boldsymbol{g}(z)|)\boldsymbol{S}_d{\rm diag}(|\boldsymbol{g}(z)|)=[F_{ij}(z)]_{d\times d},\label{Eq of operator F}
\end{align}
where \(F_{ii}(z)\equiv0\) and \(F_{ij}(z)=\mathfrak{c}_i^{-1}|g_i(z)g_j(z)|\) for \(i\neq j\). Then the stability operator of \eqref{Eq of MDE 3 order} is defined as
\begin{align}
	\boldsymbol{B}^{(d)}:=\boldsymbol{B}^{(d)}(z)=\boldsymbol{I}_d-{\rm diag}(\mfc^{-1}\circ\boldsymbol{g}(z)^{\circ2})\boldsymbol{S}_d={\rm diag}(|\boldsymbol{g}|)(\boldsymbol{I}_d-{\rm diag}(e^{2{\rm i}\boldsymbol{q}})\boldsymbol{F}){\rm diag}(|\boldsymbol{g}|)^{-1}.\label{Eq of stability operator}
\end{align}
In this section, we first prove that
\begin{pro}\label{Pro of stability operator}
	Under Assumption {\rm \ref{Ap of dimension}}, for any \(\eta_0>0\) and \(z\in\mathbb{H}_{\eta_0}\) in \eqref{Eq of H0 region}, the stability operator \eqref{Eq of stability operator} is invertible.
\end{pro}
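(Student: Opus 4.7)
The plan is to exploit the similarity already displayed in (\ref{Eq of stability operator}): the operator $\bbB^{(d)}(z)$ is conjugate, via $\diag(|\bbg(z)|)$, to $\bbI_d - \diag(e^{2\ii\bbq(z)})\bbF^{(d)}(z)$, where $\bbq(z)$ denotes the componentwise argument of $\bbg(z)$. On $\mathbb{H}_{\eta_0}$ the lower bound $|g_i(z)| \geq \eta_0\mfc_d/[1+\mfc_1(d-1)]$ already obtained in the proof of Lemma \ref{Lem of contraction} makes $\bbq$ well defined and ensures that every off-diagonal entry $F_{ij} = \mfc_i^{-1}|g_i||g_j|$ of $\bbF^{(d)}$ is strictly positive. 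It therefore suffices to show that $1$ is not an eigenvalue of $\diag(e^{2\ii\bbq})\bbF^{(d)}$.

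The heart of the proof will be the strict inequality $\rho(\bbF^{(d)}) < 1$ for the spectral radius, which I will extract from the imaginary part of the Dyson equation (\ref{Eq of MDE 3 order}) itself. Taking $\Im(\cdot)$ of $-\mfc_i/g_i = z + (\bbS_d\bbg)_i$ and using $|z + (\bbS_d\bbg)_i|^2 = \mfc_i^2/|g_i|^2$ gives $\mfc_i \Im(g_i)/|g_i|^2 = \Im(z) + (\bbS_d \Im\bbg)_i$. Setting $w_i := \Im(g_i)/|g_i|$ and dividing each equation by $|g_i|$ rewrites this identity as $(\bbI_d - \bbF^{(d)})\bbw = \Im(z)\,\mfc^{-1}\circ|\bbg|$, whose right-hand side is strictly positive on $\mathbb{H}_{\eta_0}$ and whose left argument $\bbw$ is itself strictly positive. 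Pairing both sides with a nonnegative left Perron eigenvector $\bbu_*$ of $\bbF^{(d)}$ then yields $\rho(\bbF^{(d)})\langle\bbu_*,\bbw\rangle = \langle\bbu_*,\bbF^{(d)}\bbw\rangle < \langle\bbu_*,\bbw\rangle$, and since $\langle\bbu_*,\bbw\rangle > 0$ one concludes $\rho(\bbF^{(d)}) < 1$.

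To finish, I would pass from $\rho(\bbF^{(d)}) < 1$ to the invertibility of $\bbI_d - \diag(e^{2\ii\bbq})\bbF^{(d)}$ by a standard Collatz--Wielandt argument. For any eigenpair $(\lambda, \bbu)$ of $\diag(e^{2\ii\bbq})\bbF^{(d)}$ one has $|\diag(e^{2\ii\bbq})\bbF^{(d)}\bbu| \leq \bbF^{(d)}|\bbu|$ componentwise, because the diagonal factor only rotates each row by a unimodular phase; taking absolute values of the eigenvalue equation thus yields $|\lambda|\,|\bbu| \leq \bbF^{(d)}|\bbu|$, and the Collatz--Wielandt principle delivers $|\lambda| \leq \rho(\bbF^{(d)}) < 1$. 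In particular $1$ is not an eigenvalue, and the similarity transports invertibility back to $\bbB^{(d)}(z)$.

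The one point I expect to require a little care is the invocation of Perron--Frobenius theory, specifically the existence of a nonnegative left eigenvector attaining $\rho(\bbF^{(d)})$ and the Collatz--Wielandt inequality in its complex twisted form. Both are standard once irreducibility (in fact strict positivity off the diagonal) of $\bbF^{(d)}$ is granted, which is automatic on $\mathbb{H}_{\eta_0}$ from the positivity of $|g_i(z)|$ established in the proof of Lemma \ref{Lem of contraction}. No further regularity or analyticity of $\bbg$ enters the argument.
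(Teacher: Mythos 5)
Your proposal is correct. Both you and the paper reduce invertibility of $\bbB^{(d)}(z)$ to showing that $\bbI_d - \diag(e^{2\ii\bbq})\bbF^{(d)}$ is nonsingular and derive the key estimate $\rho(\bbF^{(d)})<1$ in the same way, by reading the imaginary part of the Dyson equation as $(\bbI_d-\bbF^{(d)})\sin\bbq=\Im(z)\,\mfc^{-1}\circ|\bbg|$ and pairing against a positive Perron eigenvector. (You pair against a \emph{left} Perron eigenvector, which is actually the correct reading — the paper states $\bbF\bbf=\Vert\bbF\Vert\bbf$ but then uses $\bbf$ as if it were a left eigenvector; since $\bbF^{(d)}$ is not symmetric, that line in the paper is slightly imprecise.) Where you diverge is the last step. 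The paper symmetrizes to $\tilde\bbF=\diag(|\mfc^{-1/2}\circ\bbg|)\bbS_d\diag(|\mfc^{-1/2}\circ\bbg|)$ and invokes its Lemma on the spectral gap of $\bbv\bbv'-\diag(\bbv^{\circ2})$ (a nontrivial eigenvalue computation) to conclude that both the top and bottom eigenvalues of $\tilde\bbF$ lie in $(-1,1)$, from which nonsingularity of $\diag(e^{-2\ii\bbq})-\tilde\bbF$ follows by a norm argument. You instead use the elementary entrywise comparison $\rho(M)\le\rho(N)$ for $|M|\le N$ with $N$ nonnegative (the Collatz--Wielandt bound), applied with $M=\diag(e^{2\ii\bbq})\bbF^{(d)}$ and $N=\bbF^{(d)}$, which lands the same conclusion in one line. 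Your route is more economical for this single proposition and avoids the spectral-gap lemma altogether; the paper's more involved route pays off because that lemma is reused elsewhere (it is what establishes $|g_i(z)|<\sqrt{\mfc_i}$ in Remark 2.1 and underlies the invertibility of $\bbPi^{(d)}(z_1,z_2)$ at two distinct spectral points, which has no entrywise-majorization shortcut).
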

For simplicity, we simplify \(\bbF^{(d)}\) and \(\bbB^{(d)}\) by \(\bbF\) and \(\bbB\), respectively. Before proving the above proposition, we need some preliminaries. Notice that all entries of \(\boldsymbol{F}\) are non-negative, then according to the Perron-Frobenius theorem, there exists a positive vector \(\boldsymbol{f}:=\boldsymbol{f}(z)\) such that \(\boldsymbol{F}\boldsymbol{f}=\Vert\boldsymbol{F}\Vert\boldsymbol{f}\). In addition, taking the imaginary part of (\ref{Eq of MDE 3 order}), i.e.
\begin{align}
	{\rm (\ref{Eq of MDE 3 order})}\ \Rightarrow\ \frac{\mfc\circ\Im(\bbg)}{|\bbg|^2}=\Im(z)+\bbS_d\Im(\bbg),\label{Eq of VDE image}
\end{align}
which yields that
\begin{align}
	\sin\boldsymbol{q}=\Im(z)\mfc^{-1}\circ|\boldsymbol{g}|+\boldsymbol{F}\sin\boldsymbol{q},\label{Eq of MDE taking image}
\end{align}
where $\boldsymbol{g}=e^{{\rm i}\boldsymbol{q}}\circ|\boldsymbol{g}|$ and \(\sin\boldsymbol{q}=\frac{\Im(\boldsymbol{g})}{|\boldsymbol{g}|}\). Therefore, we can obtain
$$\langle\boldsymbol{f},\sin\boldsymbol{q}\rangle=\Im(z)\langle\boldsymbol{f},\mfc^{-1}\circ|\boldsymbol{g}|\rangle+\Vert\boldsymbol{F}\Vert\langle\boldsymbol{f},\sin\boldsymbol{q}\rangle,$$
i.e.
\begin{align}
	\Vert\boldsymbol{F}\Vert=1-\frac{\Im(z)\langle\boldsymbol{f},\mfc^{-1}\circ|\boldsymbol{g}|\rangle}{\langle\boldsymbol{f},\sin\boldsymbol{q}\rangle}<1,\ {\rm for\ }z\in\mathbb{C}^+.\label{Eq of F norm}
\end{align}
Hence, \(\boldsymbol{I}_d-\boldsymbol{F}\) is invertible. Moreover, by (\ref{Eq of stability operator}), we have
	$$\bbB(z)={\rm diag}(\mfc^{-1/2}\circ\boldsymbol{g}(z))(\bbI_d-{\rm diag}(\mfc^{-1/2}\circ\boldsymbol{g}(z))\boldsymbol{S}_d{\rm diag}(\mfc^{-1/2}\circ\boldsymbol{g}(z))){\rm diag}(\mfc^{-1/2}\circ\boldsymbol{g}(z))^{-1},$$
where
	\begin{align*}
		&\bbI_d-{\rm diag}(\mfc^{-1/2}\circ\boldsymbol{g}(z))\boldsymbol{S}_d{\rm diag}(\mfc^{-1/2}\circ\boldsymbol{g}(z))\\
		&=\diag(e^{{\rm i}\bbq(z)})(\diag(e^{-2{\rm i}\bbq(z)})-{\rm diag}(|\mfc^{-1/2}\circ\boldsymbol{g}(z)|)\boldsymbol{S}_d{\rm diag}(|\mfc^{-1/2}\circ\boldsymbol{g}(z)|))\diag(e^{{\rm i}\bbq(z)}).
	\end{align*}
to prove that \(\boldsymbol{B}(z)\) is invertible, it is enough to prove that 
$$\diag(e^{-2{\rm i}\bbq(z)})-{\rm diag}(|\mfc^{-1/2}\circ\boldsymbol{g}(z)|)\boldsymbol{S}_d{\rm diag}(|\mfc^{-1/2}\circ\boldsymbol{g}(z)|)$$
is invertible. In fact, we can prove the above matrix is invertible by showing the spectral gap of \({\rm diag}(|\mfc^{-1/2}\circ\boldsymbol{g}(z)|)\boldsymbol{S}_d{\rm diag}(|\mfc^{-1/2}\circ\boldsymbol{g}(z)|)\) is positive. To make this precise, we start with a definition:
\begin{Def}\label{Def of spectral gap}
	For any matrix \(\bbA\), the spectral gap \({\rm Gap}(\bbA)\) is the difference between the two largest eigenvalues of \(\sqrt{\bbA\bbA^*}\).
\end{Def}
Next, we need the following lemma:
\begin{lem}\label{Lem of spectral gap}
	Let \(\bbv=(v_1,\cdots,v_d)'\in\mathbb{R}^d\) such that \(0<v_d\leq\cdots\leq v_1\), where \(d\geq3\), then 
	$${\rm Gap}(\boldsymbol{v}\boldsymbol{v}'-{\rm diag}(\boldsymbol{v}^{\circ2}))>\sum_{i=3}^dv_i^2.$$
\end{lem}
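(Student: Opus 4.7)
The approach is to locate all eigenvalues of $A:=\bbv\bbv'-\diag(\bbv^{\circ 2})$ explicitly via a secular (pole-interlacing) equation and then combine this with the trace identity $\tr(A)=0$ to extract the gap bound. First, $A$ is symmetric with zero diagonal and strictly positive off-diagonal entries $A_{ij}=v_iv_j$; Perron-Frobenius therefore ensures the top eigenvalue $\lambda_1$ is positive, simple, and strictly dominates every other eigenvalue in absolute value, so $\lambda_1$ is automatically the largest singular value of $A$, and the second-largest singular value is $\max_{j\geq 2}|\lambda_j|$.

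The secular equation is obtained by solving $Aw=\lambda w$ coordinate-wise: $v_i\alpha=(\lambda+v_i^2)w_i$ with $\alpha:=\bbv'w$, so when $\alpha\neq 0$ substituting $w_i=v_i\alpha/(\lambda+v_i^2)$ back into the definition of $\alpha$ yields
\begin{equation*}
  \sum_{i=1}^d\frac{v_i^2}{\lambda+v_i^2}=1.
\end{equation*}
Assume first that the $v_i^2$ are pairwise distinct. The left-hand side is strictly decreasing on each connected component of its domain with simple poles at $-v_i^2$, so a standard sign-change count produces exactly one root in $(0,\infty)$ (namely $\lambda_1$) and exactly one root in each interval $(-v_k^2,-v_{k+1}^2)$ for $k=1,\dots,d-1$. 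Ordering the eigenvalues $\lambda_1>\lambda_2\geq\cdots\geq\lambda_d$, this gives the localization $\lambda_j\in(-v_{d+1-j}^2,-v_{d+2-j}^2)$ for $j=2,\dots,d$; in particular $\lambda_d\in(-v_1^2,-v_2^2)$, so $|\lambda_d|\geq|\lambda_j|$ for every $j\geq 2$, and $|\lambda_j|>v_{d+2-j}^2$.

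Since $\lambda_2,\dots,\lambda_{d-1}<0$, the trace identity $\sum_j\lambda_j=0$ gives
\begin{equation*}
  {\rm Gap}(A)=\lambda_1-|\lambda_d|=\lambda_1+\lambda_d=-\sum_{j=2}^{d-1}\lambda_j=\sum_{j=2}^{d-1}|\lambda_j|>\sum_{j=2}^{d-1}v_{d+2-j}^2=\sum_{k=3}^d v_k^2,
\end{equation*}
after the reindexing $k=d+2-j$. This settles the distinct case. For the general case (some $v_i^2$ coinciding), I would perturb $\bbv$ to make all coordinates distinct, apply the above bound, and pass to the limit using continuity of singular values; at a cluster where several $v_i$ coincide the secular equation picks up fewer roots but the deficit is supplied by explicit eigenvectors $e_i-e_j$ with $v_i=v_j$, carrying eigenvalue $-v_i^2$, and these can be tracked through the limit.

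The main subtlety is strictness: in the symmetric case $\bbv=(1,\dots,1)$ one has $A=\bbS_d$ with spectrum $\{d-1,-1,\dots,-1\}$ so ${\rm Gap}(A)=d-2=\sum_{i\geq 3}v_i^2$ exactly, and more generally any cluster $v_2=\cdots=v_d$ saturates the bound. The secular-equation localization is strict when all $v_i^2$ differ, so the hard part is not the chain of inequalities but carefully pinning down the strictness hypothesis: either one reads the conclusion with ``$\geq$'' in full generality, or one needs to inject the contribution of each coincident cluster through the degenerate eigenvectors and verify which coincidences force equality.
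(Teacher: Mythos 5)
Your approach is essentially the paper's own: both pass to the secular equation for the rank-one perturbation $\bbv\bbv'-\diag(\bbv^{\circ2})$, localize the negative roots in the intervals $(-v_k^2,-v_{k+1}^2)$, and then combine with the trace-zero identity to express ${\rm Gap}(A)=\lambda_1+\lambda_d=\sum_{j=2}^{d-1}|\lambda_j|$ and bound each $|\lambda_j|$ from below by a single $v_i^2$. The only cosmetic difference is that the paper substitutes $t=1/\lambda$ and works with $\sum_i(1+tv_i^2)^{-1}=d-1$ (derived via the matrix determinant lemma), while you work directly with $\sum_i v_i^2/(\lambda+v_i^2)=1$; these are the same equation. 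You also get the top singular value $=\lambda_1$ via Perron--Frobenius primitivity, whereas the paper reads it off from noting that there is exactly one positive eigenvalue; again the same content.

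Your flagging of the strictness issue is the genuinely valuable part of your write-up, and it is correct: the lemma as stated, with a strict inequality, is false. Your example $\bbv=(1,\dots,1)$ gives $A=\bbS_d$ with ${\rm Gap}(A)=d-2=\sum_{i\geq3}v_i^2$; more generally, whenever $v_2=\cdots=v_d$ (with $v_1\geq v_2$ arbitrary) one has $-v_2^2$ with multiplicity $d-2$, only two secular roots $\lambda_1$ and $\lambda_d$, and trace zero forces $\lambda_1+\lambda_d=(d-2)v_2^2=\sum_{i\geq3}v_i^2$ exactly. So the equality cases are larger than you conjectured — they are exactly the configurations $v_2=\cdots=v_d$ — and the correct general statement is $\geq$, with strictness holding iff the $v_i^2$ for $i\geq 2$ are not all equal. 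Notably the paper's own proof concludes with a ``$\geq$'' and the displayed chain of inequalities in its degenerate-case analysis is non-strict, contradicting the ``$>$'' in the lemma statement; for the purposes of Proposition B.1, which only needs the gap to be strictly positive, the weaker $\geq\sum_{i\geq3}v_i^2>0$ suffices. Your perturbation-and-continuity handling of coincidences is adequate for the $\geq$ version and is morally what the paper's case split on $\det(\bbI_d+s\diag(\bbv^{\circ2}))=0$ is doing more explicitly, but you should either state the conclusion as $\geq$, or add the hypothesis that $v_2,\dots,v_d$ are not all equal if you want $>$.
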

\begin{proof}
	First, let \(t^{-1}\) be the eigenvalue of \(\boldsymbol{v}\boldsymbol{v}'-{\rm diag}(\boldsymbol{v}^{\circ2})\), by the matrix determinant lemma, it implies that
	$$0=\det(t(\boldsymbol{v}\boldsymbol{v}'-{\rm diag}(\boldsymbol{v}^{\circ2}))-\boldsymbol{I}_d)=(-1)^d\det(\boldsymbol{I}_d+t{\rm diag}(\boldsymbol{v}^{\circ2}))(1-t\boldsymbol{v}'(\boldsymbol{I}_d+t{\rm diag}(\boldsymbol{v}^{\circ2}))^{-1}\boldsymbol{v}).$$
	In fact, \(\bbv\bbv'-\diag(\bbv^{\circ2})\) always has one positive eigenvalue due to \(\boldsymbol{1}_d'(\bbv\bbv'-\diag(\bbv^{\circ2}))\boldsymbol{1}_d=\sum_{k\neq l}^dv_kv_l>0\). Suppose \(t>0\), that is, \(t^{-1}\) is a positive eigenvalue, we can obtain \(1-t\boldsymbol{v}'(\boldsymbol{I}_d+t{\rm diag}(\boldsymbol{v}^{\circ2}))^{-1}\boldsymbol{v}=0\), and the equation
	$$\sum_{i=1}^d\frac{1}{1+tv_i^2}=d-1$$
	has total \(d-1\) negative roots denoted by \(t_i\) such that \(0>t_2>\cdots>t_d\), where \(v_i^2<-t_i^{-1}<v_{i-1}^2\) for \(i=2,\cdots,d\) and one positive zero \(t_1\). Hence, we conclude that \(\boldsymbol{v}\boldsymbol{v}'-{\rm diag}(\boldsymbol{v}^{\circ2})\) only has one positive eigenvalue \(t_1^{-1}\). Let \(l=t^{-1}\), then we obtain
	$$\prod_{i=1}^d(l+v_i^2)-\sum_{i=1}^dv_i^2\prod_{j\neq i}^d(l+v_j^2)=0.$$
	Since the coefficient of \(l^{d-1}\) is zero, then \(\sum_{i=1}^dt_i^{-1}=0\). Next, suppose \(s^{-1}<0\) is a negative eigenvalue of \(\boldsymbol{v}\boldsymbol{v}'-{\rm diag}(\boldsymbol{v}^2)\) such that \(1-s\boldsymbol{v}'(\boldsymbol{I}_d+s{\rm diag}(\boldsymbol{v}^{\circ2}))^{-1}\boldsymbol{v}=0\), then we have 
	$$t_1^{-1}+s^{-1}\geq t_1^{-1}+t_2^{-1}=-\sum_{i=3}^d t_i^{-1}\geq\sum_{i=3}^dv_i^2.$$
	On the other hand, if \(\det(\boldsymbol{I}_d+s{\rm diag}(\boldsymbol{v}^{\circ2}))=0\), which implies that \(s=-v_i^{-2}\). Consider two possible cases. First, if \(v_1=v_2\), then
	$$t_1^{-1}+s^{-1}\geq-\sum_{i=2}^d t_i^{-1}-v_1^2\geq\sum_{i=2}^d v_i^2-v_1^2=\sum_{i=3}^dv_i^2.$$
	Second, if \(v_1>v_2\), then we claim that \(s\neq -v_1^{-2}\). Otherwise, there exists a nonzero \(\bbx\in\mathbb{R}^d\) such that
	$$(\boldsymbol{v}\boldsymbol{v}'-{\rm diag}(\boldsymbol{v}^{\circ2})+v_1^2\bbI_d)\bbx=0\ \Rightarrow\ v_1^2 x_k+v_k\sum_{j\neq k}^d v_j x_j=0.$$
	Let \(k=1\), it implies that \(\langle\boldsymbol{v},\bbx\rangle=0\). When \(k>1\), notice that
	\begin{align*}
		(v_1^2-v_k^2)x_k=-v_k\sum_{j=1}^dv_jx_j=0,
	\end{align*}
	since \(v_1>v_k\) for \(k>1\), we have \(x_k=0\) for \(k>1\), which further implies that \(x_1=0\) due to \(\langle\boldsymbol{v},\bbx\rangle=0\). It is a contradiction since \(\bbx\) is nonzero. As a result, we obtain that
	$$t_1^{-1}+s^{-1}\geq-\sum_{i=2}^d t_i^{-1}-v_2^2\geq\sum_{i=2}^d v_i^2-v_2^2=\sum_{i=3}^dv_i^2.$$
	By the Definition \ref{Def of spectral gap}, we have
	$${\rm Gap}(\boldsymbol{v}\boldsymbol{v}'-{\rm diag}(\boldsymbol{v}^2))=\min\{t_1^{-1}+s^{-1}:s<0\ {\rm and\ }s^{-1}\ {\rm is\ an\ eigenvalue\ of\ }\boldsymbol{v}\boldsymbol{v}'-{\rm diag}(\boldsymbol{v}^2)\}\geq\sum_{i=3}^dv_i^2,$$
	which completes our proof.
\end{proof}
\begin{remark}\label{Rem of bounded solutions}
	Since \(\boldsymbol{F}(z)\) in (\ref{Eq of operator F}) and \({\rm diag}(|\mfc^{-1/2}\circ\boldsymbol{g}(z)|)\boldsymbol{S}_d{\rm diag}(|\mfc^{-1/2}\circ\boldsymbol{g}(z)|)\) are similar, then the largest eigenvalue of \({\rm diag}(|\mfc^{-1/2}\circ\boldsymbol{g}(z)|)\boldsymbol{S}_d{\rm diag}(|\mfc^{-1/2}\circ\boldsymbol{g}(z)|)\) is the same as \(\bbF(z)\), which is strictly less than 1, then by Lemma \ref{Lem of spectral gap}, we have
	$$\sum_{i=1}^d\mathfrak{c}_i^{-1}|g_i(z)|^2-\max_{1\leq i\leq d}\mathfrak{c}_i^{-1}|g_i(z)|^2<\Vert\bbF(z)\Vert<1.$$
	Therefore, it implies that \(|g_i(z)|<\sqrt{\mathfrak{c}_i}\) for all \(i=1,\cdots,d\) except \(i=\arg\max_{1\leq i\leq d}\mathfrak{c}_i^{-1}|g_i(z)|^2\).
\end{remark}
Now, let us  prove Proposition \ref{Pro of stability operator} as follows:
\begin{proof}[Proof of Proposition \ref{Pro of stability operator}]
	By (\ref{Eq of stability operator}), since
	$$\bbB(z)={\rm diag}(\mfc^{-1/2}\circ\boldsymbol{g}(z))(\bbI_d-{\rm diag}(\mfc^{-1/2}\circ\boldsymbol{g}(z))\boldsymbol{S}_d{\rm diag}(\mfc^{-1/2}\circ\boldsymbol{g}(z))){\rm diag}(\mfc^{-1/2}\circ\boldsymbol{g}(z))^{-1},$$
	and
	\begin{align*}
		&\bbI_d-{\rm diag}(\mfc^{-1/2}\circ\boldsymbol{g}(z))\boldsymbol{S}_d{\rm diag}(\mfc^{-1/2}\circ\boldsymbol{g}(z))\\
		&=\diag(e^{{\rm i}\bbq(z)})(\diag(e^{-2{\rm i}\bbq(z)})-{\rm diag}(|\mfc^{-1/2}\circ\boldsymbol{g}(z)|)\boldsymbol{S}_d{\rm diag}(|\mfc^{-1/2}\circ\boldsymbol{g}(z)|))\diag(e^{{\rm i}\bbq(z)}),
	\end{align*}
	then \(\bbB(z)\) is invertible if and only if \(\diag(e^{-2{\rm i}\bbq(z)})-{\rm diag}(|\mfc^{-1/2}\circ\boldsymbol{g}(z)|)\boldsymbol{S}_d{\rm diag}(|\mfc^{-1/2}\circ\boldsymbol{g}(z)|)\) is invertible. Based on Lemma \ref{Lem of spectral gap}, we have
	$${\rm Gap}({\rm diag}(|\mfc^{-1/2}\circ\boldsymbol{g}(z)|)\boldsymbol{S}_d{\rm diag}(|\mfc^{-1/2}\circ\boldsymbol{g}(z)|))>\sum_{i=3}^{d}\mathfrak{c}_{(i)}^{-1}|g_{(i)}(z)|^2,$$
	where \(\mathfrak{c}_{(i)}^{-1}|g_{(i)}(z)|^2\) is the \(i\)-th largest entries in \(|\mfc^{-1/2}\circ\boldsymbol{g}(z)|^2\). Recall that \(|g_i(z)|\geq C_{\eta_0,d,\mfc}\) when \(z\in\mathbb{H}_{\eta_0}\), see Lemma \ref{Lem of contraction}, so the spectral gap of \({\rm diag}(|\mfc^{-1/2}\circ\boldsymbol{g}(z)|)\boldsymbol{S}_d{\rm diag}(|\mfc^{-1/2}\circ\boldsymbol{g}(z)|)\) is positive for \(z\in\mathbb{H}_{\eta_0}\). By the Remark \ref{Rem of bounded solutions}, we know that the largest eigenvalue of \({\rm diag}(|\mfc^{-1/2}\circ\boldsymbol{g}(z)|)\boldsymbol{S}_d{\rm diag}(|\mfc^{-1/2}\circ\boldsymbol{g}(z)|)\) is strictly smaller than 1, which further implies that
	$${\rm diag}(e^{-2{\rm i}\boldsymbol{q}})-{\rm diag}(|\mfc^{-1/2}\circ\boldsymbol{g}(z)|)\boldsymbol{S}_d{\rm diag}(|\mfc^{-1/2}\circ\boldsymbol{g}(z)|)$$
	is invertible for \(z\in\mathbb{C}^+\). In fact, denote \(\lambda^{\pm}\) to be the largest positive \((+)\) and smallest negative \((-)\) eigenvalue of \({\rm diag}(|\mfc^{-1/2}\circ\boldsymbol{g}(z)|)\boldsymbol{S}_d{\rm diag}(|\mfc^{-1/2}\circ\boldsymbol{g}(z)|)\), then
	$$\min_{1\leq i\leq d}|e^{-2{\rm i}q_i}-\lambda^-|>|-1-\lambda^-|>|\lambda^++\lambda^-|={\rm Gap}({\rm diag}(|\mfc^{-1/2}\circ\boldsymbol{g}(z)|)\boldsymbol{S}_d{\rm diag}(|\mfc^{-1/2}\circ\boldsymbol{g}(z)|))$$
	which suggests all \(e^{-2{\rm i}q_i}\) are not the eigenvalues of \(\bbF\).
\end{proof}
After establishing the invertibility of \(\bbB(z)\), we further need the following more general results.
\begin{pro}\label{Pro of invertible matrices}
	Under Assumption {\rm \ref{Ap of dimension}}, for any \(\eta_0>0\) and \(z_1,z_2\in\mathbb{H}_{\eta_0}\) in {\rm (\ref{Eq of H0 region})},
	\begin{align}
		\boldsymbol{\Lambda}^{(d)}(z_1,z_2)&:=\boldsymbol{I}_d-{\rm diag}(\mfc^{-1/2}\circ\boldsymbol{g}(z_1))\boldsymbol{S}_d{\rm diag}(\mfc^{-1/2}\circ\boldsymbol{g}(z_2))\label{Eq of invertible}\\
		\boldsymbol{\Pi}^{(d)}(z_1,z_2)&:=\boldsymbol{I}_d-{\rm diag}(\mfc^{-1}\circ\boldsymbol{g}(z_1)\circ\bbg(z_2))\boldsymbol{S}_d\label{Eq of invertible 2}
	\end{align}
	are invertible.
\end{pro}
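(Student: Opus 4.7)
\noindent\textbf{Proof plan for Proposition \ref{Pro of invertible matrices}.}
The plan is to reduce both invertibility claims to a single estimate on the spectral radius of an auxiliary symmetric matrix, and then bound that spectral radius using a Cauchy--Schwarz comparison with the single-argument stability operator already treated in Proposition \ref{Pro of stability operator}. Set $\bbD_i := \diag(\mfc^{-1/2}\circ\bbg(z_i))$ for $i=1,2$, so that $\bbLa^{(d)}(z_1,z_2) = \bbI_d - \bbD_1\bbS_d\bbD_2$ and $\bbPi^{(d)}(z_1,z_2) = \bbI_d - \bbD_1\bbD_2\bbS_d$. Since $\bbD_1,\bbD_2$ are diagonal they commute, and a direct computation gives $\bbD_2\bbLa^{(d)}(z_1,z_2)\bbD_2^{-1} = \bbPi^{(d)}(z_1,z_2)$. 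Hence $\bbLa^{(d)}$ and $\bbPi^{(d)}$ are similar, and it is enough to prove that $\bbLa^{(d)}(z_1,z_2)$ is invertible, i.e.\ that $1$ is not an eigenvalue of $\bbD_1\bbS_d\bbD_2$.

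First, I would symmetrize. A diagonal similarity by $\diag(\sqrt{\bbD_2/\bbD_1})$ (any fixed choice of square-root branches) transforms $\bbD_1\bbS_d\bbD_2$ into the complex symmetric matrix $\bbK(z_1,z_2) := \diag(\bbh)\bbS_d\diag(\bbh)$, where $h_i := \sqrt{\mfc_i^{-1}g_i(z_1)g_i(z_2)}$. Consequently it suffices to establish $\rho(\bbK(z_1,z_2))<1$, which by the standard Perron--Frobenius bound $\rho(\bbK)\le\rho(|\bbK|)$ reduces to bounding the spectral radius of the entrywise-modulus matrix $|\bbK|$. Observe that
\begin{align*}
|\bbK(z_1,z_2)|_{ij}
  = (\bbS_d)_{ij}\sqrt{\mfc_i^{-1}\mfc_j^{-1}|g_i(z_1)g_j(z_1)||g_i(z_2)g_j(z_2)|}
  = \sqrt{\widetilde\bbF(z_1)_{ij}\,\widetilde\bbF(z_2)_{ij}},
\end{align*}
where $\widetilde\bbF(z) := \diag(|\mfc^{-1/2}\circ\bbg(z)|)\bbS_d\diag(|\mfc^{-1/2}\circ\bbg(z)|)$ is the symmetric non-negative matrix already appearing in the proof of Proposition \ref{Pro of stability operator}; it is similar to $\bbF^{(d)}(z)$ of {\rm (\ref{Eq of operator F})}, hence $\rho(\widetilde\bbF(z)) = \Vert\bbF^{(d)}(z)\Vert<1$ by {\rm (\ref{Eq of F norm})}.

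The key analytic step is a Cauchy--Schwarz argument. For any non-negative vector $\bbx$ and every $i$,
\begin{align*}
(|\bbK|\bbx)_i
  = \sum_j\sqrt{\widetilde\bbF(z_1)_{ij}\,\widetilde\bbF(z_2)_{ij}}\,x_j
  \le \sqrt{\big(\widetilde\bbF(z_1)\bbx\big)_i\big(\widetilde\bbF(z_2)\bbx\big)_i}.
\end{align*}
Taking $\bbx$ to be the Perron--Frobenius eigenvector of $|\bbK|$, squaring, summing over $i$, and applying Cauchy--Schwarz once more yields $\rho(|\bbK|)^2\Vert\bbx\Vert_2^2 \le \Vert\widetilde\bbF(z_1)\bbx\Vert_2\,\Vert\widetilde\bbF(z_2)\bbx\Vert_2 \le \rho(\widetilde\bbF(z_1))\rho(\widetilde\bbF(z_2))\Vert\bbx\Vert_2^2$, so that $\rho(|\bbK|)\le\sqrt{\rho(\widetilde\bbF(z_1))\rho(\widetilde\bbF(z_2))}<1$. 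Therefore $\rho(\bbD_1\bbS_d\bbD_2) = \rho(\bbK(z_1,z_2))<1$, so $\bbLa^{(d)}(z_1,z_2)$ is invertible, and invertibility of $\bbPi^{(d)}(z_1,z_2)$ follows from the similarity in the first paragraph.

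The main obstacle I anticipate is the symmetrization step: unlike the single-argument case where the stability operator can be diagonally conjugated to a symmetric matrix with real non-negative counterpart, here $\bbK(z_1,z_2)$ is only complex symmetric, so one cannot directly appeal to the spectral-gap estimate of Lemma \ref{Lem of spectral gap} and the phase argument used for $\bbB^{(d)}(z)$. The passage via $\rho(\bbK)\le\rho(|\bbK|)$ together with the entrywise geometric-mean identity is what bypasses this difficulty; one must only take care that any fixed branch choice in $\bbh$ is admissible, which is harmless because both the similarity and the modulus bound are branch-independent.
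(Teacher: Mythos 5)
Your argument is correct and is essentially the same proof as the paper's. The paper conjugates $\bbLa^{(d)}(z_1,z_2)$ to $\bbI_d-\bbGa(z_1,z_2)$ with $\bbGa(z_1,z_2)=\diag(\mfc^{-1/2}\circ\sqrt{|\bbg(z_1)\circ\bbg(z_2)|})\,\bbS_d\,\diag(\mfc^{-1/2}\circ\sqrt{|\bbg(z_1)\circ\bbg(z_2)|})$ and then proves, via exactly your double Cauchy--Schwarz estimate applied entrywise, that $\Vert\bbGa(z_1,z_2)\Vert^2\le\Vert\bbGa(z_1,z_1)\Vert\cdot\Vert\bbGa(z_2,z_2)\Vert\le\Vert\bbF^{(d)}(z_1)\Vert\cdot\Vert\bbF^{(d)}(z_2)\Vert<1$; your matrix $|\bbK(z_1,z_2)|$ is precisely $\bbGa(z_1,z_2)$, and passing through the Perron eigenvector rather than a supremum over unit vectors is only a cosmetic difference.
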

\begin{remark}\label{Rem of invertible submatrix}
	In particular, when \(z_1=z_2\), we have \(\bbPi^{(d)}(z,z)=\bbB^{(d)}(z)\). The purpose of proving the above proposition is that \(\bbPi^{(d)}(z,z)^{-1}\) will appear in the asymptotic mean and variance of the LSS of the matrix \(\bbM\). 
\end{remark}
Similarly, we will simplify \(\bbPi^{(d)}(z,z)\) by \(\bbPi(z,z)\) in following proofs, as does others.
\begin{proof}[Proof of Proposition \ref{Pro of invertible matrices}]
	Notice that \(\bbLa^{(d)}(z_1,z_2)\) and \(\bbPi^{(d)}(z_1,z_2)\) are similar, so it is enough to prove that one of them is invertible. We have already shown that \({\rm diag}(e^{-2{\rm i}\boldsymbol{q}})-\boldsymbol{F}\) is invertible in Proposition \ref{Pro of stability operator}, which implies that 
	$${\rm diag}(e^{-2{\rm i}\boldsymbol{q}})-{\rm diag}(|\mfc^{-1/2}\circ\boldsymbol{g}(z)|)\boldsymbol{S}_d{\rm diag}(|\mfc^{-1/2}\circ\boldsymbol{g}(z)|)={\rm diag}(e^{-{\rm i}\boldsymbol{q}})\boldsymbol{\Lambda}(z,z){\rm diag}(e^{-{\rm i}\boldsymbol{q}})$$
	is also invertible. Next, let us  further consider the case of \(\boldsymbol{\Lambda}(z_1,z_2)\) for \(z_1\neq z_2\in\mathbb{H}_{\eta_0}\). Notice that 
	\begin{align}
		\bbLa(z_1,z_2)=\diag(\bbg(z_1))^{1/2}\diag(\bbg(z_2))^{-1/2}(\bbI_d-\bbGa(z_1,z_2))\diag(\bbg(z_2))^{1/2}\diag(\bbg(z_1))^{-1/2},\label{Eq of Gamma}
	\end{align}
	where
	$$\boldsymbol{\Gamma}(z_1,z_2):={\rm diag}(\mfc^{-1/2}\circ\sqrt{|\boldsymbol{g}(z_1)\circ\boldsymbol{g}(z_2)|})\boldsymbol{S}_d{\rm diag}(\mfc^{-1/2}\circ\sqrt{|\boldsymbol{g}(z_1)\circ\boldsymbol{g}(z_2)|}).$$
	Hence, \(\bbLa(z_1,z_2)\) is invertible if and only if \(\bbI_d-\bbGa(z_1,z_2)\) is invertible. For any unit vector \(\boldsymbol{x}\in\mathbb{R}^d\), we have
	\begin{align}
		\Vert\boldsymbol{\Gamma}(z_1,z_2)\boldsymbol{x}\Vert_2^2&=\sum_{i=1}^d\Big(\sum_{j\neq i}^d(\mathfrak{c}_i\mathfrak{c}_j)^{-1/2}|g_i(z_1)g_i(z_2)g_j(z_1)g_j(z_2)|^{1/2}x_j\Big)^2\notag\\
		&\leq\sum_{i=1}^d\Big(\sum_{j\neq i}^d(\mathfrak{c}_i\mathfrak{c}_j)^{-1/2}|g_i(z_1)g_j(z_1)x_j|\Big)\Big(\sum_{j\neq i}^d(\mathfrak{c}_i\mathfrak{c}_j)^{-1/2}|g_i(z_2)g_j(z_2)x_j|\Big)\notag\\
		&\leq\left[\sum_{i=1}^d\Big(\sum_{j\neq i}^d(\mathfrak{c}_i\mathfrak{c}_j)^{-1/2}|g_i(z_1)g_j(z_1)x_j|\Big)^2\times\sum_{i=1}^d\Big(\sum_{j\neq i}^d(\mathfrak{c}_i\mathfrak{c}_j)^{-1/2}|g_i(z_2)g_j(z_2)x_j|\Big)^2\right]^{1/2}\notag\\
		&=\Vert\boldsymbol{\Gamma}(z_1,z_1)|\bbx|\Vert_2\times\Vert\boldsymbol{\Gamma}(z_2,z_2)|\bbx|\Vert_2\leq\Vert\boldsymbol{\Gamma}(z_1,z_1)\Vert\times\Vert\boldsymbol{\Gamma}(z_2,z_2)\Vert.\notag
	\end{align}
	Since \(\bbF(z_1)\) in (\ref{Eq of operator F}) and
	$$\bbGa(z_1,z_1)={\rm diag}(|\mfc^{-1/2}\circ\boldsymbol{g}(z_1)|)\boldsymbol{S}_d{\rm diag}(|\mfc^{-1/2}\circ\boldsymbol{g}(z_1)|)$$
	are similar, their eigenvalues coincide. By (\ref{Eq of F norm}) and the symmetry of \(\bbGa(z_1,z_1)\), we conclude that \(\Vert\boldsymbol{\Gamma}(z_1,z_1)\Vert\leq\Vert\bbF(z_1)\Vert<1\), as does \(\Vert\boldsymbol{\Gamma}(z_2,z_2)\Vert\) and \(\Vert\boldsymbol{\Gamma}(z_1,z_2)\Vert\). Furthermore, by Lemma \ref{Lem of spectral gap}, we can also conclude that \({\rm Gap}(\boldsymbol{\Gamma}(z_1,z_2))>\sum_{i=3}^d\mathfrak{c}_{(i)}^{-1}|g_{(i)}(z_1)g_{(i)}(z_2)|\), so we can show that \(\boldsymbol{\Lambda}(z_1,z_2)\) is invertible for any \(z_1,z_2\in\mathbb{C}^+\) by the same arguments as those in Proposition \ref{Pro of stability operator}.
\end{proof}
Finally, we need the upper bound of the spectral norm of \(\bbPi^{(d)}(z_1,z_2)^{-1}\), i.e.
\begin{pro}\label{Pro of inverse norm}
	Under Assumption {\rm \ref{Ap of dimension}}, for any \(\eta_0>0\) and \(z_1,z_2\in\mathbb{H}_{\eta_0}\) in {\rm (\ref{Eq of H0 region})}, we have
	$$\Vert\bbPi^{(d)}(z_1,z_2)^{-1}\Vert,\Vert\bbLa^{(d)}(z_1,z_2)^{-1}\Vert\leq C_{d,\mfc}\eta_0^{-4}.$$
\end{pro}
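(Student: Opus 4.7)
The plan is to reduce the norm bound to a quantitative spectral gap estimate for the already-analyzed self-adjoint matrix $\bbI_d - \bbGa^{(d)}(z_1,z_2)$, and then keep track of the diagonal conjugation factors that relate $\bbLa^{(d)}$ (and $\bbPi^{(d)}$) to this self-adjoint form. First I would note that $\bbPi^{(d)}$ and $\bbLa^{(d)}$ are similar through a diagonal conjugation with entries $(\mfc_i^{-1/2}g_i(z_1))/(\mfc_i^{-1/2}g_i(z_2))$-type ratios, so it suffices to bound $\Vert \bbLa^{(d)}(z_1,z_2)^{-1}\Vert$. From the identity \eqref{Eq of Gamma}, $\bbLa^{(d)} = D (\bbI_d-\bbGa^{(d)}) D^{-1}$ where $D=\diag(\bbg(z_1))^{1/2}\diag(\bbg(z_2))^{-1/2}$, and Lemma \ref{Lem of contraction} gives the two-sided bounds $C_{d,\mfc}\eta_0^{3}\leq |g_i(z)|\leq \mfc_1\eta_0^{-1}$ on $\mathbb{H}_{\eta_0}$. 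Hence $\Vert D\Vert\,\Vert D^{-1}\Vert \leq C_{d,\mfc}\eta_0^{-2}$ (the power $\eta_0^{-2}$ comes from the $1/2$ exponent in the square roots), so it remains to prove
\[
\bigl\Vert(\bbI_d-\bbGa^{(d)}(z_1,z_2))^{-1}\bigr\Vert \;\leq\; C_{d,\mfc}\,\eta_0^{-2}.
\]

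Next I would exploit that $\bbGa^{(d)}(z_1,z_2)$ is real symmetric with non-negative entries. By spectral theorem, $\Vert(\bbI_d-\bbGa^{(d)})^{-1}\Vert = 1/\mathrm{dist}(1,\mathrm{spec}(\bbGa^{(d)}))$. Since $\bbGa^{(d)}(z_i,z_i)$ is similar to $\bbF^{(d)}(z_i)$, the submultiplicative estimate established in the proof of Proposition \ref{Pro of invertible matrices},
\[
\Vert\bbGa^{(d)}(z_1,z_2)\Vert \;\leq\; \bigl(\Vert\bbF^{(d)}(z_1)\Vert\,\Vert\bbF^{(d)}(z_2)\Vert\bigr)^{1/2},
\]
reduces the problem to a quantitative lower bound on $1-\Vert\bbF^{(d)}(z)\Vert$. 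For this I would use the explicit identity \eqref{Eq of F norm}, namely $1-\Vert\bbF^{(d)}(z)\Vert = \Im(z)\langle\bbf,\mfc^{-1}\circ|\bbg|\rangle/\langle\bbf,\sin\bbq\rangle$, where $\bbf>0$ is the Perron-Frobenius eigenvector. The ratio $\langle\bbf,\mfc^{-1}\circ|\bbg|\rangle/\langle\bbf,\sin\bbq\rangle$ is $\geq \min_i(|g_i(z)|/\mfc_i)/\max_i\sin q_i \geq \min_i|g_i(z)|/\mfc_1 \geq C_{d,\mfc}\eta_0^{3}\cdot\mfc_1^{-1}$ (using $\sin q_i\leq 1$ and the lower bound on $|g_i|$), which together with $\Im(z)\geq\eta_0$ yields $1-\Vert\bbF^{(d)}(z)\Vert \geq C_{d,\mfc}\eta_0^{4}$. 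Taking the geometric mean gives $1-\Vert\bbGa^{(d)}(z_1,z_2)\Vert\geq C_{d,\mfc}\eta_0^{4}$ as well (using $1-\sqrt{ab}\geq \tfrac12((1-a)+(1-b)-(1-a)(1-b))\geq \tfrac12 \min(1-a,1-b)$ for $a,b\in[0,1]$). Combined with the conjugation factor $\eta_0^{-2}$ from $D$, the final bound would read $C_{d,\mfc}\eta_0^{-4-2}$, which is slightly worse than stated, so the argument must be tightened, either by working with $\bbPi^{(d)}$ directly (for which the analogous similarity factor is $\kappa(D)$ with exponent $1$ rather than $1/2$, but the entries are $\mfc_i^{-1}g_i(z_1)g_i(z_2)$ and one proceeds through the real symmetric surrogate $\bbGa^{(d)}$ via a phase-removal diagonal rescaling whose condition number is $1$), or by a sharper spectral-gap estimate.

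The main obstacle I expect is squeezing out the correct power of $\eta_0$. A purely self-adjoint-spectrum approach via $\Vert(\bbI-\bbGa^{(d)})^{-1}\Vert \leq (1-\Vert\bbGa^{(d)}\Vert)^{-1}$ is lossy because the conjugation diagonal $D$ has condition number $\kappa(D)$ that consumes some of the budget. A cleaner route, which I would actually take, is to notice that the phase-normalized matrix $\diag(e^{-{\rm i}\bbq(z_1)/2-{\rm i}\bbq(z_2)/2})\bbLa^{(d)}\diag(e^{-{\rm i}\bbq(z_1)/2-{\rm i}\bbq(z_2)/2})$ equals $\diag(e^{-{\rm i}\bbq(z_1)-{\rm i}\bbq(z_2)})-\bbGa^{(d)}$ (up to appropriate sign conventions), whose distance to singularity can be estimated exactly as in the proof of Proposition \ref{Pro of stability operator} (through $|e^{-{\rm i}(q_i(z_1)+q_i(z_2))}-\lambda|$ for $\lambda\in\mathrm{spec}(\bbGa^{(d)})$), and whose similarity transform is unitary, so no condition number is paid. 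This reduces everything to showing $\min_i |e^{-{\rm i}(q_i(z_1)+q_i(z_2))}-\lambda|\geq C_{d,\mfc}\eta_0^{4}$ for every eigenvalue $\lambda$ of $\bbGa^{(d)}(z_1,z_2)$, which I would handle by separating the top eigenvalue (using the $\eta_0^4$ gap above) from the others (using the spectral gap estimate of Lemma \ref{Lem of spectral gap} together with the $|g_i|\geq C\eta_0^3$ bound). The delicate step is ensuring none of the eigenvalues of $\bbGa^{(d)}$ lands too close to a point $e^{-{\rm i}(q_i(z_1)+q_i(z_2))}$ on the unit circle, and this is where the combination of the top-eigenvalue gap $1-\Vert\bbGa^{(d)}\Vert\gtrsim \eta_0^4$ with the strictly positive imaginary part of $q_i(z_1)+q_i(z_2)$ (which keeps $e^{-{\rm i}(q_i(z_1)+q_i(z_2))}$ strictly inside the unit disk) becomes essential.
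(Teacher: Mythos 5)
Your overall architecture matches the paper's: use the similarity \eqref{Eq of Gamma}, bound the condition number of the diagonal conjugating matrix $D$, then bound $(1-\Vert\bbGa^{(d)}(z_1,z_2)\Vert)^{-1}$ via $\Vert\bbGa^{(d)}(z_1,z_2)\Vert^2\leq\Vert\bbF^{(d)}(z_1)\Vert\Vert\bbF^{(d)}(z_2)\Vert$ and the Perron--Frobenius identity \eqref{Eq of F norm}. The gap is a single wrong numerical input: you import the lower bound $|g_i(z)|\geq C_{d,\mfc}\eta_0^{3}$ from Lemma \ref{Lem of contraction}, but that $\eta_0^3$ rate is the lower bound on the \emph{imaginary part} $\Im(g_i(z))$ (it is what defines the domain $\mathscr{B}_{\eta_0}^d$); the \emph{modulus} satisfies the much stronger two-sided bound $C_{d,\mfc}\,\eta_0\leq|g_i(z)|\leq\eta_0^{-1}$, which is the bound the Lemma actually proves for $|\boldsymbol{\Psi}_d(z,\bbu)_j|$ and the one the paper invokes at this step. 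This error propagates: it makes your estimate for $\Vert D\Vert\,\Vert D^{-1}\Vert$ either $\eta_0^{-4}$ (tracking consistently) or the inconsistent $\eta_0^{-2}$ you wrote, and it inflates $(1-\Vert\bbF^{(d)}(z)\Vert)^{-1}$ to $C\eta_0^{-4}$, so the final product overshoots the target $\eta_0^{-4}$ — which is why you felt compelled to propose a sharper route. With the correct bound $|g_i(z)|\geq C_{d,\mfc}\eta_0$, the direct argument closes immediately: $\Vert D\Vert\,\Vert D^{-1}\Vert\leq C\eta_0^{-2}$ and, from \eqref{Eq of F norm}, $1-\Vert\bbF^{(d)}(z)\Vert\geq\Im(z)\cdot\min_i(\mfc_i^{-1}|g_i(z)|)\geq C\eta_0^2$, giving $\Vert\bbLa^{(d)}(z_1,z_2)^{-1}\Vert\leq C\eta_0^{-2}\cdot C\eta_0^{-2}=C\eta_0^{-4}$.

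The ``cleaner route'' you sketch as a rescue — conjugating by the pure phase diagonal $\diag(e^{-{\rm i}(\bbq(z_1)+\bbq(z_2))/2})$ to remove the similarity cost — only works when $z_1=z_2$. In that case $g_i(z)e^{-{\rm i}q_i(z)}=|g_i(z)|$ is real and the congruence indeed produces $\diag(e^{-2{\rm i}\bbq(z)})-\bbF^{(d)}(z)$ with $\bbF^{(d)}$ real, which is exactly what happens inside Proposition \ref{Pro of stability operator}. For $z_1\neq z_2$, however, $g_i(z_1)e^{-{\rm i}(q_i(z_1)+q_i(z_2))/2}=|g_i(z_1)|e^{{\rm i}(q_i(z_1)-q_i(z_2))/2}$ carries a residual nonvanishing phase, so the congruence does not yield a real matrix minus a phase diagonal. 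The natural similarity that produces a real-entried $\bbGa^{(d)}(z_1,z_2)$ necessarily uses factors involving the modulus ratios $|g_i(z_1)/g_i(z_2)|^{1/2}$, and the resulting condition number is genuinely $\eta_0^{-2}$; you cannot make it unitary, nor do you need to once the correct bound on $|g_i|$ is in place.
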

\begin{proof}
	Let us  first prove \(\Vert\bbLa(z_1,z_2)^{-1}\Vert\leq C_{d,\boldsymbol{\mfc}}\eta_0^{-4}\). By \eqref{Eq of Gamma}, we know that 
    \begin{align*}
        \Vert\bbLa(z_1,z_2)^{-1}\Vert\leq C_{d,\boldsymbol{\mfc}}\eta_0^{-2}(1-\Vert\bbGa(z_1,z_2)\Vert)^{-1}.
    \end{align*}
    where we use the fact that \(C_{d,\boldsymbol{\mfc}}\eta_0\leq |g_i(z)|\leq\eta_0^{-1}\) by Lemma \ref{Lem of contraction}. Since we have shown that \(\Vert\bbGa(z_1,z_2)\Vert^2\leq\Vert\bbF(z_1)\Vert\cdot\Vert\bbF(z_2)\Vert\) in proofs of Proposition \ref{Pro of invertible matrices}, then 
    $$(1-\Vert\bbGa(z_1,z_2)\Vert)^{-1}\leq\max_{i=1,2}(1-\Vert\bbF(z_i)\Vert)^{-1}$$
    By \eqref{Eq of F norm}, we know that
    \begin{align*}
        (1-\Vert\bbF(z)\Vert)^{-1}=\frac{\langle\bbf,\sin\bbq\rangle}{\Im(z)\langle\bbf,\boldsymbol{\mfc}^{-1}\circ|\bbg|\rangle},
    \end{align*}
    since \(|g_i(z)|\geq C_{d,\boldsymbol{\mfc}}\eta_0\), then due to \(\bbf\) is a positive vector, we have
	$$\frac{\langle\bbf,\sin\bbq\rangle}{\Im(z)\langle\bbf,\boldsymbol{\mfc}^{-1}\circ\bbg\rangle}\leq\frac{\langle\bbf,\boldsymbol{1}_d\rangle}{\eta_0\min_{1\leq i\leq d}\mfc_i^{-1}|g_i(z)|\langle\bbf,\boldsymbol{1}_d\rangle}\leq C_{d,\boldsymbol{\mfc}}\eta_0^{-2},$$
    which implies that \(\Vert\bbLa(z_1,z_2)^{-1}\Vert\leq C_{d,\boldsymbol{\mfc}}\eta_0^{-4}\). Similarly, for \(\bbPi(z_1,z_2)\), since
	\begin{align*}
		\bbPi(z_1,z_2)&=\diag(\mfc^{1/2}\circ\bbg(z_2)^{-1/2}\circ\bbg(z_2)^{-1/2})(\bbI_d-\bbGa(z_1,z_2))\diag(\mfc^{-1/2}\circ\bbg(z_2)^{1/2}\circ\bbg(z_2)^{1/2}),
	\end{align*}
	we can complete our proof by repeating previous arguments.
\end{proof}
\subsection{Stability of the vector Dyson equation \texorpdfstring{\eqref{Eq of MDE 3 order}}{(C.1)}}\label{sec of Stability}

Roughly speaking, the stability of the vector Dyson equation \eqref{Eq of MDE 3 order} means that if a vector-valued function \(\bbv(z)\) satisfies a perturbed version of the vector Dyson equation with a small perturbation term \(\bbve(z)\) uniformly controlled over a given region \(\tilde{\mcS}_{\eta_0}\) as in \eqref{Eq of stability region 1} later, then the difference between \(\bbv(z)\) and the solution \(\bbg(z)\) of the original equation \eqref{Eq of MDE 3 order} is also small uniformly over \(\tilde{\mcS}_{\eta_0}\). This stability is a key tool to show that the empirical spectral distribution (ESD) of the matrix \(\bbM\) asymptotically converges to the measure \(\nu\) associated with \(\bbg(z)\), which we will prove to be the LSD of the matrix \(\bbM\) later. Here, we first define a region in the upper complex plane as follows:
\begin{align}
	\tilde{\mcS}_{\eta_0}:=\left\{z\in\mbC^+:{\rm dist}(z,[-\zeta,\zeta])\geq\eta_0,|\Re(z)|\leq\eta_0^{-1}\right\},\label{Eq of stability region 1}
\end{align}
where \(\zeta\) is the right and left boundary of the limiting spectral distribution \(\nu(\cdot)\) of the matrix \(\bbM\) in (\ref{Eq of support boundary}) later and
$${\rm dist}(z,[-\zeta,\zeta]):=\min\{|z-x|:x\in[-\zeta,\zeta]\}.$$
Here, we require that $\eta_0>0$ be sufficiently small so that \(\zeta<\eta_0^{-1}\). Next, let us  show that
\begin{thm}[Stability]\label{Thm of Stability}
	For any \(\eta_0>0\) and \(z\in\tilde{\mcS}_{\eta_0}\) in {\rm (\ref{Eq of stability region 1})}, let \(\bbv(z)=(v_1(z),\cdots,v_d(z))'\) be a \(d\)-dimensional analytic function on \(\mbC^+\) such that
	$$\bbve(z)=\frac{\mfc}{\bbv(z)}+z+\bbS_d\bbv(z),$$
	satisfies \(\sup_{z\in\tilde{\mcS}_{\eta_0}}\Vert\boldsymbol{\varepsilon}(z)\Vert_{\infty}=\mrO(\eta_0^{-\beta}N^{-\alpha})\) for some \(\alpha,\beta>0\),
	then we have  
	$$\sup_{z\in\tilde{\mcS}_{\eta_0}}\Vert\boldsymbol{g}(z)-\bbv(z)\Vert_{\infty}\leq\mrO(\eta_0^{-(\beta+4)}N^{-\alpha}),$$
	where \(\boldsymbol{g}(z)\) is the solution of {\rm (\ref{Eq of MDE 3 order})}.
\end{thm}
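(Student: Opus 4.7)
The plan is to transform the two Dyson equations into a perturbative identity for the difference $\bbd(z) := \bbv(z) - \bbg(z)$ governed by the stability operator $\bbPi^{(d)}(z,z)$ of (\ref{Eq of invertible 2}). Subtracting the exact equation $-\mfc/\bbg = z + \bbS_d\bbg$ from $-\mfc/\bbv = z + \bbS_d\bbv - \bbve$, using the telescoping identity $\mfc/\bbg - \mfc/\bbv = \mfc\circ\bbd/(\bbg\circ\bbv)$, and splitting the linear from the quadratic part via $1/(\bbg\circ\bbv) = 1/\bbg^{\circ 2} - \bbd/(\bbg^{\circ 2}\circ\bbv)$, I would arrive at
\begin{align*}
  \bbPi^{(d)}(z,z)\,\bbd(z) \;=\; -\diag\!\Big(\tfrac{\bbg(z)^{\circ 2}}{\mfc}\Big)\bbve(z) \;-\; \diag\!\Big(\tfrac{1}{\bbv(z)}\Big)\bigl(\bbd(z)\circ\bbd(z)\bigr).
\end{align*}
This presents $\bbd$ as the solution of a fixed-point problem whose linear part is invertible by Proposition \ref{Pro of invertible matrices} and whose quadratic term is small once $\bbd$ is.

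On the subregion $\mathbb{H}_{\eta_0}$ from (\ref{Eq of H0 region}), I would combine the uniform bounds $C_{d,\mfc}\eta_0 \leq |g_i(z)| \leq \eta_0^{-1}\mfc_1$ from Lemma \ref{Lem of contraction} with the resolvent estimate $\|\bbPi^{(d)}(z,z)^{-1}\| \leq C_{d,\mfc}\eta_0^{-4}$ from Proposition \ref{Pro of inverse norm}. Inverting the linear part of the displayed identity and taking $\ell^{\infty}$ norms then yields a schematic inequality of the form
\begin{align*}
  \|\bbd(z)\|_\infty \;\leq\; C\,\eta_0^{-(\beta+4)}\,N^{-\alpha} \;+\; C\,\eta_0^{-c}\,\|\bbd(z)\|_\infty^2
\end{align*}
for some $c>0$. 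To close the nonlinear self-interaction I would deploy a continuity/bootstrap argument: at $|z|\to\infty$ both $\bbg$ and $\bbv$ share the asymptotic $-\mfc/z$, so $\|\bbd(z)\|_\infty$ is arbitrarily small there; by analyticity of $\bbg$ and $\bbv$ together with connectedness of $\mathbb{H}_{\eta_0}$, the subset where $\|\bbd\|_\infty$ lies below the bootstrap threshold is both open and closed, and hence equals the whole region. This delivers the required linear bound on $\mathbb{H}_{\eta_0}$.

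The main obstacle is extending the estimate to $\tilde{\mcS}_{\eta_0}$ of (\ref{Eq of stability region 1}), which also contains $z$ with $\Im(z)\downarrow 0$ at distance $\eta_0$ from the spectrum $[-\zeta,\zeta]$. The imaginary-part identity (\ref{Eq of F norm}) underpinning $\|\bbF^{(d)}(z)\| < 1$, and hence the quantitative bound on $\|\bbPi^{(d)}(z,z)^{-1}\|$, degenerates in that limit and must be replaced. I would exploit the fact that $\bbg(z)$, being associated with the Stieltjes transform of a measure supported in $[-\zeta,\zeta]$, extends analytically across $\mathbb{R}\setminus[-\zeta,\zeta]$ and retains uniform lower and upper bounds on $|g_i(z)|$ throughout $\tilde{\mcS}_{\eta_0}$. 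The purely algebraic spectral-gap argument of Lemma \ref{Lem of spectral gap} then still forces $\|\bbF^{(d)}(z)\| < 1$ with a quantitative margin depending only on $\eta_0$ and the sizes of $|g_i(z)|$, restoring the invertibility estimate for $\bbPi^{(d)}(z,z)$ uniformly on $\tilde{\mcS}_{\eta_0}$. With this extension in hand the bootstrap argument of the previous paragraph applies verbatim and produces the claimed uniform bound.
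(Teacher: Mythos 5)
Your approach is essentially the same as the paper's: derive a perturbative identity driven by the stability operator, invert it with a quadratic remainder, and close by a continuity/bootstrap argument. Your identity involving $\bbPi^{(d)}(z,z)$ is the same as the paper's identity for $\mathbb{B} = \operatorname{diag}(e^{-2\mathrm{i}\bbq}) - \bbF^{(d)}$ up to diagonal conjugation, and your ``open and closed set'' bootstrap is logically the same as the paper's two-root dichotomy $x^\pm$ for the quadratic. (Minor: the sign in front of $\operatorname{diag}(1/\bbv)(\bbd\circ\bbd)$ in your displayed identity should be $+$, not $-$; this is harmless since you take norms.)

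There are, however, two gaps. First, your seed point for the bootstrap is $|z|\to\infty$, but you establish the quadratic inequality on $\mathbb{H}_{\eta_0}$ from (\ref{Eq of H0 region}), which enforces $|z|\le\eta_0^{-1}$ and therefore excludes the seed. The set where $\|\bbd\|_\infty$ lies below the bootstrap threshold cannot be connected to a seed that sits outside the region where the inequality has been proven. The paper fills this by splitting $\tilde{\mcS}_{\eta_0}$ into the high-$\Im$ part $\tilde{\mcS}_{\eta_0}^2$ (where it derives a direct Neumann-series bound $\|\bbd\|_\infty \le \mathrm{O}(\eta_0^{-\beta+2}N^{-\alpha})$ with no appeal to the stability operator) and the low-$\Im$ part $\tilde{\mcS}_{\eta_0}^1$ (where the quadratic lives); continuity across the common boundary $\Im(z)=\eta_0^{-1}$ then does the seeding. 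You would need this intermediate elementary estimate.

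Second, in your near-real-axis paragraph you claim that Lemma \ref{Lem of spectral gap} ``forces $\|\bbF^{(d)}(z)\|<1$.'' It does not: Lemma \ref{Lem of spectral gap} gives only a lower bound on the spectral gap of $\operatorname{diag}(|\bbg|)^{1/2}$-conjugates of $\bbS_d$; it says nothing about the top eigenvalue being below $1$. The bound $\|\bbF^{(d)}(z)\|<1$, and therefore the invertibility of $\bbPi^{(d)}(z,z)$, comes from the imaginary-part identity (\ref{Eq of F norm}), and in the regime $\Im(z)\downarrow 0$ at $\operatorname{dist}(z,[-\zeta,\zeta])\ge\eta_0$ it must be carried through the limit by showing that $\Im(z)^{-1}\Im(g_i(z))$ and $|g_i(z)|$ retain uniform two-sided bounds governed by the distance to the spectrum; this is precisely what the paper does in Lemma \ref{Lem of analytic} (though there for the contour $\mathfrak{C}^v$). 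Your instinct that this is the correct obstacle is right, but the mechanism you cite is the wrong lemma.
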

\begin{proof}
	First, let us  split the region \(\tilde{\mcS}_{\eta_0}\) into two parts, let us  define
	$$\tilde{\mcS}_{\eta_0}^1:=\{z\in\tilde{\mcS}_{\eta_0}:\Im(z)\leq\eta_0^{-1}\}\quad{\rm and}\quad\tilde{\mcS}_{\eta_0}^2:=\tilde{\mcS}_{\eta_0}\backslash\tilde{\mcS}_{\eta_0}^1.$$
	For \(z\in\tilde{\mcS}_{\eta_0}^2\), we have
	\begin{align}
		&|v_i(z)-g_i(z)|=\Big|\frac{\mfc_i}{z+\sum_{j\neq i}g_j(z)}-\frac{\mfc_i}{z+\sum_{j\neq i}v_j(z)+\varepsilon_i}\Big|\notag\\
		&=\mfc_i\Big|\frac{\sum_{j\neq i}v_j(z)-g_j(z)+\varepsilon_i}{(z+\sum_{j\neq i}g_j(z))(z+\sum_{j\neq i}v_j(z)+\varepsilon_i)}\Big|\notag\\
		&\leq\frac{\mfc_i}{\Im(z)^2}\sum_{j\neq i}|v_j(z)-g_j(z)|+\frac{\mfc_i|\varepsilon_i|}{\Im(z)^2}\notag\\
		&\leq\mfc_i\eta_0^2\sum_{j\neq i}|v_j(z)-g_j(z)|+\eta_0^2|\varepsilon_i|.\notag
	\end{align}
	In other words, we conclude that
	$$|\bbv(z)-\bbg(z)|\leq\eta_0^2{\rm diag}(\boldsymbol{\mfc})\bbS_d|\bbv(z)-\bbg(z)|+\eta_0^2\boldsymbol{\varepsilon}.$$
	Since \(\Vert{\rm diag}(\boldsymbol{\mfc})\bbS_d\Vert=C_{d,\boldsymbol{\mfc}}\), for sufficiently small \(\eta_0\) such that \(\eta_0^2\Vert{\rm diag}(\boldsymbol{\mfc})\bbS_d\Vert\ll1\), so \((\boldsymbol{I}_d-\eta_0^2{\rm diag}(\boldsymbol{\mfc})\bbS_d)^{-1}\) exists and 
	$$\Vert(\boldsymbol{I}_d-\eta_0^2{\rm diag}(\boldsymbol{\mfc})\bbS_d)^{-1}\Vert\leq(1-\eta_0^2\Vert{\rm diag}(\boldsymbol{\mfc})\bbS_d\Vert)^{-1}<2,$$
	Consequently, we have
	$$\Vert\bbv(z)-\bbg(z)\Vert_{\infty}<\Vert\bbv(z)-\bbg(z)\Vert_2<2\eta_0^2\sqrt{d}\Vert\boldsymbol{\varepsilon}\Vert_{\infty}=C_d\eta_0^{-\beta+2}N^{-\alpha}.$$
	Next, for \(z\in\tilde{\mcS}_{\eta_0}^1\), define \(\boldsymbol{h}(z):=\bbv(z)-\boldsymbol{g}(z)\), we have 
	$$(\diag(e^{-2{\rm i}\bbq})-\bbF)\frac{\bbh}{|\bbg|}=\mfc^{-1}\circ\big(e^{-{\rm i}\boldsymbol{q}}\circ\boldsymbol{h}\circ\boldsymbol{S}_d\boldsymbol{h}+(|\boldsymbol{g}|+e^{-{\rm i}\boldsymbol{q}}\circ\boldsymbol{h})\circ\boldsymbol{\varepsilon}\big),$$
	where \(\bbF=\diag(\mfc^{-1}\circ|\bbg|)\bbS_d\diag(|\bbg|)\) is defined in (\ref{Eq of operator F}). In fact, notice that
	\begin{align*}
		&e^{-{\rm i}\bbq}\circ\bbh\circ\bbS_d\bbh=e^{-{\rm i}\bbq}\circ(\bbv-\bbg)\circ\bbS_d(\bbv-\bbg)\\
		&=e^{-{\rm i}\bbq}\circ\bbv\circ\bbS_d(\bbv-\bbg)+|\bbg|\circ\bbS_d\bbg-|\bbg|\circ\bbS_d\bbv\\
		&=-e^{-{\rm i}\bbq}\circ\bbv\circ\bbve-e^{-{\rm i}\bbq}\circ\mfc+e^{-{\rm i}\bbq}\circ\mfc\circ\frac{\bbv}{\bbg}+|\bbg|\circ\bbS_d\bbg-|\bbg|\circ\bbS_d\bbv\\
		&=\mfc\circ e^{-2{\rm i}\bbq}\circ\frac{\bbh}{|\bbg|}-\mfc\circ\bbF\frac{\bbh}{|\bbg|}-e^{-{\rm i}\bbq}\circ\bbv\circ\bbve\\
		&=\mfc\circ(\diag(e^{-2{\rm i}\bbq})-\bbF)\frac{\bbh}{|\bbg|}-e^{-{\rm i}\bbq}\circ\bbh\circ\bbve-|\bbg|\circ\bbve,
	\end{align*}
	where we use the fact that
	$$-\frac{\mfc}{\bbv}-z-\bbve=\bbS_d\bbv\quad{\rm and}\quad-\frac{\mfc}{\bbg}-z=\bbS_d\bbg$$
	in the second equality. For simplicity, denote 
	\begin{align}
		\mathbb{B}:=\diag(e^{-2{\rm i}\bbq})-\bbF,\label{Eq of tiB}
	\end{align}
	so we derive that
	\begin{align*}
		\bbh=|\bbg|\circ\mathbb{B}^{-1}\big(\mfc^{-1}\circ\big(e^{-{\rm i}\boldsymbol{q}}\circ\boldsymbol{h}\circ\boldsymbol{S}_d\boldsymbol{h}+(|\boldsymbol{g}|+e^{-{\rm i}\boldsymbol{q}}\circ\boldsymbol{h})\circ\boldsymbol{\varepsilon}\big)\big).
	\end{align*}
	According to Proposition \ref{Pro of stability operator}, we know that \(\mathbb{B}\) is invertible and \(\Vert\mathbb{B}^{-1}\Vert\leq C_{d,\mfc}\eta_0^{-2}\) by Proposition \ref{Pro of inverse norm}, then we have
	\begin{align*}
		\Vert\boldsymbol{h}\Vert_{\infty}&\leq\mfc_d^{-1}\eta_0^{-1}\sqrt{d}\Vert\mathbb{B}^{-1}\Vert\left(\sqrt{d}\Vert\boldsymbol{S}_d\Vert\Vert\boldsymbol{h}\Vert_{\infty}^2+(\eta_0^{-1}+\Vert\boldsymbol{h}\Vert_{\infty})\Vert\bbve\Vert_{\infty}\right)\\
		&\leq C_{d,\mfc}^{(1)}\eta_0^{-3}\Vert\boldsymbol{h}\Vert_{\infty}^2+C_{d,\mfc}^{(3)}\eta_0^{-(3+\beta)}N^{-\alpha}\Vert\boldsymbol{h}\Vert_{\infty}+C_{d,\mfc}^{(2)}\eta_0^{-(4+\beta)}N^{-\alpha},
	\end{align*}
	where \(C_{d,\mfc}^{(l)}\) are three constants depending on \(d\) and \(\mfc\) for \(l=1,2,3\). Hence, we obtain that \(\Vert\boldsymbol{h}(z)\Vert_{\infty}\leq x^-\) or \(\Vert\boldsymbol{h}(z)\Vert_{\infty}\geq x^+\) for all \(z\in\tilde{\mcS}_{\eta_0}^1\), where
	\begin{align}
		x^{\pm}:=\frac{\eta_0^3-C_{d,\mfc}^{(3)}\eta_0^{-\beta}N^{-\alpha}\pm\sqrt{\big(\eta_0^3-C_{d,\mfc}^{(3)}\eta_0^{-\beta}N^{-\alpha}\big)^2-4C_{d,\mfc}^{(1)}C_{d,\mfc}^{(2)}\eta_0^{-(1+\beta)}N^{-\alpha}}}{2C_{d,\mfc}^{(1)}}.\notag
	\end{align}
	For any fixed \(\eta_0\), as \(N\to\infty\), it implies that \(\eta_0^{-\beta}N^{-\alpha},\eta_0^{-(\beta+1)}N^{-\alpha}=\mro(\eta_0^3)\) and
	$$x^+=\mrO(\eta_0^3).$$
	On the other hand, since
	\begin{align*}
		x^-=\frac{2C_{d,\mfc}^{(2)}\eta_0^{-(1+\beta)}N^{-\alpha}}{\eta_0^3-C_{d,\mfc}^{(3)}\eta_0^{-\beta}N^{-\alpha}+\sqrt{\big(\eta_0^3-C_{d,\mfc}^{(3)}\eta_0^{-\beta}N^{-\alpha}\big)^2-4C_{d,\mfc}^{(1)}C_{d,\mfc}^{(2)}\eta_0^{-(1+\beta)}N^{-\alpha}}},
	\end{align*}
	where the denominator has the same order as \(x^+=\mrO(\eta_0^3)\). Hence, it implies that
	$$x^-=\mrO(\eta_0^{-(\beta+4)}N^{-\alpha}).$$
	Since \(\Vert\boldsymbol{h}(z)\Vert_{\infty}\) is continuous on \(z\in\tilde{\mcS}_{\eta_0}\) and we have shown that \(\Vert\boldsymbol{h}(z)\Vert_{\infty}=\mrO(\eta_0^{-\beta+2}N^{-\alpha})\) for all \(z\in\tilde{\mcS}_{\eta_0}^2\), it implies that \(\Vert\boldsymbol{h}(z)\Vert_{\infty}\leq x^-\) for all \(z\in\tilde{\mcS}_{\eta_0}^1\), so \(\Vert\boldsymbol{h}(z)\Vert_{\infty}\leq x^-=\mrO(\eta_0^{-(\beta+4)}N^{-\alpha})\), which completes our proof.
\end{proof}

\section{Properties of spectral distribution}\label{Sec of LSD}
In this section, we will derive several important properties of the empirical spectral distribution (ESD) and limiting spectral distribution (LSD) of \(\bbM\). Without loss of generality, we  assume, as before, that \(\mfc_1=\max_{1\leq l\leq d}\mfc_l\).
\subsection{Support of the empirical spectral distribution}\label{Sec of Stable region ESD}
\begin{thm}\label{Thm of Extreme eigenvalue N d=3}
	Under Assumptions {\rm \ref{Ap of general noise}} and {\rm \ref{Ap of dimension}}, define \(\mfv_d:=2(d-1)\sum_{l=1}^d\sqrt{\mfc_l}\), then for any \(t,l>0\), we have
	\begin{align}
		\mbP(\Vert\bbM\Vert>\mfv_d+t)=\mro(N^{-l}).\label{Eq of Extreme eigenvalues d=3}
	\end{align}
\end{thm}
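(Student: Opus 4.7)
\emph{Proof plan.} The strategy is to decompose \(\bbM\) into simpler block matrices, reduce the problem to bounding the spectral norms of the rectangular off-diagonal blocks \(\bbX^{ij}\), and then apply a classical edge-eigenvalue estimate block by block. For each pair \(1\leq i<j\leq d\), let \(\bbM^{(i,j)}\) denote the \(N\times N\) symmetric matrix that agrees with \(\bbM\) on the \((i,j)\)-th and \((j,i)\)-th blocks and vanishes elsewhere. Then \(\bbM=\sum_{1\leq i<j\leq d}\bbM^{(i,j)}\). Since \(\bbM^{(i,j)}\) is the zero-padded symmetric dilation of \(N^{-1/2}\bbX^{ij}\), its non-zero eigenvalues are exactly \(\pm N^{-1/2}\sigma_k(\bbX^{ij})\), so \(\Vert\bbM^{(i,j)}\Vert=N^{-1/2}\Vert\bbX^{ij}\Vert\). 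The triangle inequality then yields
\begin{align*}
    \Vert\bbM\Vert\leq\sum_{1\leq i<j\leq d}\frac{\Vert\bbX^{ij}\Vert}{\sqrt{N}}.
\end{align*}

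Next I would verify the structure of a single block \(\bbX^{ij}\). By the definition in~(\ref{Eq of operator 1}), distinct entries \(X^{ij}_{s,t}\) involve disjoint subsets of the underlying tensor entries \(\{X_{i_1\cdots i_d}\}\), so within a single block the entries are \emph{independent}. Each entry is mean zero with variance \(\prod_{k\neq i,j}\Vert\bba^{(k)}\Vert_2^2=1\), and it inherits a subexponential tail (with a modified exponent depending on \(\theta\)) by standard concentration for linear combinations of i.i.d.\ subexponential variables whose coefficients are bounded in \(\ell^2\).

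The main step is a high-probability spectral-norm bound for each block: for any \(\varepsilon>0\) and \(l>0\),
\begin{align*}
    \mbP\!\left(\Vert\bbX^{ij}\Vert>\sqrt{n_i}+\sqrt{n_j}+N^{\varepsilon}\right)=\mro(N^{-l}).
\end{align*}
To obtain this, I would first truncate each tensor entry at the level \(N^{\varepsilon/\theta}\); Assumption~\ref{Ap of general noise} combined with a union bound ensures the replacement probability is \(\mro(N^{-l})\). After truncation, the entries of \(\bbX^{ij}\) are bounded and independent, so the classical moment method applied to \(\mbE[\tr((\bbX^{ij}(\bbX^{ij})')^{k})]\) with \(k\asymp\log N\), as developed in~\cite{bai2010spectral}, delivers the displayed estimate; alternatively, the polynomial concentration inequality of~\cite{gotze2021concentration} applied to the bilinear forms \(\bbu'\bbX^{ij}\bbv\) combined with an \(\varepsilon\)-net over the unit spheres yields the same bound. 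The main technical obstacle concentrates at this step, since Assumption~\ref{Ap of general noise} only requires a general subexponential tail with arbitrary exponent \(\theta>0\) rather than the stronger subgaussian assumption.

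Combining the three steps, with probability \(1-\mro(N^{-l})\),
\begin{align*}
    \Vert\bbM\Vert\leq\sum_{1\leq i<j\leq d}\bigl(\sqrt{\mfc_i}+\sqrt{\mfc_j}\bigr)+\mro(1)=(d-1)\sum_{k=1}^{d}\sqrt{\mfc_k}+\mro(1)=\frac{\mfv_d}{2}+\mro(1),
\end{align*}
which is strictly smaller than \(\mfv_d+t\) for any fixed \(t>0\) once \(N\) is sufficiently large, yielding the claim. Note that the factor \(2\) in the definition of \(\mfv_d\) provides a comfortable slack over the triangle-inequality estimate, so the loss incurred at each step is absorbed at no cost.
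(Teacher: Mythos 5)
Your argument is correct, and it follows the same broad outline as the paper's supplement: decompose $\bbM$ into contributions from the off-diagonal block pairs, bound the full norm by a triangle inequality, verify that the entries of each block $\bbX^{ij}$ are i.i.d.\ (because distinct entries draw on disjoint sets of tensor coordinates) with unit variance and subexponential tails, truncate, and invoke a classical edge-eigenvalue estimate for a rectangular matrix (the paper uses the bounded-entry result of Chapter~9.12.5 of \cite{bai2010spectral} after a Bernstein-type truncation of the \emph{block} entries via \cite{gotze2021concentration}). The one place you genuinely deviate --- and improve --- is the decomposition: you group $\bbM^{ij}$ and $\bbM^{ji}$ into a single symmetric dilation $\bbM^{(i,j)}$ whose norm is $N^{-1/2}\Vert\bbX^{ij}\Vert$, so you write $\Vert\bbM\Vert\leq\sum_{i<j}N^{-1/2}\Vert\bbX^{ij}\Vert$; the paper instead applies the triangle inequality over ordered pairs, giving $\Vert\bbM\Vert\leq\sum_{i\neq j}\Vert\bbM^{ij}\Vert=2\sum_{i<j}N^{-1/2}\Vert\bbX^{ij}\Vert$. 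This costless refinement yields $\Vert\bbM\Vert\leq(d-1)\sum_l\sqrt{\mfc_l}+\mro(1)=\mfv_d/2+\mro(1)$ with overwhelming probability, which is strictly stronger than the stated $\mfv_d+t$ and explains your final remark about the built-in slack in $\mfv_d$. One small caution: truncating the underlying tensor entries at level $N^{\varepsilon/\theta}$, as you propose, does not by itself bound the block entries $X^{ij}_{s,t}$ at the $N^{1/4}$ scale that \cite{bai2010spectral}'s lemma requires when the coefficient vector $\prod_{k\neq i,j}\bba^{(k)}$ is delocalized (its $\ell^1$ norm can be as large as $N^{(d-2)/2}$); the paper truncates the \emph{block} entries directly via the Bernstein inequality of \cite{gotze2021concentration}, which also covers the arbitrary $\theta>0$ you flag as the main obstacle. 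Your fallback of an $\varepsilon$-net plus polynomial concentration would also close this gap, but the block-level truncation is the cleaner fix.
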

As preliminaries, we state the following two results:
\begin{lem}[Chapter 9.12.5 of \cite{bai2010spectral}]\label{Thm of Extreme eigenvalue}
	Let \(\bbX=[X_{ij}]_{p\times n}\) be a random matrix of size \(p\times n\), whose entries \(\{X_{ij}\}\) are i.i.d. complex random variables with mean zero, variance one, and finite fourth moments, and \(|X_{ij}|\leq n^{1/4}\). If \(\frac{p}{n}\to y\in(0,1)\), then for any \(x>(1+\sqrt{y})^2\) and \(l>0\), the spectral norm of \(\bbS_n=n^{-1}\bbX\bbX^*\) satisfies that
	$$\mathbb{P}\left(\Vert\bbS_n\Vert>x\right)=\mro(n^{-l}).$$
\end{lem}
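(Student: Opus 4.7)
The plan is to prove this result by the high-moment (trace) method, following the classical strategy in Bai--Silverstein. Since all eigenvalues of $\bbS_n$ are non-negative, $\|\bbS_n\|^k \le \tr(\bbS_n^k)$, so by Markov's inequality
\[
\mathbb{P}(\|\bbS_n\|>x) \le x^{-k}\mathbb{E}[\tr(\bbS_n^k)],
\]
and it suffices to show that for a suitably chosen sequence $k=k_n\to\infty$ and every $x>(1+\sqrt y)^2$,
\[
x^{-k_n}\mathbb{E}[\tr(\bbS_n^k)] = \mro(n^{-l}) \qquad \text{for every } l>0.
\]
The natural choice is $k_n=\lfloor (\log n)^{\alpha}\rfloor$ (or $k_n$ growing like a small power of $n$), because the truncation bound $|X_{ij}|\le n^{1/4}$ allows us to control moments of moderate order while keeping $x^{-k_n}$ decaying faster than any power of $n$.

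The core of the proof is the combinatorial expansion
\[
\mathbb{E}[\tr(\bbS_n^k)] = n^{-k}\sum \mathbb{E}\bigl[X_{i_1 j_1}\overline{X_{i_2 j_1}}X_{i_2 j_2}\overline{X_{i_3 j_2}}\cdots X_{i_k j_k}\overline{X_{i_1 j_k}}\bigr],
\]
where the sum ranges over tuples $(i_1,\dots,i_k)\in\{1,\dots,p\}^k$ and $(j_1,\dots,j_k)\in\{1,\dots,n\}^k$. Each summand is encoded by a closed walk on the bipartite graph with row-vertices on one side and column-vertices on the other. Since the entries are i.i.d.\ centered, a summand is non-zero only if every edge of the walk is traversed at least twice. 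I would classify walks by the shape of the underlying graph (tree-like walks versus walks that produce cycles) and apply the truncation bound $|X_{ij}|\le n^{1/4}$ to control the contribution of edges traversed more than twice. The dominant contribution comes from the tree-like (Catalan-type) walks, each edge traversed exactly twice, which yields the $(1+\sqrt y)^{2k}$ behaviour: after incorporating the $n^{-k}$ prefactor and the number of choices of vertex labels, the leading term is $(1+\sqrt y)^{2k}(1+o(1))$ (times a polynomial factor in $k$).

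Combining the two pieces gives
\[
\mathbb{P}(\|\bbS_n\|>x) \le \bigl(x/(1+\sqrt y)^2\bigr)^{-k_n} \cdot k_n^{C} \cdot (1+o(1)).
\]
For $x>(1+\sqrt y)^2$ the base is a fixed constant $<1$, and with $k_n=(\log n)^{\alpha}$ for some $\alpha>1$ the right-hand side decays faster than any power of $n$, giving the claim $\mro(n^{-l})$ for every $l>0$.

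The main obstacle is the combinatorial accounting: controlling the non-tree contributions so that they are negligible compared with $(1+\sqrt y)^{2k}$ even as $k$ grows with $n$, and simultaneously ensuring that edges traversed more than twice contribute at most a polynomial factor in $k$ once we use the $n^{1/4}$ truncation. This is done via the canonical path bijection (Bai--Silverstein, Chap.\ 5 and 9), which organizes the walks according to their skeleton and assigns each walk an ``excess'' counted against a power-saving factor of $n$. Since the statement is exactly Theorem~5.9 / Lemma~9.12.5 of \cite{bai2010spectral}, the detailed bookkeeping can be taken verbatim from that reference and need not be reproduced here.
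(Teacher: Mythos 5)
This lemma is not proved in the paper at all: it is quoted verbatim from \cite{bai2010spectral} and used as a black box in the proof of Theorem \ref{Thm of Extreme eigenvalue N d=3}. Your high-moment sketch (Markov on $\tr(\bbS_n^{k_n})$ with $k_n=(\log n)^{\alpha}$, walk expansion dominated by Catalan-type paths giving $(1+\sqrt y)^{2k}$, truncation $|X_{ij}|\le n^{1/4}$ controlling repeated edges) is precisely the argument behind the cited result, so your proposal is correct and consistent with the source the paper relies on.
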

\begin{lem}[\cite{gotze2021concentration}]\label{Pro of Bernstein}
	Let $X_1,\cdots,X_n$ be independent random variables with \(\Vert X_i\Vert_{\Psi_{\theta}}\leq M\) for some \(\theta\in(0,1]\). Let \(\bba=(a_1,\cdots,a_n)'\in\mathbb{R}^n\), then
	$$\mathbb{P}\left(\Big|\sum_{i=1}^na_i\left(X_i-\mathbb{E}[X_i]\right)\Big|>t\right)\leq2\exp\left(-C_{\theta}\min\left\{\frac{t^2}{M\Vert\bba\Vert_2^2},\frac{t^{\alpha}}{M^{\alpha}\max_{1\leq i\leq n}|a_i|^{\alpha}}\right\}\right),$$
	where \(\Vert\cdot\Vert_{\Psi_{\theta}}\) is the Orlicz norms with parameter \(\theta\).
\end{lem}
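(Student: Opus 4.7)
The plan is to prove this Bernstein-type inequality by the moment method applied to $S_n:=\sum_{i=1}^n a_i(X_i-\mathbb{E}X_i)$. After a preliminary centering (which enlarges the $\Psi_\theta$-norm only by an absolute multiplicative constant), I would bound $\mathbb{E}|S_n|^p$ for each integer $p\ge 2$ and then use Markov's inequality $\mathbb{P}(|S_n|>t)\le t^{-p}\mathbb{E}|S_n|^p$, optimising $p$ at the very end. This moment route is the natural choice because for $\theta\in(0,1)$ the moment generating function of $X_i$ need not exist on any neighbourhood of the origin, so a direct Chernoff/Laplace-transform argument is unavailable without an auxiliary truncation.

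The first step is to translate the Orlicz hypothesis into moment growth. The bound $\|X_i\|_{\Psi_\theta}\le M$ is, up to universal constants, equivalent to the tail estimate $\mathbb{P}(|X_i|>x)\le 2\exp(-(x/M)^\theta)$, which integrated against $p\,x^{p-1}$ gives $\mathbb{E}|X_i|^p\le C^p M^p\,\Gamma(p/\theta+1)$ for all $p\ge 1$, and hence by Stirling
\[
(\mathbb{E}|X_i|^p)^{1/p}\le C_\theta\, M\, p^{1/\theta}, \qquad p\ge 1.
\]
In particular the variance satisfies $\mathbb{E}X_i^2\le C_\theta M^2$, which already matches the order of the sub-Gaussian scale appearing in the first term of the min.

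Next I would apply a Rosenthal/Latala-type moment inequality to the independent centered sum $S_n$: for any even $p\ge 2$,
\[
\mathbb{E}|S_n|^p \le C^p\Big(p^{p/2}\bigl(\textstyle\sum_i a_i^2\,\mathbb{E}X_i^2\bigr)^{p/2}+p^{p/\theta}\sum_i |a_i|^p\,\mathbb{E}|X_i|^p\Big),
\]
which after inserting the moment bound of the previous step collapses to
\[
(\mathbb{E}|S_n|^p)^{1/p}\le C_\theta\bigl(\sqrt{p}\,M\,\|a\|_2 \;+\; p^{1/\theta}\,M\,\max_i|a_i|\bigr).
\]
Markov's inequality then gives $\mathbb{P}(|S_n|>t)\le 2\exp(-p)$ provided $t/(\mathbb{E}|S_n|^p)^{1/p}\ge e$. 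Choosing $p\sim t^2/(M^2\|a\|_2^2)$ when the sub-Gaussian term dominates and $p\sim (t/(M\max_i|a_i|))^\theta$ in the complementary regime yields, respectively, the exponents $-C_\theta t^2/(M\|a\|_2^2)$ and $-C_\theta (t/M\max_i|a_i|)^\theta$. Since the tail is controlled by the smaller of the two, identifying $\alpha=\theta$ produces the claimed minimum inside the exponential, together with the prefactor $2$.

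The main obstacle will be producing the Rosenthal/Latala-type moment inequality with the correct $p^{1/\theta}$ growth rather than the classical $p$ in the weak term; for $\theta<1$ the usual Rosenthal constants overshoot and spoil the sub-Weibull exponent. The standard remedy is Latala's sharp inequality for sums of independent symmetric variables together with a symmetrisation step to reduce the centered-but-asymmetric case, or alternatively an entropy/modified log-Sobolev argument tailored to $\Psi_\theta$ functionals as in the cited work of Götze--Sambale--Sinulis. A secondary, purely bookkeeping, difficulty is tracking the various universal constants through these steps so that the single $C_\theta$ inside the exponent and the prefactor $2$ in the statement are ultimately achieved.
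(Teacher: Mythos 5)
The paper does not prove this lemma at all: it is quoted as a black box from \cite{gotze2021concentration} (their concentration result for linear forms in $\alpha$-sub-exponential variables), so there is no internal proof to compare your argument against; you are in effect re-proving the cited result. Your overall route (moment growth $(\mathbb{E}|X_i|^p)^{1/p}\le C_\theta Mp^{1/\theta}$ from the $\Psi_\theta$ bound, a two-term $L^p$ estimate for $S_n=\sum_i a_i(X_i-\mathbb{E}X_i)$, Markov, optimize $p$, identify $\alpha=\theta$) is the standard elementary path and, if completed, would indeed give the statement (with $M^2\Vert\bba\Vert_2^2$ in the Gaussian term; the $M\Vert\bba\Vert_2^2$ in the paper's display is evidently a typo inherited from the citation). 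It would also be a genuinely different argument from the source, which proceeds via entropy/modified log-Sobolev methods for general polynomials.

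However, as written the proposal has a real gap, and you have located it yourself without closing it: the entire proof rests on the moment inequality $(\mathbb{E}|S_n|^p)^{1/p}\le C_\theta\bigl(\sqrt{p}\,M\Vert\bba\Vert_2+p^{1/\theta}M\max_i|a_i|\bigr)$, which for $\theta<1$ is precisely the hard content of the lemma, and you only gesture at Lata\l{}a/Hitczenko--Montgomery-Smith--Oleszkiewicz or symmetrization rather than prove it. Moreover the displayed ``Rosenthal/Lata\l{}a-type'' bound you start from is inconsistent with the collapse you claim: with the prefactor $p^{p/\theta}$ on the weak term and $\mathbb{E}|X_i|^p\le C^pM^pp^{p/\theta}$, the $p$-th root of that term is of order $M\,p^{2/\theta}\Vert\bba\Vert_p$, not $M\,p^{1/\theta}\max_i|a_i|$; with the correct classical Rosenthal prefactor $p^p$ one instead gets $M\,p^{1+1/\theta}$, and optimizing $p$ then yields a tail exponent $t^{\theta/(1+\theta)}$, strictly weaker than the claimed $t^{\theta}$. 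So the chain of displayed inequalities either fails to produce the stated exponent or presupposes exactly the sharp inequality that is missing. Two smaller points: replacing $\Vert\bba\Vert_p$ by $\max_i|a_i|$ needs the interpolation $\Vert\bba\Vert_p^p\le\Vert\bba\Vert_\infty^{p-2}\Vert\bba\Vert_2^2$ together with some care in the choice of $p$ (the leftover factor $\Vert\bba\Vert_2^{2/p}$ must be absorbed), and for $\theta<1$ the quantity $\Vert\cdot\Vert_{\Psi_\theta}$ is only a quasi-norm, so the centering step should invoke the quasi-triangle inequality constant $C_\theta$ explicitly rather than norm subadditivity.
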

Now, let us  prove that
\begin{proof}[Proof of Theorem \ref{Thm of Extreme eigenvalue N d=3}]
	Since \(\Vert\bbM\Vert=\sup_{\Vert\bbx\Vert_2=1}\Vert\bbM\bbx\Vert_2\), let us  split \(\bbx\) into blocks, i.e. \(\bbx=(\bbx_1,\bbx_2,\bbx_3)'\) and \(\bbM\bbx=\sum_{i=1}^d\sum_{j\neq i}^d\bbM^{ji}\bbx_i\). By the Minkowski inequality, we have
	$$\Vert\bbM\bbx\Vert_2\leq\sum_{i=1}^d\sum_{j\neq i}^d\Vert\bbM^{ji}\bbx_i\Vert_2\quad\Longrightarrow\quad\Vert\bbM\Vert\leq\sum_{i=1}^d\sum_{j\neq i}^d\Vert\bbM^{ji}\Vert=\sum_{i\neq j}^d\Vert\bbM^{ij}\Vert.$$
	Next, we will show that \(\mbP(\Vert\bbM^{ij}\Vert>\sqrt{\mfc_i}+\sqrt{\mfc_j})=\mro(N^{-l})\) for any \(l>0\). Since the proof is the same for each block, we consider \(i=1,j=2\), we have
	$$M_{j_1j_2}^{12}=\frac{1}{\sqrt{N}}\sum_{i_3\cdots i_d}^{n_3\cdots n_d}X_{j_1j_2i_3\cdots i_d}\mcA_{i_1\cdots i_d}^{(1,2)},$$
    where \(\mcA_{i_1\cdots i_d}^{(1,2)}\) is defined in (\ref{Eq of mcA}). By Lemma \ref{Pro of Bernstein} and Assumption \ref{Ap of general noise}, we have
	\begin{align*}
		\mbP\Bigg(\Bigg|\sum_{i_3\cdots i_d}^{n_3\cdots n_d}X_{j_1j_2i_3\cdots i_d}\mcA_{i_1\cdots i_d}^{(1,2)}\Bigg|>N^{1/4}\Bigg)\leq2\exp(-C_{\theta}N^{1/4}),
	\end{align*}
	then 
	\begin{align*}
		&\mbP\Bigg(\Bigg|\sum_{i_3\cdots i_d}^{n_3\cdots n_d}X_{j_1j_2i_3\cdots i_d}\mcA_{i_1\cdots i_d}^{(1,2)}\Bigg|>N^{1/4}\Bigg|\exists\ j_1,j_2\Bigg)\leq\sum_{j_1,j_2}^{n_1,n_2}\mbP\Bigg(\Bigg|\sum_{i_3\cdots i_d}^{n_3\cdots n_d}X_{j_1j_2i_3\cdots i_d}\mcA_{i_1\cdots i_d}^{(1,2)}\Bigg|>N^{1/4}\Bigg)\\
		&\leq2n_1n_2\exp(-C_{\theta}N^{1/4})=\mro(N^{-l}).
	\end{align*}
	By Lemma \ref{Thm of Extreme eigenvalue}, for any $t>0$, we can show that
	\begin{align*}
		&\mbP\big(\Vert\bbM^{12}\Vert>\sqrt{\mfc_1}+\sqrt{\mfc_2}+t\big)<\mbP\big(\Vert\bbM^{12}\Vert>\sqrt{\mfc_1}+\sqrt{\mfc_2}+t\big|\forall\ |M_{j_1j_2}^{12}|<N^{-1/4}\big)\mbP(\forall\ |M_{j_1j_2}^{12}|<N^{-1/4})\\
		&+\mbP\big(\exists\ |M_{j_1j_2}^{12}|<N^{-1/4}\big)=\mro(N^{-l}).
	\end{align*}
	For other \(\bbM^{ij}\), we also have \(\mbP\big(\Vert\bbM^{ij}\Vert>\sqrt{\mfc_i}+\sqrt{\mfc_j}+t\big)=\mro(N^{-l})\), then we can conclude that
    \begin{align*}
        \mbP(\Vert\bbM\Vert\geq\mfv_d+d^2t)\leq\sum_{i\neq j}^d\mbP(\Vert\bbM^{ij}\Vert\geq\sqrt{\mfc_i}+\sqrt{\mfc_j}+t)=\mro(N^{-l}),
    \end{align*}
    note that $d$ is a fixed integer, then we can conclude (\ref{Eq of Extreme eigenvalues d=3}).
\end{proof}
\subsection{Support of the limit spectral distribution}\label{Sec of finite support}
\begin{thm}\label{Lem of finite support}
	Let \(\bbg(z)\) be the solution of {\rm (\ref{Eq of MDE 3 order})} on \(\mbC^+\) and \(g(z)=\sum_{i=1}^d g_i(z)\), then there exists a probability measure \(\nu\) with bounded support such that its Stieltjes transform is \(g(z)\).
\end{thm}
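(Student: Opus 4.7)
The plan is to verify that $g(z)$ satisfies the Nevanlinna--Herglotz criteria for being the Stieltjes transform of a positive finite Borel measure on $\mbR$, to identify this measure as a probability measure by computing the normalization of $g$ at infinity, and finally to prove that the support is bounded by continuing $g$ analytically across the real axis away from a bounded interval.

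First I would verify that $g$ is a Nevanlinna function with the right normalization. By Theorem \ref{Thm of Dyson} the solution $\bbg(z)$ is analytic on $\mbC^+$ and belongs to $\msB_+^d$, so each $g_i(z)$ has strictly positive imaginary part; summing, $g(z)$ is analytic on $\mbC^+$ with $\Im g(z)>0$. The Herglotz representation then gives
\begin{equation*}
g(z)=\alpha+\beta z+\int_{\mbR}\Bigl(\frac{1}{x-z}-\frac{x}{1+x^2}\Bigr)\mu(dx),\qquad z\in\mbC^+,
\end{equation*}
for some $\alpha\in\mbR$, $\beta\geq 0$, and a positive Borel measure $\mu$ with $\int(1+x^2)^{-1}\mu(dx)<\infty$. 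A direct asymptotic expansion of (\ref{Eq of MDE 3 order}) along $z=iy$ as $y\to\infty$ yields $g_i(z)=-\mfc_i/z+\mrO(|z|^{-3})$, and since $\sum_i\mfc_i=1$ we obtain $g(z)=-1/z+\mrO(|z|^{-3})$. Plugging this back into the Herglotz representation forces $\beta=0$ and $\mu(\mbR)=1$, and absorbing the constant $\alpha$ into the finite-measure kernel yields $g(z)=\int(x-z)^{-1}\nu(dx)$ for the probability measure $\nu:=\mu$.

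The bounded-support conclusion requires a separate argument, and this is the main obstacle, since the contraction mapping in Lemma \ref{Lem of contraction} degenerates as $\Im z\downarrow 0$. My approach is to show directly that the vector Dyson equation admits a real-valued analytic solution on $\mbR\setminus[-R,R]$ for $R$ sufficiently large. For real $x$ with $|x|>R$, I would apply Banach's fixed-point theorem to the map $\bbu\mapsto-\mfc/(x+\bbS_d\bbu)$ restricted to a real ball of radius $\mrO(|x|^{-1})$ around $-\mfc/x$; elementary estimates show that for $|x|$ large this map is a contraction into itself on that ball, producing a real fixed point $\bbu(x)\in\mbR^d$. The Jacobian in $\bbu$ of the defining relation equals $\bbI_d-\diag(\bbu(x)^{\circ 2}\circ\mfc^{-1})\bbS_d$, which is invertible for $|x|$ large by the same Perron--Frobenius-type reasoning as in Proposition \ref{Pro of stability operator}; the implicit function theorem then extends $\bbu(x)$ analytically to a complex neighborhood of each such $x$. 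By uniqueness of analytic continuation this extended function agrees with $\bbg(z)$ on $\mbC^+$, so $\Im g$ extends continuously to zero on $\mbR\setminus[-R,R]$, and Stieltjes inversion gives $\nu(\mbR\setminus[-R,R])=0$. The key technical subtlety is that the natural contraction in the paper is designed for $\mbC_{\eta_0}^+$ and loses its contraction constant near the real axis; the workaround is to contract on a small real ball whose radius depends on $|x|$, so that the Lipschitz bound is controlled by $|x|^{-1}$ rather than $\Im z$.
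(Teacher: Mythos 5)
Your argument for the probability-measure part is correct but is essentially a repackaging of what the paper does. You invoke the full Nevanlinna--Herglotz representation and then kill the linear term and normalize via the asymptotics $g(iy)=-1/(iy)+\mrO(y^{-3})$; the paper instead cites the criterion that a Herglotz function $\phi$ is the Stieltjes transform of a measure of total mass $\alpha$ iff $i\eta\phi(i\eta)\to-\alpha$, and checks it directly from $\mfc_j+i\eta g_j(i\eta)=-g_j(i\eta)\sum_{k\neq j}g_k(i\eta)=\mrO(\eta^{-2})$ applied componentwise (so each $g_j$ is the transform of a measure of mass $\mfc_j$, and these sum to $1$). Both need the \emph{a priori} bound $|g_j(z)|\le\mfc_j/\Im z$, which falls out of the equation; the computations are equivalent.

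The bounded-support part is where you take a genuinely different route, and where there is a gap. Your real fixed-point construction on $\mbR\setminus[-R,R]$ and the Jacobian-invertibility/IFT step are sound as far as they go, but the sentence ``by uniqueness of analytic continuation this extended function agrees with $\bbg(z)$ on $\mbC^+$'' is doing work it cannot do on its own. Analytic continuation tells you two analytic functions on a connected open set coincide \emph{once you know they coincide on some subset with an accumulation point}; you have not produced such a subset. The polynomial system $\mfc_i+v_i(z+\sum_{j\neq i}v_j)=0$ has several solution branches, and for $z$ near the real axis you cannot use the only estimate you have in hand for $\bbg$, namely $|g_i(z)|\le\mfc_i/\Im z$, to place $\bbg(z)$ in the small ball where your fixed point is unique. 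To close this you must either (i) show that the IFT extension $\bbu(z)$ satisfies $\Im u_i(z)>0$ for $z$ in the upper part of your neighborhood and then invoke pointwise uniqueness of the Herglotz-branch solution at such a $z$, or (ii) continue $\bbu$ analytically along a path from $x$ up to a point with $\Im z$ comparable to $|z|$, where the bound $|g_i(z)|\le\mfc_i/\Im z$ does give $\bbg(z)=-\mfc/z+\mrO(|z|^{-3})$, and match $\bbu$ to $\bbg$ against the locally unique Banach fixed point there. Neither argument is present.

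For comparison, the paper's proof of finite support avoids the real axis and the branch problem entirely. It first establishes $\Vert\bbg(z)\Vert_\infty\le 2/|z|$ on the full region $H_0=\{z\in\mbC^+:|z|>2\sqrt{d-1}(1+\epsilon)\}$ via a connectedness bootstrap: the implication ``$\Vert\bbg\Vert_\infty\le|z|/(2(d-1))\Rightarrow\Vert\bbg\Vert_\infty\le 2/|z|$'' forces the continuous map $z\mapsto\sqrt{d-1}\,\Vert\bbg(z)\Vert_\infty$ on the connected set $H_0$ to land in one of two disjoint intervals, and the large-$\Im z$ decay picks out the small one. It then reads off $\Im g_i=\frac{\mfc_i^{-1}|g_i|^2}{1+\mfc_i^{-1}|g_i|^2}\Im(z+g)$ from the imaginary part of the Dyson equation and sums to get $\Im g<4\Im z/(|z|^2-4)$ on $H_0$, so $\Im g\to 0$ as $\Im z\downarrow 0$ and Stieltjes inversion finishes. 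This works uniformly down to the real axis, which is exactly what your contraction-radius trick cannot deliver on its own.
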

\begin{proof}
	Define
	\begin{align}
		g(z)=\sum_{j=1}^dg_j(z)\quad{\rm and}\quad\nu(E):=\lim_{\eta\downarrow0^+}\pi^{-1}g(E+{\rm i}\eta),\label{Eq of g(z)}
	\end{align}
	where \(g_j(z)\) is the \(j\)-th entry of \(\bbg(z)\) defined in (\ref{Eq of MDE 3 order}) and \(E\in\mbR\). We will show that \(g(z)\) is the Stieltjes transform of the probability measure \(\nu\), which has finite support. For a holomorphic function \(\phi:\mathbb{C}^+\to\mathbb{C}^+\) on the complex upper half plane, it is a Stieltjes transform of a measure \(\nu_{\phi}\) on the real line such that \(\nu_{\phi}(\mathbb{R})=\alpha>0\) if and only if \(|{\rm i}\eta\phi({\rm i}\eta)+\alpha|\to0\) as \(\eta\to\infty\), see Theorem B.9 in \cite{bai2010spectral}. Besides, \(\nu_{\phi}\) can be explicitly determined by \(\nu_{\phi}(x)=\pi^{-1}\Im(\phi(x))\) for \(x\in\mathbb{R}\). Therefore, let \(\bbg(z)=(g_1(z),\cdots,g_d(z))'\) be the unique analytic solution (\ref{Eq of MDE 3 order}) of on \(\mbC^+\), since \(\mfc_j+zg_j(z)=-g_j(z)\sum_{k\neq j}g_k(z)\), it implies that \(|\mfc_j+{\rm i}\eta g_j({\rm i}\eta)|\leq(d-1)\eta^{-2}\to0\) as \(\eta\to\infty\), i.e. the induced measure of \(g_j(z)\) has mass \(\mfc_j\). Hence, \(g(z)\) is a Stieltjes transform of a probability measure due to \(\sum_{i=1}^d\mfc_i=1\). Next, suppose \(\Vert\bbg(z)\Vert_{\infty}\leq\frac{|z|}{2(d-1)}\), then
	$$|g_i(z)|=\frac{\mfc_i}{|z+\sum_{j\neq i}g_j(z)|}\leq\frac{\mfc_i}{|z|-\sum_{j\neq i}|g_j(z)|}\leq\frac{2\mfc_j}{|z|}\Longrightarrow\Vert\bbg(z)\Vert_{\infty}\leq\min\{|z|/(2(d-1)),2/|z|\}.$$
	Let \(H_0:=\{z\in\mathbb{C}^+:|z|>2\sqrt{d-1}(1+\epsilon)\}\), where \(\epsilon>0\), then \(z\to\sqrt{d-1}\Vert\bbg(z)\Vert_{\infty}\) maps \(H_0\) into two disjoint regions, i.e. 
	$$z\to\sqrt{d-1}\Vert\bbg(z)\Vert_{\infty}:H_0\to[0,(1+\epsilon)^{-1}]\cup[1+\epsilon,+\infty].$$
	In fact, if we divide \(H_0\) into two disjoint parts, \(H_{0,1}:=\{z\in H_0:\Vert\bbg(z)\Vert_{\infty}\leq |z|/(2(d-1))\}\) and \(H_{0,2}:=H_0\backslash H_{0,1}\), then \(\sqrt{d-1}\Vert\bbg(z)\Vert_{\infty}>1+\epsilon\) for \(z\in H_{0,2}\) and \(\sqrt{d-1}\Vert\bbg(z)\Vert_{\infty}<(1+\epsilon)^{-1}\) for \(z\in H_{0,1}\). Since \(z\to\Vert\bbg(z)\Vert_{\infty}\) is continuous for \(z\in H_0\) and \(\Vert\bbg(z)\Vert_{\infty}\leq\Im(z)^{-1}\to0\) when \(\Im(z)\) is sufficiently large, then \(z\to\sqrt{d-1}\Vert\bbg(z)\Vert_{\infty}\) maps \(H_0\) to \([0,(1+\epsilon)^{-1}]\), i.e. \(\Vert\bbg(z)\Vert_{\infty}\leq2|z|^{-1}\) for \(z\in H_0\). Now, based on the (\ref{Eq of VDE image}), we have
	$$\Im(g_i(z))=\Im(z)\mfc_i^{-1}|g_i(z)|^2+\mfc_i^{-1}|g_i(z)|^2\sum_{j\neq i}\Im(g_j(z))\ \Leftrightarrow\ \Im(g_i(z))=\frac{\mfc_i^{-1}|g_i(z)|^2}{1+\mfc_i^{-1}|g_i(z)|^2}\Im(z+g(z)),$$
	which implies that
	$$\Im(g_i(z))\leq\frac{4\mfc_i}{|z|^2+4\mfc_i}\Im(z+g(z))<\frac{4\mfc_i}{|z|^2}\Im(z+g(z)),\quad{\rm for\ }\forall z\in H_0,$$
	where the first inequality is valid due to \(|g_i(z)|\leq\frac{2\mfc_i}{|z|}\). As a result, by summing all \(\Im(g_i(z))\), we have
	$$\Im(g(z))<\frac{4}{|z|^2}\Im(z+g(z))\quad\Rightarrow\quad\Im(g(z))<\frac{4\Im(z)}{|z|^2-4},$$
	where we use the fact that \(|z|>2+\epsilon\) for \(z\in H_0\), then
	$$\Im(g(z))<\frac{\Im(z)}{\epsilon(2+\epsilon)}\to0\quad{\rm as\ }\Im(z)\to0,$$
	Hence the induced measure \(\nu(E)\) of \(g(z)\) has  mass $0$ for all \(E>2\sqrt{d-1}\), i.e. \(\nu\) has finite support.
\end{proof}
Here, let
\begin{align}
	\zeta:=\inf\left\{E>0:\lim_{\eta\to0}\Im(g(E+{\rm i}\eta))=0,\eta>0\right\},\label{Eq of support boundary}
\end{align}
be the right boundary of the support of \(\nu\). Due to the symmetry of \(\nu\), the absolute value of the left boundary is the same as the right one. Now, combining (\ref{Eq of Extreme eigenvalues d=3}) and (\ref{Eq of support boundary}), we give the following stable region for the spectral distribution of \(\bbM\):
\begin{align}
	\mcS_{\eta_0}:=\{z\in\mbC^+:{\rm dist}(z,[-\max(\mfv_3,\zeta),\max(\mfv_3,\zeta)])\geq\eta_0,|\Re(z)|,|\Im(z)|\leq\eta_0^{-1}\},\label{Eq of stability region}
\end{align}
where \(\eta_0\) is sufficiently small such that \(\eta_0^{-1}>\max(\mfv_3,\zeta)\). Consequently, since \(\mbP(\Vert\bbM\Vert\leq\mfv_3)\geq1-\mro(N^{-l})\) for any \(l>0\), then
\begin{align}
	\mbP(\Vert\bbQ(z)\Vert\leq{\rm dist}(z,[-\max(\mfv_d,\zeta),\max(\mfv_d,\zeta)])^{-1}\leq\eta_0^{-1})\geq1-\mro(N^{-l}).\label{Eq of spectral norm mcE d=3}
\end{align}
Thus, without further specifications, we assume \(\Vert\bbQ(z)\Vert\leq\eta_0^{-1}\) for any \(z\in\mcS_{\eta_0}\) in the following contexts.
\subsection{Singularity of the limiting spectral distribution}\label{Sec of Singularity}
Recall that we assume \(\mfc_1=\max_{1\leq l\leq d}\mfc_l\), we will prove that
\begin{thm}\label{Thm of Singularity}
	\(\nu\) defined in {\rm (\ref{Eq of g(z)})} has a point mass at \(0\) (and this is its only point mass) if and only if \(\mfc_1\geq1/2\).
\end{thm}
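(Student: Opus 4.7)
The plan is to study the vector Dyson equation (\ref{Eq of MDE 3 order}) along the imaginary axis $z = i\eta$ and directly compute the atomic mass via the standard identity $\nu(\{0\}) = \lim_{\eta\downarrow 0}\eta\,\Im g(i\eta)$.

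First, I would exploit the symmetry $\bbg(-\bar z) = -\overline{\bbg(z)}$, which follows by conjugating (\ref{Eq of MDE 3 order}) and invoking the uniqueness in Theorem \ref{Thm of Dyson}. Specializing at $z = i\eta$ gives $\bbg(i\eta) = -\overline{\bbg(i\eta)}$, so every $g_j(i\eta)$ is purely imaginary. Writing $g_j(i\eta) = i\,y_j(\eta)$ with $y_j(\eta) > 0$ and substituting into (\ref{Eq of MDE 3 order}) yields the real system
\begin{equation*}
y_j\bigl(\eta + S(\eta) - y_j\bigr) = \mathfrak{c}_j, \quad j = 1,\ldots,d,
\end{equation*}
where $S(\eta) := \sum_{j=1}^d y_j(\eta)$, so that $g(i\eta) = iS(\eta)$ and $\nu(\{0\}) = \lim_{\eta\downarrow 0}\eta S(\eta)$. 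Summing over $j$ also provides the clean auxiliary identity $\|y\|_2^2 = S^2 + \eta S - 1$.

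Next, I would establish the key dichotomy: as $\eta \downarrow 0$, either $S(\eta)$ remains bounded, in which case $\eta S(\eta) \to 0$ and there is no atom, or exactly one coordinate $y_{j^\ast}$ diverges. Indeed, simultaneous blow-up of two indices $j_1, j_2$ is forbidden because the $j_1$-equation forces $\eta + S - y_{j_1} \to 0$, while $S - y_{j_1} \geq y_{j_2} \to \infty$ is a contradiction. In the single blow-up regime, the remaining relations give $y_j \sim \mathfrak{c}_j/y_{j^\ast}$ for $j \neq j^\ast$, hence $S - y_{j^\ast} = (1-\mathfrak{c}_{j^\ast})/y_{j^\ast} + o(1/y_{j^\ast})$. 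Plugging back into the $j^\ast$-equation yields
\begin{equation*}
y_{j^\ast}\eta = 2\mathfrak{c}_{j^\ast} - 1 + o(1).
\end{equation*}
Positivity of the left-hand side forces $\mathfrak{c}_{j^\ast} \geq 1/2$, and since $\sum_j \mathfrak{c}_j = 1$ this can only hold at the (essentially unique) maximizer, giving the \emph{only-if} direction together with the explicit mass $\nu(\{0\}) = 2\max_j \mathfrak{c}_j - 1$.

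For the \emph{if} direction when $\max_j \mathfrak{c}_j > 1/2$, I would construct the blow-up solution explicitly via the ansatz $y_{j^\ast} = (2\mathfrak{c}_{j^\ast} - 1)/\eta + O(1)$ and $y_j = O(\eta)$ for $j \neq j^\ast$, and verify it as a genuine fixed point of the real system for small $\eta$ by a Banach contraction argument in the spirit of Lemma \ref{Lem of contraction}; uniqueness in Theorem \ref{Thm of Dyson} then identifies it with $\bbg(i\eta)$. The uniqueness of the singularity follows from the bounded-support result (Theorem \ref{Lem of finite support}): away from $[-\zeta,\zeta]$, analyticity is automatic, and inside the support the measure admits a bounded Lebesgue density, precluding other atoms. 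The main technical obstacle is the critical case $\mathfrak{c}_1 = 1/2$, where the balance $y_{j^\ast}\eta \to 0$ degenerates and a slower blow-up (expected scaling $y_{j^\ast} \sim C\eta^{-1/2}$) must be resolved via a refined matched-asymptotic expansion, producing a singularity of $g$ at the origin consistent with the theorem statement even though the atomic mass itself vanishes.
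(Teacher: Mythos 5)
Your approach is genuinely different from the paper's. The paper analyzes the Dyson equation along the imaginary axis by parametrizing each $g_j({\rm i}\eta)$ as a root of a quadratic in $\mathfrak{h}(\eta)=\eta+\tilde g(\eta)$, tracks which root (branch) each coordinate sits on, and reduces the problem to whether the function $F_d(x)=\sum_{i=2}^d\sqrt{x^2-4\mathfrak{c}_i}-\sqrt{x^2-4\mathfrak{c}_1}-(d-2)x$ has a finite zero; blow-up of $\mathfrak{h}$ is then equivalent to non-existence of such a zero, which is shown to hold iff $\mathfrak{c}_1\geq 1/2$. You instead work directly with the real system $y_j(\eta+S-y_j)=\mathfrak{c}_j$: a short argument shows at most one coordinate can diverge, and matching leading orders forces $y_{j^\ast}\eta\to 2\mathfrak{c}_{j^\ast}-1$. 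This is cleaner and also more informative, because it exhibits the explicit atomic mass $\nu(\{0\})=(2\max_j\mathfrak{c}_j-1)_+$, something the paper's binary singular/non-singular dichotomy does not produce.

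That said, your argument exposes (and you correctly flag) a genuine subtlety that the paper elides: at the boundary $\mathfrak{c}_1=1/2$ the balance $y_{j^\ast}\eta\to 2\mathfrak{c}_1-1=0$, so the atomic mass vanishes and there is \emph{no} point mass, only a sub-linear singularity of $g$ at the origin. The paper's own proof only shows $\mathfrak{h}(\eta)\to\infty$ (i.e.\ $g$ is singular at $0$); the step from ``singular'' to ``point mass'' is never justified and is in fact false at $\mathfrak{c}_1=1/2$. Your refined-asymptotics caveat is therefore the right instinct, although the critical blow-up rate is $\eta^{-1/3}$, not $\eta^{-1/2}$: expanding $F_d(x)$ to the next order when $\mathfrak{c}_1=1/2$ gives $F_d(x)\sim 2x^{-3}\bigl(\tfrac14-\sum_{i\geq 2}\mathfrak{c}_i^2\bigr)$, hence $\mathfrak{h}(\eta)\sim C\eta^{-1/3}$. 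Two further things need firming up. For the ``if'' direction you sketch a contraction-mapping construction; a more economical route is to note that if $S(\eta)$ stayed bounded along a subsequence, the limit $y_j^0$ would solve $y_j^0(S^0-y_j^0)=\mathfrak{c}_j$ with $\max_j\mathfrak{c}_j\leq (S^0)^2/4$, and then run through the paper's branch/spectral-gap constraints (or analyze $F_d$) to rule this out. For uniqueness of the atom, you appeal to a bounded Lebesgue density, which is more than you need and not obviously available; the precise and sufficient statement is the paper's preliminary Lemma that $\lim_{\eta\downarrow 0}\Im g(E+{\rm i}\eta)<\infty$ for every $E\neq 0$, which should be invoked directly.
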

By \eqref{Eq of g(z)}, we know that \(\nu\) has a point mass at \(E\in\mbR\) if and only if \(\lim_{\eta\to0^+}\Im(g(E+{\rm i}\eta))=\infty\). Thus, let us  show that 
\begin{lem}
	For any \(z=E+{\rm i}\eta\in\mbC^+\) and \(E\neq0\), then
	$$\lim_{\eta\to0^+}\Im(g(E+{\rm i}\eta))<\infty,$$
	where \(g(z)=\sum_{j=1}^dg_j(z)\) defined in {\rm (\ref{Eq of g(z)})}.
\end{lem}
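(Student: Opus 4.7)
The strategy is to argue by contradiction: assume there exists a sequence $\eta_n\downarrow 0$ along which $\Im g(E+\mathrm{i}\eta_n)\to\infty$, and show that this forces $E=0$. Taking imaginary parts of \eqref{Eq of MDE 3 order} (which has already been recorded as \eqref{Eq of VDE image}) and rearranging each coordinate gives the identity
\begin{equation*}
	\Im g_i(z)\;=\;\alpha_i(z)\bigl(\Im z+\Im g(z)\bigr),\qquad \alpha_i(z):=\frac{|g_i(z)|^2/\mathfrak{c}_i}{1+|g_i(z)|^2/\mathfrak{c}_i}\in[0,1).
\end{equation*}
Summing over $i$ and solving yields $\Im g(z)=\tfrac{\sum_i\alpha_i(z)}{1-\sum_i\alpha_i(z)}\,\Im z$, so the assumption $\Im g(z_n)\to\infty$ with $\Im z_n=\eta_n\to 0$ forces $\sum_i\alpha_i(z_n)\to 1^{-}$.

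The next step localizes the divergence to a single coordinate. Since $\Im g_j\geq 0$, one has $\Im g(z_n)\leq\sum_j|g_j(z_n)|$, so along a suitable subsequence there is an index $i_0$ with $|g_{i_0}(z_n)|\to\infty$. This makes $\alpha_{i_0}(z_n)\to 1$, and combined with $\sum_i\alpha_i(z_n)\to 1$ and the nonnegativity of each $\alpha_j$, one gets $\sum_{j\neq i_0}\alpha_j(z_n)=\sum_i\alpha_i(z_n)-\alpha_{i_0}(z_n)\to 0$. Hence $\alpha_j(z_n)\to 0$ for every $j\neq i_0$, which is equivalent to $|g_j(z_n)|\to 0$.

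The contradiction now emerges from the Dyson equation itself: writing
\begin{equation*}
	g_{i_0}(z_n)\;=\;-\frac{\mathfrak{c}_{i_0}}{z_n+\sum_{j\neq i_0}g_j(z_n)},
\end{equation*}
the denominator tends to $E+0=E\neq 0$ because $z_n\to E$ and $g_j(z_n)\to 0$ for $j\neq i_0$. Therefore $g_{i_0}(z_n)\to -\mathfrak{c}_{i_0}/E$, a finite limit, contradicting $|g_{i_0}(z_n)|\to\infty$. The only delicate bookkeeping is the step $\sum_{j\neq i_0}\alpha_j\to 0$, which uses $\sum_i\alpha_i\leq 1$; this inequality is automatic from the displayed formula for $\Im g$ together with $\Im g(z)>0$ on $\mathbb{C}^+$. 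I do not expect any substantial obstacle beyond careful subsequence extraction and the fact that the argument inherently exploits $E\neq 0$ precisely at the last line, which is the feature that distinguishes this case from the possible point mass at $0$ analyzed in Theorem \ref{Thm of Singularity}.
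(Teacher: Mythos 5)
Your proof is correct and follows essentially the same contradiction strategy as the paper: localize the blow-up to a single coordinate $i_0$, show that the other coordinates of $\bbg$ vanish, and then exploit the Dyson equation for the $i_0$-th component together with $E\neq 0$ to contradict $|g_{i_0}|\to\infty$. The bookkeeping differs in minor ways: you deduce $|g_j|\to 0$ for $j\neq i_0$ by passing through the normalized ratios $\alpha_i$ and the balance condition $\sum_i\alpha_i\to 1$, whereas the paper reads off $|g_j|\leq\mathfrak{c}_j/\Im g_{i_0}\to 0$ directly from the imaginary-part equation; and for the final contradiction you plug back into the full complex Dyson equation, whereas the paper uses the real-part equation to identify $\lim\sum_{j\neq i_0}\Re g_j=-E$ and then contradicts $\Re g_j\to 0$. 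Both variants are equivalent in content, and your careful subsequence extraction for the localization step is if anything cleaner than the paper's ``without loss of generality'' shortcut.
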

\begin{proof}
	Taking the imaginary and real parts of (\ref{Eq of MDE 3 order}) respectively, we have
	\begin{align}
		\frac{\mfc\circ\Im(\bbg(z))}{|\bbg|^2}=\eta+\bbS_d\Im(\bbg(z))\quad{\rm and}\quad-\frac{\mfc\circ\Re(\bbg(z))}{|\bbg(z)|^2}=E+\bbS_d\Re(\bbg(z)).\label{Eq of Real part}
	\end{align}
	Suppose there exists an \(E\neq 0\) such that \(\lim_{\eta\to0^+}\Im(g(E+{\rm i}\eta))=\infty\), without loss of generality, assume \(\lim_{\eta\to0^+}\Im(g_1(E+{\rm i}\eta))=\infty\), which implies that \(\lim_{\eta\to0^+}|g_1(E+{\rm i}\eta)|=\infty\). By the second equation in (\ref{Eq of Real part}), we have
	$$|g_1(E+{\rm i}\eta)|=\frac{-\mfc_1\cos\theta_1}{E+\sum_{i\neq1}^d\Re(g_i(E+{\rm i}\eta))},$$
	where \(\cos\theta_j:=\Re(g_j(z))/|g_j(z)|\) and \(\sin\theta_j:=\Im(g_j(z))/|g_j(z)|\), so we obtain \(\lim_{\eta\to0^+}\sum_{i\neq1}^d\Re(g_i(E+{\rm i}\eta))=-E\). On the other hand, notice that for \(j\neq1\)
	$$|g_j(E+{\rm i}\eta)|=\frac{\mfc_j\sin\theta_j}{\eta+\sum_{i\neq j}^d\Im(g_i(E+{\rm i}\eta))}<\frac{1}{\Im(g_1(E+{\rm i}\eta))},$$
	it yields that \(\lim_{\eta\to0^+}|g_j(E+{\rm i}\eta)|=0\) and \(\lim_{\eta\to0^+}\Re(g_j(E+{\rm i}\eta))=0\). But it is a contradiction since
	$$\lim_{\eta\to0^+}\sum_{i\neq1}^d\Re(g_i(z))=-E\neq0,$$
	which proves our claim.
\end{proof}
By the above lemma, we only need to focus on the limiting behaviors of \(g({\rm i}\eta)\) as \(\eta\to0^+\). If we replace \(z\) by \(-\bar{z}\) in (\ref{Eq of MDE 3 order}) and take the imaginary part on the both sides of (\ref{Eq of MDE 3 order}) respectively, i.e.
$$-\frac{\mfc}{\bbg(-\bar{z})}=-\bar{z}+\bbS_d\bbg(-\bar{z})\quad\Longleftrightarrow\quad-\frac{\mfc}{-\overline{\bbg(z)}}=(-\bar{z})+\bbS_d(-\overline{\bbg(z)})$$
it implies that \(\boldsymbol{g}(-\bar{z})=-\overline{\boldsymbol{g}(z)}\) by Theorem \ref{Thm of Dyson}. Hence, \(\Re(g_i({\rm i\eta}))=0\) for \(\eta>0,i=1,\cdots,d\) and denote \(g_i({\rm i}\eta):={\rm i}\tilde{g}_i(\eta)\), then (\ref{Eq of MDE 3 order}) can be rewritten as
\begin{align}
	\frac{\boldsymbol{\mathfrak{c}}}{\tilde{\boldsymbol{g}}(\eta)}=\eta+\boldsymbol{S}_d\tilde{\boldsymbol{g}}(\eta).\label{Eq of MDE image part}
\end{align}
Before giving the proof of Theorem \ref{Thm of Singularity}, we need the following results:
\begin{lem}\label{Lem of gz}
	Let \(c>0,z=E+{\rm i}\eta\in\mathbb{C}^+,E\geq0\) and \(x_{1,2}=r_{1,2}\exp({\rm i}\theta_{1,2})\) be the solutions of \(x^2-zx-c=0\), where \(\theta_{1,2}\in[0,2\pi)\). Without loss of generality, let \(r_1\geq r_2\), then we have 
	\begin{itemize}
		\item \(r_2\leq\sqrt{c}\leq r_1\) and \(\theta_1,\theta_2\in[0,\pi],\theta_1+\theta_2=\pi\).
		\item \({\rm sign}(\Re(x_1))=-{\rm sign}(\Re(x_2))\) and \(\Im(x_1),\Im(x_2)>0\).
		\item Let \(\theta(E,\eta,c):=\min\{\theta_1,\theta_2\}\) and \(r_1:=r_1(E,\eta,c)\), then \(\partial_E \theta<0,\partial_{\eta} \theta>0\) and \(\partial_E r_1,\partial_{\eta} r_1>0\).
	\end{itemize}
\end{lem}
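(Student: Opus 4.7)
The plan is to leverage Vieta's formulas for $x^2 - zx - c = 0$, which give $x_1 + x_2 = z$ and $x_1 x_2 = -c$, and then translate these algebraic identities into the three bullet points. The modulus claim $r_2 \leq \sqrt{c} \leq r_1$ is immediate from $r_1 r_2 = |x_1 x_2| = c$ and the convention $r_1 \geq r_2$. For the arguments, writing $-c = c\,e^{\mathrm{i}\pi}$ gives $\theta_1 + \theta_2 \equiv \pi \pmod{2\pi}$; to confine both to $[0,\pi]$ I first rule out real roots, since $x_j \in \mathbb{R}$ would force either $x_j = 0$ (so $c = 0$, contradicting $c > 0$) or $z = (x_j^2 - c)/x_j \in \mathbb{R}$ (contradicting $z \in \mathbb{C}^+$). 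The identity $x_2 = -c/x_1$ yields $\Im(x_2) = c\,\Im(x_1)/|x_1|^2$, so $\Im(x_1)$ and $\Im(x_2)$ share a sign; since their sum equals $\eta > 0$, both are strictly positive. Hence $\theta_j \in (0,\pi)$ and therefore $\theta_1 + \theta_2 = \pi$. The opposite-sign claim for the real parts then follows from $\theta_2 = \pi - \theta_1$, giving $\Re(x_2) = -r_2 \cos\theta_1$ against $\Re(x_1) = r_1 \cos\theta_1$.

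For the monotonicity claims, I first identify which root carries $\theta = \min\{\theta_1,\theta_2\}$. Summing the real parts gives $(r_1 - r_2)\cos\theta_1 = E \geq 0$, so combined with $r_1 \geq r_2$ this forces $\cos\theta_1 \geq 0$, i.e. $\theta_1 = \theta \in (0,\pi/2]$. Substituting $r_2 = c/r_1$ and splitting $x_1 + x_2 = z$ into real and imaginary parts produces the fundamental system
\begin{align*}
E \;=\; \Big(r_1 - \tfrac{c}{r_1}\Big)\cos\theta, \qquad \eta \;=\; \Big(r_1 + \tfrac{c}{r_1}\Big)\sin\theta,
\end{align*}
which expresses $(E,\eta)$ as smooth functions of $(r_1,\theta)$ on the domain $r_1 \geq \sqrt{c}$, $\theta \in (0,\pi/2]$.

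The last step is an application of the inverse function theorem. Writing $P(r_1) = r_1 - c/r_1$ and $Q(r_1) = r_1 + c/r_1$, so that $P' = 1 + c/r_1^2 > 0$ and $Q' = 1 - c/r_1^2 \geq 0$ on the domain, a direct computation gives the Jacobian
\begin{align*}
J \;=\; \frac{\partial(E,\eta)}{\partial(r_1,\theta)} \;=\; P'Q\cos^2\theta + PQ'\sin^2\theta \;>\; 0
\end{align*}
throughout the interior. Inverting produces
\begin{align*}
\partial_E r_1 = \tfrac{Q\cos\theta}{J}, \quad \partial_\eta r_1 = \tfrac{P\sin\theta}{J}, \quad \partial_E \theta = -\tfrac{Q'\sin\theta}{J}, \quad \partial_\eta \theta = \tfrac{P'\cos\theta}{J},
\end{align*}
from which the four sign claims follow by inspection. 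The only real obstacle is bookkeeping: one must check that the larger-modulus root is also the smaller-argument root so the substitution $r_2 = c/r_1$ respects the convention, and observe that the single degenerate configuration $(E,\eta) = (0, 2\sqrt{c})$ (i.e. $r_1 = \sqrt{c}$ and $\theta = \pi/2$) where some factors in the numerators vanish is excluded in the generic interior where the strict inequalities are claimed.
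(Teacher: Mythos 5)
Your proof reaches the same conclusions by a genuinely different route than the paper. Both arguments pass through the same fundamental system $E=(r_1-r_2)\cos\theta$, $\eta=(r_1+r_2)\sin\theta$, but from there they diverge: the paper solves this system explicitly, obtaining the closed-form expression $\sin^2\theta = \big(E^2+\eta^2+4c-\sqrt{(E^2+\eta^2+4c)^2-16c\eta^2}\big)/(8c)$ and an analogous nested-radical formula for $r_1$, and then differentiates those formulas term by term to read off the signs of $\partial_E\theta,\partial_\eta\theta,\partial_E r_1,\partial_\eta r_1$. You instead work implicitly: parametrize $(E,\eta)$ by $(r_1,\theta)$ via $P(r_1)=r_1-c/r_1$, $Q(r_1)=r_1+c/r_1$, check that the Jacobian $J=P'Q\cos^2\theta+PQ'\sin^2\theta$ is positive on the interior, and then read the four partial derivatives off the inverse Jacobian. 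Your approach avoids the algebraic heavy lifting of solving the quadratic in $\sin^2\theta$ and handling the explicit radicals, and it makes the sign determinations essentially transparent (each numerator is a product of manifestly nonnegative factors). The cost is that one must set up and invert the Jacobian carefully; your bookkeeping on the degenerate boundary is sound and in fact mirrors the same latent issue in the paper's computation, where the explicit formulas for $\partial_E s$ and $\partial_E r_1$ likewise carry an overall factor of $E$ and thus vanish on $E=0$. One small remark: the full set of configurations where some of the four derivatives degenerate is actually the entire boundary $E=0$ (with $\cos\theta$ or $P,Q'$ vanishing on the two branches $\eta\gtrless 2\sqrt{c}$), not just the single corner $(0,2\sqrt c)$ where $J$ itself vanishes; this is worth stating if you want the strict inequalities on a precisely delineated open set, but it does not affect the conclusion and is no worse than what the paper leaves implicit.
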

\begin{proof}
	 Since \(x_1x_2=-c\), it implies that \(r_2\leq\sqrt{c}\leq r_1\) and \(\theta_1+\theta_2=\pi\ {\rm or}\ 3\pi\). Notice that \(x_1+x_2=E+{\rm i}\eta\) and \(E\geq0,\eta>0\), we further conclude that \(\theta_1+\theta_2=\pi\) and
	$$(r_1-r_2)\cos\theta=E,\quad(r_1+r_2)\sin\theta=\eta,$$
	where \(\theta:=\min\{\theta_1,\theta_2\}=\theta_1\). Then we can solve that
	\begin{align}
		&\sin^2\theta=\frac{E^2+\eta^2+4c-\sqrt{(E^2+\eta^2+4c)^2-16c\eta^2}}{8c}:=s(E,\eta,c),\notag\\
		&\partial_E s(E,\eta,c)=\frac{E}{4c}\left(1-\frac{E^2+\eta^2+4c}{\sqrt{(E^2+\eta^2+4c)^2-16c\eta^2}}\right)<0,\notag\\
		&\partial_{\eta}s(E,\eta,c)=\frac{\eta}{4c}\left(1-\frac{E^2+\eta^2-4c}{\sqrt{(E^2+\eta^2-4c)^2+16cE^2}}\right)>0.\notag
	\end{align}
	Notice that \(\theta=\min\{\theta_1,\theta_2\}\in[0,\pi/2]\) and the monotonicity of \(\sin^2\theta\) and \(\theta\) are the same when \(\theta\in[0,\pi/2]\), which implies that \(\partial_E \theta<0,\partial_{\eta} \theta>0\). Besides, we also have
	$$r_1=\frac{1}{2}\left(\frac{E}{\cos\theta}+\frac{\eta}{\sin\theta}\right)=\frac{1}{2}\left(\frac{E}{\sqrt{1-s}}+\frac{\eta}{\sqrt{s}}\right)=\frac{1}{2\sqrt{2}}\left(\sqrt{M+\sqrt{M^2-16c\eta^2}}+\sqrt{N+\sqrt{N^2+16cE^2}}\right),$$
	where \(M:=E^2+\eta^2+4c,N:=E^2+\eta^2-4c\) and \(M^2-16c\eta^2=N^2+16cE^2\). Therefore, it implies that
	\begin{align}
		\partial_E r_1&=\frac{E}{2\sqrt{2}}\left(1+\frac{M}{\sqrt{M^2-16c\eta^2}}\right)\left[\left(M+\sqrt{M^2-16c\eta^2}\right)^{-1/2}+\left(N+\sqrt{N^2+16cE^2}\right)^{-1/2}\right]>0,\notag\\
		\partial_{\eta} r_1&=\frac{\eta}{2\sqrt{2}}\left(1+\frac{N}{\sqrt{N^2+16cE^2}}\right)\left[\left(M+\sqrt{M^2-16c\eta^2}\right)^{-1/2}+\left(N+\sqrt{N^2+16cE^2}\right)^{-1/2}\right]>0,\notag
	\end{align}
	which completes our proof.
\end{proof}
As a consequence of above lemma, we have that
\begin{lem}\label{Cor of max mfc}
	Recall that \(\mfc_1=\max_{1\leq i\leq d}\mfc_i\), when \(\eta\geq\max_{1\leq i\leq d}\mfc_i^{-1/2}\), we have 
	\begin{align*}
		\mfc_1^{-1/2}|g_1({\rm i}\eta)|=\arg\max_{1\leq i\leq d}\mathfrak{c}_i^{-1/2}|g_i({\rm i}\eta)|
	\end{align*}
	and
	\begin{align*}
		\max_{1\leq i\leq d}|g_i({\rm i}\eta)|\leq\frac{d}{2(d-1)}\sqrt{\eta^2+4(d-1)\mfc_1}-\frac{\eta}{d-1}.
	\end{align*}
\end{lem}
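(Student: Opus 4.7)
The plan is to exploit the fact that each $g_i(\mathrm{i}\eta)$ is purely imaginary --- a consequence of the symmetry $\bbg(-\bar z)=-\overline{\bbg(z)}$ derived just above from uniqueness of the Dyson solution --- which collapses the vector Dyson equation to scalar quadratics, one per coordinate. Both claims will then follow from elementary monotonicity arguments applied to the resulting explicit formulas.

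First I would write $g_i(\mathrm{i}\eta)=\mathrm{i}\tilde g_i$ with $\tilde g_i>0$, set $T:=\eta+\sum_j\tilde g_j$, and substitute into \eqref{Eq of MDE image part} to obtain, for each $i$, the quadratic $\tilde g_i^2-T\tilde g_i+\mfc_i=0$. Because the two roots multiply to $\mfc_i$ while summing to $T\ge\eta$, and because $\tilde g_i\to 0$ at infinity (shown in the proof of Theorem \ref{Lem of finite support}), the smaller root must be selected, giving
\[
\tilde g_i=\frac{T-\sqrt{T^2-4\mfc_i}}{2}=\frac{2\mfc_i}{T+\sqrt{T^2-4\mfc_i}},
\]
with $T\ge 2\sqrt{\mfc_1}$ automatically enforced by the reality of $\tilde g_i$. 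For the first assertion, $\tilde g_i/\sqrt{\mfc_i}=2\sqrt{\mfc_i}/(T+\sqrt{T^2-4\mfc_i})$ has numerator strictly increasing and denominator strictly decreasing in $\mfc_i\in(0,T^2/4]$, so this ratio is strictly increasing in $\mfc_i$ and the maximum over $i$ is attained at $\mfc_1=\max_i\mfc_i$. For the second assertion, $\partial_{\mfc_i}\tilde g_i=1/\sqrt{T^2-4\mfc_i}>0$ yields $\max_i\tilde g_i=\tilde g_1$; combining the trivial estimate $\tilde g\ge\tilde g_1$ (hence $T-\tilde g_1\ge\eta$) with the factorization $\mfc_1=\tilde g_1(T-\tilde g_1)$ produces the sharp bound $\tilde g_1\le\mfc_1/\eta$.

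It then remains to verify that $\mfc_1/\eta$ is dominated by the stated quantity $U:=\tfrac{d}{2(d-1)}\sqrt{\eta^2+4(d-1)\mfc_1}-\tfrac{\eta}{d-1}$. Clearing denominators and squaring (both sides are positive for $d\ge 3$), the inequality $\mfc_1/\eta\le U$ reduces to
\[
\frac{4(d-1)^2\mfc_1^2}{\eta^2}\le (d^2-4)\eta^2+4(d-1)(d^2-2)\mfc_1.
\]
The hypothesis $\eta\ge\max_i\mfc_i^{-1/2}=(\min_i\mfc_i)^{-1/2}\ge 1$ gives $\eta^2\ge 1\ge\mfc_1$, so $4(d-1)^2\mfc_1^2/\eta^2\le 4(d-1)^2\mfc_1$; since $d^2-d-1\ge 0$ for $d\ge 2$ one has $d-1\le d^2-2$, and the right-hand side dominates. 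The main obstacle is conceptual rather than technical: the stated $U$ is substantially looser than the natural bound $\mfc_1/\eta$, and the content of the lemma lies in recognising that the scalar condition on $\eta$ is calibrated precisely so that the simple chain $\tilde g_1\le\mfc_1/\eta\le U$ goes through, thus avoiding any direct analysis of $U$ itself.
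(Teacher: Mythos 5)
Your proof is correct, and it takes a genuinely different route from the paper for the second estimate.

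For the first assertion (the argmax), the paper argues via the monotonicity of the function $r_1$ from its Lemma~\ref{Lem of gz}, whereas you simply rationalize to $\tilde g_i/\sqrt{\mfc_i}=2\sqrt{\mfc_i}\big/\big(T+\sqrt{T^2-4\mfc_i}\big)$ and read off monotonicity in $\mfc_i$ directly. Same content, more explicit bookkeeping on your side; both are fine.

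For the second assertion the paper and you part ways. The paper bounds $\max_i\tilde g_i\le\sum_i\tilde g_i=\mfh(\eta)-\eta$ and then controls $\mfh(\eta)$ by summing the $d$ root formulas to get $2\eta+(d-2)\mfh(\eta)=\sum_i\sqrt{\mfh^2(\eta)-4\mfc_i}>d\sqrt{\mfh^2(\eta)-4\mfc_1}$, which after squaring produces a quadratic in $\mfh$ whose positive root is exactly the stated $U+\eta$. In other words, $U$ is the \emph{output} of that chain, which is why the expression looks the way it does. You instead use the product identity $\mfc_1=\tilde g_1(T-\tilde g_1)$ with $T-\tilde g_1\ge\eta$ to get the sharper bound $\tilde g_1\le\mfc_1/\eta$, and then verify by hand that $\mfc_1/\eta\le U$ once $\eta\ge(\min_i\mfc_i)^{-1/2}\ge 1$ is in force. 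That verification is correct: your reduction to $4(d-1)^2\mfc_1^2/\eta^2\le(d^2-4)\eta^2+4(d-1)(d^2-2)\mfc_1$ checks out, and the chain $\mfc_1/\eta^2\le 1$ together with $d-1\le d^2-2$ closes it. The trade-off is exactly what you flagged: your argument is more elementary and gives a tighter intermediate bound, but it treats $U$ as a given target rather than explaining why this particular expression was recorded, which the paper's derivation does. One small point of rigor: your justification that $\tilde g_i$ is the smaller root appeals to $\tilde g_i\to 0$ at infinity, whereas the clean argument already available in the hypothesis is $|g_i(\mathrm{i}\eta)|\le\eta^{-1}\le\sqrt{\mfc_i}$ for $\eta\ge\max_i\mfc_i^{-1/2}$, which pins down the branch at every such $\eta$ without invoking continuity from infinity; the paper uses this directly.
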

\begin{proof}
	By the symmetry \(\bbg(-\bar{z})=-\overline{\bbg(z)}\), it suffices to consider \(\Re(z)\geq0\) without loss of generality. Since \(g_i(z)\) is the solution of \(\mathfrak{c}_i+g_i(z)(z+g(z)-g_i(z))=0\) by (\ref{Eq of MDE 3 order}) and \(|g_i(z)|\leq\Im(z)^{-1}\leq\min_{1\leq i\leq d}\mfc_i^{1/2}\), we conclude that \(g_i(z)=r_2(z)\exp({\rm i}\theta_2(z))\) when \(\Im(z)\geq\max_{1\leq i\leq d}\mfc_i^{-1/2}\) by Lemma \ref{Lem of gz}. Let 
	\begin{align}
		h(z):=z+g(z)\quad{\rm and}\quad\mathfrak{g}_i(z):=g_i(z)/\sqrt{\mathfrak{c}_i},\label{Eq of h(z)}
	\end{align}
	then
	$$\mathfrak{g}_i^2(z)-\mathfrak{c}_i^{-1/2}h(z)\mathfrak{g}_i(z)-1=0\ \ {\rm and\ \ }|\mathfrak{g}_i(z)|^{-1}=r_1(\Re(h(z))/\sqrt{\mathfrak{c}_i},\Im(h(z))/\sqrt{\mathfrak{c}_i},1).$$
	Since \(r_1\) is increasing function of the real and imaginary part of \(h(z)/\sqrt{\mathfrak{c}_i}\) by Lemma \ref{Lem of gz}, then
	$$\arg\max_{1\leq i\leq d}\mathfrak{c}_i^{-1/2}|g_i(z)|=\arg\min_{1\leq i\leq d}|\mathfrak{g}_i(z)|^{-1}=\arg\max_{1\leq i\leq d}\mathfrak{c}_i,$$
	which implies that
	\begin{align*}
		\mfc_1^{-1/2}|g_1(z)|=\arg\max_{1\leq i\leq d}\mathfrak{c}_i^{-1/2}|g_i(z)|.
	\end{align*}
	Next, since \(h(z)=z+g(z)\) and \(z={\rm i}\eta\in\mbC^+\), then \(\Re(h({\rm i}\eta))=0\). Hence, let
	\begin{align}
	    h({\rm i}\eta)={\rm i}\mathfrak{h}(\eta),\quad{\rm where\ \ }\mathfrak{h}(\eta)>0,\label{Eq of mfh}
	\end{align}
	and the two solutions of \(\mathfrak{c}_i+g_i(z)(z+g(z)-g_i(z))=0\) are denoted by
	\begin{align}
	    g_i^{\pm}({\rm i}\eta)={\rm i}\tilde{g}_i^{\pm}(\eta):=\frac{h({\rm i}\eta)\pm\sqrt{h^2({\rm i}\eta)+4\mathfrak{c}_i}}{2}=\frac{{\rm i}\mfh(\eta)\pm\sqrt{4\mfc_i-\mfh^2(\eta)}}{2},\label{Eq of gi pm}
	\end{align}
	where \(|\tilde{g}_i^+(\eta)|\geq\sqrt{\mfc_i}\geq|\tilde{g}_i^-(\eta)|\) and \(|\tilde{g}_i^+(\eta)\tilde{g}_i^-(\eta)|=\mfc_i\) by Lemma \ref{Lem of gz}. Recall that \(\Re(g_i({\rm i}\eta))=0\), it implies that \(\mathfrak{h}(\eta)\geq2\max_{1\leq i\leq d}\sqrt{\mathfrak{c}_i}=2\sqrt{\mfc_1}\) whatever \(g_i({\rm i}\eta)={\rm i}\tilde{g}_i^+(\eta)\) or \(g_i({\rm i}\eta)={\rm i}\tilde{g}_i^-(\eta)\). Furthermore, when \(\eta\geq\max_{1\leq i\leq d}\mathfrak{c}_i^{-1/2}\), \(|g_i({\rm i}\eta)|\leq\eta^{-1}\leq\sqrt{\mathfrak{c}_i}\), hence
	\begin{align}
		g_i({\rm i}\eta)={\rm i}\tilde{g}_i^-(\eta)={\rm i}\frac{\mathfrak{h}(\eta)-\sqrt{\mathfrak{h}^2(\eta)-4\mathfrak{c}_i}}{2},\ {\rm for\ }i=1,\cdots,d,\label{Eq of image gi}
	\end{align}
	by Lemma \ref{Lem of gz}. Summing above equations for \(i=1,\cdots,d\), it has
	\begin{align}
		&2{\rm i}(\mfh(\eta)-\eta)=2\sum_{i=1}^dg_i({\rm i}\eta)={\rm i}\Big(d\mfh(\eta)-\sum_{i=1}^d\sqrt{\mathfrak{h}^2(\eta)-4\mathfrak{c}_i}\Big)\notag\\
		\Longrightarrow&2\eta+(d-2)\mathfrak{h}(\eta)=\sum_{i=1}^d\sqrt{\mathfrak{h}^2(\eta)-4\mathfrak{c}_i}>d\sqrt{\mathfrak{h}^2(\eta)-4\max_{1\leq i\leq d}\mathfrak{c}_i},\label{Eq of h image}
	\end{align}
	which implies that
	$$2\max_{1\leq i\leq d}\sqrt{\mathfrak{c}_i}\leq\mathfrak{h}(\eta)<\frac{d}{2(d-1)}\sqrt{\eta^2+4(d-1)\max_{1\leq i\leq d}\mathfrak{c}_i}+\frac{\eta(d-2)}{d-1}.$$
	Finally, since \(|g_i({\rm i}\eta)|=\tilde{g}_i(\eta)\) and \(\mfh(\eta)=\eta+\sum_{i=1}^d\tilde{g}_i(\eta)\), it implies that
	$$|g_i({\rm i}\eta)|\leq\mfh(\eta)-\eta<\frac{d}{2(d-1)}\sqrt{\eta^2+4(d-1)\mfc_1}-\frac{\eta}{d-1},$$
	which completes our proof.
\end{proof}
Based on proofs of above lemma, if \(g_i({\rm i}\eta)=g_i^-({\rm i}\eta)\) for all \(\eta>0\) and \(1\leq i\leq d\), we have 
$$\max_{1\leq i\leq d}|g_i({\rm i}\eta)|<\min\left\{\frac{d}{2(d-1)}\sqrt{\eta^2+4(d-1)\mfc_1}-\frac{\eta}{d-1},\eta^{-1}\right\},$$
In other words, \(g(z)=\sum_{i=1}^dg_i(z)\) will not have a singularity at \(z=0\). However, \(g_i({\rm i}\eta)=g_i^-({\rm i}\eta)\) will not hold under some certain conditions. Here, let \(\Pi_0:=\{\mfc\in\mbR^d:\sum_{l=1}^d\mfc_l=1,\mfc_l\geq0\}\) be the \(d\)-dimensional affine hyperplane, then we define the \emph{invariant branch region} as follows:
\begin{align}
    \Pi_1:=\Big\{\mfc\in\Pi_0:\sum_{i=1}^d\sqrt{\mfc_1-\mfc_i}\leq(d-2)\sqrt{\mfc_1}\Big\},\label{Eq of invariant branch region}
\end{align}
then we provide the following result.
\begin{pro}
    For any \(\mfc\in\Pi_0\backslash\Pi_1\) in \eqref{Eq of invariant branch region}, let \(\eta_1:=\sum_{i=1}^d\sqrt{\mathfrak{c}_1-\mathfrak{c}_i}-(d-2)\sqrt{\mathfrak{c}_1}>0\), then \label{Pro of solution branch}
	$$g_j({\rm i}\eta)=\left\{\begin{array}{ll}
		{\rm i}\tilde{g}_j^-(\eta)&\eta\in(\eta_1,+\infty),1\leq j\leq d,\\
        {\rm i}\tilde{g}_j^-(\eta)&\eta\in(0,\eta_1],2\leq j\leq d,\\
        {\rm i}\tilde{g}_j^+(\eta)&\eta\in(0,\eta_1],j=1,
	\end{array}\right.$$
    where \(g_j^{\pm}(\eta)\) are defined in {\rm (\ref{Eq of gi pm})}. On the other hand, for any \(\mfc\in\Pi_1\), \(g_j({\rm i}\eta)={\rm i}\tilde{g}_j^-(\eta)\) for any \(\eta>0\) and \(j=1,\cdots,d\).
\end{pro}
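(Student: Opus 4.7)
The plan is to combine uniqueness of the analytic solution $\bbg(z)$ on $\mbC^+$ (Theorem~\ref{Thm of Dyson}) with a continuity argument that tracks which branch of the quadratic each $g_j(\mathrm{i}\eta)$ selects as $\eta$ decreases from $+\infty$ to $0^+$. The starting point is equation~(\ref{Eq of gi pm}): on the imaginary axis, $g_j(\mathrm{i}\eta)=\mathrm{i}\tilde g_j(\eta)$ with $\tilde g_j(\eta)\in\{\tilde g_j^+(\eta),\tilde g_j^-(\eta)\}$, where $\tilde g_j^\pm$ are the two positive roots of $x^2-\mfh(\eta)x+\mfc_j=0$ and necessarily $\mfh(\eta)\ge 2\sqrt{\mfc_1}$ (otherwise $\tilde g_j^\pm$ are complex, contradicting $\Re g_j(\mathrm{i}\eta)=0$). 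Recall also $\tilde g_j^+\tilde g_j^-=\mfc_j$ and $\tilde g_j^++\tilde g_j^-=\mfh$, so the two branches meet exactly when $\mfh=2\sqrt{\mfc_j}$.

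First I would invoke Lemma~\ref{Cor of max mfc} to identify the ``Case A'' regime: for $\eta\ge\max_i\mfc_i^{-1/2}$, every $g_j(\mathrm{i}\eta)=\mathrm{i}\tilde g_j^-(\eta)$, and the self-consistency $\mfh=\eta+\sum_j\tilde g_j^-$ reduces to the scalar equation
\begin{align}
\sum_{j=1}^d\sqrt{\mfh^2-4\mfc_j}=(d-2)\mfh+2\eta. \label{Eq of CaseA plan}
\end{align}
Since $\bbg$ is analytic on $\mbC^+$, $\mfh(\eta)$ depends continuously on $\eta>0$, so a branch switch $\tilde g_j^-\leftrightarrow\tilde g_j^+$ can happen only at an $\eta$ where $\mfh(\eta)^2=4\mfc_j$. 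Under $\mfc_1=\max_i\mfc_i$, the smallest such threshold is $\mfh=2\sqrt{\mfc_1}$, which is first hit by index $j=1$. Substituting $\mfh=2\sqrt{\mfc_1}$ into~(\ref{Eq of CaseA plan}) gives precisely $\eta=\eta_1$. Hence on $(\eta_1,\infty)$ the solution stays in Case A with the $-$ branch everywhere, and if $\eta_1\le 0$ (i.e.\ $\mfc\in\Pi_1$) this regime covers all of $(0,\infty)$, proving the second assertion.

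For $\mfc\in\Pi_0\setminus\Pi_1$ (so $\eta_1>0$), at $\eta=\eta_1$ the branches $\tilde g_1^\pm$ collide at the common value $\sqrt{\mfc_1}$, while for $j\ge 2$ we still have $\tilde g_j^-<\tilde g_j^+$ (using $\mfc_1>\mfc_j$). The next step is to show that for $\eta$ slightly below $\eta_1$ the analytic continuation of the solution takes $g_1$ into the $+$ branch while keeping $g_j$, $j\ge 2$, in the $-$ branch (``Case B''). Applying $\tilde g_1^++\tilde g_1^-=\mfh$ to $\mfh=\eta+\tilde g_1^++\sum_{j\ge2}\tilde g_j^-$ yields the Case~B consistency equation
\begin{align}
\sum_{j=2}^d\sqrt{\mfh^2-4\mfc_j}-\sqrt{\mfh^2-4\mfc_1}=(d-2)\mfh+2\eta. \label{Eq of CaseB plan}
\end{align}
Direct substitution confirms $\mfh=2\sqrt{\mfc_1}$ solves~(\ref{Eq of CaseB plan}) at $\eta=\eta_1$, so Case A and Case B glue continuously at $\eta_1$. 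I would then verify that~(\ref{Eq of CaseB plan}) admits a unique $\mfh(\eta)\ge 2\sqrt{\mfc_1}$ for every $\eta\in(0,\eta_1]$ by analyzing the scalar function $G_B(\mfh):=\sum_{j\ge 2}\sqrt{\mfh^2-4\mfc_j}-\sqrt{\mfh^2-4\mfc_1}-(d-2)\mfh$: it equals $2\eta_1$ at $\mfh=2\sqrt{\mfc_1}$, has derivative $-\infty$ there due to the $1/\sqrt{\mfh^2-4\mfc_1}$ singularity (so $G_B$ decreases as $\mfh$ increases from $2\sqrt{\mfc_1}$), and behaves like $2(2\mfc_1-1)/\mfh$ as $\mfh\to\infty$. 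Tracking these monotonicity/asymptotics shows $G_B$ covers $(0,2\eta_1]$ and pins down a unique $\mfh(\eta)$ on $(0,\eta_1]$ with $\mfh(\eta)>2\sqrt{\mfc_1}$, so $\tilde g_1^-\neq\tilde g_1^+$ throughout and no further branch switch is triggered; uniqueness of the analytic extension (Theorem~\ref{Thm of Dyson}) then forces $g_j$ to follow exactly the branch assignment in the statement.

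The main obstacle I anticipate is Step~(\ref{Eq of CaseB plan}): establishing that $G_B$ is genuinely monotone (or at least injective) on $[2\sqrt{\mfc_1},\infty)$ with image covering $(0,2\eta_1]$, especially in the regime $\mfc_1<1/2$ where $G_B$ approaches $0$ from below at infinity and one must exclude spurious extra solutions and confirm that $\mfh(\eta)$ stays bounded away from the branch point $2\sqrt{\mfc_1}$ for $\eta<\eta_1$. A secondary subtlety is the non-generic case in which $\mfc_1$ is not the strict maximum; there, several indices reach the branch point simultaneously at $\eta=\eta_1$, and the argument must be coupled with the permutation symmetry of~(\ref{Eq of MDE 3 order}) to single out a consistent choice (or, equivalently, reduce to an effective lower-dimensional Dyson equation).
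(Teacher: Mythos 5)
Your proposal takes a genuinely different route from the paper. The paper differentiates the Dyson equation in $\eta$, obtaining $(1-\mfc_i/\tilde g_i^2)\tilde g_i'=\mfh'$, and combines this with Remark~\ref{Rem of bounded solutions} (which gives $\tilde g_i<\sqrt{\mfc_i}$ for all $i\geq 2$ and all $\eta>0$) to pin down the branch structure: it shows $\mfh'$ vanishes exactly once, at $\eta_1$, computes $\tilde g_1'(\eta_1)=-1$ from the derivative identity, and concludes $\tilde g_1>\sqrt{\mfc_1}$ for $\eta$ just below $\eta_1$, which locks in the $+$ branch. Your plan instead tracks the scalar consistency equations (your~(B.1), (B.2)) and argues via continuity of branch switches. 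This is a legitimate and arguably more transparent strategy, but as written it has a real gap at the decisive step.

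The gap is the branch decision at $\eta=\eta_1$. You correctly note that a branch switch for index $j$ can occur only where $\mfh(\eta)^2=4\mfc_j$, and that this threshold is first reached at $\eta=\eta_1$ by $j=1$. But ``Case A and Case B glue continuously at $\eta_1$'' does not distinguish between them: at the collision $\tilde g_1^+=\tilde g_1^-=\sqrt{\mfc_1}$, \emph{both} branches glue continuously, so continuity alone allows the solution to remain on the $-$ branch (i.e.\ to continue in Case A). Something extra is needed to rule that out. The paper's derivative identity does the job. In your framework, the clean fix is to observe that the Case~A scalar function $F^A(\mfh)=\sum_{j=1}^d\sqrt{\mfh^2-4\mfc_j}-(d-2)\mfh$ has derivative $\sum_j\mfh/\sqrt{\mfh^2-4\mfc_j}-(d-2)\geq 2>0$ on $(2\sqrt{\mfc_1},\infty)$ and takes the value $2\eta_1$ at $\mfh=2\sqrt{\mfc_1}$; hence for $\eta<\eta_1$ there is no $\mfh\geq 2\sqrt{\mfc_1}$ solving Case~A, forcing the continuation into Case~B. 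You did not state this argument.

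Your self-identified ``main obstacle'' — monotonicity of $G_B$ with image covering $(0,2\eta_1]$ — is also somewhat misplaced. Existence and uniqueness of $\mfh(\eta)$ on $(0,\eta_1]$ already follow from Theorem~\ref{Thm of Dyson} plus the continuity of $\eta\mapsto\mfh(\eta)$; the scalar $G_B$ need not be injective, and indeed its large-$\mfh$ behavior ($\sim 2(2\mfc_1-1)/\mfh$) changes sign depending on whether $\mfc_1\lessgtr 1/2$, which is precisely what Proposition~\ref{Pro of singularity} exploits. What is needed is only that $\mfh(\eta)>2\sqrt{\mfc_1}$ strictly for $\eta\in(0,\eta_1)$, and that comes for free from the $F^A$ argument above together with $G_B'(2\sqrt{\mfc_1}^+)=-\infty$. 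A second remark: Remark~\ref{Rem of bounded solutions} (from the spectral-gap bound) is what guarantees indices $j\geq 2$ never switch branch; your proposal only addresses this implicitly through $\mfh\geq 2\sqrt{\mfc_1}>2\sqrt{\mfc_j}$. Making that explicit would simplify the handling of the non-generic ties you worry about at the end.
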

\begin{proof}
	First, taking the derivative of \(\eta\) in (\ref{Eq of MDE image part}), we obtain
	\begin{align}
		(\boldsymbol{I}_d+{\rm diag}(\boldsymbol{\mathfrak{c}}^{-1}\circ\tilde{\boldsymbol{g}}^{\circ2})\boldsymbol{S}_d)\tilde{\boldsymbol{g}}'=-\boldsymbol{\mathfrak{c}}^{-1}\circ\tilde{\boldsymbol{g}}^{\circ2}.\label{Eq of imag g derivative}
	\end{align}
	In fact, since
	$$\frac{\mfc_i}{\tilde{g}_i}=\eta+\sum_{j\neq i}\tilde{g}_j\ \ \Rightarrow\ \ -\tilde{g}_i'\frac{\mfc_i}{\tilde{g}_i^2}=1+\sum_{j\neq i}\tilde{g}_j'\ \ \Leftrightarrow\ \ \tilde{g}_i'+\mfc_i^{-1}\tilde{g}_i^2\sum_{j\neq i}\tilde{g}_j'=-\mfc_i^{-1}\tilde{g}_i^2,$$
	then we obtain (\ref{Eq of imag g derivative}) and 
	\begin{align}
	    \tilde{g}_i'+\mfc_i^{-1}\tilde{g}_i^2\sum_{j\neq i}\tilde{g}_j'=-\mfc_i^{-1}\tilde{g}_i^2\ \ \Rightarrow\ \ (\mfc_i/\tilde{g}_i^2)\tilde{g}_i'+(\tilde{g}'-\tilde{g}_i')=-1\ \ \Rightarrow\ \ (1-\mathfrak{c}_i/\tilde{g}_i^2(\eta))\tilde{g}_i'(\eta)=\mfh'(\eta),\label{Eq of derivative relation}
	\end{align}
    where \(\mfh(\eta)=\eta+\tilde{g}(\eta)\) in \eqref{Eq of mfh}. Here, we claim that \(\mfh'(\eta)>0\) when \(\eta>\max_{1\leq i\leq d}\mfc_i^{-1/2}\). Actually, if \(\eta>\max_{1\leq i\leq d}\mfc_i^{-1/2}\), \(\tilde{g}_i(\eta)<\sqrt{\mathfrak{c}_i}\) for all \(i=1,\cdots,d\). Suppose \(\mfh(\eta)\leq 0\), due to \(\tilde{g}_i(\eta)>\sqrt{\mathfrak{c}_i}\), we know that \(1-\mathfrak{c}_i/\tilde{g}_i^2(\eta)<0\), so it implies that \(\tilde{g}_i'(\eta)\geq0\) by \eqref{Eq of derivative relation} and \(\mfh'(\eta)=1+\sum_{i=1}^d\tilde{g}_i'(\eta)>0\), which is a contradiction. Next, we consider two cases. 
    
    \vspace{3mm}
    \noindent
    {\bf Case 1: \(\mfh'(\eta)>0\) for all \(\eta>0\).} It is easy to see that 
    $$\tilde{g}(0)\leq\mfh(0)\leq\mfh\left(\max_{1\leq i\leq d}\mfc_i^{-1/2}\right)\leq\max_{1\leq i\leq d}\mfc_i^{-1/2}+\sum_{i=1}^d\sqrt{\mfc_i},$$
    where \(g(0)={\rm i}\tilde{g}(0)\) is defined in \eqref{Eq of MDE image part}, it implies that \(g(z)\) in \eqref{Eq of g(z)} is nonsingular at \(0\). Moreover, all \(\tilde{g}_i(\eta)=\tilde{g}_i^-(\eta)\) in \eqref{Eq of gi pm} for \(\eta>0\). Otherwise, suppose \(\tilde{g}_i(\eta)=\tilde{g}_i^+(\eta)\) for some $i$, by Lemma \ref{Lem of gz}, we know that \(\tilde{g}_i^+(\eta)\geq\sqrt{\mfc_i}\), since \(\tilde{g}_i(\eta)\) is continuous and \(\tilde{g}_i(\eta)<\sqrt{\mfc_i}\) when \(\eta>\max_{1\leq i\leq d}\mfc_i^{-1/2}\), then there exists a \(\eta_0>0\) such that \(\tilde{g}_i^+(\eta_0)=\sqrt{\mfc_i}\), so \eqref{Eq of derivative relation} deduces that \(\mfh'(\eta_0)=0\), which is a contradiction.

    \vspace{3mm}
    \noindent
    {\bf Case 2: there exists an \(\eta_1\in(0,\max_{1\leq i\leq d}\mfc_i^{-1/2}]\) such that \(\mfh'(\eta)\leq0\).} Without loss of generality, let
    $$\eta_1:=\max\Big\{\eta\in\Big(0,\max_{1\leq i\leq d}\mfc_i^{-1/2}\Big]:\mfh'(\eta)=0\Big\},$$
    by Remark \ref{Rem of bounded solutions} and Lemma \ref{Lem of gz}, we know that \(\tilde{g}_i(\eta)<\sqrt{\mfc_i}\) and \(\tilde{g}_i(\eta)=\tilde{g}_i^-(\eta)\) in \eqref{Eq of gi pm} for \(i=2,\cdots,d\) and \(\eta>0\), combined with \eqref{Eq of derivative relation}, we obtain that
    \begin{align}
        \tilde{g}_i'(\eta_1)=0,\quad\text{for }i=2,\cdots,d,\quad\tilde{g}_1'(\eta_1)=\mfh'(\eta_1)-1-\sum_{i=2}^d\tilde{g}_i'(\eta_1)=-1,\label{Eq of tilde g_1 derivative}
    \end{align}
    which further implies that \(\tilde{g}_1(\eta_1)=\sqrt{\mfc_1}\) by \eqref{Eq of derivative relation} and \(\mfh(\eta_1)=\tilde{g}_1(\eta_1)+\mfc_1/\tilde{g}_1(\eta_1)=2\sqrt{\mfc_1}\). Let \(\mfh(\eta_1)=2\sqrt{\mfc_1}\) in \eqref{Eq of h image}, we deduce that
    $$\eta_1=\sum_{i=1}^d\sqrt{\mathfrak{c}_1-\mathfrak{c}_i}-(d-2)\sqrt{\mathfrak{c}_1}.$$
    Therefore, the case 2 is valid if and only if the above \(\eta_1>0\), i.e. \(\mfc\in\Pi_1\) in \eqref{Eq of invariant branch region}. Moreover, we claim that \(\mfh'(\eta)=0\) has a unique solution in \(\eta\in(0,\max_{1\leq i\leq d}\mfc_i^{-1/2}]\). Suppose there exists another \(\eta_2\in(0,\eta_1]\) such that \(\mfh'(\eta_2)=0\), we can still obtain that \(\tilde{g}_1(\eta_2)=\sqrt{\mfc_1}\) and \(\mfh(\eta_2)=2\sqrt{\mfc_1}\) by previous arguments, so \eqref{Eq of h image} further implies that 
    $$\eta_2=\sum_{i=1}^d\sqrt{\mathfrak{c}_1-\mathfrak{c}_i}-(d-2)\sqrt{\mathfrak{c}_1}=\eta_1,$$
    i.e. \(\eta_1\) is unique. Thus, suppose \(\eta_1>0\), \(\mfh'(\eta)\) is either positive or negative in \((0,\eta_1)\).
    Next, we will claim that
    \begin{align}
        \mfh'(\eta)<0\quad\text{for }\eta\in(0,\eta_1).\label{Eq of derivative claim}
    \end{align}
    In fact, recall that \(\tilde{g}_1'(\eta_1)=-1\) in \eqref{Eq of tilde g_1 derivative}, there exists an \(\epsilon>0\) such that \(\tilde{g}_1'(\eta)<0\) for \(\eta\in(\eta_1-\epsilon,\eta_1)\). Hence, \(\tilde{g}_1(\eta)>\sqrt{\mfc_1}\) in \((\eta_1-\epsilon,\eta_1)\) due to \(\tilde{g}_1(\eta_1)=\sqrt{\mfc_1}\). By \eqref{Eq of derivative relation}, we have
    $$\mfh'(\eta)=(1-\mfc_1/\tilde{g}_1^2(\eta))\tilde{g}_1'(\eta)<0\quad\text{for }\eta\in(\eta_1-\epsilon,\eta_1),$$
    which confirms our claim \eqref{Eq of derivative claim}. Finally, by Remark \ref{Rem of bounded solutions} and Lemma \ref{Lem of gz}, we have \(\tilde{g}_i(\eta)<\sqrt{\mfc_i}\) and \(\tilde{g}_i(\eta)=\tilde{g}_i^-(\eta)\) in \eqref{Eq of gi pm} for \(i=2,\cdots,d\) and \(\eta>0\), so \eqref{Eq of derivative relation} implies that \(g_i'({\rm i}\eta)>0\) for \(i=2,\cdots,d\), then
    $$\tilde{g}_1'(\eta)=\mfh'(\eta)-1-\sum_{i=2}^d\tilde{g}_i'(\eta)<0\quad\text{for }\eta\in(0,\eta_1),$$
    which further implies that \(\tilde{g}_1(\eta)>\sqrt{\mfc_1}\) in \((0,\eta_1)\). Thus, by Lemma \ref{Lem of gz}, we conclude that \(\tilde{g}_1(\eta)=\tilde{g}_1^+(\eta)\) in \((0,\eta_1)\).
\end{proof}
As a consequence of above proposition, we know that \(\tilde{g}_1(\eta)\) will have a \emph{branch change} at \(\eta_1\) if \(\mfc\in\Pi_0\backslash\Pi_1\). Under this situation, \(\tilde{g}_1(\eta)\) will increase as \(\eta\to0^+\). Therefore, to determine whether \(g(z)\) \eqref{Eq of g(z)} is singular or not at \(0\), it is enough to show that whether \(\lim_{\eta\downarrow0}\tilde{g}_1(\eta)\) is infinite or not. Here, let us  define a new function
$$F_d(x)=\sum_{i=2}^d\sqrt{x^2-4\mathfrak{c}_i}-\sqrt{x^2-4\mathfrak{c}_1}-(d-2)x,\ {\rm for\ }x\geq2\sqrt{\mathfrak{c}_1},$$
then we can provide a sufficient and necessary condition for the limiting behaviors of \(\tilde{g}_1(\eta)\) as \(\eta\downarrow0\) based on the number of solutions of \(F_d(x)=0\).
\begin{pro}\label{Pro of singularity}
	Suppose \(\eta_1\) in Proposition {\rm \ref{Pro of solution branch}} is strictly positive, then the solutions of \(F_d(x)=0\) on \([2\sqrt{\mathfrak{c}_1},\infty)\) are bounded \(\Longleftrightarrow\) \(\mathfrak{c}_1<0.5\).
\end{pro}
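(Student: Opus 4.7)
The plan is to reduce the existence of a bounded root of \(F_d\) on \([2\sqrt{\mathfrak{c}_1},\infty)\) to three ingredients: (i) the value of \(F_d\) at the left endpoint \(2\sqrt{\mathfrak{c}_1}\), (ii) the sign of \(F_d\) at infinity, and (iii) when \(\mathfrak{c}_1\geq1/2\), a strict monotonicity property of \(F_d\) that rules out any zero. By direct substitution,
\begin{align*}
F_d(2\sqrt{\mathfrak{c}_1})=2\sum_{i=2}^d\sqrt{\mathfrak{c}_1-\mathfrak{c}_i}-2(d-2)\sqrt{\mathfrak{c}_1}=2\eta_1>0
\end{align*}
by hypothesis. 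Expanding \(\sqrt{x^2-4c}=x-2c/x-2c^2/x^3+\mathrm{O}(x^{-5})\) for large \(x\) gives
\begin{align*}
F_d(x)=\frac{2(2\mathfrak{c}_1-1)}{x}+\frac{2(\mathfrak{c}_1^2-\sum_{i\geq2}\mathfrak{c}_i^2)}{x^3}+\mathrm{O}(x^{-5}),
\end{align*}
which fixes the sign of \(F_d(x)\) at infinity. The borderline \(\mathfrak{c}_1=1/2\) is handled by the next-order term, which is strictly positive because \(\sum_{i\geq2}\mathfrak{c}_i^2<(\sum_{i\geq2}\mathfrak{c}_i)^2=\mathfrak{c}_1^2\) under the standing assumption \(\mathfrak{c}_j>0\) for all \(j\).

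For the direction \(\mathfrak{c}_1<1/2\Rightarrow\) a bounded solution exists, I combine \(F_d(2\sqrt{\mathfrak{c}_1})=2\eta_1>0\) with the asymptotic above, which forces \(F_d(x)<0\) for all sufficiently large \(x\). The intermediate value theorem then produces a root of \(F_d\) in the finite interval \((2\sqrt{\mathfrak{c}_1},\infty)\); moreover no root lies in the tail where \(F_d\) has become negative, so the zero set is nonempty and bounded.

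For the reverse direction \(\mathfrak{c}_1\geq1/2\Rightarrow\) no bounded solution, the plan is to show \(F_d'(x)<0\) strictly on \((2\sqrt{\mathfrak{c}_1},\infty)\). Writing \(u_i=u_i(x):=\sqrt{x^2-4\mathfrak{c}_i}\) and using \(x-u_i=4\mathfrak{c}_i/(x+u_i)\), a direct computation yields
\begin{align*}
F_d'(x)=4\Bigl[\sum_{i=2}^d\frac{\mathfrak{c}_i}{u_i(x+u_i)}-\frac{\mathfrak{c}_1}{u_1(x+u_1)}\Bigr].
\end{align*}
The key technical step is a monotonicity lemma asserting that \(c\mapsto1/(u(x+u))\) with \(u=\sqrt{x^2-4c}\) is strictly increasing in \(c\in[0,x^2/4)\), which follows from a one-line differentiation. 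Since \(\mathfrak{c}_1\geq1/2\) together with \(\mathfrak{c}_j>0\) forces \(\mathfrak{c}_i<\mathfrak{c}_1\) strictly for every \(i\geq2\), this monotonicity yields
\begin{align*}
\sum_{i=2}^d\frac{\mathfrak{c}_i}{u_i(x+u_i)}<\frac{1-\mathfrak{c}_1}{u_1(x+u_1)}\leq\frac{\mathfrak{c}_1}{u_1(x+u_1)},
\end{align*}
so \(F_d'(x)<0\). Combined with the asymptotic \(F_d(x)\to0^+\) established above, strict monotonicity forces \(F_d(x)>0\) throughout \([2\sqrt{\mathfrak{c}_1},\infty)\), ruling out any zero.

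The main obstacle is the reverse direction, and especially the borderline value \(\mathfrak{c}_1=1/2\), where the leading-order \(x^{-1}\) term of \(F_d\) at infinity vanishes; at that point one must fall back on the refined \(x^{-3}\) expansion and exploit the strict inequality \(\sum_{i\geq2}\mathfrak{c}_i^2<\mathfrak{c}_1^2\), which fails without the positivity constraint \(\mathfrak{c}_j>0\). Once the correct sign of \(F_d\) at infinity is in place, the monotonicity argument closes the case uniformly for \(\mathfrak{c}_1\in[1/2,1)\).
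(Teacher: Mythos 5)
Your proof is correct, and it takes a genuinely different route from the paper's in both directions of the equivalence. For the direction $\mathfrak{c}_1<1/2$, the paper does not use a Taylor expansion at infinity; instead it dominates $F_d$ by a Lagrange-multiplier-type upper bound $\widetilde F_d(x)=(d-1)\sqrt{x^2-\tfrac{4(1-\mathfrak{c}_1)}{d-1}}-\sqrt{x^2-4\mathfrak{c}_1}-(d-2)x$ and reduces $\widetilde F_d(x)\le 0$ to the algebraic condition $x^2-x\sqrt{x^2-4\mathfrak{c}_1}\le 2\tfrac{(d-1)(1-\mathfrak{c}_1)-\mathfrak{c}_1}{d-2}$, which is then shown to hold for large $x$ precisely when $\mathfrak{c}_1<1/2$. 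Your asymptotic expansion $F_d(x)=\tfrac{2(2\mathfrak{c}_1-1)}{x}+O(x^{-3})$ reaches the same conclusion more transparently. For the direction $\mathfrak{c}_1\ge 1/2$, the paper proves $F_d>0$ pointwise by iterating the inequality $\sqrt{x^2-4a}+\sqrt{x^2-4b}>x+\sqrt{x^2-4(a+b)}$ (valid for $a,b>0$) to obtain $\sum_{i\ge 2}\sqrt{x^2-4\mathfrak{c}_i}>(d-2)x+\sqrt{x^2-4(1-\mathfrak{c}_1)}$, whereas you establish the strict monotonicity $F_d'(x)<0$ on $(2\sqrt{\mathfrak{c}_1},\infty)$ via the observation that $c\mapsto 1/(u(x+u))$ with $u=\sqrt{x^2-4c}$ is increasing; combined with $\lim_{x\to\infty}F_d(x)=0$, strict monotone decrease already forces $F_d>0$. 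Your approach is arguably cleaner: the derivative identity and the elementary monotonicity lemma replace the ad hoc iterated inequality and the Lagrange-multiplier estimate. One small redundancy worth noting: the $x^{-3}$ term of the expansion and the discussion of the borderline $\mathfrak{c}_1=1/2$ are unnecessary, since once $F_d'<0$ on $(2\sqrt{\mathfrak{c}_1},\infty)$ and $\lim_{x\to\infty}F_d(x)=0$ are in hand, positivity of $F_d$ follows without knowing the sign of the subleading term; your own monotonicity lemma, applied with $\mathfrak{c}_i<\mathfrak{c}_1$ (which still holds strictly at $\mathfrak{c}_1=1/2$ because $d\ge 3$), already covers the boundary case uniformly.
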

\begin{proof}
	Note that \(F_d(2\sqrt{\mathfrak{c}_1})=2\eta_1>0\), then \(F_d(x)=0\) has no bounded solution on \([2\sqrt{\mathfrak{c}_1},\infty)\) is equivalent to \(F_d(x)>0\) for all \(x\in[2\sqrt{\mathfrak{c}_1},\infty)\). Now, suppose \(\mathfrak{c}_1\geq0.5\), we can obtain that
	\begin{align}
		\sum_{i=2}^d\sqrt{x^2-4\mathfrak{c}_i}>(d-2)x+\sqrt{x^2-4(1-\mathfrak{c}_1)},\ {\rm for\ }x\geq2\sqrt{\mathfrak{c}_1}.\label{Eq of Fd 1}
	\end{align}
	In fact, since
	$$\sqrt{x^2-4\mathfrak{c}_2}+\sqrt{x^2-4\mathfrak{c}_3}>x+\sqrt{x^2-4(\mathfrak{c}_2+\mathfrak{c}_3)}\ \Leftrightarrow\ \mathfrak{c}_2\mathfrak{c}_3>0,$$
	we can easily conclude (\ref{Eq of Fd 1}). Therefore, if \(\mathfrak{c}_1\geq0.5\), it implies that
	$$F_d(x)>\sqrt{x^2-4(1-\mathfrak{c}_1)}-\sqrt{x^2-4\mathfrak{c}_1}\geq0,$$
	i.e. \(F_d(x)=0\) has no bounded solution. On the other hand, suppose \(\mathfrak{c}_1<0.5\), since
	$$F_d(x)\leq(d-1)\sqrt{x^2-\frac{4(1-\mathfrak{c}_1)}{(d-1)}}-\sqrt{x^2-4\mathfrak{c}_1}-(d-2)x:=\widetilde{F}_d(x),$$
	by the method of Lagrange multipliers. We only need to show that there exists a bounded \(x\in[2\sqrt{\mathfrak{c}_1},\infty)\) such that \(\widetilde{F}_d(x)\leq0\). Notice that
	$$\widetilde{F}_d(x)\leq0\ \Leftrightarrow\ x^2-x\sqrt{x^2-4\mathfrak{c}_1}\leq2\frac{(d-1)(1-\mathfrak{c}_1)-\mathfrak{c}_1}{d-2},$$
	and
	$$\mathfrak{c}_1<0.5\ \Longleftrightarrow\ 2\mathfrak{c}_1<2\frac{(d-1)(1-\mathfrak{c}_1)-\mathfrak{c}_1}{d-2},$$
    so we can choose a sufficiently small \(\epsilon>0\) such that
    $$2\mathfrak{c}_1+\epsilon<2\frac{(d-1)(1-\mathfrak{c}_1)-\mathfrak{c}_1}{d-2}.$$
    On the other hand, we can also choose a sufficiently large \(x=x(\epsilon)>0\) such that
	$$x^2-x\sqrt{x^2-4\mathfrak{c}_1}=\frac{4\mathfrak{c}_1x}{x+\sqrt{x^2-4\mathfrak{c}_1}}<2\mathfrak{c}_1+\epsilon,$$
	it implies that \(F_d(y)\leq\widetilde{F}_d(y)\leq0\) for all \(y\in[x(\epsilon),\infty)\), i.e. the solutions of \(F_d(x)=0\) are bounded if and only if \(\mathfrak{c}_1<0.5\). 
\end{proof}
Finally, let us  prove Theorem \ref{Thm of Singularity} as follows:
\begin{proof}[Proof of Theorem \ref{Thm of Singularity}]
	Recall that \(\mfc_1=\max_{1\leq i\leq d}\mfc_i\) and let \(\Pi_0:=\{\mfc\in\mathbb{R}^d:\sum_{i=1}^d\mfc_i=1,\mfc_i>0\}\) be the \(d\)-dimensional affine hyperplane, then we will show that the invariant branch region \(\Pi_1\) is a subset of the nonsingular region \(\Pi_2\), i.e.
	$$\Pi_1=\Big\{\mfc\in\Pi_0:\sum_{i=1}^d\sqrt{\mfc_1-\mfc_i}\leq(d-2)\sqrt{\mfc_1}\Big\}\subset\Pi_2=\{\mfc\in\Pi_0:\mfc_1<0.5\}.$$
	Suppose there is a \(\mfc\in\Pi_1\) such that \(\mfc\notin\Pi_2\), i.e. \(\mfc_1\geq0.5\), then according to (\ref{Eq of Fd 1}), we have
	$$(d-2)\sqrt{\mfc_1}\geq\sum_{i=1}^d\sqrt{\mfc_1-\mfc_i}>(d-2)\sqrt{\mfc_1}+\sqrt{2\mfc_1-1},$$
	it implies that \(0\leq(2\mfc_1-1)^{1/2}<0\), which is a contradiction. Finally, according to (\ref{Eq of h image}), if \(\eta\leq\eta_1\), \(\mathfrak{h}(\eta)\) is the solution of \(2\eta=F_d(\mathfrak{h}(\eta))\). By Proposition \ref{Pro of singularity}, if \(\mathfrak{c}_1\geq0.5\), \(F_d(x)=0\) has no finite solution. Notice that \(\lim_{x\to\infty}F_d(x)=0\), then \(\mathfrak{h}(\eta)\to+\infty\) as \(\eta\to0^+\), i.e. \(\tilde{g}_1(\eta)\to\infty\). Therefore, (\ref{Eq of g(z)}) is singular at \(z=0\). On the other hand, if \(\mathfrak{c}_1<0.5\), there exist a bounded \(x_0\) such that \(F_d(x)<0\) for all \(x>x_0\). Since \(F_d(\mfh(\eta_1))=2\eta_1\), then \(\mathfrak{h}(\eta)\leq x_0\) for all \(\eta\in(0,\eta_1]\), which suggests \(\tilde{g}_i(\eta)\) are all bounded for \(i=1,\cdots,d\). This completes the proof.
\end{proof}

\section{Entrywise law when \texorpdfstring{$d=3$}{d=3}}\label{Sec of entrywise law}
\setcounter{equation}{0}
\def\theequation{\thesection.\arabic{equation}}
\setcounter{subsection}{0}
In this section, we will establish the entrywise law for \(\bbQ(z)\) as follows: 
\begin{thm}\label{Thm of entrywise law d=3}
	Under Assumptions {\rm \ref{Ap of general noise}} and {\rm \ref{Ap of dimension}}, for any \(z\in\mcS_{\eta_0}\) in {\rm (\ref{Eq of stability region})} and \(\omega\in(1/2-\delta,1/2)\), where $\delta>0$ is a sufficiently small number, let 
    \begin{align}
        \bbW^{(3)}(z)=-((z+g(z))\bbI_3-\diag(\bbg(z))+g(z)\bbS_3-\diag(\bbg(z))\bbS_3-\bbS_3\diag(\bbg(z)))^{-1}.\label{Eq of bbW limiting d=3}
    \end{align}
    For \(s,t\in\{1,2,3\}\), we have 
	\begin{align*}
		\left|Q_{i_si_t}^{st}(z)-\mathfrak{c}_s^{-1}g_s(z)\left[\delta_{st}\delta_{i_si_t}+(a_{i_s}^{(s)})^2\sum_{k\neq s}^3(g(z)-g_s(z)-g_k(z))W_{sk}^{(3)}(z)\right]\right|\prec\mrO(\eta_0^{-21}N^{-\omega}),
	\end{align*}
	where \(Q_{i_si_t}^{st}(z)\) is the \((i_s,i_t)\)-th entry of \(\bbQ^{st}\) and \(a_{i_s}^{(s)}\) is the \(i_s\)-th entry of \(\bba^{(s)}\), as does \(W_{sk}^{(3)}(z)\).
\end{thm}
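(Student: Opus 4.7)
The plan is to derive an approximate self-consistent system for the individual resolvent entries $Q^{st}_{i_si_t}(z)$ by combining the resolvent identity $\bbM\bbQ(z)=\bbI+z\bbQ(z)$ with a cumulant expansion in the noise tensor $\bbX$, and then to invert the stability operator built from $\bbGa^{(3)}(z)$, whose invertibility and norm control on $\mcS_{\eta_0}$ are supplied by Proposition \ref{Pro of stability operator} and Proposition \ref{Pro of inverse norm}. The target formula's factor $(a_{i_s}^{(s)})^2$ already indicates the mechanism: the naive replacement of a row-quadratic form by its expectation cannot depend on $i_s$ through $\bba^{(s)}$, so the correction must come from the residual dependence between the $i_s$-th row of $\bbM$ and its own minor, originating in the fact that each off-diagonal block $\bbM^{tt'}$ with $s\notin\{t,t'\}$ contains a summation over the index $i_s$ weighted by $a^{(s)}_{i_s}$.

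Concretely, for the diagonal entries I would apply the Schur complement to the $(s,i_s)$ row/column, using $M^{ss}_{i_si_s}=0$, to obtain $(Q^{ss}_{i_si_s}(z))^{-1}=-z-\mathbf{r}_{i_s}^\top\bbQ^{(i_s)}(z)\mathbf{r}_{i_s}$, where $\mathbf{r}_{i_s}$ is the block-$s$-deleted $i_s$-th row of $\bbM$ and $\bbQ^{(i_s)}(z)$ is the resolvent of the corresponding minor. Then, performing a Stein-type cumulant expansion in each $X_{i_1\cdots i_d}$ and using the covariance identity $\mbE[M^{st_1}_{i_s,j_{t_1}}M^{st_2}_{i_s,j_{t_2}}]=N^{-1}[\delta_{t_1t_2}\delta_{j_{t_1}j_{t_2}}+(1-\delta_{t_1t_2})a^{(t_1)}_{j_{t_1}}a^{(t_2)}_{j_{t_2}}]$, the leading contribution matches the vector Dyson equation \eqref{Eq of MDE 3 order}, while the second-order cumulant term, which tracks the residual dependence of $\bbQ^{(i_s)}(z)$ on $\{X_{i_1\cdots i_d}\}$ through the block $\bbM^{tt'}$ with $\{t,t'\}\cap\{s\}=\emptyset$, produces exactly the correction carrying $(a^{(s)}_{i_s})^2$. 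Isolating these two ingredients yields a perturbed Dyson equation with perturbation of size $\mrO(\eta_0^{-\beta}N^{-\omega})$, whereupon Theorem \ref{Thm of Stability} gives $\Vert\bbm(z)-\bbg(z)\Vert_\infty=\mrO(\eta_0^{-\beta-4}N^{-\omega})$, and linearising the individual-entry equation about $\bbg(z)$ using $\bbW^{(3)}(z)=-\bbGa^{(3)}(z)^{-1}$ together with the identity $g(z)-g_s(z)-g_k(z)=\sum_{l\neq s,k}g_l(z)$ yields the claimed expression. The off-diagonal entries (either $s\neq t$ or $i_s\neq i_t$) are handled in parallel by expanding $(\bbQ\bbM)^{st}_{i_si_t}=zQ^{st}_{i_si_t}+\delta_{st}\delta_{i_si_t}$; the Kronecker deltas then vanish and the $(a^{(s)}_{i_s})^2$ correction survives only when $\bba^{(s)}$ is localized.

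The main obstacle will be the quantitative control of the fluctuation terms to precision $\mrO(\eta_0^{-21}N^{-\omega})$ uniformly over $z\in\mcS_{\eta_0}$ and all $(s,t,i_s,i_t)$. The concentration of $\mathbf{r}_{i_s}^\top\bbQ^{(i_s)}(z)\mathbf{r}_{i_s}$ about its conditional mean cannot be quoted from a standard Hanson--Wright estimate because the entries of $\mathbf{r}_{i_s}$ are correlated across blocks through the $\bba$ vectors; the remedy is to invoke the polynomial subexponential concentration inequality of Lemma \ref{Pro of Bernstein}, valid under Assumption \ref{Ap of general noise}, applied block by block and combined with the self-averaging of $N^{-1}\tr\bbQ^{(i_s),kk}(z)$. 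Bounding the higher-order cumulant terms then requires the a priori estimate $\Vert\bbQ(z)\Vert\le\eta_0^{-1}$, which holds with probability $1-\mro(N^{-l})$ by \eqref{Eq of spectral norm mcE d=3}, together with the bound $\Vert\bbPi^{(3)}(z,z)^{-1}\Vert\le C\eta_0^{-4}$ from Proposition \ref{Pro of inverse norm}; each resolvent insertion and each inversion of a stability operator contributes a polynomial factor in $\eta_0^{-1}$, and tallying these through the fourth-order truncation of the cumulant expansion (needed to accommodate $\kappa_4$) together with the Dyson-stability inversion produces the stated exponent $\eta_0^{-21}$.
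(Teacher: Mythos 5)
Your outline hits the right structural ingredients — resolvent identity plus cumulant expansion, concentration of resolvent quadratic forms, stability of the vector Dyson equation, and inversion of $\bbGa^{(3)}$ via $\bbW^{(3)}$ — and your identification of the origin of the $(a_{i_s}^{(s)})^2$ correction (residual dependence between the excised row and its minor through the blocks not involving index $s$) is the correct mechanism and is a real subtlety of this correlated model. However, your route diverges from the paper's in one central step: the paper never passes to a Schur complement $(Q^{ss}_{i_si_s})^{-1}=-z-\mathbf r_{i_s}^\top\bbQ^{(i_s)}\mathbf r_{i_s}$. Instead it works with the identity $\bbM\bbQ(z)-z\bbQ(z)=\bbI_N$ directly, writing $z\,\mbE[Q^{11}_{il}]=N^{-1/2}\sum_{j,k}\mbE[X_{ijk}(c_kQ^{12}_{lj}+b_jQ^{13}_{lk})]-\delta_{il}$ and applying the cumulant expansion once to that linear statistic. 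This is more natural here because the classical Schur-complement gain (independence of row from minor) is unavailable: as you note yourself, the minor's resolvent still depends on all the $X_{ijk}$ sharing index $i_s$ through the surviving blocks, so a Schur step would merely relocate the same cumulant-expansion work into an inverse, costing an extra layer of bookkeeping with no corresponding benefit. You would also need to be careful about three misattributions in your sketch. First, the concentration of quadratic forms like $N^{-1}\tr\bbQ^{kk}(z)$ and $\bba'\bbQ^{12}\bbb$ around their means is Lemma \ref{Thm of Entrywise almost surly convergence}, whose proof is a bounded-differences argument via Lemma \ref{Lem of Bounded Differences Inequality} combined with Lemma \ref{Lem of minor terms 1}; the polynomial subexponential inequality of Lemma \ref{Pro of Bernstein} is used only for the extreme-eigenvalue bound in Theorem \ref{Thm of Extreme eigenvalue N d=3}, not for this step. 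Second, the cumulant expansion for the entrywise law truncates at first order with a second-derivative remainder $\epsilon^{(2)}_{ijk}$; $\kappa_4$ plays no role in this theorem (it enters only the mean and covariance functions of the LSS), so invoking a fourth-order truncation here would add unnecessary work and does not reflect what the estimate actually requires. Third, the invertibility and norm bound for $\bbGa^{(3)}(z)$ is a separate fact (proved in Lemma \ref{Lem of invertible bbGa}), not covered by Propositions \ref{Pro of stability operator} and \ref{Pro of inverse norm}, which concern the distinct operators $\bbB^{(d)}$ and $\bbPi^{(d)}$; you should not conflate these. With those corrections your route can be made to work, but it would reproduce the paper's estimates through a detour rather than simplify them.
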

The existence of \(\bbW^{(3)}(z)\) on \(\mcS_{\eta_0}\) is established in Lemma \ref{Lem of invertible bbGa} later. For simplicity, we rewrite the three deterministic unit vectors as follows:
\begin{align}
	\bba:=\bba^{(1)}\in\mbS^{m-1},\quad\bbb:=\bba^{(2)}\in\mbS^{n-1},\quad\bbc:=\bba^{(3)}\in\mbS^{p-1},\label{Eq of vector notation d=3}
\end{align}
and
$$\bbM=\frac{1}{\sqrt{N}}\bbPhi_3(\bbX,\bba,\bbb,\bbc),\quad\bbQ(z)=(\bbM-z\bbI_N)^{-1}=\left(\begin{array}{ccc}
	\bbQ^{11}(z)&\bbQ^{12}(z)&\bbQ^{13}(z)\\\bbQ^{12}(z)'&\bbQ^{22}(z)&\bbQ^{23}(z)\\\bbQ^{13}(z)'&\bbQ^{23}(z)'&\bbQ^{33}(z)
\end{array}\right),$$
where \(z\in\mbC^+\) and \(N=m+n+p\). In addition, we will need the following lemmas:
\begin{lem}[\cite{mcdiarmid1989method}]\label{Lem of Bounded Differences Inequality}
	Let \(\Omega\subset\mathbb{R}\) and \(f:\Omega^n\to\mbC\) such that
	$$\sup_{x_1,\cdots,x_n,x_i'\in\Omega}|f(\cdots,x_i,\cdots)-f(\cdots,x_i',\cdots)|\leq M_i.$$
	where \(M_i\) are bounded positive constants. Then 
	\begin{align}
		\mathbb{P}\left(|f(X_1,\cdots,X_n)^c|\geq t\right)\leq4\exp\left(-\frac{t^2}{\sum_{i=1}^nM_i^2}\right).\label{Eq of Bounded Differences Inequality}
	\end{align}
\end{lem}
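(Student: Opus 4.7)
The plan is to reduce the complex-valued statement to the classical real-valued bounded differences inequality via a Doob-martingale argument, and then combine real and imaginary parts through a union bound. Writing $f = u + \ii v$ with $u, v : \Omega^n \to \mbR$, both $u$ and $v$ inherit the bounded-differences hypothesis with the same constants $M_i$, since a single-coordinate change of either the real or imaginary part is dominated in absolute value by the corresponding change of $f$.

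For a real function $u$ satisfying the hypothesis, I would construct the Doob martingale $Y_i := \mbE[u(X_1,\ldots,X_n)\mid X_1,\ldots,X_i]$ with increments $V_i := Y_i - Y_{i-1}$, so that $u(X_1,\ldots,X_n)^c = \sum_{i=1}^n V_i$. By independence of the $X_j$, conditional on $X_1,\ldots,X_{i-1}$ the increment $V_i$ equals $h_i(X_i) - \mbE[h_i(X_i)]$ for a function $h_i$ of a single variable whose oscillation is bounded by $M_i$; Hoeffding's lemma therefore gives $\mbE[e^{\lambda V_i}\mid X_1,\ldots,X_{i-1}] \le \exp(\lambda^2 M_i^2/8)$. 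Iterating via the tower property yields $\mbE[e^{\lambda u^c}] \le \exp(\lambda^2 \sum_{i}M_i^2/8)$, and optimizing $\lambda$ in Markov's inequality produces the one-sided Chernoff estimate $\mbP(u^c \ge s) \le \exp(-2s^2/\sum_i M_i^2)$. Applying the same argument to $-u$ and to $\pm v$ gives $\mbP(|u^c| \ge s), \mbP(|v^c|\ge s) \le 2\exp(-2s^2/\sum_i M_i^2)$.

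To finish, observe that $|f^c|^2 = (u^c)^2 + (v^c)^2$, so the event $\{|f^c|\ge t\}$ is contained in $\{|u^c| \ge t/\sqrt{2}\} \cup \{|v^c| \ge t/\sqrt{2}\}$. A union bound with $s = t/\sqrt{2}$ then yields
\[
\mbP(|f(X_1,\ldots,X_n)^c| \ge t) \le 2 \cdot 2\exp\!\left(-\frac{2(t/\sqrt{2})^2}{\sum_{i=1}^n M_i^2}\right) = 4\exp\!\left(-\frac{t^2}{\sum_{i=1}^n M_i^2}\right),
\]
matching the stated bound. This is a textbook fact and there is no substantive obstacle; the only points requiring mild care are verifying that integrating out $X_{i+1},\ldots,X_n$ preserves the $M_i$-oscillation of the Doob increment (which uses independence, so that conditioning does not enlarge the range of a single-coordinate perturbation), and the $\sqrt{2}$ bookkeeping that converts the sharp exponent $2s^2/\sum M_i^2$ of the real case into the $t^2/\sum M_i^2$ form appearing in the lemma.
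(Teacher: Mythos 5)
Your proof is correct: the paper does not prove this lemma at all (it is quoted from McDiarmid's survey), and your argument is exactly the standard one underlying that citation — the Doob-martingale/Hoeffding-lemma proof of the real-valued bounded differences inequality, extended to complex values by applying it to the real and imaginary parts and taking a union bound at level \(t/\sqrt{2}\). The bookkeeping is right: the sharp real bound \(2\exp(-2s^2/\sum_i M_i^2)\) with \(s=t/\sqrt{2}\) yields precisely the constant \(4\) and exponent \(t^2/\sum_i M_i^2\) stated in the lemma, so nothing is missing.
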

\begin{lem}[\cite{khorunzhy1996asymptotic}]\label{Lem of Stein’s equation}
	For any real-valued random variable \(\xi\) with \(\mathbb{E}[|\xi|^{K+2}]<\infty\) and complex-valued function \(g(z)\) with continuous and bounded \(K+1\) derivatives, then
	\begin{align}
		\mathbb{E}\left[\xi g(\xi)\right]=\sum_{l=0}^K\frac{\kappa_{l+1}}{l!}\mathbb{E}\big[g^{(l)}(\xi)\big]+\epsilon_{(K+1)},\label{Eq of cumulant expansion}
	\end{align}
	where \(\kappa_l\) is the \(l\)-th cumulant of $\xi$, and
	$$|\epsilon_{(K+1)}|\leq C_K\sup_{z\in\mathbb{C}}\big|g^{(K+1)}(z)\big|\mathbb{E}\big[|\xi|^{K+2}\big].$$
\end{lem}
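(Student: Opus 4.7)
The plan is to reduce \eqref{Eq of cumulant expansion} to the classical moment--cumulant recursion by Taylor-expanding every occurrence of $g$ around zero and matching the resulting polynomials in the Taylor coefficients $g^{(k)}(0)$. Applying Taylor's theorem with integral remainder separately to $g$ and to each derivative $g^{(l)}$, for $0\le l\le K$, I write
\begin{align*}
g(\xi)=\sum_{k=0}^{K}\frac{g^{(k)}(0)}{k!}\xi^{k}+R_{K}(\xi),\qquad g^{(l)}(\xi)=\sum_{j=0}^{K-l}\frac{g^{(l+j)}(0)}{j!}\xi^{j}+R_{K-l}^{(l)}(\xi),
\end{align*}
with $|R_{K-l}^{(l)}(\xi)|\le \sup_{z}|g^{(K+1)}(z)|\,|\xi|^{K-l+1}/(K-l+1)!$. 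Taking expectations and reindexing the right-hand side of \eqref{Eq of cumulant expansion} via $k=l+j$,
\begin{align*}
\sum_{l=0}^{K}\sum_{j=0}^{K-l}\frac{\kappa_{l+1}}{l!}\frac{g^{(l+j)}(0)}{j!}\mathbb{E}[\xi^{j}]=\sum_{k=0}^{K}\frac{g^{(k)}(0)}{k!}\sum_{l=0}^{k}\binom{k}{l}\kappa_{l+1}\mathbb{E}[\xi^{k-l}],
\end{align*}
whereas the left-hand side contributes $\sum_{k=0}^{K}g^{(k)}(0)\,\mathbb{E}[\xi^{k+1}]/k!$. The two expressions match coefficient-by-coefficient provided the identity $\mathbb{E}[\xi^{k+1}]=\sum_{l=0}^{k}\binom{k}{l}\kappa_{l+1}\mathbb{E}[\xi^{k-l}]$ holds for every $0\le k\le K$.

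To establish this moment--cumulant recursion under only the stated finite-moment hypothesis, I work with the characteristic function $\phi(t):=\mathbb{E}[e^{\mathrm{i}t\xi}]$, which under $\mathbb{E}|\xi|^{K+2}<\infty$ lies in $C^{K+2}$ near $t=0$ with $\phi^{(n)}(0)=\mathrm{i}^{n}\mathbb{E}[\xi^{n}]$ for $0\le n\le K+2$. Expanding $\log\phi(t)=\sum_{l=1}^{K+2}\kappa_l(\mathrm{i}t)^l/l!+o(t^{K+2})$ (which is where the cumulants $\kappa_l$ are defined), differentiating the identity $\phi'(t)=\phi(t)\cdot(\log\phi)'(t)$ exactly $k$ times at $t=0$ via Leibniz's rule, and stripping the overall factor $\mathrm{i}^{k+1}$ delivers the required recursion. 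This route bypasses the moment generating function, which need not exist under the given hypothesis.

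The residual error in \eqref{Eq of cumulant expansion} is then
\begin{align*}
\epsilon_{(K+1)}=\mathbb{E}[\xi R_K(\xi)]-\sum_{l=0}^{K}\frac{\kappa_{l+1}}{l!}\,\mathbb{E}[R_{K-l}^{(l)}(\xi)].
\end{align*}
The first contribution is bounded directly by $\sup_z|g^{(K+1)}(z)|\,\mathbb{E}[|\xi|^{K+2}]/(K+1)!$. For each term in the sum, I bound $|\kappa_{l+1}|$ by a universal polynomial in the first $l+1$ absolute moments of $\xi$, and each lower moment $\mathbb{E}[|\xi|^{K-l+1}]$ in terms of $\mathbb{E}[|\xi|^{K+2}]$ via Lyapunov's inequality, absorbing all resulting numerical factors into a single constant $C_K$ depending on $K$ alone. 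Summing over $l$ yields the claimed estimate $|\epsilon_{(K+1)}|\le C_K\sup_z|g^{(K+1)}(z)|\,\mathbb{E}[|\xi|^{K+2}]$.

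The only conceptually delicate step is the characteristic-function justification of the moment--cumulant recursion under a finite-moment hypothesis, as opposed to the usual moment-generating-function derivation that presumes full analyticity. The remainder of the argument is routine Taylor-remainder bookkeeping, where the only care required is to verify that all lower-order moments and cumulants entering the error estimate are uniformly absorbable into $C_K$ and the top moment $\mathbb{E}[|\xi|^{K+2}]$.
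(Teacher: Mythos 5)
The paper does not prove this lemma at all: it is imported verbatim from \cite{khorunzhy1996asymptotic} as a known cumulant-expansion (``decoupling'') formula, so there is no internal proof to compare against. Your argument is a correct, self-contained derivation and follows what is essentially the standard route: Taylor-expand $g$ and each $g^{(l)}$ to the appropriate order, observe that the polynomial parts match term by term thanks to the moment--cumulant recursion $\mathbb{E}[\xi^{k+1}]=\sum_{l=0}^k\binom{k}{l}\kappa_{l+1}\mathbb{E}[\xi^{k-l}]$, and push everything else into the remainder. Your characteristic-function justification of the recursion (Leibniz applied to $\phi'=\phi\cdot(\log\phi)'$ at $t=0$, using only $\phi\in C^{K+2}$ near $0$ and $\phi(0)=1$) is the right way to avoid assuming a moment generating function, and all moments and cumulants you invoke exist under $\mathbb{E}|\xi|^{K+2}<\infty$. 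The only step I would ask you to make explicit is the absorption of $|\kappa_{l+1}|\,\mathbb{E}[|\xi|^{K-l+1}]$ into $C_K\,\mathbb{E}[|\xi|^{K+2}]$: Lyapunov applied to the lower moment alone does not suffice, because it produces the fractional power $(\mathbb{E}[|\xi|^{K+2}])^{(K-l+1)/(K+2)}$, which cannot be hidden in a constant. What saves you is homogeneity: $\kappa_{l+1}$ is a polynomial in the moments in which every monomial $\prod_j m_{a_j}$ has total order $\sum_j a_j=l+1$, so Lyapunov bounds each such monomial by $(\mathbb{E}[|\xi|^{K+2}])^{(l+1)/(K+2)}$, and multiplying by the lower-moment factor the exponents sum to exactly one, yielding $C_K\,\mathbb{E}[|\xi|^{K+2}]$ with $C_K$ depending on $K$ only. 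With that sentence added, the proof is complete.
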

Here, the \(l\)-th cumulant of \(\xi\) is defined via
$$\log\mbE[e^{{\rm i}x\xi}]=\sum_{l=1}^{\infty}\kappa_l\frac{({\rm i}x)^l}{l!},\quad x\in\mbR.$$
\subsection{Preliminary Lemmas}\label{sec of Preliminary Lemmas}
To prove Theorem \ref{Thm of entrywise law d=3}, we need to deal with quadratic forms of \(\bbQ(z)\) and \(\diag(\bbQ(z))\). Actually, the $(s,t)$-th entry of \(\bbQ(z)\) itself can be reduced to a special case of such quadratic forms. Therefore, we present several lemmas in this section to deal with such quadratic forms. Here, we use a simple example to illustrate the main purpose of the lemmas in \S\ref{sec of Preliminary Lemmas}. Note that \(N^{-1}\tr(\bbQ(z))=N^{-1}\boldsymbol{1}_N'\diag(\bbQ(z))\boldsymbol{1}_N\) is a quadratic form of \(\diag(\bbQ(z))\), and the general procedures for calculating \(N^{-1}\tr(\bbQ(z))\) have two steps:
\begin{enumerate}[1.]
    \item Show that \(N^{-1}\tr(\bbQ(z))\overset{a.s.}{\longrightarrow}N^{-1}\mbE[\tr(\bbQ(z))]\);
    \item Compute \(N^{-1}\mbE[\tr(\bbQ(z))]\).
\end{enumerate}
For the first step, we need Lemma \ref{Thm of Entrywise almost sure convergence} in \S\ref{ssec of quadratic forms}. This lemma establishes the convergence rate of quadratic forms of \(\bbQ(z)\) and \(\diag(\bbQ(z))\) to their mean, where \(N^{-1}\tr(\bbQ(z))=N^{-1}\boldsymbol{1}_N'\diag(\bbQ(z))\boldsymbol{1}_N\) is a quadratic form of \(\diag(\bbQ(z))\). For the second step, by the definition of \(\bbQ(z)\) in \eqref{Eq of N and Q}, we know that \(\bbM\bbQ(z)-z\bbQ(z)=\bbI_N\), i.e. \(\bbQ(z)=z^{-1}(\bbM\bbQ(z)-\bbI_N)\), so we obtain (e.g.)
\begin{align}
	\frac{1}{N}\mbE[\tr(\bbQ^{11}(z))]=\frac{1}{N}\sum_{i=1}^m\mbE[Q_{ii}^{11}(z)]=\frac{z^{-1}}{N^{3/2}}\sum_{i,j,k=1}^{m,n,p}\mbE[X_{ijk}(c_kQ_{ij}^{12}(z)+b_jQ_{ik}^{13}(z))]-z^{-1}\mfc_1.\notag
\end{align}
To compute \(\mbE[X_{ijk}(c_kQ_{ij}^{12}(z)+b_jQ_{ik}^{13}(z))]\), we will use the cumulant expansion \eqref{Eq of cumulant expansion}. To be precise, the definition of \(Q_{st}(z)\) allows us to treat it as a smooth function of \(\boldsymbol{X}\). Consequently, we can compute its expectation using the cumulant expansion (\ref{Eq of cumulant expansion}).
Next, define
\begin{align}
	\partial_{ijk}^{(l)}:=\frac{\partial^l}{\partial X_{ijk}^l}\quad l\in\mbN^+,\label{Eq of partial operator}
\end{align}
then we have
$$\big(\partial_{ijk}^{(1)}\bbM\big)\bbQ(z)+(\bbM-z\bbI_N)\partial_{ijk}^{(1)}\bbQ(z)=0\quad\Rightarrow\quad\partial_{ijk}^{(1)}\bbQ(z)=-\bbQ(z)\big(\partial_{ijk}^{(1)}\bbM\big)\bbQ(z).$$
By the notations in (\ref{Eq of vector notation d=3}), we rewrite (\ref{Eq of mcA}) as follows:
\begin{align}
	\mcA_{ijk}^{(j_1,j_2)}=\left\{\begin{array}{cc}
		a_i,&(j_1,j_2)=(1,2)\ {\rm or\ }(2,1)\\b_j,&(j_1,j_2)=(1,3)\ {\rm or\ }(3,1)\\c_k,&(j_1,j_2)=(2,3)\ {\rm or\ }(3,2)
	\end{array}\right.\quad\tilde{t}_\alpha=\left\{\begin{array}{ll}
		i,&t_\alpha=1\\j,&t_\alpha=2\\k,&t_\alpha=3
	\end{array}\right.,\label{Eq of nij}
\end{align}
so we obtain that
\begin{align}
	\partial_{ijk}^{(1)}Q_{i_1i_2}^{j_1j_2}(z)=-N^{-1/2}\sum_{\substack{t_1, t_2 \\ t_1 \neq t_2}}Q_{i_1\tilde{t}_1}^{j_1t_1}(z)\mcA_{ijk}^{(t_1,t_2)}Q_{\tilde{t}_2i_2}^{t_2j_2}(z),\label{Eq of partial Q 1}
\end{align}
where the summations of \(t_1\) and \(t_2\) are over \(\{1,2,3\}\). For brevity, we write \(\boldsymbol{Q}\) for \(\boldsymbol{Q}(z)\) henceforth and refer to its entries without the explicit dependency on \(z\), unless otherwise specified. By the cumulant expansion \eqref{Eq of cumulant expansion} and \eqref{Eq of partial Q 1}, we have
\begin{align}
    &\frac{1}{N}\sum_{i=1}^m\mbE[Q_{ii}^{11}]=\frac{z^{-1}}{N^{3/2}}\sum_{i,j,k=1}^{m,n,p}\mbE[\partial_{ijk}^{(1)}(c_kQ_{ij}^{12}+b_jQ_{ik}^{13})]+\frac{z^{-1}}{N^{3/2}}\sum_{i,j,k=1}^{m,n,p}\epsilon_{ijk}^{(2)}-z^{-1}\mfc_1\label{Eq of Q11 cumulant expansion example}\\
    &=-z^{-1}\mfc_1-\frac{z^{-1}}{N^2}\mbE\big[\tr(\bbQ^{11})\tr(\bbQ^{22}+\bbQ^{33})+\tr(\bbQ^{12}\bbQ^{21}+\bbQ^{13}\bbQ^{31})+\bba'\bbQ^{12}\bbQ^{23}\bbc+\bba'\bbQ^{13}\bbQ^{32}\bbb\notag\\
    &+2\bbb'\bbQ^{21}\bbQ^{13}\bbc+2\bbb'\bbQ^{23}\bbc\tr(\bbQ^{11})+\bba'\bbQ^{13}\bbc\tr(\bbQ^{22})+\bba'\bbQ^{12}\bbb\tr(\bbQ^{33})\big]+\frac{z^{-1}}{N^{3/2}}\sum_{i,j,k=1}^{m,n,p}\epsilon_{ijk}^{(2)}.\notag
\end{align}
The lemmas in \S\ref{ssec of quadratic forms} and \S\ref{ssec of major terms} address the following computational challenges that arise when evaluating \eqref{Eq of Q11 cumulant expansion example}.
\begin{enumerate}
    \item Lemma \ref{Thm of Entrywise almost sure convergence} in \S\ref{ssec of quadratic forms} establishes the convergence rate of quadratic forms of \(\bbQ\) and \(\diag(\bbQ)\) to their mean, e.g. \(\bbb'\bbQ^{23}\bbc\) and \(N^{-1}\tr(\bbQ^{11})=N^{-1}\boldsymbol{1}_m'\diag(\bbQ^{11})\boldsymbol{1}_m\), so we can conclude that
    $$\lim_{N\to\infty}\big|\mbE[N^{-1}\tr(\bbQ^{11})N^{-1}\tr(\bbQ^{22})]-\mbE[N^{-1}\tr(\bbQ^{11})]\mbE[N^{-1}\tr(\bbQ^{22})]\big|=0$$
    and 
    $$\lim_{N\to\infty}\big|\mbE[N^{-1}\tr(\bbQ^{11})\bbb'\bbQ^{23}\bbc]-\mbE[N^{-1}\tr(\bbQ^{11})]\mbE[\bbb'\bbQ^{23}\bbc]\big|=0$$
    in \eqref{Eq of Q11 cumulant expansion example}.
    \item To prove that the remainder \(N^{-3/2}\sum_{i,j,k=1}^{m,n,p}\epsilon_{ijk}^{(2)}\) in \eqref{Eq of Q11 cumulant expansion example} vanishes  as \(N\to\infty\), we need lemmas in \S\ref{ssec of major terms}. Actually, for later calculations of the asymptotic mean and variance of the LSS of the matrix \(\bbM\), see \S\ref{Sec of mean and covariance}, we need to compute \(\sum_{i,j,k=1}^{m,n,p}\mbE[\partial_{ijk}^{(l)}(c_kQ_{ij}^{12}+b_jQ_{ik}^{13})]\) for \(l=2,3,4\), and there will appear lots of different complicated terms as those in \eqref{Eq of Q11 cumulant expansion example}. The lemmas in \S\ref{ssec of major terms} help us determine which terms in \(\sum_{i,j,k=1}^{m,n,p}\mbE[\partial_{ijk}^{(l)}(c_kQ_{ij}^{12}+b_jQ_{ik}^{13})]\) vanish as \(N\to\infty\). We refer to these as minor terms. By distinguishing between major and minor terms, we can concentrate on the terms that significantly contribute to the asymptotic mean and variance of the LSS of the matrix \(\bbM\).
\end{enumerate}
Consequently, the lemmas in \S\ref{ssec of quadratic forms} and \S\ref{ssec of major terms} will simplify calculations of \eqref{Eq of Q11 cumulant expansion example} as follows:
$$N^{-1}\mbE[\tr(\bbQ^{11})](z+N^{-1}\mbE[\tr(\bbQ^{22}+\bbQ^{33})])+\mfc_1=\mro(1),$$
i.e. \(-\frac{\mfc_1}{\mfm_1(z)}=z+\mfm_2(z)+\mfm_3(z)+\mro(1)\), and the limiting form of this equation is just the vector Dyson equation \eqref{Eq of MDE 3 order}. For more details, readers can refer to \S\ref{sec of proof entrywise law d=3} later.
\subsubsection{Almost sure convergence of quadratic forms}\label{ssec of quadratic forms}
\begin{lem}\label{Thm of Entrywise almost sure convergence}
	When \(d=3\), for any \(K\in\mathbb{N}^+,z_1,\cdots,z_K\in\mathbb{C}_\eta^+\) and $\omega\in(1/2-\delta,1/2)$, where $\delta>0$ is a sufficiently small number, let \(s_i\in\{1,2,3\}\) for \(i=1,\cdots,K\) such that \(s_{2j}=s_{2j+1}\), then for any two deterministic vectors \(\bbx\in\mbC^{n_{s_1}},\bby\in\mbC^{n_{s_{K+1}}}\) with bounded $L^2$ norms, we have
	\begin{align}
		\Big|\boldsymbol{x}'\prod_{i=1}^K\boldsymbol{Q}^{s_i s_{i+1}}(z_i)\boldsymbol{y}-\mathbb{E}\Big[\boldsymbol{x}'\prod_{i=1}^K\boldsymbol{Q}^{s_i s_{i+1}}(z_i)\boldsymbol{y}\Big]\Big|\prec C_K\eta^{-(K+4)}N^{-\omega}.\label{Eq of as 1}
	\end{align}
	Moreover, if \(s_1=s_{K+1}\), we have
	\begin{align}
		\Big|\boldsymbol{x}'{\rm diag}\Big(\prod_{i=1}^K\boldsymbol{Q}^{s_i s_{i+1}}(z_i)\Big)\boldsymbol{y}-\mathbb{E}\Big[\boldsymbol{x}'{\rm diag}\Big(\prod_{i=1}^K\boldsymbol{Q}^{s_i s_{i+1}}(z_i)\Big)\boldsymbol{y}\Big]\Big|\prec C_K\eta^{-(K+4)}N^{-\omega}\label{Eq of as 2}.
	\end{align}
\end{lem}
Lemma \ref{Thm of Entrywise almost sure convergence} implies that \(\bba'\bbQ^{12}(z)\bbb\overset{a.s.}{\longrightarrow}\mbE[\bba'\bbQ^{12}(z)\bbb]\) and
$$\frac{1}{N}\tr(\bbQ(z))\overset{a.s.}{\longrightarrow}\frac{1}{N}\mbE[\tr(\bbQ(z))],\quad\frac{1}{N}\tr(\bbQ^{12}(z_1)\bbQ^{21}(z_2))\overset{a.s.}{\longrightarrow}\frac{1}{N}\mbE[\tr(\bbQ^{12}(z_1)\bbQ^{21}(z_2))],$$
and all the quadratic forms above will appear in the asymptotic covariance and mean functions of \(\tr(\bbQ(z))-Ng(z)\). 

Moreover, we have given the formula of \(\partial_{ijk}^{(1)}\bbQ\) in \eqref{Eq of partial Q 1}. For higher derivatives, we can show that \(\partial_{ijk}^{(l)}\bbQ=(-1)^ll!\big(\bbQ\partial_{ijk}^{(1)}\bbM\big)^l\bbQ\) for \(l\geq2\) and
\begin{align}
	\partial_{ijk}^{(l)}Q_{i_1i_2}^{j_1j_2}=(-N^{-1/2})^ll!\sum_{t_1\cdots t_{2l}}Q_{i_1\tilde{t}_1}^{j_1t_1}\left(\prod_{\alpha=1}^{l-1}\mcA_{ijk}^{(t_{2\alpha-1},t_{2\alpha})}Q_{\tilde{t}_{2\alpha}\tilde{t}_{2\alpha+1}}^{t_{2\alpha}t_{2\alpha+1}}\right)\mcA_{ijk}^{(t_{2l-1},t_{2l})}Q_{\tilde{t}_{2l}i_2}^{t_{2l}j_2},\label{Eq of partial Q}
\end{align}
where the summations of all \(t_\alpha\) are over \(\{1,2,3\}\) such that \({t_{2\alpha-1}\neq t_{2\alpha}}\) for all \(\alpha=1,\cdots,l\).
\begin{remark}
	Let us  use a simple example to demonstrate the structure of  (\ref{Eq of partial Q}), since
	\begin{align}
		\partial_{ijk}^{(1)}\bbM=\frac{1}{\sqrt{N}}\left(\begin{array}{ccc}
			\boldsymbol{0}_{m\times m}&c_k\boldsymbol{e}_i^m(\boldsymbol{e}_j^n)'&b_j\boldsymbol{e}_i^m(\boldsymbol{e}_k^p)'\\c_k\boldsymbol{e}_j^n(\boldsymbol{e}_i^m)'&\boldsymbol{0}_{n\times n}&a_i\boldsymbol{e}_j^n(\boldsymbol{e}_k^p)'\\b_j\boldsymbol{e}_k^p(\boldsymbol{e}_i^m)'&a_i\boldsymbol{e}_k^p(\boldsymbol{e}_j^n)'&\boldsymbol{0}_{p\times p}
		\end{array}\right),\label{Eq of partial N}
	\end{align}
	where \(\boldsymbol{e}_i^m\) is a \(m\) dimensional vector whose \(i\)-th entry is \(1\) and others are \(0\), as does \(\boldsymbol{e}_j^n,\boldsymbol{e}_k^p\). Consider \(\partial_{ijk}^{(1)}Q_{11}^{11}\), which is indeed the \((1,1)\) entry in the \((1,1)\) block of \(\partial_{ijk}^{(1)}\bbQ=-\bbQ(\partial_{ijk}^{(1)}\bbM)\bbQ\), so we first consider \(\partial_{ijk}^{(1)}\bbQ^{11}\), i.e.
	$$\partial_{ijk}^{(1)}\bbQ^{11}=-\sum_{t_2,t_3}\bbQ^{1t_2}(\partial_{ijk}^{(1)}\bbM^{t_2t_3})\bbQ^{t_31},$$
	notice that the diagonal blocks of \(\partial_{ijk}^{(1)}\bbM\) are zero, then this implies that \(t_2\neq t_3\). Besides, for each \(\partial_{ijk}^{(1)}\bbM^{t_2t_3},t_2\neq t_3\), it only has one nonzero entry with value of \(\mcA_{ijk}^{t_2t_3}\). Hence, we show that for any two adjacent  \(Q_{\tilde{t}_{2\alpha-1}\tilde{t}_{2\alpha}}^{t_{2\alpha-1}t_{2\alpha}}\), i.e. \(Q_{\tilde{t}_{2\alpha-1}\tilde{t}_{2\alpha}}^{t_{2\alpha-1}t_{2\alpha}}\mcA_{ijk}^{(t_{2\alpha},t_{2\alpha+1})}Q_{\tilde{t}_{2\alpha+1}\tilde{t}_{2\alpha+2}}^{t_{2\alpha+1}t_{2\alpha+2}}\), it has \(t_{2\alpha}\neq t_{2\alpha+1}\).
\end{remark}
We say \(Q_{\tilde{t}_{2\alpha-1}\tilde{t}_{2\alpha}}^{t_{2\alpha-1}t_{2\alpha}}\) comes from {\bf diagonal} blocks if \(t_{2\alpha-1}=t_{2\alpha}\), otherwise from {\bf off-diagonal} blocks. Since we will apply the bounded differences inequality (\ref{Eq of Bounded Differences Inequality}) to prove Lemma \ref{Thm of Entrywise almost sure convergence}, we need the upper bound for the summation of higher order derivatives over all \(i,j,k\). Hence, we state the following lemma.
\begin{lem}\label{Lem of minor terms 1}
	When \(d=3\), for any \(K\in\mbN^+\) and \(z\in\mbC_{\eta}^+\), let \(\bbx,\bby\in\mbC^N\) be two deterministic vectors with bounded $L^2$ norms, then we have
	$$\sum_{i,j,k=1}^{m,n,p}\big|\boldsymbol{x}'\partial_{ijk}^{(l)}\Big(\prod_{k=1}^K\bbQ(z)\Big)\boldsymbol{y}\big|^2<\left\{\begin{array}{ll}
		C_{l,K}\Vert\bbQ(z)\Vert^{2(l+K)}N^{-1}&l=1,2,\\
		C_{l,K}\Vert\bbQ(z)\Vert^{2(l+K)}N^{-2}&l=3.
	\end{array}\right.$$
\end{lem}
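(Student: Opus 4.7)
The plan is to expand $\partial_{ijk}^{(l)}\prod_{\kappa=1}^K \bbQ(z)$ via the generalized Leibniz rule combined with the explicit derivative formula \eqref{Eq of partial Q}, and then bound each resulting scalar term using the unit-norm structure of $\bba,\bbb,\bbc$ together with block-wise norm bounds on $\bbQ$.

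First I would write
\[
\partial_{ijk}^{(l)}\prod_{\kappa=1}^K\bbQ(z)\;=\; \sum_{l_1+\cdots+l_K=l}\binom{l}{l_1,\ldots,l_K}\prod_{\kappa=1}^K \partial_{ijk}^{(l_\kappa)}\bbQ(z),
\]
and insert \eqref{Eq of partial Q} into every factor, yielding a finite combinatorial sum indexed by distributions $(l_1,\ldots,l_K)$ and admissible index paths $(t_{\kappa,1},\ldots,t_{\kappa,2l_\kappa})$ with $t_{\kappa,2\alpha-1}\neq t_{\kappa,2\alpha}$. Each scalar term carries the common scaling $N^{-l/2}$, exactly $l$ factors $\mcA_{ijk}^{(t,t')}\in\{a_i,b_j,c_k\}$, and $l+K$ entries of $\bbQ$, so the universal $\|\bbQ\|^{2(l+K)}$ in the statement is immediate from this counting.

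To estimate $|\bbx'(\cdot)\bby|^2$ summed over $(i,j,k)$, I would combine four ingredients: (a) outer resolvent factors contracted with $\bbx$ or $\bby$ become vectors of $\ell^2$-norm at most $\|\bbQ\|\|\bbx\|$ and $\|\bbQ\|\|\bby\|$ in their free indices; (b) each $\mcA$-factor satisfies $\|\mcA\|_\infty\le 1$ and $\sum \mcA^2\le 1$; (c) a middle resolvent summed together with an $\mcA$-weight in its free index contributes only $\|\bbQ\|^2$, whereas an \emph{unweighted} summation of a diagonal-block entry $Q^{ss}_{i'i'}$ gains an $N$, via $\sum_{i'}|Q^{ss}_{i'i'}|^2\le\|\bbQ^{ss}\|_F^2\le n_s\|\bbQ\|^2=O(N\|\bbQ\|^2)$; (d) individual entries obey $|Q^{st}_{ij}|\le\|\bbQ\|$. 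The squared prefactor $N^{-l}$ is thereby offset by at most one $N^{+1}$ per unweighted middle-trace resolvent, which gives $N^{-1}\|\bbQ\|^{2(K+1)}$ for $l=1$ (no middle resolvent) and $N^{-2}\cdot N = N^{-1}$ for $l=2$ (at most one such middle resolvent).

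The case $l=3$ is the main obstacle: a careful enumeration of the six-entry paths $(t_1,\ldots,t_6)$, across all admissible patterns and all distributions of the three derivatives among the $K$ resolvents, must show that the total gain in $N$ from unweighted middle traces is bounded by $N^{+1}$ rather than the naive $N^{+2}$ one would get from two independent middle traces. The key structural fact to establish is that whenever two middle resolvents could simultaneously produce a diagonal-block trace, the surrounding index structure either forces an $\mcA$-weight on one of them, or forces a shared external variable between the outer factors, both of which remove one would-be $N^{+1}$ via a sharper H\"older or Frobenius step. Once this bookkeeping is in place the remaining estimates are routine applications of Cauchy--Schwarz, H\"older, and the Frobenius-versus-spectral inequality $\|\bbQ^{st}\|_F\le\sqrt{n_s\wedge n_t}\,\|\bbQ\|$, producing the stated $N^{-3}\cdot N = N^{-2}$ bound.
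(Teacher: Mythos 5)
Your decomposition and counting framework match the paper's: expand $\partial_{ijk}^{(l)}$ over the $K$-fold product by Leibniz, insert \eqref{Eq of partial Q} into each factor, and track the $N^{-l/2}$ prefactor, the $l$ factors of $\mcA\in\{a_i,b_j,c_k\}$, and the $l+K$ resolvent entries. Your treatment of $l=1$ (no middle resolvent, immediate $N^{-1}$) and $l=2$ (at most one middle resolvent, one possible unweighted diagonal trace giving $N^{-2}\cdot N=N^{-1}$) is correct and is exactly what the paper does, including the cross-term where the two derivatives land on different factors.

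The trouble is $l=3$, which you yourself identify as the main obstacle and then do not prove. You write that ``a careful enumeration $\ldots$ must show that the total gain in $N$ from unweighted middle traces is bounded by $N^{+1}$'' and that ``the key structural fact to establish is that whenever two middle resolvents could simultaneously produce a diagonal-block trace, the surrounding index structure either forces an $\mcA$-weight on one of them, or forces a shared external variable between the outer factors.'' This is the right fact, and your intuition for it is sound: if both $Q^{t_2t_2}_{\tilde t_2\tilde t_2}$ and $Q^{t_4t_4}_{\tilde t_4\tilde t_4}$ appear unweighted, then since $t_2\ne t_4$ and $d=3$, the constraint that $\mcA^{(t_1,t_2)}$ avoid $a_{\tilde t_4}^{(t_4)}$ forces $t_1=t_4$, which ties $\tilde t_4$ to the outer left factor $Q^{s_{l_0}t_1}_{\cdot\tilde t_1}$ and kills the second trace. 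But at this level of generality it remains an assertion, not a proof. The substance of the lemma --- the reason $\partial^{(3)}$ earns the extra $N^{-1}$ --- lies precisely in exhaustively verifying that no admissible index pattern $(t_1,\ldots,t_6)$, across diagonal/off-diagonal combinations for the two middle resolvents, across the sub-patterns of which $\mcA$-weights hit which free indices, and across the three ways the three derivatives distribute among the $K$ factors (as in \eqref{Eq of Lem of minor terms 1 3rd derivative}), permits two simultaneous unweighted Frobenius traces. The paper does this via a case enumeration; you need to do the same, or supply a clean combinatorial argument that rules out the bad configuration uniformly. Without it, the bound $N^{-2}$ for $l=3$ is unjustified, and this bound is what the rest of the paper's concentration machinery leans on.
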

\begin{proof}
	Recall the notations in (\ref{Eq of vector notation d=3}), assume that \(\boldsymbol{x}=(\boldsymbol{x}^1,\boldsymbol{x}^2,\boldsymbol{x}^3)'\) such that \(\boldsymbol{x}^1\in\mathbb{C}^m,\boldsymbol{x}^2\in\mathbb{C}^n,\boldsymbol{x}^3\in\mathbb{C}^p\) and \(\Vert\boldsymbol{x}^1\Vert_2=\Vert\boldsymbol{x}^2\Vert_2=\Vert\boldsymbol{x}^3\Vert_2=1\), as does \(\boldsymbol{y}=(\boldsymbol{y}^1,\boldsymbol{y}^2,\boldsymbol{y}^3)'\). It suffices to show that for given \(s_1,\cdots,s_{K+1}\in\{1,2,3\}\) and \(l=1,2,3\), we have
	\begin{align}
		\sum_{i,j,k=1}^{m,n,p}\Big|(\boldsymbol{x}^{s_1})'\partial_{ijk}^{(l)}\Big(\prod_{l=1}^K\bbQ^{s_ls_{l+1}}\Big)\bby^{s_{K+1}}\Big|^2<C_l\eta^{-2(l+1)}N^{-1}.\label{Eq of entrywise partial}
	\end{align}
	For simplicity, let 
	\begin{align}
		\boldsymbol{x}^{(i_0)}=\left\{\begin{array}{ll}
			\prod_{i=1}^{i_0-1}\boldsymbol{Q}^{s_is_{i+1}}(z_i)\boldsymbol{x}^{s_1}&i_0>1,\\
			\bbx&i_0=1.
		\end{array}\right.\quad\boldsymbol{y}^{(i_0)}=\left\{\begin{array}{ll}
			\prod_{i=i_0+1}^K\boldsymbol{Q}^{s_is_{i+1}}(z_i)\boldsymbol{y}^{s_{K+1}}&i_0<K,\\
			\bby&i_0=K.
		\end{array}\right.\label{Eq of xi0 yi0}
	\end{align} 
	According to (\ref{Eq of partial Q 1}) and (\ref{Eq of partial Q}), we have
	\begin{align}
		&(\boldsymbol{x}^{j_1})'\partial_{ijk}^{(l)}\boldsymbol{Q}^{j_1j_2}\boldsymbol{y}^{j_2}\label{Eq of inner product partial}\\
		&=\left\{\begin{array}{ll}
			(-N^{-1/2})^ll!\sum_{t_1\cdots t_{2l}}(\boldsymbol{x}^{j_1})'Q_{\cdot\tilde{t}_1}^{j_1t_1}\left(\prod_{\alpha=1}^{l-1}\mcA_{ijk}^{(t_{2\alpha-1},t_{2\alpha})}Q_{\tilde{t}_{2\alpha}\tilde{t}_{2\alpha+1}}^{t_{2\alpha}t_{2\alpha+1}}\right)\mcA_{ijk}^{(t_{2l-1},t_{2l})}Q_{\tilde{t}_{2l}\cdot}^{t_{2l}j_2}\boldsymbol{y}^{j_2}&l\geq2\\
			-N^{-1/2}\sum_{t_1,t_2}(\boldsymbol{x}^{j_1})'Q_{\cdot\tilde{t}_2}^{j_1t_1}\mcA_{ijk}^{(t_1,t_2)}Q_{\tilde{t}_2\cdot}^{t_2j_2}\boldsymbol{y}^{j_2}&l=1
		\end{array}\right.,\notag
	\end{align}
	where \(t_1,\cdots,t_{2l}\in\{1,2,3\}\) such that \(t_{2\alpha-1}\neq t_{2\alpha}\) for \(\alpha=1,\cdots,l\) and \(\tilde{t}_{\alpha}\) is defined in (\ref{Eq of nij}).

    \vspace{5mm}
    \noindent
    {\bf First derivatives:} When \(l=1\), since
		\begin{align*}
			&\sum_{i,j,k=1}^{m,n,p}\Big|(\bbx^{s_1})'\partial_{ijk}^{(1)}\Big(\prod_{l=1}^K\bbQ^{s_ls_{l+1}}\Big)\bby^{s_{K+1}}\Big|^2\leq\sum_{l_0=1}^K\sum_{i,j,k=1}^{m,n,p}\Big|(\bbx^{(l_0)})'\partial_{ijk}^{(1)}\bbQ^{s_{l_0}s_{l_0+1}}\bby^{(l_0)}\Big|^2\\
			&=\sum_{l_0=1}^K\sum_{i,j,k=1}^{m,n,p}\Big|\sum_{t_1\neq t_2}N^{-1/2}(\bbx^{(l_0)})'Q_{\cdot\tilde{t}_1}^{s_{l_0}t_1}\mcA_{ijk}^{(t_1,t_2)}Q_{\tilde{t}_2\cdot}^{t_2s_{l_0+1}}\bby^{(l_0)}\Big|^2\\
			&\leq6\sum_{l_0=1}^K\sum_{t_1\neq t_2}\sum_{i,j,k=1}^{m,n,p}N^{-1}\Big|(\bbx^{(l_0)})'Q_{\cdot\tilde{t}_1}^{s_{l_0}t_1}\mcA_{ijk}^{(t_1,t_2)}Q_{\tilde{t}_2\cdot}^{t_2s_{l_0+1}}\bby^{(l_0)}\Big|^2:=6\sum_{l_0=1}^K\sum_{t_1\neq t_2}\mcP_{t_1t_2},
		\end{align*}
		where we use Cauchy's inequalityin the third step and \(t_1,t_2\in\{1,2,3\}\). To conclude Lemma \ref{Lem of minor terms 1} for \(l=1\), it is enough to show that each \(\mcP_{t_1t_2}<N^{-1}\Vert\bbQ\Vert^{2(K+1)}\). For example,
		$$\mcP_{t_1t_2}=\sum_{i,j,k=1}^{m,n,p}|(\bbx^{(l_0)})'Q_{\cdot j}^{s_{l_0}2}a_iQ_{k\cdot}^{3s_{l_0+1}}\bby^{(l_0)}|^2=N^{-1}\Vert \bbQ^{t_1s_{l_0}}\bbx^{(l_0)}\Vert_2^2\times\Vert \bbQ^{t_2s_{l_0+1}}\bby^{(l_0)}\Vert_2^2\leq N^{-1}\Vert\bbQ\Vert^{2(K+1)},$$
		where we use the fact that \(\Vert \bbQ^{t_1s_{l_0}}\bbx^{(l_0)}\Vert_2\leq\Vert\bbQ^{t_1s_{l_0}}\Vert\cdot\Vert\bbx^{(l_0)}\Vert_2\leq\Vert\bbQ\Vert^{l_0}\) and \(\Vert \bbQ^{t_2s_{l_0+1}}\bby^{(l_0)}\Vert_2\leq\Vert\bbQ\Vert^{K-l_0+1}\). The arguments for the others are the same, the details are omitted for brevity.

        \vspace{5mm}
        \noindent
        {\bf Second derivatives:}  For the second derivatives, i.e. 
        \begin{align}
            &(\bbx^{s_1})'\partial_{ijk}^{(2)}\Big(\prod_{l=1}^K\bbQ^{s_ls_{l+1}}\Big)\bby^{s_{K+1}}=\sum_{l_0=1}^K(\bbx^{(l_0)})'\partial_{ijk}^{(2)}\bbQ^{s_{l_0}s_{l_0+1}}\bby^{(l_0)}\label{Eq of a.s. Lemma 2nd 1}\\
            &+2\sum_{l_0<l_1}^K(\bbx^{(l_0)})'\partial_{ijk}^{(1)}\bbQ^{s_{l_0}s_{l_0+1}}\bbP^{l_0l_1}\partial_{ijk}^{(1)}\bbQ^{s_{l_1}s_{l_1+1}}\bby^{(l_1)},\label{Eq of a.s. Lemma 2nd 2}
        \end{align}
        where \(\bbx^{(l_0)},\bby^{(l_0)}\) are defined in \eqref{Eq of xi0 yi0} and
        \begin{align}
			\bbP^{l_0l_1}:=\left\{\begin{array}{ll}
				\prod_{i=l_0+1}^{l_1}\boldsymbol{Q}^{s_is_{i+1}}&l_0+1<l_0\\\boldsymbol{I}_{n_{s_{l_0+1}}}&l_0+1=l_1
			\end{array}\right..\label{Eq of bbP}
		\end{align}
       Consider the following two possible scenarios. 
        
        \textbf{Case 1} \eqref{Eq of a.s. Lemma 2nd 1}: by Cauchy's inequality  and (\ref{Eq of inner product partial}), for any \(l_0\in\{1,\cdots,K\}\), consider
		\begin{align*}
			&\mcR_{(2,2)}^{(l_0)}:=\sum_{i,j,k=1}^{m,n,p}\Big|(\bbx^{(l_0)})'\partial_{ijk}^{(2)}\bbQ^{s_{l_0}s_{l_0+1}}\bby^{(l_0)}\Big|^2\\
			&=\sum_{i,j,k=1}^{m,n,p}\Big|2N^{-1}\sum_{t_1\neq t_2}\sum_{t_3\neq t_4}(\bbx^{(l_0)})'Q_{\cdot\tilde{t}_1}^{s_{l_0}t_1}\mcA_{ijk}^{(t_1,t_2)}Q_{\tilde{t}_2\tilde{t}_3}^{t_2t_3}\mcA_{ijk}^{(t_3,t_4)}Q_{\tilde{t}_4\cdot}^{t_4s_{l_0+1}}\bby^{(l_0)}\Big|^2\\
			&\leq C\sum_{t_1\neq t_2}\sum_{t_3\neq t_4}\sum_{i,j,k=1}^{m,n,p}N^{-2}\Big|(\bbx^{(l_0)})'Q_{\cdot\tilde{t}_1}^{s_{l_0}t_1}\mcA_{ijk}^{(t_1,t_2)}Q_{\tilde{t}_2\tilde{t}_3}^{t_2t_3}\mcA_{ijk}^{(t_3,t_4)}Q_{\tilde{t}_4\cdot}^{t_4s_{l_0+1}}\bby^{(l_0)}\Big|^2\\
			&:=C\sum_{t_1\neq t_2}\sum_{t_3\neq t_4}\mcP_{t_1\cdots t_4},
		\end{align*}
		where \(t_1,\cdots,t_4\in\{1,2,3\}\). Therefore, it is enough to show that each \(\mcP_{t_1\cdots t_4}\leq N^{-1}\Vert\bbQ\Vert^{2(K+2)}\). 
First, consider the case where \(Q_{\tilde{t}_2\tilde{t}_3}^{t_2t_3}\) is an element of the off-diagonal blocks, which implies that \(t_2\neq t_3\). In this scenario, there are two possible subcases:
		\begin{itemize}
			\item Both \(\mcA_{ijk}^{(t_1,t_2)}\) and \(\mcA_{ijk}^{(t_3,t_4)}\) do not contain \(a_{i_{t_2}}^{(t_2)}\) and \(a_{i_{t_3}}^{(t_3)}\), it implies that \(t_1=t_3\) and \(t_2=t_4\), so
			\begin{align*}
				&\mcP_{t_1\cdots t_4}=\sum_{i,j,k=1}^{m,n,p}N^{-2}\Big|(\bbx^{(l_0)})'Q_{\cdot\tilde{t}_1}^{s_{l_0}t_1}\mcA_{ijk}^{(t_1,t_2)}Q_{\tilde{t}_2\tilde{t}_1}^{t_2t_1}\mcA_{ijk}^{(t_1,t_2)}Q_{\tilde{t}_2\cdot}^{t_2s_{l_0+1}}\bby^{(l_0)}\Big|^2\\ 
				&\leq N^{-2}(|\bbx^{(l_0)}|^{\circ2})'|\bbQ^{s_{l_0}t_1}|^{\circ2}|\bbQ^{t_1t_2}|^{\circ2}|\bbQ^{t_2s_{l_0+1}}|^{\circ2}|\bby^{(l_0)}|^{\circ2}\leq N^{-2}\Vert\bbQ\Vert^{2(K+2)},
			\end{align*}
			where we use the fact that \(\Vert|\boldsymbol{Q}|^{\circ2}\Vert=\Vert\boldsymbol{Q}\circ\overline{\bbQ}\Vert\leq\Vert\boldsymbol{Q}\Vert^2\) and all \(\bba^{(i)}\) are unit vectors.
			\item Otherwise, at least one of \(\mcA_{ijk}^{(t_1,t_2)}\) and \(\mcA_{ijk}^{(t_3,t_4)}\) contains \(a_{i_{t_2}}^{(t_2)}\) or \(a_{i_{t_3}}^{(t_3)}\). Without loss of generality, assume \(a_{i_{t_2}}^{(t_2)}\) exists, then
			\begin{align*}
				&\mcP_{t_1\cdots t_4}\leq N^{-2}\Vert \bbQ^{t_1s_{l_0}}\bbx^{(l_0)}\Vert_2^2\times\Vert \bbQ^{t_4s_{l_0+1}}\bby^{(l_0)}\Vert_2^2\times\Vert \bbQ^{t_3t_2}\bba^{(t_2)}\Vert_2^2\leq N^{-2}\Vert\bbQ\Vert^{2(K+2)}.
			\end{align*}
		\end{itemize}
		Second, suppose \(Q_{\tilde{t}_2\tilde{t}_3}^{t_2t_3}\) comes from the diagonal blocks, i.e. \(t_2=t_3\). Similarly to the previous case, this scenario can be further divided into two subcases:
		\begin{itemize}
			\item If \(t_1=t_4\), we have
			\begin{align*}
				&\mcP_{t_1\cdots t_4}=\sum_{i,j,k=1}^{m,n,p}N^{-2}\Big|(\bbx^{(l_0)})'Q_{\cdot\tilde{t}_1}^{s_{l_0}t_1}\mcA_{ijk}^{(t_1,t_2)}Q_{\tilde{t}_2\tilde{t}_2}^{t_2t_2}\mcA_{ijk}^{(t_2,t_1)}Q_{\tilde{t}_1\cdot}^{t_1s_{l_0+1}}\bby^{(l_0)}\Big|^2\\
				&\leq N^{-2}\tr(|\bbQ^{t_2t_2}|^{2\circ})\cdot(|\bbx^{(l_0)}|^{\circ2})'|\bbQ^{s_{l_0}t_1}|^{\circ2}|\bbQ^{t_1s_{l_0+1}}|^{\circ2}|\bby^{(l_0)}|^{\circ2}\leq N^{-1}\Vert\bbQ\Vert^{2(K+2)},
			\end{align*}
			\item if \(t_1\neq t_4\), we have
			\begin{align*}
				&\mcP_{t_1\cdots t_4}=\sum_{i,j,k=1}^{m,n,p}N^{-2}\Big|(\bbx^{(l_0)})'Q_{\cdot\tilde{t}_1}^{s_{l_0}t_1}\mcA_{ijk}^{(t_1,t_2)}Q_{\tilde{t}_2\tilde{t}_2}^{t_2t_2}\mcA_{ijk}^{(t_2,t_4)}Q_{\tilde{t}_4\cdot}^{t_4s_{l_0+1}}\bby^{(l_0)}\Big|^2\\
				&\leq N^{-2}\tr(|\bbQ^{t_2t_2}|^{2\circ})\cdot(|\bbx^{(l_0)}|^{\circ2})'|\bbQ^{s_{l_0}t_1}|^{\circ2}|\bba^{(t_1)}|^{\circ2}\cdot(|\bba^{(t_4)}|^{\circ2})'|\bbQ^{t_1s_{l_0+1}}|^{\circ2}|\bby^{(l_0)}|^{\circ2}\leq N^{-1}\Vert\bbQ\Vert^{2(K+2)}.
			\end{align*}
		\end{itemize}
		\textbf{Case 2} \eqref{Eq of a.s. Lemma 2nd 2}: for any \(l_0,l_1\in\{1,\cdots,K\}\) such that \(l_0<l_1\), consider
		\begin{align*}
			&\mcR_{(2,1)}^{(l_0,l_1)}:=\sum_{i,j,k=1}^{m,n,p}\Big|(\bbx^{(l_0)})'\partial_{ijk}^{(1)}\bbQ^{s_{l_0}s_{l_0+1}}\bbP^{l_0l_1}\partial_{ijk}^{(1)}\bbQ^{s_{l_1}s_{l_1+1}}\bby^{(l_0)}\Big|^2\\
			&\leq C\sum_{t_1\neq t_2}\sum_{t_3\neq t_4}\sum_{i,j,k=1}^{m,n,p}N^{-2}\Big|(\bbx^{(l_0)})'Q_{\cdot\tilde{t}_1}^{s_{l_0}t_1}\mcA_{ijk}^{(t_1,t_2)}Q_{\tilde{t}_2\cdot}^{t_2s_{l_0+1}}\bbP^{l_0l_1}Q_{\cdot\tilde{t}_3}^{s_{l_1}t_3}\mcA_{ijk}^{(t_3,t_4)}Q_{\tilde{t}_4\cdot}^{t_4s_{l_0+1}}\bby^{(l_0)}\Big|^2\\
			&:=C\sum_{t_1\neq t_2}\sum_{t_3\neq t_4}\mcP_{t_1\cdots t_4},
		\end{align*}
		where we use (\ref{Eq of inner product partial}) and Cauchy's inequality  again. In this situation, we can also show that each \(\mcP_{t_1\cdots t_4}\leq N^{-1}\Vert\bbQ\Vert^{2(K+2)}\) by the same method as in Case 1, so the details are omitted for brevity. Now, notice that
		\begin{align*}
			&\sum_{i,j,k=1}^{m,n,p}\Big|(\bbx^{s_1})'\partial_{ijk}^{(2)}\Big(\prod_{l=1}^K\bbQ^{s_ls_{l+1}}\Big)\bby^{s_{K+1}}\Big|^2\leq\sum_{l_0=1}^K\mcR_{(2,2)}^{(l_0)}+\sum_{l_0\neq l_1}^K\mcR_{(2,1)}^{(l_0,l_1)},
		\end{align*}
		then we can conclude Lemma \ref{Lem of minor terms 1} for \(l=2\).

 \vspace{5mm}
\noindent
 {\bf Third derivatives:} Similar to \eqref{Eq of a.s. Lemma 2nd 1} and \eqref{Eq of a.s. Lemma 2nd 2}, for the third derivatives, we have
 \begin{align}
     &(\bbx^{(l_0)})'\partial_{ijk}^{(3)}\bbQ^{s_{l_0}s_{l_0+1}}\bby^{(l_0)}=\sum_{l_0=1}^K(\bbx^{(l_0)})'\partial_{ijk}^{(3)}\bbQ^{s_{l_0}s_{l_0+1}}\bby^{(l_0)}\notag\\
     &+3\sum_{l_0<l_1}^K(\bbx^{(l_0)})'\partial_{ijk}^{(2)}\bbQ^{s_{l_0}s_{l_0+1}}\bbP^{l_0l_1}\partial_{ijk}^{(1)}\bbQ^{s_{l_1}s_{l_1+1}}\bby^{(l_1)}+3\sum_{l_0<l_1}^K(\bbx^{(l_0)})'\partial_{ijk}^{(1)}\bbQ^{s_{l_0}s_{l_0+1}}\bbP^{l_0l_1}\partial_{ijk}^{(2)}\bbQ^{s_{l_1}s_{l_1+1}}\bby^{(l_1)}\notag\\
     &+6\sum_{l_0<l_1<l_2}^K(\bbx^{(l_0)})'\partial_{ijk}^{(1)}\bbQ^{s_{l_0}s_{l_0+1}}\bbP^{l_0l_1}\partial_{ijk}^{(1)}\bbQ^{s_{l_1}s_{l_1+1}}\bbP^{l_1l_2}\partial_{ijk}^{(1)}\bbQ^{s_{l_2}s_{l_2+1}}\bby^{(l_2)}.\label{Eq of Lem of minor terms 1 3rd derivative}
 \end{align}
 Here, we only present the detailed calculation procedures for \((\bbx^{(l_0)})'\partial_{ijk}^{(3)}\bbQ^{s_{l_0}s_{l_0+1}}\bby^{(l_0)}\), since the arguments for the others are the same. By Cauchy's inequality  and (\ref{Eq of inner product partial}), for any \(l_0\in\{1,\cdots,K\}\), we have
		\begin{align}
			&\mcR_{(3,3)}^{(l_0)}:=\sum_{i,j,k=1}^{m,n,p}\Big|(\bbx^{(l_0)})'\partial_{ijk}^{(3)}\bbQ^{s_{l_0}s_{l_0+1}}\bby^{(l_0)}\Big|^2\label{Eq of minor 3rd}\\
			&\leq C\sum_{t_1\neq t_2}\sum_{t_3\neq t_4}\sum_{t_5\neq t_6}\sum_{i,j,k=1}^{m,n,p}N^{-3}\Big|(\bbx^{(l_0)})'Q_{\cdot\tilde{t}_1}^{s_{l_0}t_1}\mcA_{ijk}^{(t_1,t_2)}Q_{\tilde{t}_2\tilde{t}_3}^{t_2t_3}\mcA_{ijk}^{(t_3,t_4)}Q_{\tilde{t}_4\tilde{t}_5}^{t_4t_5}\mcA_{ijk}^{(t_5,t_6)}Q_{\tilde{t}_6\cdot}^{t_6s_{l_0+1}}\bby^{(l_0)}\Big|^2\notag\\
			&:=C\sum_{t_1\neq t_2}\sum_{t_3\neq t_4}\sum_{t_5\neq t_6}\mcP_{t_1\cdots t_6}\notag,
		\end{align}
		where \(t_1,\cdots,t_6\in\{1,2,3\}\). First, if both \(Q_{\tilde{t}_2\tilde{t}_3}^{t_2t_3},Q_{\tilde{t}_4\tilde{t}_5}^{t_4t_5}\) come from the off-diagonal blocks, i.e. \(t_2\neq t_3,t_4\neq t_5\), we claim that \(\mcP_{t_1\cdots t_6}\leq N^{-3}\Vert\bbQ\Vert^{2(K+3)}\). Consider the following two subcases.
		\begin{itemize}
			\item If there is no \(a_{i_{t_2}}^{(t_2)}\) and \(a_{i_{t_3}}^{(t_3)}\) associating with \(Q_{\tilde{t}_2\tilde{t}_3}^{t_2t_3}\) or \(a_{i_{t_4}}^{(t_4)}\) and \(a_{i_{t_5}}^{(t_5)}\) associating with \(Q_{\tilde{t}_4\tilde{t}_5}^{t_4t_5}\). Without loss of generality, assume there is no \(a_{i_{t_2}}^{(t_2)}\) and \(a_{i_{t_3}}^{(t_3)}\) in all \(\mcA_{ijk}^{(t_1,t_2)},\mcA_{ijk}^{(t_3,t_4)},\mcA_{ijk}^{(t_5,t_6)}\), then \(\mcA_{ijk}^{(t_1,t_2)}=\mcA_{ijk}^{(t_3,t_4)}=\mcA_{ijk}^{(t_5,t_6)}\), i.e. \(t_1=t_3=t_5,t_2=t_4=t_6\) while \(t_1\neq t_2\), then it implies that
			\begin{align*}
				&\mcP_{t_1\cdots t_6}=N^{-3}(|\bbx^{(l_0)}|^{\circ2})'|\bbQ^{s_{l_0}t_1}|^{\circ2}|\bbQ^{t_1t_2}|^{\circ2}|\bbQ^{t_2t_3}|^{\circ2}|\bbQ^{t_3s_{s_{l_0+1}}}|^{\circ2}|\bby^{(l_0)}|^{\circ2}\leq N^{-3}\Vert\bbQ\Vert^{2(K+3)}.
			\end{align*}
			\item Otherwise, both \(Q_{\tilde{t}_2\tilde{t}_3}^{t_2t_3}\) and \(Q_{\tilde{t}_4\tilde{t}_5}^{t_4t_5}\) have at least one of \(a_{i_{t_2}}^{(t_2)},a_{i_{t_3}}^{(t_3)}\) and \(a_{i_{t_4}}^{(t_4)},a_{i_{t_5}}^{(t_5)}\) associating with itself, respectively. Since the case when \((t_2,t_3)=(t_4,t_5)\) is solved previously, Consider the situation when there is one common \(t_i\) in \((t_2,t_3)\) and \((t_4,t_5)\), i.e. \(t_2=t_5\) while \(t_3\neq t_4\), then we have \(t_2\neq t_3\) and \(t_2\neq t_4\) due to both \(Q_{\tilde{t}_2\tilde{t}_3}^{t_2t_3}\) and \(Q_{\tilde{t}_4\tilde{t}_5}^{t_4t_5}\) come from the off-diagonal blocks, so \(\mcA_{ijk}^{(t_3,t_4)}\) must contain \(a_{i_{t_2}}^{(t_2)}\). In this case, if \(\mcA_{ijk}^{(t_1,t_2)}\) or \(\mcA_{ijk}^{(t_2,t_6)}\) contains \(a_{i_{t_3}}^{(t_3)}\) or \(a_{i_{t_4}}^{(t_4)}\), without loss of generality, assume there exists \(a_{i_{t_3}}^{(t_3)}\), then
			\begin{align*}
				&\mcP_{t_1\cdots t_6}=N^{-3}\sum_{i,j,k=1}^{m,n,p}\Big|(\bbx^{(l_0)})'Q_{\cdot\tilde{t}_1}^{s_{l_0}t_1}\mcA_{ijk}^{(t_1,t_2)}Q_{\tilde{t}_2\tilde{t}_3}^{t_2t_3}\mcA_{ijk}^{(t_3,t_4)}Q_{\tilde{t}_4\tilde{t}_2}^{t_4t_2}\mcA_{ijk}^{(t_2,t_6)}Q_{\tilde{t}_6\cdot}^{t_6s_{l_0+1}}\bby^{(l_0)}\Big|^2\\
				&\leq N^{-3}\Vert \bbQ^{t_1s_{l_0}}\bbx^{(l_0)}\Vert_2^2\times\Vert \bbQ^{t_6s_{l_0+1}}\bby^{(l_0)}\Vert_2^2\times\Vert \bbQ^{t_2t_3}\bba^{(t_3)}\Vert_2^2\times\Vert \bbQ^{t_4t_2}\bba^{(t_2)}\Vert_2^2\leq N^{-3}\Vert\bbQ\Vert^{2(K+3)}.
			\end{align*}
			If not, then \(t_1=t_3\neq t_4=t_6\) or \(t_1=t_6\neq t_3=t_4\), so we have (e.g.)
			\begin{align*}
				&\mcP_{t_1\cdots t_6}=N^{-3}\sum_{i,j,k=1}^{m,n,p}\Big|(\bbx^{(l_0)})'Q_{\cdot\tilde{t}_1}^{s_{l_0}t_1}\mcA_{ijk}^{(t_1,t_2)}Q_{\tilde{t}_2\tilde{t}_1}^{t_2t_1}\mcA_{ijk}^{(t_1,t_4)}Q_{\tilde{t}_4\tilde{t}_2}^{t_4t_2}\mcA_{ijk}^{(t_2,t_4)}Q_{\tilde{t}_4\cdot}^{t_4s_{l_0+1}}\bby^{(l_0)}\Big|^2\\
				&\leq N^{-3}(|\bbx^{(l_0)}|^{\circ2})'|\bbQ^{s_{l_0}t_1}|^{\circ2}|\bbQ^{t_1t_2}|^{\circ2}|\bbQ^{t_2t_4}|^{\circ2}|\bbQ^{t_4s_{s_{l_0+1}}}|^{\circ2}|\bby^{(l_0)}|^{\circ2}\leq N^{-3}\Vert\bbQ\Vert^{2(K+3)}.
			\end{align*}
			Finally, when there is no common \(t_i\) in \((t_2,t_3)\) and \((t_4,t_5)\), since both \(Q_{\tilde{t}_2\tilde{t}_3}^{t_2t_3}\) and \(Q_{\tilde{t}_4\tilde{t}_5}^{t_4t_5}\) have at least one of \(a_{i_{t_2}}^{(t_2)},a_{i_{t_3}}^{(t_3)}\) and \(a_{i_{t_4}}^{(t_4)},a_{i_{t_5}}^{(t_5)}\) associating with itself, respectively, we have (e.g.)
				\begin{align*}
				&\mcP_{t_1\cdots t_6}=N^{-3}\sum_{i,j,k=1}^{m,n,p}\Big|(\bbx^{(l_0)})'Q_{\cdot\tilde{t}_1}^{s_{l_0}t_1}\mcA_{ijk}^{(t_1,t_2)}Q_{\tilde{t}_2\tilde{t}_3}^{t_2t_3}\mcA_{ijk}^{(t_3,t_4)}Q_{\tilde{t}_4\tilde{t}_5}^{t_4t_5}\mcA_{ijk}^{(t_5,t_6)}Q_{\tilde{t}_6\cdot}^{t_6s_{l_0+1}}\bby^{(l_0)}\Big|^2\\
				&\leq N^{-3}\Vert \bbQ^{t_1s_{l_0}}\bbx^{(l_0)}\Vert_2^2\times\Vert \bbQ^{t_6s_{l_0+1}}\bby^{(l_0)}\Vert_2^2\times\Vert \bbQ^{t_2t_3}\bba^{(t_3)}\Vert_2^2\times\Vert \bbQ^{t_4t_5}\bba^{(t_5)}\Vert_2^2\leq N^{-3}\Vert\bbQ\Vert^{2(K+3)}.
			\end{align*}
		\end{itemize}
		Next, if one of \(Q_{\tilde{t}_2\tilde{t}_3}^{t_2t_3},Q_{\tilde{t}_4\tilde{t}_5}^{t_4t_5}\) comes from the diagonal blocks, e.g. \(t_2=t_3\) without loss of generality, consider the following two subcases. 
		\begin{itemize}
			\item First, if \(\mcA_{ijk}^{(t_5,t_6)}\) does not contain \(a_{i_{t_2}}^{(t_2)}\), then \(t_5=t_2\) or \(t_6=t_2\). Suppose \(t_6=t_2\), it implies that \(t_4\neq t_2\) and \(t_5\neq t_2\), since \(t_4\neq t_5\), \(\mcA_{ijk}^{(t_1,t_2)}\) must contains \(a_{i_{t_4}}^{(t_4)}\) or \(a_{i_{t_5}}^{(t_5)}\), then we have (e.g.)
			\begin{align*}
				&\mcP_{t_1\cdots t_6}=\sum_{i,j,k=1}^{m,n,p}N^{-3}\Big|(\bbx^{(l_0)})'Q_{\cdot\tilde{t}_1}^{s_{l_0}t_1}\mcA_{ijk}^{(t_1,t_2)}Q_{\tilde{t}_2\tilde{t}_2}^{t_2t_2}\mcA_{ijk}^{(t_2,t_4)}Q_{\tilde{t}_4\tilde{t}_5}^{t_4t_5}\mcA_{ijk}^{(t_5,t_2)}Q_{\tilde{t}_2\cdot}^{t_2s_{l_0+1}}\bby^{(l_0)}\Big|^2\\
				&\leq N^{-3}\Vert \bbQ^{t_1s_{l_0}}\bbx^{(l_0)}\Vert_2^2\cdot\Vert \bbQ^{t_5t_4}\bba^{(t_4)}\Vert_2^2\cdot\boldsymbol{1}'\diag(|\bbQ^{t_2t_2}|^{\circ2})|\bbQ^{t_2s_{l_0+1}}|^{\circ2}|\bby^{(l_0)}|^{\circ2}\leq N^{-2}\Vert\bbQ\Vert^{2(K+3)}.
			\end{align*}
			Otherwise, \(t_5=t_2\), if \(\mcA_{ijk}^{(t_1,t_2)}\) and \(\mcA_{ijk}^{(t_2,t_6)}\) do not contain \(a_{i_{t_4}}^{(t_4)}\), then \(t_1=t_4=t_6\) and
			\begin{align*}
				&\mcP_{t_1\cdots t_6}=\sum_{i,j,k=1}^{m,n,p}N^{-3}\Big|(\bbx^{(l_0)})'Q_{\cdot\tilde{t}_1}^{s_{l_0}t_1}\mcA_{ijk}^{(t_1,t_2)}Q_{\tilde{t}_2\tilde{t}_2}^{t_2t_2}\mcA_{ijk}^{(t_2,t_1)}Q_{\tilde{t}_1\tilde{t}_2}^{t_1t_2}\mcA_{ijk}^{(t_2,t_1)}Q_{\tilde{t}_1\cdot}^{t_1s_{l_0+1}}\bby^{(l_0)}\Big|^2\\
				&\leq N^{-3}(|\bbx^{(l_0)}|^{\circ2})'|\bbQ^{s_{l_0}t_1}|^{\circ2}\diag(\boldsymbol{1}'\diag(|\bbQ^{t_2t_2}|^{\circ2})|\bbQ^{t_2t_1}|^{\circ2})|\bbQ^{t_1s_{l_0+1}}|^{\circ2}|\bby^{(l_0)}|^{\circ2}\leq N^{-2}\Vert\bbQ\Vert^{2(K+3)}.
			\end{align*}
			Finally, if \(\mcA_{ijk}^{(t_1,t_2)}\) or \(\mcA_{ijk}^{(t_2,t_6)}\) contains \(a_{i_{t_4}}^{(t_4)}\), then
			\begin{align*}
				&\mcP_{t_1\cdots t_6}\leq N^{-3}\Vert \bbQ^{t_1s_{l_0}}\bbx^{(l_0)}\Vert_2^2\cdot\Vert \bbQ^{t_2s_{l_0+1}}\bby^{(l_0)}\Vert_2^2\cdot\Vert\diag(\bbQ^{t_2t_2})\bbQ^{t_2t_4}\bba^{(t_4)}\Vert_2^2\leq N^{-3}\Vert\bbQ\Vert^{2(K+3)}.
			\end{align*}
			\item Next, if \(\mcA_{ijk}^{(t_5,t_6)}\) contains \(a_{i_{t_2}}^{(t_2)}\), i.e. \(t_5\neq t_2\) and \(t_6\neq t_2\). Notice that \(t_4\neq t_2\) and \(t_4\neq t_5\) due to \(Q_{\tilde{t}_4\tilde{t}_5}^{t_4t_5}\) comes from off-diagonal blocks, then \(\mcA_{ijk}^{(t_1,t_2)}\) must contain \(a_{i_{t_4}}^{(t_4)}\) or \(a_{i_{t_5}}^{(t_5)}\), so
			\begin{align*}
				&\mcP_{t_1\cdots t_6}=\sum_{i,j,k=1}^{m,n,p}N^{-3}\Big|(\bbx^{(l_0)})'Q_{\cdot\tilde{t}_1}^{s_{l_0}t_1}\mcA_{ijk}^{(t_1,t_2)}Q_{\tilde{t}_2\tilde{t}_2}^{t_2t_2}\mcA_{ijk}^{(t_2,t_4)}Q_{\tilde{t}_4\tilde{t}_5}^{t_4t_5}\mcA_{ijk}^{(t_5,t_6)}Q_{\tilde{t}_6\cdot}^{t_6s_{l_0+1}}\bby^{(l_0)}\Big|^2\\
				&\leq N^{-3}\Vert \bbQ^{t_1s_{l_0}}\bbx^{(l_0)}\Vert_2^2\cdot\Vert \bbQ^{t_2s_{l_0+1}}\bby^{(l_0)}\Vert_2^2\cdot\Vert\bbQ^{t_5t_4}\bba^{(t_4)}\Vert_2^2\cdot\boldsymbol{1}'\diag(|\bbQ^{t_2t_2}|^{\circ2})|\bba^{(t_2)}|^{\circ2}\leq N^{-2}\Vert\bbQ\Vert^{2(K+3)}.
			\end{align*}
		\end{itemize}
		Finally, if both of \(Q_{\tilde{t}_2\tilde{t}_3}^{t_2t_3},Q_{\tilde{t}_4\tilde{t}_5}^{t_4t_5}\) comes from the diagonal blocks, i.e. \(t_2=t_3,t_4=t_5\). We have three subcases.
		\begin{itemize}
			\item Since \(t_2\neq t_4\), if both \(\mcA_{ijk}^{(t_1,t_2)}\) and \(\mcA_{ijk}^{(t_5,t_6)}=\mcA_{ijk}^{(t_4,t_6)}\) does not contain \(a_{i_{t_2}}^{(t_2)}\) and \(a_{i_{t_4}}^{(t_4)}\), then \(\mcA_{ijk}^{(t_1,t_2)}=\mcA_{ijk}^{(t_3,t_4)}=\mcA_{ijk}^{(t_2,t_4)}=\mcA_{ijk}^{(t_5,t_6)}=\mcA_{ijk}^{(t_4,t_6)}\) and
			\begin{align*}
				&\mcP_{t_1\cdots t_6}=\sum_{i,j,k=1}^{m,n,p}N^{-3}\Big|(\bbx^{(l_0)})'Q_{\cdot\tilde{t}_4}^{s_{l_0}t_4}\mcA_{ijk}^{(t_4,t_2)}Q_{\tilde{t}_2\tilde{t}_2}^{t_2t_2}\mcA_{ijk}^{(t_2,t_4)}Q_{\tilde{t}_4\tilde{t}_4}^{t_4t_4}\mcA_{ijk}^{(t_4,t_2)}Q_{\tilde{t}_2\cdot}^{t_2s_{l_0+1}}\bby^{(l_0)}\Big|^2\\
				&\leq N^{-3}(|\bbx^{(l_0)}|^{\circ2})'|\bbQ^{s_{l_0}t_4}|^{\circ2}\diag(|\bbQ^{t_4t_4}|^{\circ2})\boldsymbol{1}\cdot\boldsymbol{1}'\diag(|\bbQ^{t_2t_2}|^{\circ2})|\bbQ^{t_2s_{l_0+1}}|^{\circ2}|\bby^{(l_0)}|^{\circ2}\leq N^{-2}\Vert\bbQ\Vert^{2(K+3)}.
			\end{align*}
			\item Otherwise, if there exists one of \(a_{i_{t_2}}^{(t_2)}\) and \(a_{i_{t_4}}^{(t_4)}\), without loss generality, assume \(a_{i_{t_2}}^{(t_2)}\) exists, then \(t_6\neq t_2\) and \(t_1=t_4\). Since \(t_6\neq t_4\), then \(\mcA_{ijk}^{(t_2,t_4)}\) must contain \(a_{i_{t_6}}^{(t_6)}\) and we have
			\begin{align*}
				&\mcP_{t_1\cdots t_6}=\sum_{i,j,k=1}^{m,n,p}N^{-3}\Big|(\bbx^{(l_0)})'Q_{\cdot\tilde{t}_4}^{s_{l_0}t_4}\mcA_{ijk}^{(t_4,t_2)}Q_{\tilde{t}_2\tilde{t}_2}^{t_2t_2}\mcA_{ijk}^{(t_2,t_4)}Q_{\tilde{t}_4\tilde{t}_4}^{t_4t_4}\mcA_{ijk}^{(t_4,t_6)}Q_{\tilde{t}_6\cdot}^{t_6s_{l_0+1}}\bby^{(l_0)}\Big|^2\\
				&\leq N^{-3}\boldsymbol{1}\diag(|\bbQ^{t_2t_2}|^{\circ2})|\bba^{(t_2)}|^{\circ2}\cdot(|\bbx^{(l_0)}|^{\circ2})'|\bbQ^{s_{l_0}t_4}|^{\circ2}\diag(|\bbQ^{t_4t_4}|^{\circ2})\boldsymbol{1}\cdot(|\bba^{(t_6)}|^{\circ2})'|\bbQ^{t_6s_{l_0+1}}|^{\circ2}|\bby^{(l_0)}|^{\circ2}\\
				&\leq N^{-2}\Vert\bbQ\Vert^{2(K+3)}.
			\end{align*}
			\item Finally, when both \(a_{i_{t_2}}^{(t_2)}\) and \(a_{i_{t_4}}^{(t_4)}\) exists, we have
			\begin{align*}
				&\mcP_{t_1\cdots t_6}\leq N^{-3}\Vert \bbQ^{t_1s_{l_0}}\bbx^{(l_0)}\Vert_2^2\cdot\Vert \bbQ^{t_2s_{l_0+1}}\bby^{(l_0)}\Vert_2^2\cdot\boldsymbol{1}\diag(|\bbQ^{t_2t_2}|^{\circ2})|\bba^{(t_2)}|^{\circ2}\\
				&\cdot\boldsymbol{1}\diag(|\bbQ^{t_4t_4}|^{\circ2})|\bba^{(t_4)}|^{\circ2}\leq N^{-2}\Vert\bbQ\Vert^{2(K+3)}.
			\end{align*}
		\end{itemize}
		Now, we have shown that each \(\mcP_{t_1\cdots t_6}\) in (\ref{Eq of minor 3rd}) is bounded by \(N^{-2}\Vert\bbQ\Vert^{2(K+3)}\). Similar to the previous arguments, for any \(l_0,l_1\in\{1,\cdots,K\}\) such that \(l_0<l_1\) or \(l_0,l_1,l_2\in\{1,\cdots,K\}\) such that \(l_0<l_1<l_2\), we can show that 
		\begin{align}
			&\mcR_{(3,2)}^{(l_0,l_1)}=\sum_{i,j,k=1}^{m,n,p}\Big|(\bbx^{(l_0)})'\partial_{ijk}^{(2)}\bbQ^{s_{l_0}s_{l_0+1}}\bbP^{l_0l_1}\partial_{ijk}^{(1)}\bbQ^{s_{l_1}s_{l_1+1}}\bby^{(l_1)}\Big|^2\leq N^{-2}\Vert\bbQ\Vert^{2(K+3)},\label{Eq of minor 3rd 2}
		\end{align}
		and
		\begin{align}
			&\mcR_{(3,1)}^{(l_0,l_1,l_2)}:=\sum_{i,j,k=1}^{m,n,p}\Big|(\bbx^{(l_0)})'\partial_{ijk}^{(1)}\bbQ^{s_{l_0}s_{l_0+1}}\bbP^{l_0l_1}\partial_{ijk}^{(1)}\bbQ^{s_{l_1}s_{l_1+1}}\bbP^{l_1l_2}\partial_{ijk}^{(1)}\bbQ^{s_{l_2}s_{l_2+1}}\bby^{(l_2)}\Big|^2\leq N^{-2}\Vert\bbQ\Vert^{2(K+3)},\label{Eq of minor 3rd 3}
		\end{align}
		here we omit the details for clarity. Finally, combining \eqref{Eq of Lem of minor terms 1 3rd derivative}, \eqref{Eq of minor 3rd}, \eqref{Eq of minor 3rd 2} and \eqref{Eq of minor 3rd 3}, we have
		\begin{align*}
			&\sum_{i,j,k=1}^{m,n,p}\Big|(\bbx^{s_1})'\partial_{ijk}^{(3)}\Big(\prod_{l=1}^K\bbQ^{s_ls_{l+1}}\Big)\bby^{s_{K+1}}\Big|^2\leq\sum_{l_0=1}^K\mcR_{(3,3)}^{(l_0)}+\sum_{l_0\neq l_2}^K\mcR_{(3,2)}^{(l_0,l_1)}+\sum_{l_0\neq l_1\neq l_2}^K\mcR_{(3,1)}^{(l_0,l_1,l_2)}\leq\mrO(N^{-2}\Vert\bbQ\Vert^{2(K+3)}).
		\end{align*}
		This completes the proof of Lemma \ref{Lem of minor terms 1} for \(l=3\).
\end{proof}
Now, we provide the proof of Lemma \ref{Thm of Entrywise almost sure convergence} as follows:
\begin{proof}[Proof of Lemma \ref{Thm of Entrywise almost sure convergence}]
	We will demonstrate the proof for equation (\ref{Eq of as 1}), as the approach for proving (\ref{Eq of as 2}) follows an identical strategy. Without loss of generality, we assume that \(\Vert\bbx\Vert_2=\Vert\bby\Vert_2=1\). Notice that 
	$$\boldsymbol{x}'\prod_{i=1}^K\boldsymbol{Q}(z_i)\boldsymbol{y}=\sum_{s_1\cdots s_{K+1}}\boldsymbol{x}^{s_1}\prod_{i=1}^K\boldsymbol{Q}^{s_is_{i+1}}(z_i)\boldsymbol{y}^{s_{K+1}},$$
	where \(s_i\in\{1,2,3\}\) for \(i=1,\cdots,K+1\). Hence, we only need to show that for each given \((s_1,\cdots,s_{K+1})\), we have that
	$$\Big|(\boldsymbol{x}^{s_1})'\prod_{i=1}^K\boldsymbol{Q}^{s_is_{i+1}}(z_i)\boldsymbol{y}^{s_{K+1}}-\mathbb{E}\Big[(\boldsymbol{x}^{s_1})'\prod_{i=1}^K\boldsymbol{Q}^{s_is_{i+1}}(z_i)\boldsymbol{y}^{s_{K+1}}\Big]\Big|\prec C_K\eta_0^{-(K+4)}N^{-\omega}.$$
    For a sufficiently small \(\delta>0\), we split the following probability into two parts:
    \begin{align}
		&\mathbb{P}\left(\Big|\boldsymbol{x}^{s_1}\prod_{i=1}^K\boldsymbol{Q}^{s_is_{i+1}}(z_i)\boldsymbol{y}^{s_{K+1}}-\mathbb{E}\Big[\boldsymbol{x}^{s_1}\prod_{i=1}^K\boldsymbol{Q}^{s_is_{i+1}}(z_i)\boldsymbol{y}^{s_{K+1}}\Big]\Big|\geq t\right)\notag\\
		&\leq\mathbb{P}\left(\Big|\boldsymbol{x}^{s_1}\prod_{i=1}^K\boldsymbol{Q}^{s_is_{i+1}}(z_i)\boldsymbol{y}^{s_{K+1}}-\mathbb{E}\Big[\boldsymbol{x}^{s_1}\prod_{i=1}^K\boldsymbol{Q}^{s_is_{i+1}}(z_i)\boldsymbol{y}^{s_{K+1}}\Big]\Big|\geq t,\forall X_{ijk}\leq N^{\delta}\right)\label{Eq of a.s. convergence major probability}\\
		&+\mathbb{P}\left(\Big|\boldsymbol{x}^{s_1}\prod_{i=1}^K\boldsymbol{Q}^{s_is_{i+1}}(z_i)\boldsymbol{y}^{s_{K+1}}-\mathbb{E}\Big[\boldsymbol{x}^{s_1}\prod_{i=1}^K\boldsymbol{Q}^{s_is_{i+1}}(z_i)\boldsymbol{y}^{s_{K+1}}\Big]\Big|\geq t,\exists X_{ijk}> N^{\delta}\right).\label{Eq of a.s. convergence tail probability}
	\end{align}
    For the second part \eqref{Eq of a.s. convergence tail probability}, by Assumption \ref{Ap of general noise}, the tail probability satisfies that
    \begin{align}
        \eqref{Eq of a.s. convergence tail probability}\leq\sum_{i,j,k=1}^{m,n,p}\mathbb{P}\left(|X_{ijk}|>N^{\delta}\right)\leq N^3\exp(-N^{\delta\theta}).\label{Eq of a.s. convergence tail probability 1}
    \end{align}
    For the first part \eqref{Eq of a.s. convergence major probability}, since all \(X_{ijk}\) are bounded by \(N^{\delta}\), then we will apply the bounded differences inequality \eqref{Eq of Bounded Differences Inequality} to compute \eqref{Eq of a.s. convergence major probability}. Note that \(\boldsymbol{Q}^{s_is_{i+1}}(z)\) is a differentiable function of \(\boldsymbol{X}\), denoted as \(\boldsymbol{Q}^{s_is_{i+1}}(z,\boldsymbol{X})\), let \(\boldsymbol{X}\) and \(\boldsymbol{X}^{(ijk)}\) be two random tensors that are identical for all elements except at position $(i,j,k)$, where \(X_{ijk}\) and \(X_{ijk}^{(ijk)}\) are independent and identically distributed (one can refer to Lemma \ref{Lem of Bounded Differences Inequality}).
    By the bounded differences inequality (\ref{Eq of Bounded Differences Inequality}), we have
    \begin{align*}
        &\eqref{Eq of a.s. convergence major probability}\leq4\exp\left(-\frac{t^2}{\sum_{i,j,k=1}^{m,n,p}\Delta_{ijk}^2}\right),
    \end{align*}
    where
    \begin{align*}
        &\Delta_{ijk}:=\sup_{|X_{ijk}|,|X_{ijk}^{(ijk)}|\leq N^{\delta}}\Bigg|(\boldsymbol{x}^{s_1})'\prod_{i=1}^K\boldsymbol{Q}^{s_is_{i+1}}(z_i,\boldsymbol{X}^{(ijk)})\boldsymbol{y}^{s_{K+1}}-(\boldsymbol{x}^{s_1})'\prod_{i=1}^K\boldsymbol{Q}^{s_is_{i+1}}(z_i,\boldsymbol{X})\boldsymbol{y}^{s_{K+1}}\Bigg|.
    \end{align*}
	By the Taylor expansion, we have
	\begin{align}
		\Delta_{ijk}&=\sup_{|X_{ijk}|,|X_{ijk}^{(ijk)}|\leq N^{\delta}}\Bigg|\sum_{l=1}^{\infty}(l!)^{-1}(\boldsymbol{x}^{s_1})'\partial_{ijk}^{(l)}\Big\{\prod_{i=1}^K\boldsymbol{Q}^{s_is_{i+1}}(z_i,\boldsymbol{X})\Big\}\boldsymbol{y}^{s_{K+1}}(X_{ijk}^{(ijk)}-X_{ijk})^l\Bigg|\notag\\
        &\leq\sup_{|X_{ijk}|,|X_{ijk}^{(ijk)}|\leq N^{\delta}}\sum_{l=1}^{\infty}(l!)^{-1}\Bigg|(\boldsymbol{x}^{s_1})'\partial_{ijk}^{(l)}\Big\{\prod_{i=1}^K\boldsymbol{Q}^{s_is_{i+1}}(z_i,\boldsymbol{X})\Big\}\boldsymbol{y}^{s_{K+1}}(X_{ijk}^{(ijk)}-X_{ijk})^l\Bigg|\notag\\
        &\leq\sup_{|X_{ijk}|\leq N^{\delta}}\sum_{l=1}^{\infty}(l!)^{-1}2^lN^{\delta l}\Bigg|(\boldsymbol{x}^{s_1})'\partial_{ijk}^{(l)}\Big\{\prod_{i=1}^K\boldsymbol{Q}^{s_is_{i+1}}(z_i,\boldsymbol{X})\Big\}\boldsymbol{y}^{s_{K+1}}\Bigg|,\label{Eq of Taylor}
	\end{align}
    where we use the fact that all \(|X_{ijk}|,|X_{ijk}^{(ijk)}|\leq N^{\delta}\) in \eqref{Eq of Taylor}. Notice that \(\partial_{ijk}^{(l)}\big\{\prod_{i=1}^K\boldsymbol{Q}^{s_is_{i+1}}(z_i,\boldsymbol{X})\big\}\) in (\ref{Eq of Taylor}) does not involve \(\bbX^{ijk}\), we simplify \(\bbQ^{s_is_{i+1}}(z,\bbX)\) by \(\bbQ^{s_is_{i+1}}(z)\), as does their entries. Next, we separate the above Taylor expansion (\ref{Eq of Taylor}) into the following two cases, higher derivatives (\(l\geq4\)) and lower derivatives (\(l=1,2,3\)).

    \vspace{5mm}
\noindent
    {\bf Case 1:}  When \(l\geq4\), recall that \(\partial_{ijk}^{(l)}\boldsymbol{Q}(z)=(-1)^ll!(\boldsymbol{Q}(z)\partial_{ijk}^{(1)}\bbM)^l\boldsymbol{Q}(z)\) and \(\partial_{ijk}^{(1)}\bbM\) is defined in (\ref{Eq of partial N}), which implies that \(\Vert\partial_{ijk}^{(1)}\bbM\Vert\leq 3N^{-1/2}\) and
	\begin{align}
		\Vert\partial_{ijk}^{(l)}\boldsymbol{Q}(z)\Vert\leq l!\Vert\boldsymbol{Q}(z)\Vert^{l+1}(3N^{-1/2})^l\leq l!\eta_0^{-1}(3\eta_0^{-1}N^{-1/2})^l.\label{Eq of bound for partial Q}
	\end{align}
    by (\ref{Eq of bound for partial Q}), we have
		\begin{align}
			&(l!)^{-1}\Bigg|(\boldsymbol{x}^{s_1})'\partial_{ijk}^{(l)}\Big\{\prod_{i=1}^K\boldsymbol{Q}^{s_is_{i+1}}(z_i,\boldsymbol{X})\Big\}\boldsymbol{y}^{s_{K+1}}\Bigg|\leq(l!)^{-1}\Big\Vert\partial_{ijk}^{(l)}\Big\{\prod_{i=1}^K\boldsymbol{Q}^{s_is_{i+1}}(z_i)\Big\}\Big\Vert\notag\\
			&\leq(l!)^{-1}\sum_{l_1+\cdots+l_K=l}\binom{l}{l_1,\cdots,l_K}\prod_{i=1}^K\Vert\partial_{ijk}^{(l_i)}\boldsymbol{Q}^{s_is_{i+1}}(z_i)\Vert\leq\eta_0^{-K}\binom{l+K-1}{K-1}(3N^{-1/2}\eta_0^{-1})^l,\notag
		\end{align}
		where \(l_1,\cdots,l_K\) such that \(l_1+\cdots+l_K=l\) and \(l_i\geq0\) for \(i=1,\cdots,K\), so we have
		$$(l!)^{-1}2^lN^{\delta l}\Big|(\boldsymbol{x}^{s_1})'\partial_{ijk}^{(l)}\Big\{\prod_{i=1}^K\boldsymbol{Q}^{s_is_{i+1}}(z_i)\Big\}\boldsymbol{y}^{s_{d+1}}\Big|\leq \eta_0^{-K}\binom{l+K-1}{K-1}(6N^{-1/2+\delta}\eta_0^{-1})^l.$$
		Let \(q:=6N^{-1/2+\delta}\eta_0^{-1}\), it implies that
		\begin{align}
			&\sum_{l=4}^{\infty}\binom{l+K-1}{K-1}q^l=\frac{1}{(K-1)!}\sum_{l=4}^{\infty}\frac{\partial^{K-1}}{\partial q^{K-1}}q^{l+K-1}=\frac{1}{(K-1)!}\frac{\partial^{K-1}}{\partial q^{K-1}}[q^{3+K}(1-q)^{-1}]\notag\\
			&=\sum_{r=0}^{K-1}[r!(K-1-r)!]^{-1}\frac{\partial^r q^{3+K}}{\partial q^r}\frac{\partial^{(K-r-1)}(1-q)^{-1}}{\partial q^{(K-r-1)}}\leq C_Kq^4,\notag
		\end{align}
        so we conclude that
        \begin{align*}
            &\sup_{|X_{ijk}|\leq N^{\delta}}\Big|\sum_{l=4}^{\infty}(l!)^{-1}2^lN^{\delta l}(\boldsymbol{x}^{s_1})'\partial_{ijk}^{(l)}\Big\{\prod_{i=1}^K\boldsymbol{Q}^{s_is_{i+1}}(z_i)\Big\}\boldsymbol{y}^{s_{K+1}}\Big|\leq C_KN^{-2+4\delta}\eta_0^{-4-K},
        \end{align*}
		and
		\begin{align}
			\sum_{i,j,k=1}^{m,n,p}\sup_{|X_{ijk}|\leq N^{\delta}}\Big|\sum_{l=4}^{\infty}(l!)^{-1}(\boldsymbol{x}^{s_1})'\partial_{ijk}^{(l)}\Big\{\prod_{i=1}^K\boldsymbol{Q}^{s_is_{i+1}}(z_i)\Big\}\boldsymbol{y}^{s_{K+1}}\Big|^2\leq C_K\eta_0^{-2(K+4)}N^{-1+8\delta}.\label{Eq of Taylor 4}
		\end{align}

    \vspace{5mm}
    \noindent
    {\bf Case 2:} When \(l=3\), by Lemma \ref{Lem of minor terms 1}, we have shown that
		$$\sum_{i,j,k=1}^{m,n,p}\big|(\boldsymbol{x}^{s_1})'\partial_{ijk}^{(3)}\Big\{\prod_{i=1}^K\boldsymbol{Q}^{s_is_{i+1}}(z_i)\Big\}\boldsymbol{y}^{s_{K+1}}\big|^2<C_l\eta_0^{-2(l+K)}N^{-2},$$
		so
		\begin{align}
			\sum_{i,j,k=1}^{m,n,p}\Big|(\boldsymbol{x}^{s_1})'\partial_{ijk}^{(3)}\Big\{\prod_{i=1}^K\boldsymbol{Q}^{s_is_{i+1}}(z_i)\Big\}\boldsymbol{y}^{s_{K+1}}(X_{ijk}^{(ijk)}-X_{ijk})^3\Big|^2\leq C_K\eta_0^{-2(K+3)}N^{-2+6\delta}.\label{Eq of Taylor 3}
		\end{align}
		Similarly, for \(l=1,2\), we have
		\begin{align}
			\sum_{i,j,k=1}^{m,n,p}\Big|(\boldsymbol{x}^{s_1})'\partial_{ijk}^{(l)}\Big\{\prod_{i=1}^K\boldsymbol{Q}^{s_is_{i+1}}(z_i)\Big\}\boldsymbol{y}^{s_{K+1}}(X_{ijk}^{(ijk)}-X_{ijk})^l\Big|^2\leq C_K\eta_0^{-2(K+l)}N^{-1+2l\delta}.\label{Eq of Taylor 12}
		\end{align}
	Finally, by Cauchy's inequality    and (\ref{Eq of Taylor 12}), (\ref{Eq of Taylor 3}) and (\ref{Eq of Taylor 4}), we have
	\begin{align}
		&\sum_{i,j,k=1}^{m,n,p}\Delta_{ijk}^2\leq 4\sum_{l=1}^3\sum_{i,j,k=1}^{m,n,p}\sup_{|X_{ijk}|\leq N^{\delta}}\Big|(l!)^{-1}2^lN^{\delta l}(\boldsymbol{x}^{s_1})'\partial_{ijk}^{(l)}\Big\{\prod_{i=1}^K\boldsymbol{Q}^{s_is_{i+1}}(z_i,\boldsymbol{X})\Big\}\boldsymbol{y}^{s_{K+1}}\Big|^2\notag\\
		&+4\sum_{i,j,k=1}^{m,n,p}\sup_{|X_{ijk}|\leq N^{\delta}}\Big|\sum_{l=4}^{\infty}(l!)^{-1}2^lN^{\delta l}(\boldsymbol{x}^{s_1})'\partial_{ijk}^{(l)}\Big\{\prod_{i=1}^K\boldsymbol{Q}^{s_is_{i+1}}(z_i,\boldsymbol{X})\Big\}\boldsymbol{y}^{s_{K+1}}\Big|^2\notag\\
		&\leq C_K\eta_0^{-2(K+4)}N^{-1+8\delta}.\notag
	\end{align}
	Thus, we have
	\begin{align}
		&\eqref{Eq of a.s. convergence major probability}\leq4\exp\left(-\frac{t^2}{\sum_{i,j,k=1}^{m,n,p}\Delta_{ijk}^2}\right)\leq4\exp\left(-C_K\eta_0^{2(K+4)}N^{1-8\delta}t^2\right),\notag
	\end{align}
    combined with \eqref{Eq of a.s. convergence major probability}, \eqref{Eq of a.s. convergence tail probability} and \eqref{Eq of a.s. convergence tail probability 1}, we conclude that
    \begin{align*}
        &\mathbb{P}\left(\Big|\boldsymbol{x}^{s_1}\prod_{i=1}^K\boldsymbol{Q}^{s_is_{i+1}}(z_i)\boldsymbol{y}^{s_{K+1}}-\mathbb{E}\Big[\boldsymbol{x}^{s_1}\prod_{i=1}^K\boldsymbol{Q}^{s_is_{i+1}}(z_i)\boldsymbol{y}^{s_{K+1}}\Big]\Big|\geq t\right)\\
        &\leq4\exp\left(-C_K\eta_0^{2(K+4)}N^{1-8\delta}t^2\right)+ N^3\exp(-N^{\delta\theta}),
    \end{align*}
	then choose any \(t=\eta_0^{-(K+4)}N^{-1/2+4\delta+\epsilon}\), where \(\epsilon\in(0,1/2-4\delta)\) is a sufficiently small positive number, we can show the almost sure convergence for (\ref{Eq of as 1}) by the Borel–Cantelli lemma. Furthermore, for (\ref{Eq of as 2}), since
	$$\boldsymbol{x}'{\rm diag}\Big\{\prod_{i=1}^K\boldsymbol{Q}(z_i)\Big\}\boldsymbol{y}=\sum_{s_1\cdots s_K}\boldsymbol{x}^{s_1}{\rm diag}\Big\{\prod_{i=1}^K\boldsymbol{Q}^{s_is_{i+1}}(z_i)\Big\}\boldsymbol{y}^{s_{K+1}},$$
	where \(s_1=s_{K+1},s_i\in\{1,2,3\}\) for \(i=1,\cdots,K\). We only need to show
	$$\Big|(\boldsymbol{x}^{s_1})'{\rm diag}\Big\{\prod_{i=1}^K\boldsymbol{Q}^{s_is_{i+1}}(z_i)\Big\}\boldsymbol{y}^{s_{K+1}}-\mathbb{E}\Big[(\boldsymbol{x}^{s_1})'{\rm diag}\Big\{\prod_{i=1}^K\boldsymbol{Q}^{s_is_{i+1}}(z_i)\Big\}\boldsymbol{y}^{s_{K+1}}\Big]\Big|\prec C_K\eta_0^{-(4+K)}N^{-\omega}.$$
	The proof follows identically as those for (\ref{Eq of as 1}), i.e. separate the following Taylor expansion
	\begin{align}
		&(\boldsymbol{x}^{s_1})'{\rm diag}\Big\{\prod_{i=1}^K\boldsymbol{Q}^{s_is_{i+1}}(z_i,\boldsymbol{X}^{ijk})\Big\}\boldsymbol{y}^{s_{K+1}}-(\boldsymbol{x}^{s_1})'{\rm diag}\Big\{\prod_{i=1}^K\boldsymbol{Q}^{s_is_{i+1}}(z_i,\boldsymbol{X})\Big\}\boldsymbol{y}^{s_{K+1}}\notag\\
		&=\sum_{l=1}^{\infty}(l!)^{-1}(\boldsymbol{x}^{s_1})'{\rm diag}\Big\{\partial_{ijk}^{(l)}\prod_{i=1}^K\boldsymbol{Q}^{s_is_{i+1}}(z_i,\boldsymbol{X})\Big\}\boldsymbol{y}^{s_{K+1}}(X_{ijk}^{ijk}-X_{ijk})^l\notag
	\end{align}
	into \(l\geq4\) and \(l=1,2,3\), and we can obtain the same conclusion as (\ref{Eq of as 1}), so the details are omitted for brevity.
\end{proof}
\subsubsection{Systematic treatment for minor terms in cumulant expansions}\label{ssec of major terms}
As we have mentioned before, to derive the asymptotic mean of the LSS of the matrix \(\bbM\), we need to compute \(N^{-1/2}\sum_{i,j,k=1}^{m,n,p}\mbE[\partial_{ijk}^{(l)}(c_kQ_{ij}^{12}+b_jQ_{ik}^{13})]\) for \(l=2,3,4\). By \eqref{Eq of partial Q}, we know that
\begin{align*}
    c_k\partial_{ijk}^{(l)}Q_{ij}^{12}=(-N^{-1/2})^ll!c_k\sum_{t_1\cdots t_{2l}}Q_{i\tilde{t}_1}^{1t_1}\left(\prod_{\alpha=1}^{l-1}\mcA_{ijk}^{(t_{2\alpha-1},t_{2\alpha})}Q_{\tilde{t}_{2\alpha}\tilde{t}_{2\alpha+1}}^{t_{2\alpha}t_{2\alpha+1}}\right)\mcA_{ijk}^{(t_{2l-1},t_{2l})}Q_{\tilde{t}_{2l}j}^{t_{2l}2},
\end{align*}
where \(t_1\cdots t_{2l}\in\{1,2,3\}\) such that \(t_{2\alpha-1}\neq t_{2\alpha}\) for \(\alpha=1,\cdots,l\). To compute \(\sum_{i,j,k=1}^{m,n,p}c_k\partial_{ijk}^{(l)}Q_{ij}^{12}\), it is essential to determine which terms in the above equation vanish as \(N\to\infty\). The following lemma will provide a criterion for distinguishing major and minor terms.
\begin{lem}\label{Lem of minor terms}
	For any \(z\in\mbC^+\) and \(l\in\mbN^+\), let \(s_1,\cdots,s_{2(l+1)}\in\{1,2,3\}\) such that \(s_{2\alpha}\neq s_{2\alpha+1}\) and \(s_1\neq s_{2(l+1)}\) for \(1\leq \alpha\leq l\), consider the following two equations:
	\begin{align}
		\left\{\begin{array}{l}
			\sum_{i,j,k=1}^{m,n,p}\mcA_{ijk}^{(s_1,s_{2l+2})}Q_{\tilde{s}_1\tilde{s}_2}^{s_1s_2}(z)\left(\prod_{\alpha=1}^l\mcA_{ijk}^{(s_{2\alpha},s_{2\alpha+1})}Q_{\tilde{s}_{2\alpha+1}\tilde{s}_{2\alpha+2}}^{s_{2\alpha+1}s_{2\alpha+2}}(z)\right),\\
			\sum_{i,j,k=1}^{m,n,p}\mcA_{ijk}^{(s_{2l},s_{2l+1})}Q_{\tilde{s}_{2l+1}\cdot}^{s_{2l+1}s_1}(z)Q_{\cdot\tilde{s}_2}^{s_1s_2}(z)\left(\prod_{\alpha=1}^{l-1}\mcA_{ijk}^{(s_{2\alpha},s_{2\alpha+1})}Q_{\tilde{s}_{2\alpha+1}\tilde{s}_{2\alpha+2}}^{s_{2\alpha+1}s_{2\alpha+2}}(z)\right),
		\end{array}\right.\label{Eq of minor terms}
	\end{align}
	where \(Q_{i\cdot},Q_{\cdot i}\) means the \(i\)-th row and column of \(\bbQ\), \(\mcA_{ijk}^{(s_{2l},s_{2l+1})}\) is defined in {\rm (\ref{Eq of nij})}. If there is at least one term in
	$$\left\{Q_{\tilde{s}_{2\alpha-1}\tilde{s}_{2\alpha}}^{s_{2\alpha-1}s_{2\alpha}}:\alpha=1,\cdots,l+1\right\}\ {\rm or\ }\left\{Q_{\tilde{s}_{2\alpha-1}\tilde{s}_{2\alpha}}^{s_{2\alpha-1}s_{2\alpha}},Q_{\tilde{s}_{2l+1}\cdot}^{s_{2l+1}s_1}Q_{\cdot\tilde{s}_2}^{s_1s_2}:\alpha=2,\cdots,l\right\}$$
    coming from the off-diagonal block, then the norms of {\rm (\ref{Eq of minor terms})} are bounded by \(\mrO(\Vert\boldsymbol{Q}\Vert^{l+1} N)\).
\end{lem}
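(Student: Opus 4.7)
The plan is to bound each sum in \eqref{Eq of minor terms} by isolating the guaranteed off-diagonal resolvent entry, reading the remaining cyclic expression as an outer product of two vectors, and applying the operator-norm estimate $|\bbu'\bbQ^{st}\bbv|\le\|\bbu\|_2\|\bbv\|_2\|\bbQ\|$. Because $d=3$, every $\mcA_{ijk}^{(s,t)}$ is a single coordinate of one of $\bba$, $\bbb$, $\bbc$, so each $\mcA$-factor behaves like a unit-vector weight that, after Cauchy-Schwarz, contributes a factor bounded by $1$ rather than a full index summation of size $\mrO(N)$.

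Concretely, for the first sum I would pick an index $\alpha_0$ with $s_{2\alpha_0-1}\ne s_{2\alpha_0}$ (an off-diagonal block) and rearrange the summand so that the chosen entry $Q^{s_{2\alpha_0-1}s_{2\alpha_0}}_{\tilde{s}_{2\alpha_0-1}\tilde{s}_{2\alpha_0}}$ appears contracted on the left and right against vectors $\bbu$ and $\bbv$ built from the remaining $l$ resolvent entries together with the surviving $\mcA$-factors (the $\mcA$ neighboring the chosen block splits into two coordinate weights, one for $\bbu$ and one for $\bbv$). The bilinear estimate then gives a bound by $\|\bbQ\|\,\|\bbu\|_2\,\|\bbv\|_2$, reducing the task to controlling $\|\bbu\|_2$ and $\|\bbv\|_2$.

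I would bound $\|\bbu\|_2^2$ and $\|\bbv\|_2^2$ by iterated Cauchy-Schwarz together with the two elementary entry estimates $|Q^{st}_{pq}|\le\|\bbQ\|$ and $\sum_p (Q^{ss}_{pp})^2\le n_s\|\bbQ\|^2\le N\|\bbQ\|^2$. The key observation is that for $d=3$ there are only three ambient index types, so after the off-diagonal cut at most two of them can be genuinely free, and each such free type contributes at most a factor of $\sqrt{N}$ to one of $\|\bbu\|_2$ or $\|\bbv\|_2$ (coming either from a partial trace of diagonal $Q$-entries or from an unweighted row or column sum of a full off-diagonal block; every index type carrying an $\mcA$-weight produces a unit-vector inner product of order $1$ instead). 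Multiplying out, $\|\bbu\|_2\|\bbv\|_2=\mrO(\|\bbQ\|^l\,N)$, and hence the first sum is $\mrO(\|\bbQ\|^{l+1}N)$.

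The argument for the second sum is analogous, the only difference being that the merged factor $Q_{\tilde{s}_{2l+1}\cdot}^{s_{2l+1}s_1}Q_{\cdot\tilde{s}_2}^{s_1s_2}$ is the $(\tilde{s}_{2l+1},\tilde{s}_2)$ entry of $\bbQ^{s_{2l+1}s_1}\bbQ^{s_1 s_2}$ and hence bounded by $\|\bbQ\|^2$ uniformly; if the off-diagonal block guaranteed by the hypothesis is this merged factor we cut there, otherwise we cut at one of the $l-1$ individual $Q$-entries and proceed as before. I expect the main obstacle to be the combinatorial bookkeeping, namely verifying uniformly over all admissible sequences $(s_1,\ldots,s_{2l+2})$ that the distribution of the $\mcA$-weights across the three index types always forces at least two of the three index summations to be absorbed into unit-vector inner products, so that no more than a single factor of $N$ survives. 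Once this case analysis is exhausted, the bound $\mrO(\|\bbQ\|^{l+1}N)$ drops out.
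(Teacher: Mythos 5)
The combinatorial kernel of your plan is correct and is exactly the one the paper exploits: there are $l+1$ weight factors $\mcA_{ijk}^{(\cdot,\cdot)}$ and only three ambient index types $i,j,k$; since $l+1\ge 2$, at most two types can be unweighted, and every weighted type contributes a unit-vector contraction of order one because $\|\bba^{\circ r}\|_2\le\|\bba\|_2^r\le1$. That is precisely the constraint $n_a+n_b+n_c=l+1$ in the paper's normal form $(a_i)^{n_a}(b_j)^{n_b}(c_k)^{n_c}\prod_{r\le s}(Q^{rs})^{n_{rs}}$, and once that normal form is in hand the case analysis you anticipate gives exactly $\mrO(\|\bbQ\|^{l+1}N)$. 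So you have found the right invariant to track; the gap is entirely in the machinery you propose to exploit it.

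The step that would fail is the "off-diagonal cut into a bilinear form": picking one off-diagonal entry $Q^{s_{2\alpha_0-1}s_{2\alpha_0}}_{\tilde{s}_{2\alpha_0-1}\tilde{s}_{2\alpha_0}}$ and reading the remaining $l$ resolvent factors and $\mcA$-weights as a pair of genuine vectors $\bbu,\bbv$ indexed by $\tilde{s}_{2\alpha_0-1}$ and $\tilde{s}_{2\alpha_0}$ respectively. With $d=3$, the remaining factors generically all depend on the third ambient index as well (e.g.\ $Q^{13}_{ik}$ on the "$\bbu$-side" and $Q^{23}_{jk}$ on the "$\bbv$-side" share $k$), and can even reuse the pair $(\tilde{s}_{2\alpha_0-1},\tilde{s}_{2\alpha_0})$ through a repeated block such as $(Q^{12}_{ij})^{n_{12}}$ with $n_{12}\ge2$. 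The remainder is therefore not a separable function $u_i\,v_j$, so $\|\bbu\|_2\|\bbv\|_2$ is not even defined, and the bound $|\bbu'\bbQ^{st}\bbv|\le\|\bbu\|_2\|\bbv\|_2\|\bbQ\|$ never applies. Concretely, for $l=2$, $\sum_{i,j,k}a_ib_jc_kQ^{12}_{ij}Q^{13}_{ik}Q^{23}_{jk}$ cannot be written as $\bba'\diag(\bbu)\bbQ^{12}\diag(\bbv)\bbb$ for any vectors $\bbu,\bbv$. What is actually needed, and what the paper does, is to collapse the summand to the normal form above and then absorb the shared third-index sum via Hadamard products and diagonal matrices, e.g.\ controlling $\sum_{i,j}Q^{12}_{ij}\big[\bbQ^{13}\diag(\bbc)\bbQ^{32}\big]_{ij}$ through a bound on $\|\bbQ^{12}\circ(\bbQ^{13}\diag(\bbc)\bbQ^{32})\|$. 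If you replace the bilinear-cut step by the normal-form rewrite plus those Hadamard/diag contractions, the rest of your outline goes through.
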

In particular, we say $Q_{\tilde{s}_{2l+1}\cdot}^{s_{2l+1}s_1}Q_{\cdot\tilde{s}_2}^{s_1s_2}$ in the second equation of \eqref{Eq of minor terms} comes from the off-diagonal blocks if \(s_{2l+1}\neq s_2\); otherwise, it comes from the diagonal block. For the second equation in \eqref{Eq of minor terms}, it appears in \(\partial_{ijk}^{(l)}\tr(\bbQ)\). For example,
\begin{align*}
    \partial_{ijk}^{(1)}\tr(\bbQ^{11})=\sum_{i=1}^m\partial_{ijk}^{(1)}Q_{ii}^{11}=-N^{-1/2}\sum_{i=1}^m\sum_{t_1,t_2}Q_{i\tilde{t}_1}^{1t_1}\mcA_{ijk}^{(t_1,t_2)}Q_{\tilde{t}_2i}^{t_21}=-N^{-1/2}\sum_{t_1,t_2}\mcA_{ijk}^{(t_1,t_2)}Q_{\tilde{t}_2\cdot}^{t_21}Q_{\cdot\tilde{t}_1}^{1t_1},
\end{align*}
where \(t_1,t_2\in\{1,2,3\}\) and \(t_1\neq t_2\).
\begin{proof}[Proof of Lemma \ref{Lem of minor terms}]
	In fact, the two cases in (\ref{Eq of minor terms}) essentially coincide. Since \(Q_{\tilde{s}_{2l+1}\cdot}^{s_{2l+1}s_1}Q_{\cdot\tilde{s}_2}^{s_1s_2}\) is just the \((\tilde{s}_{2l+1},\tilde{s}_2)\)-th entry of \(\bbQ^{s_{2l+1}s_1}\bbQ^{s_1s_2}\), whose spectral norm is bounded by \(\Vert\bbQ\Vert^2\). Therefore, we only consider the first case in (\ref{Eq of minor terms}) and rewrite it into the following form
	\begin{align}
		\sum_{i,j,k=1}^{m,n,p}(a_i)^{n_a}(b_j)^{n_b}(c_k)^{n_c}(Q_{ii}^{11})^{n_{11}}(Q_{jj}^{22})^{n_{22}}(Q_{kk}^{33})^{n_{33}}(Q_{ij}^{12})^{n_{12}}(Q_{ik}^{13})^{n_{13}}(Q_{jk}^{23})^{n_{23}},\label{Eq of minor terms 1}
	\end{align}
	where \(n_a\) is the number of \(a_i\) appearing in $\big\{\mcA_{ijk}^{(s_2,s_3)},\cdots,\mcA_{ijk}^{(s_{2l},s_{2l+1})},\mcA_{ijk}^{(s_{2l+2},s_1)}\big\}$, as does \(n_b\) and \(n_c\). Similarly, \(n_{12}\) is the number of \(Q_{ij}^{12}\) appearing in $\big\{Q_{\tilde{s}_{2\alpha-1}\tilde{s}_{2\alpha}}^{s_{2\alpha-1}s_{2\alpha}}:\alpha=1,\cdots,l+1\big\}$, so dose $n_{ij}$ for $1\leq i\leq j\leq3$. By definitions, we have \(n_a+n_b+n_c=l+1\) and \(\sum_{1\leq i\leq j\leq3}n_{ij}=l+1\). Next, based on the number of nonzero terms in \(\{n_{12},n_{13},n_{23}\}\), consider the following three situations.

    \vspace{5mm}
    \noindent
    {\bf Case 1:} Suppose all \(n_{12},n_{13},n_{23}\) are nonzero, then we claim that at least two of \(n_a,n_b,n_c\) are nonzero. Otherwise, if \(n_a=n_b=0\) without loss of generality, then all \(Q_{\tilde{s}_{2\alpha-1}\tilde{s}_{2\alpha}}^{s_{2\alpha-1}s_{2\alpha}}\) come from block \(\boldsymbol{Q}^{12}\), \(\boldsymbol{Q}^{11}\) or \(\boldsymbol{Q}^{22}\), which is a contradiction. Therefore, suppose \(n_a,n_b\geq1\) without loss of generality, then the norm of (\ref{Eq of minor terms 1}) is bounded by 
		$$(|\boldsymbol{a}|^{\circ n_a})'{\rm diag}\left(|\boldsymbol{Q}^{11}|^{\circ n_{11}}\right)\left(|\boldsymbol{Q}^{12}|^{\circ n_{12}}\circ\left(|\boldsymbol{Q}^{13}|^{\circ n_{13}}{\rm diag}\left(|\boldsymbol{Q}^{33}|^{\circ n_{33}}\right)|\boldsymbol{Q}^{32}|^{\circ n_{23}}\right)\right){\rm diag}\left(|\boldsymbol{Q}^{22}|^{\circ n_{22}}\right)|\boldsymbol{b}|^{\circ n_b},$$
		which is smaller than \(\Vert\boldsymbol{Q}\Vert^{l+1}\).

    \vspace{5mm}
    \noindent
    {\bf Case 2:} If only two of \(\{n_{12},n_{13},n_{23}\}\) are nonzero, without loss of generality, suppose \(n_{23}=0\) and \(n_{12},n_{13}>0\), then at least two of \(n_a,n_b,n_c\) are nonzero; otherwise, as the arguments in Case 1, there will only be one type off-diagonal block. So we assume \(n_a,n_b>0\), and the norm of (\ref{Eq of minor terms}) is bounded by
		$$(|\bbb|^{\circ n_b})'{\rm diag}(|\boldsymbol{Q}^{22}|^{\circ n_{22}})|\boldsymbol{Q}^{21}|^{\circ n_{12}}{\rm diag}(|\bba|^{\circ n_a}){\rm diag}(|\boldsymbol{Q}^{11}|^{\circ n_{11}})|\boldsymbol{Q}^{13}|^{\circ n_{13}}{\rm diag}(|\boldsymbol{Q}^{33}|^{\circ n_{33}})\boldsymbol{1}_k,$$
		which is smaller than \(\Vert\boldsymbol{Q}\Vert^{l+1}N^{1/2}\).

    \vspace{5mm}
    \noindent
    {\bf Case 3:} If there is only one term in (\ref{Eq of minor terms}) coming from the off-diagonal block, suppose \(n_{12}>0,n_{13}=n_{23}=0\), then if \(n_a,n_c>0\) or \(n_b,n_c>0\), the norm of (\ref{Eq of minor terms}) is bounded by (e.g. \(n_a,n_c>0\))
		$$(|\boldsymbol{a}|^{\circ n_a})'{\rm diag}(|\boldsymbol{Q}^{11}|^{\circ n_{11}})|\boldsymbol{Q}^{12}|^{\circ n_{12}}{\rm diag}(|\boldsymbol{Q}^{22}|^{\circ n_{22}})\boldsymbol{1}_n\times\boldsymbol{1}_p'|\boldsymbol{Q}^{33}|^{\circ n_{33}}|\boldsymbol{c}|^{\circ n_c},$$
		which is smaller than \(\Vert\boldsymbol{Q}\Vert^{l+1}N\). Otherwise, if \(n_c=0\), i.e. \(n_a,n_b>0\); the norm of (\ref{Eq of minor terms}) is bounded by
		$$(|\boldsymbol{a}|^{\circ n_a})'{\rm diag}(|\boldsymbol{Q}^{11}|^{\circ n_{11}})|\boldsymbol{Q}^{12}|^{\circ n_{12}}{\rm diag}(|\boldsymbol{Q}^{22}|^{\circ n_{22}})|\boldsymbol{b}|\times\tr|\boldsymbol{Q}^{33}|^{\circ n_{33}},$$
		which is smaller than \(\Vert\boldsymbol{Q}\Vert^{l+1}N\). Finally, if only one of \(\{n_a,n_b,n_c\}\) is nonzero, the only possible case is that \(n_c>0,n_a=n_b=0\). Otherwise, if \(n_a>0,n_b=n_c=0\), then all \(\mcA^{(t_{2\alpha},t_{2\alpha+1})},\mcA^{(s_{2\gamma},s_{2\gamma+1})}=a_i\), it implies that \((t_{2\alpha},t_{2\alpha+1})=(2,3)\) or \((3,2)\), as does \((s_{2\gamma},s_{2\gamma+1})\). Hence, the only possible off-diagonal block is \(Q_{jk}^{23}\), i.e. \(n_{23}>0\), which is a contradiction. Now, since \(n_c>0,n_a=n_b=0\), then \(n_{11},n_{22}\geq0\) and \(n_{33}=0\). If \(n_{11},n_{22}>0\), (\ref{Eq of minor terms}) is bounded by
		$$\boldsymbol{1}_m'{\rm diag}(|\boldsymbol{Q}^{11}|^{\circ n_{11}})|\boldsymbol{Q}^{12}|^{\circ n_{12}}{\rm diag}(|\boldsymbol{Q}^{22}|^{\circ n_{22}})\boldsymbol{1}_n\leq \Vert\boldsymbol{Q}\Vert^{l+1}N.$$
	   This completes the proof of Lemma \ref{Lem of minor terms}.
\end{proof}
For simplicity, we define two operators \(\mathscr{D},\mathscr{O}\) as follows:
\begin{align}
	&\mathscr{D}\left(\partial_{ijk}^{(l)}Q_{i_1i_2}^{j_1j_2}\right):=(-1)^ll!N^{-l/2}\sum_{\substack{t_1\cdots t_l\\t_{l+1}=j_2}}Q_{i_1i_{t_2}}^{j_1j_1}\mcA_{ijk}^{(j_1,t_1)}\Big(\prod_{\alpha=1}^{l-1}Q_{i_{t_{\alpha}}i_{t_{\alpha}}}^{t_{\alpha}t_{\alpha}}\mcA_{ijk}^{(t_{\alpha},t_{\alpha+1})}\Big)Q_{i_{j_2}i_2}^{j_2j_2},\label{Eq of operator D}\\
	&\mathscr{O}\left(\partial_{ijk}^{(l)}Q_{i_1i_2}^{j_1j_2}\right):=\partial_{ijk}^{(l)}Q_{i_1i_2}^{j_1j_2}-\mathscr{D}\left(\partial_{ijk}^{(l)}Q_{i_1i_2}^{j_1j_2}\right),\label{Eq of operator O}
\end{align}
The operator \(\msD\) selects the summation terms in \(\partial_{ijk}^{(l)}Q_{i_1i_2}^{j_1j_2}\) that only contains diagonal terms. According to Lemma \ref{Lem of minor terms}, when \(l\geq2\), for any \(z\in\mbC_{\eta_0}^+\), we can conclude that
$$N^{-1/2}\sum_{i,j,k=1}^{m,n,p}\Big|\mathbb{E}\left[\mathscr{O}\left(c_kQ_{ij}^{12}+b_jQ_{ik}^{13}\right)\right]\Big|\leq\mrO(\eta_0^{-(l+1)}N^{-(l-1)/2}),$$
thus, the major terms will only appear in \(N^{-1/2}\sum_{i,j,k=1}^{m,n,p}\mathbb{E}\big[\mathscr{D}\big(c_kQ_{ij}^{12}+b_jQ_{ik}^{13}\big)\big]\).

Finally, when calculating the asymptotic variance of the LSS of the matrix \(\bbM\), we need to compute
$$\sum_{i,j,k=1}^{m,n,p}\mbE\left[\partial_{ijk}^{(\alpha)}(c_kQ_{ij}^{12}+b_jQ_{ik}^{13})\partial_{ijk}^{(l-\alpha)}\{\tr(\bbQ^{11})\}\right],\quad l=2,3,4;\alpha=0,1,\cdots,l,$$
see \eqref{Eq of covariance 0} later for an example. To further determine the major terms in the above equation, we need the following result.
\begin{lem}\label{Cor of minor terms}
	For any \(z\in\mathbb{C}_{\eta}^+\) and \(l_1,l_2\in\mbN\) such that \(l_1+l_2\geq2\), let \(t_{\alpha},s_{\gamma}\in\{1,2,3\}\) such that \(t_{2\alpha}\neq t_{2\alpha+1},s_{2\gamma}\neq s_{2\gamma+1}\) for \(1\leq \alpha\leq l_1,1\leq\gamma\leq l_2\) and \(t_1\neq t_{2l_1+2},s_1\neq s_{2l_2+2}\), then define
	$$\left\{\begin{array}{l}
		P_1(z):=\mcA_{ijk}^{(t_1,t_{2l_1+2})}Q_{i_{t_1}i_{t_2}}^{t_1t_2}(z)\left(\prod_{\alpha=1}^{l_1}\mcA_{ijk}^{(t_{2\alpha},t_{2\alpha+1})}Q_{i_{t_{2\alpha+1}}i_{t_{2\alpha+2}}}^{t_{2\alpha+1}t_{2\alpha+2}}(z)\right)\\
		P_2(z):=\mcA_{ijk}^{(s_1,s_{2l_2+2})}Q_{i_{s_1}i_{s_2}}^{s_1s_2}(z)\left(\prod_{\gamma=1}^{l_2}\mcA_{ijk}^{(s_{2\gamma},s_{2\gamma+1})}Q_{i_{s_{2\gamma+1}}i_{s_{2\gamma+2}}}^{s_{2\gamma+1}s_{2\gamma+2}}(z)\right)
	\end{array}\right..$$
	If there are at least one term in
	$$\left\{Q_{i_{t_{2\alpha+1}}i_{t_{2\alpha+2}}}^{t_{2\alpha+1}t_{2\alpha+2}}(z):\alpha=1,\cdots,l_1+1\right\}\quad{\rm or}\quad\left\{Q_{i_{s_{2\gamma+1}}i_{s_{2\gamma+2}}}^{s_{2\gamma+1}s_{2\gamma+2}}(z):\gamma=1,\cdots,l_2+1\right\}$$
    coming from the off-diagonal block, then the norm of \(\sum_{i,j,k=1}^{m,n,p}P_1(z)P_2(z)\) is bounded by \(\mrO(\Vert\bbQ\Vert^{l_1+l_2+2}N)\).
\end{lem}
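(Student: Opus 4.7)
\medskip

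\noindent\textbf{Proposal.} My plan is to reduce Lemma \ref{Cor of minor terms} to the combinatorial analysis already carried out in the proof of Lemma \ref{Lem of minor terms}. The key observation is that $P_1(z)P_2(z)$ has exactly the same structural form as a single summand appearing in \eqref{Eq of minor terms}, except with $l_1+l_2+2$ resolvent factors and $l_1+l_2+2$ weights of the form $a_i$, $b_j$, or $c_k$. Concretely, I would expand the product and collect powers to write
\begin{align*}
  P_1(z)P_2(z) = (a_i)^{n_a}(b_j)^{n_b}(c_k)^{n_c}(Q_{ii}^{11})^{n_{11}}(Q_{jj}^{22})^{n_{22}}(Q_{kk}^{33})^{n_{33}}(Q_{ij}^{12})^{n_{12}}(Q_{ik}^{13})^{n_{13}}(Q_{jk}^{23})^{n_{23}},
\end{align*}
where $n_a+n_b+n_c = n_{11}+n_{22}+n_{33}+n_{12}+n_{13}+n_{23} = l_1+l_2+2$, matching the constraints used in \eqref{Eq of minor terms 1}.

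Next I would sum over $i,j,k$ and apply exactly the same three-case dichotomy as in the proof of Lemma \ref{Lem of minor terms}, based on how many of $n_{12},n_{13},n_{23}$ are strictly positive. The hypothesis that at least one off-diagonal resolvent entry appears in $P_1$ or $P_2$ translates into $\max\{n_{12},n_{13},n_{23}\}\geq 1$, so Case~3 (exactly one off-diagonal type present) gives the worst bound, while Cases~1 and 2 give strictly better estimates. In each case, I would factor the triple sum as a bilinear form in the unit vectors $\bba,\bbb,\bbc$ sandwiched between diagonal matrices $\diag(|Q^{ss}|^{\circ n_{ss}})$ and Hadamard powers of the off-diagonal blocks $|Q^{st}|^{\circ n_{st}}$, exactly as in the proof of Lemma \ref{Lem of minor terms}. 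Using $\||Q|^{\circ k}\| \leq \|Q\|^k$ and $\|\bba\|_2=\|\bbb\|_2=\|\bbc\|_2=1$, the worst constellation yields a factor of $N$ (from one unavoidable $\boldsymbol{1}_p'(\cdot)\boldsymbol{1}_m$-type inner product over a non-weighted dimension) and an overall spectral factor $\|\bbQ\|^{l_1+l_2+2}$.

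I expect no essentially new obstacle: the proof is really a bookkeeping extension of Lemma \ref{Lem of minor terms} to the concatenated index system $\{t_\alpha\}\cup\{s_\gamma\}$. The only thing to be careful about is that the ``$t_1\neq t_{2l_1+2}$'' and ``$s_1\neq s_{2l_2+2}$'' constraints together with the off-diagonal hypothesis guarantee that the cycle structure of the index chain still prevents the degenerate case $n_{11}=n_{22}=n_{33}=l_1+l_2+2$ from occurring, so the case analysis of Lemma \ref{Lem of minor terms} transfers verbatim. Once each of the three cases is verified to yield a bound of at most $\|\bbQ\|^{l_1+l_2+2}N$, the conclusion $\sum_{i,j,k}P_1(z)P_2(z)=\mrO(\|\bbQ\|^{l_1+l_2+2}N)$ follows.
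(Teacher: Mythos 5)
Your plan is correct and closely mirrors the part of the paper's proof that handles the case where only one of $P_1$, $P_2$ contains an off-diagonal resolvent factor: namely, collect $P_1P_2$ into the monomial $(a_i)^{n_a}(b_j)^{n_b}(c_k)^{n_c}\prod_{s\le t}(Q^{st})^{n_{st}}$ with $n_a+n_b+n_c=\sum n_{st}=l_1+l_2+2$, and re-run the three-case dichotomy of Lemma~\ref{Lem of minor terms}. The paper, however, first splits off the case where \emph{both} $P_1$ and $P_2$ contain off-diagonal factors and disposes of it at once by Cauchy--Schwarz together with the one-chain quadratic bound~\eqref{Rem of minor terms}: $\big|\sum_{i,j,k}P_1P_2\big|^2\le\sum|P_1|^2\cdot\sum|P_2|^2\le C\|\bbQ\|^{2(l_1+l_2+2)}N^2$. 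Only the remaining case is then handled by the monomial case analysis. Your single-pass version does work, but you should make the ``transfers verbatim'' claim more precise: the case analysis of Lemma~\ref{Lem of minor terms} relies on implications of the form ``$n_a=n_b=0$ forces all weight pairs to be $\{1,2\}$ and hence $n_{33}=n_{13}=n_{23}=0$,'' and you must re-derive these when the monomial comes from two disjoint cycles $\{t_\alpha\}$ and $\{s_\gamma\}$ rather than one. Fortunately the derivation only uses the fact that each $\mcA$-weight determines its index pair (not the cyclic adjacency), so the implication holds chain-by-chain and the constraints carry over; that is the point worth spelling out. Also, your remark about the constraints $t_1\neq t_{2l_1+2}$, $s_1\neq s_{2l_2+2}$ ``preventing $n_{11}=n_{22}=n_{33}=l_1+l_2+2$'' is not quite the right observation: the all-diagonal configuration is excluded directly by the standing hypothesis that at least one $Q$-factor is off-diagonal, not by those inequalities. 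In short, your route is sound and essentially the paper's, minus the Cauchy--Schwarz reduction, which is a shortcut that lets the paper avoid re-justifying the case analysis for concatenated chains in the ``both off-diagonal'' regime.
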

Before proving the above lemma, we need one preliminary result. By Lemma \ref{Lem of minor terms}, if there exists at least one off-diagonal term in (\ref{Eq of minor terms}), then
\begin{small}
\begin{align}
		\left\{\begin{array}{ll}
			\sum_{i,j,k=1}^{m,n,p}\left|\mcA_{ijk}^{(s_1,s_{2l+2})}Q_{\tilde{s}_1\tilde{s}_2}^{s_1s_2}(z)\left(\prod_{\alpha=1}^l\mcA_{ijk}^{(s_{2\alpha},s_{2\alpha+1})}Q_{\tilde{s}_{2\alpha+1}\tilde{s}_{2\alpha+2}}^{s_{2\alpha+1}s_{2\alpha+2}}(z)\right)\right|^2\leq\Vert\bbQ\Vert^{2(l+1)}N,&l\geq1,\\
			\sum_{i,j,k=1}^{m,n,p}\left|\mcA_{ijk}^{(s_{2l},s_{2l+1})}Q_{\tilde{s}_{2l+1}\cdot}^{s_{2l+1}s_1}(z)Q_{\cdot\tilde{s}_2}^{s_1s_2}(z)\left(\prod_{\alpha=1}^{l-1}\mcA_{ijk}^{(s_{2\alpha},s_{2\alpha+1})}Q_{\tilde{s}_{2\alpha+1}\tilde{s}_{2\alpha+2}}^{s_{2\alpha+1}s_{2\alpha+2}}(z)\right)\right|^2\leq\Vert\bbQ\Vert^{2(l+1)}N,&l\geq2.
		\end{array}\right.\label{Rem of minor terms}
	\end{align}
\end{small}\noindent
The proofs of the above two inequalities are the same as those in {\rm Lemma \ref{Lem of minor terms}}, since we can rewrite them into the following forms:
	$$\sum_{i,j,k=1}^{m,n,p}(a_i)^{2n_a}(b_j)^{2n_b}(c_k)^{2n_c}|Q_{ii}^{11}|^{2n_{11}}|Q_{jj}^{22}|^{2n_{22}}|Q_{kk}^{33}|^{2n_{33}}|Q_{ij}^{12}|^{2n_{12}}|Q_{ik}^{13}|^{2n_{13}}|Q_{jk}^{23}|^{2n_{23}}.$$
	For example, when \(n_{12}>0,n_{13}=n_{23}=0\) and \(n_c>0,n_a=n_b=0\), we can show that the above term equals 
	$$\boldsymbol{1}_n'|\bbQ^{21}|^{\circ n_{12}}{\rm diag}(|\bbQ^{11}|^{\circ2n_{11}})|\bbQ^{12}|^{\circ n_{12}}\boldsymbol{1}_n\leq\Vert\boldsymbol{Q}\Vert^{2(l+1)}N.$$
    We omit the details here.
\begin{proof}[Proof of Lemma \ref{Cor of minor terms}]
	First, if both \(P_1(z)\) and \(P_2(z)\) contain off-diagonal terms, by Cauchy's inequality   and (\ref{Rem of minor terms}), we have
	$$\Big|\sum_{i,j,k=1}^{m,n,p}P_1(z)P_2(z)\Big|^2\leq\sum_{i,j,k=1}^{m,n,p}|P_1(z)|^2\times\sum_{i,j,k=1}^{m,n,p}|P_1(z)|^2\leq C\Vert\bbQ\Vert^{2(l_1+l_2+2)}N^2.$$
	Therefore, we only need to consider the case only \(P_1(z)\) contains off-diagonal terms. Similarly to Lemma \ref{Lem of minor terms}, we can rewrite \(P_1(z)P_2(z)\) as the following form:
	\begin{align}
		\sum_{i,j,k=1}^{m,n,p}(a_i)^{n_a}(b_j)^{n_b}(c_k)^{n_c}(Q_{ii}^{11})^{n_{11}}(Q_{jj}^{22})^{n_{22}}(Q_{kk}^{33})^{n_{33}}(Q_{ij}^{12})^{n_{12}}(Q_{ik}^{13})^{n_{13}}(Q_{jk}^{23})^{n_{23}},\notag
	\end{align}
	where \(n_a,n_b,n_c,n_{ij}\in\mathbb{N}\), \(n_a+n_b+n_c=\sum_{1\leq i\leq j\leq3}n_{ij}=l_1+l_2+2\). Similar to proofs of Lemma \ref{Lem of minor terms},  consider three situations based on the number of nonzero terms in \(\{n_{12},n_{23},n_{13}\}\). Actually, we can repeat the proofs of Lemma \ref{Lem of minor terms} for Case 1, 2 and 3 to derive the same conclusion, so we omit the details here. 
\end{proof}
\subsection{Proof of Theorem \ref{Thm of entrywise law d=3}}\label{sec of proof entrywise law d=3}
In this section, we will prove the entrywise law. First, let's  show that \(\lim_{N\to\infty}|N^{-1}\mbE[\tr(\bbQ(z))]-g(z)|=0\), where \(g(z)\) is defined in \eqref{Eq of g(z)}.
\begin{thm}\label{Thm of approximation}
	Under Assumptions {\rm \ref{Ap of general noise}} and {\rm \ref{Ap of dimension}}, for any \(z\in\mathcal{S}_{\eta_0}\) in {\rm (\ref{Eq of stability region})} and \(\omega\in(1/2-\delta,1/2)\), where $\delta>0$ is a sufficiently small number, let 
    $$\bbve(z)=\frac{\mfc}{\bbm(z)}+z+\bbS_3\bbm(z),$$
    where \(\bbS_d\) and \(\mfm_i(z)\) are defined in {\rm (\ref{Eq of bbS d})} and {\rm (\ref{Eq of mi})}, then we have 
    $$\sup_{z\in\mcS_{\eta_0}}\Vert\bbve(z)\Vert_{\infty}=\mrO(\eta_0^{-11}N^{-2\omega}).$$
    Consequently, by Theorem {\rm \ref{Thm of Stability}}, we obtain
	$$\sup_{z\in\mcS_{\eta_0}}\Vert\bbg(z)-\bbm(z)\Vert_{\infty}=\mrO(\eta_0^{-15}N^{-2\omega}).$$
\end{thm}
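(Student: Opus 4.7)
The plan is to derive a perturbed version of the vector Dyson equation \eqref{Eq of MDE 3 order} satisfied by $\bbm(z)$ and then invoke the stability Theorem \ref{Thm of Stability}. Starting from the resolvent identity $(\bbM - z\bbI_N)\bbQ(z) = \bbI_N$, extract the $(s,s)$ diagonal block and take the normalized expected trace, yielding, for each $s \in \{1,2,3\}$,
$$z\,\mfm_s(z) = -\mfc_s + \frac{1}{N}\sum_{k\neq s}\mbE[\tr(\bbM^{sk}\bbQ^{ks}(z))].$$
Since the $s$-th component of $\bbve(z)$ equals $\mfm_s(z)^{-1}[\mfc_s + z\mfm_s(z) + \mfm_s(z)\sum_{k\neq s}\mfm_k(z)]$ and $|\mfm_s(z)| \geq C_{d,\mfc}\eta_0$ by Lemma \ref{Lem of contraction}, the whole task reduces to showing
$$\frac{1}{N}\mbE[\tr(\bbM^{sk}\bbQ^{ks})] = -\mfm_s(z)\mfm_k(z) + \mrO(\eta_0^{-10}N^{-2\omega})\quad\text{for each }k\neq s.$$

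Next, unfold $\mbE[\tr(\bbM^{sk}\bbQ^{ks})] = N^{-1/2}\sum_{i,j,\ell}\mcA_{\bullet}^{(s,k)}\mbE[X_{\bullet} Q^{ks}_{\bullet\bullet}]$ and apply the cumulant expansion \eqref{Eq of cumulant expansion} up to order $K = 3$ combined with the derivative formula \eqref{Eq of partial Q 1}. The leading term ($l = 1$, coefficient $\kappa_2 = 1$) splits into six summands indexed by $(t_1,t_2) \in \{1,2,3\}^2$ with $t_1\neq t_2$. Exactly the choice $(t_1,t_2) = (k,s)$ produces a ``major'' diagonal-diagonal pairing whose total contribution to $N^{-1}\mbE[\tr(\bbM^{sk}\bbQ^{ks})]$ evaluates to $-N^{-2}\mbE[\tr\bbQ^{ss}\tr\bbQ^{kk}]$, after using $\sum_\ell (a_\ell^{(\cdot)})^2 = 1$. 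Writing $\mbE[\tr\bbQ^{ss}\tr\bbQ^{kk}] = N^2\mfm_s\mfm_k + \Cov(\tr\bbQ^{ss},\tr\bbQ^{kk})$ and invoking Lemma \ref{Thm of Entrywise almost surly convergence} with $\bbx = \bby = N^{-1/2}\boldsymbol{1}$, each $N^{-1}\tr\bbQ^{ii}$ concentrates around $\mfm_i(z)$ with fluctuation $\mrO(\eta_0^{-5}N^{-\omega})$, so the covariance is $\mrO(\eta_0^{-10}N^{2-2\omega})$ and the major part equals $-\mfm_s\mfm_k + \mrO(\eta_0^{-10}N^{-2\omega})$, matching the claim.

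The five remaining $l = 1$ ``minor'' summands each contain at least one off-diagonal $\bbQ$-factor; Lemma \ref{Lem of minor terms} bounds every such summation by $\mrO(\Vert\bbQ\Vert^2 N)$ in absolute value. A second application of Lemma \ref{Thm of Entrywise almost surly convergence} refines this to an expectation of order $\mrO(\eta_0^{-2}N^{-1})$ for the relevant quadratic forms such as $N^{-1}\tr(\bbQ^{sk}\bbQ^{ks})$ and $\bba^{(l)}{}'\bbQ^{lk}\bbQ^{ks}\bba^{(s)}$, so minor contributions are strictly lower-order. The higher cumulant terms ($l = 2,3$, coefficients $\kappa_3,\kappa_4$) are controlled by Lemma \ref{Lem of minor terms} directly, producing $\mrO(N^{-(l-1)/2}\eta_0^{-(l+1)})$ per summand, and the Stein remainder from \eqref{Eq of cumulant expansion} is bounded via Assumption \ref{Ap of general noise} together with the deterministic derivative estimate \eqref{Eq of bound for partial Q}. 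All expectations are restricted to the overwhelming-probability event $\{\Vert\bbM\Vert \leq \mfv_3 + t\}$ from Theorem \ref{Thm of Extreme eigenvalue N d=3}, whose complement contributes at most $\mro(N^{-L})$ for arbitrary $L > 0$.

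The main obstacle is the systematic bookkeeping of these numerous minor and higher-order contributions while tracking the precise power of $\eta_0^{-1}$ accumulated through successive derivatives of $\bbQ$, through the concentration estimates, and through the decoupling of traces; keeping the $\eta_0$-dependence sharp is what forces the final exponent $11$. Once all estimates are assembled, one obtains $\sup_{z\in\mcS_{\eta_0}}\Vert\bbve(z)\Vert_\infty = \mrO(\eta_0^{-11}N^{-2\omega})$, and a direct application of Theorem \ref{Thm of Stability} with $\alpha = 2\omega$ and $\beta = 11$ then yields the final bound $\sup_{z\in\mcS_{\eta_0}}\Vert\bbm(z) - \bbg(z)\Vert_\infty = \mrO(\eta_0^{-15}N^{-2\omega})$.
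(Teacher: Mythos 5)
Your overall strategy coincides with the paper's: start from the block resolvent identity, apply the cumulant expansion, extract the diagonal--diagonal pairing as the major term $-\mfm_s\mfm_k$, control the resulting covariance of traces via the concentration in Lemma \ref{Thm of Entrywise almost surly convergence}, use Lemma \ref{Lem of minor terms} for off-diagonal contributions, and finish with Theorem \ref{Thm of Stability}. That structure is correct and is exactly what the paper does (the paper stops at $K=1$ and bounds the Stein remainder by second derivatives rather than expanding to $K=3$, a small difference of taste).

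There are, however, two slips in the error accounting that you should fix. First, the stated bound $\mrO(N^{-(l-1)/2}\eta_0^{-(l+1)})$ for the $l$-th cumulant contribution carries the wrong power of $N$. When you target $\frac{1}{N}\mbE[\tr(\bbM^{sk}\bbQ^{ks})]$ (equivalently $\mfm_s$), the cumulant-expansion prefactor is $N^{-3/2}$, and $\partial^{(l)}$ carries $N^{-l/2}$, so once Lemma \ref{Lem of minor terms} bounds the indexed sum by $N\Vert\bbQ\Vert^{l+1}$ you get $\mrO(N^{-(l+1)/2}\eta_0^{-(l+1)})$, not $\mrO(N^{-(l-1)/2}\eta_0^{-(l+1)})$. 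The latter is what one gets when computing $\mbE[\tr\bbQ^{ss}]$ with prefactor $N^{-1/2}$ (i.e. for $N\mfm_s$), a factor $N$ off. At face value your $l=2$ estimate of order $N^{-1/2}$ would swamp the target $N^{-2\omega}$ since $\omega\in(1/2-\delta,1/2)$, so it would not yield the claimed rate. Second, Lemma \ref{Lem of minor terms} controls only the summands containing at least one off-diagonal $\bbQ$-block; the fully diagonal contribution $\msD(\partial^{(l)})$ (products of the type $Q^{11}_{ii}Q^{22}_{jj}Q^{33}_{kk}$ accompanied by $a_ib_jc_k$) must be bounded separately by a direct Cauchy--Schwarz estimate, as the paper does, giving $\mrO(\eta_0^{-3}N^{-1})$ in the $\mfm_s$ normalization. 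Once both are corrected, each of the $l\geq2$ contributions is $\mrO(\eta_0^{-(l+1)}N^{-1})\leq\mrO(\eta_0^{-(l+1)}N^{-2\omega})$, the covariance term $\mrO(\eta_0^{-10}N^{-2\omega})$ dominates, and dividing by $|\mfm_s|\gtrsim\eta_0$ and invoking Theorem \ref{Thm of Stability} yields the stated $\eta_0^{-15}N^{-2\omega}$.
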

\begin{proof}
	Without loss of generality, we only prove 
	\begin{align}
		-\frac{\mfc_1}{\mfm_1(z)}=z+\mfm_2(z)+\mfm_3(z)+\mrO(\eta_0^{-11}N^{-2\omega}),\label{Eq of approximation 1}
	\end{align}
	since the same arguments apply to the other cases. By \(\bbM\bbQ(z)-z\bbQ(z)=\bbI_N\) and the cumulant expansion (\ref{Eq of cumulant expansion}), we have
	\begin{align}
		&zN^{-1}\mbE[\tr(\bbQ^{11}(z))]=N^{-3/2}\sum_{i,j,k=1}^{m,n,p}\mbE[X_{ijk}(c_kQ_{ij}^{12}(z)+b_jQ_{ik}^{13}(z))]-\mfc_1\notag\\
		&=N^{-3/2}\sum_{i,j,k=1}^{m,n,p}\left(\mbE\big[\partial_{ijk}^{(1)}(c_kQ_{ij}^{12}(z)+b_jQ_{ik}^{13}(z))\big]+\epsilon_{ijk}^{(2)}\right)-\mfc_1,\notag
	\end{align}
    where the remainder \(\epsilon_{ijk}^{(2)}\) satisfies that
    $$|\epsilon_{ijk}^{(2)}|\leq C_{\kappa_3}\sup_{z\in\mcS_{\eta_0}}\big|\partial_{ijk}^{(2)}(c_kQ_{ij}^{12}(z)+b_jQ_{ik}^{13}(z))\big|.$$
	Let us  first show that \(\sum_{i,j,k=1}^{m,n,p}|\epsilon_{ijk}^{(2)}|\) is a minor term. By the definition of \(\mathscr{O}\) in (\ref{Eq of operator O}) and Lemma \ref{Lem of minor terms}, we know that
	$$N^{-3/2}\sum_{i,j,k=1}^{m,n,p}\Big|c_k\mathscr{O}\big(\partial_{ijk}^{(2)}Q_{ij}^{12}(z)\big)\Big|\leq\mrO(\eta_0^{-3}N^{-3/2}).$$
	On the other hand, based on the definition of \(\mathscr{D}\), we have
	$$N^{-3/2}\sum_{i,j,k=1}^{m,n,p}\Big|c_k\mathscr{D}\big(\partial_{ijk}^{(2)}Q_{ij}^{12}(z)\big)\Big|=N^{-5/2}\sum_{i,j,k=1}^{m,n,p}|a_ib_jc_kQ_{ii}^{11}(z)Q_{jj}^{22}(z)Q_{kk}^{33}(z)|\leq\mrO(\eta_0^{-3}N^{-1}).$$
	Next, by direct calculation, we have
	\begin{align}
		&N^{-3/2}\sum_{i,j,k=1}^{m,n,p}\mbE\big[\partial_{ijk}^{(1)}(c_kQ_{ij}^{12}(z)+b_jQ_{ik}^{13}(z))\big]=-N^{-2}\mbE[\tr(\bbQ^{11}(z))\tr(\bbQ^{22}(z)+\bbQ^{33}(z))]\notag\\
		&-N^{-2}\mbE[\tr(\bbQ^{12}(z)\bbQ^{21}(z)+\bbQ^{13}(z)\bbQ^{31}(z))+2\tr(\bbQ^{11}(z))\bbb'\bbQ^{23}(z)\bbc+2\bbb'\bbQ^{21}(z)\bbQ^{13}(z)\bbc]\notag\\
		&-N^{-2}\mbE[\bba'\bbQ^{13}(z)\bbc\tr(\bbQ^{22}(z))+\bba'\bbQ^{12}(z)\bbb\tr(\bbQ^{33}(z))+\bba'\bbQ^{12}(z)\bbQ^{23}(z)\bbc+\bba'\bbQ^{13}(z)\bbQ^{32}(z)\bbb]\notag\\
		&=-\mbE[\rho_1(z)(\rho_2(z)+\rho_3(z))]+\mrO(\eta_0^{-2}N^{-1}),\notag
	\end{align}
	where \(\rho_i(z)=N^{-1}\tr(\bbQ^{ii}(z))\). Next, by Lemma \ref{Thm of Entrywise almost sure convergence}, we know that \(N^{-1}|\tr(\bbQ^{ii}(z))^c|\prec\mrO(\eta_0^{-5}N^{-\omega})\), combined with the fact that \(|\rho_i(z)|\leq\eta_0^{-1}\), it yields that
	$$|\operatorname{Cov}(\rho_1(z),\rho_2(z))|\leq\mbE[|\rho_1(z)^c||\rho_2(z)^c|]\leq\eta_0^{-10}N^{-2\omega}+\eta_0^{-2}\exp(-CN^{1-2\omega})=\mrO(\eta_0^{-10}N^{-2\omega}),$$
	so
	$$\mfc_1+\mfm_1(z)(z+\mfm_2(z)+\mfm_3(z))=\mrO(\eta_0^{-10}N^{-2\omega}).$$
	By the definition of \(\mcS_{\eta_0}\) in (\ref{Eq of stability region}), we know that \(|\mfm_1(z)|\geq c(\eta_0)\), for some constant $c>0$, so we can conclude (\ref{Eq of approximation 1}) by dividing by \(\mfm_1(z)\).
\end{proof}
Finally, let us  prove the entrywise law for \(d=3\) as follows:
\begin{proof}[Proof of Theorem \ref{Thm of entrywise law d=3}]
	First, let us  focus on the diagonal terms, e.g. \(Q_{il}^{11}(z)\), by cumulant expansion (\ref{Eq of cumulant expansion}), we have
	\begin{align}
		\mathbb{E}[Q_{il}^{11}(z)]&=\frac{z^{-1}}{\sqrt{N}}\sum_{j,k=1}^{n,p}\mathbb{E}[X_{ijk}\left(c_kQ_{lj}^{12}(z)+b_jQ_{lk}^{13}(z)\right)]-\delta_{il}z^{-1}\notag\\
		&=\frac{z^{-1}}{\sqrt{N}}\sum_{j,k=1}^{n,p}\left(\mathbb{E}\left[c_k\partial_{ijk}^{(1)}Q_{lj}^{12}(z)+b_j\partial_{ijk}^{(1)}Q_{lk}^{13}(z)\right]+\epsilon_{ijk}^{(2)}\right)-\delta_{il}z^{-1},\notag
	\end{align}
	where \(|\epsilon_{ijk}^{(2)}|\leq C_{\kappa_3}\sup_{z\in\mcS_{\eta_0}}\big|\partial_{ijk}^{(2)}(c_kQ_{lj}^{12}(z)+b_jQ_{lk}^{13}(z))\big|\). We will show that 
	\begin{align}
		N^{-1/2}\Big|\sum_{j,k=1}^{n,p}\epsilon_{ijk}^{(2)}\Big|=\mrO(\eta_0^{-3}N^{-1/2}(a_i+N^{-1/2})).\label{Eq of ve(2) d=3}
	\end{align}
	Since \(\partial_{ijk}^{(2)}\boldsymbol{Q}(z)=2(\boldsymbol{Q}(z)\partial_{ijk}\bbM(z))^2\boldsymbol{Q}(z)\), without loss of generality, consider
	\begin{align}
		N^{-1/2}\sum_{j,k=1}^{n,p}c_k\partial_{ijk}^{(2)}Q_{lj}^{12}=N^{-3/2}\sum_{t_1\neq t_2,t_3\neq t_4}^3\sum_{j,k=1}^{n,p}c_kQ_{l\tilde{t}_2}^{1t_1}\mcA_{ijk}^{(t_1,t_2)}Q_{\tilde{t}_2\tilde{t}_3}^{t_2t_3}\mcA_{ijk}^{(t_3,t_4)}Q_{\tilde{t}_4j}^{t_42},\label{Eq of ve(2) derivatives d=3}
	\end{align}
	where \(\mcA_{ijk}^{(t_1,t_2)}\) is defined in (\ref{Eq of nij}). If \(\mcA_{ijk}^{(t_1,t_2)},\mcA_{ijk}^{(t_3,t_4)}\neq a_i\), suppose \(\mcA_{ijk}^{(t_1,t_2)}=\mcA_{ijk}^{(t_3,t_4)}=c_k\), then all \(Q_{l\tilde{t}_2}^{1t_1},Q_{\tilde{t}_2\tilde{t}_3}^{t_2t_3},Q_{\tilde{t}_4j}^{t_42}\) must come from \(\bbQ^{11},\bbQ^{22}\) or \(\bbQ^{12}\) and there must exist at least one off-diagonal term, which comes from \(\bbQ^{12}\). Here, we have two possible situations: first, all these three terms come from \(\bbQ^{12}\), then
	\begin{align*}
		N^{-3/2}\sum_{j,k=1}^{n,p}\big|c_k^3Q_{lj}^{12}Q_{ij}^{12}Q_{ij}^{12}\big|\leq N^{-3/2}|Q_{i\cdot}^{12}|^{\circ2}|Q_{\cdot l}^{21}|=\mrO(\eta_0^{-3}N^{-3/2}).
	\end{align*}
	Next, three terms come from \(\bbQ^{11},\bbQ^{22}\) and \(\bbQ^{12}\), respectively, then
	\begin{align*}
		N^{-3/2}\sum_{j,k=1}^{n,p}\big|c_k^3Q_{li}^{11}Q_{ij}^{12}Q_{jj}^{22}\big|\leq N^{-3/2}|Q_{l\cdot}^{11}||\bbQ^{12}|\diag|\bbQ^{22}|\boldsymbol{1}_n=\mrO(\eta_0^{-3}N^{-1}).
	\end{align*}
	Otherwise, one of \(\mcA_{ijk}^{(t_1,t_2)},\mcA_{ijk}^{(t_3,t_4)}\) equals \(c_k\) while the other is \(b_j\). First, if \(\mcA_{ijk}^{(t_1,t_2)}=b_j,\mcA_{ijk}^{(t_3,t_4)}=c_k\), then all possible situations are presented as follows:
	\begin{align*}
		&N^{-3/2}\sum_{j,k=1}^{n,p}\big|c_k^2b_jQ_{li}^{11}Q_{ki}^{31}Q_{jj}^{22}\big|\leq N^{-3/2}|Q_{l\cdot}^{11}||\bbQ^{13}||\bbc|^{\circ2}\cdot\boldsymbol{1}_n\diag|\bbQ^{22}||\bbb|=\mrO(\eta_0^{-3}N^{-1}).\\
		&N^{-3/2}\sum_{j,k=1}^{n,p}\big|c_k^2b_jQ_{lk}^{13}Q_{ii}^{11}Q_{jj}^{22}\big|\leq N^{-3/2}|Q_{ii}^{11}||Q_{l\cdot}^{13}||\bbc|^{\circ2}\cdot\boldsymbol{1}_n\diag|\bbQ^{22}||\bbb|=\mrO(\eta_0^{-3}N^{-1}).\\
		&N^{-3/2}\sum_{j,k=1}^{n,p}\big|c_k^2b_jQ_{lk}^{13}Q_{ij}^{12}Q_{ij}^{12}\big|\leq N^{-3/2}|Q_{l\cdot}^{13}||\bbc|^{\circ2}\cdot|Q_{i\cdot}^{12}|^{\circ2}|\bbb|=\mrO(\eta_0^{-3}N^{-3/2}).\\
		&N^{-3/2}\sum_{j,k=1}^{n,p}\big|c_k^2b_jQ_{li}^{11}Q_{kj}^{32}Q_{ij}^{12}\big|\leq N^{-3/2}|Q_{il}^{11}||Q_{i\cdot}^{12}|\diag(|\bbb|)|\bbQ^{23}||\bbc|=\mrO(\eta_0^{-3}N^{-3/2}).
	\end{align*}
	Finally, when \(\mcA_{ijk}^{(t_1,t_2)}=\mcA_{ijk}^{(t_3,t_4)}=b_j\), we can also derive that (\ref{Eq of ve(2) derivatives d=3}) is no more than \(\mrO(\eta_0^{-3}N^{-1})\), e.g.
	\begin{align*}
		&N^{-3/2}\sum_{j,k=1}^{n,p}\big|c_kb_j^2Q_{li}^{11}Q_{kk}^{33}Q_{ij}^{12}\big|\leq N^{-3/2}|Q_{il}^{11}||Q_{i\cdot}^{12}||\bbb|^{\circ2}\cdot\boldsymbol{1}_p'\diag(|\bbQ^{33}|)|\bbc|=\mrO(\eta_0^{-3}N^{-1}).
	\end{align*}
	Here, we omit the other situations; one can verify that their norms are also bounded by \(\mrO(\eta_0^{-3}N^{-1})\) by similar calculations as the above equation. Now, suppose one of \(\mcA_{ijk}^{(t_1,t_2)},\mcA_{ijk}^{(t_3,t_4)}\) equals \(a_i\). First, if \(\mcA_{ijk}^{(t_1,t_2)}=\mcA_{ijk}^{(t_3,t_4)}=a_i\), we have the following three cases:
    \begin{align*}
        &N^{-3/2}\sum_{j,k=1}^{n,p}\big|c_ka_i^2Q_{lj}^{12}Q_{kk}^{33}Q_{jj}^{22}\big|,\ N^{-3/2}\sum_{j,k=1}^{n,p}\big|c_ka_i^2Q_{lj}^{12}Q_{kj}^{32}Q_{kj}^{32}\big|\leq\mrO(a_i^2\eta_0^{-3}N^{-1/2}),\\
        &N^{-3/2}\sum_{j,k=1}^{n,p}\big|c_ka_i^2Q_{lk}^{13}Q_{jk}^{23}Q_{jj}^{22}\big|\leq\mrO(a_i^2\eta_0^{-3}N^{-1}).
    \end{align*}
    Next, if \(\mcA_{ijk}^{(t_1,t_2)}=a_i,\mcA_{ijk}^{(t_3,t_4)}=c_k\), we have
    \begin{align*}
        &N^{-3/2}\sum_{j,k=1}^{n,p}\big|c_k^2a_iQ_{lj}^{12}Q_{ki}^{31}Q_{jj}^{22}\big|,\ N^{-3/2}\sum_{j,k=1}^{n,p}\big|c_k^2a_iQ_{lk}^{13}Q_{ji}^{21}Q_{jj}^{22}\big|\leq\mrO(a_i\eta_0^{-3}N^{-1})\\
        &N^{-3/2}\sum_{j,k=1}^{n,p}\big|c_k^2a_iQ_{lj}^{12}Q_{kj}^{32}Q_{ij}^{12}\big|\leq\mrO(a_i\eta_0^{-3}N^{-3/2}).
    \end{align*}
    For other situations as \(\mcA_{ijk}^{(t_1,t_2)}=c_k,\mcA_{ijk}^{(t_3,t_4)}=a_i\) and \(\mcA_{ijk}^{(t_1,t_2)}=a_i,\mcA_{ijk}^{(t_3,t_4)}=b_j\) and \(\mcA_{ijk}^{(t_1,t_2)}=b_j,\mcA_{ijk}^{(t_3,t_4)}=a_i\), we can also show that 
    $$\sum_{j,k=1}^{n,p}\big|c_kQ_{l\tilde{t}_2}^{1t_1}\mcA_{ijk}^{(t_1,t_2)}Q_{\tilde{t}_2\tilde{t}_3}^{t_2t_3}\mcA_{ijk}^{(t_3,t_4)}Q_{\tilde{t}_4j}^{t_42}\big|\leq\mrO(a_i\eta_0^{-3}N^{-1/2}),$$
    we omit the details. 
    Now, we obtain that
    $$N^{-1/2}\sum_{j,k=1}^{n,p}\left|c_k\partial_{ijk}^{(2)}Q_{lj}^{12}\right|\leq\mrO(\eta_0^{-3}N^{-1/2}(a_i+N^{-1/2})).$$
	Similarly, we can show the above conclusion is valid for \(\partial_{ijk}^{(2)}Q_{lk}^{13}\) by the same argument, so we conclude that (\ref{Eq of ve(2) d=3}). Notice that
	\begin{align}
		&N^{-1/2}\sum_{j,k=1}^{n,p}\mathbb{E}\left[c_k\partial_{ijk}^{(1)}Q_{lj}^{12}(z)+b_j\partial_{ijk}^{(1)} Q_{lk}^{13}(z)\right]=\notag\\
		&-N^{-1}\mathbb{E}\left[Q_{li}^{11}\tr\left(\boldsymbol{Q}^{22}+\boldsymbol{Q}^{33}\right)+a_iQ_{l\cdot}^{12}\boldsymbol{b}\tr\left(\boldsymbol{Q}^{33}\right)+a_iQ_{l\cdot}^{13}\boldsymbol{c}\tr\left(\boldsymbol{Q}^{22}\right)\right]\notag\\
		&-N^{-1}\mathbb{E}\left[a_iQ_{l\cdot}^{12}\boldsymbol{Q}^{23}\boldsymbol{c}+a_iQ_{l\cdot}^{13}\boldsymbol{Q}^{32}\boldsymbol{b}+Q_{i\cdot}^{12}\boldsymbol{b}Q_{l\cdot}^{13}\boldsymbol{c}+Q_{i\cdot}^{12}Q_{\cdot l}^{21}+Q_{l\cdot}^{12}\boldsymbol{b}Q_{i\cdot}^{13}\boldsymbol{c}+Q_{i\cdot}^{13}Q_{\cdot l}^{31}+2Q_{il}^{11}\boldsymbol{b}'\boldsymbol{Q}^{23}\boldsymbol{c}\right]\notag\\
		&=-\mathbb{E}\left[Q_{li}^{11}(\rho_2(z)+\rho_3(z))+a_i(Q_{l\cdot}^{12}\boldsymbol{b}\rho_3(z)+Q_{l\cdot}^{13}\boldsymbol{c}\rho_2(z))\right]+\mrO(\eta_0^{-2}N^{-1}),\label{Eq of Qii11}
	\end{align}
	where $Q_{i\cdot}^{12}$ and $Q_{\cdot l}^{21}$ is the \(i\)-th row and \(l\)-th column of \(\bbQ^{12}\) and $\bbQ^{21}$, respectively. We also use the fact that
	$$\left|a_lQ_{l\cdot}^{12}\boldsymbol{Q}^{23}\boldsymbol{c}\right|\leq\Vert\boldsymbol{Q}\Vert\Vert Q_{l\cdot}^{12}\Vert\leq\Vert\boldsymbol{Q}\Vert^2,\ \ \left|Q_{i\cdot}^{13}\boldsymbol{c}\right|\leq\Vert\boldsymbol{Q}^{13}\boldsymbol{c}\Vert\leq\Vert\boldsymbol{Q}\Vert,\ {\rm and\ }\Vert Q_{i\cdot}^{12}\Vert^2\leq\Vert\boldsymbol{Q}^{12}\boldsymbol{Q}^{21}\Vert\leq\Vert\boldsymbol{Q}\Vert^2,$$
	so we obtain that
	$$z\mathbb{E}\left[Q_{il}^{11}\right]=-\mathbb{E}\left[Q_{li}^{11}(\rho_2(z)+\rho_3(z))+a_i(Q_{l\cdot}^{12}\boldsymbol{b}\rho_3(z)+Q_{l\cdot}^{13}\boldsymbol{c}\rho_2(z))\right]-\delta_{il}+\mrO(\eta_0^{-3}N^{-1/2}(a_i+N^{-1/2})),$$
	i.e.
	\begin{align*}
		\mathbb{E}\left[(z+\rho_2(z)+\rho_3(z))Q_{il}^{11}\right]=-\mathbb{E}\left[a_i(Q_{l\cdot}^{12}\boldsymbol{b}\rho_3(z)+Q_{l\cdot}^{13}\boldsymbol{c}\rho_2(z))\right]-\delta_{il}+\mrO(\eta_0^{-3}N^{-1/2}(a_i+N^{-1/2})),
	\end{align*}
	where \(\rho_l(z)=N^{-1}\tr(\bbQ^{ll}(z))\). Similarly, we also have
	\begin{align}
		z\mathbb{E}\left[Q_{jl}^{22}\right]&=-\mathbb{E}\left[Q_{lj}^{22}(\rho_1(z)+\rho_3(z))+b_j(Q_{l\cdot}^{21}\boldsymbol{a}\rho_3(z)+Q_{l\cdot}^{23}\boldsymbol{c}\rho_1(z))\right]-\delta_{il}+\mrO(\eta_0^{-3}N^{-1/2}(b_j+N^{-1/2})).\notag\\
		z\mathbb{E}\left[Q_{kl}^{33}\right]&=-\mathbb{E}\left[Q_{lk}^{33}(\rho_1(z)+\rho_2(z))+c_k(Q_{l\cdot}^{31}\boldsymbol{a}\rho_2(z)+Q_{l\cdot}^{32}\boldsymbol{b}\rho_1(z))\right]-\delta_{il}+\mrO(\eta_0^{-3}N^{-1/2}(c_k+N^{-1/2})).\notag
	\end{align}
	Next, for off-diagonal blocks such as \(Q_{il}^{12}(z)\), we can repeat the previous argument to show that
	\begin{align}
		z\mathbb{E}\left[Q_{il}^{12}\right]&=-\mathbb{E}\left[Q_{il}^{12}(\rho_2(z)+\rho_3(z))+a_i(Q_{l\cdot}^{23}\boldsymbol{c}\rho_2(z)+Q_{l\cdot}^{22}\boldsymbol{b}\rho_3(z))\right]+\mrO(\eta_0^{-3}N^{-1/2}(a_i+N^{-1/2})).\notag\\
		z\mathbb{E}\left[Q_{il}^{13}\right]&=-\mathbb{E}\left[Q_{il}^{13}(\rho_2(z)+\rho_3(z))+a_i(Q_{l\cdot}^{33}\boldsymbol{c}\rho_2(z)+Q_{l\cdot}^{32}\boldsymbol{b}\rho_3(z))\right]+\mrO(\eta_0^{-3}N^{-1/2}(b_j+N^{-1/2})).\notag\\
		z\mathbb{E}\left[Q_{jl}^{23}\right]&=-\mathbb{E}\left[Q_{il}^{23}(\rho_1(z)+\rho_3(z))+b_j(Q_{l\cdot}^{33}\boldsymbol{c}\rho_1(z)+Q_{l\cdot}^{31}\boldsymbol{a}\rho_3(z))\right]+\mrO(\eta_0^{-3}N^{-1/2}(c_k+N^{-1/2})).\notag
	\end{align}
	Finally, by similar argument again, we have
	$$\mathbb{E}\left[(z+\rho_2(z)+\rho_3(z))Q_{i\cdot}^{12}\boldsymbol{b}\right]=-a_i\mathbb{E}\left[\boldsymbol{b}'\boldsymbol{Q}^{23}\boldsymbol{c}\rho_2(z)+\boldsymbol{b}'\boldsymbol{Q}^{22}\boldsymbol{b}\rho_3(z)\right]+\mrO(\eta_0^{-3}N^{-1/2}(a_i+N^{-1/2})).$$
	According to Lemma \ref{Thm of Entrywise almost sure convergence}, we have that \(|(Q_{l\cdot}^{13}\bbc)^c|,|(Q_{l\cdot}^{12}\bbb)^c|,|(\bbb'\bbQ^{23}\bbc)^c|,|(\bbb'\bbQ^{22}\bbb)^c|,|\rho_2(z)^c|,|\rho_3(z)^c|\prec\eta_0^{-5}N^{-\omega}\), then
	$$|{\rm Cov}(\boldsymbol{b}'\boldsymbol{Q}^{22}\boldsymbol{b},\rho_3(z))|\leq\mathbb{E}\left[|\rho_3(z)^c(\bbb'\bbQ^{22}\bbb)^c|\right]\leq\mrO(\eta_0^{-10}N^{-2\omega}+\Vert\bbQ\Vert^2\exp(-CN^{1-2\omega})),$$
	as does others, hence we obtain that
	\begin{small}
	\begin{align*}
        &(z+\mfm_2(z)+\mfm_3(z))\mbE[Q_{il}^{11}]=-a_i\big(\mfm_2(z)\mbE\big[Q_{l\cdot}^{13}\bbc\big]+\mfm_3(z)\mbE\big[Q_{l\cdot}^{12}\bbb\big]\big)-\delta_{il}+\mrO(\eta_0^{-10}N^{-2\omega}+\eta_0^{-3}N^{-1/2}(a_i+N^{-1/2})),\\
		&(z+\mfm_2(z)+\mfm_3(z))\mathbb{E}\left[Q_{i\cdot}^{12}\boldsymbol{b}\right]=-a_i\big(\mfm_2(z)W_{23,N}^{(3)}+\mfm_3(z)W_{22,N}^{(3)}\big)+\mrO(\eta_0^{-10}N^{-2\omega}+\eta_0^{-3}N^{-1/2}(a_i+N^{-1/2})),
	\end{align*}
	\end{small}\noindent
	where
    \begin{align}
        W_{st,N}^{(3)}(z)=\mbE\big[(\bba^{(s)})'\bbQ^{st}(z)\bba^{(t)}\big],\label{Eq of W}
    \end{align} 
    for \(1\leq s,t\leq 3\). By Theorem \ref{Thm of approximation} and \((z+g_2(z)+g_3(z))^{-1}=-\mfc_1^{-1}g_1(z)\), it yields that
	\begin{align}
	    &\mbE[Q_{il}^{11}]=\mfc_1^{-1}g_1(z)\big(\delta_{il}+a_i\big(\mfm_2(z)\mbE\big[Q_{l\cdot}^{13}\bbc\big]+\mfm_3(z)\mbE\big[Q_{l\cdot}^{12}\bbb\big]\big)\big)+\mrO(\eta_0^{-17}N^{-2\omega}+\eta_0^{-5}N^{-1/2}(a_i+N^{-1/2})),\notag\\
        &a_i\mathbb{E}\left[Q_{i\cdot}^{12}\boldsymbol{b}\right]=a_i^2\mfc_1^{-1}g_1(z)\big(g_2(z)W_{23,N}^{(d)}+g_3(z)W_{22,N}^{(d)}\big)+a_i\mrO(\eta_0^{-17}N^{-2\omega}+\eta_0^{-5}N^{-1/2}(a_i+N^{-1/2})).\label{Eq of W12 approximation 1}
	\end{align}
	Summing all \(1\leq i\leq n\) for \(a_i\mathbb{E}\left[Q_{i\cdot}^{12}\boldsymbol{b}\right]\), we have
	\begin{align}
	    W_{12,N}^{(3)}=\mfc_1^{-1}g_1(z)\big(g_2(z)W_{23,N}^{(d)}+g_3(z)W_{22,N}^{(d)}\big)+\mrO(\eta_0^{-17}N^{-2\omega+1/2}+\eta_0^{-5}N^{-1/2}),\label{Eq of W12 approximation 2}
	\end{align}
	since \(\omega\in(1/2-\delta,1/2)\) and $\delta>0$ is sufficiently small, then \(2\omega-1/2\in(1/2-2\delta,1/2)\), combining \eqref{Eq of W12 approximation 1} and \eqref{Eq of W12 approximation 2}, we have 
	$$\mathbb{E}\left[Q_{i\cdot}^{12}\boldsymbol{b}\right]=a_iW_{12,N}^{(3)}+a_i\mrO(\eta_0^{-17}N^{-2\omega+1/2}),$$
	and we can derive the same result for \(\mathbb{E}\left[Q_{i\cdot}^{13}\boldsymbol{c}\right]\), it concludes that
	\begin{align*}
		\mathbb{E}\left[Q_{il}^{11}\right]&=\mfc_1^{-1}g_1(z)\big[a_i\big(g_3(z)\mbE[Q_{i\cdot}^{12}\bbb]+g_2(z)\mbE[Q_{i\cdot}^{13}\bbc]\big)+\delta_{il}\big]+\mrO(\eta_0^{-17}N^{-2\omega}+\eta_0^{-5}N^{-1/2}(a_i+N^{-1/2}))\\
		&=\mfc_1^{-1}g_1(z)\big[a_i^2\big(g_3(z)W_{12,N}^{(3)}+g_2(z)W_{13,N}^{(3)}\big)+\delta_{il}\big]+\mrO(\eta_0^{-19}N^{-2\omega+1/2}).
	\end{align*}
	Finally, by Lemma \ref{Thm of Entrywise almost sure convergence}, i.e. \(|(Q_{ii}^{11})^{\circ}|\prec\eta_0^{-5}N^{-\omega}\), and
    $$|W_{st,N}^{(3)}(z)-W_{st}^{(3)}(z)|\leq\mrO(\eta_0^{-17}N^{-\omega}),$$
    the proof of the above equation is deferred to \eqref{Eq of solve W} for clarity, we can conclude Theorem \ref{Thm of entrywise law d=3} for $Q_{il}^{11}$. For other cases, since the proof arguments are the same, the details are omitted for brevity.
\end{proof}

\section{Mean and covariance functions when \texorpdfstring{$d=3$}{d=3}}\label{Sec of mean and covariance}
\setcounter{equation}{0}
\def\theequation{\thesection.\arabic{equation}}
\setcounter{subsection}{0}
In this section, we will derive the mean function \(\mbE[\tr(\bbQ(z))]-Ng(z)\) and variance function \(\Var(\tr(\bbQ(z)))\), respectively, and these two functions will be used to calculate the asymptotic mean and variance of the LSS in \S\ref{Sec of CLT}. Recall that \(\omega\in(1/2-\delta,1/2)\) in Lemma \ref{Thm of Entrywise almost sure convergence} is a fixed constant which can be sufficiently close to \(1/2\); we will not repeat its definition throughout this section. Next, by (\ref{Eq of nij}), the notations in (\ref{Eq of mfb}) and (\ref{Eq of mcB}) are equivalent to
\begin{align}
	\mfb_1^{(1)}=\frac{1}{\sqrt{N}}\sum_{i=1}^ma_i,\quad\mfb_2^{(1)}=\frac{1}{\sqrt{N}}\sum_{j=1}^nb_j,\quad\mfb_3^{(1)}=\frac{1}{\sqrt{N}}\sum_{k=1}^pc_k,\label{Eq of Ap L1 d=3}
\end{align}
and
\begin{align}
	\mcB_{(4)}^{(2,3)}=\Vert\bba\Vert_4^4,\quad\mcB_{(4)}^{(1,3)}=\Vert\bbb\Vert_4^4,\quad\mcB_{(4)}^{(1,2)}=\Vert\bbc\Vert_4^4,\label{Eq of Ap L4 d=3}
\end{align}
where we use the notations in (\ref{Eq of vector notation d=3}).
\subsection{Covariance function}\label{Sec of Covariance}
\begin{thm}\label{Thm of Variance}
	Under Assumptions {\rm \ref{Ap of general noise}} and {\rm \ref{Ap of dimension}}, for any \(\eta_0>0\), \(z_1,z_2\in\mathcal{S}_{\eta_0}\) in {\rm (\ref{Eq of stability region})}, let
	\begin{align}
		\mcC_{st,N}^{(3)}(z_1,z_2):=\Cov(\tr(\bbQ^{ss}(z_1)),\tr(\bbQ^{tt}(z_2)))\quad{\rm and}\quad\bbC_N^{(3)}(z_1,z_2):=[\mcC_{st,N}^{(3)}(z_1,z_2)]_{3\times3},\label{Eq of mcC d=3}
	\end{align}
    where \(s,t\in\{1,2,3\}\). Further define
    \begin{align*}
		\bbF_N^{(3)}(z_1,z_2)=[\mcF_{st,N}^{(3)}(z_1,z_2)]_{3\times3},\quad\mcF_{st,N}^{(3)}(z_1,z_2):=2\mcV_{st}^{(3)}(z_1,z_2)+\kappa_4\mcW_{st,N}^{(3)}(z_1,z_2),
	\end{align*}
    where the precise definitions of \(\mcV_{st}^{(3)}(z_1,z_2)\) and \(\mcW_{st,N}^{(3)}(z_1,z_2)\) are postponed to {\rm (\ref{Eq of mcV limiting d=3})} and {\rm (\ref{Eq of mcW limiting d=3})}, respectively. Then we have
	\begin{align}
		\lim_{N\to\infty}\Vert\bbC_N^{(3)}(z_1,z_2)-\bbPi^{(3)}(z_1,z_2)^{-1}\diag(\mfc^{-1}\circ\bbg(z_1))\bbF_N^{(3)}(z_1,z_2)\Vert=0,\label{Eq of bbC d=3}
	\end{align}
	where \(\bbPi^{(3)}(z_1,z_2)\) is defined in {\rm (\ref{Eq of invertible 2})}. Consequently, \(\Var(\tr(\bbQ(z)))\) is bounded by \(C_{\eta_0,\mfc}\) for any \(z\in\mcS_{\eta_0}\) and
    \begin{align*}
        \lim_{N\to\infty}\big|\Cov\big(\tr(\bbQ(z_1),\tr(\bbQ(z_2))\big)-\mcC_N^{(3)}(z_1,z_2)\big|=0,
    \end{align*}
    where
    \begin{align}
    \mcC_N^{(3)}(z_1,z_2):=\Cov(\tr(\bbQ(z_1)),\tr(\bbQ(z_2)))=\boldsymbol{1}_3'\bbC_N^{(3)}(z_1,z_2)\boldsymbol{1}_3\label{Eq of covariance function d=3}
\end{align}
\end{thm}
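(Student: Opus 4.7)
\medskip
\noindent\textbf{Proof proposal.} The plan is to extend the resolvent--cumulant machinery already used in the proofs of Theorems \ref{Thm of approximation} and \ref{Thm of entrywise law d=3} to a bivariate setting. Fix $s,t\in\{1,2,3\}$ and note that $z_1\tr(\bbQ^{ss}(z_1))=-n_s+\sum_{l\ne s}\tr(\bbM^{sl}\bbQ^{ls}(z_1))$, so
\begin{align*}
  z_1\mcC_{st,N}^{(3)}(z_1,z_2)
  =\sum_{l\ne s}\frac{1}{\sqrt N}\sum_{i,j,k}\mcA_{ijk}^{(s,l)}\,
  \mathbb{E}\!\left[X_{ijk}\,Q_{\tilde s\tilde l}^{ls}(z_1)\,\tr(\bbQ^{tt}(z_2))^c\right].
\end{align*}
I would then apply the cumulant expansion (Lemma \ref{Lem of Stein’s equation}) to each $X_{ijk}$-expectation up to order three and bound the order-four remainder uniformly using \eqref{Eq of bound for partial Q} together with Assumption \ref{Ap of general noise} (the subexponential tail makes the $\mathbb{E}[|X|^{5}]$ factor harmless). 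This decomposes the covariance into contributions from the second, third, and fourth cumulants of the noise, plus a negligible remainder.

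The first-derivative (variance) contribution is the main source. Differentiating either $Q_{\tilde s\tilde l}^{ls}(z_1)$ or the centered $\tr(\bbQ^{tt}(z_2))^c$ with respect to $X_{ijk}$ yields many summands; Lemma \ref{Lem of minor terms}, together with the operator decomposition $\partial^{(1)}=\mathscr D+\mathscr O$ of \eqref{Eq of operator D}--\eqref{Eq of operator O}, allows me to throw away all terms containing an off-diagonal block, since these are of order $\mrO(N^{-1/2})$ after division by $N$. The surviving diagonal-only summands, after replacing $N^{-1}\tr(\bbQ^{ii}(z_1))$ by $\mfm_i(z_1)\to g_i(z_1)$ using Theorems \ref{Thm of approximation} and \ref{Thm of Entrywise almost surly convergence} (which bound the $\kappa_2$ cross-terms between the normalized traces and the factor $\tr(\bbQ^{tt}(z_2))^c$), produce on the left-hand side the multiplicative factor $z_1+g(z_1)-g_s(z_1)=-\mfc_s/g_s(z_1)$, and on the right-hand side both a homogeneous coupling $\mfc_s^{-1}g_s(z_1)\bbS_3\bbg(z_2)\circ\mcC_{\cdot t,N}^{(3)}$ (producing the matrix $\bbPi^{(3)}(z_1,z_2)$ in \eqref{Eq of invertible 2}) and an inhomogeneous source that equals $2\mcV_{st}^{(3)}(z_1,z_2)$. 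The factor 2 arises from the two mirror contributions $X_{ijk}\partial_{ijk}Q^{ls}$ and $X_{ijk}\partial_{ijk}\tr(\bbQ^{tt}(z_2))^c$ reflecting the block symmetry of $\bbM$, and the definition \eqref{Eq of bbV_l d=3}--\eqref{Eq of mcV limiting d=3} is precisely what one obtains after inverting the $z_2$-block of the linear system at this intermediate stage.

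For the higher cumulants, Lemma \ref{Cor of minor terms} is decisive: applied with $l_1=2,l_2=0$, it shows that every $\kappa_3$ contribution involving a second derivative of $Q^{ls}(z_1)$ paired with $\tr(\bbQ^{tt}(z_2))^c$ contains at least one off-diagonal block and is therefore $\mrO(N^{-1})$. The $\kappa_4$ contribution from the third derivative survives; only summands in which the three auxiliary indices $t_1,\dots,t_6$ collapse to the diagonal pattern (the $\mathscr D$-type terms) can produce a non-vanishing contribution, and a direct computation, using the entrywise law Theorem \ref{Thm of entrywise law d=3} to replace $Q^{ii}$-type entries and the definition \eqref{Eq of mcB} of $\mcB_{(4)}^{(s,l)}=\Vert\bba^{(\cdot)}\Vert_4^4$ to account for the quartic sums of entries of $\bba^{(\cdot)}$, yields exactly the term $\kappa_4\mcW_{st,N}^{(3)}(z_1,z_2)$ from \eqref{Eq of mcW limiting d=3}. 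Assembling these three pieces gives the approximate matrix equation
\begin{align*}
  \diag(\mfc\circ\bbg(z_1)^{-1})\,\bbC_N^{(3)}(z_1,z_2)
  +\diag(\bbg(z_2))\,\bbS_3\,\bbC_N^{(3)}(z_1,z_2)
  =\bbF_N^{(3)}(z_1,z_2)+o(1),
\end{align*}
whose left-hand side factors as $\diag(\mfc\circ\bbg(z_1)^{-1})\bbPi^{(3)}(z_1,z_2)\bbC_N^{(3)}(z_1,z_2)$; inverting using Proposition \ref{Pro of invertible matrices} gives \eqref{Eq of bbC d=3}, and summing entries gives \eqref{Eq of covariance function d=3}.

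I expect the main obstacle to be the combinatorial bookkeeping at the $\kappa_2$ level: after each differentiation one generates products of resolvent entries at two different spectral parameters, and to get the clean form $\bbPi^{(3)}(z_1,z_2)$ one must carefully match the surviving diagonal-only configurations with the factors $\mfc_s^{-1}g_s(z_1)g_l(z_2)$ from entries of $\bbS_3$. A secondary subtlety is that Theorem \ref{Thm of entrywise law d=3} introduces vector-dependent corrections to $Q_{ii}^{ss}$ through the terms $(a^{(s)}_i)^2 W_{sk}^{(3)}(z)$; these feed into the $\kappa_4$ computation and are precisely what makes $\mcW_{st,N}^{(3)}$ depend on $\mcB_{(4)}^{(s,l)}$ rather than vanish entirely. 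The boundedness of $\Var(\tr\bbQ(z))$ follows from \eqref{Eq of bbC d=3}, Proposition \ref{Pro of inverse norm}, and the uniform lower bound $|g_i(z)|\gtrsim\eta_0$ on $\mcS_{\eta_0}$.
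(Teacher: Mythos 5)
Your plan reproduces the paper's proof in its essential structure (cumulant expansion to order three with a fourth-order remainder, separation of diagonal from off-diagonal configurations via $\mathscr D/\mathscr O$ and the minor-terms lemmas, concentration replacing $\rho_i(z)$ by $g_i(z)$, and inversion of the linear system governed by $\bbPi^{(3)}$), so you have the right skeleton. However, I see one step that would fail as stated and two smaller slips.

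The critical issue is your treatment of the $\kappa_3$ contribution. You claim that Lemma \ref{Cor of minor terms} with $l_1=2$, $l_2=0$ ``shows that every $\kappa_3$ contribution involving a second derivative of $Q^{ls}(z_1)$ paired with $\tr(\bbQ^{tt}(z_2))^c$ contains at least one off-diagonal block and is therefore $\mrO(N^{-1})$.'' That is not true. The Leibniz expansion of $\partial^{(2)}_{ijk}\{F^s_{ijk}(z_1)\tr(\bbQ^{tt}(z_2))^c\}$ produces a $\mathscr D$-type summand $\mathscr D(\partial^{(2)}F^s_{ijk})\tr(\bbQ^{tt}(z_2))^c$ in which every resolvent entry occurring in $\partial^{(2)}F^s_{ijk}$ lies on a diagonal block; there is nothing for Lemma \ref{Cor of minor terms} to act on, and after summing the resulting $N^{-3/2}\sum_{i,j,k}a_ib_jc_kQ_{ii}^{11}Q_{jj}^{22}Q_{kk}^{33}$ you get a factor that is $\mrO(1)$, not $\mrO(N^{-1})$. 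The paper controls this term not by an off-diagonal argument but by the almost-sure concentration of the three normalized diagonal sums $H_N^{(i)}(z)$, which yields a bound of the form $\mrO(N^{-\omega})\,\mcC_{tt,N}^{(3)}$; this is then absorbed into the system of equations by an a priori bootstrap (assume WLOG $\mcC_{tt,N}^{(3)}>1$, derive the matrix equation, invert, conclude boundedness). Without that step you cannot close the argument, so you need to replace your off-diagonal reasoning with the concentration one.

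Two smaller points. In your displayed matrix equation the off-diagonal coupling should enter with a minus sign: the correct form is $\diag(\mfc\circ\bbg(z_1)^{-1})\bbC_N^{(3)}-\diag(\bbg(z_2))\bbS_3\bbC_N^{(3)}=\bbF_N^{(3)}+o(1)$, which is what factors as $\diag(\mfc\circ\bbg(z_1)^{-1})\bbPi^{(3)}(z_1,z_2)\bbC_N^{(3)}$ (recall $\bbPi^{(3)}(z_1,z_2)=\bbI_3-\diag(\mfc^{-1}\circ\bbg(z_1)\circ\bbg(z_2))\bbS_3$). Also, the quartic quantities $\mcB_{(4)}^{(s,l)}=\Vert\bba^{(\cdot)}\Vert_4^4$ in $\mcW^{(3)}_{st,N}$ arise directly from the fourth power $(\mcA^{(k_1,l)}_{ijk})^4$ produced by the $\kappa_4$ cumulant (third-derivative) term, not from the $(a_{i}^{(s)})^2 W_{sk}^{(3)}$ corrections in the entrywise law; those corrections are $\mrO(N^{-1})$ after summation and do not contribute at leading order here.
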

\begin{proof}
	Without loss of generality, assume \(\mathcal{C}_{ii,N}^{(3)}(z,z)>1\) for \(i=1,2,3\), otherwise \(\mathcal{C}_{ii,N}^{(3)}(z,z)\) is already bounded. Here, we present the detailed proof only for \(\mcC_{11,N}^{(3)}(z_1,z_2)\); the arguments for other cases are identical. Note that
	$$\mathcal{C}_{11,N}^{(3)}(z,z)=\mathbb{E}\left[\tr(\boldsymbol{Q}^{11}(z))(\tr(\boldsymbol{Q}^{11}(\bar{z}))-\mathbb{E}[\tr(\boldsymbol{Q}^{11}(\bar{z}))])\right],$$
	and \(\boldsymbol{Q}(z)(\bbM-z\boldsymbol{I}_N)=\boldsymbol{I}_N\), we have
	\begin{align}
		z\mathcal{C}_{11,N}^{(3)}(z,z)=\frac{1}{\sqrt{N}}\sum_{i,j,k=1}^{m,n,p}\mathbb{E}\left[X_{ijk}F_{ijk}^1(z)\tr(\boldsymbol{Q}^{11}(\bar{z}))^c\right],\notag
	\end{align}
	where the superscript ``\(c\)'' represents the centering operator and \(F_{ijk}^1(z):=c_kQ_{ij}^{12}(z)+b_jQ_{ik}^{13}(z)\), then by cumulant expansion (\ref{Eq of cumulant expansion}), we have
	\begin{align}
		z\mathcal{C}_{11,N}^{(3)}(z,z)=\frac{1}{\sqrt{N}}\sum_{i,j,k=1}^{m,n,p}\left(\sum_{l=1}^3\frac{\kappa_{l+1}}{l!}\mathbb{E}\left[\partial_{ijk}^{(l)}\big\{F_{ijk}^1(z)\tr(\boldsymbol{Q}^{11}(\bar{z}))^c\big\}\right]+\epsilon_{ijk}^{(4)}\right),\label{Eq of covariance 0}
	\end{align}
    where the remainder term \(\epsilon_{ijk}^{(4)}\) satisfies that
    $$|\epsilon_{ijk}^{(4)}|\leq C_{\kappa_5}\sup_{z\in\mcS_{\eta_0}}\big|\partial_{ijk}^{(4)}\big\{F_{ijk}^1(z)\tr(\boldsymbol{Q}^{11}(\bar{z}))^c\big\}\big|.$$
    \noindent
    {\bf First derivatives:} When \(l=1\), by direct computations, we obtain
		\begin{align}
			&N^{-1/2}\sum_{i,j,k=1}^{m,n,p}\mathbb{E}\left[\partial_{ijk}^{(1)}\big\{F_{ijk}^1(z)\tr(\boldsymbol{Q}^{11}(\bar{z}))^c\big\}\right]=\notag\\
			&-N^{-1}{\rm Cov}\left(\tr(\boldsymbol{Q}^{11}(z))\tr(\boldsymbol{Q}^{22}(z)+\boldsymbol{Q}^{33}(z)),\tr(\boldsymbol{Q}^{11}(z))\right)\notag\\
			&-N^{-1}{\rm Cov}\left(\tr(\boldsymbol{Q}^{12}(z)\boldsymbol{Q}^{21}(z)+\boldsymbol{Q}^{13}(z)\boldsymbol{Q}^{31}(z)),\tr(\boldsymbol{Q}^{11}(z))\right)\notag\\
			&-N^{-1}{\rm Cov}\left(2\boldsymbol{b}'\boldsymbol{Q}^{23}(z_1)\boldsymbol{c}\tr(\boldsymbol{Q}^{11}(z))+\boldsymbol{a}'\boldsymbol{Q}^{13}(z)\boldsymbol{c}\tr(\boldsymbol{Q}^{22}(z))+\boldsymbol{a}'\boldsymbol{Q}^{12}(z)\boldsymbol{b}\tr(\boldsymbol{Q}^{33}(z)),\tr(\boldsymbol{Q}^{11}(z))\right)\notag\\
			&-N^{-1}{\rm Cov}\left(2\boldsymbol{b}'\boldsymbol{Q}^{21}(z)\boldsymbol{Q}^{13}(z)\boldsymbol{c}+\boldsymbol{a}'\boldsymbol{Q}^{12}(z)\boldsymbol{Q}^{23}(z)\boldsymbol{c}+\boldsymbol{a}'\boldsymbol{Q}^{13}(z)\boldsymbol{Q}^{32}(z)\boldsymbol{b},\tr(\boldsymbol{Q}^{11}(z))\right)\notag\\
			&-2N^{-1}\mathbb{E}\left[\tr(\boldsymbol{Q}^{11}(\bar{z})\boldsymbol{Q}^{12}(\bar{z})\boldsymbol{Q}^{21}(z)+\boldsymbol{Q}^{11}(\bar{z})\boldsymbol{Q}^{13}(\bar{z})\boldsymbol{Q}^{31}(z))\right]+\mrO(\eta^{-3}N^{-1}).\notag
		\end{align}
        Here, we claim that, except 
        $$\left\{\begin{array}{l}
             N^{-1}{\rm Cov}\left(\tr(\boldsymbol{Q}^{11}(z))\tr(\boldsymbol{Q}^{22}(z)+\boldsymbol{Q}^{33}(z)),\tr(\boldsymbol{Q}^{11}(z))\right),\\
             N^{-1}\mathbb{E}\left[\tr(\boldsymbol{Q}^{11}(\bar{z})\boldsymbol{Q}^{12}(\bar{z})\boldsymbol{Q}^{21}(z)+\boldsymbol{Q}^{11}(\bar{z})\boldsymbol{Q}^{13}(\bar{z})\boldsymbol{Q}^{31}(z))\right],
        \end{array}\right.$$
        all other terms in the above equation are bounded by $\mrO\big(\eta^{-6}N^{-\omega}\mcC_{11,N}^{(3)}(z,z)\big)$. By Lemma \ref{Thm of Entrywise almost sure convergence}, we have
		$$N^{-1}\big|\tr(\boldsymbol{Q}^{ii}(z))^c\big|\prec \eta^{-5}N^{-\omega}\ \ {\rm and\ \ }N^{-1}\big|\tr(\boldsymbol{Q}^{ij}(z)\boldsymbol{Q}^{ji}(z))^c\big|\prec \eta^{-6}N^{-\omega},$$
		with probability of \(1-C\exp(-CN^{1-2\omega})\), then we can imply
		\begin{align}
			&N^{-1}\big|{\rm Cov}\left(\tr(\boldsymbol{Q}^{12}(z)\boldsymbol{Q}^{21}(z)),\tr(\boldsymbol{Q}^{11}(z))\right)\big|=N^{-1}\big|\mathbb{E}\left[\tr(\boldsymbol{Q}^{12}(z)\boldsymbol{Q}^{21}(z))^c\tr(\boldsymbol{Q}^{11}(\bar{z}))^c\right]\big|\notag\\
			&=N^{-1}\Big|\mathbb{E}\left[\tr(\boldsymbol{Q}^{12}(z)\boldsymbol{Q}^{21}(z))^c\tr(\boldsymbol{Q}^{11}(\bar{z}))^c\big(1_{N^{-1}|\tr(\bbQ^{12}(z)\bbQ^{21}(z))^c|<\eta^{-6}N^{-\omega}}+1_{N^{-1}|\tr(\bbQ^{12}(z)\bbQ^{21}(z))^c|\geq\eta^{-6}N^{-\omega}}\big)\right]\Big|\notag\\
			&\leq\eta^{-6}N^{-\omega}\mcC_{11,N}^{(3)}(z,z)^{1/2}+\eta^{-3}N\mathbb{P}(N^{-1}|\tr(\bbQ^{12}(z)\bbQ^{21}(z))^c|\geq\eta^{-6}N^{-\omega})\leq \mrO\big(\eta^{-6}N^{-\omega}\mcC_{11,N}^{(3)}(z,z)\big),\notag
		\end{align}
		where we use the fact that \(\mcC_{11,N}^{(3)}(z,z)>1\). Similarly, we can also show that
		\begin{align}
			&N^{-1}{\rm Cov}(\boldsymbol{b}'\boldsymbol{Q}^{21}(z)\boldsymbol{Q}^{13}(z)\boldsymbol{c},\tr(\boldsymbol{Q}^{11}(z))),N^{-1}{\rm Cov}(\boldsymbol{b}'\boldsymbol{Q}^{23}(z)\boldsymbol{c}\tr(\boldsymbol{Q}^{11}(z)),\tr(\boldsymbol{Q}^{11}(z)))\leq\mrO\big(\eta^{-6}N^{-\omega}\mcC_{11,N}^{(3)}(z,z)\big).\notag
		\end{align}
		Moreover, note that
		\begin{align}
			&N^{-1}{\rm Cov}\left(\tr(\boldsymbol{Q}^{11}(z))\tr(\boldsymbol{Q}^{22}(z)),\tr(\boldsymbol{Q}^{11}(z))\right)\notag\\
			&=N^{-1}\mathbb{E}\Big[\big(\tr(\boldsymbol{Q}^{11}(z))\tr(\boldsymbol{Q}^{22}(z))-\mathbb{E}[\tr(\boldsymbol{Q}^{11}(z))]\tr(\boldsymbol{Q}^{22}(z))\notag\\
			&+\mathbb{E}[\tr(\boldsymbol{Q}^{11}(z))]\tr(\boldsymbol{Q}^{22}(z))-\mathbb{E}[\tr(\boldsymbol{Q}^{11}(z))]\mathbb{E}[\tr(\boldsymbol{Q}^{22}(z))]\notag\\
			&+\mathbb{E}[\tr(\boldsymbol{Q}^{11}(z))]\mathbb{E}[\tr(\boldsymbol{Q}^{22}(z))]-\mathbb{E}[\tr(\boldsymbol{Q}^{11}(z))\tr(\boldsymbol{Q}^{22}(z))]\big)\tr(\boldsymbol{Q}^{11}(\bar{z}))^c\Big]\notag\\
			&=\mfm_1(z){\rm Cov}\left(\tr(\boldsymbol{Q}^{22}(z)),\tr(\boldsymbol{Q}^{11}(z))\right)+N^{-1}\mathbb{E}\left[\tr(\boldsymbol{Q}^{22}(z))|\tr(\boldsymbol{Q}^{11}(z))^c|^2\right],\label{Eq of covariance trick}
		\end{align}
		where \(\mfm_i(z)=N^{-1}\mathbb{E}[\tr(\boldsymbol{Q}^{ii}(z))],\rho_i(z)=N^{-1}\tr(\bbQ^{ii}(z))\) are defined in (\ref{Eq of mi}), and \(|\rho_2(z)^c|\leq \eta^{-5}N^{-\omega}\) with probability of \(1-C\exp(-CN^{1-2\omega})\) by Lemma \ref{Thm of Entrywise almost sure convergence}, then we have
		\begin{align}
			&\big|N^{-1}\mathbb{E}\left[\tr(\boldsymbol{Q}^{22}(z))|\tr(\boldsymbol{Q}^{11}(z))^c|^2\right]-\mfm_2(z)\mcC_{11,N}^{(3)}(z,z)\big|=\big|\mathbb{E}\left[\rho_2(z)^c|\tr(\boldsymbol{Q}^{11}(z))^c|^2\right]\big|\notag\\
			&\leq \eta^{-5}N^{-\omega}\mcC_{11,N}^{(3)}(z,z)+\mathbb{E}\left[|\rho_2(z)^c|1_{|\rho_2(z)^c|\geq \eta_0^{-4}N^{-\omega}}|\tr(\boldsymbol{Q}^{11}(z))^c|^2\right]\notag\\
			&\leq \eta^{-5}N^{-\omega}\mcC_{11,N}^{(3)}(z,z)+\eta^{-3}N^2\exp(-CN^{1-2\omega})\leq\mrO(\eta^{-5}N^{-\omega}\mcC_{11,N}^{(3)}(z,z)).\notag
		\end{align}
		In summary, we obtain that
		\begin{align}
			&N^{-1/2}\sum_{i,j,k=1}^{m,n,p}\mathbb{E}\left[\partial_{ijk}^{(1)}\big\{F_{ijk}^1(z)\tr(\boldsymbol{Q}^{11}(\bar{z}))^c\big\}\right]=-2\mathcal{V}_{11,N}^{(3)}(z,\bar{z})+\mrO(\eta^{-3}N^{-1})\notag\\
			&-\mfm_1(z)[\mathcal{C}_{2,1}(z,z)+\mathcal{C}_{3,1}(z,z)]-(\mfm_2(z)+\mfm_3(z)+\mrO(\eta^{-5}N^{-\omega}))\mcC_{11,N}^{(3)}(z,z),\label{Eq of covariance 1}
		\end{align}
		where
		\begin{align}
			\mathcal{V}_{ij,N}^{(3)}(z_1,z_2):=N^{-1}\sum_{l\neq i}^3\mbE[\tr(\boldsymbol{Q}^{ij}(z_2)\boldsymbol{Q}^{jl}(z_2)\boldsymbol{Q}^{li}(z_1))]\label{Eq of capital V}
		\end{align}
		for \(i,j\in\{1,2,3\}\). Readers can refer to \S\ref{Sec of majors} for proofs of \(\lim_{N\to\infty}\mathcal{V}_{ij,N}^{(3)}(z_1,z_2)=\mathcal{V}_{ij}^{(3)}(z_1,z_2)\) in (\ref{Eq of mcV limiting d=3}).

    \vspace{5mm}
    \noindent
    {\bf Second derivatives:} When \(l=2\), since
		\begin{align}
			\partial_{ijk}^{(2)}\big\{F_{ijk}^1(z)\tr(\boldsymbol{Q}^{11}(\bar{z}))^c\big\}=\sum_{\alpha=0}^2\binom{2}{\alpha}\partial_{ijk}^{(2-\alpha)}\tr(\boldsymbol{Q}^{11}(\bar{z}))^c\partial_{ijk}^{(\alpha)}F_{ijk}^1(z),\notag
		\end{align}
		and \(\partial_{ijk}^{(l)}\tr(\boldsymbol{Q}^{11}(\bar{z}))\) have the following expression by Lemma \ref{Lem of minor terms}:
		$$\partial_{ijk}^{(l)}\tr(\bbQ^{11})=(-1)^ll!N^{-l/2}\mcA_{ijk}^{(t_{2l},t_{2l+1})}Q_{\tilde{t}_{2l+1}\cdot}^{t_{2l+1}1}Q_{\cdot\tilde{t}_2}^{1t_2}\prod_{\alpha=2}^l\mcA_{ijk}^{(t_{2\alpha-2},t_{2\alpha-1})}Q_{\tilde{t}_{2\alpha-1}\tilde{t}_{2\alpha}}^{t_{2\alpha-1}t_{2\alpha}},\ {\rm for\ }l\geq2,$$
		and
		\begin{align}
			\partial_{ijk}^{(1)}\tr(\bbQ^{11})&=-2N^{-1/2}(a_iQ_{j\cdot}^{21}Q_{\cdot k}^{13}+b_jQ_{i\cdot}^{11}Q_{\cdot k}^{13}+c_kQ_{i\cdot}^{11}Q_{\cdot j}^{12}):=-2N^{-1/2}(a_iP_{jk}^{23}+b_jP_{ik}^{13}+c_kP_{ij}^{12}),\notag
		\end{align}
		where \(Q_{\cdot i}\) and \(Q_{i\cdot}\) represent the \(i\)-th column and row of \(\bbQ\) and \(P_{\tilde{t}_{2l+1}\tilde{t}_2}^{t_{2l+1}t_{2}}:=Q_{\tilde{t}_{2l+1}\cdot}^{t_{2l+1}1}Q_{\cdot\tilde{t}_2}^{1t_2}\), it implies that \(\partial_{ijk}^{(1)}\tr(\bbQ^{11})\) only contains off-diagonal parts. By the definitions of operators \(\mathscr{D},\mathscr{O}\) in (\ref{Eq of operator D}), (\ref{Eq of operator O}) and Lemma \ref{Lem of minor terms}, we know that \(|F_{ijk}^1(z)|\leq\mrO(\eta^{-1})\) and
		\begin{align}
			&N^{-1/2}\sum_{i,j,k=1}^{m,n,p}\mathbb{E}\left[\big|F_{ijk}^1(z)\mathscr{O}\big(\partial_{ijk}^{(2)}\tr(\bbQ^{11}(\bar{z}))^c\big)\big|\right]\notag\\
            &\leq\eta^{-1}N^{-1/2}\sum_{i,j,k=1}^{m,n,p}\mathbb{E}\left[\big|\mathscr{O}\big(\partial_{ijk}^{(2)}\tr(\bbQ^{11}(\bar{z}))^c\big)\big|\right]\leq\mrO(\eta^{-4}N^{-1/2}).\label{Eq of covariance l=2 alpha=0}
		\end{align}
		Next, for
		\begin{align}
			&N^{-1/2}\sum_{i,j,k=1}^{m,n,p}\mathbb{E}\left[\big|F_{ijk}^1(z)\mathscr{D}\big(\partial_{ijk}^{(2)}\tr(\bbQ^{11}(\bar{z}))^c\big)\big|\right]\notag\\
			&=6N^{-3/2}\sum_{i,j,k=1}^{m,n,p}\mathbb{E}\left[\big|F_{ijk}^1(z)(a_i^2P_{jj}^{22}(z)Q_{kk}^{33}(z)+b_j^2P_{ii}^{11}(z)Q_{kk}^{33}(z)+c_k^2P_{ii}^{11}(z)Q_{jj}^{22}(z))\big|\right],\notag
		\end{align}
		we can show that the above equation is bounded by \(\mrO(\eta^{-4}N^{-1/2})\). For example, we have
		\begin{align}
			&N^{-3/2}\sum_{i,j,k=1}^{m,n,p}\mathbb{E}\left[\big|F_{ijk}^1(z)b_j^2P_{ii}^{11}(z)Q_{kk}^{33}(z)\big|\right]\notag\\
			&\leq N^{-3/2}\left(\boldsymbol{1}_m'{\rm diag}(|\bbP^{11}|)|\bbQ^{13}|{\rm diag}(|\bbQ^{33}|)\boldsymbol{1}_k+\boldsymbol{1}_m'{\rm diag}(|\bbP^{11}|)|\bbQ^{12}||\bbb|\times\boldsymbol{1}_k{\rm diag}(|\bbQ^{13}|)|\bbc|\right)\notag\\
			&\leq\mrO(\eta^{-4}N^{-1/2}),\label{Eq of covariance l=2 alpha=0 D}
		\end{align}
        the calculations for other terms are the same, the details are omitted for brevity. For the \(\partial_{ijk}^{(1)}F_{ijk}^1(z)\partial_{ijk}^{(1)}\tr(\bbQ(\bar{z}))^c\), since we know that \(|\partial_{ijk}^{(1)}\tr(\bbQ(\bar{z}))|=\mrO(\eta^{-2}N^{-1/2})\), then by Lemma \ref{Lem of minor terms}, it implies that
		\begin{align}
			&N^{-1/2}\sum_{i,j,k=1}^{m,n,p}\mathbb{E}\left[\big|\partial_{ijk}^{(1)}\tr(\bbQ(\bar{z}))^c\mathscr{O}\big(\partial_{ijk}^{(1)}F_{ijk}^1(z)\big)\big|\right]\notag\\
            &\leq \eta^{-2}N^{-1}\sum_{i,j,k=1}^{m,n,p}\mathbb{E}\left[\big|\mathscr{O}\big(\partial_{ijk}^{(1)}F_{ijk}^1(z)\big)\big|\right]\leq\mrO(\eta^{-4}N^{-1/2}).\label{Eq of covariance l=2 alpha=1}
		\end{align}
		By the same arguments as those in (\ref{Eq of covariance l=2 alpha=0 D}), we can derive that
		$$N^{-1/2}\sum_{i,j,k=1}^{m,n,p}\mathbb{E}\left[\big|\partial_{ijk}^{(1)}\tr(\bbQ(\bar{z}))^c\mathscr{D}\big(\partial_{ijk}^{(1)}F_{ijk}^1(z)\big)\big|\right]\leq\mrO(\eta^{-4}N^{-1/2}).$$
		Finally, let us focus on
		\begin{align}
			&N^{-1/2}\sum_{i,j,k=1}^{m,n,p}\mathbb{E}\left[\tr(\boldsymbol{Q}^{11}(\bar{z}))^c\mathscr{D}\big(\partial_{ijk}^{(2)}F_{ijk}^1(z)\big)\right]\label{Eq of covariance l=2 alpha=2}\\
			&=4N^{-3/2}\mathbb{E}\left[\boldsymbol{1}_m'{\rm diag}(\boldsymbol{Q}^{11}(z))\boldsymbol{a}\boldsymbol{1}_n'{\rm diag}(\boldsymbol{Q}^{22}(z))\boldsymbol{b}\boldsymbol{1}_p'{\rm diag}(\boldsymbol{Q}^{33}(z))\boldsymbol{c}\times\tr(\boldsymbol{Q}^{11}(\bar{z}))^c\right]\notag\\
			&=4N^{-3/2}\mathbb{E}\left[\{\boldsymbol{1}_m'{\rm diag}(\boldsymbol{Q}^{11}(z))\boldsymbol{a}\boldsymbol{1}_n'{\rm diag}(\boldsymbol{Q}^{22}(z))\boldsymbol{b}\boldsymbol{1}_p'{\rm diag}(\boldsymbol{Q}^{33}(z))\boldsymbol{c}\}^c\times\tr(\boldsymbol{Q}^{11}(\bar{z}))\right].\notag
		\end{align}
		For simplicity, let \(H_N^{(1)}(z):=N^{-1/2}(\boldsymbol{1}_m'{\rm diag}(\boldsymbol{Q}^{11}(z))\boldsymbol{a})^c\) and define \(H_N^{(2)}(z),H_N^{(3)}(z)\) analogously, then 
		\begin{align}
			(H_N^{(1)}H_N^{(2)}H_N^{(3)})^c&=(H_N^{(1)})^cH_N^{(2)}H_N^{(3)}+\mbE[H_N^{(1)}](H_N^{(2)}H_N^{(3)})^c+\mbE[H_N^{(1)}(H_N^{(2)}H_N^{(3)})^c],\notag\\
			(H_N^{(2)}H_N^{(3)})^c&=(H_N^{(2)})^cH_N^{(3)}+\mbE[H_N^{(2)}](H_N^{(3)})^c+\mbE[H_N^{(3)}(H_N^{(3)})^c].\notag
		\end{align}
		By Lemma \ref{Thm of Entrywise almost sure convergence}, \(H_N^{(1)}(z),H_N^{(2)}(z),H_N^{(3)}(z)\prec\mrO(\eta^{-5}N^{-\omega})\), so \(\big|(H_N^{(1)}H_N^{(2)}H_N^{(3)})^c\big|\prec\mrO(\eta^{-7}N^{-\omega})\) and
		$$N^{-1/2}\Big|\sum_{i,j,k=1}^{m,n,p}\mathbb{E}\left[\tr(\boldsymbol{Q}^{11}(\bar{z}))^c\mathscr{D}\big(\partial_{ijk}^{(2)}F_{ijk}^1(z)\big)\right]\Big|\leq\mrO(\eta^{-7}N^{-\omega}\mcC_{1,1}(z,z)).$$
		Similarly, by the previous arguments, we can also show that
		$$N^{-1/2}\Big|\sum_{i,j,k=1}^{m,n,p}\mathbb{E}\left[\tr(\boldsymbol{Q}^{11}(\bar{z}))^c\mathscr{O}(\partial_{ijk}^{(2)}F_{ijk}^1(z))\right]\Big|\leq\mrO(\eta^{-7}N^{-\omega}\mcC_{1,1}(z,z)),$$
		here we omit the details for clarity. Finally, we obtain
		\begin{align}
			N^{-1/2}\sum_{i,j,k=1}^{m,n,p}\mathbb{E}\left[\partial_{ijk}^{(2)}\big\{F_{ijk}^1(z)\tr(\boldsymbol{Q}^{11}(\bar{z}))\big\}\right]=\mrO(\eta^{-7}N^{-\omega}\mcC_{11,N}^{(3)}(z,z)+\eta^{-4}N^{-1/2}).\label{Eq of covariance 2}
		\end{align}

    \vspace{5mm}
    \noindent
    {\bf Third derivatives:} When \(l=3\), consider
		$$N^{-1/2}\sum_{i,j,k=1}^{m,n,p}\mathbb{E}\left[\partial_{ijk}^{(3-\alpha)}\tr(\boldsymbol{Q}^{11}(\bar{z}))^c\partial_{ijk}^{(\alpha)}F_{ijk}^1(z)\right],\ {\rm where\ }\alpha=0,1,\cdots,3.$$
		We claim that the major terms appear only when \(\alpha=1\). First, when \(\alpha=3\), similar to (\ref{Eq of covariance l=2 alpha=2}), we can show that
		\begin{align}
			&N^{-1/2}\sum_{i,j,k=1}^{m,n,p}\mathbb{E}\left[\tr(\boldsymbol{Q}^{11}(\bar{z}))^c\mathscr{D}(\partial_{ijk}^{(3)}F_{ijk}^1(z))\right]\notag\\
			&=-N^{-2}\mathbb{E}\left[\tr(\boldsymbol{Q}^{11}(z)^{\circ2})\left(\Vert\boldsymbol{c}\Vert_4^4\tr(\boldsymbol{Q}^{22}(z)^{\circ2})+\Vert\boldsymbol{b}\Vert_4^4\tr(\boldsymbol{Q}^{33}(z)^{\circ2})\right)\tr(\boldsymbol{Q}^{11}(\bar{z}))^c\right]+\mrO(\eta^{-5}N^{-1/2})\notag\\
			&=\mrO(\eta^{-8}N^{-\omega}\mcC_{11,N}^{(3)}(z,z))+\mrO(\eta^{-5}N^{-1/2}),\notag
		\end{align}
		where \(N^{-1}\tr(\boldsymbol{Q}^{ll}(z)^{\circ2})^c\prec \eta^{-6}N^{-\omega}\) by Lemma \ref{Thm of Entrywise almost sure convergence}, and as does
		$$N^{-1/2}\Big|\sum_{i,j,k=1}^{m,n,p}\mathbb{E}\left[\tr(\boldsymbol{Q}^{11}(\bar{z}))^c\mathscr{O}(\partial_{ijk}^{(3)}F_{ijk}^1(z))\right]\Big|\leq\mrO(\eta^{-8}N^{-\omega}\mcC_{11,N}^{(3)}(z,z)).$$
		Besides, when \(\alpha=0\) and \(\alpha=2\), similar to (\ref{Eq of covariance l=2 alpha=0}) and (\ref{Eq of covariance l=2 alpha=1}), we can show that
		$$N^{-1/2}\sum_{i,j,k=1}^{m,n,p}\mathbb{E}\left[\big|F_{ijk}^1(z)\mathscr{O}\big(\partial_{ijk}^{(3)}\tr(\bbQ^{11}(\bar{z}))^c\big)\big|\right],N^{-1/2}\sum_{i,j,k=1}^{m,n,p}\mathbb{E}\left[\big|\partial_{ijk}^{(1)}\tr(\bbQ^{11}(\bar{z}))^c\mathscr{O}\big(\partial_{ijk}^{(2)}F_{ijk}^1(z)\big)\big|\right]\leq\mrO(\eta^{-5}N^{-1}).$$
		Moreover, since \(|F_{ijk}^1(z)|\leq\mrO(\eta^{-1}),|\partial_{ijk}^{(1)}\tr(\bbQ(\bar{z}))|\leq\mrO(\eta^{-2}N^{-1/2})\), then we have
		\begin{align}
			&N^{-1/2}\sum_{i,j,k=1}^{m,n,p}\mathbb{E}\left[\big|F_{ijk}^1(z)\mathscr{D}\big(\partial_{ijk}^{(3)}\tr(\bbQ^{11}(\bar{z}))^c\big)\big|\right]\notag\\
			&\leq \eta^{-1}N^{-2}\sum_{i,j,k=1}^{m,n,p}\mathbb{E}\left[|a_ib_jc_k(Q_{ii}^{11}(\bar{z}))^2Q_{jj}^{22}(\bar{z})Q_{kk}^{33}(\bar{z})|\right]=\mrO(\eta^{-5}N^{-1/2}),\label{Eq of covariance l=3 alpha=0}
		\end{align}
		and
		\begin{align}
			&N^{-1/2}\sum_{i,j,k=1}^{m,n,p}\mathbb{E}\left[\big|\partial_{ijk}^{(1)}\tr(\bbQ^{11}(\bar{z}))^c\mathscr{D}\big(\partial_{ijk}^{(2)}F_{ijk}^1(z)\big)\big|\right]\notag\\
			&\leq2\eta^{-2}N^{-2}\sum_{i,j,k=1}^{m,n,p}\mathbb{E}\left[\big|a_ib_jc_kQ_{ii}^{11}(z)Q_{jj}^{22}(z)Q_{kk}^{33}(z)\big|\right]=\mrO(\eta^{-5}N^{-1/2}).\label{Eq of covariance l=3 alpha=1}
		\end{align}
		Therefore, we only need to consider \(\alpha=1\), i.e. \(\partial_{ijk}^{(2)}\tr(\boldsymbol{Q}^{11}(\bar{z}))\partial_{ijk}^{(1)}F_{ijk}^1(z)\), by Lemma \ref{Cor of minor terms}, we have
		\begin{align}
			&N^{-1/2}\sum_{i,j,k=1}^{m,n,p}\mathbb{E}\left[\partial_{ijk}^{(2)}\tr(\boldsymbol{Q}^{11}(\bar{z}))\partial_{ijk}^{(1)}F_{ijk}^1(z)\right]=\notag\\
			&N^{-1/2}\sum_{i,j,k=1}^{m,n,p}\mathbb{E}\left[\mathscr{D}\big(\partial_{ijk}^{(2)}\tr(\boldsymbol{Q}^{11}(\bar{z}))\big)\mathscr{D}\big(\partial_{ijk}^{(1)}F_{ijk}^1(z)\big)\right]+\mrO(\eta^{-5}N^{-1/2})=\notag\\
			&-2N^{-2}\mathbb{E}\left[\tr(\boldsymbol{Q}^{11}(z){\rm diag}(\boldsymbol{Q}^{11}(\bar{z})^2))\left(\Vert\boldsymbol{c}\Vert_4^4\tr({\rm diag}(|\boldsymbol{Q}^{22}(z)|^{\circ2}))+\Vert\boldsymbol{b}\Vert_4^4\tr({\rm diag}(|\boldsymbol{Q}^{33}(z)|^{\circ2}))\right)\right]\notag\\
			&-2N^{-2}\mathbb{E}\left[\Vert\boldsymbol{c}\Vert_4^4\tr(\boldsymbol{Q}^{22}(z){\rm diag}(\boldsymbol{Q}^{21}(\bar{z})\boldsymbol{Q}^{12}(\bar{z})))\tr(\boldsymbol{Q}^{11}(z){\rm diag}(\boldsymbol{Q}^{11}(\bar{z})))\right]\notag\\
			&-2N^{-2}\mathbb{E}\left[\Vert\boldsymbol{b}\Vert_4^4\tr(\boldsymbol{Q}^{33}(z){\rm diag}(\boldsymbol{Q}^{31}(\bar{z})\boldsymbol{Q}^{13}(\bar{z})))\tr(\boldsymbol{Q}^{11}(z){\rm diag}(\boldsymbol{Q}^{11}(\bar{z})))\right]+\mrO(\eta^{-5}N^{-1/2}).\notag
		\end{align}
		Therefore, we finally obtain
		\begin{align}
			&\frac{\kappa_4}{6\sqrt{N}}\sum_{i,j,k=1}^{m,n,p}\mathbb{E}\left[\partial_{ijk}^{(3)}\big\{F_{ijk}^1(z)\tr(\boldsymbol{Q}^{11}(\bar{z}))\big\}\right]\notag\\
			&=-\kappa_4\widetilde{\mcW}_{11,N}^{(3)}(z,\bar{z})+\mrO(\eta^{-5}N^{-1/2}+\eta^{-8}N^{-\omega}\mcC_{11,N}^{(3)}(z,z)),\label{Eq of covariance 3}
		\end{align}
		where \(\{i,j,k\}=\{1,2,3\}\) and \(\bba^{(l)}\) is defined in (\ref{Eq of vector notation d=3}) and
		\begin{align}
			\widetilde{\mcW}_{ij,N}^{(3)}(z_1,z_2):=&N^{-2}\Vert\bba^{(k)}\Vert_4^4\mbE[\tr(\boldsymbol{Q}^{ii}(z_1)\circ\boldsymbol{Q}^{ii}(z_2))\tr(\boldsymbol{Q}^{jj}(z_1)\circ\boldsymbol{Q}^{jj}(z_2)\boldsymbol{Q}^{jj}(z_2))]\label{Eq of capital W}\\
			+&N^{-2}\Vert\bba^{(k)}\Vert_4^4\mbE[\tr(\boldsymbol{Q}^{jj}(z_1)\circ\boldsymbol{Q}^{jj}(z_2))\tr(\boldsymbol{Q}^{ii}(z_1)\circ(\boldsymbol{Q}^{ij}(z_2)\boldsymbol{Q}^{ji}(z_2)))]\notag\\
			+&N^{-2}\Vert\bba^{(j)}\Vert_4^4\mbE[\tr(\boldsymbol{Q}^{ii}(z_1)\circ\boldsymbol{Q}^{ii}(z_2))\tr(\boldsymbol{Q}^{kk}(z_1)\circ\boldsymbol{Q}^{kj}(z_2)\boldsymbol{Q}^{jk}(z_2))]\notag\\
			+&N^{-2}\Vert\bba^{(j)}\Vert_4^4\mbE[\tr(\boldsymbol{Q}^{kk}(z_1)\circ\boldsymbol{Q}^{kk}(z_2))\tr(\boldsymbol{Q}^{ii}(z_1)\circ(\boldsymbol{Q}^{ij}(z_2)\boldsymbol{Q}^{ji}(z_2)))].\notag
		\end{align}
    Readers can further refer to \S\ref{Sec of majors} for proofs of \(\lim_{N\to\infty}\widetilde{\mcW}_{ij,N}^{(3)}(z_1,z_2)-\mcW_{ij,N}^{(3)}(z_1,z_2)=0\) in (\ref{Eq of mcW limiting d=3}).
    
    \vspace{5mm}
    \noindent
    {\bf Remainders:} We claim the following:
		\begin{align}
			N^{-1/2}\Big|\sum_{i,j,k=1}^{m,n,p}\mbE\left[\partial_{ijk}^{(\alpha)}F_{ijk}^1(z)\partial_{ijk}^{(4-\alpha)}\tr(\bbQ^{11}(\bar{z}))^c\right]\Big|\leq\mrO(\eta^{-6}N^{-1/2}),\ \alpha=0,1,\cdots,4.\label{Eq of covariance 4}
		\end{align}
		First, when \(\alpha=4\), since \(|\tr(\bbQ^{11}(\bar{z}))|\leq\mrO(\eta^{-1}N)\) and \(\sum_{i,j,k=1}^{m,n,p}|\partial_{ijk}^{(4)}F_{ijk}^1(z)|\leq\mrO(\eta^{-5}N^{-3/2})\) by (\ref{Eq of minor remainder}) later, we have 
		$$N^{-1/2}\Big|\sum_{i,j,k=1}^{m,n,p}\tr(\bbQ^{11}(\bar{z}))^c\partial_{ijk}^{(4)}F_{ijk}^1(z)\Big|\leq\mrO(\eta^{-6}N^{-1}).$$
		Next, when \(\alpha=0\ {\rm and\ }3\), by repeating the arguments used for (\ref{Eq of covariance l=2 alpha=0}), \eqref{Eq of covariance l=2 alpha=0 D}, (\ref{Eq of covariance l=2 alpha=1}), (\ref{Eq of covariance l=3 alpha=0}) and (\ref{Eq of covariance l=3 alpha=1}), we can show that 
        \begin{align*}
            &N^{-1/2}\Big|\sum_{i,j,k=1}^{m,n,p}\mbE\big[\partial_{ijk}^{(3)}F_{ijk}^1(z)\partial_{ijk}^{(1)}\tr(\bbQ(\bar{z}))^c\big]\Big|,N^{-1/2}\Big|\sum_{i,j,k=1}^{m,n,p}\mbE\big[F_{ijk}^1(z)\partial_{ijk}^{(4)}\tr(\bbQ(\bar{z}))^c\big]\Big|\leq\mrO(\eta^{-6}N^{-1/2}).
        \end{align*}
        Finally, when \(\alpha=1\ {\rm and\ }2\), by Lemma \ref{Cor of minor terms}, it is enough to show the following terms are minor:
		\begin{align}
			N^{-1/2}\sum_{i,j,k=1}^{m,n,p}\mbE\big[\mathscr{D}\big(\partial_{ijk}^{(2)}F_{ijk}^1(z)\big)\mathscr{D}\big(\partial_{ijk}^{(2)}\tr(\boldsymbol{Q}^{11}(\bar{z}))\big)\big],\quad N^{-1/2}\sum_{i,j,k=1}^{m,n,p}\mbE\big[\mathscr{D}\big(\partial_{ijk}^{(1)}F_{ijk}^1(z)\big)\big(\partial_{ijk}^{(3)}\tr(\boldsymbol{Q}^{11}(\bar{z}))\big)\big].\notag
		\end{align}
		Recall that in Lemma \ref{Lem of minor terms}, if \(n_a,n_b,n_c>0\), then
		\begin{align}
			N^{-5/2}\boldsymbol{1}_m'{\rm diag}(|\boldsymbol{Q}^{11}|^{\circ n_{11}})|\boldsymbol{a}|\boldsymbol{1}_n'{\rm diag}(|\boldsymbol{Q}^{22}|^{\circ n_{11}})|\boldsymbol{b}|\boldsymbol{1}_p'{\rm diag}(|\boldsymbol{Q}^{33}|^{\circ n_{11}})|\boldsymbol{c}|\leq\mrO(\eta^{-6}N^{-1}).\notag
		\end{align}
		Actually, by direct calculations, \(\mathscr{D}(\partial_{ijk}^{(2)}F_{ijk}^1(z)),\mathscr{D}(\partial_{ijk}^{(3)}\tr(\boldsymbol{Q}^{11}(\bar{z})))\) satisfy \(n_a,n_b,n_c>0\). Hence, \(N^{-1/2}\sum_{i,j,k=1}^{m,n,p}|\epsilon_{ijk}^{(4)}|\) is a minor term.

    \vspace{5mm}
    \noindent
	Now, combining (\ref{Eq of covariance 0}), (\ref{Eq of covariance 1}), (\ref{Eq of covariance 2}), (\ref{Eq of covariance 3}) and (\ref{Eq of covariance 4}), we obtain
	\begin{align}
		&(z+\mfm_2(z)+\mfm_3(z)+\mrO(\eta^{-8}N^{-\omega}))\mcC_{11,N}^{(3)}(z,z)=-\mfm_1(z)[\mcC_{21,N}^{(3)}(z,z)+\mcC_{31,N}^{(3)}(z,z)]\notag\\
		&-2\mcV_{11,N}^{(3)}(z,\bar{z})-\kappa_4\mcW_{11,N}^{(3)}(z,\bar{z})+\mrO(\eta^{-6}N^{-1/2}).\notag
	\end{align}
	Similarly, for any \(s,t\in\{1,2,3\}\), we can obtain
	\begin{align}
		(z+\mfm-\mfm_s)\mathcal{C}_{st,N}^{(3)}+\mrO(\eta^{-8}N^{-\omega})\mathcal{C}_{tt,N}^{(3)}=-\mfm_s\sum_{l\neq s}^3\mcC_{lt,N}^{(3)}-(2\mcV_{st,N}^{(3)}(z,\bar{z})+\kappa_4\mcW_{st,N}^{(3)}(z,\bar{z}))+\mrO(\eta^{-6}N^{-1/2}),\notag
	\end{align}
	where we omit the \((z,z)\) in \(\mathcal{C}_{ij,N}^{(3)}(z,z)\) and \((z)\) in \(\mfm_i(z)\) for convenience. Next, define two matrices \(\bbTheta_N^{(3)}(z_1,z_2)=[\Theta_{ij,N}^{(3)}(z_1,z_2)]_{3\times3}\in\mbC^{3\times3}\) and \(\bbF_N^{(3)}(z_1,z_2)=[\mcF_{ij,N}^{(3)}(z_1,z_2)]_{3\times3}\in\mbC^{3\times3}\) such that 
	\begin{align}
		\Theta_{ij,N}^{(3)}(z_1,z_2)=\left\{\begin{array}{cc}
			z_1+\mfm(z_1)-\mfm_i(z_1)&i=j\\
			\mfm_i(z_2)&i\neq j
		\end{array}\right.,\label{Eq of bbTheta d=3}
	\end{align}
	and
	\begin{align}
		\mcF_{ij,N}^{(3)}(z_1,z_2):=2\mcV_{ij,N}^{(3)}(z_1,z_2)+\kappa_4\mcW_{ij,N}^{(3)}(z_1,z_2)\label{Eq of mcF d=3}
	\end{align}
	then we obtain
	\begin{align}
		\bbTheta_N^{(3)}(z,z)\bbC_N^{(3)}(z,z)=-\bbF_N^{(3)}(z,z)+\mrO(\eta^{-8}N^{-\omega})\diag(\bbC_N^{(3)}(z,z))+\mrO(\eta_0^{-6}N^{-1/2})\boldsymbol{1}_{3\times3}.\label{Eq of covariance system equation 1}
	\end{align}
	By Theorem \ref{Thm of approximation}, \(\Vert\bbm(z)-\bbg(z)\Vert_{\infty}=\mrO(\eta_0^{-15}N^{-2\omega})\), it implies that
	$$\Vert\bbTheta_N^{(3)}(z,z)+\diag(\boldsymbol{\mfc}\circ\bbg(z)^{-1})\bbPi^{(3)}(z,z)\Vert_{\infty}\leq\mrO(\eta_0^{-15}N^{-2\omega}),$$
	where \(\bbPi(z,z)\) is defined in (\ref{Eq of invertible 2}). By Lemma \ref{Lem of contraction}, we have
	$$\Vert\diag(\boldsymbol{\mfc}\circ\bbg(z)^{-1})\bbPi^{(3)}(z,z)\Vert\geq3^{-1/2}\Vert\diag(\boldsymbol{\mfc}\circ\bbg(z)^{-1})\bbPi^{(3)}(z,z)\Vert_F\geq C_{\boldsymbol{\mfc}}\eta_0^3,$$
	then \(\diag(\boldsymbol{\mfc}\circ\bbg(z)^{-1})\bbPi^{(3)}(z,z)\) is the dominating term as \(N\to\infty\). Moreover, \(\bbPi^{(3)}(z,z)\) is invertible by the Remark \ref{Rem of invertible submatrix}, which implies that \(\bbTheta_N^{(3)}(z,z)\) is also invertible as \(N\to\infty\). Hence,
	\begin{align}
		\bbC_N^{(3)}(z,z)&=-\bbTheta_N^{(3)}(z,z)^{-1}\bbF_N^{(3)}(z,z)\label{Eq of bbC_N d=3}\\
		&+\mrO(\eta^{-8}N^{-\omega})\bbTheta_N^{(3)}(z,z)^{-1}\diag(\bbC_N^{(3)}(z,z))+\mrO(\eta_0^{-6}N^{-1/2})\boldsymbol{1}_{3\times3}.\notag
	\end{align}
	Let \(\Delta^{(3)}:=\bbTheta_N^{(3)}(z,z)+\diag(\boldsymbol{\mfc}\circ\bbg(z)^{-1})\bbPi^{(3)}(z,z)\), where \(\Vert\Delta^{(3)}\Vert_{\infty}=\mrO(\eta_0^{-15}N^{-2\omega})\). Since
	\begin{align}
		&\bbTheta_N^{(3)}(z,z)^{-1}=-(\diag(\boldsymbol{\mfc}\circ\bbg(z)^{-1})\bbPi^{(3)}(z,z)-\Delta^{(3)})^{-1}\notag\\
		&=-\bbPi^{(3)}(z,z)^{-1}\diag(\boldsymbol{\mfc}^{-1}\circ\bbg(z))-\bbPi^{(3)}(z,z)^{-1}\diag(\boldsymbol{\mfc}^{-1}\circ\bbg(z))\Delta^{(3)}\bbTheta_N^{(3)}(z,z)^{-1},\notag
	\end{align}
	then
	\begin{align*}
		\Vert\bbTheta_N^{(3)}(z,z)^{-1}\Vert\leq(1-\Vert\bbPi^{(3)}(z,z)^{-1}\diag(\boldsymbol{\mfc}^{-1}\circ\bbg(z))\Vert\cdot\Vert\Delta^{(3)}\Vert)^{-1}\Vert\bbPi^{(3)}(z,z)^{-1}\diag(\boldsymbol{\mfc}^{-1}\circ\bbg(z))\Vert.
	\end{align*}
	By Proposition \ref{Pro of inverse norm} and Lemma \ref{Lem of contraction}, it implies that \(\Vert\bbPi^{(3)}(z,z)^{-1}\diag(\boldsymbol{\mfc}^{-1}\circ\bbg(z))\Vert\leq C_{\mfc}\eta_0^{-5}\). Moreover, since \(\lim_{N\to\infty}\Vert\Delta^{(3)}\Vert=0\), we obtain that \(\Vert\bbTheta_N^{(3)}(z,z)^{-1}\Vert\leq C_{\mfc}\eta_0^{-5}\). By (\ref{Eq of bbC_N d=3}), it further gives that
	\begin{align*}
		\Vert\bbC_N^{(3)}(z,z)\Vert\leq(1-\mrO(\eta_0^{-8}N^{-\omega})\Vert\bbTheta_N^{(3)}(z,z)^{-1}\Vert)^{-1}(\Vert\bbTheta_N^{(3)}(z,z)^{-1}\bbF_N^{(3)}(z,z)\Vert+\mrO(\eta_0^{-6}N^{-1/2})),
	\end{align*}
	then we obtain that 
    \begin{align}
        \Vert\bbC_N^{(3)}(z,z)\Vert\leq C_{\mfc}\eta_0^{-10},\label{Eq of covariance bound d=3}
    \end{align}
    so do its entries. Moreover, since
	\begin{align}
		&\Vert\bbTheta_N^{(3)}(z,z)^{-1}+\bbPi^{(3)}(z,z)^{-1}\diag(\boldsymbol{\mfc}^{-1}\circ\bbg(z))\Vert\notag\\
		&\leq\Vert\bbPi^{(3)}(z,z)^{-1}\diag(\boldsymbol{\mfc}^{-1}\circ\bbg(z))\Delta^{(3)}\bbTheta_N^{(3)}(z,z)^{-1}\Vert\leq C_{\mfc}\eta_0^{-10}\Vert\Delta^{(3)}\Vert,\notag
	\end{align}
	then we replace all \(\bbTheta_N^{(3)}(z,z)^{-1}\) in (\ref{Eq of bbC_N d=3}) by \(-\bbPi^{(3)}(z,z)^{-1}\diag(\boldsymbol{\mfc}^{-1}\circ\bbg(z))\) and derive that
	\begin{align}
		\lim_{N\to\infty}\Vert\bbC_N^{(3)}(z,z)-\bbPi^{(3)}(z,z)^{-1}\diag(\mfc^{-1}\circ\bbg(z))\bbF_N^{(3)}(z,z)\Vert=0.\label{Eq of covariance system equation 2}
	\end{align}
	Finally, for \(z_1\neq z_2\in\mathcal{S}_{\eta_0}\), we can repeat the previous arguments to derive that
	\begin{align*}
		&(z_1+\mfm(z_1)-\mfm_s(z_1))\mathcal{C}_{st,N}^{(3)}(z_1,z_2)\\
    &=-\mfm_s(z_1)\sum_{l\neq s}^3\mcC_{lt,N}^{(3)}(z_1,z_2)-\mcF_{st,N}^{(3)}(z_1,\bar{z}_2)+\mrO(C_{\eta_0}N^{-\omega})\mcC_{tt,N}^{(3)}(z_2,z_2)+\mrO(C_{\eta_0}N^{-\omega}),
	\end{align*}
	since we have shown that \(\mcC_{tt,N}^{(3)}(z_2,z_2)\) is bounded by \(C_{\eta_0,\mfc,d}\), then we can repeat the previous arguments to derive that
	\begin{align}
		\lim_{N\to\infty}\Vert\bbC_N^{(3)}(z_1,z_2)-\bbPi^{(3)}(z_1,z_2)^{-1}\diag(\mfc^{-1}\circ\bbg(z_1))\bbF_N^{(3)}(z_1,\bar{z}_2)\Vert=0,\notag
	\end{align}
	here we omit the details.
\end{proof}
\subsection{Mean function}\label{ssec of mean function}
In this subsection, we derive the limiting form of \(\mbE[\tr(\bbQ(z))]-Ng(z)\). For convenience, we omit \((z)\) in \(\bbQ(z)\) and its entries.
\begin{thm}\label{Thm of Mean function}
	Under Assumptions {\rm \ref{Ap of general noise}} and {\rm \ref{Ap of dimension}}, for any \(\eta_0>0\) and \(z\in\mathcal{S}_{\eta_0}\) in {\rm (\ref{Eq of stability region})}, let 
	$$\overrightarrow{M}_N^{(3)}(z)=(M_{1,N}^{(3)}(z),M_{2,N}^{(3)}(z),M_{3,N}^{(3)}(z))'$$
	where for \(1\leq i\leq3\)
	\begin{align}
		&M_{i,N}^{(3)}(z):=g_i(z)\sum_{r\neq i}^3\sum_{w\neq i,r}^3W_{rw}^{(3)}(z)+\sum_{l\neq i}^3\big[(g(z)-g_i(z)-g_l(z))W_{il}^{(3)}(z)+V_{il}^{(3)}(z,z)\big]\notag\\
		&-2\kappa_3G_N^{(3)}(z)+\kappa_4H_{i,N}^{(3)}(z,z),\label{Eq of MiN}
	\end{align}
	where \(W_{jk}^{(3)}(z),V_{ij}^{(3)}(z,z),G_N^{(3)}(z),H_{i,N}^{(3)}(z,z)\) are defined in {\rm (\ref{Eq of bbW limiting d=3}), (\ref{Eq of bbV limit d=3}), (\ref{Eq of H2 d=3}), (\ref{Eq of H3 d=3})}. Then we have
    $$\lim_{N\to\infty}\Vert N(\bbm(z)-\bbg(z))-\boldsymbol{\Pi}^{(3)}(z,z)^{-1}\diag(\boldsymbol{\mfc}^{-1}\circ\bbg(z))\overrightarrow{M}_N^{(3)}(z)\Vert=0.$$
    Consequently, we obtain that 
    $$\lim_{N\to\infty}\mbE[\tr(\bbQ(z))]-Ng(z)-\mu_N^{(3)}(z)=0,$$
    where
	\begin{align}
		\mu_N^{(3)}(z):=\boldsymbol{1}_3'\boldsymbol{\Pi}^{(3)}(z,z)^{-1}\diag(\boldsymbol{\mfc}^{-1}\circ\bbg(z))\overrightarrow{M}_N^{(3)}(z),\label{Eq of Mean function}
	\end{align}
	and \(\boldsymbol{\Pi}^{(3)}(z,z)\) is defined in {\rm (\ref{Eq of invertible 2})}.
\end{thm}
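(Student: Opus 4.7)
The plan is to obtain a perturbed vector Dyson equation for $\bbm(z)$ that is one order more refined than the one derived in Theorem \ref{Thm of approximation}, and then invert its linearisation. Starting from the resolvent identity $\bbM\bbQ(z) = \bbI_N + z\bbQ(z)$, for each $i\in\{1,2,3\}$ I would write
\begin{equation*}
z\,\mbE[\tr(\bbQ^{ii}(z))] + n_i = \frac{1}{\sqrt{N}}\sum_{i_1,i_2,i_3}\mbE\bigl[X_{i_1i_2i_3}\,F_{i_1i_2i_3}^{i}(z)\bigr],
\end{equation*}
where $F_{i_1i_2i_3}^{i}$ collects the two off-diagonal blocks incident to block $i$ (mirroring the $F^1$ used in the proof of Theorem \ref{Thm of Variance}). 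I then apply the cumulant expansion (Lemma \ref{Lem of Stein’s equation}) up to order $K=3$, so that the expansion captures all contributions of orders $\kappa_2$, $\kappa_3$, $\kappa_4$; the remainder $\epsilon^{(4)}_{ijk}$ is handled exactly as in the covariance proof by the bound $\sum_{i,j,k}|\epsilon^{(4)}_{ijk}|\le \mrO(\eta_0^{-6}N^{-1/2})$ obtained from Lemmas \ref{Lem of minor terms} and \ref{Cor of minor terms}.

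Next I would split each derivative $\partial_{ijk}^{(l)}F^{i}_{ijk}$ via the operators $\mathscr{D}$ and $\mathscr{O}$ of \eqref{Eq of operator D}--\eqref{Eq of operator O}. Lemma \ref{Lem of minor terms} shows that the $\mathscr{O}$-parts contribute only $\mrO(\eta_0^{-?}N^{-1/2})$ after summation, so only the $\mathscr{D}$-parts carry major contributions. At order $l=1$ the resulting major terms are
\begin{equation*}
-\mbE[\tr(\bbQ^{ii})(\rho_{j}+\rho_{k})] \;-\; \sum_{l\ne i}\bigl\{\text{bilinear forms of the type } \bba^{(i)\prime}\bbQ^{ij}\bbQ^{jl}\bba^{(l)}\bigr\},
\end{equation*}
and here Theorem \ref{Thm of Variance} guarantees that covariances $\Cov(\rho_i,\rho_j)=\mro(1)$ so that products of traces factor to leading order; the bilinear forms concentrate around their means by Lemma \ref{Thm of Entrywise almost surly convergence}, whose limits are exactly $W^{(3)}_{rw}$ and $V^{(3)}_{il}$ appearing in \eqref{Eq of MiN} (the $V$-terms are produced by a further application of the resolvent identity, leading to a small linear system that by Proposition \ref{Pro of invertible matrices} is solved by $\bbV^{(3)}$). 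The $l=2$ order, weighted by $\kappa_3/2$, produces triple products of diagonal resolvent entries and yields, after replacing $\mbE[Q_{ii}^{ii}]\to\mfc_i^{-1}g_i$ via the entrywise law (Theorem \ref{Thm of entrywise law d=3}), the $\kappa_3$-term $-2\kappa_3 G_N^{(3)}$. Similarly, at $l=3$ with weight $\kappa_4/6$, the quartic-in-$\bba^{(l)}$ structure is what produces $\mcB_{(4)}^{(i,l)}$ and hence the $\kappa_4 H_{i,N}^{(3)}$ contribution.

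Collecting the three identities for $i=1,2,3$ and writing $\Theta_N^{(3)}(z):=(z+\mfm(z))\bbI_3-\diag(\bbm(z))+\diag(\bbm(z))\bbS_3$ (the same matrix that arose in \eqref{Eq of bbTheta d=3}), the system takes the form
\begin{equation*}
\Theta_N^{(3)}(z)\,\bigl(N\bbm(z)-N\bbg(z)\bigr) \;=\; -\overrightarrow{M}_N^{(3)}(z) \;+\; \mro(1).
\end{equation*}
Theorem \ref{Thm of approximation} gives $\Theta_N^{(3)}(z)=-\diag(\mfc\circ\bbg(z)^{-1})\bbPi^{(3)}(z,z)+\mrO(\eta_0^{-15}N^{-2\omega})$, and by Propositions \ref{Pro of invertible matrices} and \ref{Pro of inverse norm} the matrix $\bbPi^{(3)}(z,z)$ is invertible with bounded inverse on $\mcS_{\eta_0}$. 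Multiplying by its inverse and then by $\boldsymbol{1}_3'$ gives the stated formula for $\mu_N^{(3)}$.

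The main obstacle I anticipate is bookkeeping: several families of off-diagonal quadratic forms $\bba^{(i)\prime}\bbQ^{ij}\bbQ^{jl}\bba^{(l)}$, $W^{(3)}_{st,N}$, and the diagonal trace functionals are coupled, and one must show that after using the entrywise law to replace each by its deterministic surrogate the resulting linear sub-system produces exactly the functions $\bbW^{(3)}$ and $\bbV^{(3)}$ defined in \eqref{Eq of bbW limiting d=3}--\eqref{Eq of bbV limit d=3}. Verifying the closure of this auxiliary system and that its stability is again controlled by $\bbPi^{(3)}$ (rather than a more complicated matrix) is the most delicate step; everything else is a careful but mechanical refinement of the cumulant expansion that already underlies Theorem \ref{Thm of approximation} and Theorem \ref{Thm of Variance}.
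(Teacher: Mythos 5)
Your proposal follows essentially the same route the paper takes: cumulant expansion to order $K=3$ applied to $z\,\mbE[\tr(\bbQ^{ii})]+n_i$, the $\mathscr{D}/\mathscr{O}$ split to isolate major terms, identification of the trace/$V$/$W$ contributions at $l=1$, the $\kappa_3$-weighted $G_N^{(3)}$ term at $l=2$ and the $\kappa_4$-weighted $H_{i,N}^{(3)}$ term at $l=3$, and finally the linearised system governed by $\bbTheta_N^{(3)}(z,z)$ whose inverse is (to leading order) $-\bbPi^{(3)}(z,z)^{-1}\diag(\mfc^{-1}\circ\bbg(z))$. The only clarification worth making is that the $V_{il,N}^{(3)}(z,z)$ terms already appear directly in the $l=1$ expansion (as traces $N^{-1}\tr(\bbQ^{il}\bbQ^{li})$), and the \emph{further} cumulant expansion — the closure system you mention — is needed only to identify their deterministic limits $V^{(3)}_{il}=(\bbV^{(3)})_{il}$ and likewise for $W^{(3)}_{st}$; both auxiliary systems are in §E.3 and are indeed stabilised by $\bbPi^{(3)}$, as you anticipated.
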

Before proving the above theorem, we first give the explicit forms of major terms in cumulant expansions of \(\mbE[\tr(\bbQ]\). It suffices to calculate \(\mbE[\tr(\bbQ^{ii}]\) for \(i=1,2,3\). Without loss of generality, we only calculate \(\mbE[\tr(\bbQ^{11})]\) in detail. By notations in (\ref{Eq of vector notation d=3}), the trick of \(z\bbQ=\bbQ\bbM-\bbI_N\) and the cumulant expansion (\ref{Eq of cumulant expansion}), we have
\begin{align*}
	&z\mbE[\tr(\bbQ^{11}]=\frac{1}{\sqrt{N}}\sum_{i,j,k=1}^{m,n,p}\mbE[X_{ijk}(b_jQ_{ik}^{13}+c_kQ_{ij}^{12})]-m\\
	&=\frac{1}{\sqrt{N}}\sum_{i,j,k=1}^{m,n,p}\Big(\sum_{\alpha=0}^3\frac{\kappa_{l+1}}{l!}\mbE\big[b_j\partial_{ijk}^{(l)}Q_{ik}^{13}+c_k\partial_{ijk}^{(l)}Q_{ij}^{12}\big]+\epsilon_{ijk}^{(4)}\Big)-m,
\end{align*}
where \(|\epsilon_{ijk}^{(4)}|\leq C_{\kappa_5}\sup_{z\in\mcS_{\eta_0}}\big|b_j\partial_{ijk}^{(4)}Q_{ik}^{13}+c_k\partial_{ijk}^{(4)}Q_{ij}^{12}\big|\). By Lemma \ref{Lem of minor terms}, for \(2\leq l\leq4\), we have 
\begin{align}
    N^{-1/2}\sum_{i,j,k=1}^{m,n,p}\Big|\mathscr{O}\left(b_j\partial_{ijk}^{(l)}Q_{ik}^{13}+c_k\partial_{ijk}^{(l)}Q_{ij}^{12}\right)\Big|\leq\mrO(\eta_0^{-(l+1)}N^{-(l+1)/2+1}),\label{Eq of off diagonal minor}
\end{align}
so it is enough to focus on
\begin{align}
	N^{-1/2}\sum_{i,j,k=1}^{m,n,p}\mathscr{D}\left(b_j\partial_{ijk}^{(l)}Q_{ik}^{13}+c_k\partial_{ijk}^{(l)}Q_{ij}^{12}\right),\label{Eq of mean majors}
\end{align}
where the operators ``\(\mathscr{D},\mathscr{O}\)'' are defined in (\ref{Eq of operator D}) and (\ref{Eq of operator O}).

\vspace{5mm}
\noindent
{\bf First derivatives:} When \(l=1\), by direct calculations, we can obtain that
	\begin{align}
		&N^{-1/2}\sum_{i,j,k=1}^{m,n,p}\mathbb{E}\left[b_j\partial_{ijk}^{(1)}Q_{ik}^{13}+c_k\partial_{ijk}^{(1)}Q_{ij}^{12}\right]\notag\\
		=&-N^{-1}\mathbb{E}\big[\tr(\boldsymbol{Q}^{11})\tr(\boldsymbol{Q}^{22})+\tr(\boldsymbol{Q}^{11})\tr(\boldsymbol{Q}^{33})\big]-\big(V_{12,N}^{(3)}(z,z)+V_{13,N}^{(3)}(z,z)\big)\notag\\
		-&N^{-1}\mathbb{E}\big[2\bbb'\bbQ^{23}\bbc\tr(\bbQ^{11})+\boldsymbol{a}'\boldsymbol{Q}^{12}\boldsymbol{b}\tr(\boldsymbol{Q}^{33})+\boldsymbol{a}'\boldsymbol{Q}^{13}\boldsymbol{c}\tr(\boldsymbol{Q}^{22})\big]+\mathrm{O}(\eta_0^{-2}N^{-1}),\notag
	\end{align}
	where  
	\begin{align}
		V_{ij,N}^{(3)}(z_1,z_2):=N^{-1}\mathbb{E}[\operatorname{tr}(\bbQ^{ij}(z_1)\bbQ^{ji}(z_2))],\ 1\leq i\leq j\leq3.\label{Eq of V}
	\end{align}
{\bf Second derivatives:} When \(l=2\), we have
	\begin{align}
		&N^{-1/2}\sum_{i,j,k=1}^{m,n,p}\mathbb{E}\left[\mathscr{D}\left(b_j\partial_{ijk}^{(2)}Q_{ik}^{13}+c_k\partial_{ijk}^{(2)}Q_{ij}^{12}\right)\right]\label{Eq of mean function 2 d=3}\\
		&=4N^{-3/2}\sum_{i,j,k=1}^{m,n,p}a_ib_jc_k\mathbb{E}\big[Q_{ii}^{11}Q_{jj}^{22}Q_{kk}^{33}\big]+\mathrm{O}(\eta_0^{-3}N^{-1/2})\notag\\
		&=4N^{-3/2}\mathbb{E}\left[\left(\boldsymbol{1}_m'{\rm diag}(\boldsymbol{Q}^{11})\boldsymbol{a}\right)\left(\boldsymbol{1}_n'{\rm diag}(\boldsymbol{Q}^{22})\boldsymbol{b}\right)\left(\boldsymbol{1}_p'{\rm diag}(\boldsymbol{Q}^{33})\boldsymbol{c}\right)\right]+\mathrm{O}(\eta_0^{-3}N^{-1/2}).\notag
	\end{align}
	By Lemma \ref{Thm of Entrywise almost sure convergence} and Theorem \ref{Thm of entrywise law d=3}, we can show that
	$$\Big|N^{-1/2}\boldsymbol{1}'\diag(\bbQ^{kk}(z))\bba^{(k)}-\mfc_k^{-1}g_k(z)N^{-1/2}\sum_{j=1}^{n_k}a_j^{(k)}\Big|\prec\mrO(C_{\eta_0}N^{-\omega}),$$
	by (\ref{Eq of Ap L1 d=3}), we define
	\begin{align}
		G_N^{(3)}(z):=(\mfc_1\mfc_2\mfc_3)^{-1}g_1(z)g_2(z)g_3(z)\mfb_1^{(1)}\mfb_2^{(1)}\mfb_3^{(1)},\label{Eq of H2 d=3}
	\end{align}
	and \((\ref{Eq of mean function 2 d=3})=6G_N^{(3)}(z)+\mrO(C_{\eta_0}N^{-\omega})\).

\vspace{5mm}
\noindent
{\bf Third derivatives:} When \(l=3\), there are only two possible situations in (\ref{Eq of mean majors}). Without loss of generality, we use \(\mathscr{D}(c_k\partial_{ijk}^{(3)}Q_{ij}^{12})\) as an example. First, if \(n_{11},n_{22},n_{33}>0\), then the terms \(a_i^2c_k^2Q_{ii}^{11}(Q_{jj}^{22})^2Q_{kk}^{33}\) and \(b_j^2c_k^2(Q_{ii}^{11})^2Q_{jj}^{22}Q_{kk}^{33}\) will appear in \(\mathscr{D}(c_k\partial_{ijk}^{(3)}Q_{ij}^{12})\), and we can conclude that
	$$N^{-2}\sum_{i,j,k=1}^{m,n,p}a_i^2c_k^2Q_{ii}^{11}(Q_{jj}^{22})^2Q_{kk}^{33}\leq\bba'{\rm diag}(\bbQ^{11})\bba\times\bbc'{\rm diag}(\bbQ^{33})\bbc\times\tr((\bbQ^{22})^{\circ 2})$$
	which is bounded by \(N^{-1}\Vert\boldsymbol{Q}\Vert^4\), as does $N^{-2}\sum_{i,j,k=1}^{m,n,p}a_i^2c_k^2Q_{ii}^{11}(Q_{jj}^{22})^2Q_{kk}^{33}$. And the same conclusion is also valid for \(b_j\partial_{ijk}^{(3)}Q_{ik}^{13}\). On the other hand, if only two of \(n_{11},n_{22},n_{33}\) are nonzero, then the only possible case in \(\mathscr{D}(c_k\partial_{ijk}^{(3)}Q_{ij}^{12})\) is \(c_k^4(Q_{ii}^{11})^2(Q_{jj}^{22})^2\), and we finally obtain
	\begin{align}
		&N^{-2}\sum_{i,j,k=1}^{m,n,p}\mathbb{E}\left[\mathscr{D}\left(b_j\partial_{ijk}^{(3)}Q_{ik}^{13}+c_k\partial_{ijk}^{(3)}Q_{ij}^{12}\right)\right]\notag\\
		&=-6N^{-2}\sum_{i,j,k=1}^{m,n,p}\mathbb{E}\left[b_j^4(Q_{ii}^{11})^2(Q_{kk}^{33})^2+c_k^4(Q_{ii}^{11})^2(Q_{jj}^{22})^2\right]+\mathrm{O}(\eta_0^{-4}N^{-1})\notag\\
		&=-6\sum_{k=2}^3\Vert\bba^{(5-k)}\Vert_4^4N^{-2}\mathbb{E}\left[\operatorname{tr}(\bbQ^{11}(z)\circ\bbQ^{11}(z))\operatorname{tr}(\bbQ^{jj}(z)\circ\bbQ^{jj}(z))\right]+\mrO(\eta_0^{-4}N^{-1}).\notag
	\end{align}
	By Lemma \ref{Thm of Entrywise almost sure convergence} and Theorem \ref{Thm of entrywise law d=3}, we know that
	$$\Cov(N^{-1}\tr(\bbQ^{ii}(z)\circ\bbQ^{ii}(z)),N^{-1}\tr(\bbQ^{jj}(z)\circ\bbQ^{jj}(z)))=\mrO(\eta_0^{-12}N^{-2\omega})$$
	and
	$$\big|N^{-1}\tr(\bbQ^{ii}(z_1)\circ\bbQ^{ii}(z_2))-\mfc_i^{-1}g_i(z_1)g_i(z_2)\big|\prec\mrO(C_{\eta_0}N^{-\omega}).$$
	By (\ref{Eq of Ap L4 d=3}), for \(1\leq i\leq 3\), let
	\begin{align}
		H_{i,N}^{(3)}(z_1,z_2):=\mfc_i^{-1}g_i(z_1)g_k(z_2)\sum_{l\neq i}^3\mcB_{(4)}^{(k,l)}\mfc_l^{-1}g_l(z_1)g_l(z_2).\label{Eq of H3 d=3}
	\end{align}
{\bf Remainders:} By Lemma \ref{Lem of minor terms}, we rewrite (\ref{Eq of mean majors}) as follows:
	$$N^{-1/2}\sum_{i,j,k=1}^{m,n,p}\mathscr{D}\left(b_j\partial_{ijk}^{(l)}Q_{ik}^{13}+c_k\partial_{ijk}^{(l)}Q_{ij}^{12}\right)=N^{-5/2}\sum_{i,j,k=1}^{m,n,p}(a_i)^{n_a}(b_j)^{n_b}(c_k)^{n_c}(Q_{ii}^{11})^{n_{11}}(Q_{jj}^{22})^{n_{22}}(Q_{kk}^{33})^{n_{33}},$$
	where \(n_a+n_b+n_c=5\). Hence, at least one of \(n_a,n_b,n_c\) is equal or greater than 2 and the above sum is bounded by \(N^{-3/2}\Vert\boldsymbol{Q}\Vert^5\). Combined with (\ref{Eq of off diagonal minor}), we conclude that
	\begin{align}
	    &N^{-1/2}\sum_{i,j,k=1}^{m,n,p}|\epsilon_{ijk}^{(4)}|\leq C_{\kappa_5}N^{-1/2}\sum_{i,j,k=1}^{m,n,p}\big|(\mathscr{D}+\mathscr{O})\big(b_j\partial_{ijk}^{(l)}Q_{ik}^{13}+c_k\partial_{ijk}^{(l)}Q_{ij}^{12}\big)\big|\leq\mrO(\eta_0^{-5}N^{-3/2}).\label{Eq of minor remainder}
	\end{align}
    Now, let us  prove Theorem \ref{Thm of Mean function}.
\begin{proof}[Proof of Theorem \ref{Thm of Mean function}]
	Based on previous discussions, we obtain
	\begin{align*}
	    &z\mathbb{E}[\tr(\boldsymbol{Q}^{11}(z))]=-N^{-1}\mathbb{E}\big[\tr(\boldsymbol{Q}^{11})\tr(\boldsymbol{Q}^{22})+\tr(\boldsymbol{Q}^{11})\tr(\boldsymbol{Q}^{33})\big]-\big(V_{12,N}^{(3)}(z,z)+V_{13,N}^{(3)}(z,z)\big)\\
        &-N^{-1}\mathbb{E}\big[2\bbb'\bbQ^{23}\bbc\tr(\bbQ^{11})+\boldsymbol{a}'\boldsymbol{Q}^{12}\boldsymbol{b}\tr(\boldsymbol{Q}^{33})+\boldsymbol{a}'\boldsymbol{Q}^{13}\boldsymbol{c}\tr(\boldsymbol{Q}^{22})\big]\\
        &-m+2\kappa_3G_N^{(3)}(z)-\kappa_4H_{1,N}^{(3)}(z,z)+\mrO(C_{\eta_0}N^{-\omega}).
	\end{align*}
	By Lemma \ref{Thm of Entrywise almost sure convergence} and Theorem \ref{Thm of Variance}, for \(1\leq i,j\leq3\), we have \(|(\bba'\bbQ^{12}\bbb)^c|,|(\bba'\bbQ^{13}\bbc)^c|,|(\bbb'\bbQ^{23}\bbc)^c|\prec\mrO(\eta_0^{-5}N^{-\omega})\) and \(|N^{-1}\tr(\bbQ^{12}\bbQ^{21})^c|,|N^{-1}\tr(\bbQ^{13}\bbQ^{31})^c|\prec\mrO(\eta_0^{-6}N^{-\omega})\). Combined with the conclusion \(\Cov(\tr(\bbQ^{11},\tr(\bbQ^{22})))\leq\mrO(\eta_0^{-10})\) in \eqref{Eq of covariance bound d=3}, it implies that
	$$\big|N^{-1}\mathbb{E}\left[\tr(\boldsymbol{Q}^{11}(z))\tr(\boldsymbol{Q}^{22}(z))\right]-\mfm_2(z)\mathbb{E}[\tr(\boldsymbol{Q}^{11}(z))]\big|\leq\mrO(\eta_0^{-10}N^{-1})$$
	and
	$$\big|\operatorname{Cov}(\rho_l(z),(\bba^{(i)})'\bbQ^{ij}\bba^{(j)})\big|\leq\mrO(\eta_0^{-10}N^{-2\omega}),$$
	where \(\rho_i(z)=N^{-1}\tr(\bbQ^{ii}(z))\) and \(\mfm_i(z)=\mbE[\rho_i(z)]\). Hence, recall the definition of \(W_{ij,N}^{(3)}(z)\) in (\ref{Eq of W}), we can obtain
	\begin{align}
		(z+\mfm_2+\mfm_3)\mathbb{E}\left[\tr(\boldsymbol{Q}^{11})\right]&=-(m+2\mfm_1W_{23,N}^{(3)}+\mfm_2W_{13,N}^{(3)}+\mfm_3W_{12,N}^{(3)}+V_{12,N}^{(3)}+V_{13,N}^{(3)})\notag\\
		&+2\kappa_3G_N^{(3)}(z)-\kappa_4H_{1,N}^{(3)}(z,z)+\mrO(C_{\eta_0}N^{-\omega})\notag\\
		&:=-(\mfc_1 N+M_{1,N}^{(3)}(z))+\mrO(C_{\eta_0}N^{-\omega}),\label{Eq of trQ11}
	\end{align}
	where we omit \((z,z)\) in \(V_{ij,N}^{(3)}(z,z)\) and \((z)\) in \(W_{ij,N}^{(3)}(z)\) for convenience. Moreover, for proofs of \(W_{st,N}^{(3)}(z)\to W_{st}^{(3)}(z)\) and \(V_{st,N}^{(3)}(z_1,z_2)\to V_{st}^{(3)}(z_1,z_2)\), readers can find details in \eqref{Eq of solve W} and \eqref{Eq of bbV limit d=3} in \S\ref{Sec of majors}. Next, we can repeat previous arguments to obtain the similar results for \(\mathbb{E}[\tr(\boldsymbol{Q}^{22})]\) and \(\mathbb{E}[\tr(\boldsymbol{Q}^{33})]\) as follows:
	\begin{align}
		&(z+\mfm_j(z)+\mfm_k(z))\mathbb{E}\left[\tr(\bbQ^{ii}(z))\right]=-(\mfc_iN+M_{i,N}^{(3)}(z))+\mrO(C_{\eta_0}N^{-\omega}),\notag
	\end{align}
	where \(\{i,j,k\}=\{1,2,3\}\), i.e.
	$$(z+\mfm_j(z)+\mfm_k(z))\mfm_i(z)=-\mfc_i-N^{-1}M_{i,N}^{(3)}(z)+\mrO(C_{\eta_0}N^{-\omega}).$$
	Next, let \(\bbh(z):=N(\bbm(z)-\bbg(z)),h(z):=N\sum_{i=1}^3(\mfm_i(z)-g_i(z))=N(\mfm(z)-g(z))\), then we have
	\begin{align}
		(z+\mfm_j(z)+\mfm_k(z))h_i(z)&=-\mfc_iN-M_{i,N}^{(3)}(z)-N(z+\mfm_j(z)+\mfm_k(z))g_i(z)+\mrO(C_{\eta_0}N^{-\omega})\notag\\
		&=\mfc_iN\left(\frac{z+\mfm_j(z)+\mfm_k(z)}{z+g_j(z)+g_k(z)}-1\right)-M_{i,N}^{(3)}(z)+\mrO(C_{\eta_0}N^{-\omega})\notag\\
		&=\mfc_i\frac{h(z)-h_i(z)}{z+g_j(z)+g_k(z)}-M_{i,N}^{(3)}(z)+\mrO(C_{\eta_0}N^{-\omega})\notag\\
		&=-g_i(z)(h(z)-h_i(z))-M_{i,N}^{(3)}(z)+\mrO(C_{\eta_0}N^{-\omega}).\label{Eq of mean 1}
	\end{align}
	Let \(\overrightarrow{M}_N^{(3)}(z)=(M_{1,N}^{(3)}(z),\cdots,M_{3,N}^{(3)}(z))'\), we have
	\begin{align}
		{\rm (\ref{Eq of mean 1})}\ \Rightarrow\ &(z+\mfm_j(z)+\mfm_i(z))h_i(z)=-g_i(z)(\bbS_d\bbh(z))_i-M_{i,N}^{(3)}(z)+\mrO(C_{\eta_0}N^{-\omega})\notag\\
		\Leftrightarrow\ &\bbTheta_N^{(3)}(z,z)\tilde{\bbh}(z)=-\overrightarrow{M}_N^{(3)}(z)+\mrO(C_{\eta_0}N^{-\omega}),\notag
	\end{align}
	where \(\bbTheta_N^{(3)}(z,z)\) is defined in (\ref{Eq of bbTheta d=3}) and it is invertible such that
	$$\lim_{N\to\infty}\Vert\bbTheta_N^{(3)}(z,z)^{-1}+\bbPi^{(3)}(z,z)^{-1}\diag(\mfc^{-1}\circ\bbg(z))\Vert=0,$$
	where \(\bbPi^{(3)}(z,z)\) is defined in (\ref{Eq of invertible 2}) and it is also invertible by Remark \ref{Rem of invertible submatrix}, so we obtain
	\begin{align}
		\lim_{N\to\infty}\Vert N(\bbm(z)-\bbg(z))-\boldsymbol{\Pi}^{(3)}(z,z)^{-1}\diag(\boldsymbol{\mfc}^{-1}\circ\bbg(z))\overrightarrow{M}_N^{(3)}(z)\Vert_{\infty}=0,\label{Eq of mean function d=3}
	\end{align}
	and \(\lim_{N\to\infty}\mbE[\tr(\bbQ(z))]-Ng(z)-\mu_N^{(3)}(z)=0\).
\end{proof}
\subsection{System equations for \texorpdfstring{\eqref{Eq of W}}{(E.40)}, \texorpdfstring{\eqref{Eq of V}}{(F.30)}, \texorpdfstring{\eqref{Eq of capital V}}{(F.9)} and \texorpdfstring{\eqref{Eq of capital W}}{(F.18)}}\label{Sec of majors}
We now derive the asymptotic forms of the major terms appearing in \(\mu_N^{(3)}(z_1)\) and the covariance function \(\mcC_N^{(3)}(z_1,z_2)\) for \(z_1,z_2\in\mathcal{S}_{\eta_0}\). Similar to proofs of Theorems \ref{Thm of Variance} and \ref{Thm of Mean function}, the calculation procedures rely on the cumulant expansions (\ref{Eq of cumulant expansion}). For brevity, we omit the detailed calculations of minor terms, such as the remainders \(\epsilon_{ijk}^{(2)}\).

\vspace{5mm}
\noindent
{\bf System equations for \(W_{st,N}^{(3)}(z)\) in (\ref{Eq of W}):} By the cumulant expansion (\ref{Eq of cumulant expansion}) and (\ref{Eq of Qii11}), we can obtain
	\begin{align}
		\mbE[(\bba^{(s)})'\bbQ^{st}(z)\bba^{(t)}(z+\rho(z)-\rho_s(z))]=-\delta_{st}-\sum_{l\neq s}^3\mbE[(\rho(z)-\rho_s(z)-\rho_l(z))(\bba^{(l)})'\bbQ^{lt}(z)\bba^{(t)}]+\mrO(\eta_0^{-3}N^{-1/2}),\notag
	\end{align}
	where \(\rho_i(z)\) is defined in (\ref{Eq of mi}). By Lemma \ref{Thm of Entrywise almost sure convergence}, we have
	$$W_{st,N}^{(3)}(z)(z+\mfm(z)-\mfm_s(z))=-\delta_{st}-\sum_{l\neq s}^3(\mfm(z)-\mfm_s(z)-\mfm_l(z))W_{lt,N}^{(3)}(z)+\mrO(\eta_0^{-6}N^{-\omega}).$$
	Since all \(|W_{il,N}^{(3)}(z)|\leq\mrO(\eta_0^{-1})\) and \(\Vert\bbg(z)-\bbm(z)\Vert_{\infty}=\mrO(\eta_0^{-15}N^{-2\omega})\) by Theorem \ref{Thm of approximation}, we have
	$$W_{st,N}^{(3)}(z)(z+g(z)-g_s(z))=-\delta_{st}-\sum_{l\neq s}^3(g(z)-g_s(z)-g_l(z))W_{lt,N}^{(3)}(z)+\mrO(\eta_0^{-16}N^{-\omega}),$$
	Therefore, define \(\bbW_N^{(3)}(z)=[W_{st,N}^{(3)}(z)]_{3\times3}\) and
	\begin{align*}
		\bbGa^{(3)}(z):=(z+g(z))\bbI_3-\diag(\bbg(z))+g(z)\bbS_3-\diag(g(z))\bbS_3-\bbS_3\diag(g(z)),
	\end{align*}
	we can obtain that
	$$\bbGa^{(3)}(z)\bbW_N^{(3)}(z)=-\bbI_3+\mrO(\eta_0^{-16}N^{-\omega})\boldsymbol{1}_{3\times3}.$$
	For the invertibility of \(\bbGa^{(3)}(z)\), readers can refer to Lemma \ref{Lem of invertible bbGa} later, so we can derive the limiting expression of \(\bbW_N^{(3)}(z)\) as follows:
	\begin{align}
		\bbW^{(3)}(z):=\lim_{N\to\infty}\bbW_N^{(3)}(z)=-\bbGa^{(3)}(z)^{-1},\quad\Vert\bbW^{(3)}(z)-\bbW_N^{(3)}(z)\Vert\leq\mrO(\eta_0^{-17}N^{-\omega}).\label{Eq of solve W}
	\end{align}
{\bf System equations for \(V_{st,N}^{(3)}(z_1,z_2)\) in (\ref{Eq of V}):} By the cumulant expansion (\ref{Eq of cumulant expansion}), Lemma \ref{Lem of minor terms} and Theorem \ref{Thm of approximation}, we obtain
	\begin{align*}
		&z_1V_{st,N}^{(3)}(z_1,z_2)=\frac{1}{N^{3/2}}\sum_{l\neq s}^3\sum_{i,j,k=1}^{m,n,p}\mbE\big[X_{ijk}\mcA_{ijk}^{(s,l)}Q_{\tilde{l}\cdot}^{lt}(z_1)Q_{\cdot \tilde{s}}^{ts}(z_2)\big]-\delta_{st}\mfm_s(z_2)=-\frac{1}{N^2}\sum_{l\neq s}^3\sum_{r_1\neq r_2}^3\\
		&\Big(\sum_{i,j,k=1}^{m,n,p}\mbE\big[\mcA_{ijk}^{(s,l)}\mcA_{ijk}^{(r_1,r_2)}(Q_{\tilde{l}\tilde{r}_1}^{lr_1}(z_1)Q_{\tilde{r}_2\cdot}^{r_2t}(z_1)Q_{\cdot \tilde{s}}^{ts}(z_2)+Q_{\tilde{l}\cdot}^{lt}(z_1)Q_{\cdot \tilde{r}_1}^{tr_1}(z_2)Q_{\tilde{r}_2\tilde{s}}^{r_2s}(z_2))\big]+\epsilon_{ijk}^{(2)}\Big)-\delta_{st}\mfm_s(z_2)\\
		&=-\frac{1}{N^2}\sum_{l\neq s}^3\sum_{i,j,k=1}^{m,n,p}\mbE\big[(\mcA_{ijk}^{(s,l)})^2(Q_{\tilde{l}\tilde{l}}^{ll}(z_1)Q_{\tilde{s}\cdot}^{st}(z_1)Q_{\cdot \tilde{s}}^{ts}(z_2)+Q_{\tilde{l}\cdot}^{lt}(z_1)Q_{\cdot \tilde{l}}^{tl}(z_2)Q_{\tilde{s}\tilde{s}}^{ss}(z_2))\big]-\delta_{st}\mfm_s(z_2)+\mrO(C_{\eta_0}N^{-1/2})\\
		&=-\sum_{l\neq s}^3\mfm_l(z_1)V_{st,N}^{(3)}(z_1,z_2)-\sum_{l\neq s}^3\mfm_s(z_2)V_{lt,N}^{(3)}(z_1,z_2)-\delta_{st}\mfm_s(z_2)+\mrO(C_{\eta_0}N^{-\omega})\\
		&=-V_{st,N}^{(3)}(z_1,z_2)\sum_{l\neq s}^3g_l(z_1)-g_s(z_2)\sum_{l\neq s}^3V_{lt,N}^{(3)}(z_1,z_2)-\delta_{st}g_s(z_2)+\mrO(C_{\eta_0}N^{-\omega}),
	\end{align*}
	i.e.
	$$V_{st,N}^{(3)}(z_1,z_2)=\mfc_s^{-1}g_s(z_1)g_s(z_2)\Big(\delta_{st}+\sum_{l\neq s}^3V_{lt,N}^{(3)}(z_1,z_2)\Big)+\mrO(C_{\eta_0}N^{-\omega}).$$
	Here, define
	\begin{align}
		\bbV_N^{(3)}(z_1,z_2)=[V_{st,N}^{(3)}(z_1,z_2)]_{3\times3},\label{Eq of bbV d=3}
	\end{align}
	and we have
	\begin{align}
		\bbV_N^{(3)}(z_1,z_2)=\bbPi^{(3)}(z_1,z_2)^{-1}\diag(\mfc^{-1}\circ\bbg(z_1)\circ\bbg(z_2))+\mro(\boldsymbol{1}_{3\times3}),\notag
	\end{align}
	where \(\bbPi^{(3)}(z_1,z_2)\) is defined in (\ref{Eq of invertible 2}). Hence, we have
	\begin{align}
		\bbV^{(3)}(z_1,z_2):=\lim_{N\to\infty}\bbV_N^{(3)}(z_1,z_2)=\bbPi^{(3)}(z_1,z_2)^{-1}\diag(\mfc^{-1}\circ\bbg(z_1)\circ\bbg(z_2)).\label{Eq of bbV limit d=3}
	\end{align}
{\bf System equations for \(\mathcal{V}_{ij,N}^{(3)}(z_1,z_2)\) in (\ref{Eq of capital V}):} First, for \(i,j,k\in\{1,2,3\}\), define
	\begin{align}
		V_{ijk,N}^{(3)}(z_1,z_2):=N^{-1}\mathbb{E}[\tr(\boldsymbol{Q}^{ij}(z_1)\boldsymbol{Q}^{jk}(z_2)\boldsymbol{Q}^{ki}(z_2))],
	\end{align}
	Since \(\mcV_{ij,N}^{(3)}(z_1,z_2)=\sum_{l\neq i}^3V_{ilj,N}^{(3)}(z_1,z_2)\), it suffices to calculate all \(V_{ilj,N}^{(3)}(z_1,z_2)\). By the cumulant expansion (\ref{Eq of cumulant expansion}), we have
	\begin{align*}
		&z_1V_{stl,N}^{(3)}(z_1,z_2)=\frac{1}{N^{3/2}}\sum_{r\neq s}^3\sum_{i,j,k=1}^{m,n,p}\mbE[X_{ijk}\mcA_{ijk}^{(s,r)}Q_{\tilde{r}\cdot}^{rl}(z_1)\bbQ^{lt}(z_2)Q_{\cdot\tilde{s}}^{ts}(z_2)]-\delta_{sl}V_{ts,N}^{(3)}(z_2,z_2)\\
		&=\frac{1}{N^{3/2}}\Big(\sum_{r\neq s}^3\sum_{i,j,k=1}^{m,n,p}\mbE[\mcA_{ijk}^{(s,r)}\partial_{ijk}^{(1)}\{Q_{\tilde{r}\cdot}^{rl}(z_1)\bbQ^{lt}(z_2)Q_{\cdot\tilde{s}}^{ts}(z_2)\}]+\epsilon_{ijk}^{(2)}\Big)-\delta_{sl}V_{ts,N}^{(3)}(z_2,z_2),
	\end{align*}
	where
	\begin{align*}
		&\frac{1}{N^{3/2}}\sum_{r\neq s}^3\sum_{i,j,k=1}^{m,n,p}\mbE[\mcA_{ijk}^{(s,r)}\partial_{ijk}^{(1)}\{Q_{\tilde{r}\cdot}^{rl}(z_1)\bbQ^{lt}(z_2)Q_{\cdot\tilde{s}}^{ts}(z_2)\}]=\\
		&-\frac{1}{N^2}\sum_{r\neq s}^3\sum_{w_1\neq w_2}^3\sum_{i,j,k=1}^{m,n,p}\mbE\big[\mcA_{ijk}^{(s,r)}Q_{\tilde{r}\tilde{w}_1}^{rw_1}(z_1)\mcA_{ijk}^{(w_1,w_2)}Q_{\tilde{w}_2\cdot}^{w_2l}(z_1)\bbQ^{lt}(z_2)Q_{\cdot\tilde{s}}^{ts}(z_2)\big]\\
		&-\frac{1}{N^2}\sum_{r\neq s}^3\sum_{w_1\neq w_2}^3\sum_{i,j,k=1}^{m,n,p}\mbE\big[\mcA_{ijk}^{(s,r)}Q_{\tilde{r}\cdot}^{rl}(z_1)Q_{\cdot\tilde{w}_1}^{lw_1}(z_2)\mcA_{ijk}^{(w_1,w_2)}Q_{\tilde{w}_2\cdot}^{w_2t}(z_2)Q_{\cdot\tilde{s}}^{ts}(z_2)\big]\\
		&-\frac{1}{N^2}\sum_{r\neq s}^3\sum_{w_1\neq w_2}^3\sum_{i,j,k=1}^{m,n,p}\mbE\big[\mcA_{ijk}^{(s,r)}Q_{\tilde{r}\cdot}^{rl}(z_1)\bbQ^{lt}(z_2)Q_{\cdot\tilde{w}_1}^{tw_1}(z_2)\mcA_{ijk}^{(w_1,w_2)}Q_{\tilde{w}_2\tilde{s}}^{w_2s}(z_2)\big],
	\end{align*}
	then by Lemmas \ref{Lem of minor terms} and \ref{Thm of Entrywise almost sure convergence}, we have
	\begin{align*}
		&\frac{1}{N^2}\sum_{r\neq s}^3\sum_{w_1\neq w_2}^3\sum_{i,j,k=1}^{m,n,p}\mbE\big[\mcA_{ijk}^{(s,r)}Q_{\tilde{r}\tilde{w}_1}^{rw_1}(z_1)\mcA_{ijk}^{(w_1,w_2)}Q_{\tilde{w}_2\cdot}^{w_2l}(z_1)\bbQ^{lt}(z_2)Q_{\cdot\tilde{s}}^{ts}(z_2)\big]\\
		&=\frac{1}{N^2}\sum_{r\neq s}^3\sum_{i,j,k=1}^{m,n,p}\mbE\big[(\mcA_{ijk}^{(s,r)})^2Q_{\tilde{r}\tilde{r}}^{rr}(z_1)Q_{\tilde{s}\cdot}^{sl}(z_1)\bbQ^{lt}(z_2)Q_{\cdot\tilde{s}}^{ts}(z_2)\big]+\mrO(C_{\eta_0}N^{-\omega})\\
		&=V_{stl,N}^{(3)}(z_1,z_2)\sum_{r\neq s}^3\mfm_r(z_1)+\mrO(C_{\eta_0}N^{-\omega}),
	\end{align*}
	and
	\begin{align*}
		&\frac{1}{N^2}\sum_{r\neq s}^3\sum_{w_1\neq w_2}^3\sum_{i,j,k=1}^{m,n,p}\mbE\big[\mcA_{ijk}^{(s,r)}Q_{\tilde{r}\cdot}^{rl}(z_1)Q_{\cdot\tilde{w}_1}^{lw_1}(z_2)\mcA_{ijk}^{(w_1,w_2)}Q_{\tilde{w}_2\cdot}^{w_2t}(z_2)Q_{\cdot\tilde{s}}^{ts}(z_2)\big]\\
		&=\frac{1}{N^2}\sum_{r\neq s}^3\sum_{i,j,k=1}^{m,n,p}\mbE\big[(\mcA_{ijk}^{(s,r)})^2Q_{\tilde{r}\cdot}^{rl}(z_1)Q_{\cdot\tilde{r}}^{lr}(z_2)Q_{\tilde{s}\cdot}^{st}(z_2)Q_{\cdot\tilde{s}}^{ts}(z_2)\big]+\mrO(C_{\eta_0}N^{-\omega})\\
		&=V_{st,N}^{(3)}(z_2,z_2)\sum_{r\neq s}^3V_{rl,N}^{(3)}(z_1,z_2)+\mrO(C_{\eta_0}N^{-\omega}),
	\end{align*}
	and
	\begin{align*}
		&\frac{1}{N^2}\sum_{r\neq s}^3\sum_{w_1\neq w_2}^3\sum_{i,j,k=1}^{m,n,p}\mbE\big[\mcA_{ijk}^{(s,r)}Q_{\tilde{r}\cdot}^{rl}(z_1)\bbQ^{lt}(z_2)Q_{\cdot\tilde{w}_1}^{tw_1}(z_2)\mcA_{ijk}^{(w_1,w_2)}Q_{\tilde{w}_2\tilde{s}}^{w_2s}(z_2)\big]\\
		&=\frac{1}{N^2}\sum_{r\neq s}^3\sum_{i,j,k=1}^{m,n,p}\mbE\big[(\mcA_{ijk}^{(s,r)})^2Q_{\tilde{r}\cdot}^{rl}(z_1)\bbQ^{lt}(z_2)Q_{\cdot\tilde{r}}^{tr}(z_2)Q_{\tilde{s}\tilde{s}}^{ss}(z_2)\big]+\mrO(C_{\eta_0}N^{-\omega})\\
		&=\mfm_s(z_2)\sum_{r\neq s}^3V_{rtl,N}^{(3)}(z_1,z_2)+\mrO(C_{\eta_0}N^{-\omega})
	\end{align*}
	and \(\sum_{i,j,k=1}^{m,n,p}|\epsilon_{ijk}^{(2)}|=\mrO(\eta_0^{-5}N^{-1/2})\), so we have
	\begin{align*}
		&(z_1+\mfm(z_1)-\mfm_s(z_1))V_{stl,N}^{(3)}(z_1,z_2)=-\mfm_s(z_2)\sum_{r\neq s}^3V_{rtl,N}^{(3)}(z_1,z_2)-\delta_{st}V_{st,N}^{(3)}(z_2,z_2)\\
		&-V_{st,N}^{(3)}(z_2,z_2)\sum_{r\neq s}^3V_{rl,N}^{(3)}(z_1,z_2)+\mrO(C_{\eta_0}N^{-\omega}),
	\end{align*}
	combined with Theorem \ref{Thm of approximation}, we have
	\begin{small}
	\begin{align*}
		&V_{stl,N}^{(3)}(z_1,z_2)=\mfc_s^{-1}g_s(z_1)\Big(\delta_{st}V_{st,N}^{(3)}(z_2,z_2)+g_s(z_2)\sum_{r\neq s}^3V_{rtl,N}^{(3)}(z_1,z_2)+V_{st,N}^{(3)}(z_2,z_2)\sum_{r\neq s}^3V_{rl,N}^{(3)}(z_1,z_2)\Big)+\mrO(C_{\eta_0}N^{-\omega}).
	\end{align*}
	\end{small}\noindent
	Now, for fixed \(1\leq l\leq3\), define
	\begin{align*}
		\bbV_{l,N}^{(3)}(z_1,z_2):=[V_{stl,N}^{(3)}(z_1,z_2)]_{3\times3},
	\end{align*}
	then we have
	\begin{align}
		\lim_{N\to\infty}\bbV_{l,N}^{(3)}(z_1,z_2):&=\bbPi^{(3)}(z_1,z_2)^{-1}\diag(\mfc^{-1}\circ\bbg(z_1))\diag(\bbV_{\cdot l}^{(3)}(z_2,z_2))(\bbI_3+\bbS_3\bbV^{(3)}(z_1,z_2)).\label{Eq of bbV_l d=3}
	\end{align}
	where \(\bbV_{\cdot l}^{(3)}(z_1,z_2)\) is the \(l\)-th column of \(\bbV^{(3)}(z_1,z_2)\) defined in (\ref{Eq of bbV limit d=3}). Once we obtain the limiting value of all \(V_{stl,N}^{(3)}(z_1,z_2)\), the limiting expression for \(\mcV_{st,N}^{(3)}(z_1,z_2)\) follows:
	\begin{align}
		\mcV_{st}^{(3)}(z_1,z_2)=\sum_{l\neq s}^3V_{slt}^{(3)}(z_1,z_2).\label{Eq of mcV limiting d=3}
	\end{align}
{\bf System equations for \(\widetilde{\mcW}_{st,N}^{(3)}(z_1,z_2)\) in (\ref{Eq of capital W}):} By Lemma \ref{Thm of Entrywise almost sure convergence}, we know that 
	$$\Cov(N^{-1}\tr(\bbQ^{ss}(z_1)\circ\bbQ^{ss}(z_2)),N^{-1}\tr(\bbQ^{tt}(z_1)\circ(\bbQ^{tr}(z_2)\bbQ^{rt}(z_2))))=\mrO(C_{\eta_0}N^{-\omega}),$$
	where \(s,t,r\in\{1,2,3\}\). Hence, it is enough to compute \(N^{-1}\mbE[\tr(\bbQ^{tt}(z_1)\circ(\bbQ^{tr}(z_2)\bbQ^{rt}(z_2)))]\) and \(N^{-1}\mbE[\tr(\bbQ^{ss}(z_1)\circ\bbQ^{ss}(z_2))]\), respectively. By (\ref{Eq of H3 d=3}), we know that \(\lim_{N\to\infty}N^{-1}\mbE[\tr(\bbQ^{ss}(z_1)\circ\bbQ^{ss}(z_2))]=\mfc_s^{-1}g_s(z_1)g_s(z_2)\). Next, let us  define
	$$\mathring{V}_{st,N}^{(3)}(z_1,z_2)=N^{-1}\mathbb{E}\big[\operatorname{tr}(\bbQ^{ss}(z_1)\circ(\bbQ^{st}(z_2)\bbQ^{ts}(z_2)))\big].$$
	Similarly, by the cumulant expansion (\ref{Eq of cumulant expansion}), we have
	\begin{align*}
		&z_1\mathring{V}_{st,N}^{(3)}(z_1,z_2)=\frac{1}{N^{3/2}}\sum_{r\neq s}^3\sum_{i,j,k=1}^{m,n,p}\mbE\big[X_{ijk}\mcA_{ijk}^{(s,r)}Q_{\tilde{r}\tilde{s}}^{rs}(z_1)Q_{\tilde{s}\cdot}^{st}(z_2)Q_{\cdot\tilde{s}}^{ts}(z_2)\big]-V_{st,N}^{(3)}(z_1,z_2)\\
		&=\frac{1}{N^{3/2}}\Big(\sum_{r\neq s}^3\sum_{i,j,k=1}^{m,n,p}\mbE\big[\mcA_{ijk}^{(s,r)}\partial_{ijk}^{(1)}\{Q_{\tilde{r}\tilde{s}}^{rs}(z_1)Q_{\tilde{s}\cdot}^{st}(z_2)Q_{\cdot\tilde{s}}^{ts}(z_2)\}\big]+\epsilon_{ijk}^{(2)}\Big)-V_{st,N}^{(3)}(z_1,z_2)\\
		&=-\frac{1}{N^2}\sum_{r\neq s}^3\sum_{w_1\neq w_2}^3\sum_{i,j,k=1}^{m,n,p}\mbE\big[\mcA_{ijk}^{(s,r)}Q_{\tilde{r}\tilde{w}_1}^{rw_1}(z_1)\mcA_{ijk}^{(w_1,w_2)}Q_{\tilde{w}_2\tilde{s}}^{w_2s}(z_1)Q_{\tilde{s}\cdot}^{st}(z_2)Q_{\cdot\tilde{s}}^{ts}(z_2)\big]-V_{st,N}^{(3)}(z_1,z_2)\\
		&-\frac{2}{N^2}\sum_{r\neq s}^3\sum_{w_1\neq w_2}^3\sum_{i,j,k=1}^{m,n,p}\mbE\big[\mcA_{ijk}^{(s,r)}Q_{\tilde{r}\tilde{s}}^{rs}(z_1)Q_{\tilde{s}\tilde{w}_1}^{sw_1}\mcA_{ijk}^{(w_1,w_2)}Q_{\tilde{w}_2\cdot}^{w_2t}(z_2)Q_{\cdot\tilde{s}}^{ts}(z_2)\big],
	\end{align*}
	by Lemma \ref{Lem of minor terms} and Theorem \ref{Thm of approximation}, we have
	\begin{align*}
		&\frac{1}{N^2}\sum_{r\neq s}^3\sum_{w_1\neq w_2}^3\sum_{i,j,k=1}^{m,n,p}\mbE\big[\mcA_{ijk}^{(s,r)}Q_{\tilde{r}\tilde{w}_1}^{rw_1}(z_1)\mcA_{ijk}^{(w_1,w_2)}Q_{\tilde{w}_2\tilde{s}}^{w_2s}(z_1)Q_{\tilde{s}\cdot}^{st}(z_2)Q_{\cdot\tilde{s}}^{ts}(z_2)\big]\\
		&=\frac{1}{N^2}\sum_{r\neq s}^3\sum_{i,j,k=1}^{m,n,p}\mbE\big[(\mcA_{ijk}^{(s,r)})^2Q_{\tilde{r}\tilde{r}}^{rr}(z_1)Q_{\tilde{s}\tilde{s}}^{ss}(z_1)Q_{\tilde{s}\cdot}^{st}(z_2)Q_{\cdot\tilde{s}}^{ts}(z_2)\big]+\mrO(C_{\eta_0}N^{-\omega})\\
		&=\mathring{V}_{st,N}^{(3)}(z_1,z_2)\sum_{r\neq s}^3\mfm_r(z_1)+\mrO(C_{\eta_0}N^{-\omega})=\mathring{V}_{st,N}^{(3)}(z_1,z_2)\sum_{r\neq s}^3g_r(z_1)+\mrO(C_{\eta_0}N^{-\omega}),
	\end{align*}
	and
	\begin{align*}
		&\frac{1}{N^2}\sum_{r\neq s}^3\sum_{w_1\neq w_2}^3\sum_{i,j,k=1}^{m,n,p}\mbE\big[\mcA_{ijk}^{(s,r)}Q_{\tilde{r}\tilde{s}}^{rs}(z_1)Q_{\tilde{s}\tilde{w}_1}^{sw_1}\mcA_{ijk}^{(w_1,w_2)}Q_{\tilde{w}_2\cdot}^{w_2t}(z_2)Q_{\cdot\tilde{s}}^{ts}(z_2)\big]=\mrO(C_{\eta_0}N^{-1})
	\end{align*}
	i.e.
	$$\mathring{V}_N^{(3)}(z_1,z_2)=\mfc_s^{-1}g_s(z_1)V_{st,N}^{(3)}(z_1,z_2)+\mrO(C_{\eta_0}N^{-\omega}).$$
	Define \(\mathring{\bbV}_{st,N}^{(3)}(z_1,z_2):=[\mathring{V}_{st,N}^{(3)}(z_1,z_2)]_{3\times3}\), then we can conclude that
	\begin{align}
		\lim_{N\to\infty}\Vert\mathring{\bbV}_N^{(3)}(z_1,z_2)-\diag(\mfc^{-1}\circ\bbg(z_1))\bbV^{(3)}(z_1,z_2)\Vert=0,\label{Eq of bbV circ d=3}
	\end{align}
	where \(\bbV^{(3)}(z_1,z_2)\) is given in (\ref{Eq of bbV limit d=3}). Hence, by (\ref{Eq of capital W}) and (\ref{Eq of Ap L4 d=3}), \(\mcW_{st,N}^{(3)}(z_1,z_2)\) is given as
    \begin{small}
    \begin{align}
		&\mcW_{st}^{(3)}(z_1,z_2):=\mfc_s^{-1}g_s(z_1)g_s(z_2)\sum_{l\neq s}^3\mcB_{(4)}^{(s,l)}\mathring{V}_{lt}^{(3)}(z_1,z_2)+\mathring{V}_{st}^{(3)}(z_1,z_2)\sum_{l\neq s}^3\mcB_{(4)}^{(s,l)}\mfc_l^{-1}g_l(z_1)g_l(z_2).\label{Eq of mcW limiting d=3}
	\end{align}
    \end{small}

\section{CLT for the LSS when \texorpdfstring{$d=3$}{d=3}}\label{Sec of CLT}
\setcounter{equation}{0}
\def\theequation{\thesection.\arabic{equation}}
\setcounter{subsection}{0}
In this section, we establish the central limit theorem for the linear spectral statistics of \(\bbM\) in (\ref{Eq of N and Q}) when \(d=3\). Precisely, we consider the following family of functions:
\begin{align}
    \mathfrak{F}_3:=\{f(z):f\text{ is analytic on an open set containing the interval }[-\max\{\zeta,\mfv_3\},\max\{\zeta,\mfv_3\}]\},\label{Eq of analytic function d=3}
\end{align}
where $\zeta$ \eqref{Eq of support boundary} is the boundary of LSD $\nu$ and $\mfv_3$ is defined in Theorem \ref{Thm of Extreme eigenvalue N d=3}. For any \(f\in\mathfrak{F}_3\), the LSS of \(\bbM\) is defined as follows:
\begin{align}
	\mathcal{L}_{\bbM}(f):=\frac{1}{N}\sum_{l=1}^Nf(\lambda_l)=\int_{\mbR}f(x)\nu_N(dx),\label{Eq of LSS d=3}
\end{align}
where \(\lambda_1\geq\lambda_2\geq\cdots\geq\lambda_N\) are the eigenvalues of \(\bbM\) and \(\nu_N=N^{-1}\sum_{l=1}^N\delta_{\lambda_l}\) is the ESD of \(\bbM\). By Theorem \ref{Thm of approximation}, we know that the ESD \(\nu_N\) converges to the LSD \(\nu\) in Theorem \ref{Lem of finite support} almost surely, so let
\begin{align}
	G_N(f):=N\int_{-\infty}^{\infty}f(x)(\nu_N(dx)-\nu(dx))=N\Big(\mcL_{\bbM}(f)-\int_{-\infty}^{\infty}f(x)\nu(dx)\Big),\label{Eq of Gf d=3}
\end{align}
we establish that
\begin{thm}\label{Thm of CLT LSS d=3}
	Under Assumptions {\rm \ref{Ap of general noise}} and {\rm \ref{Ap of dimension}}, when \(d=3\), let \(\mfC_{1}\) and \(\mfC_2\) be two disjoint rectangular contours with vertices of \(\pm E_{1}\pm{\rm i}\eta_{1}\) and \(\pm E_2\pm{\rm i}\eta_2\), respectively, such that \(E_{1},E_2\geq\max\{\zeta,\mfv_3\}+t\), where \(t>0\) is fixed constant, and \(\eta_{1},\eta_2>0\) are sufficiently small. Then for any \(f\in\mathfrak{F}_3\) in \eqref{Eq of analytic function d=3}, we have
    $$(G_N(f)-\xi_N^{(3)})/\sigma_N^{(d)}\overset{d}{\longrightarrow}\mcN(0,1),$$
    where
	\begin{align*}
		\xi_N^{(3)}&:=-\frac{1}{2\pi{\rm i}}\oint_{\mfC_1}f(z)\mu_N^{(3)}(z)dz,\\
        (\sigma_N^{(3)})^2&:=-\frac{1}{4\pi^2}\oint_{\mfC_1}\oint_{\mfC_2}f(z_1)f(z_2)\mcC_N^{(3)}(z_1,z_2)dz_1dz_2,
	\end{align*}
	and the mean function \(\mu_N^{(3)}(z)\) and the covariance function \(\mcC_N^{(3)}(z_1,z_2)\) are defined in {\rm (\ref{Eq of mean function d=3})} and {\rm (\ref{Eq of covariance function d=3})}.
\end{thm}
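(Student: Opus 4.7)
The plan is to follow the contour-integral framework of Chapter 9 of Bai-Silverstein, adapted to the correlated matrix $\bbM$. First I would use Theorem \ref{Thm of Extreme eigenvalue N d=3} to pass to the event $\mcE_\bbM := \{\Vert\bbM\Vert \leq v_B^{(3)} + t/2\}$, which has probability $1 - \mro(N^{-l})$ for any $l > 0$. On this event, all eigenvalues of $\bbM$ lie inside any rectangle contour $\mfC$ with vertices $\pm E_0 \pm {\rm i}\eta_0$ satisfying $E_0 \geq v_B^{(3)} + t$, so by Cauchy's theorem and the analyticity of $f$ on a neighborhood of $[-v_B^{(3)}, v_B^{(3)}]$ guaranteed by $f\in\mathfrak{F}_3$,
\begin{equation*}
G_N(f)\, 1_{\mcE_\bbM} = -\frac{1}{2\pi{\rm i}}\oint_{\mfC} f(z)\{\tr(\bbQ(z)) - Ng(z)\}\, dz \cdot 1_{\mcE_\bbM}.
\end{equation*}
Since $\mbP(\mcE_\bbM^c) = \mro(N^{-l})$, it suffices to study the CLT for the contour integral of $\tr(\bbQ(z)) - Ng(z)$.

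Next I would decompose $\mfC = \mfC^h \cup \mfC^v$ where $\mfC^h := \{x \pm {\rm i}\eta_0 : x \in [-E_0, E_0]\}$ is the horizontal part (contained in $\mcS_{\eta_0}$) and $\mfC^v$ consists of the two short vertical segments. On $\mfC^h$, write
\begin{equation*}
\tr(\bbQ(z)) - Ng(z) = \{\tr(\bbQ(z)) - \mbE[\tr(\bbQ(z))]\} + \{\mbE[\tr(\bbQ(z))] - Ng(z)\},
\end{equation*}
so that the deterministic second term, when integrated against $f(z)/(2\pi{\rm i})$, yields the asymptotic mean $\xi_N^{(3)}$ by Theorem \ref{Thm of Mean function}. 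The first (centered) term I would handle via Propositions \ref{Main Thm of Tightness} and \ref{Main Thm of CLT}: tightness of the process $z \mapsto \tr(\bbQ(z)) - \mbE[\tr(\bbQ(z))]$ on $\mcS_{\eta_0}$, together with convergence of the finite-dimensional joint characteristic functions of the real and imaginary parts at finitely many points on $\mfC^h$ to those of a complex Gaussian process whose covariance kernel is $\mcC_N^{(3)}(z_1, z_2)$ of Theorem \ref{Thm of Variance}. Combining the two via the continuous mapping theorem applied to the bounded linear functional $h \mapsto -(2\pi{\rm i})^{-1}\oint_{\mfC^h} f(z) h(z)\, dz$ gives
\begin{equation*}
-\frac{1}{2\pi{\rm i}}\oint_{\mfC^h} f(z)\{\tr(\bbQ(z)) - Ng(z)\}\, dz - \xi_N^{(3)} \xrightarrow{d} \mcN(0, (\sigma_N^{(3)})^2),
\end{equation*}
with variance computed as the double contour integral of the covariance kernel.

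Finally, I would show that the vertical part contributes negligibly, namely
\begin{equation*}
\lim_{\eta_0 \downarrow 0} \limsup_{N \to \infty} \mbE\Bigl|\oint_{\mfC^v} f(z)\{\tr(\bbQ(z)) - Ng(z)\}\, dz\Bigr|^2 = 0,
\end{equation*}
using the uniform boundedness $\Vert\bbQ(z)\Vert \leq \eta_0^{-1}$ on $\mcE_\bbM$ together with the bound $\Var(\tr(\bbQ(z))) \leq C_{\eta_0, \mfc}$ from Theorem \ref{Thm of Variance} and the approximation $|\mbE[\tr(\bbQ(z))] - Ng(z)| \leq C_{\eta_0}$ from Theorem \ref{Thm of Mean function}, combined with the fact that $|\mfC^v|$ has length $2\eta_0$. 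Then a standard $\eta_0 \downarrow 0$ argument removes the vertical contribution. The main obstacle is really packaged in Propositions \ref{Main Thm of Tightness} and \ref{Main Thm of CLT}: tightness requires controlling $\mbE|\tr(\bbQ(z_1) - \bbQ(z_2)) - \mbE[\cdots]|^2 / |z_1 - z_2|^2$ uniformly on $\mcS_{\eta_0}$ (handled by the resolvent identity $\bbQ(z_1) - \bbQ(z_2) = (z_1 - z_2)\bbQ(z_1)\bbQ(z_2)$ combined with a variance bound of the Lemma \ref{Thm of Entrywise almost surly convergence} type applied to the new product kernel), while the characteristic function step requires differentiating $\mbE[\exp({\rm i}s\, \tr(\bbQ(z))^c)]$ in $s$, performing a cumulant expansion analogous to Theorem \ref{Thm of Variance}, and identifying the resulting Ricatti-type ODE as producing Gaussian characteristic functions with the variance from $\mcC_N^{(3)}$. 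Once these two propositions are established, the CLT assembly above is routine.
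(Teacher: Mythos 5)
Your overall structure mirrors the paper's proof exactly: pass to the high-probability event $\mcE_\bbM$ via Theorem \ref{Thm of Extreme eigenvalue N d=3}, represent $G_N(f)$ as a contour integral of $\tr(\bbQ(z))-Ng(z)$ against $f$, split the rectangle $\mfC$ into a horizontal piece $\mfC^h\subset\mcS_{\eta_0}$ and a vertical piece $\mfC^v$, establish the CLT on $\mfC^h$ by combining the tightness of Theorem \ref{Thm of Tightness}, the characteristic-function argument of Theorem \ref{Thm of CLT}, and the mean and covariance functionals from Theorems \ref{Thm of Mean function} and \ref{Thm of Variance}, and finally argue that the vertical contribution vanishes as $\eta_0\downarrow 0$. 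All of that is what the paper does.

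The gap is in your last step. You bound the vertical contribution using $\Vert\bbQ(z)\Vert\leq\eta_0^{-1}$ together with $\Var(\tr\bbQ(z))\leq C_{\eta_0,\mfc}$ and $|\mbE[\tr\bbQ(z)]-Ng(z)|\leq C_{\eta_0}$, and then multiply by the arc length $|\mfC^v|=O(\eta_0)$ and let $\eta_0\downarrow 0$. But those constants carry $\eta_0$ in the subscript precisely because they came from bounds valid on all of $\mcS_{\eta_0}$, and they are permitted to diverge as $\eta_0\downarrow0$; with those stated bounds the product with $O(\eta_0)$ has no reason to vanish. The correct observation, which the paper makes explicitly, is that for $z\in\mfC^v$ one has $\Re(z)=\pm E_0$ with $E_0\geq\max\{\zeta,\mfv_3\}+t$, so conditional on $\mcE_\bbM$ the distance of $z$ to the spectrum is at least $t/2$ and $\Vert\bbQ(z)\Vert\leq 2/t$ \emph{uniformly in $\eta_0$}. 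Rerunning the arguments of Theorems \ref{Thm of Variance} and \ref{Thm of Mean function} with this $t$-dependent (rather than $\eta_0$-dependent) resolvent bound yields $\eta_0$-independent constants $C_t$, after which the $\eta_0\downarrow0$ limit is genuinely routine. You also need, and the paper supplies separately in Lemma \ref{Lem of analytic}, the fact that $\mu_N^{(3)}(z)$ and $\mcC_N^{(3)}(z_1,z_2)$ remain analytic on $\mfC^v$; this in turn requires checking that the stability matrix $\bbPi^{(3)}(z_1,z_2)$ stays invertible for $z$ near the real axis but outside $[-v_B^{(3)},v_B^{(3)}]$, which is not automatic and is the real content of the vertical step. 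Once you replace your $\eta_0$-subscripted constants by $t$-dependent ones and cite the analyticity statement, your assembly is the paper's.
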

By Theorem \ref{Thm of Extreme eigenvalue N d=3}, for any fixed \(t>0\), we have \(\mbP(\Vert\bbM\Vert>\mfv_3+t)\leq\mro(N^{-l})\) for any \(l>0\), then \(G_N(f)1_{\Vert\bbM\Vert\leq\mfv_3+t}\overset{\mbP}{\longrightarrow}G_N(f)\). Thus, conditional on \(\Vert\bbM\Vert\leq\mfv_3+t\), by the Cauchy integration theorem, we have
\begin{align*}
    G_N(f)=-\frac{1}{2\pi{\rm i}}\oint_{\mathfrak{C}}f(z)\{\tr(\bbQ(z))-Ng(z)\}dz,
\end{align*}
where \(\mathfrak{C}\) is a rectangle contour with vertices of \(\pm E_0\pm{\rm i}\eta_0\) such that \(E_0\geq\max\{\zeta,\mfv_3\}+t\), where \(t>0\) is a fixed constant and \(\eta_0>0\) is sufficiently small, \(\zeta\) is the boundary of the LSD defined in (\ref{Eq of support boundary}). Consequently, to establish the CLT for \(G_N(f)\), it is enough to establish the CLT for \(\tr(\bbQ(z))-Ng(z)\). The proof proceeds in two steps: first, we show that the process \(\tr(\bbQ(z))-\mbE[\tr(\bbQ(z))]\) is tight on \(\mcS_{\eta_0}\), then, we prove that the joint characteristic function of the real part and imaginary part of \(\tr(\bbQ(z))-\mbE[\tr(\bbQ(z))]\) converges to the characteristic function of a normal vector.
\subsection{Tightness}\label{Sec of Tightness}
\begin{thm}\label{Thm of Tightness}
	Under Assumptions {\rm \ref{Ap of general noise}} and {\rm \ref{Ap of dimension}}, for any \(\eta_0>0\), \(\tr(\boldsymbol{Q}(z))-\mathbb{E}[\tr(\boldsymbol{Q}(z))]\) is tight in $\mathcal{S}_{\eta_0}$, i.e. 
	$$\sup_{\substack{z_1,z_2\in\mathcal{S}_{\eta_0}\\z_1\neq z_2}}\frac{\mathbb{E}\left[|\tr(\boldsymbol{Q}(z_1)-\boldsymbol{Q}(z_2))-\mathbb{E}[\tr(\boldsymbol{Q}(z_1)-\boldsymbol{Q}(z_2))|^2\right]}{|z_1-z_2|^2}<C_{\eta_0}.$$
\end{thm}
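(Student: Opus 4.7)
The natural starting point is the resolvent identity
\begin{align*}
    \bbQ(z_1)-\bbQ(z_2)=(z_1-z_2)\,\bbQ(z_1)\bbQ(z_2),
\end{align*}
which gives $\tr(\bbQ(z_1)-\bbQ(z_2))^c=(z_1-z_2)\tr(\bbQ(z_1)\bbQ(z_2))^c$. The tightness bound is therefore equivalent to showing
\begin{align*}
    \sup_{z_1,z_2\in\mcS_{\eta_0}}\mathbb{E}\bigl[|\tr(\bbQ(z_1)\bbQ(z_2))^c|^2\bigr]\leq C_{\eta_0},
\end{align*}
i.e.\ that the scalar quantity $T(z_1,z_2):=\tr(\bbQ(z_1)\bbQ(z_2))$ has uniformly $\mrO(1)$ variance on the stability region. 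Note that only $\|\bbQ(z_i)\|\leq\eta_0^{-1}$ and $|T(z_1,z_2)|\leq\eta_0^{-2}N$ are automatic, so Lemma \ref{Thm of Entrywise almost surly convergence} applied directly would only give a variance bound of order $N^{2-2\omega}$, which is too weak. A sharper analysis is required.

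The plan is to mimic the proof of Theorem \ref{Thm of Variance}, but with one of the resolvents replaced by the product $\bbQ(z_1)\bbQ(z_2)$. Block-decomposing $T=\sum_{s,t=1}^{3}\tr(\bbQ^{st}(z_1)\bbQ^{ts}(z_2))$ and using $\bbM\bbQ(z_1)=z_1\bbQ(z_1)+\bbI_N$ to pull out one factor of $\bbM$, I would multiply by the centered conjugate $T(\bar z_1,\bar z_2)^c$ and apply the cumulant expansion of Lemma \ref{Lem of Stein’s equation} to each $X_{ijk}$. The classification operators $\mathscr{D}$ and $\mathscr{O}$ from \eqref{Eq of operator D}--\eqref{Eq of operator O}, combined with Lemma \ref{Lem of minor terms} and Lemma \ref{Cor of minor terms}, split every monomial into a leading contribution (made of purely diagonal block entries) and a remainder controlled uniformly by the entrywise concentration of Lemma \ref{Thm of Entrywise almost surly convergence}; these remainders are of order $\mrO(\eta_0^{-C}N^{-\omega})$, exactly as in the single-resolvent argument.

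After collecting the leading terms, one obtains a closed linear system for the $3\times 3$ array
\begin{align*}
    \widetilde{\mcC}_{st,N}(z_1,z_2;w_1,w_2):=\Cov\bigl(\tr(\bbQ^{ss}(z_1)\bbQ^{ss}(z_2)),\,\tr(\bbQ^{tt}(w_1)\bbQ^{tt}(w_2))\bigr),
\end{align*}
together with analogous off-diagonal-block covariances that feed into $T$, and the coefficient matrix of this system is, up to a negligible perturbation, the same $\bbPi^{(3)}$-type matrix that appeared in Theorem \ref{Thm of Variance}. Its invertibility together with a uniform bound on the norm of its inverse over $\mcS_{\eta_0}$ is furnished by Propositions \ref{Pro of invertible matrices} and \ref{Pro of inverse norm}. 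Solving the system then yields $\widetilde{\mcC}_{st,N}=\mrO(1)$ uniformly, hence $\Var(T(z_1,z_2))=\mrO(1)$, which combined with the resolvent-identity step finishes the proof.

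The principal obstacle will be the proliferation of terms: differentiating the product $\bbQ(z_1)\bbQ(z_2)$ produces roughly twice as many resolvent factors in each monomial as in Theorem \ref{Thm of Variance}, so classifying each resulting summand as either leading (feeding into the $\widetilde{\mcC}$ system) or negligible is the most labor-intensive part, and one must keep careful track of which $z_i$ each resolvent factor carries. No genuinely new structural ingredient is required beyond this bookkeeping, because boundedness of $\bbQ(z_i)$ on $\mcS_{\eta_0}$, invertibility of the stability operator, and the decay estimates of Lemmas \ref{Lem of minor terms} and \ref{Cor of minor terms} are all already available.
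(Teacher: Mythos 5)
Your proposal follows essentially the same route as the paper's proof: the resolvent identity reduces tightness to a uniform bound on $\Var(\tr(\bbQ(z_1)\bbQ(z_2)))$, you block-decompose, apply the cumulant expansion against the centered conjugate, sort terms with $\mathscr{D}/\mathscr{O}$ and Lemmas \ref{Lem of minor terms}--\ref{Cor of minor terms} and the concentration of Lemma \ref{Thm of Entrywise almost surly convergence}, and finally invert a $\bbPi^{(3)}$-type stability matrix. The one structural feature your sketch does not anticipate is that the system you form is not closed in the ``$2$-vs-$2$'' covariances alone: pulling $\bbM$ through via $\bbQ(z_1)\bbM=z_1\bbQ(z_1)+\bbI_N$ produces a $\delta_{kl}\tr(\bbQ^{kk}(z_2))$ term, so the equations also couple to the ``$2$-vs-$1$'' covariances $\Cov\bigl(\tr(\bbQ^{ss}(z_1)),\tr(\bbQ^{kl}(z_1)\bbQ^{lk}(z_2))\bigr)$, and the paper must first bound those by solving a separate linear system of the same $\bbPi^{(3)}$ type, whose data feed back to the already-controlled $1$-vs-$1$ covariances of Theorem \ref{Thm of Variance}. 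This cascading hierarchy is the extra bookkeeping you would discover on execution; it is handled by the same machinery and does not require a genuinely new idea.
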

Similar to the proof of Theorem \ref{Thm of Variance}, several major terms will appear, e.g., \(\mcV_{st,N}^{(3)}(z_1,z_2)\) and \(\widetilde{\mcW}_{st,N}^{(3)}\) in \eqref{Eq of capital V} and \eqref{Eq of capital W}. For simplicity, we first define the following terms: for any \(s_1,s_2,t_1,t_2\in\{1,2,3\}\) and \(z_1,z_2\in\mcS_{\eta_0}\), let
    \begin{align}
		\mcC_{s_1t_1,s_2t_2,N}^{(3)}(z_1,z_2):=\Cov\big(\tr(\bbQ^{s_1t_1}(z_1)\bbQ^{t_1s_1}(z_2)),\tr(\bbQ^{s_2t_2}(z_1)\bbQ^{t_2s_2}(z_2))\big),\label{Eq of mcC tightness d=3}
	\end{align}
    and
    \begin{align}
		\left\{\begin{array}{c}
			\mcC_{s_1t_1,s_2,N}^{(3)}(z_1,z_2):=\Cov(\tr(\bbQ^{s_1t_1}(z_1)\bbQ^{t_1s_1}(z_2)),\tr(\bbQ^{s_2s_2}(z_1))),\\
			\mcC_{s_2,s_1t_1,N}^{(3)}(z_1,z_2):=\Cov(\tr(\bbQ^{s_2s_2}(z_1)),\tr(\bbQ^{s_1t_1}(z_1)\bbQ^{t_1s_1}(z_2))).
		\end{array}\right.\label{Eq of mcC tightness 2 d=3}
	\end{align}
Moreover, for \(k_1,k_2,l_1,l_2\in\{1,2,3\}\), we define
\begin{align}
	&\mcV_{k_1l_1,k_2l_2,N}^{(3)}(z_1,z_2):=\frac{1}{N}\sum_{s\neq k_1}^3\sum_{r=1}^2\mbE\big[\tr\big(\bbQ^{k_1k_2}(\bar{z}_{3-r})\bbQ^{k_2l_2}(\bar{z}_r)\bbQ^{l_2s}(\bar{z}_{3-r})\bbQ^{sl_1}(z_1)\bbQ^{l_1k_1}(z_2)\big)\big],\label{Eq of mcV tightness d=3}\\
    &\mcV_{k_1l_1,k_2,N}^{(3)}(z_1,z_2):=N^{-1}\sum_{r\neq k_1}^3\mathbb{E}\left[\tr(\bbQ^{rl_1}(z_1)\bbQ^{l_1k_1}(z_2)\bbQ^{k_1k_2}(\bar{z}_1)\bbQ^{k_2r}(\bar{z}_2))\right].\label{Eq of mcV tightness d=3 2 vs 1}
\end{align}
By notations in \eqref{Eq of vector notation d=3}, we further define
\begin{small}
\begin{align}
    &\mcW_{11,11,N}^{(3)}(z,z)\label{Eq of mcW tightness d=3}\\
    :&=\frac{1}{N^2}\sum_{r=2}^3\sum_{w=1}^2\sum_{t_1,t_2}^{(1,r)}\Vert\bba^{(5-r)}\Vert_4^4\mbE\big[\tr(\bbQ^{t_2t_2}(z_2)\circ(\bbQ^{t_21}(\bar{z}_w)\bbQ^{11}(\bar{z}_{3-w})\bbQ^{1t_2}(\bar{z}_w)))\cdot\tr(\bbQ^{t_1t_1}(\bar{z}_w)\circ(\bbQ^{t_11}(z_1)\bbQ^{1t_1}(z_2)))\big]\notag\\
	&+\frac{1}{N^2}\sum_{r=2}^3\sum_{w=1}^2\sum_{t_1,t_2}^{(1,r)}\Vert\bba^{(5-r)}\Vert_4^4\mbE\big[\tr((\bbQ^{t_21}(z_1)\bbQ^{1t_2}(z_2))\circ(\bbQ^{t_21}(\bar{z}_w)\bbQ^{11}(\bar{z}_{3-w})\bbQ^{1t_2}(\bar{z}_w)))\cdot\tr(\bbQ^{t_1t_1}(\bar{z}_w)\circ\bbQ^{t_1t_1}(z_1))\big]\notag\\
	&+\frac{1}{N^2}\sum_{r=2}^3\sum_{w=1}^2\sum_{t_1,t_2}^{(1,r)}\Vert\bba^{(5-r)}\Vert_4^4\mbE\big[\tr(\bbQ^{11}(z_2)\circ(\bbQ^{11}(\bar{z}_w)\bbQ^{11}(\bar{z}_w)))\cdot\tr((\bbQ^{r1}(z_1)\bbQ^{1r}(z_2))\circ(\bbQ^{r1}(\bar{z}_{3-w})\bbQ^{1r}(\bar{z}_{3-w})))\big]\notag\\
	&+\frac{1}{N^2}\sum_{r=2}^3\sum_{w=1}^2\sum_{t_1,t_2}^{(1,r)}\Vert\bba^{(5-r)}\Vert_4^4\mbE\big[\tr((\bbQ^{11}(z_1)\bbQ^{11}(z_2))\circ(\bbQ^{11}(\bar{z}_w)\bbQ^{11}(\bar{z}_w)))\cdot\tr(\bbQ^{rr}(z_1)\circ(\bbQ^{r1}(\bar{z}_{3-w})\bbQ^{1r}(\bar{z}_{3-w})))\big],\notag
\end{align}
\end{small}\noindent
and
\begin{small}
\begin{align}
    &\mcW_{11,1,N}^{(3)}(z_1,z_2)\label{Eq of mcW tightness d=3 2 vs 1}\\
    :&=\frac{1}{N^2}\sum_{r=2}^3\Vert\bba^{(5-r)}\Vert_4^4\mbE\big[\tr(\bbQ^{11}(z_1)\circ\bbQ^{11}(\bar{z}_2))\cdot\tr((\bbQ^{r1}(z_1)\bbQ^{1r}(z_2))\circ(\bbQ^{r1}(\bar{z}_2)\bbQ^{1r}(\bar{z}_2)))\big]\notag\\
	&+\frac{1}{N^2}\sum_{r=2}^3\Vert\bba^{(5-r)}\Vert_4^4\mbE\big[\tr(\bbQ^{11}(z_1)\circ(\bbQ^{11}(\bar{z}_2)\bbQ^{11}(\bar{z}_2)))\cdot\tr((\bbQ^{r1}(z_1)\bbQ^{1r}(z_2))\circ\bbQ^{rr}(\bar{z}_2))\big]\notag\\
	&+\frac{1}{N^2}\sum_{r=2}^3\Vert\bba^{(5-r)}\Vert_4^4\mbE\big[\tr((\bbQ^{11}(z_1)\bbQ^{11}(z_2))\circ\bbQ^{11}(\bar{z}_2))\cdot\tr(\bbQ^{rr}(z_2)\circ(\bbQ^{r1}(\bar{z}_2)\bbQ^{1r}(\bar{z}_2)))\big]\notag\\
	&+\frac{1}{N^2}\sum_{r=2}^3\Vert\bba^{(5-r)}\Vert_4^4\mbE\big[\tr((\bbQ^{11}(z_1)\bbQ^{11}(z_2))\circ(\bbQ^{11}(\bar{z}_2)\bbQ^{11}(\bar{z}_2)))\cdot\tr(\bbQ^{rr}(z_2)\circ\bbQ^{rr}(\bar{z}_2))\big],\notag
\end{align}
\end{small}\noindent
where \(t_1,t_2\in\{1,2,3\}\) and the notation $\sum_{t_1,t_2}^{(1,r)}$ means that the summation of \(t_1\) and \(t_2\) are over \(\{1,2,3\}\backslash\{1,r\}\).
\begin{proof}
    By (II.19) in \cite{khorunzhy1996asymptotic}, we know that \(\bbQ(z_1)-\bbQ(z_2)=(z_1-z_2)\bbQ(z_1)\bbQ(z_2)\), then \((z_1-z_2)^{-1}\tr\left(\boldsymbol{Q}(z_1)-\boldsymbol{Q}(z_2)\right)=\tr\left(\boldsymbol{Q}(z_1)\boldsymbol{Q}(z_2)\right)\) for \(z_1\neq z_2\). Note that
	\begin{align}
		&\tr\left(\boldsymbol{Q}(z_1)\boldsymbol{Q}(z_2)\right)=\sum_{i=1}^3\tr\left(\boldsymbol{Q}^{ii}(z_1)\boldsymbol{Q}^{ii}(z_2)\right)+2\sum_{1\leq i<j\leq3}\tr\left(\boldsymbol{Q}^{ij}(z_1)\boldsymbol{Q}^{ji}(z_2)\right),\notag
	\end{align}
	so establishing tightness of \(\tr(\boldsymbol{Q}(z))-\mathbb{E}[\tr(\boldsymbol{Q}(z))]\) reduces to showing that
	$$\mcC_{s_1t_1,s_2t_2,N}^{(3)}(z_1,z_2)=\Cov\big(\tr(\bbQ^{s_1t_1}(z_1)\bbQ^{t_1s_1}(z_2)),\tr(\bbQ^{s_2t_2}(z_1)\bbQ^{t_2s_2}(z_2))\big)<C_{\eta_0}$$
	for any \(z_1,z_2\in\mcS_{\eta_0}\) and \(s_1,s_2,t_1,t_2\in\{1,2,3\}\). Similar to proof of Theorem \ref{Thm of Variance}, we will derive a system equation for all \(\mcC_{s_1t_1,s_2t_2,N}^{(3)}(z_1,z_2)\). For convenience, we only present the detailed calculation procedures of \(\mcC_{11,11,N}^{(3)}\) and omit the \((z_1,z_2)\) behind it, further assume \(\mcC_{11,11,N}^{(3)}\geq1\), otherwise it is bounded. By \(\boldsymbol{Q}(z)\bbM-z\boldsymbol{Q}(z)=\boldsymbol{I}_N\), we obtain that
	\begin{align}
		&z_1\mcC_{11,11,N}^{(3)}(z_1,z_2)=\mathbb{E}\left[\tr(\boldsymbol{Q}^{11}(z_1)\boldsymbol{Q}^{11}(z_2))\left\{\tr(\boldsymbol{Q}^{11}(\bar{z}_1)\boldsymbol{Q}^{11}(\bar{z}_2))\right\}^c\right]\notag\\
		&=\frac{1}{\sqrt{N}}\sum_{i,j,k=1}^{m,n,p}\mathbb{E}\left[X_{ijk}(c_kQ_{j\cdot}^{21}(z_1)Q_{\cdot i}^{11}(z_2)+b_jQ_{k\cdot}^{31}(z_1)Q_{\cdot i}^{11}(z_2))\tr(\boldsymbol{Q}^{11}(\bar{z}_1)\boldsymbol{Q}^{11}(\bar{z}_2))^c\right]\label{Eq of tightness major 1}\\
		&-\mcC_{1,11,N}^{(3)}(z_1,z_2),\label{Eq of tightness major 2}
	\end{align}
    where $\mcC_{1,11,N}^{(3)}(z_1,z_2)={\rm Cov}\left(\tr(\boldsymbol{Q}^{11}(z_2)),\tr(\boldsymbol{Q}^{11}(z_1)\boldsymbol{Q}^{11}(z_2))\right)$ by \eqref{Eq of mcC tightness 2 d=3}. Note that \(\mcC_{s_1t_1,s_2,N}^{(3)}(z_1,z_2)\) and \(\mcC_{s_2,s_1t_1,N}^{(3)}(z_1,z_2)\) are conjugate, to solve \(\mcC_{s_1t_1,s_2t_2,N}^{(3)}(z_1,z_2)\), we also need to obtain the system equation of \(\mcC_{s_1t_1,s_2,N}^{(3)}(z_1,z_2)\). Thus, we will calculate (\ref{Eq of tightness major 1}) and derive the system of equations for (\ref{Eq of tightness major 2}), respectively.

    \vspace{5mm}
    \noindent
    {\bf (\ref{Eq of tightness major 1}):} First, let 
    $$G_{ijk}^1(z_1,z_2):=c_kQ_{j\cdot}^{21}(z_2)Q_{\cdot i}^{11}(z_1)+b_jQ_{k\cdot}^{31}(z_2)Q_{\cdot i}^{11}(z_1).$$
    By the cumulant expansion (\ref{Eq of cumulant expansion}), we have
	\begin{align}
		&\eqref{Eq of tightness major 1}=\frac{1}{\sqrt{N}}\sum_{i,j,k=1}^{m,n,p}\mathbb{E}\left[X_{ijk}G_{ijk}^1(z_1,z_2)\{\tr(\boldsymbol{Q}^{11}(\bar{z}_1)\boldsymbol{Q}^{11}(\bar{z}_2))\}^c\right]=\notag\\
		&\frac{1}{\sqrt{N}}\sum_{i,j,k=1}^{m,n,p}\left(\sum_{\alpha=1}^3\frac{\kappa_{\alpha+1}}{\alpha!}\mathbb{E}\left[\partial_{ijk}^{(\alpha)}\big\{G_{ijk}^1(z_1,z_2)\{\tr(\boldsymbol{Q}^{11}(\bar{z}_1)\boldsymbol{Q}^{11}(\bar{z}_2))\}^c\big\}\right]+\epsilon_{ijk}^{(4)}\right).\notag
	\end{align}
    {\bf First derivatives:} When \(\alpha=1\), similar to the proofs for \(l=1\) in Theorem \ref{Thm of Variance}, by direct calculations, we can show the followings by Lemma \ref{Thm of Entrywise almost sure convergence}:
		\begin{align}
			&N^{-1/2}\sum_{i,j,k=1}^{m,n,p}\mathbb{E}\left[\partial_{ijk}^{(1)}\big\{G_{ijk}^1(z_1,z_2)\{\tr(\boldsymbol{Q}^{11}(\bar{z}_1)\boldsymbol{Q}^{11}(\bar{z}_2))\}^c\big\}\right]=\notag\\
			&-N^{-1}\mathbb{E}\left[\tr(\bbQ^{11}(z_2))\tr(\bbQ^{12}(z_1)\bbQ^{21}(z_2)+\bbQ^{13}(z_1)\bbQ^{31}(z_2))\{\tr(\bbQ^{11}(\bar{z}_1)\bbQ^{11}(\bar{z}_2))\}^c\right]\notag\\
			&-N^{-1}\mathbb{E}\left[\tr(\bbQ^{11}(z_1)\bbQ^{11}(z_2))\tr(\bbQ^{22}(z_1)+\bbQ^{33}(z_1))\tr(\bbQ^{11}(\bar{z}_1)\bbQ^{11}(\bar{z}_2))^c\right]+\mrO(C_{\eta_0}N^{-\omega})\notag\\
			&=-(V_{12,N}^{(3)}(z_1,z_2)+V_{13,N}^{(3)}(z_1,z_2))\mcC_{1,11,N}^{(3)}(z_1,z_2)-\mfm_1(z_2)(\mcC_{12,11,N}^{(3)}+\mcC_{13,11,N}^{(3)})+\mrO(C_{\eta_0}N^{-\omega})\notag\\
			&-V_{11,N}^{(3)}(z_1,z_2)(\mcC_{2,11,N}^{(3)}(z_1,z_2)+\mcC_{3,11,N}^{(3)}(z_1,z_2))-(\mfm_2(z_1)+\mfm_3(z_1)+\mrO(C_{\eta_0}N^{-\omega}))\mcC_{11,11,N}^{(3)}\notag
		\end{align}
		where \(V_{ij,N}^{(3)}(z_1,z_2)\) are defined in \eqref{Eq of V} and we use the same trick as (\ref{Eq of covariance trick}). Since
		\begin{align}
			&\partial_{ijk}^{(1)}\tr(\bbQ^{11}(z_1)\bbQ^{11}(z_2))=\sum_{s,t=1}^{m,m}\partial_{ijk}^{(1)}\big\{Q_{st}^{11}(z_1)Q_{st}^{11}(z_2)\big\}=\sum_{s,t=1}^{m,m}Q_{st}^{11}(z_2)\partial_{ijk}^{(1)}Q_{st}^{11}(z_1)+Q_{st}^{11}(z_1)\partial_{ijk}^{(1)}Q_{st}^{11}(z_2)\notag\\
			&=-N^{-1/2}\sum_{r_1\neq r_2}^3\mcA_{ijk}^{(r_1,r_2)}Q_{\tilde{t}_1\cdot}^{t_21}(z_1)\bbQ^{11}(z_2)Q_{\cdot \tilde{t}_3}^{1t_3}(z_1)-N^{-1/2}\sum_{r_1\neq r_2}^3\mcA_{ijk}^{(r_1,r_2)}Q_{\tilde{s}_1\cdot}^{s_21}(z_2)\bbQ^{11}(z_1)Q_{\cdot \tilde{s}_3}^{1s_3}(z_2),\notag
		\end{align}
		then by Lemma \ref{Lem of minor terms}, we have
		\begin{align}
			&N^{-1/2}\sum_{i,j,k=1}^{m,n,p}\mathbb{E}\left[G_{ijk}^1(z_1,z_2)\partial_{ijk}^{(1)}\tr(\boldsymbol{Q}^{11}(\bar{z}_1)\boldsymbol{Q}^{11}(\bar{z}_2))\right]=\mrO(\eta_0^{-5}N^{-1/2})\notag\\
			&-2N^{-1}\mathbb{E}\left[\tr(\bbQ^{11}(\bar{z}_1)\bbQ^{11}(\bar{z}_2)[\bbQ^{12}(\bar{z}_1)\bbQ^{21}(z_1)\bbQ^{11}(z_2)+\bbQ^{13}(\bar{z}_1)\bbQ^{31}(z_1)\bbQ^{11}(z_2)])\right]\notag\\
			&-2N^{-1}\mathbb{E}\left[\tr(\bbQ^{11}(\bar{z}_2)\bbQ^{11}(\bar{z}_1)[\bbQ^{12}(\bar{z}_2)\bbQ^{21}(z_1)\bbQ^{11}(z_2)+\bbQ^{13}(\bar{z}_2)\bbQ^{31}(z_1)\bbQ^{11}(z_2)])\right].\notag
		\end{align}
		For simplicity, we define
		\begin{align}
			&\mcV_{k_1l_1,k_2l_2,N}^{(3)}(z_1,z_2):=\frac{1}{N}\sum_{s\neq k_1}^3\sum_{r=1}^2\mbE\big[\tr\big(\bbQ^{k_1k_2}(\bar{z}_{3-r})\bbQ^{k_2l_2}(\bar{z}_r)\bbQ^{l_2s}(\bar{z}_{3-r})\bbQ^{sl_1}(z_1)\bbQ^{l_1k_1}(z_2)\big)\big],\notag
		\end{align}
		then we have
		\begin{align}
			&N^{-1/2}\sum_{i,j,k=1}^{m,n,p}\mathbb{E}\left[\partial_{ijk}^{(1)}\big\{G_{ijk}^1(z_1,z_2)\{\tr(\boldsymbol{Q}^{11}(\bar{z}_1)\boldsymbol{Q}^{11}(\bar{z}_2))\}^c\big\}\right]=\label{Eq of tightness d=3 1}\\
			&-2\mcV_{11,11,N}^{(3)}(z_1,z_2)-(1+V_{12,N}^{(3)}(z_1,z_2)+V_{13,N}^{(3)}(z_1,z_2))\mcC_{1,11,N}^{(3)}-\mfm_1(z_2)(\mcC_{12,11,N}^{(3)}+\mcC_{13,11,N}^{(3)})\notag\\
			&-V_{11,N}^{(3)}(z_1,z_2)(\mcC_{2,11,N}^{(3)}+\mcC_{3,11,N}^{(3)})-(\mfm_2(z_1)+\mfm_3(z_1)+\mrO(C_{\eta_0}N^{-\omega}))\mcC_{11,11,N}^{(3)}+\mrO(C_{\eta_0}N^{-\omega}),\notag
		\end{align}
		where we omit \((z_1,z_2)\) behind \(\mcC_{st,11,N}^{(3)}(z_1,z_2)\) and \(\mcC_{s,11,N}^{(3)}(z_1,z_2)\) for simplicity.

    \vspace{5mm}
    \noindent
    {\bf Second derivatives:} When \(\alpha=2\), similar to the proofs for \(l=2\) in Theorem \ref{Thm of Variance}, we claim that there is no major terms. Since \(|G_{ijk}^1(z_1,z_2)|\leq\mrO(\eta_0^{-2})\) and \(|\partial_{ijk}^{(1)}\tr(\bbQ^{11}(\bar{z}_1)\bbQ^{11}(\bar{z}_2))|\leq\mrO(\eta_0^{-3}N^{-1/2})\), we can show the sum over all \(i,j,k\) of following terms are minor by Lemma \ref{Lem of minor terms}:
		$$G_{ijk}^1(z_1,z_2)\mathscr{O}\big\{\partial_{ijk}^{(2)}\tr(\bbQ^{11}(\bar{z}_1)\bbQ^{11}(\bar{z}_2))\big\}\quad{\rm and}\quad\partial_{ijk}^{(1)}\tr(\bbQ^{11}(\bar{z}_1)\bbQ^{11}(\bar{z}_2))\mathscr{O}\big\{\partial_{ijk}^{(1)}G_{ijk}^1(z_1,z_2)\big\},$$
		where \(\msO\) is defined in (\ref{Eq of operator O}). Otherwise, consider
		$$N^{-1/2}\sum_{i,j,k=1}^{m,n,p}G_{ijk}^1(z_1,z_2)\mathscr{D}\big\{\partial_{ijk}^{(2)}\tr(\bbQ^{11}(\bar{z}_1)\bbQ^{11}(\bar{z}_2))\big\},$$
		since 
		\begin{align}
			&\partial_{ijk}^{(2)}\{\tr(\bbQ^{11}(z_1)\bbQ^{11}(z_2))\}\notag\\
			&=\frac{2}{N}\sum_{\substack{t_1\neq t_2\\t_3\neq t_4}}^3\sum_{r=1}^2\mcA_{ijk}^{(t_1,t_2)}Q_{\tilde{t}_2\tilde{t}_3}^{t_2t_3}(z_r)\mcA_{ijk}^{(t_3,t_4)}Q_{\tilde{t}_4\cdot}^{t_41}(z_r)\bbQ^{11}(z_{3-r})Q_{\cdot \tilde{t}_1}^{1t_1}(z_r)\notag\\
			&+\frac{2}{N}\sum_{\substack{t_1\neq t_2\\t_3\neq t_4}}^d\mcA_{ijk}^{(t_1,t_2)}Q_{\tilde{t}_2\cdot}^{t_21}(z_1)Q_{\cdot \tilde{t}_3}^{1t_3}(z_1)\mcA_{ijk}^{(t_3,t_4)}Q_{\tilde{t}_4\cdot}^{t_41}(z_2)Q_{\cdot \tilde{t}_1}^{1t_1}(z_2)\notag,
		\end{align}
		then \(\mathscr{D}\big\{\partial_{ijk}^{(2)}\tr(\bbQ^{11}(\bar{z}_1)\bbQ^{11}(\bar{z}_2))\big\}\) contains (e.g.) \(N^{-1}c_k^2Q_{ii}^{11}(\bar{z}_1)Q_{j\cdot}^{21}(\bar{z}_1)\bbQ^{11}(\bar{z}_2)Q_{\cdot j}^{12}(\bar{z}_1)\) and
		\begin{align}
			&N^{-3/2}\sum_{i,j,k=1}^{m,n,p}c_k^3Q_{i\cdot}^{11}(z_2)Q_{\cdot j}^{12}(z_1)Q_{ii}^{11}(\bar{z}_1)Q_{j\cdot}^{21}(\bar{z}_1)\bbQ^{11}(\bar{z}_2)Q_{\cdot j}^{12}(\bar{z}_1)\notag\\
			&=N^{-3/2}\tr(\bbQ^{11}(\bar{z}_1)\circ(\bbQ^{11}(z_2)\bbQ^{12}(z_1)\bbQ^{21}(\bar{z}_1)\bbQ^{11}(\bar{z}_2)\bbQ^{11}(\bar{z}_1)))=\mrO(\eta_0^{-6}N^{-1/2}),\notag
		\end{align}
		and we can also show other situations are also minor. Finally, since \(N^{-1/2}\sum_{i,j,k=1}^{m,n,p}\partial_{ijk}^{(2)}G_{ijk}^1(z_1,z_2)\) contains the following major term:
		\begin{align*}
			&N^{-3/2}\sum_{i,j,k=1}^{m,n,p}a_ib_jc_kQ_{i\cdot}^{11}(z_2)Q_{\cdot i}^{11}(z_1)Q_{kk}^{33}(z_1)Q_{jj}^{22}(z_1)=N^{-3/2}\boldsymbol{1}_m'\bbQ^{11}(z_1)\bbQ^{11}(z_2)\bba\cdot\boldsymbol{1}_n'\bbQ^{22}(z_1)\bbb\cdot\boldsymbol{1}_p'\bbQ^{22}(z_1)\bbc,
		\end{align*}
		so by Lemma \ref{Thm of Entrywise almost sure convergence}, it yields that
		\begin{align}
			&\Big|\mathbb{E}\left[\tr(\bbQ^{11}(\bar{z}_1)\bbQ^{11}(\bar{z}_2))^c\big\{N^{-3/2}\boldsymbol{1}_m'\bbQ^{11}(z_1)\bbQ^{11}(z_2)\bba\cdot\boldsymbol{1}_n'\bbQ^{22}(z_1)\bbb\cdot\boldsymbol{1}_p'\bbQ^{22}(z_1)\big\}^c\right]\Big|\notag\leq\mrO(C_{\eta_0}N^{-\omega})\mcC_{11,11,N}^{(3)},\notag
		\end{align}
		where we apply the same trick as the part \(l=2\) of Theorem \ref{Thm of Variance}. For other terms, the results are same, so we obtain
		$$N^{-1/2}\sum_{i,j,k=1}^{m,n,p}\mathbb{E}\left[\tr(\boldsymbol{Q}^{11}(\bar{z}_1)\boldsymbol{Q}^{11}(\bar{z}_2))^c\partial_{ijk}^{(2)}G_{ijk}^1(z_1,z_2)\right]\leq\mrO(C_{\eta_0}N^{-\omega})\mcC_{11,11,N}^{(3)}.$$
		Therefore, combined with previous results, we have
		\begin{align}
			N^{-1/2}\sum_{i,j,k=1}^{m,n,p}\mathbb{E}\left[\partial_{ijk}^{(2)}\big\{G_{ijk}^1(z_1,z_2)\tr(\boldsymbol{Q}^{11}(\bar{z}_1)\boldsymbol{Q}^{11}(\bar{z}_2))\big\}\right]\leq\mrO(C_{\eta_0}N^{-\omega})\mcC_{11,11,N}^{(3)}.\label{Eq of tightness d=3 2}
		\end{align}
    {\bf Third derivatives:} When \(\alpha=3\), by the same proofs as those for \(l=3\) in Theorem \ref{Thm of Variance}, we can conclude that for \(\alpha=0,1,3\)
		$$N^{-1/2}\Big|\sum_{i,j,k=1}^{m,n,p}\partial_{ijk}^{(\alpha)}\tr(\bbQ^{11}(\bar{z}_1)\bbQ^{11}(\bar{z}_2))\partial_{ijk}^{(3-\alpha)}G_{ijk}^1(z_1,z_2)\Big|\leq\mrO(C_{\eta_0}N^{-\omega})\mcC_{11,11,N}^{(3)},$$
		here we omit the details for clarity. Now, we only focus on the case of \(\alpha=2\). Since 
		\begin{align}
			&\partial_{ijk}^{(2)}\{\tr(\bbQ^{11}(z_1)\bbQ^{11}(z_2))\}\notag\\
			&=\frac{2}{N}\sum_{\substack{t_1\neq t_2\\t_3\neq t_4}}^3\sum_{r=1}^2\mcA_{ijk}^{(t_1,t_2)}Q_{\tilde{t}_2\tilde{t}_3}^{t_2t_3}(z_r)\mcA_{ijk}^{(t_3,t_4)}Q_{\tilde{t}_4\cdot}^{t_41}(z_r)\bbQ^{11}(z_{3-r})Q_{\cdot \tilde{t}_1}^{1t_1}(z_r)\notag\\
			&+\frac{2}{N}\sum_{\substack{t_1\neq t_2\\t_3\neq t_4}}^d\mcA_{ijk}^{(t_1,t_2)}Q_{\tilde{t}_2\cdot}^{t_21}(z_1)Q_{\cdot \tilde{t}_3}^{1t_3}(z_1)\mcA_{ijk}^{(t_3,t_4)}Q_{\tilde{t}_4\cdot}^{t_41}(z_2)Q_{\cdot \tilde{t}_1}^{1t_1}(z_2)\notag,
		\end{align}
		and
		\begin{align*}
			\partial_{ijk}^{(1)}\{G_{ijk}^1(z_1,z_2)\}=-\frac{1}{\sqrt{N}}\sum_{r\neq1}^3\mcA_{ijk}^{(1,r)}\mcA_{ijk}^{(t_1,t_2)}\big(Q_{i\tilde{t}_1}^{1t_1}(z_2)Q_{\tilde{t}_2\cdot}^{t_21}(z_2)Q_{\cdot\tilde{r}}^{1r}(z_1)+Q_{i\cdot}^{11}(z_2)Q_{\cdot\tilde{t}_1}^{1t_1}(z_1)Q_{\tilde{t}_2\tilde{r}}^{t_2r}(z_1)\big),
		\end{align*}
		then by Lemmas \ref{Cor of minor terms} and \ref{Lem of minor terms}, we have
		\begin{small}
		\begin{align}
			&\frac{1}{\sqrt{N}}\sum_{i,j,k=1}^{m,n,p}\sum_{\substack{t_1\neq t_2\\t_3\neq t_4}}^3\mbE\big[\partial_{ijk}^{(1)}\{G_{ijk}^1(z_1,z_2)\}\partial_{ijk}^{(2)}\{\tr(\bbQ^{11}(z_1)\bbQ^{11}(z_2))\}\big]=\mrO(C_{\eta_0}N^{-1/2})\notag\\
			&-\frac{2}{N^2}\sum_{i,j,k=1}^{m,n,p}\sum_{r\neq1}^3\sum_{t_1,t_2}^{(1,r)}\sum_{w=1}^2\mbE\big[(\mcA_{ijk}^{(1,r)})^4Q_{ii}^{11}(z_2)Q_{\tilde{r}\cdot}^{r1}(z_1)Q_{\cdot\tilde{r}}^{1r}(z_2)Q_{\tilde{t}_1\tilde{t}_1}^{t_1t_1}(\bar{z}_w)Q_{\tilde{t}_2\cdot}^{t_21}(\bar{z}_w)\bbQ^{11}(\bar{z}_{3-w})Q_{\cdot\tilde{t}_2}^{1t_2}(\bar{z}_w)\big]\notag\\
			&-\frac{2}{N^2}\sum_{i,j,k=1}^{m,n,p}\sum_{r\neq1}^3\sum_{w=1}^2\sum_{t_1,t_2}^{(1,r)}\mbE\big[(\mcA_{ijk}^{(1,r)})^4Q_{i\cdot}^{11}(z_2)Q_{\cdot i}^{11}(z_1)Q_{\tilde{r}\tilde{r}}^{rr}(z_1)Q_{\tilde{t}_1\tilde{t}_1}^{t_1t_1}(\bar{z}_w)Q_{\tilde{t}_2\cdot}^{t_21}(\bar{z}_w)\bbQ^{11}(\bar{z}_{3-w})Q_{\cdot\tilde{t}_2}^{1t_2}(\bar{z}_w)\big]\notag\\
			&-\frac{2}{N^2}\sum_{i,j,k=1}^{m,n,p}\sum_{r\neq1}^3\sum_{t_1,t_2}^{(1,r)}\mbE\big[(\mcA_{ijk}^{(1,r)})^4Q_{ii}^{11}(z_2)Q_{\tilde{r}\cdot}^{r1}(z_1)Q_{\cdot\tilde{r}}^{1r}(z_2)Q_{\tilde{t}_1\cdot}^{t_11}(\bar{z}_1)Q_{\cdot \tilde{t}_1}^{1t_1}(\bar{z}_1)Q_{\tilde{t}_2\cdot}^{t_21}(\bar{z}_2)Q_{\cdot\tilde{t}_2}^{1t_2}(\bar{z}_2)\big]\notag\\
			&-\frac{2}{N^2}\sum_{i,j,k=1}^{m,n,p}\sum_{r\neq1}^3\sum_{t_1,t_2}^{(1,r)}\mbE\big[(\mcA_{ijk}^{(1,r)})^4Q_{i\cdot}^{11}(z_2)Q_{\cdot i}^{11}(z_1)Q_{\tilde{r}\tilde{r}}^{rr}(z_1)Q_{\tilde{t}_1\cdot}^{t_11}(\bar{z}_1)Q_{\cdot \tilde{t}_1}^{1t_1}(\bar{z}_1)Q_{\tilde{t}_2\cdot}^{t_21}(\bar{z}_2)Q_{\cdot\tilde{t}_2}^{1t_2}(\bar{z}_2)\big]=\mrO(C_{\eta_0}N^{-1/2})\notag\\
			&-\frac{2}{N^2}\sum_{r=2}^3\sum_{w=1}^2\sum_{t_1,t_2}^{(1,r)}\Vert\bba^{(5-r)}\Vert_4^4\mbE\big[\tr(\bbQ^{t_2t_2}(z_2)\circ(\bbQ^{t_21}(\bar{z}_w)\bbQ^{11}(\bar{z}_{3-w})\bbQ^{1t_2}(\bar{z}_w)))\cdot\tr(\bbQ^{t_1t_1}(\bar{z}_w)\circ(\bbQ^{t_11}(z_1)\bbQ^{1t_1}(z_2)))\big]\notag\\
			&-\frac{2}{N^2}\sum_{r=2}^3\sum_{w=1}^2\sum_{t_1,t_2}^{(1,r)}\Vert\bba^{(5-r)}\Vert_4^4\mbE\big[\tr((\bbQ^{t_21}(z_1)\bbQ^{1t_2}(z_2))\circ(\bbQ^{t_21}(\bar{z}_w)\bbQ^{11}(\bar{z}_{3-w})\bbQ^{1t_2}(\bar{z}_w)))\cdot\tr(\bbQ^{t_1t_1}(\bar{z}_w)\circ\bbQ^{t_1t_1}(z_1))\big]\notag\\
			&-\frac{2}{N^2}\sum_{r=2}^3\sum_{w=1}^2\sum_{t_1,t_2}^{(1,r)}\Vert\bba^{(5-r)}\Vert_4^4\mbE\big[\tr(\bbQ^{11}(z_2)\circ(\bbQ^{11}(\bar{z}_w)\bbQ^{11}(\bar{z}_w)))\cdot\tr((\bbQ^{r1}(z_1)\bbQ^{1r}(z_2))\circ(\bbQ^{r1}(\bar{z}_{3-w})\bbQ^{1r}(\bar{z}_{3-w})))\big]\notag\\
			&-\frac{2}{N^2}\sum_{r=2}^3\sum_{t_1,t_2}^{(1,r)}\sum_{w=1}^2\Vert\bba^{(5-r)}\Vert_4^4\mbE\big[\tr((\bbQ^{11}(z_1)\bbQ^{11}(z_2))\circ(\bbQ^{11}(\bar{z}_w)\bbQ^{11}(\bar{z}_w)))\cdot\tr(\bbQ^{rr}(z_1)\circ(\bbQ^{r1}(\bar{z}_{3-w})\bbQ^{1r}(\bar{z}_{3-w})))\big]\notag\\
			&=-2\mcW_{11,11,N}^{(3)}(z,z)+\mrO(C_{\eta_0}N^{-1/2}),\notag
		\end{align}
		\end{small}\noindent
    where the notation $\sum_{t_1,t_2}^{(1,r)}$ means the summation of $t_1$ and $t_2$ are over $\{1,2,3\}\backslash\{1,r\}$, $\mcW_{11,11,N}^{(3)}(z,z)$ is defined in \eqref{Eq of mcW tightness d=3}.
    
    \vspace{5mm}
    \noindent
    {\bf Remainders:} When \(\alpha=4\), we can repeat the same proof argument as those for the part \(l=4\) of Theorem \ref{Thm of Variance} to show that the sum over all \(i,j,k\) of \(\epsilon_{ijk}^{(4)}\) is a minor term, the details are omitted for brevity.

    \vspace{5mm}
    \noindent
	Now, combining (\ref{Eq of tightness d=3 1}), (\ref{Eq of tightness d=3 2}) and (\ref{Eq of mcW tightness d=3}), we obtain
    \begin{small}
    \begin{align}
		&(z_1+\mfm(z_1)-\mfm_1(z_1)+\mrO(C_{\eta_0}N^{-\omega}))\mcC_{11,11,N}^{(3)}=-2\mcV_{11,11,N}^{(3)}(z_1,z_2)-\kappa_4\mcG_{11,11,N}^{(3)}(z_1,z_2)-\mfm_1(z_2)(\mcC_{12,11,N}^{(3)}+\mcC_{13,11,N}^{(3)})\notag\\
		&-(1+V_{12,N}^{(3)}(z_1,z_2)+V_{13,N}^{(3)}(z_1,z_2))\mcC_{1,11,N}^{(3)}-V_{11,N}^{(3)}(z_1,z_2)(\mcC_{2,11,N}^{(3)}+\mcC_{3,11,N}^{(3)})+\mrO(C_{\eta_0}N^{-\omega}).\notag
	\end{align}
    \end{small}\noindent
	Similarly, for any \(s,t\in\{1,2,3\}\) and \(\mcC_{st,11,N}^{(3)}(z,z)\), we have
	\begin{small}
	\begin{align}
		&(z_1+\mfm(z_1)-\mfm_s(z_1))\mcC_{st,11,N}^{(3)}=-2\mcV_{st,11,N}^{(3)}(z_1,z_2)-\kappa_4\mcW_{st,11,N}^{(3)}(z_1,z_2)-\mfm_s(z_2)\sum_{l\neq s}^3\mcC_{lt,11,N}^{(3)}\label{Eq of system equation tightness d=3}\\
		&-\Big(\delta_{st}+\sum_{l\neq s}^3V_{lt,N}^{(3)}(z_1,z_2)\Big)\mcC_{s,11,N}^{(3)}-V_{st,N}^{(3)}(z_1,z_2)\sum_{l\neq s}^3\mcC_{l,11,N}^{(3)}+\mrO(C_{\eta_0}N^{-\omega})\mcC_{11,11,N}^{(3)}+\mrO(C_{\eta_0}N^{-\omega}).\notag
	\end{align}
	\end{small}\noindent
    {\bf (\ref{Eq of tightness major 2}):} Next, we will derive the system equations for all \(\mcC_{l,11,N}^{(3)}\). Here, we only present the detailed calculation procedure for \(\mcC_{11,1}^{(3)}(z_1,z_2)\), since the arguments for the others are the same. By the cumulant expansion (\ref{Eq of cumulant expansion}), we can obtain
	\begin{align}
		&z_1\mcC_{11,1,N}^{(3)}(z_1,z_2)=\frac{1}{\sqrt{N}}\sum_{i,j,k=1}^{m,n,p}\Big(\sum_{\alpha=1}^3\frac{\kappa_{\alpha+1}}{\alpha!}\mathbb{E}\left[\partial_{ijk}^{(\alpha)}\big\{G_{ijk}^1(z_1,z_2)\tr(\bbQ(\bar{z}_2))^c\big\}\right]+\epsilon_{ijk}^{(4)}\Big)-\mathcal{C}_{1,1,N}^{(3)}(z_1,z_2),\notag
	\end{align}
	where \(\mathcal{C}_{1,1,N}^{(3)}(z_1,z_2)\) is defined in (\ref{Eq of mcC d=3}) and it is already bounded by \(C_{\eta_0}\). Actually, we can repeat the proofs in Theorem \ref{Thm of Variance}, and major terms only appear in cases of \(\alpha=1\) and \(3\), so we omit other cases for ease of presentation.
 
    \vspace{5mm}
    \noindent
    {\bf First derivatives:} When \(\alpha=1\), by Lemma \ref{Lem of minor terms} and Theorem \ref{Thm of Entrywise almost sure convergence}, we can show that
		\begin{align}
			&N^{-1/2}\sum_{l=1}^m\mathbb{E}\left[\partial_{ijk}^{(1)}\{G_{ijk}^1(z_1,z_2)\}\tr(\bbQ^{11}(\bar{z}_2))^c\right]=\notag\\
			&-N^{-1}\mathbb{E}\left[\tr(\bbQ^{11}(z_2))\tr(\bbQ^{12}(z_2)\bbQ^{21}(z_1)+\bbQ^{13}(z_2)\bbQ^{31}(z_1))\tr(\bbQ^{11}(\bar{z}_2))^c\right]\notag\\
			&-N^{-1}\mathbb{E}\left[\tr(\bbQ^{11}(z_1)\bbQ^{11}(z_2))\tr(\bbQ^{22}(z_1)+\bbQ^{33}(z_1))\tr(\bbQ^{11}(\bar{z}_2))^c\right]+\mrO(C_{\eta_0}N^{-1/2})\notag\\
			&=-(V_{12,N}^{(3)}(z_1,z_2)+V_{13,N}^{(3)}(z_1,z_2))\mathcal{C}_{1,1,N}^{(3)}(z_2,z_2)-V_{11}(z_1,z_2)(\mathcal{C}_{2,1,N}^{(3)}(z_1,z_2)+\mathcal{C}_{3,1,N}^{(3)}(z_1,z_2))\notag\\
			&-\mfm_1(z_2)(\mcC_{12,1,N}^{(3)}(z_1,z_2)+\mcC_{13,1}^{(3)}(z_1,z_2))-(\mfm_2(z_1)+\mfm_3(z_1))\mcC_{11,1,N}^{(3)}(z_1,z_2)+\mrO(C_{\eta_0}N^{-\omega})\notag
		\end{align}
		and
		\begin{align}
			&N^{-1/2}\sum_{i,j,k=1}^{m,n,p}\mathbb{E}\left[G_{ijk}^1(z_1,z_2)\partial_{ijk}^{(1)}\{\tr(\bbQ^{11}(\bar{z}_2))^c\}\right]=\notag\\
			&-2N^{-1}\mathbb{E}\left[\tr(\bbQ^{11}(z_2)\bbQ^{12}(z_1)\bbQ^{21}(\bar{z}_2)\bbQ^{11}(\bar{z}_2)+\bbQ^{11}(z_2)\bbQ^{13}(z_1)\bbQ^{31}(\bar{z}_2)\bbQ^{11}(\bar{z}_2))\right]+C_{\eta_0}N^{-1/2}.\notag
		\end{align}
		For simplicity, we denote
		\begin{align}
			\mcV_{k_1l_1,k_2,N}^{(3)}(z_1,z_2):=N^{-1}\sum_{r\neq k_1}^3\mathbb{E}\left[\tr(\bbQ^{rl_1}(z_1)\bbQ^{l_1k_1}(z_2)\bbQ^{k_1k_2}(\bar{z}_1)\bbQ^{k_2r}(\bar{z}_2))\right],\notag
		\end{align}
		where \(k_1,l_1,k_2\in\{1,2,3\}\), then we have
		\begin{align}
			&N^{-1/2}\sum_{i,j,k=1}^{m,n,p}\mathbb{E}\left[G_{ijk}^1(z_1,z_2)\partial_{ijk}^{(1)}\{\tr(\bbQ^{11}(\bar{z}_2))^c\}\right]=-2\mcV_{11,1,N}^{(3)}(z_1,z_2)\label{Eq of tightness d=3 1 2 vs 1}\\
			&-(V_{12,N}^{(3)}(z_1,z_2)+V_{13,N}^{(3)}(z_1,z_2))\mathcal{C}_{1,1,N}^{(3)}(z_2,z_2)-V_{11}(z_1,z_2)(\mathcal{C}_{2,1,N}^{(3)}(z_1,z_2)+\mathcal{C}_{3,1,N}^{(3)}(z_1,z_2))\notag\\
			&-\mfm_1(z_2)(\mcC_{12,1,N}^{(3)}(z_1,z_2)+\mcC_{13,1}^{(3)}(z_1,z_2))-(\mfm_2(z_1)+\mfm_3(z_1))\mcC_{11,1,N}^{(3)}(z_1,z_2)+\mrO(C_{\eta_0}N^{-\omega}).\notag
		\end{align}
    {\bf Second derivatives:} The calculations of second derivatives are similar to those in proofs of Theorem \ref{Thm of Variance}, we can also show that
    \begin{align}
		N^{-1/2}\Bigg|\sum_{i,j,k=1}^{m,n,p}\mathbb{E}\left[\partial_{ijk}^{(2)}\{G_{ijk}^1(z_1,z_2)\tr(\bbQ^{11}(\bar{z}_2))^c\}\right]\Bigg|\leq\mrO(C_{\eta_0}N^{-\omega}),\notag
	\end{align}
    we omit the details.
    
    \vspace{5mm}
    \noindent
    {\bf Third derivatives:} When \(\alpha=3\), similar to previous arguments for \(\mcC_{11,11,N}^{(3)}\), the major terms will only appear in
		\begin{align}
			N^{-1/2}\sum_{i,j,k=1}^{m,n,p}\mathbb{E}\left[\partial_{ijk}^{(1)}G_{ijk}^1(z_1,z_2)\partial_{ijk}^{(2)}\tr(\bbQ^{11}(\bar{z}_2))^c\right],\notag
		\end{align}
		where
		\begin{align*}
			\partial_{ijk}^{(2)}\tr(\bbQ^{11}(z))=2N^{-1}\sum_{t_1\neq t_2}^3\sum_{t_3\neq t_4}^3\mcA_{ijk}^{(t_1,t_2)}Q_{\tilde{t}_2\tilde{t}_3}^{t_2t_3}(z)\mcA_{ijk}^{(t_3,t_4)}Q_{\tilde{t}_4\cdot}^{t_41}(z)Q_{\cdot\tilde{t}_1}^{1t_1}(z),
		\end{align*}
		and
		\begin{align*}
			\partial_{ijk}^{(1)}\{G_{ijk}^1(z_1,z_2)\}=-\frac{1}{\sqrt{N}}\sum_{r\neq1}^3\mcA_{ijk}^{(1,r)}\mcA_{ijk}^{(t_1,t_2)}\big(Q_{i\tilde{t}_1}^{1t_1}(z_1)Q_{\tilde{t}_2\cdot}^{t_21}(z_1)Q_{\cdot\tilde{r}}^{1r}(z_2)+Q_{i\cdot}^{11}(z_1)Q_{\cdot\tilde{t}_1}^{1t_1}(z_2)Q_{\tilde{t}_2\tilde{r}}^{t_2r}(z_2)\big),
		\end{align*}
		then we have
		\begin{align}
			&N^{-1/2}\sum_{i,j,k=1}^{m,n,p}\mathbb{E}\left[\partial_{ijk}^{(1)}G_{ijk}^1(z_1,z_2)\partial_{ijk}^{(2)}\tr(\bbQ^{11}(\bar{z}_2))^c\right]=\mrO(C_{\eta_0}N^{-\omega})\notag\\
			&-\frac{2}{N^2}\sum_{i,j,k=1}^{m,n,p}\sum_{r\neq1}^3\mbE\big[(\mcA_{ijk}^{(1,r)})^4Q_{ii}^{11}(z_1)Q_{\tilde{r}\cdot}^{r1}(z_1)Q_{\cdot\tilde{r}}^{1r}(z_2)Q_{ii}^{11}(\bar{z}_2)Q_{\tilde{r}\cdot}^{r1}(\bar{z}_2)Q_{\cdot\tilde{r}}^{1r}(\bar{z}_2)\big]\notag\\
			&-\frac{2}{N^2}\sum_{i,j,k=1}^{m,n,p}\sum_{r\neq1}^3\mbE\big[(\mcA_{ijk}^{(1,r)})^4Q_{ii}^{11}(z_1)Q_{\tilde{r}\cdot}^{r1}(z_1)Q_{\cdot\tilde{r}}^{1r}(z_2)Q_{\tilde{r}\tilde{r}}^{rr}(\bar{z}_2)Q_{i\cdot}^{11}(\bar{z}_2)Q_{\cdot i}^{11}(\bar{z}_2)\big]\notag\\
			&-\frac{2}{N^2}\sum_{i,j,k=1}^{m,n,p}\sum_{r\neq1}^3\mbE\big[(\mcA_{ijk}^{(1,r)})^4Q_{i\cdot}^{11}(z_1)Q_{\cdot i}^{11}(z_2)Q_{\tilde{r}\tilde{r}}^{rr}(z_2)Q_{ii}^{11}(\bar{z}_2)Q_{\tilde{r}\cdot}^{r1}(\bar{z}_2)Q_{\cdot\tilde{r}}^{1r}(\bar{z}_2)\big]\notag\\
			&-\frac{2}{N^2}\sum_{i,j,k=1}^{m,n,p}\sum_{r\neq1}^3\mbE\big[(\mcA_{ijk}^{(1,r)})^4Q_{i\cdot}^{11}(z_1)Q_{\cdot i}^{11}(z_2)Q_{\tilde{r}\tilde{r}}^{rr}(z_2)Q_{\tilde{r}\tilde{r}}^{rr}(\bar{z}_2)Q_{i\cdot}^{11}(\bar{z}_2)Q_{\cdot i}^{11}(\bar{z}_2)\big]\notag\\
			&=-\frac{2}{N^2}\sum_{r=2}^3\Vert\bba^{(5-r)}\Vert_4^4\mbE\big[\tr(\bbQ^{11}(z_1)\circ\bbQ^{11}(\bar{z}_2))\cdot\tr((\bbQ^{r1}(z_1)\bbQ^{1r}(z_2))\circ(\bbQ^{r1}(\bar{z}_2)\bbQ^{1r}(\bar{z}_2)))\big]\notag\\
			&-\frac{2}{N^2}\sum_{r=2}^3\Vert\bba^{(5-r)}\Vert_4^4\mbE\big[\tr(\bbQ^{11}(z_1)\circ(\bbQ^{11}(\bar{z}_2)\bbQ^{11}(\bar{z}_2)))\cdot\tr((\bbQ^{r1}(z_1)\bbQ^{1r}(z_2))\circ\bbQ^{rr}(\bar{z}_2))\big]\notag\\
			&-\frac{2}{N^2}\sum_{r=2}^3\Vert\bba^{(5-r)}\Vert_4^4\mbE\big[\tr((\bbQ^{11}(z_1)\bbQ^{11}(z_2))\circ\bbQ^{11}(\bar{z}_2))\cdot\tr(\bbQ^{rr}(z_2)\circ(\bbQ^{r1}(\bar{z}_2)\bbQ^{1r}(\bar{z}_2)))\big]\notag\\
			&-\frac{2}{N^2}\sum_{r=2}^3\Vert\bba^{(5-r)}\Vert_4^4\mbE\big[\tr((\bbQ^{11}(z_1)\bbQ^{11}(z_2))\circ(\bbQ^{11}(\bar{z}_2)\bbQ^{11}(\bar{z}_2)))\cdot\tr(\bbQ^{rr}(z_2)\circ\bbQ^{rr}(\bar{z}_2))\big]+\mrO(C_{\eta_0}N^{-1/2})\notag\\
			&=-2\mcW_{11,1,N}^{(3)}(z_1,z_2)+\mrO(C_{\eta_0}N^{-1/2}),\notag
		\end{align}
    where $\mcW_{11,1,N}^{(3)}(z_1,z_2)$ is defined in \eqref{Eq of mcW tightness d=3 2 vs 1}. 

    \vspace{5mm}
    \noindent
    {\bf Remainders:} We can repeat the same proof argument as those for the part \(l=4\) of Theorem \ref{Thm of Variance} to show that the sum over all \(i,j,k\) of \(\epsilon_{ijk}^{(4)}\) is a minor term, we omit the details.
    
    Now, combining (\ref{Eq of tightness d=3 1 2 vs 1}) and (\ref{Eq of mcW tightness d=3 2 vs 1}), we obtain that
	\begin{align}
		&(z_1+\mfm_2(z_1)+\mfm_3(z_1))\mcC_{11,1,N}^{(3)}(z_1,z_2)=-2\mcV_{11,1,N}^{(3)}(z_1,z_2)-\kappa_4\mcW_{11,1,N}^{(3)}(z_1,z_2)\notag\\
		&-(V_{12,N}^{(3)}(z_1,z_2)+V_{13,N}^{(3)}(z_1,z_2))\mathcal{C}_{1,1,N}^{(3)}(z_2,z_2)-V_{11,N}^{(3)}(z_1,z_2)(\mathcal{C}_{2,1,N}^{(3)}(z_1,z_2)+\mathcal{C}_{3,1,N}^{(3)}(z_1,z_2))\notag\\
		&-\mfm_1(z_2)(\mcC_{12,1,N}^{(3)}(z_1,z_2)+\mcC_{13,1,N}^{(3)}(z_1,z_2))+\mrO(C_{\eta_0}N^{-\omega})\notag\\
		:&=-\mfm_1(z_2)(\mcC_{12,1,N}^{(3)}(z_1,z_2)+\mcC_{13,1,N}^{(3)}(z_1,z_2))-\mcF_{11,1,N}^{(3)}(z_1,z_2)+\mrO(C_{\eta_0}N^{-\omega})\notag
	\end{align}
	Define
	\begin{align*}
		\bbC_{1,N}^{(3)}(z_1,z_2):=[\mcC_{st,1,N}^{(3)}(z_1,z_2)]_{3\times3}\quad{\rm and}\quad\bbF_{1,N}^{(3)}(z_1,z_2):=[\mcF_{st,1,N}^{(3)}(z_1,z_2)]_{3\times3}.
	\end{align*}
	By system equations in \S\ref{Sec of majors} and Theorem \ref{Thm of Variance}, \(\mcV_{st,1,N}^{(3)}(z_1,z_2),\mcW_{st,1,N}^{(3)}(z_1,z_2),V_{st,N}^{(3)}(z_1,z_2)\) and \(\mcC_{s,t,N}^{(3)}(z_1,z_2)\) are all bounded by \(C_{\eta_0}\), so \(\Vert\bbF_{1,N}^{(3)}(z_1,z_2)\Vert<C_{\eta_0}\). Moreover, since
	\begin{align*}
		\bbTheta_N^{(3)}(z_1,z_2)\bbC_{1,N}^{(3)}(z_1,z_2)=-\bbF_{1,N}^{(3)}(z_1,z_2)+\mro(\boldsymbol{1}_{3\times3})
	\end{align*}
	where \(\bbTheta_N^{(3)}(z_1,z_2)\) is defined in (\ref{Eq of bbTheta d=3}). By Theorem \ref{Thm of Variance}, we have \(\lim_{N\to\infty}\Vert\bbTheta_N^{(3)}(z_1,z_2)^{-1}+\diag(\mfc^{-1}\circ\bbg(z))\bbPi^{(3)}(z_1,z_2)^{-1}\Vert=0\), so \(\Vert\bbTheta_N^{(3)}(z_1,z_2)^{-1}\Vert\leq C_{\mfc}\eta_0^{-5}\), which implies that 
    \begin{align*}
        &\lim_{N\to\infty}\Vert\bbC_{1,N}^{(3)}(z_1,z_2)\Vert\leq\lim_{N\to\infty}\Vert\bbTheta_N^{(3)}(z_1,z_2)^{-1}\Vert\cdot\Vert\bbF_{1,N}^{(3)}(z_1,z_2)\Vert\leq C_{\eta_0},
    \end{align*}
    i.e. all \(|\mcC_{st,1,N}^{(3)}(z_1,z_2)|\leq C_{\eta_0}\) for \(1\leq s,t\leq3\). Similarly, we can repeat the previous procedures to show that \(|\mcC_{st,l,N}^{(3)}(z_1,z_2)|\leq C_{\eta_0}\) for \(1\leq s,t,l\leq3\). Finally, let us back to (\ref{Eq of system equation tightness d=3}), define
	\begin{align*}
		&\mcF_{st,11,N}^{(3)}(z_1,z_2):=2\mcV_{st,11,N}^{(3)}(z_1,z_2)+\kappa_4\mcW_{st,11,N}^{(3)}(z_1,z_2)\\
		&+\Big(\delta_{st}+\sum_{l\neq s}^3V_{lt,N}^{(3)}(z_1,z_2)\Big)\mcC_{s,11,N}^{(3)}(z_1,z_2)+V_{st,N}^{(3)}(z_1,z_2)\sum_{l\neq s}^3\mcC_{l,11,N}^{(3)}(z_1,z_2),
	\end{align*}
	and
	\begin{align*}
		\bbC_{11,N}^{(3)}(z_1,z_2):=[\mcC_{st,11,N}^{(3)}(z_1,z_2)]_{3\times3}\quad{\rm and}\quad\bbF_{11,N}^{(3)}(z_1,z_2):=[\mcF_{st,11,N}^{(3)}(z_1,z_2)]_{3\times3},
	\end{align*}
	then write (\ref{Eq of system equation tightness d=3}) into matrix notations, i.e.
	$$\bbTheta_N^{(3)}(z_1,z_2)\bbC_{11,N}^{(3)}(z_1,z_2)=-\bbF_{11,N}^{(3)}(z_1,z_2)+\mro(\boldsymbol{1}_{3\times3})+\mrO(C_{\eta_0}N^{-\omega})\mcC_{11,11,N}^{(3)}(z_1,z_2)\boldsymbol{1}_{3\times3}.$$
	Since we have shown \(|\mcC_{l,11,N}^{(3)}(z_1,z_2)|<C_{\eta_0}\), and \(|\mcV_{st,11,N}^{(3)}(z_1,z_2)|,|\mcW_{st,11,N}^{(3)}(z_1,z_2)|\leq C_{\eta_0}\) by definitions in (\ref{Eq of mcV tightness d=3}) and (\ref{Eq of mcW tightness d=3}), so \(|\mcF_{st,11,N}^{(3)}(z_1,z_2)|\leq C_{\eta_0}\) and \(\Vert\bbF_{11,N}^{(3)}(z_1,z_2)\Vert\leq C_{\eta_0}\). Thus,
	$$\lim_{N\to\infty}\Vert\bbC_{11,N}^{(3)}(z_1,z_2)\Vert\leq\lim_{N\to\infty}\Vert\bbTheta_N^{(3)}(z_1,z_2)^{-1}\Vert\cdot\Vert\bbF_{11,N}^{(3)}(z_1,z_2)\Vert\leq C_{\eta_0}.$$
	For other \(\mcC_{s_1t_1,s_2t_2,N}^{(3)}(z_1,z_2)\), we can repeat the previous procedures to derive the system of equations for \(\bbC_{s_2t_2,N}^{(3)}(z_1,z_2):=[\mcC_{s_1t_1,s_2t_2,N}^{(3)}(z_1,z_2)]_{3\times3}\) for each fixed \(1\leq s_2,t_2\leq 3\), since the arguments are analogous, we omit the details here.
\end{proof}
\subsection{Characteristic function}\label{Sec of CLT sub}
\begin{thm}\label{Thm of CLT}
	Under Assumptions {\rm \ref{Ap of general noise}} and {\rm \ref{Ap of dimension}}, when \(d=3\), \(\tr(\boldsymbol{Q}(z))-\mathbb{E}[\tr(\boldsymbol{Q}(z))]\) converges weakly to a Gaussian random process in \(\mathcal{S}_{\eta_0}\) {\rm (\ref{Eq of stability region})}.
\end{thm}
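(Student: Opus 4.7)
The plan is to combine the tightness of the resolvent process already established in Theorem \ref{Thm of Tightness} with finite-dimensional asymptotic normality to obtain weak convergence of the process $Y_N(z):=\tr(\bbQ(z))-\mbE[\tr(\bbQ(z))]$ to a centered Gaussian process indexed by $z\in\mcS_{\eta_0}$. By Prokhorov's theorem applied to the space of analytic functions, tightness reduces the task to showing that for any $K\in\mbN^+$, any $z_1,\ldots,z_K\in\mcS_{\eta_0}$, and any complex coefficients $\alpha_1,\ldots,\alpha_K$, the linear combination $S_N:=\sum_{k=1}^K\alpha_k Y_N(z_k)$ converges in distribution to a centered complex Gaussian whose variance is the limit of $\sum_{k,l}\alpha_k\alpha_l\,\mcC_N^{(3)}(z_k,z_l)$ provided by Theorem \ref{Thm of Variance}. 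Splitting into real and imaginary parts and applying the Cramer--Wold device further reduces this to one-dimensional real CLTs, which I would attack through the Levy characteristic-function approach used in Chapter 9 of \cite{bai2010spectral}.

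Concretely, introduce the characteristic function $\phi_N(t):=\mbE[\exp(\ii tS_N)]$ for $t\in\mbR$ and differentiate in $t$ to get $\phi_N'(t)=\ii\mbE[S_N e^{\ii tS_N}]$. Using the identity $z\bbQ(z)=\bbQ(z)\bbM-\bbI_N$ rewrite each $Y_N(z_k)$ as a sum over entries of $\bbX$ multiplied by appropriate entries of $\bbQ(z_k)$, then apply the cumulant expansion (\ref{Eq of cumulant expansion}) to every summand $\mbE[X_{ijk}\,F_{ijk}(z_k)\,e^{\ii tS_N}]$. The derivatives $\partial_{ijk}^{(l)}$ now act jointly on $F_{ijk}(z_k)$ and on the exponential, and the crucial new term is the single hit on the exponential, which produces $\ii t\sum_l\alpha_l\,\partial_{ijk}^{(1)}\tr(\bbQ(z_l))\,e^{\ii tS_N}$; after averaging, this term reproduces precisely the covariance structure computed in the proof of Theorem \ref{Thm of Variance}. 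All remaining contributions reduce to the deterministic expansion analyzed in Theorems \ref{Thm of Mean function} and \ref{Thm of approximation}, and the off-diagonal minor terms and higher-derivative remainders are controlled uniformly on $\mcS_{\eta_0}$ by Lemmas \ref{Lem of minor terms}, \ref{Lem of minor terms 1}, \ref{Cor of minor terms}, together with the entrywise almost-sure convergence from Lemma \ref{Thm of Entrywise almost surly convergence}. Assembling these pieces should yield a first-order differential equation of the form
$$\phi_N'(t)=-t\,\Sigma_N(\alpha)\,\phi_N(t)+o(1),\qquad \phi_N(0)=1,$$
with $\Sigma_N(\alpha)\to\sum_{k,l}\alpha_k\alpha_l\,\mcC_N^{(3)}(z_k,z_l)$ bounded and the $o(1)$ uniform for $t$ in compact sets. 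Integration then gives $\phi_N(t)\to\exp(-t^2\Sigma(\alpha)/2)$, the characteristic function of the desired Gaussian limit, which combined with Theorem \ref{Thm of Tightness} upgrades to weak convergence of $Y_N(\cdot)$ as an analytic random process on $\mcS_{\eta_0}$.

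The main technical obstacle will be verifying that multiplying each integrand in the cumulant expansion by the bounded random factor $e^{\ii tS_N}$ does not spoil the delicate error bounds used in Theorem \ref{Thm of Variance}. In that proof the centering trick employed in (\ref{Eq of covariance trick}) was essential for turning covariance expressions into products of concentrated scalar quantities; the analogous step here requires writing $e^{\ii tS_N}=\mbE[e^{\ii tS_N}]+(e^{\ii tS_N})^c$ and then exploiting the Wegner-type fluctuation bound $|N^{-1}\tr(\bbQ^{ii}(z))^c|\prec\eta_0^{-5}N^{-\omega}$ from Lemma \ref{Thm of Entrywise almost surly convergence} to pull the exponential outside of the expectations at cost $o(1)$, on the high-probability event $\{\Vert\bbM\Vert\le\mfv_3+t\}$ supplied by Theorem \ref{Thm of Extreme eigenvalue N d=3}. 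A secondary difficulty is that higher-order derivatives $\partial_{ijk}^{(l)}e^{\ii tS_N}$ for $l\ge 2$ generate products of $\partial_{ijk}^{(l_j)}\tr(\bbQ(z_k))$ factors multiplied by polynomial powers of $t$; these are absorbed using $|e^{\ii tS_N}|\le 1$ together with the summation estimates of Lemma \ref{Lem of minor terms 1}, exactly as in the bookkeeping behind the remainder term $\epsilon_{ijk}^{(4)}$ in Theorem \ref{Thm of Variance}.
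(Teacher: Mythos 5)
Your proposal follows essentially the same route as the paper's proof: the paper also establishes finite-dimensional Gaussian convergence via the characteristic function method (introducing the joint exponential $e_q$ in (\ref{Eq of eq}), differentiating in $t_s$, applying the cumulant expansion, and using the block-trace decomposition $\gamma(z)=\sum_l\gamma_l(z)$ together with the concentration bounds from Lemma~\ref{Thm of Entrywise almost surly convergence} and the minor-term lemmas to close a system of equations $\partial_{t_s}\mbE[e_q]=-\mbE[e_q]\sum_w t_w A_{sw}+\mro(1)$), and then combines this with the tightness from Theorem~\ref{Thm of Tightness}. Your reformulation via the Cramér--Wold device and a one-dimensional ODE $\phi_N'(t)=-t\,\Sigma_N(\alpha)\phi_N(t)+o(1)$ is equivalent, and your identification of the key technical obstacles --- controlling the extra derivative hits on the exponential via boundedness and the fluctuation bounds, and transferring the centering trick from (\ref{Eq of covariance trick}) --- matches the bookkeeping in the paper's argument.
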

\begin{proof}
	First, we define
	$$\gamma_l(z):=\tr(\boldsymbol{Q}^{ll}(z))-\mathbb{E}[\tr(\boldsymbol{Q}^{ll}(z))],\ \ (\mathfrak{a}(\tau),\mathfrak{b}(\tau)):=\left\{\begin{array}{ll}
		(1/2,1/2)&\tau=1\\(1/2{\rm i},-1/2{\rm i})&\tau={\rm i}
	\end{array}\right.$$
	where \(l=1,2,3\) and \(\gamma(z):=\sum_{l=1}^3\gamma_l(z)\). Besides, let 
	\begin{align}
		e_q:=e_q(\bbt_q,\bbtau_q,\bbz_q):=\exp\left({\rm i}\sum_{s=1}^qt_s\left(\mathfrak{a}(\tau_s)\gamma(z_s)+\mathfrak{b}(\tau_s)\gamma(\bar{z}_s)\right)\right)\quad{\rm for\ }q\in\mathbb{N}^+,\label{Eq of eq}
	\end{align}
	where \(\bbt_q:=(t_1,\cdots,t_q)',\bbtau_q:=(\tau_1,\cdots,\tau_q),\bbz_q=(z_1,\cdots,z_q)\) and \(\tau_s\in\{1,{\rm i}\},z_s\in\mathcal{S}_{\eta_0}\). Notice that
	$$\frac{\partial}{\partial t_s}\mathbb{E}[e_q]={\rm i}\mathbb{E}\left[e_q\left(\mathfrak{a}(\tau_s)\gamma(z_s)+\mathfrak{b}(\tau_s)\gamma(\bar{z}_s)\right)\right],$$
	and we will show that there exists a set of covariance coefficients \(A_{st},s,t=1,\cdots,q\) such that for each fixed \(T_q\)
	\begin{align}
		\lim_{N\to\infty}\Big|\mathbb{E}\left[e_q\left(\mathfrak{a}(\tau_s)\gamma(z_s)+\mathfrak{b}(\tau_s)\gamma(\bar{z}_s)\right)\right]+\mathbb{E}[e_q]\sum_{w=1}^qt_wA_{sw}\Big|=0.\label{Eq of characteristic function d=3}
	\end{align}
	Since
	$$\mathbb{E}[e_q\gamma_l(z)]=\frac{z^{-1}}{\sqrt{N}}\sum_{i,j,k=1}^{m,n,p}\mathbb{E}\big[X_{ijk}e_q^cF_{ijk}^l(z)\big],\ {\rm where\ }F_{ijk}^l(z):=\left\{\begin{array}{ll}
		c_kQ_{ij}^{12}(z)+b_jQ_{ik}^{13}(z)&l=1\\c_kQ_{ij}^{12}(z)+a_iQ_{jk}^{23}(z)&l=2\\a_iQ_{jk}^{23}(z)+b_jQ_{ik}^{13}(z)&l=3
	\end{array}\right.$$
	for \(z,z_1,\cdots,z_q\in\mathcal{S}_{\eta_0}\). Next, we only compute \(\mathbb{E}[e_q\gamma_1(z)]\) in detail, since the arguments for $\mathbb{E}[e_q\gamma_2(z)]$ and $\mathbb{E}[e_q\gamma_3(z)]$ are analogous. For convenience, we define
	\begin{align}
		\mcC_{l,e,N}^{(3)}(z;\bbt_q,\bbtau_q,\bbz_q):=\mcC_{l,e,N}^{(3)}:={\rm Cov}(\tr(\boldsymbol{Q}^{ll}(z)),e_q)\quad{\rm for\ }l=1,2,3.\label{Eq of Ci}
	\end{align}
	By the cumulant expansion (\ref{Eq of cumulant expansion}), we have
	\begin{align}
		&z\mbE[e_q\gamma_l(z)]=z\Cov(\gamma_1(z),e_q)=\mcC_{1,e,N}^{(3)}(z;\bbt_q,\bbtau_q,\bbz_q)\notag\\
		&=\frac{1}{\sqrt{N}}\sum_{i,j,k=1}^{m,n,p}\left(\sum_{l=0}^3\frac{\kappa_{l+1}}{l!}\mathbb{E}\big[\partial_{ijk}^{(l)}\left\{F_{ijk}^1(z)e_q^c\right\}\big]+\epsilon_{ijk}^{(4)}\right).\notag
	\end{align}
	Similar to proofs of Theorem \ref{Thm of Variance}, we can show that only \(\partial_{ijk}^{(1)}\{F_{ijk}^1(z)e_q^c\}\) and \(\partial_{ijk}^{(3)}\{F_{ijk}^1(z)e_q^c\}\) contain major terms, the details are omitted for brevity and only present the final results:

    \vspace{5mm}
    \noindent
    {\bf First derivatives:} When \(l=1\), since \(|e_q|\leq1\) and by Lemma \ref{Thm of Entrywise almost sure convergence}, we can obtain
		$$\mathbb{E}[\partial_{ijk}^{(1)}\{F_{ijk}^1(z)\}e_q^c]=-N^{-1}\mbE[\tr(\bbQ^{11}(z))\tr(\bbQ^{22}(z)+\bbQ^{33}(z))e_q^c]+\mrO(C_{\eta_0}N^{-\omega}).$$
		By the same trick as (\ref{Eq of covariance trick}), we have
		\begin{align}
			&{\rm Cov}(N^{-1}\tr(\boldsymbol{Q}^{11}(z))\tr(\boldsymbol{Q}^{22}(z)),e_q)=\mathbb{E}\Big[\big(N^{-1}\tr(\boldsymbol{Q}^{11}(z))\tr(\boldsymbol{Q}^{22}(z))-N^{-1}\mathbb{E}[\tr(\boldsymbol{Q}^{11}(z))]\tr(\boldsymbol{Q}^{22}(z))\notag\\
			&+N^{-1}\mathbb{E}[\tr(\boldsymbol{Q}^{11}(z))]\tr(\boldsymbol{Q}^{22}(z))-N^{-1}\mathbb{E}[\tr(\boldsymbol{Q}^{11}(z))]\mathbb{E}[\tr(\boldsymbol{Q}^{22}(z))]+N^{-1}\mathbb{E}[\tr(\boldsymbol{Q}^{11}(z))]\mathbb{E}[\tr(\boldsymbol{Q}^{22}(z))]\notag\\
			&-N^{-1}\mathbb{E}[\tr(\boldsymbol{Q}^{11}(z))\tr(\boldsymbol{Q}^{22}(z))]\big)e_q^c\Big]=\mfm_1(z)\mcC_{2,e,N}^{(3)}+\mathbb{E}\left[N^{-1}\tr(\boldsymbol{Q}^{22}(z))\tr(\boldsymbol{Q}^{11}(z))^ce_q^c\right].\notag
		\end{align}
		where \(\mfm_l(z)=N^{-1}\mathbb{E}[\tr(\boldsymbol{Q}^{ll}(z))]\). According to Lemma \ref{Thm of Entrywise almost sure convergence}, we have
		\begin{align}
			&\big|\mathbb{E}\left[N^{-1}\tr(\boldsymbol{Q}^{22}(z))\tr(\boldsymbol{Q}^{11}(z))^c\overline{e_q^c}\right]-m_2(z){\rm Cov}(\tr(\boldsymbol{Q}^{11}(z)),e_q)\big|\notag\\
			&\leq\mrO(\eta_0^{-5}N^{-\omega})\mathbb{E}\left[|\tr(\boldsymbol{Q}^{11}(z))^c\overline{e_q^c}|\right]+\mathbb{E}\left[|N^{-1}\tr(\boldsymbol{Q}^{22}(z))\tr(\boldsymbol{Q}^{11}(z))^c\overline{e_q^c}|1_{|\tr(\boldsymbol{Q}^{22}(z))^c|>\eta_0^{-5}N^{1-\omega}}\right]\notag\\
			&\leq\mrO(\eta_0^{-5}N^{-\omega})\sqrt{{\rm Var}(\tr(\boldsymbol{Q}^{11}(z))){\rm Var}(e_q)}+N^2\exp(-CN^{1-2\omega})=\mrO(C_{\eta_0}N^{-\omega}),\notag
		\end{align}
		where we use the fact that \(\Var(e_q)\leq\mbE[|e_q|^2]\leq1\). As a result, we conclude that
		$${\rm Cov}(N^{-1}\tr(\boldsymbol{Q}^{11}(z))\tr(\boldsymbol{Q}^{22}(z)),e_q)=\mfm_1(z)\mcC_{2,e,N}^{(3)}+\mfm_2(z)\mcC_{1,e,N}^{(3)}+\mrO(C_{\eta_0}N^{-\omega}).$$
		Similarly, the following equation is also valid:
		$${\rm Cov}(N^{-1}\tr(\boldsymbol{Q}^{11}(z))\tr(\boldsymbol{Q}^{33}(z)),e_q)=\mfm_1(z)\mcC_{3,e,N}^{(3)}+\mfm_3(z)\mcC_{1,e,N}^{(3)}+\mrO(C_{\eta_0}N^{-\omega}).$$
		Moreover, since 
		\begin{align*}
			\partial_{ijk}^{(1)}\{e_q\}=-\frac{{\rm i} e_q}{\sqrt{N}}\sum_{q=1}^s\sum_{w=1}^3\sum_{s_1\neq s_2}^3t_s\mcA_{ijk}^{(s_1,s_2)}\big[\mfa(\tau_s)Q_{\tilde{s}_1\cdot}^{s_1w}(z_s)Q_{\cdot\tilde{s}_2}^{ws_2}(z_s)+\mfb(\tau_s)Q_{\tilde{s}_1\cdot}^{s_1w}(\bar{z}_s)Q_{\cdot\tilde{s}_2}^{ws_2}(\bar{z}_s)\big],
		\end{align*}
		by Lemmas \ref{Lem of minor terms} and \ref{Cor of minor terms}, we have
		\begin{align*}
			&N^{-1/2}\sum_{i,j,k=1}^{m,n,p}\mathbb{E}[\partial_{ijk}^{(1)}\{e_q\}F_{ijk}^1(z)]=\mrO(C_{\eta_0}N^{-1/2})\\
			&-\frac{2{\rm i}}{N}\sum_{s=1}^q\sum_{l\neq1}^3\sum_{w=1}^3t_s\mbE[\tr(\bbQ^{1l}(z)(\mfa(\tau_s)\bbQ^{lw}(z_s)\bbQ^{w1}(z_s)+\mfb(\tau_s)\bbQ^{lw}(\bar{z}_s)\bbQ^{w1}(\bar{z}_s)))\cdot e_q].
		\end{align*}
		By Lemma \ref{Thm of Entrywise almost sure convergence} and the fact of \(|e_q|\leq1\), we have
		\begin{align*}
			&N^{-1/2}\sum_{i,j,k=1}^{m,n,p}\mathbb{E}[\partial_{ijk}^{(1)}\{e_q\}F_{ijk}^1(z)]=\mrO(C_{\eta_0}N^{-\omega})\\
			&-\frac{2{\rm i}}{N}\sum_{s=1}^q\sum_{l\neq1}^3\sum_{w=1}^3t_s\mbE[\tr(\bbQ^{1l}(z)(\mfa(\tau_s)\bbQ^{lw}(z_s)\bbQ^{w1}(z_s)+\mfb(\tau_s)\bbQ^{lw}(\bar{z}_s)\bbQ^{w1}(\bar{z}_s)))]\cdot\mbE[e_q].
		\end{align*}
		Hence, for \(i\in\{1,2,3\}\), we define
        \begin{align*}
            \mcV_{i,e,N}^{(3)}(z,z_s):=\frac{1}{N}\sum_{l\neq i}^3\sum_{w=1}^3\mbE[\bbQ^{il}(z)\bbQ^{lw}(z_s)\bbQ^{wi}(z_s)],
        \end{align*}
        we can obtain that
		\begin{align}
			&N^{-1/2}\sum_{i,j,k=1}^{m,n,p}\mathbb{E}[\partial_{ijk}^{(1)}\{F_{ijk}^1(z)e_q^c\}]=-\mfm_1(z)\big(\mcC_{2,e,N}^{(3)}+\mcC_{3,e,N}^{(3)}\big)-\mcC_{1,e,N}^{(3)}(\mfm_2(z)+\mfm_3(z))\notag\\
			&-2{\rm i}\mbE[e_q]\sum_{s=1}^qt_s\big(\mfa(\tau_s)\mcV_{1,e,N}^{(3)}(z,z_s)+\mfb(\tau_s)\mcV_{1,e,N}^{(3)}(z,\bar{z}_s)\big)+\mrO(C_{\eta_0}N^{-\omega}).\label{Eq of characteristic 1 d=3}
		\end{align}
    {\bf Second derivatives:} The calculations for the second derivatives are the same as those in Theorem \ref{Thm of Variance}, we can show that
    \begin{align*}
		N^{-1/2}\Bigg|\sum_{i,j,k=1}^{m,n,p}\mbE\big[\partial_{ijk}^{(2)}\{F_{ijk}^1(z)e_q^c\}\big]\Bigg|\leq\mrO(C_{\eta_0}N^{-\omega}),
	\end{align*}
    we omit the details here.

    \vspace{5mm}
    \noindent
    {\bf Third derivatives:} When \(l=3\), only the following one contains the major terms:
		\begin{align*}
			N^{-1/2}\sum_{i,j,k=1}^{m,n,p}\mbE\big[\partial_{ijk}^{(2)}\{e_q^c\}\partial_{ijk}^{(1)}\{F_{ijk}^1(z)\}\big],
		\end{align*}
		where
		\begin{align*}
			&\partial_{ijk}^{(2)}\{e_q^c\}=-e_qA_1-2{\rm i} e_qA_2,
		\end{align*}
        and
        \begin{align*}
            A_1:&=\frac{1}{N}\Big(\sum_{q=1}^s\sum_{w=1}^3\sum_{s_1\neq s_2}^3t_s\mcA_{ijk}^{(s_1,s_2)}\big[\mfa(\tau_s)Q_{\tilde{s}_1\cdot}^{s_1w}(z_s)Q_{\cdot\tilde{s}_2}^{ws_2}(z_s)+\mfb(\tau_s)Q_{\tilde{s}_1\cdot}^{s_1w}(\bar{z}_s)Q_{\cdot\tilde{s}_2}^{ws_2}(\bar{z}_s)\big]\Big)^2,\\
            A_2:&=\frac{1}{N}\sum_{q=1}^s\sum_{w=1}^3\sum_{s_1\neq s_2}^3t_s\mcA_{ijk}^{(s_1,s_2)}\mcA_{ijk}^{(s_3,s_4)}\big[\mfa(\tau_s)Q_{\tilde{s}_1\tilde{s}_3}^{s_1s_3}(z_s)Q_{\tilde{s}_4\cdot}^{s_4w}(z_s)Q_{\cdot\tilde{s}_2}^{ws_2}(z_s)+\mfb(\tau_s)Q_{\tilde{s}_1\tilde{s}_3}^{s_1s_3}(\bar{z}_s)Q_{\tilde{s}_4\cdot}^{s_4w}(\bar{z}_s)Q_{\cdot\tilde{s}_2}^{ws_2}(\bar{z}_s)\big].
        \end{align*}
		and
		\begin{align*}
			&\partial_{ijk}^{(1)}\{F_{ijk}^1(z)\}=-\frac{1}{\sqrt{N}}\sum_{s_1\neq s_2}^3\sum_{l\neq1}^3\mcA_{ijk}^{(1,l)}\mcA_{ijk}^{(s_1,s_2)}Q_{i\tilde{s}_1}^{1s_1}(z)Q_{\tilde{s}_2\tilde{l}}^{s_2l}(z).
		\end{align*}
		For the \(A_1\) in \(\partial_{ijk}^{(2)}\{e_q^c\}\), it is easy to see it only contains the off-diagonal terms since \(s_1\neq s_2\), so by Lemma \ref{Cor of minor terms}, if it associates with \(\partial_{ijk}^{(1)}\{F_{ijk}^1(z)\}\), the summation over all \(i,j,k\) will be minor with order of \(C_{\eta_0}N^{-1/2}\). In fact, although \(A_1\) is a square of the summation of off-diagonal terms, we can use Cauchy's inequality  to transform it as the summation of square of off-diagonal terms, then we can claim it is a minor term by Lemma \ref{Cor of minor terms}. Next, by Lemma \ref{Lem of minor terms}, we have
		\begin{align*}
			&N^{-1/2}\sum_{i,j,k=1}^{m,n,p}\mbE\big[A_2e_q\partial_{ijk}^{(1)}\{F_{ijk}^1(z)\}\big]\\
			&=-\frac{1}{N^2}\sum_{i,j,k=1}^{m,n,p}\sum_{q=1}^s\sum_{w=1}^3\sum_{l\neq1}^3\sum_{s_1,s_2}^{(1,l)}t_s(\mcA_{ijk}^{(1,l)})^4\mbE\big[(Q_{ii}^{11}(z)Q_{\tilde{l}\tilde{l}}^{ll}(z))\cdot\big(\mfa(\tau_s)Q_{\tilde{s}_1\tilde{s}_1}^{s_1s_1}(z_s)Q_{\tilde{s}_2\cdot}^{s_2w}(z_s)Q_{\cdot\tilde{s}_2}^{ws_2}(z_s)\\
			&+\mfb(\tau_s)Q_{\tilde{s}_1\tilde{s}_1}^{s_1s_1}(\bar{z}_s)Q_{\tilde{s}_2\cdot}^{s_2w}(\bar{z}_s)Q_{\cdot\tilde{s}_2}^{ws_2}(\bar{z}_s)\big)\cdot e_q\big]+\mrO(C_{\eta_0}N^{-1/2})\\
			&=-\frac{1}{N^2}\sum_{q=1}^s\sum_{w=1}^3\sum_{l\neq1}^3\sum_{s_1,s_2}^{(1,l)}t_s\Vert\bba^{(5-l)}\Vert_4^4\mbE\big[\big(\mfa(\tau_s)\tr(\bbQ^{s_1s_1}(z)\circ\bbQ^{s_1s_1}(z_s))\tr(\bbQ^{s_2s_2}(z)\circ(\bbQ^{s_2w}(z_s)\bbQ^{ws_2}(z_s)))\\
			&+\mfb(\tau_s)\tr(\bbQ^{s_1s_1}(z)\circ\bbQ^{s_1s_1}(\bar{z}_s))\tr(\bbQ^{s_2s_2}(z)\circ(\bbQ^{s_2w}(\bar{z}_s)\bbQ^{ws_2}(\bar{z}_s)))\big)\cdot e_q\big]+\mrO(C_{\eta_0}N^{-1/2}).
		\end{align*}
        For simplicity, for $i\in\{1,2,3\}$, we define
        \begin{align*}
            \mcW_{i,e,N}^{(3)}(z,z_s):=\frac{1}{N^2}\sum_{w=1}^3\sum_{l\neq i}^3\sum_{s_1,s_2}^{(1,i)}\Vert\bba^{(5-l)}\Vert_4^4\mbE[\tr(\bbQ^{s_1s_1}(z)\circ\bbQ^{s_1s_1}(z_s))\tr(\bbQ^{s_2s_2}(z)\circ(\bbQ^{s_2w}(z_s)\bbQ^{ws_2}(z_s))]),
        \end{align*}
        where the notation $\sum_{s_1,s_2}^{(i,l)}$ means that the summation of $s_1$ and $s_2$ are over $\{1,2,3\}\backslash\{i,r\}$. Then by Lemma \ref{Thm of Entrywise almost sure convergence}, we can further obtain
		\begin{align}
			&N^{-1/2}\sum_{i,j,k=1}^{m,n,p}\mbE\big[A_2e_q\partial_{ijk}^{(1)}\{F_{ijk}^1(z)\}\big]\notag\\
            &=\mbE[e_q]\sum_{q=1}^st_s\big(\mfa(\tau_s)\mcW_{1,e,N}^{(3)}(z,z_s)+\mfb(\tau_s)\mcW_{1,e,N}^{(3)}(z,\bar{z}_s)\big)+\mrO(C_{\eta_0}N^{-1/2}).\label{Eq of characteristic 3 d=3}
		\end{align}
	Now, let 
    $$\mcF_{l,e,N}^{(3)}(z,z_s):=\mcV_{l,e,N}^{(3)}(z,z_s)+\kappa_4\mcW_{l,e,N}^{(3)}(z,z_s),$$
    then combining (\ref{Eq of characteristic 1 d=3}) and (\ref{Eq of characteristic 3 d=3}), we have
	\begin{align*}
		&(z+\mfm(z)-\mfm_1(z))\mcC_{1,e,N}^{(3)}\\
		&=-\mfm_1(z)\big(\mcC_{2,e,N}^{(3)}+\mcC_{3,e,N}^{(3)}\big)-{\rm i}\mbE[e_q]\sum_{s=1}^qt_s\big[\mfa(\tau_s)\mcF_{1,e,N}^{(3)}(z,z_s)+\mfb(\tau_s)\mcF_{1,e,N}^{(3)}(z,\bar{z}_s)\big]+\mrO(C_{\eta_0}N^{-\omega}).
	\end{align*}
	Similarly, for other \(r\in\{1,2,3\}\), we have
	\begin{align*}
		&(z+\mfm(z)-\mfm_r(z))\mcC_{r,e,N}^{(3)}=-\mfm_r(z)\sum_{l\neq r}^3\mcC_{l,e,N}^{(3)}+\mrO(C_{\eta_0}N^{-\omega})\\
		&-{\rm i}\mbE[e_q]\sum_{s=1}^qt_s\big[\mfa(\tau_s)\mcF_{r,e,N}^{(3)}(z,z_s)+\mfb(\tau_s)\mcF_{r,e,N}^{(3)}(z,\bar{z}_s)\big].
	\end{align*}
	Next, define
	\begin{align*}
		\bbC_{e,N}^{(3)}(z):=[\mcC_{r,e,N}^{(3)}(z;\bbt_q,\bbtau_q,\bbz_q)]_{3\times1}\quad{\rm and}\quad\bbF_{e,N}^{(3)}(z,z_s):=[\mcF_{r,e,N}^{(3)}(z;z_s)]_{3\times1},
	\end{align*}
	so 
	\begin{align*}
		\bbTheta_N^{(3)}(z,z)\bbC_{e,N}^{(3)}(z)=-{\rm i}\mbE[e_q]\sum_{s=1}^qt_s\big[\mfa(\tau_s)\bbF_{e,N}^{(3)}(z,z_s)+\mfb(\tau_s)\bbF_{e,N}^{(3)}(z,\bar{z}_s)\big]+\mro(\boldsymbol{1}_{3\times3}),
	\end{align*}
	where \(\bbTheta_N^{(3)}(z,z)\) defined in (\ref{Eq of bbTheta d=3}) is invertible, and we have shown that \(\lim_{N\to\infty}\Vert\bbTheta_N^{(3)}(z,z)^{-1}+\bbPi^{(3)}(z,z)^{-1}\diag(\mfc^{-1}\circ\bbg(z))\Vert=0\). Thus, we obtain that
	\begin{align*}
		\lim_{N\to\infty}\Bigg\Vert\bbC_{e,N}^{(3)}(z)-{\rm i}\mbE[e_q]\sum_{s=1}^qt_s\bbPi^{(3)}(z,z)^{-1}\diag(\mfc^{-1}\circ\bbg(z))\big[\mfa(\tau_s)\bbF_{e,N}^{(3)}(z,z_s)+\mfb(\tau_s)\bbF_{e,N}^{(3)}(z,\bar{z}_s)\big]\Bigg\Vert=0.
	\end{align*}
	As a result,
	\begin{align*}
		&\frac{\partial}{\partial t_s}\mathbb{E}[e_q]={\rm i}\mathbb{E}\left[e_q\left(\mathfrak{a}(\tau_s)\gamma(z_s)+\mathfrak{b}(\tau_s)\gamma(\bar{z}_s)\right)\right]={\rm i}\sum_{r=1}^3\big(\mfa(\tau_s)\mcC_{r,e,N}^{(3)}(z_s)+\mfa(\tau_s)\mcC_{r,e,N}^{(3)}(\bar{z}_s)\big)\\
		&=-\mbE[e_q]\sum_{w=1}^qt_w\Big(\mfa(\tau_s)\boldsymbol{1}_3'\bbPi^{(3)}(z_s,z_s)^{-1}\diag(\mfc^{-1}\circ\bbg(z_s))\big[\mfa(\tau_w)\bbF_{e,N}^{(3)}(z_s,z_w)+\mfb(\tau_w)\bbF_{e,N}^{(3)}(z_s,\bar{z}_w)\big]\\
		&+\mfb(\tau_s)\boldsymbol{1}_3'\bbPi^{(3)}(\bar{z}_s,\bar{z}_s)^{-1}\diag(\mfc^{-1}\circ\bbg(\bar{z}_s))\big[\mfa(\tau_w)\bbF_{e,N}^{(3)}(\bar{z}_s,z_w)+\mfb(\tau_w)\bbF_{e,N}^{(3)}(\bar{z}_s,\bar{z}_w)\big]\Big)+\mro(1),
	\end{align*}
	which concludes (\ref{Eq of characteristic function d=3}). Hence, combined with Theorem \ref{Thm of Tightness}, we can conclude that \(\tr(\bbQ(z))-\mbE[\tr(\bbQ(z))]\) converges weakly to a Gaussian process in \(\mcS_{\eta_0}\).
\end{proof}
\subsection{Proof of Theorem \ref{Thm of CLT LSS d=3}}\label{sec of proof CLT d=3}
Having established tightness in \S\ref{Sec of Tightness} and convergence of finite-dimensional distributions in \S\ref{Sec of CLT sub}, we now complete the proof of Theorem \ref{Thm of CLT LSS d=3}.
\begin{proof}[Proof of Theorem \ref{Thm of CLT LSS d=3}]
	First, since
	\begin{align}
		&G_N(f)\overset{\mbP}{\longrightarrow}-\frac{1}{2\pi{\rm i}}\oint_{\mathfrak{C}}f(z)\{\tr(\bbQ(z))-Ng(z)\}dz\notag\\
		&=-\frac{1}{2\pi{\rm i}}\oint_{\mathfrak{C}}f(z)\{\tr(\bbQ(z))-\mathbb{E}[\tr(\bbQ(z))]\}dz-\frac{1}{2\pi{\rm i}}\oint_{\mathfrak{C}}f(z)\{\mathbb{E}[\tr(\bbQ(z))]-Ng(z)\}dz.\notag
	\end{align}
    Here, let us decompose \(\mathfrak{C}\) into \(\mfC^v\cup\mfC^h\), where
	$$\mfC^v:=\left\{z=\pm E_0+{\rm i}\eta\in\mathbb{C}:|\eta|\in[0,\eta_0]\right\}\quad{\rm and}\quad\mfC^h:=\left\{z=E\pm{\rm i}\eta_0\in\mathbb{C}:|E|\in[0,E_0]\right\}.$$
	First, by Theorems \ref{Thm of CLT}, \ref{Thm of Variance} and \ref{Thm of Mean function}, we know that $\tr(\bbQ(z))-Ng(z)$ converges weakly to a Gaussian process in \(\mathcal{S}_{\eta_0}\) (\ref{Eq of stability region}) with mean \(\mu_N^{(3)}(z)\) (\ref{Eq of Mean function}) and variance \(\mcC_N^{(3)}(z,z)\) (\ref{Eq of covariance function d=3}). Hence, since \(\mfC^v\subset\mcS_{\eta_0}\), we can conclude that
	$$(\sigma_N^{(3)})^{-1}\left(\frac{1}{2\pi{\rm i}}\oint_{\mathfrak{C}^h}f(z)\{\tr(\bbQ(z))-Ng(z)\}dz-\xi_N^{(3)}\right)\overset{d}{\longrightarrow}\mcN(0,1),$$
	where
	\begin{align}
		&\xi_N^{(3)}:=\frac{1}{2\pi{\rm i}}\oint_{\mathfrak{C}}f(z)\mu_N^{(3)}(z)dz,\label{Eq of CLT mean d=3}\\
		&(\sigma_N^{(3)})^2:=-\frac{1}{4\pi^2}\oint_{\mfC_1}\oint_{\mfC_2}f(z_1)f(z_2)\mcC_N^{(3)}(z_1,z_2)dz_1dz_2,\label{Eq of CLT variance d=3}
	\end{align}
	where \(\mfC_{1,2}\) are two disjoint rectangular contours with vertices of \(\pm E_{1,2}\pm{\rm i}\eta_{1,2}\) such that \(E_{1,2}\geq\max\{\zeta,\mfv_3\}+t\) and \(\eta_{1,2}>0\) are sufficiently small. Next, we will show that
	$$\lim_{\eta_0\downarrow0}\limsup_{N\to\infty}\frac{1}{2\pi{\rm i}}\oint_{\mathfrak{C}^v}f(z)\{\tr(\bbQ(z))-Ng(z)\}dz\overset{\mbP}{\longrightarrow}0,$$
	it is enough to prove
	\begin{align}
		&\lim_{\eta_0\downarrow0}\limsup_{N\to\infty}\mathbb{E}\Big|\int_{\mathfrak{C}^v}1_{\mcE_{\bbM}}f(z)\{\tr(\bbQ(z))-\mathbb{E}[\tr(\bbQ(z))]\}dz\Big|^2=0,\label{Eq of CLT minor 1 d=3}\\
		&\lim_{\eta_0\downarrow0}\limsup_{N\to\infty}\Big|\int_{\mathfrak{C}^v}1_{\mcE_{\bbM}}f(z)\{\mathbb{E}[\tr(\bbQ(z))]-Ng(z)\}dz\Big|^2=0,\label{Eq of CLT minor 2 d=3}
	\end{align}
    where the event \(\mcE_{\bbM}:=\{\Vert\bbM\Vert\leq\max\{\mfv_3,\zeta\}+t\}\), \(\mfv_3,\zeta\) are defined in Theorem \ref{Thm of Extreme eigenvalue N d=3} and \eqref{Eq of support boundary}, respectively, and $t>0$ is a fixed constant. By Theorem \ref{Thm of Extreme eigenvalue N d=3}, we know that \(\mbP(\mcE_{\bbM})\geq1-\mro(N^{-l})\) for any \(l>0\). Now, let us first prove (\ref{Eq of CLT minor 2 d=3}). By the definition of \(\mathfrak{C}^v\), we know that \(\operatorname{dist}(z,[-\mfv_3,\mfv_3])>t\) conditional on \(\mcE_{\bbM}\), so \(\Vert\bbQ(z)\Vert\leq t^{-1}\) for any \(z\in\mathfrak{C}^v\). Hence, we can use the same proofs of Theorem \ref{Thm of Mean function} to conculde that
	\begin{align*}
		\mathbb{E}[\tr(\bbQ(z))]-Ng(z)&=\mu_N^{(3)}(z)+\mrO(C_tN^{-\omega}),
	\end{align*}
	where the error term \(\mrO(C_tN^{-\omega})\) is independent of \(\eta_0\). Moreover, by Lemma \ref{Lem of analytic} later, we know that \(\mu_N^{(3)}(z)\) is analytic on \(\mfC^v\). And \(f(z)\) is also analytic on \(\mfC^v\) due to $f\in\mathfrak{F}_3$ in \eqref{Eq of analytic function d=3}. Therefore, conditional on \(\mcE_{\bbM}\), we have
	\begin{align*}
		\eqref{Eq of CLT minor 2 d=3}&\leq\lim_{\eta_0\downarrow0^+}\limsup_{N\to\infty}\Big(\oint_{\mfC^v}\big|f(z)(\mathbb{E}[\tr(\bbQ(z))]-Ng(z))\big|dz\Big)^2\\
		&\leq\lim_{\eta_0\downarrow0^+}\limsup_{N\to\infty}\Big(\oint_{\mfC^v}\big|f(z)\mu_N^{(3)}(z)\big|dz\Big)^2+2\lim_{\eta_0\downarrow0^+}\limsup_{N\to\infty}\eta_0C_tN^{-\omega}\sup_{z\in\mfC^v}|f(z)|=0,
	\end{align*}
	where we use the fact that \(\lim_{\eta_0\downarrow0^+}\oint_{\mfC^v}\big|f(z)\mu_N^{(3)}(z)\big|dz=0\) due to \(f(z)\mu_N^{(3)}(z)\) is analytic on \(\mfC^v\) and the length of \(\mfC^v\) tends to \(0\). Next, for (\ref{Eq of CLT minor 1 d=3}), conditional on \(\mcE_{\bbM}\), we have
	\begin{align*}
		&\eqref{Eq of CLT minor 1 d=3}\leq\lim_{\eta_0\downarrow0^+}\limsup_{N\to\infty}\oint_{\mfC^v}\mbE\big[\big|f(z)(\tr(\bbQ(z))-\mbE[\tr(\bbQ(z))])\big|^2\big]dz\\
    &=\lim_{\eta_0\downarrow0^+}\limsup_{N\to\infty}\oint_{\mfC^v}|f(z)|^2\Var(\tr(\bbQ(z)))dz.
	\end{align*}
	By the same proofs of Theorem \ref{Thm of Variance}, we know that
	\begin{align*}
		\Var(\tr(\bbQ(z)))=\mcC_N^{(3)}(z,z)+\mrO(C_tN^{-\omega}).
	\end{align*}
	Similarly, since \(\mcC_N^{(3)}(z,z)\) is analytic by Lemma \ref{Lem of analytic}, so we have
	$$(\ref{Eq of CLT minor 1 d=3})\leq\lim_{\eta_0\downarrow0^+}\limsup_{N\to\infty}\oint_{\mfC^v}|f(z)|^2\Var(\tr(\bbQ(z)))dz=0.$$
	Hence, we conclude (\ref{Eq of CLT minor 1 d=3}) and (\ref{Eq of CLT minor 2 d=3}), which completes our proof.
\end{proof}
\begin{lem}\label{Lem of analytic}
    The mean function \(\mu_N(z)\) in {\rm (\ref{Eq of Mean function})} and the covariance function \(\mcC_N^{(3)}(z_1,z_2)\) in {\rm (\ref{Eq of covariance function d=3})} are analytic for \(z,z_1,z_2\in\mathfrak{C}^v\).
\end{lem}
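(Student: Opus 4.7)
The plan is to reduce analyticity of $\mu_N^{(3)}(z)$ and $\mcC_N^{(3)}(z_1,z_2)$ on $\mfC^v$ to analyticity of three elementary ingredients on a complex neighborhood of $\mfC^v$: the vector Dyson solution $\bbg$, and the matrix inverses $\bbGa^{(3)}(z)^{-1}$ and $\bbPi^{(3)}(z_1,z_2)^{-1}$ from \eqref{Eq of bbW limiting d=3} and \eqref{Eq of invertible 2}. The explicit formulas \eqref{Eq of Mean function} and \eqref{Eq of covariance function d=3} express $\mu_N^{(3)}$ and $\mcC_N^{(3)}$ as finite polynomial combinations of the entries of these three ingredients together with the deterministic constants $\kappa_3,\kappa_4,\mfb_k^{(1)},\mcB_{(4)}^{(k,l)}$, so once each ingredient is shown to be analytic the conclusion follows by the chain rule. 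Since every point $z\in\mfC^v$ satisfies $|\Re(z)|=E_0\geq\max\{\mfv_3,\zeta\}+t$, the contour lies at distance at least $t$ from $[-\zeta,\zeta]$, and the verification can be carried out on a tubular neighborhood of $\mfC^v$ disjoint from this interval.

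For $\bbg$, I would extend the solution from $\mbC^+$, where Theorem~\ref{Thm of Dyson} provides it, across the two real gaps $(-\infty,-\zeta)\cup(\zeta,\infty)$. The aggregate $g=\sum_i g_i$ is already holomorphic on $\mbC\setminus[-\zeta,\zeta]$ by Theorem~\ref{Lem of finite support}, being the Stieltjes transform of a probability measure with support in this interval. For individual components $g_i$, the argument in the proof of Theorem~\ref{Lem of finite support} shows $\lim_{\eta\downarrow0}\Im g_i(E+\mathrm{i}\eta)=0$ on the gap, so each $g_i$ is real-analytic there; combined with the symmetry $\bbg(-\bar z)=-\overline{\bbg(z)}$ (noted in \S\ref{Sec of Singularity}) and the Schwarz reflection principle, this yields a holomorphic extension of the full vector $\bbg$ to a neighborhood of $\mfC^v$.

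For the two matrix inverses, Propositions~\ref{Pro of stability operator}, \ref{Pro of invertible matrices} and \ref{Pro of inverse norm} deliver invertibility and uniform operator-norm bounds on $\mathbb{H}_{\eta_0}\subset\mbC^+$. The plan is to propagate these to $\mfC^v$ by continuity of the determinants, viewed as analytic functions of $(\bbg(z_1),\bbg(z_2))$, together with the strict spectral-gap inequality $\Vert\bbF^{(3)}\Vert<1$ of Lemma~\ref{Lem of spectral gap}; on the real gap the entries of $\bbg$ are bounded and bounded below in modulus by a fixed-point argument analogous to Lemma~\ref{Lem of contraction} with the distance parameter $t$ playing the role of $\Im(z)$, which keeps both $\det\bbPi^{(3)}$ and the smallest singular value of $\bbGa^{(3)}$ bounded away from zero on $\mfC^v$. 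Cramer's rule then yields analyticity of the two inverses, and substitution into \eqref{Eq of Mean function} and \eqref{Eq of covariance function d=3} finishes the proof.

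The main obstacle is the rigorous crossover through $\Im(z)=0$ in the invertibility bounds: the spectral-gap argument in Section~\ref{Sec of Stability operator} exploits the identity \eqref{Eq of F norm}, whose derivation passes through the imaginary part relation \eqref{Eq of MDE taking image} and hence requires $\Im(z)>0$. On $\mfC^v$ the two real endpoints $\pm E_0$ lie outside $[-\zeta,\zeta]$, where $\bbg$ is real-valued, so the bound $\Vert\bbF^{(3)}\Vert<1$ has to be re-established by a compactness argument on the closed set $\mfC^v$ rather than through \eqref{Eq of F norm}; once this is done, all other steps—analyticity of polynomial/rational combinations and boundedness of $\bbg$ on $\mfC^v$—follow routinely from the results of Section~\ref{Sec of Dyson}.
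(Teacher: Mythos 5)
Your outline (reduce analyticity of $\mu_N^{(3)}$ and $\mcC_N^{(3)}$ to analyticity of $\bbg$, $\bbGa^{(3)}(z)^{-1}$ and $\bbPi^{(3)}(z_1,z_2)^{-1}$ on a neighborhood of $\mfC^v$, note that $g_i$ extends holomorphically across $(\zeta,\infty)$ as a Stieltjes transform, then close via Cramer's rule) is the same as the paper's, and you have correctly pinpointed that the one nontrivial issue is invertibility of $\bbPi^{(3)}$ and $\bbGa^{(3)}$ as $\Im(z)\downarrow0$ at $\Re(z)=\pm E_0$. But the fix you propose there is where the argument breaks.

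You suggest replacing the use of \eqref{Eq of F norm} by ``a compactness argument on the closed set $\mfC^v$.'' This is circular: compactness upgrades a pointwise strict inequality $\Vert\bbF^{(3)}(z)\Vert<1$ on a compact set to a uniform one, but the missing step is precisely the pointwise inequality at the real endpoints $z=\pm E_0$, where $\bbg$ is real-valued and the identity \eqref{Eq of F norm} degenerates to $0/0$. The paper does \emph{not} abandon \eqref{Eq of F norm}; it keeps using it on $\Im(z)\in(0,\eta_0]$ and quantifies the $0/0$ behaviour by the Stieltjes representation $\Im(g_i(E_0+{\rm i}\eta))=\int\frac{\eta}{(E_0-x)^2+\eta^2}\,\nu_i(dx)$. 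Because ${\rm dist}(\pm E_0,[-\zeta,\zeta])\ge t$, this gives $\sup_{\eta\in(0,\eta_0]}\eta^{-1}\Im(g_i)\le t^{-2}\mfc_i$ while $\inf_{\eta}|g_i|>\mfc_i/(2t)$, so both numerator and denominator in \eqref{Eq of F norm} scale like $\eta$ and their ratio stays bounded below, yielding the explicit uniform bound $\Vert\bbF^{(3)}(z)\Vert\le 1-\tfrac{t^2}{2(t^2+\eta_0^2)}$ on $\Im(z)\in(0,\eta_0]$; this extends to $\eta=0$ by continuity of $\bbF^{(3)}$ after the analytic continuation of $\bbg$. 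Combined with the spectral-gap bound of Lemma~\ref{Lem of spectral gap} (which uses only $|g_i|>0$ and so survives the limit), this gives invertibility of $\diag(e^{-2{\rm i}\bbq})-\bbF^{(3)}$ and hence of $\bbPi^{(3)}$ and $\bbGa^{(3)}$ on all of $\mfC^v$. Your proposal would need to be reworked to produce this uniform bound explicitly; as written it only re-states the difficulty.

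Two minor remarks: the reflection you invoke, $\bbg(-\bar z)=-\overline{\bbg(z)}$, is a symmetry under $z\mapsto -\bar z$ (real-axis sign flip), not the conjugate symmetry $\bbg(\bar z)=\overline{\bbg(z)}$ needed for the Schwarz reflection across the real axis; the latter holds for any Stieltjes transform of a real measure, so the extension is still valid, but you cited the wrong identity. Also, the extension of $\bbg$ to a complex neighborhood of $\mfC^v$ is already immediate from each $g_i$ being a Stieltjes transform supported in $[-\zeta,\zeta]$—one does not need a reflection argument, just the integral representation.
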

\begin{proof}
	We first need to ensure that \(\boldsymbol{\Pi}^{(3)}(z_1,z_2)\) is still invertible when \(z\in\mathfrak{C}^v\). Based on the definition of \(\boldsymbol{\Pi}^{(3)}(z_1,z_2)\) in \eqref{Eq of invertible 2} and the proof of Proposition \ref{Pro of invertible matrices}, it is enough to show that \(\operatorname{diag}(e^{-2{\rm i}\boldsymbol{q}})-\bbF^{(3)}(z)\) is invertible, where \(\boldsymbol{q},\bbF^{(3)}(z)\) are defined in (\ref{Eq of MDE taking image}) and (\ref{Eq of operator F}). For convenience, we simplify \(\bbF^{(3)}(z)\) by $\bbF$. let us first show that \(\Vert\bbF(z)\Vert<1\) for all \(z\in\mathfrak{C}^v\). Since
	$$\Vert\bbF\Vert=1-\frac{\Im(z)\langle\boldsymbol{f},\boldsymbol{\mfc}^{-1}\circ|\bbg|\rangle}{\langle\boldsymbol{f},\sin\boldsymbol{q}\rangle},$$
	where \(\sin\boldsymbol{q}=\frac{\Im(\bbg)}{|\bbg|}\) and \(\boldsymbol{f}\) is the unit eigenvector of \(\bbF\) with eigenvalue of \(\Vert\bbF\Vert\). Denote \(\nu_i\) be the measure deduced by \(g_i(z)\) for \(i=1,\cdots,d\), whose support is bounded by \(\zeta\) defined in (\ref{Eq of support boundary}), then
	$$\Im(g_i(z))=\int_{-\xi}^{\xi}\frac{\eta}{(E_0-x)^2+\eta^2}\nu_i(dx),$$
	where \(z=E_0+{\rm i}\eta\in\mathfrak{C}^v\) and \(|E_0|-\xi=t>0,\eta\in[0,\eta_0]\). Hence,
	$$\sup_{\eta\in[0,\eta_0]}\eta^{-1}\Im(g_i(z))\leq\int_{-\xi}^{\xi}\frac{1}{(E_0-x)^2}\nu_i(dx)<t^{-2}\mfc_i.$$
    On the other hand, we have
    \begin{align*}
        &\inf_{\eta\in[0,\eta_0]}|g_i(E_0+{\rm i}\eta)|\geq\inf_{\eta\in[0,\eta_0]}|\Re(g_i(E_0+{\rm i}\eta))|\geq\int_{-\xi}^{\xi}\frac{|E_0-x|}{|E_0-x|^2+\eta_0^2}\nu_i(dx)\\
        &>\frac{t\mfc_i}{t^2+\eta_0^2}>\mfc_i/(2t),
    \end{align*}
    so we have \(\sup_{z\in\mfC^v}\frac{\Im(z)^{-1}\Im(g_i(z))}{|g_i(z)|}\leq2t^{-1}\) and $\Im(z)^{-1}\langle\boldsymbol{f},\sin\boldsymbol{q}\rangle<2t^{-1}\langle\boldsymbol{f},1\rangle$. In addition, since
	\begin{align}
		&\inf_{\eta\in[0,\eta_0]}\mfc_i^{-1}|g_i(E_0+{\rm i}\eta)|=\mfc_i^{-1}\inf_{\eta\in[0,\eta_0]}|\Re(g_i(E_0+{\rm i}\eta))|>\frac{t}{t^2+\eta_0^2},\notag
	\end{align}
	then 
	$$\langle\boldsymbol{f},\boldsymbol{\mfc}^{-1}\circ|\bbg|\rangle>\frac{t}{t^2+\eta_0^2}\langle\boldsymbol{f},\boldsymbol{1}\rangle.$$
	Therefore, we conclude that
	$$\sup_{\eta\in[0,\eta_0]}\Vert\bbF(E+{\rm i}\eta)\Vert\leq1-\frac{t^2}{2(t^2+\eta_0^2)}<1,$$
    combined with Lemma \ref{Lem of spectral gap}, we can further conclude that \({\rm diag}(e^{-2{\rm i}\boldsymbol{q}})-\bbF(z)\) is invertible for $z\in\mfC^v$, as does \(\boldsymbol{\Pi}^{(3)}(z,z)\). Moreover, by the same proofs of Proposition \ref{Pro of invertible matrices}, we can show that \(\boldsymbol{\Pi}^{(3)}(z_1,z_2)\) is invertible for $z_1,z_2\in\mfC^v$. Since \(g_i(z)\) are analytic on \(\mathfrak{C}^v\), then the entries of \(\boldsymbol{\Pi}^{-1}(z,z)\) are also analytic; further based on the system equations in \S\ref{Sec of majors}, \(W_{ij}^{(3)}(z)\) \eqref{Eq of solve W} and \(V_{ij}^{(3)}(z,z)\) \eqref{Eq of bbV d=3} are all analytic, so the mean function \(\mu_N(z)\) in (\ref{Eq of Mean function}) is also analytic. Similarly, by Theorem \ref{Thm of Variance} and system equations in \S\ref{Sec of majors}, \(\mcC_N^{(3)}(z_1,z_2)\) is analytic on \(\mfC^v\) due to the system of equations for \(\mcV_{ij}^{(3)}(z_1,z_2)\) \eqref{Eq of mcV limiting d=3} and \(\mcW_{ij,N}^{(3)}(z_1,z_2)\) \eqref{Eq of mcW limiting d=3} only depend on \(\bbg(z),\bbPi^{(3)}(z_1,z_2)\), so the covariance function \(\mcC_N^{(3)}(z_1,z_2)\) in (\ref{Eq of covariance function d=3}) is also analytic, which completes our proof.
\end{proof}

\section{General cases}\label{Sec of General cases}
\setcounter{equation}{0}
\def\theequation{\thesection.\arabic{equation}}
\setcounter{subsection}{0}
In this section, we extend all results from \S\ref{Sec of entrywise law}, \S\ref{Sec of mean and covariance} and \S\ref{Sec of CLT} to general \(d\geq3\). Since the proof procedures for $d\geq3$ are analogous to those for $d=3$, we present only the key calculations to highlight the differences. Moreover, in proofs of these generalized results, we only present the key calculations to highlight the differences. Before presenting the details, we establish some notation. Recall the blockwise tensor contraction mapping \(\bbPhi_d\) defined in (\ref{Eq of tensor contraction}), denote
\begin{align*}
	\bbM=\frac{1}{\sqrt{N}}\bbPhi_d(\bbX,\bba^{(1)},\cdots\bba^{(d)})\quad{\rm and}\quad\bbQ(z)=(\bbM-z\bbI_N)^{-1},
\end{align*}
where \(\bba^{(i)}\in\mbS^{(n_i-1)},i=1\cdots,d\) are \(d\) fixed unit deterministic vectors with bounded $L^2$ norms and \(N=\sum_{i=1}^dn_i\), the dimensions \(n_1,\cdots,n_d\) satisfy Assumption \ref{Ap of dimension}, \(\bbX=[X_{i_1\cdots i_d}]_{n_1\times\cdots\times n_d}\in\mbR^{n_1\times\cdots\times n_d}\) is the random tensor such that \(X_{i_1\cdots i_d}\) are i.i.d. satisfying Assumption \ref{Ap of general noise}. Similar to \eqref{Eq of stability region}, for any sufficiently small \(\eta_0>0\), we define
\begin{align}
    \mcS_{\eta_0}:=\{z\in\mbC^+:{\rm dist}(z,[-\max\{\mfv_d,\zeta\},\max\{\mfv_d,\zeta\}])\geq\eta_0,|\Re(z)|,|\Im(z)|\leq\eta_0^{-1}\}.\label{Eq of stability region general d}
\end{align}
Unless otherwise stated, $\omega\in(1/2-\delta,1/2)$, where $\delta>0$ is a sufficient small constant. For any matrix $A$, $A_{i\cdot}$ and $A_{\cdot j}$ means the $i$-th row and $j$-th column of $A$, respectively. 
\subsection{Preliminary Lemmas}
First, we will extend Lemmas \ref{Thm of Entrywise almost sure convergence}, \ref{Lem of minor terms} and \ref{Cor of minor terms} to general \(d\geq3\). Similar to (\ref{Eq of partial N}), we have
\begin{align*}
	\partial_{i_1\cdots i_d}^{(1)}\bbM=\frac{1}{\sqrt{N}}\left(\begin{array}{cccc}
		\boldsymbol{0}_{n_1\times n_1}&\mcA_{i_1\cdots i_d}^{(1,2)}\bbe_{i_1}^{n_1}(\bbe_{i_2}^{n_2})'&\cdots&\mcA_{i_1\cdots i_d}^{(1,d)}\bbe_{i_1}^{n_1}(\bbe_{i_d}^{n_d})'\\
		\mcA_{i_1\cdots i_d}^{(2,1)}\bbe_{i_2}^{n_2}(\bbe_{i_1}^{n_1})'&\boldsymbol{0}_{n_2\times n_2}&\cdots&\mcA_{i_1\cdots i_d}^{(2,d)}\bbe_{i_2}^{n_2}(\bbe_{i_d}^{n_d})'\\
		\vdots&\cdots&\ddots&\vdots\\
		\mcA_{i_1\cdots i_d}^{(d,1)}\bbe_{i_d}^{n_d}(\bbe_{i_1}^{n_1})'&\cdots&\mcA_{i_1\cdots i_d}^{(d,d-1)}\bbe_{i_{d-1}}^{n_{d-1}}(\bbe_{i_d}^{n_d})'&\boldsymbol{0}_{n_d\times n_d}
	\end{array}\right),
\end{align*}
where \(\bbe_{i_k}^{n_k}\in\mbR^{n_k}\) such that its \(i_k\)-th entry is 1 while others are 0, and \(\partial_{i_1\cdots i_d}^{(l)}:=\frac{\partial^l}{\partial X_{i_1\cdots i_d}^l}\) is the generalization of (\ref{Eq of partial operator}). Since we also split \(\bbQ(z)=[\bbQ^{st}(z)]_{d\times d}\) into \(d\times d\) blocks such that \(\bbQ^{st}(z)\in\mbC^{n_s\times n_t}\) for \(s,t\in\{1,\cdots,d\}\), we say \(\bbQ^{st}(z)\) comes from the off-diagonal block if \(s\neq t\), otherwise it belongs to the diagonal block. Before proving Lemma \ref{Thm of Entrywise almost sure convergence} for general \(d\geq3\), we need extend Lemma \ref{Lem of minor terms 1} first:
\begin{lem}\label{Rem of extension minor 1}
	For any \(K\in\mbN^+\) and \(z\in\mbC^+\), let \(\bbx,\bby\in\mbC^N\) be two deterministic vectors with bounded $L^2$ norms, then for \(l=1,2,3\), we have
	$$\sum_{i_1\cdots i_d}^{n_1\cdots n_d}\big|\boldsymbol{x}'\partial_{i_1\cdots i_d}^{(l)}\Big(\prod_{k=1}^K\bbQ(z)\Big)\boldsymbol{y}\big|^2<\left\{\begin{array}{ll}
	     C_l\Vert\bbQ(z)\Vert^{2(l+K)}N^{-1}&l=1,2,  \\
	     C_l\Vert\bbQ(z)\Vert^{2(l+K)}N^{-2}&l=3. 
	\end{array}\right.$$
\end{lem}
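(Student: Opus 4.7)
The plan is to follow the same template as the proof of Lemma \ref{Lem of minor terms 1}, since the target bound is identical; only the range of block indices grows from $\{1,2,3\}$ to $\{1,\ldots,d\}$. First I would record the $d$-fold analogue of the derivative formula \eqref{Eq of partial Q},
\begin{align*}
\partial_{i_1\cdots i_d}^{(l)} Q_{j_1 j_2}^{s_1 s_2} = (-N^{-1/2})^l\, l!\!\!\! \sum_{t_1,\ldots,t_{2l}}\!\!\! Q^{s_1 t_1}_{j_1 \tilde{t}_1} \Big( \prod_{\alpha=1}^{l-1} \mcA_{i_1 \cdots i_d}^{(t_{2\alpha-1},t_{2\alpha})} Q^{t_{2\alpha} t_{2\alpha+1}}_{\tilde{t}_{2\alpha} \tilde{t}_{2\alpha+1}} \Big) \mcA_{i_1 \cdots i_d}^{(t_{2l-1},t_{2l})} Q^{t_{2l} s_2}_{\tilde{t}_{2l} j_2},
\end{align*}
where $t_1,\ldots,t_{2l}\in\{1,\ldots,d\}$ with $t_{2\alpha-1}\neq t_{2\alpha}$. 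Applying the product rule to $\partial_{i_1\cdots i_d}^{(l)}\big(\prod_{k=1}^K \bbQ^{s_k s_{k+1}}\big)$ and expanding the square reduces the claim, up to combinatorial factors depending only on $K$ and $l$, to bounding sums of the form $\sum_{i_1,\ldots,i_d} |\mcP|^2$ where $\mcP$ is a product of $\bbQ$-block entries interleaved with factors $\mcA_{i_1\cdots i_d}^{(t_a,t_b)}$.

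For $l=1$ a single such $\mcP$ has the form $N^{-1/2}(\bbx^{(l_0)})' Q^{s_{l_0} t_1}_{\cdot \tilde{t}_1} \mcA_{i_1\cdots i_d}^{(t_1,t_2)} Q^{t_2 s_{l_0+1}}_{\tilde{t}_2 \cdot} \bby^{(l_0)}$; squaring and summing produces $N^{-1}\Vert\bbQ^{t_1 s_{l_0}} \bbx^{(l_0)}\Vert_2^2 \Vert\bbQ^{t_2 s_{l_0+1}} \bby^{(l_0)}\Vert_2^2 \leq N^{-1} \Vert\bbQ\Vert^{2(K+1)}$ upon using $\Vert\bba^{(j)}\Vert_2 = 1$. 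For $l=2,3$ I would run the same case analysis as in Lemma \ref{Lem of minor terms 1}, classifying each intermediate block $Q^{t_{2\alpha-1} t_{2\alpha}}_{\tilde{t}_{2\alpha-1}\tilde{t}_{2\alpha}}$ as either off-diagonal (where Cauchy--Schwarz with $\Vert\bba^{(j)}\Vert_2 = 1$ converts the local sum into a product of operator-norm bounded quantities) or diagonal (where $\sum_{\tilde{t}}|Q^{tt}_{\tilde{t}\tilde{t}}|^2 \leq \tr|\bbQ^{tt}|^{\circ 2} \leq N\Vert\bbQ\Vert^2$ contributes an extra factor of $N$). The sub-case split within each category is driven by whether a given $\mcA_{i_1\cdots i_d}^{(t_a,t_b)}$ shares its index with adjacent block rows/columns, in which case $\sum_i (a_i^{(j)})^2 = 1$ applies, or not, in which case $\sum_i |Q^{rs}_{i \cdot} \bby|^2 \leq \Vert\bbQ\Vert^2 \Vert\bby\Vert_2^2$ applies. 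Balancing the $N^{-l}$ prefactor from $(N^{-1/2})^{2l}$ against at most one trace-induced $N$ gives $N^{-1}\Vert\bbQ\Vert^{2(K+l)}$ for $l=1,2$ and $N^{-2}\Vert\bbQ\Vert^{2(K+3)}$ for $l=3$.

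The main obstacle is purely bookkeeping: for general $d$ the number of admissible index assignments $(t_1,\ldots,t_{2l})$ is larger than in the $d=3$ case, so each case in the proof of Lemma \ref{Lem of minor terms 1} branches into more sub-cases. However, because every $\mcA_{i_1\cdots i_d}^{(t_a,t_b)}$ omits exactly two of the $\bba^{(j)}$ indices regardless of $d$, the local index-pairing patterns that decide whether a given contribution is bounded by $\Vert\bbQ\Vert^2$, $N^{1/2}\Vert\bbQ\Vert^2$, or $N\Vert\bbQ\Vert^2$ are structurally the same as in the $d=3$ analysis. Consequently no new analytic input is required; the generalization is obtained by re-indexing the summations and re-running the same termwise Cauchy--Schwarz and operator-norm estimates.
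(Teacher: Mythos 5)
Your proposal takes essentially the same approach as the paper's own proof: both generalize the derivative formula \eqref{Eq of partial Q} to index range $\{1,\ldots,d\}$, then reduce the termwise estimates to the same diagonal/off-diagonal case analysis from Lemma \ref{Lem of minor terms 1}, using that each $\mcA_{i_1\cdots i_d}^{(t_a,t_b)}$ always omits exactly two modes so that $\Vert\bba^{(j)}\Vert_2 = 1$ neutralizes the extra factors for $d\geq4$. Your accounting of the orders ($N^{-l}$ prefactor against at most one trace-induced factor of $N$) correctly reproduces the claimed bounds for $l=1,2,3$.
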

\begin{proof}
	Note that \(\partial_{i_1\cdots i_d}^{(l)}\bbQ=(-1)^ll!(\bbQ\partial_{i_1\cdots i_d}^{(1)}\bbM)^l\bbQ\) for \(l\in\mbN^+\). We consider the following cases:

    \vspace{5mm}
    \noindent
    {\bf First derivatives:} When \(l=1\), we have
		\begin{align*}
			&\sum_{i_1\cdots i_d}^{n_1\cdots n_d}|(\bbx^{(l_0)})'\partial_{i_1\cdots i_d}^{(1)}\bbQ^{s_{l_0}s_{l_0+1}}\bby^{(l_0)}|^2\leq N^{-1}C_d\sum_{i_1\cdots i_d}^{n_1\cdots n_d}\sum_{t_1\neq t_2}^d|(\bbx^{(l_0)})'Q_{\cdot i_{t_1}}^{s_{l_0}t_1}\mcA_{i_1\cdots i_d}^{(t_1,t_2)}Q_{i_{t_2}\cdot}^{t_2s_{l_0+1}}\bby^{(l_0)}|^2\\
			&\leq C_dN^{-1}\sum_{i_1\cdots i_d}^{n_1\cdots n_d}\sum_{t_1\neq t_2}^d\prod_{l\neq t_1,t_2}^d|(\bbx^{(l_0)})'Q_{\cdot i_{t_1}}^{s_{l_0}t_1}a_{i_l}^{(l)}Q_{i_{t_2}\cdot}^{t_2s_{l_0+1}}\bby^{(l_0)}|^2\\
			&\leq C_dN^{-1}\sum_{t_1\neq t_2}^d\Vert\bbQ^{t_1s_{l_0}}\bbx^{(l_0)}\Vert_2^2\cdot\Vert\bbQ^{t_2s_{l_0+1}}\bby^{(l_0)}\Vert_2^2\leq C_dN^{-1}\Vert\bbQ\Vert^{2(K+1)}.
		\end{align*}
    {\bf Second derivatives:} When \(l=2\), we have
		\begin{align*}
			&\sum_{i_1\cdots i_d}^{n_1\cdots n_d}|(\bbx^{j_1})'\partial_{i_1\cdots i_d}^{(2)}\bbQ^{j_1j_2}\bby^{j_2}|^2\leq C_d\sum_{s_1\neq s_2}^d\sum_{s_3\neq s_4}^dN^{-2}\sum_{i_1\cdots i_d}^{n_1\cdots n_d}|(\bbx^{j_1})'Q_{\cdot i_{s_1}}^{j_1s_1}\mcA_{i_1\cdots i_d}^{(s_1,s_2)}Q_{i_{s_2}i_{s_3}}^{s_2s_3}\mcA_{i_1\cdots i_d}^{(s_3,s_4)}Q_{i_{s_4}\cdot}^{s_4j_2}\bby^{j_2}|^2\\
			:&=C_d\sum_{s_1\neq s_2}^d\sum_{s_3\neq s_4}^d\mcP_{s_1\cdots s_4}(j_1,j_2)
		\end{align*}
		By the proof of Lemma \ref{Lem of minor terms 1}, \(\mcP_{s_1\cdots s_4}(j_1,j_2)\) will have the maximal order of \(N\) when \(s_2=s_3\) and \(s_1=s_4\)
		\begin{align*}
			&N^{-2}\sum_{i_1\cdots i_d}^{n_1\cdots n_d}|(\bbx^{j_1})'Q_{\cdot i_{s_1}}^{j_1s_1}\mcA_{i_1\cdots i_d}^{(s_1,s_2)}Q_{i_{s_2}i_{s_3}}^{s_2s_3}\mcA_{i_1\cdots i_d}^{(s_3,s_4)}Q_{i_{s_4}\cdot}^{s_4j_2}\bby^{j_2}|^2\\
			&\leq N^{-2}\tr(|\bbQ^{s_2s_2}|^{\circ2})\cdot(|\bbx^{j_1}|^{\circ2})'|\bbQ^{j_1s_1}|^{\circ2}|\bbQ^{s_1j_2}|^{\circ2}(|\bby^{j_2}|^{\circ2})\leq N^{-1}\Vert\bbQ\Vert^{2(K+2)},
		\end{align*}
		and \(s_2\neq s_3\) and \(s_1=s_4\)
		\begin{align*}
			&N^{-2}\sum_{i_1\cdots i_d}^{n_1\cdots n_d}|(\bbx^{j_1})'Q_{\cdot i_{s_1}}^{j_1s_1}\mcA_{i_1\cdots i_d}^{(s_1,s_2)}Q_{i_{s_2}i_{s_3}}^{s_2s_3}\mcA_{i_1\cdots i_d}^{(s_3,s_4)}Q_{i_{s_4}\cdot}^{s_4j_2}\bby^{j_2}|^2\\
			&\leq N^{-2}\tr(|\bbQ^{s_2s_2}|^{\circ2})\cdot(|\bbx^{j_1}|^{\circ2})'|\bbQ^{j_1s_1}|^{\circ2}|\bba^{(s_1)}|^{\circ2}\cdot(|\bby^{j_2}|^{\circ2})'|\bbQ^{j_2s_3}|^{\circ2}|\bba^{(s_3)}|^{\circ2}\leq N^{-1}\Vert\bbQ\Vert^{2(K+2)},
		\end{align*}
		for other situations, \(\mcP_{s_1\cdots s_4}(j_1,j_2)\leq N^{-2}\Vert\bbQ\Vert^{2(k+2)}\).

    \vspace{5mm}
    \noindent
    {\bf Third derivatives:} \(l=3\): similarly, it is enough to bound
		\begin{align*}
			\mcP_{s_1\cdots s_6}(j_1,j_2):=N^{-3}\sum_{i_1\cdots i_d}^{n_1\cdots n_d}|(\bbx^{j_1})'Q_{\cdot i_{s_1}}^{j_1s_1}\mcA_{i_1\cdots i_d}^{(s_1,s_2)}Q_{i_{s_2}i_{s_3}}^{s_2s_3}\mcA_{i_1\cdots i_d}^{(s_3,s_4)}Q_{i_{s_4}i_{s_5}}^{s_4s_5}\mcA_{i_1\cdots i_d}^{(s_5,s_6)}Q_{i_{s_6}\cdot}^{s_6j_2}\bby^{j_2}|^2,
		\end{align*}
		and \(\mcP_{s_1\cdots s_6}(j_1,j_2)\) will have the maximal order of \(N\) when \(s_2=s_3=s_6,s_1=s_4=s_5\)
		\begin{align*}
			&N^{-3}\sum_{i_1\cdots i_d}^{n_1\cdots n_d}|(\bbx^{j_1})'Q_{\cdot i_{s_1}}^{j_1s_1}\mcA_{i_1\cdots i_d}^{(s_1,s_2)}Q_{i_{s_2}i_{s_2}}^{s_2s_2}\mcA_{i_1\cdots i_d}^{(s_2,s_1)}Q_{i_{s_1}i_{s_1}}^{s_1s_1}\mcA_{i_1\cdots i_d}^{(s_1,s_2)}Q_{i_{s_2}\cdot}^{s_2j_2}\bby^{j_2}|^2\\
			&\leq N^{-3}(|\bbx^{j_1}|^{\circ2})'|\bbQ^{j_1s_1}|^{\circ2}|\bbQ^{s_1s_1}|^{\circ2}\boldsymbol{1}_{n_{s_1}}\cdot(|\bby^{j_2}|^{\circ2})'|\bbQ^{j_2s_2}|^{\circ2}|\bbQ^{s_2s_2}|^{\circ2}\boldsymbol{1}_{n_{s_2}}\leq N^{-2}\Vert\bbQ\Vert^{2(K+3)},
		\end{align*}
	which completes the proof of Lemma \ref{Rem of extension minor 1}.
\end{proof}
Now, based on the above lemma, we can further prove that
\begin{lem}\label{Thm of a.s. convergence}
	Under Assumptions {\rm \ref{Ap of general noise}} and {\rm \ref{Ap of dimension}}, for any \(K\in\mathbb{N}^+\) and \(z_1,\cdots,z_K\in\mcS_{\eta_0}\) in {\rm (\ref{Eq of stability region general d})}, let \(s_i\in\{1,\cdots,d\}\) for \(1\leq i\leq K\) such that \(s_{2j}\neq s_{2j+1}\) and \(\bbx\in\mbC^{n_{s_1}},\bby\in\mbC^{n_{s_{K+1}}}\) be two deterministic vectors with bounded $L^2$ norms, then for any \(\omega\in(1/2-\delta,1/2)\), where $\delta>0$ is a sufficiently small number, we have
	\begin{align}
		\Big|\boldsymbol{x}'\prod_{i=1}^K\boldsymbol{Q}^{s_i s_{i+1}}(z_i)\boldsymbol{y}-\mathbb{E}\Big[\boldsymbol{x}'\prod_{i=1}^K\boldsymbol{Q}^{s_i s_{i+1}}(z_i)\boldsymbol{y}\Big]\Big|\prec C_K\eta_0^{-(K+4)}N^{-\omega}.\notag
	\end{align}
	In particular, if \(s_1=s_K\), we further have
	\begin{align}
		\Big|\boldsymbol{x}'{\rm diag}\Big(\prod_{i=1}^K\boldsymbol{Q}^{s_i s_{i+1}}(z_i)\Big)\boldsymbol{y}-\mathbb{E}\Big[\boldsymbol{x}'{\rm diag}\Big(\prod_{i=1}^K\boldsymbol{Q}^{s_i s_{i+1}}(z_i)\Big)\boldsymbol{y}\Big]\Big|\prec C_K\eta_0^{-(K+4)}N^{-\omega}.\notag
	\end{align}
\end{lem}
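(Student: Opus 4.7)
The plan is to mirror the proof strategy of Lemma \ref{Thm of Entrywise almost surly convergence} verbatim, replacing the index triple $(i,j,k)$ by the $d$-tuple $(i_1,\ldots,i_d)$ and appealing to Lemma \ref{Rem of extension minor 1} in place of Lemma \ref{Lem of minor terms 1}. By a union bound it suffices to handle one block pattern $(s_1,\ldots,s_{K+1})$ at a time, so I would fix such a pattern and write the target probability as the sum of a truncated piece and a tail piece, splitting according to whether all $|X_{i_1\cdots i_d}| \leq N^{\delta}$ for a small $\delta>0$. The tail is immediate: by Assumption \ref{Ap of general noise} and a union bound, the tail is bounded by $N^d \exp(-N^{\delta\theta})$.

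For the truncated piece I would apply McDiarmid's bounded differences inequality (Lemma \ref{Lem of Bounded Differences Inequality}) with one coordinate per index $(i_1,\ldots,i_d)$. Let $\bbX^{(i_1\cdots i_d)}$ denote an independent resampling at one coordinate; Taylor-expanding the quadratic form in $X_{i_1\cdots i_d}$ gives a difference bounded, on the truncated event, by
\begin{align*}
\Delta_{i_1\cdots i_d} \leq \sum_{l=1}^{\infty} (l!)^{-1} 2^l N^{\delta l} \Big|(\bbx^{s_1})'\partial_{i_1\cdots i_d}^{(l)}\Big\{\prod_{i=1}^K \bbQ^{s_i s_{i+1}}(z_i)\Big\}\bby^{s_{K+1}}\Big|.
\end{align*}
I would split this sum into low orders $l=1,2,3$ and high orders $l\geq 4$. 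For $l=1,2,3$, Lemma \ref{Rem of extension minor 1} (applied blockwise via a Cauchy--Schwarz decomposition as in the $d=3$ proof) gives $\sum_{i_1\cdots i_d}|(\bbx^{s_1})'\partial_{i_1\cdots i_d}^{(l)}(\cdots)\bby^{s_{K+1}}|^2 \leq C_K \eta_0^{-2(K+l)} N^{-1+2l\delta}$. For $l\geq 4$, I would use the crude spectral bound $\|\partial_{i_1\cdots i_d}^{(l)}\bbQ\| \leq l!\, \eta_0^{-1}(d \eta_0^{-1} N^{-1/2})^l$ together with the Leibniz expansion over $K$ factors, giving a tail geometric series of ratio $q = 2d N^{-1/2+\delta}\eta_0^{-1}$, whose square, summed over $(i_1,\ldots,i_d)$, contributes $O(C_K \eta_0^{-2(K+4)} N^{-1+8\delta})$.

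Combining, $\sum_{i_1\cdots i_d} \Delta_{i_1\cdots i_d}^2 \leq C_K \eta_0^{-2(K+4)} N^{-1+8\delta}$, and McDiarmid gives
\begin{align*}
\mathbb{P}\Big(\big|Y - \mathbb{E}[Y]\big| \geq t,\ \forall |X_{i_1\cdots i_d}| \leq N^\delta\Big) \leq 4\exp\big(-C_K \eta_0^{2(K+4)} N^{1-8\delta} t^2\big),
\end{align*}
where $Y := \bbx' \prod_{i=1}^K \bbQ^{s_is_{i+1}}(z_i)\bby$. Choosing $t = \eta_0^{-(K+4)} N^{-1/2 + 4\delta + \epsilon}$ for small $\epsilon>0$ and invoking the Borel--Cantelli lemma yields the stochastic domination $\prec C_K \eta_0^{-(K+4)} N^{-\omega}$ for any $\omega\in(1/2-\delta,1/2)$. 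The diagonal version is handled by the same chain of estimates applied entrywise, exactly as in the $d=3$ case; the only modification is that the contracted sums $(\bbx^{s_1})'\mathrm{diag}(\prod\bbQ^{s_is_{i+1}})\bby^{s_{K+1}}$ unfold as $\sum_l x_l^{s_1} y_l^{s_{K+1}} \prod\bbQ^{s_is_{i+1}}_{ll}$, which still admit the same Taylor expansion in the noise.

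The main obstacle I foresee is purely bookkeeping: verifying that for general $d\geq 3$ the summations $\sum_{i_1\cdots i_d}|(\bbx^{s_1})'\partial_{i_1\cdots i_d}^{(l)}(\prod\bbQ)\bby^{s_{K+1}}|^2$ still enjoy the bounds $N^{-1}\|\bbQ\|^{2(K+l)}$ for $l=1,2$ and $N^{-2}\|\bbQ\|^{2(K+3)}$ for $l=3$, since the case analysis over which indices collapse (all the subcases on whether $\mcA^{(t_{2\alpha-1},t_{2\alpha})}_{i_1\cdots i_d}$ contains particular $a^{(s)}_{i_s}$, and whether the resulting blocks are diagonal or off-diagonal) grows with $d$. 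This has been packaged into Lemma \ref{Rem of extension minor 1} already; the remaining work is to trace through the proof of Lemma \ref{Thm of Entrywise almost surly convergence} and replace each instance of Lemma \ref{Lem of minor terms 1} by Lemma \ref{Rem of extension minor 1}, so the argument goes through unchanged.
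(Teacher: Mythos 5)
Your proposal is correct and takes essentially the same route as the paper, which in fact omits the detailed proof of this lemma and simply remarks that the $d=3$ argument of Lemma \ref{Thm of Entrywise almost surly convergence} carries over verbatim once Lemma \ref{Lem of minor terms 1} is replaced by Lemma \ref{Rem of extension minor 1}; your write-up just makes that replacement explicit. One small inaccuracy that does not affect the conclusion: for $l=3$ the squared sum is $O(\eta_0^{-2(K+3)}N^{-2+6\delta})$ rather than $N^{-1+6\delta}$, which is stronger and is absorbed into the final $N^{-1+8\delta}$ bound in any case.
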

By the proofs of Lemma \ref{Thm of Entrywise almost sure convergence}, it is easy to see the condition \(d=3\) is not essential, so we omit the detailed proofs of Lemma \ref{Thm of a.s. convergence}. Next, let us  give the extension of Lemma \ref{Lem of minor terms}:
\begin{lem}\label{Rem of extension minor 2}
	For any \(z\in\mbC^+\) and \(l\in\mbN^+,1\leq l\leq4\), let \(t_1,\cdots,t_{2(l+1)}\in\{1,\cdots,d\}\) such that \(t_{2\alpha}\neq t_{2\alpha+1}\) and \(t_1\neq t_{2(l+1)}\) for \(1\leq\alpha\leq l\), let
	\begin{align}
		\sum_{i_1\cdots i_d}^{n_1\cdots n_d}\mcA_{i_1\cdots i_d}^{(t_1,t_{2l+2})}Q_{i_{t_1}i_{t_2}}^{t_1t_2}\Big(\prod_{\alpha=1}^l\mcA_{i_1\cdots i_d}^{(t_{2\alpha},t_{2\alpha+1})}Q_{i_{t_{2\alpha+1}}i_{t_{2\alpha+2}}}^{t_{2\alpha+1}t_{2\alpha+2}}\Big),\label{Eq of minor terms general}
	\end{align}
	where \(Q_{i\cdot},Q_{\cdot i}\) means the \(i\)-th row and column of \(\bbQ\), \(\mcA_{i_1\cdots i_d}^{(t_{2l},t_{2l+1})}\) is defined in {\rm (\ref{Eq of nij})}. If there is at least one terms in
	$$\left\{Q_{i_{t_{2\alpha-1}}i_{t_{2\alpha}}}^{t_{2\alpha-1}t_{2\alpha}}:\alpha=1,\cdots,l+1\right\},$$
    coming from the off-diagonal blocks, then the norms of {\rm (\ref{Eq of minor terms general})} are bounded by \(\mrO(\Vert\boldsymbol{Q}\Vert^{l+1} N)\).
\end{lem}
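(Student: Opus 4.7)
The plan is to follow the strategy of Lemma~\ref{Lem of minor terms} (the $d=3$ case) and to extend its three-case enumeration by a counting argument valid for arbitrary $d \geq 3$. First, I will expand every $\mcA_{i_1\cdots i_d}^{(s,t)} = \prod_{r \neq s,t} a_{i_r}^{(r)}$ and reorganize the sum into the canonical form
\[
\sum_{i_1, \ldots, i_d} \prod_{r=1}^d \big(a_{i_r}^{(r)}\big)^{n_r} \cdot \prod_{1 \leq s \leq t \leq d} \big(Q_{i_s i_t}^{st}\big)^{n_{st}},
\]
where $n_r \geq 0$, $\sum_{s \leq t} n_{st} = l+1$, $\sum_r n_r = (l+1)(d-2)$, and by hypothesis at least one $n_{st}$ with $s < t$ is positive.

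Next, I introduce the free-index set $F := \{r : n_r = 0\}$, so that $r \in F$ iff $r$ lies in every one of the $l+1$ $\mcA$-pairs $\{t_1, t_{2l+2}\}, \{t_2, t_3\}, \{t_4, t_5\}, \ldots, \{t_{2l}, t_{2l+1}\}$. Since each such pair is a $2$-element subset, the existence of two distinct elements in $F$ forces every pair to coincide, hence $|F| \leq 2$; moreover, when $|F| = 2$ with $F = \{p, q\}$, all $t_i$ lie in $\{p, q\}$, so every $Q$-block is drawn from $\{\bbQ^{pp}, \bbQ^{qq}, \bbQ^{pq}, \bbQ^{qp}\}$.

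I will then split by $|F| \in \{0, 1, 2\}$. When $|F| = 0$, every index is contracted with some $\bba^{(r)}$, and iterated Cauchy-Schwarz combined with $\Vert\bba^{(r)}\Vert_2 = 1$ and $\Vert\bbQ^{st}\Vert \leq \Vert\bbQ\Vert$ yields a bound of order $\Vert\bbQ\Vert^{l+1}$. When $|F| = 1$ with $F = \{r^*\}$, the single free summation over $i_{r^*}$ contributes an additional factor of at most $N$ after a Cauchy-Schwarz pairing with the adjacent $Q$-blocks, giving $\Vert\bbQ\Vert^{l+1} N$. When $|F| = 2$ with $F = \{p, q\}$, each $r \notin \{p, q\}$ has $n_r \geq 2$ so $\sum_{i_r} (a_{i_r}^{(r)})^{n_r} \leq \Vert\bba^{(r)}\Vert_2^2 = 1$, and the residual double sum over $(i_p, i_q)$ is bounded by row-column norm estimates of the form $\boldsymbol{1}' \diag(|\bbQ^{pp}|^{\circ n_{pp}}) |\bbQ^{pq}|^{\circ n_{pq}} \diag(|\bbQ^{qq}|^{\circ n_{qq}}) \boldsymbol{1} \leq \Vert\bbQ\Vert^{l+1} N$ (or its analogue when $n_{qp} > 0$), precisely as in Case~3 of Lemma~\ref{Lem of minor terms}.

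The main obstacle will be the bookkeeping in the case $|F| = 2$, where one must exploit the hypothesis that at least one off-diagonal block $\bbQ^{pq}$ or $\bbQ^{qp}$ appears to ensure that one of the two remaining summations over $i_p, i_q$ is absorbed by the matrix structure rather than producing an extra factor of $N$. Since $l \leq 4$ and $|F| \leq 2$, the number of subconfigurations to verify remains finite and tractable, mirroring the explicit enumeration used for $d=3$.
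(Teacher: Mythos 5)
Your overall strategy---reduce to the canonical form with exponents $n_r,n_{st}$, introduce the free-index set $F=\{r:n_r=0\}$, observe $|F|\leq2$, and split by $|F|$---is exactly the paper's approach: its Cases 1, 2, 3 correspond to your $|F|=2,1,0$ respectively, and your treatment of $|F|=2$ and $|F|=1$ aligns with the paper's arguments for those two cases.

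There is, however, a genuine gap in your $|F|=0$ case. You assert that iterated Cauchy--Schwarz gives a bound of order $\Vert\bbQ\Vert^{l+1}$, but this is false in general: having $n_r=1$ does not prevent the index $i_r$ from appearing in several $Q$-blocks, and after the single factor $a_{i_r}^{(r)}$ absorbs one of them, a loose sum over $i_r$ remains. Concretely, the paper's Case 3 produces sub-configurations in which some index with $n_r=1$ indexes several row/column factors $Q_{i_r\cdot}^{rs}$ simultaneously, and the resulting estimate is
$\sum_{i_r}|a_{i_r}^{(r)}|\cdot O(\Vert\bbQ\Vert^{l+1})\leq\Vert\bba^{(r)}\Vert_1\,\Vert\bbQ\Vert^{l+1}=O(\sqrt{N}\,\Vert\bbQ\Vert^{l+1})$;
the paper itself records $O(N\Vert\bbQ\Vert^{l+1})$ for those sub-cases, not $O(\Vert\bbQ\Vert^{l+1})$. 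So while the target $O(N\Vert\bbQ\Vert^{l+1})$ is of course attainable at $|F|=0$, your stronger claim cannot be reached by the argument you sketch, and the omitted sub-analysis (the paper's further three-way subdivision of Case~3 by how many indices have $n_r=1$ and which $Q$-blocks couple to them) is precisely where the real work lies. Relatedly, you single out $|F|=2$ as the main bookkeeping obstacle, but that is in fact the easiest configuration: all $Q$-blocks are confined to a fixed $2\times2$ block structure (the paper's short Case~1). The proliferation of sub-configurations actually occurs at $|F|=0$.
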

\begin{proof}
	Without loss of generality, we only give the detail proofs for \(l=4\), i.e.
	\begin{align}
		&\sum_{i_1\cdots i_d}^{n_1\cdots n_d}\mcA_{i_1\cdots i_d}^{(t_1,t_{2l+2})}Q_{i_{t_1}i_{t_2}}^{t_1t_2}\Big(\prod_{\alpha=1}^4\mcA_{i_1\cdots i_d}^{(t_{2\alpha},t_{2\alpha+1})}Q_{i_{t_{2\alpha+1}}i_{t_{2\alpha+2}}}^{t_{2\alpha+1}t_{2\alpha+2}}\Big),\label{Eq of d sum}
	\end{align}
	Denote
	\begin{center}
		\(n_{\bba}^{(r)}\) and \(n_{r_1,r_2}\) to be the number of \(a_{i_r}^{(r)}\) and \(Q_{i_{r_1}i_{r_2}}^{r_1r_2}\) in (\ref{Eq of d sum}) respectively.
	\end{center}
	By the definition of \(\mcA_{i_1\cdots i_d}^{(t_{2\alpha-1},t_{2\alpha})}\), there are at most two \(r_1\neq r_2\) such that \(n_{\bba}^{(r_1)}=n_{\bba}^{(r_2)}=0\). Consider the following three cases:

    \vspace{5mm}
    \noindent
    {\bf Case 1:} If there exists \(r_1,r_2\in\{1,\cdots,d\}\) such that \(r_1<r_2\) and \(n_{\bba}^{(r_1)}=n_{\bba}^{(r_2)}=0\), then all \(\mcA_{i_1\cdots i_d}^{(t_{2\alpha-1},t_{2\alpha})}\) should be equal, i.e. \(t_{2\alpha-1}=r_1,t_{2\alpha}=r_2\) or \(t_{2\alpha-1}=r_2,t_{2\alpha}=r_1\). Hence, for all \(Q_{i_{t_{2\alpha}}i_{t_{2\alpha+1}}}^{t_{2\alpha}t_{2\alpha+1}}\), it must equal to \(Q_{i_{r_1}i_{r_1}}^{r_1r_1},Q_{i_{r_2}i_{r_2}}^{r_2r_2}\) or \(Q_{\tilde{r}_1\tilde{r}_2}^{r_1r_2}\). Since we have at least one off-diagonal term, then we have at least one \(Q_{i_{r_1}i_{r_2}}^{r_1r_2}\), i.e. \(n_{r_1,r_2}\geq1\). Hence if \(n_{r_1,r_1},n_{r_2,r_2}\geq1\),
		\begin{align*}
			&|(\ref{Eq of d sum})|\leq\sum_{i_1\cdots i_d}^{n_1\cdots n_d}|\mcA_{i_1\cdots i_d}^{(r_1,r_2)}|^l|Q_{i_{r_1}i_{r_1}}^{r_1r_1}|^{n_{r_1,r_1}}|Q_{i_{r_2}i_{r_2}}^{r_2r_2}|^{n_{r_2,r_2}}|Q_{i_{r_1}i_{r_2}}^{r_1r_2}|^{n_{r_1,r_2}}\\
			&\leq\boldsymbol{1}'\diag(|\bbQ^{r_1r_1}|^{\circ n_{r_1,r_1}})|\bbQ^{r_1r_2}|^{\circ n_{r_1,r_2}}\diag(|\bbQ^{r_2r_2}|^{\circ n_{r_2,r_2}})\boldsymbol{1}\leq N\Vert\bbQ\Vert^{l+1}.
		\end{align*}
		Otherwise, if at most one of \(n_{r_1,r_1},n_{r_2,r_2}\) is nonzero, then we have \(n_{r_1,r_2}\geq2\)
		\begin{align*}
			|(\ref{Eq of d sum})|\leq\tr\big(|\bbQ^{r_2r_1}|\diag(|\bbQ^{r_1r_1}|^{\circ n_{r_1,r_1}})|\bbQ^{r_1r_2}|^{\circ(n_{r_1,r_2}-1)}\big)\leq N\Vert\bbQ\Vert^{l+1}.
		\end{align*}
    {\bf Case 2:} If there is an \(r_1\in\{1,\cdots,d\}\) such that \(n_{\bba}^{(r_1)}=0\), then for any \(\mcA_{i_1\cdots i_d}^{(t_{2\alpha},t_{2\alpha+1})}\), we have \(t_{2\alpha}=r_1\) or \(t_{2\alpha+1}=r_1\). Without loss of generality, let \(r_1=1\). If there is no diagonal terms, then we have all \(n_{\bba}^{(r)}\geq1\) for \(r>1\) and 
		\begin{align}
			&|(\ref{Eq of d sum})|\leq\sum_{i_1\cdots i_d}^{n_1\cdots n_d}\prod_{j=1}^{l+1}|\mcA^{(1,s_j)}Q_{i_1\tilde{s}_j}^{1s_j}|\leq\sum_{i_1}^{n_1}\prod_{j=1}^{l+1}|Q_{i_1\cdot}^{1s_j}||\bba^{(s_j)}|\leq N\Vert\bbQ\Vert^{l+1},\label{Eq of d sum 1}
		\end{align}
		where \(s_j\neq 1\) and \(s_j\in\{1,\cdots,d\}\). Hence, it is enough to consider there exists diagonal terms, and we claim that (\ref{Eq of d sum}) must contain \(Q_{i_1i_1}^{11}\). Otherwise, suppose (\ref{Eq of d sum}) contains a diagonal term \(Q_{i_{t_1}i_{t_1}}^{t_1t_1}\) such that \(t_1=t_2\neq1\), then \(t_3,t_{10}=1\) and \(t_4\neq1\), since \(\mcA_{i_1\cdots i_d}^{(4,5)}\) does not contain \(a_{i_1}^{(1)}\), then \(t_5=1,t_6\neq1\), otherwise \(Q_{i_{t_5}i_{t_6}}^{t_5t_6}=Q_{i_1i_1}^{11}\). Similarly, since \(\mcA_{i_1\cdots i_d}^{(6,7)}\) does not contain \(a_{i_1}^{(1)}\), then \(t_7=1,t_8\neq1\) and \(t_9=1,t_{10}\neq1\), which is a contradiction. Moreover, since we have at least one off-diagonal term, there are at most three types of diagonal terms. First, if there exists three different kinds of diagonal terms, without loss of generality, let them be \(Q_{i_1i_1}^{11},Q_{i_2i_2}^{22}\) and \(Q_{i_3i_3}^{33}\) the only possible case is as follows: 
		\begin{align*}
			&(\ref{Eq of d sum})=\sum_{i_1\cdots i_d}^{n_1\cdots n_d}\mcA_{i_1\cdots i_d}^{(1,s_1)}Q_{i_1i_1}^{11}\mcA_{i_1\cdots i_d}^{(1,2)}Q_{i_2i_2}^{22}\mcA_{i_1\cdots i_d}^{(2,1)}Q_{i_1i_1}^{11}\mcA_{i_1\cdots i_d}^{(1,3)}Q_{i_3i_3}^{33}\mcA_{i_1\cdots i_d}^{(3,1)}Q_{i_1i_{s_1}}^{1s_1}
		\end{align*}
		where \(s_1\neq1\). Since both \(n_{\bba}^{(2)},n_{\bba}^{(3)}\geq2\). If \(s_1\neq2\) or \(s_1\neq3\), we have
		\begin{align*}
			&|(\ref{Eq of d sum})|\leq\boldsymbol{1}'\diag(|\bbQ^{11}|^{\circ2})|Q^{1s_1}||\bba^{(s_1)}|\cdot|\bba^{(2)}|'\diag(|\bbQ^{22}|)|\bba^{(2)}|\\
			&\cdot|\bba^{(3)}|'\diag(|\bbQ^{33}|)|\bba^{(3)}|\leq N\Vert\bbQ\Vert^{l+1}.
		\end{align*}
		Otherwise, suppose \(s_1=2\), we have
		\begin{align*}
			&|(\ref{Eq of d sum})|\leq\boldsymbol{1}'\diag(|\bbQ^{11}|^{\circ2})|Q^{12}|\diag(|\bbQ^{22}|)|\bba^{(2)}|\cdot|\bba^{(3)}|'\diag(|\bbQ^{33}|)|\bba^{(3)}|\leq N\Vert\bbQ\Vert^{l+1}.
		\end{align*}
		Next, suppose there are only two kinds of diagonal terms, then
		\begin{align*}
			&|(\ref{Eq of d sum})|=\sum_{i_1\cdots i_d}^{n_1\cdots n_d}|\mcA_{i_1\cdots i_d}^{(1,s_3)}Q_{i_1i_1}^{11}\mcA_{i_1\cdots i_d}^{(1,s_1)}Q_{i_{s_1}i_{s_1}}^{s_1s_1}\mcA_{i_1\cdots i_d}^{(s_1,1)}Q_{i_1i_1}^{11}\mcA_{i_1\cdots i_d}^{(1,s_1)}Q_{i_{s_1}i_{s_1}}^{s_1s_1}\mcA_{i_1\cdots i_d}^{(s_1,1)}Q_{i_1i_{s_3}}^{1s_3}|\\
			&\leq\boldsymbol{1}'\diag(|\bbQ^{11}|^{\circ2})|Q^{1s_3}||\bba^{(s_3)}|\cdot\boldsymbol{1}'\diag(|\bbQ^{s_1s_1}|^{\circ2})||\bba^{(s_1)}|\leq N\Vert\bbQ\Vert^{l+1}.
		\end{align*}
		Finally, if there is only one type of diagonal term which is \(Q_{i_1i_1}^{11}\), then
		\begin{align*}
			&|(\ref{Eq of d sum})|\leq\sum_{i_1\cdots i_d}^{n_1\cdots n_d}|\mcA_{i_1\cdots i_d}^{(s_5,1)}Q_{i_1i_1}^{11}\mcA_{i_1\cdots i_d}^{(1,s_1)}Q_{i_{s_1}i_{s_2}}^{s_1s_2}\mcA_{i_1\cdots i_d}^{(s_2,1)}Q_{i_1i_1}^{11}\mcA_{i_1\cdots i_d}^{(1,s_3)}Q_{i_{s_3}i_{s_3}}^{s_3s_4}\mcA_{i_1\cdots i_d}^{(s_4,1)}Q_{i_1i_{s_3}}^{1s_5}|\\
			&\leq\boldsymbol{1}'\diag(|\bbQ|^{11})|\bbQ^{1s_5}||\bba^{(s_5)}|\cdot|\bba^{(s_1)}|'|\bbQ^{s_1s_2}||\bba^{(s_2)}|\cdot|\bba^{(s_3)}|'|\bbQ^{s_3s_4}||\bba^{(s_4)}|\leq N\Vert\bbQ\Vert^{l+1},
		\end{align*}
		where \(s_1\neq s_2,s_3\neq s_4\) and \(s_1,\cdots,s_5\neq1\).

    \vspace{5mm}
    \noindent
    {\bf Case 3:} Suppose all \(n_{(r)}\geq1\). In this case, for any off-diagonal terms, notice that all \(\mcA_{i_1\cdots i_d}^{(t_{2\alpha},t_{2\alpha+1})}\) and \(\mcA_{i_1\cdots i_d}^{(t_1,t_{2l+2})}\) can not be equal, otherwise it is the first situation. Hence, there exists at most two \(r_1<r_2\) such that \(n_{\bba}^{(r_1)}=n_{\bba}^{(r_2)}=1\). Without loss of generality, let \(r_1=1\) and \(r_2=2\), then there are two situations. First, we have four \(\mcA_{i_1\cdots i_d}^{(1,2)}\) and an \(\mcA_{i_1\cdots i_d}^{(s_1,s_2)}\), where \(s_1,s_2\neq1,2\). Then we will have two off-diagonal terms \(Q_{i_{s_1}i_1}^{s_11},Q_{i_{s_2}i_2}^{s_22}\) and
		\begin{align}
			&|(\ref{Eq of d sum})|\leq\sum_{i_1\cdots i_d}^{n_1\cdots n_d}|(\mcA_{i_1\cdots i_d}^{(1,2)})^4(Q_{i_1i_1}^{11})^{n_{1,1}}(Q_{i_2i_2}^{22})^{n_{2,2}}(Q_{i_1i_2}^{12})^{n_{1,2}}\mcA_{i_1\cdots i_d}^{(1,2)}Q_{i_{s_1}i_1}^{s_11}Q_{i_{s_2}i_2}^{s_22}|\notag\\
			&\leq|\bba^{(s_1)}|'|\bbQ^{s_11}|\diag(|\bbQ^{11}|^{\circ n_{1,1}})|\bbQ^{12}|^{\circ n_{1,2}}\diag(|\bbQ^{22}|^{\circ n_{2,2}})|\bbQ^{2s_2}||\bba^{(s_2)}|\leq\Vert\bbQ\Vert^{l+1}.\label{Eq of d sum 2}
		\end{align}
		Otherwise, we have three \(\mcA_{i_1\cdots i_d}^{(1,2)}\), one \(\mcA_{i_1\cdots i_d}^{1,s_1}\) and \(\mcA_{i_1\cdots i_d}^{2,s_2}\), so we will have one \(Q_{i_{s_1}i_{s_2}}^{s_1s_2}\) or two off-diagonal terms like \(Q_{i_{s_1}i_1}^{s_11},Q_{i_{s_2}i_2}^{s_22}\) or \(Q_{i_{s_1}i_1}^{s_11},Q_{i_{s_2}i_1}^{s_21}\) or \(Q_{i_{s_1}i_2}^{s_12},Q_{i_{s_2}i_2}^{s_22}\). For the case of one \(Q_{i_{s_1}i_{s_2}}^{s_1s_2}\), since it will associate with \(\bba^{(s_1)},\bba^{(s_2)}\) as \(|\bba^{(s_1)}|'|\bbQ^{s_1s_2}||\bba^{(s_2)}|\leq\Vert\bbQ\Vert\), then we can use the same trick in (\ref{Eq of d sum 1}) to conclude our claim. For the case of two \(Q_{i_{s_1}i_1}^{s_11},Q_{i_{s_2}i_2}^{s_22}\), it is the same as (\ref{Eq of d sum 2}). So we only give the case of \(Q_{i_{s_1}i_1}^{s_11},Q_{i_{s_2}i_1}^{s_21}\) as follows (the other one is analogous):
		\begin{align*}
			&|(\ref{Eq of d sum})|\leq\sum_{i_1}^{n_1}|a_{i_1}^{(1)}||Q_{i_1i_1}^{11}|^{n_{1,1}}|Q_{i_1\cdot}^{12}|^{\circ n_{1,2}}\diag(|\bbQ^{22}|^{n_{2,2}})|\bba^{(2)}|\cdot|Q_{i_1\cdot}^{1s_1}||\bba^{(s_1)}|\cdot|Q_{i_1\cdot}^{1s_2}||\bba^{(s_2)}|\leq N\Vert\bbQ\Vert^{l+1}.
		\end{align*}
		Next, assume \(n_{\bba}^{(1)}=1\) and \(n_{\bba}^{(r)}\geq2\) for all \(r\neq1\). In this case, we have at most three \(Q_{i_{s_1}i_{s_2}}^{s_1s_2},Q_{i_{s_3}i_{s_4}}^{s_3s_4},Q_{i_{s_5}i_{s_6}}^{s_5s_6}\) such that \(s_1,\cdots,s_6\neq1\), i.e. 
		\begin{align*}
			\mcA_{i_1\cdots i_d}^{(1,s_1)}Q_{i_{s_1}i_{s_2}}^{s_1s_2}\mcA_{i_1\cdots i_d}^{(s_2,s_3)}Q_{i_{s_3}i_{s_4}}^{s_3s_4}\mcA_{i_1\cdots i_d}^{(s_4,1)}Q_{i_1i_1}^{11}\mcA_{i_1\cdots i_d}^{(1,s_5)}Q_{i_{s_5}i_{s_6}}^{s_5s_6}\mcA_{i_1\cdots i_d}^{(s_6,1)}Q_{i_1i_1}^{11}.
		\end{align*}
		Since \(Q_{i_1i_1}^{11}\) will associate with \(\bba^{(1)}\) as \(\boldsymbol{1}\diag(|\bbQ^{11}|^{\circ2})|\bba^{(1)}|\leq N^{1/2}\Vert\bbQ\Vert^2\), and \(Q_{i_{s_1}i_{s_2}}^{s_1s_2}\) will associate with \(\bba^{(s_1)}\) or other terms like \(Q_{i_{s_3}i_{s_4}}^{s_3s_4}\) (e.g.) \(|\bba^{(s_1)}|'|\bbQ^{(s_1,s_2)}||\bba^{(s_2)}|\) or \(|\bba^{(s_1)}|'|\bbQ^{(s_1,s_2)}||\bbQ^{(s_2,s_3)}||\bba^{(s_2)}|\) or \(|\bba^{(s_1)}|'|\bbQ^{(s_1,s_2)}|\diag(|\bbQ|^{s_2s_2})|\bba^{(s_2)}|\) due to all \(n_{\bba}^{(r)}\geq2\) for \(r\neq1\), then our conclusion is still valid. Next, if we have two \(Q_{i_{s_1}i_{s_2}}^{s_1s_2},Q_{i_{s_3}i_{s_4}}^{s_3s_4}\) such that \(s_1,\cdots,s_4\neq1\). In this case, we must have the off-diagonal terms like \(Q_{i_1i_r}^{1,r}\), i.e. (e.g.)
		\begin{align*}
			\mcA_{i_1\cdots i_d}^{(s_6,1)}Q_{i_1i_{s_5}}^{1s_5}\mcA_{i_1\cdots i_d}^{(s_5,s_1)}Q_{i_{s_1}i_{s_2}}^{s_1s_2}\mcA_{i_1\cdots i_d}^{(s_4,1)}Q_{i_1i_1}^{11}\mcA_{i_1\cdots i_d}^{(1,s_3)}Q_{i_{s_3}i_{s_4}}^{s_3s_4}\mcA_{i_1\cdots i_d}^{(s_4,1)}Q_{i_1i_{s_6}}^{1s_6}.
		\end{align*}
		Therefore, for \(Q_{i_{s_1}i_{s_2}}^{s_1s_2}\), it will associate with \(\bba^{(s_1)},\bba^{(s_2)}\) as \(|\bba^{(s_1)}|'|\bbQ^{s_1s_2}||\bba^{(s_2)}|\leq\Vert\bbQ\Vert\) if \(s_1,s_2\neq s_5,s_6\) or one \(Q_{i_1i_{s_5}}^{1s_5}\) as \(Q_{i_1\cdot}^{1s_1}\bbQ^{s_1s_2}\bba^{(s_2)}\). Whatever which situations, it must have (e.g. \(s_1,\cdots,s_4\neq s_5,s_6\))
		\begin{align*}
			&|(\ref{Eq of d sum})|\leq|\bba^{(s_5)}|'|\bbQ^{s_51}|\diag(|\bbQ|^{11})|\bbQ^{1s_6}||\bba^{(s_6)}|\cdot|\bba^{(s_1)}|'|\bbQ^{s_1s_2}||\bba^{(s_2)}|\cdot|\bba^{(s_3)}|'|\bbQ^{s_3s_4}||\bba^{(s_4)}|\leq\Vert\bbQ\Vert^{l+1}.
		\end{align*}
		Now, if we have one \(Q_{i_{s_1}i_{s_2}}^{s_1s_2}\). Similar to the above result, \(Q_{i_{s_1}i_{s_2}}^{s_1s_2}\) can associate with \(Q_{i_1s_1}^{1s_1}\) as \(Q_{i_1\cdot}^{1s_1}\bbQ^{s_1s_2}\bba^{(s_2)}\), or associate with \(\bba^{(s_1)},\bba^{(s_2)}\) as \(|\bba^{(s_1)}|'|\bbQ^{s_1s_2}||\bba^{(s_2)}|\leq\Vert\bbQ\Vert\). So, we have
		\begin{align*}
			&|(\ref{Eq of d sum})|\leq|\bba^{(s_1)}|'|\bbQ^{s_1s_2}||\bba^{(s_2)}|\sum_{i_1}^{n_1}|a_{i_1}^{(1)}||Q_{i_1i_1}^{11}|^{n_{1,1}}\cdot\prod_{j=1}^{l_0}|Q_{i_1\cdot}^{1s_j}||\bba^{(j)}|\leq N\Vert\bbQ\Vert^{l+1},
		\end{align*}
		where we use the fact that all \(|Q_{i_1\cdot}^{1s_j}||\bba^{(j)}|\leq\Vert\bbQ\Vert\) and \(\boldsymbol{1}'\diag(|\bbQ^{11}|^{\circ n_{1,1}})|\bba^{(1)}|\leq N\Vert\bbQ\Vert^{n_{1,1}}\). Finally, suppose all \(n_{\bba}^{(r)}\geq2\). Since \(l=4\), given an \(r\in\{1,\cdots,d\}\), there will be at most three \(\mcA_{i_1\cdots i_d}^{(t_{2\alpha-1},t_{2\alpha})}\) contain \(r\). As a result, for any diagonal terms \(Q_{i_ri_r}^{rr}\) in (\ref{Eq of d sum}), we have two situations. First, all \((t_{2\alpha},t_{2\alpha+1})\) from \(Q_{i_{t_{2\alpha}}i_{t_{2\alpha+1}}}^{t_{2\alpha}t_{2\alpha+1}}\) does not contain \(r\), then there is only one \(Q_{i_ri_r}^{rr}\) in (\ref{Eq of d sum}), which will associate with \(\bba^{(r)}\) as follows:
		\begin{align*}
			&|\bba^{(r)}|'\diag(|\bbQ^{rr}|)|\bba^{(r)}|^{\circ(n_{\bba}^{(r)}-1)}\leq\Vert\bbQ\Vert^{n_{r,r}}.
		\end{align*}
		Otherwise, there exists only one \(Q_{i_ri_{s_1}}^{rs_1}\) such that \(s_1\neq r\), then \(Q_{i_ri_r}^{rr}\) will associate with \(\bba^{(r)}\) and \(Q_{i_ri_{s_1}}^{rs_1}\) as follows (e.g.):
		\begin{align*}
			&(|\bba^{(r)}|^{\circ n_{(r)}})'\diag(|\bbQ^{rr}|)|\bbQ^{rs_1}||\bba^{(s_1)}|\leq\Vert\bbQ\Vert^2.
		\end{align*}
		Lastly, for the off-diagonal terms, since \(l=4\), there are at most three \(Q_{i_ri_{s_1}}^{rs_1},Q_{i_ri_{s_2}}^{rs_2},Q_{i_ri_{s_3}}^{rs_3}\) such that \(s_1,s_2,s_3\neq r\). For each of them, it will associate with \(\bba^{(s_1)}\) as \(|Q_{i_r\cdot}^{rs_1}||\bba^{(s_1)}|\) or other \(Q_{i_{s_1}i_{s_4}}^{s_1s_4}\) as \(|Q_{i_r\cdot}^{rs_1}||\bbQ^{s_1s_4}||\bba^{(s_4)}|\). No matter which cases, we have (e.g.)
		\begin{align*}
			&|(\ref{Eq of d sum})|\leq\sum_{i_r}^{n_r}|a_{i_r}^{(r)}|^2|Q_{i_r\cdot}^{rs_1}||\bbQ^{s_1s_4}||\bba^{(s_4)}|\cdot|Q_{i_r\cdot}^{rs_2}||\bbQ^{s_2s_5}||\bba^{(s_5)}|\cdot|Q_{i_r\cdot}^{rs_3}||\bba^{(s_3)}|\leq\Vert\bbQ\Vert^{l+1}.
		\end{align*}
	   This completes the proof.
\end{proof}
Finally, we present the extension of Lemma \ref{Cor of minor terms} as follows:
\begin{lem}\label{Rem of extension of Corollary}
	For any \(z\in\mathbb{C}_{\eta}^+\) and \(1\leq l_1,l_2\leq4\), let \(t_i,s_j\in\{1,\cdots,d\}\) for \(1\leq i\leq2l_1+1,1\leq j\leq2l_2+1\) such that \(t_{2\alpha}\neq t_{2\alpha+1}\) and \(s_{2\gamma}\neq s_{2\gamma+1}\) for \(1\leq\alpha\leq l_1,1\leq\gamma\leq l_2\) and \(t_1\neq t_{2l_1+2},s_1\neq s_{2l_2+2}\) the define
	$$\left\{\begin{array}{l}
		P_1(z):=\mcA_{ijk}^{(t_1,t_{2l_1+2})}Q_{i_{t_1}i_{t_2}}^{t_1t_2}(z)\left(\prod_{\alpha=1}^{l_1}\mcA_{i_1\cdots i_d}^{(t_{2\alpha},t_{2\alpha+1})}Q_{i_{t_{2\alpha+1}}i_{t_{2\alpha+2}}}^{t_{2\alpha+1}t_{2\alpha+2}}(z)\right),\\
		P_2(z):=\mcA_{ijk}^{(s_1,s_{2l_2+2})}Q_{i_{s_1}i_{s_2}}^{s_1s_2}(z)\left(\prod_{\gamma=1}^{l_2}\mcA_{i_1\cdots i_d}^{(s_{2\gamma},s_{2\gamma+1})}Q_{i_{s_{2\gamma+1}}i_{s_{2\gamma+2}}}^{s_{2\gamma+1}s_{2\gamma+2}}(z)\right).
	\end{array}\right.$$
	If there are at least one term in
	$$\left\{Q_{i_{t_{2\alpha+1}}i_{t_{2\alpha+2}}}^{t_{2\alpha+1}t_{2\alpha+2}}(z):\alpha=1,\cdots,l_1+1\right\}\quad{\rm or}\quad\left\{Q_{i_{s_{2\gamma+1}}i_{s_{2\gamma+2}}}^{s_{2\gamma+1}s_{2\gamma+2}}(z):\gamma=1,\cdots,l_2+1\right\},$$
    coming from the off-diagonal blocks, the norm of \(\sum_{i,j,k=1}^{m,n,p}P_1(z)P_2(z)\) is bounded by \(\mrO(\Vert\bbQ\Vert^{l_1+l_2+2}N)\).
\end{lem}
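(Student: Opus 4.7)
The plan is to mirror the two-case split used in the proof of Lemma \ref{Cor of minor terms} for $d=3$, and discharge each case using Lemma \ref{Rem of extension minor 2} together with a Cauchy--Schwarz argument. The key technical fact I will need first is a general-$d$ analog of the $L^2$ bound (\ref{Rem of minor terms}): if at least one of the factors $Q^{t_{2\alpha+1}t_{2\alpha+2}}_{i_{t_{2\alpha+1}}i_{t_{2\alpha+2}}}(z)$ in $P_1(z)$ comes from an off-diagonal block, then
\[
\sum_{i_1,\ldots,i_d}|P_1(z)|^2 \;\leq\; C\,\Vert\bbQ\Vert^{2(l_1+1)} N.
\]
This follows because $|P_1|^2$ has the same structural form as the sum in Lemma \ref{Rem of extension minor 2} with parameter $2l_1+1$ in place of $l+1$, with doubled exponents on the $a^{(r)}_{i_r}$ and on the diagonal $Q^{rr}_{i_r i_r}$ factors, and with each off-diagonal factor replaced by $|Q|^{\circ 2}$. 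The case analysis there (Cases 1--3 of Lemma \ref{Rem of extension minor 2}) translates unchanged, using $\Vert|\bbQ|^{\circ 2}\Vert \leq \Vert\bbQ\Vert^2$ in the spectral-norm reductions.

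With that in hand, \textbf{Case 1} is the easy one: if both $P_1$ and $P_2$ contain an off-diagonal factor, Cauchy--Schwarz gives
\[
\Bigl|\sum_{i_1,\ldots,i_d} P_1(z)P_2(z)\Bigr|^2 \;\leq\; \sum_{i_1,\ldots,i_d}|P_1(z)|^2 \cdot \sum_{i_1,\ldots,i_d}|P_2(z)|^2 \;\leq\; C\,\Vert\bbQ\Vert^{2(l_1+l_2+2)} N^2,
\]
which is the desired bound. \textbf{Case 2}, where (say) only $P_1$ carries an off-diagonal factor and $P_2$ is purely diagonal, is where the real work lies. Here I cannot simply square $P_2$ to reduce to Lemma \ref{Rem of extension minor 2}, because $P_2$ contributes only $\mathrm{O}(1)$ factors of $Q^{rr}_{i_r i_r}$ and products of $a^{(r)}_{i_r}$ that are already present in $P_1$. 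The strategy is instead to treat $P_1(z)P_2(z)$ as a single expression of the form
\[
\sum_{i_1,\ldots,i_d}\prod_{r=1}^{d}\bigl(a^{(r)}_{i_r}\bigr)^{n^{(r)}_\bba}\prod_{1\leq r_1\leq r_2\leq d}\bigl(Q^{r_1 r_2}_{i_{r_1}i_{r_2}}\bigr)^{n_{r_1,r_2}},
\]
where $\sum_r n^{(r)}_\bba = l_1+l_2+2$ and $\sum_{r_1\leq r_2} n_{r_1,r_2} = l_1+l_2+2$, and at least one off-diagonal exponent $n_{r_1,r_2}$ ($r_1\neq r_2$) is positive. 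This is exactly the setup of Lemma \ref{Rem of extension minor 2} (with $l$ replaced by $l_1+l_2+1$, which is at most $7$), so the same three-case split on $\{r : n^{(r)}_\bba = 0\}$ applies verbatim, with every bound being a product of spectral-norm or quadratic-form reductions of the shape $|\bba^{(s_1)}|'|\bbQ^{s_1 s_2}||\bba^{(s_2)}|$, $\boldsymbol{1}'\diag(|\bbQ^{rr}|^{\circ n})\boldsymbol{1}$, or $|Q^{1 \cdot}_{i_1}||\bba^{(s)}|$, yielding an overall bound of $C\,\Vert\bbQ\Vert^{l_1+l_2+2}N$.

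The main obstacle will be the bookkeeping in Case 2: I must verify that the off-diagonal factor carried over from $P_1$ is preserved in every sub-case of the Lemma \ref{Rem of extension minor 2} analysis, so that the resulting tensor contraction really is amenable to the same spectral-norm reduction and never produces a spurious factor of $N^{3/2}$ or higher. Since the proof of Lemma \ref{Rem of extension minor 2} is already organized by how the off-diagonal $Q$'s are distributed among the indices, a careful inspection shows the argument goes through for every exponent vector $(n^{(r)}_\bba,n_{r_1,r_2})$ arising from a product $P_1 P_2$, which completes the proof.
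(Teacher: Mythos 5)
Your overall shape --- Cauchy--Schwarz in the ``both off-diagonal'' case with the $L^2$ extension of (\ref{Rem of minor terms}) as a preliminary, and a direct argument when only one chain carries an off-diagonal --- matches the first move of the paper's proof, but a genuine gap opens up in the single-off-diagonal case, where you merge $P_1P_2$ into one exponent-form product and propose to apply Lemma \ref{Rem of extension minor 2} ``verbatim'' with $l$ replaced by $l_1+l_2+1$. That lemma is stated for $1\le l\le 4$ and its proof is carried out only for $l=4$; the Case~3 bookkeeping there (``at most three $\mcA$'s contain $r$'', ``at most two indices with $n^{(r)}_{\bba}=1$'') is tied to that value, and you would need $l$ as large as $9$ ($4+4+1=9$, not $7$). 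More seriously, the exponent vector produced by a product of two separate chains is not in general realizable by a single chain of length $l_1+l_2+1$: the chain encodes an adjacency constraint --- each interior $\mcA^{(t_{2\alpha},t_{2\alpha+1})}$ is flanked by $Q$'s carrying its excluded indices --- and $P_1P_2$ has no $\mcA$-factor bridging the seam, so the constraints that the Lemma \ref{Rem of extension minor 2} case analysis relies on need not hold for the merged object.

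The paper avoids both issues by never merging. It keeps the off-diagonal chain $P_2$ as the working object (so $l_2\le4$ and Lemma \ref{Rem of extension minor 2} applies in its proved range) and absorbs the all-diagonal $P_1$ block by block: each $Q^{t_\alpha t_\alpha}$ from $P_1$ is either matched to the identical diagonal block already present in $P_2$ (just add the exponents) or, when that block does not occur in $P_2$, paired with the $a^{(t_\alpha)}_{i_{t_\alpha}}$ factors that every $\mcA^{(s_{2\gamma},s_{2\gamma+1})}$ of $P_2$ is then forced to carry, with a separate subcase at $l_2=1$. Your proposal does not supply a substitute for that pairwise-matching device. (Also, for general $d$, each $\mcA$-factor contributes $d-2$ coordinate factors, so $\sum_r n^{(r)}_{\bba}=(d-2)(l_1+l_2+2)$ rather than $l_1+l_2+2$; the latter identity is the $d=3$ specialization.)
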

\begin{proof}
	Denote
	\begin{center}
		\(n_{r_1,r_2}^{(i)}\) to be the number of \(Q_{i_{r_1}i_{r_3}}^{r_1r_2}\) in \(P_i(z)\) for \(i=1,2\).
	\end{center}
	Since we can apply Cauchy's inequality to show \(\sum_{i_1\cdots i_d}^{n_1\cdots i_d}P_1(z)P_2(z)\) is bounded by \(C\Vert\bbQ\Vert^{l_1+l_2+2}N\), without loss of generality, assume only \(P_2(z)\) contains off-diagonal blocks. Therefore, for each \(Q_{i_{t_{\alpha}}i_{t_{\alpha}}}^{t_{\alpha}t_{\alpha}}(z)\) in \(P_1(z)\), denoting \(n_{t_{\alpha},t_{\alpha}}^{(1)}\) and \(n_{t_{\alpha},t_{\alpha}}^{(2)}\) to be the number of \(Q_{i_{t_{\alpha}}i_{t_{\alpha}}}^{t_{\alpha}t_{\alpha}}(z)\) in \(P_1(z)\) and \(P_2(z)\), respectively, then consider the following two cases:

    \vspace{5mm}
    \noindent
    {\bf Case 1:} If there exists \(Q_{i_{s_{2\gamma+1}}i_{s_{2\gamma+2}}}^{s_{2\gamma+1}s_{2\gamma+2}}(z)\) in \(P_2(z)\) equal to \(Q_{i_{t_{\alpha}}i_{t_{\alpha}}}^{t_{\alpha}t_{\alpha}}(z)\), then all arguments in Remark \ref{Rem of extension minor 2} will be almost unchanged, just replacing the original power \(n_{t_{\alpha},t_{\alpha}}^{(2)}\) by \(n_{t_{\alpha},t_{\alpha}}^{(1)}+n_{t_{\alpha},t_{\alpha}}^{(2)}\).

    \vspace{5mm}
    \noindent
    {\bf Case 2:} Otherwise, if \(Q_{i_{t_{\alpha}}i_{t_{\alpha}}}^{t_{\alpha}t_{\alpha}}(z)\) does not exist in \(P_2(z)\), then all \(\mcA_{i_1\cdots i_d}^{(s_{2\gamma},s_{2\gamma+1})}\) must contain \(a_{i_{t_{\alpha}}}^{(t_{\alpha})}\). Therefore, when \(l_2\geq2\), the number of \(a_{i_{t_{\alpha}}}^{(t_{\alpha})}\) in \(P_2(z)\) must be no smaller than \(2\), then all \(Q_{i_{t_{\alpha}}i_{t_{\alpha}}}^{t_{\alpha}t_{\alpha}}(z)\) will associate with \(a_{i_{t_{\alpha}}}^{(t_{\alpha})}\) by \(|\bba^{(t_{\alpha})}|'\diag(|\bbQ^{t_{\alpha}t_{\alpha}}|^{\circ n_{t_{\alpha},t_{\alpha}}^{(1)}})|\bba^{(t_{\alpha})}|\leq\Vert\bbQ\Vert^{n_{t_{\alpha},t_{\alpha}}^{(1)}}\). Finally, when \(l_2=1\), i.e. \(P_2(z)=\mcA_{i_1\cdots i_d}^{(s_1,s_2)}Q_{i_{s_1}i_{s_2}}^{s_1s_2}(z)\) and \(s_1\neq s_2\), since \(P_1(z)\) only contains diagonal terms, then \(l_1\geq2\) and we have at least two kinds of diagonal terms, consider the following two possible cases:
		\begin{itemize}
			\item \(P_1(z)\) only contains \(Q_{i_{t_1}i_{t_1}}^{t_1t_1}(z)\) and \(Q_{i_{t_2}i_{t_2}}^{t_2t_2}(z)\), i.e.
			$$P_1(z)=\mcA_{i_1\cdots i_d}^{(t_2,t_1)}Q_{i_{t_1}i_{t_1}}^{t_1t_1}(z)\mcA_{i_1\cdots i_d}^{(t_1,t_2)}Q_{i_{t_2}i_{t_2}}^{t_2t_2}(z)\mcA_{i_1\cdots i_d}^{(t_2,t_1)}\cdots.$$
			Since \(s_1,s_2\neq t_1\) and \(s_1,s_2\neq t_2\), then there exists \(a_{i_{t_1}}^{(t_1)},a_{i_{t_2}}^{(t_2)}\) and \(a_{i_{s_1}}^{(s_1)},a_{i_{s_2}}^{(s_2)}\), we can conclude that
			\begin{align*}
				&\sum_{i_1\cdots i_d}^{n_1\cdots n_d}|P_1(z)P_2(z)|\leq|\bba^{(s_1)}|'|\bbQ^{s_1s_2}||\bba^{(s_2)}|\cdot\boldsymbol{1}'\diag(|\bbQ^{t_1t_1}|^{\circ n_{t_1,t_1}^{(1)}})|\bba^{(t_1)}|\cdot\\
				&\boldsymbol{1}'\diag(|\bbQ^{t_2t_2}|^{\circ n_{t_2,t_2}^{(1)}})|\bba^{(t_2)}|\leq\Vert\bbQ\Vert^{l_1+l_2+2}N.
			\end{align*}
			\item \(P_1(z)\) contains at least three different \(Q_{i_{t_1}i_{t_1}}^{t_1t_1}(z),Q_{i_{t_2}i_{t_2}}^{t_2t_2}(z),Q_{i_{t_3}i_{t_3}}^{t_3t_3}(z),\cdots\) coming from diagonal blocks, where \(t_1\neq t_2\neq t_3\cdots\) and all \(t_j\neq s_1,t_j\neq s_2\), then for each \(t_j\), there will be at most one \(a_{i_{t_j}}^{(t_j)}\) not existing in \(P_1(z)\), otherwise we back to the previous situation. Without loss generality, assume \(P_1(z)\) does not contain \(a_{i_{t_1}}^{(t_1)}\), then for each \(a_{i_{t_j}}^{(t_j)}\) where \(j\geq2\), it will appear in \(\mcA_{i_1\cdots i_d}^{(s_1,s_2)}\) and \(P_1(z)\) at least once, so \(Q_{i_{t_j}i_{t_j}}^{t_jt_j}(z)\) will associate with at least two \(a_{i_{t_j}}^{(t_j)}\) by \(|\bba^{(t_j)}|'\diag(|\bbQ^{t_jt_j}|^{\circ n_{t_j,t_j}^{(1)}})|\bba^{(t_j)}|\leq\Vert\bbQ\Vert^{n_{t_j,t_j}^{(1)}}\). For \(Q_{i_{t_1}i_{t_1}}^{t_1t_1}(z)\), it will associate with one \(a_{i_{t_1}}^{(t_1)}\) by \(\boldsymbol{1}'\diag(|\bbQ^{t_1t_1}|^{\circ n_{t_1,t_1}^{(1)}})|\bba^{(t_1)}|\leq N^{1/2}\Vert\bbQ\Vert^{n_{t_1,t_1}^{(1)}}\), which can conclude our claim. Finally, if \(P_1(z)\) contains \(a_{i_{t_1}}^{(t_1)}\), then all \(Q_{i_{t_j}i_{t_j}}^{t_jt_j}(z)\) will associate with at least two \(a_{i_{t_j}}^{(t_j)}\) by \(|\bba^{(t_j)}|'\diag(|\bbQ^{t_jt_j}|^{\circ n_{t_j,t_j}^{(1)}})|\bba^{(t_j)}|\leq\Vert\bbQ\Vert^{n_{t_j,t_j}^{(1)}}\).
		\end{itemize}
	Now, we complete the extension of Lemma \ref{Cor of minor terms} for general \(d\geq3\).
\end{proof}
\subsection{Entrywise law}
\begin{thm}\label{Thm of entrywise law general d}
	Under Assumptions {\rm \ref{Ap of general noise}} and {\rm \ref{Ap of dimension}}, for any \(\eta_0>0\), \(z\in\mcS_{\eta_0}\) in \eqref{Eq of stability region general d} and $\omega\in(1/2-\delta,1/2)$, where $\delta>0$ is a sufficiently small number, let 
    \begin{align}
		\bbW^{(d)}(z)=-((z+g(z))\bbI_d-\diag(\bbg(z))+g(z)\bbS_d-\diag(\bbg(z))\bbS_d-\bbS_d\diag(\bbg(z)))^{-1}.\label{Eq of W general d}
	\end{align}
    For \(s,t\in\{1,\cdots,d\}\), we have 
	\begin{align*}
		\Big|Q_{i_si_t}^{st}(z)-\mathfrak{c}_s^{-1}g_s(z)\Big[\delta_{st}\delta_{i_si_t}+(a_{i_s}^{(s)})^2\sum_{k\neq s}^d(g(z)-g_s(z)-g_k(z))W_{sk}^{(d)}(z)\Big]\Big|\prec\mrO(\eta_0^{-21}N^{-\omega}),
	\end{align*}
	where \(Q_{i_si_t}^{st}(z)\) is the \((i_s,i_t)\)-th entry of \(\bbQ^{st}\) and \(a_{i_s}^{(s)}\) is the \(i\)-th entry of \(\bba^{(s)}\), as does \(W_{sk}^{(d)}(z)\), and \(\bbg(z)=(g_1(z),\cdots,g_d(z))'\) is the solution of {\rm (\ref{Eq of MDE 3 order})}.
\end{thm}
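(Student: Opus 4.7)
The plan is to follow the roadmap of Theorem \ref{Thm of entrywise law d=3} essentially verbatim, substituting the $d$-fold analogues of the auxiliary results of this section (Lemmas \ref{Thm of a.s. convergence}, \ref{Rem of extension minor 2}, \ref{Rem of extension of Corollary}) for their $d=3$ counterparts. I would begin by applying the resolvent identity $\bbM\bbQ(z)=\bbI_N+z\bbQ(z)$ at the $(s,t)$-block to obtain
\begin{align*}
z\,\mbE\big[Q_{i_si_t}^{st}(z)\big]=\frac{1}{\sqrt{N}}\sum_{i_1,\ldots,i_d}\mbE\Big[X_{i_1\cdots i_d}\sum_{l\ne s}\mcA_{i_1\cdots i_d}^{(s,l)}Q_{i_li_t}^{lt}(z)\Big]-\delta_{st}\delta_{i_si_t},
\end{align*}
then apply the cumulant expansion (Lemma \ref{Lem of Stein’s equation}) truncated at first order. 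Controlling the second-order remainder $N^{-1/2}\sum\epsilon_{i_1\cdots i_d}^{(2)}=\mrO(\eta_0^{-3}N^{-1/2}(a_{i_s}^{(s)}+N^{-1/2}))$ by the same exhaustive case analysis used in the $d=3$ proof, now organised through Lemma \ref{Rem of extension minor 2}, reduces the argument to a linear system satisfied by $\mbE[Q_{i_si_t}^{st}]$ and the row-contracted means $\mbE[Q_{i_s\cdot}^{sk}\bba^{(k)}]$ for $k\ne s$.

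Next I would read off the leading first-order cumulant contribution and use Lemma \ref{Thm of a.s. convergence} to replace block traces $\rho_l(z)=N^{-1}\tr(\bbQ^{ll}(z))$ by the deterministic $\mfm_l(z)$, and to decouple covariances between row-contracted quantities and block traces at cost $\mrO(\eta_0^{-10}N^{-2\omega})$. This would yield
\begin{align*}
\bigl(z+\mfm(z)-\mfm_s(z)\bigr)\mbE\big[Q_{i_si_t}^{st}\big]=-\delta_{st}\delta_{i_si_t}-a_{i_s}^{(s)}\sum_{k\ne s}\mfm_k(z)\mbE\big[Q_{i_s\cdot}^{sk'}\bba^{(k')}\big]+\mrO(C_{\eta_0}N^{-\omega}),
\end{align*}
together with the parallel equations for $\mbE[Q_{i_s\cdot}^{sk}\bba^{(k)}]$ expressed in terms of the scalars $W_{lk,N}^{(d)}(z):=\mbE[(\bba^{(l)})'\bbQ^{lk}(z)\bba^{(k)}]$. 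Summing the row equations over $i_s$ with weight $a_{i_s}^{(s)}$ yields a closed $d\times d$ linear system for $\bbW_N^{(d)}(z)$ whose coefficient matrix is precisely $\bbGa^{(d)}(z)$ of \eqref{Eq of W general d}. Replacing $\mfm(z)$ by $\bbg(z)$ through Theorem \ref{Thm of approximation} and invoking the $d$-fold extension of Proposition \ref{Pro of invertible matrices} (whose proof is unchanged because the underlying spectral-gap argument of Lemma \ref{Lem of spectral gap} is stated for arbitrary $d\ge 3$) would give $\Vert\bbW_N^{(d)}(z)-\bbW^{(d)}(z)\Vert=\mrO(\eta_0^{-17}N^{-\omega})$. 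Substituting this back produces the claimed approximation for $\mbE[Q_{i_si_t}^{st}]$, and a final application of Lemma \ref{Thm of a.s. convergence} to the pointwise quantity $Q_{i_si_t}^{st}(z)$ (with $\bbx=\bbe_{i_s}^{n_s}$, $\bby=\bbe_{i_t}^{n_t}$) upgrades expectation to stochastic domination at the stated rate $\mrO(\eta_0^{-21}N^{-\omega})$.

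The main obstacle is the bookkeeping of cross-terms in the first-order cumulant step when $d\ge 4$. For $d=3$ there are only a handful of pairings $\rho_l\cdot(\bba^{(k)})'\bbQ^{km}\bba^{(m)}$ to decouple, but for general $d$ the number of such terms grows like $d^2$, and each must be handled with error $\mrO(C_{\eta_0}N^{-\omega})$ uniformly. Lemma \ref{Rem of extension of Corollary} is tailored precisely for this combinatorial control: it bounds the raw second moments of cumulant kernels regardless of which off-diagonal block contributes. Coupled with the uniform operator norm bound from Proposition \ref{Pro of inverse norm}, it should ensure that all the constants multiplying the error remain $d$-independent up to the polynomial factors of $\eta_0^{-1}$ already present in the $d=3$ statement, so that no new analytic ingredient beyond those already proved for general $d$ in this section is required.
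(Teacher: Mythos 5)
Your overall roadmap matches the paper's proof closely: resolvent identity at the $(s,t)$ block, first-order cumulant expansion with remainders controlled by the $d$-fold minor-term lemmas, reduction to row-contracted means, closing the scalar system for $W_{st,N}^{(d)}$ by weighted summation over the row index, and a final upgrade from expectation to stochastic domination via the $d$-fold a.s.\ convergence lemma. That is exactly the route the paper takes.

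There is, however, one genuine gap. You assert that the invertibility of the coefficient matrix $\bbGa^{(d)}(z)=(z+g)\bbI_d-\diag(\bbg)+g\bbS_d-\diag(\bbg)\bbS_d-\bbS_d\diag(\bbg)$ of the $W$-system follows from ``the $d$-fold extension of Proposition~\ref{Pro of invertible matrices}, whose proof is unchanged because the spectral-gap argument of Lemma~\ref{Lem of spectral gap} is stated for arbitrary $d$.'' This is not correct. Proposition~\ref{Pro of invertible matrices} and its underlying spectral-gap machinery apply to matrices of the form $\bbI_d-\diag(\cdot)\bbS_d\diag(\cdot)$ (i.e.\ $\bbLa^{(d)}$) or $\bbI_d-\diag(\cdot)\bbS_d$ (i.e.\ $\bbPi^{(d)}$), whose invertibility reduces to $\Vert\bbF^{(d)}\Vert<1$ together with the spectral-gap bound on $\diag(|\mfc^{-1/2}\circ\bbg|)\bbS_d\diag(|\mfc^{-1/2}\circ\bbg|)$. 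The matrix $\bbGa^{(d)}(z)$ has a different structure --- it is not a rank-correction of the identity by a symmetrised $\diag\,\bbS_d\,\diag$ term --- and none of that machinery transfers. The paper does not derive its invertibility this way; it proves it separately in Lemma~\ref{Lem of invertible bbGa}, by a contradiction argument that exploits the approximate solvability relation $\bbGa^{(d)}(z)\,\bbW_N^{(d)}(z)=-\bbI_d+\mrO(\eta_0^{-16}N^{-\omega})$: a putative null vector $\bbr$ of $\bbGa^{(d)}$ would force $(z+(d-1)g_d)\sum_l r_l=o(1)$ and then $|z+g_{k_0}|=o(1)$ for some $k_0$, contradicting $\Im(z+g_{k_0})\geq\eta_0$ on $\mcS_{\eta_0}$. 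Without this (or an equivalent argument), you cannot legitimately write $\bbW^{(d)}(z)=-\bbGa^{(d)}(z)^{-1}$, so your step 5/6 (solving the closed $d\times d$ system and bounding $\Vert\bbW_N^{(d)}-\bbW^{(d)}\Vert$) is unsupported as written. The rest of your analysis --- in particular the combinatorial control of cross-terms through Lemma~\ref{Rem of extension of Corollary} and the decoupling of covariances --- is in line with what the paper does and would go through once the invertibility of $\bbGa^{(d)}$ is supplied by the correct lemma.
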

\begin{proof}
    The existence of \(\bbW^{(d)}(z)\) on \(\mcS_{\eta_0}\) is established in Lemma \ref{Lem of invertible bbGa} later. Similar to \eqref{Eq of W}, we define 
    $$W_{st,N}^{(d)}(z)=\mbE\big[(\bba^{(s)})'\bbQ^{st}(z)\bba^{(t)}\big],\quad\text{for }1\leq s,t\leq d.$$
    Recall that \(\Vert\bbQ(z)\Vert\leq\eta_0^{-1}\) for any \(z\in\mcS_{\eta_0}\) in (\ref{Eq of stability region general d}). For convenience, we suppress the argument $(z)$ in what follows. Similar to what we have done in \S\ref{sec of proof entrywise law d=3}, we first prove that for any \(\omega\in(1/2-\delta,1/2)\) 
    \begin{align}
        \sup_{z\in\mcS_{\eta_0}}\Vert\bbg(z)-\bbm(z)\Vert_{\infty}=\mrO(\eta_0^{-15}N^{-2\omega}).\label{Eq of approximation general d}
    \end{align}
    By the identity \(\bbM\bbQ-z\bbQ=\bbI_N\) and the cumulant expansion (\ref{Eq of cumulant expansion}), we have
	\begin{align*}
		z\mbE[Q_{i_si_t}^{st}]&=\frac{1}{\sqrt{N}}\sum_{l\neq s}^d\sum_{i_1\cdots i_d}^{(s,t)}\mbE[X_{i_1\cdots i_d}\mcA_{i_1\cdots i_d}^{(s,l)}Q_{i_li_t}^{lt}]-\delta_{st}\delta_{i_si_t}\\
		&=\frac{1}{\sqrt{N}}\sum_{i_1\cdots i_d}^{(s,t)}\Big(\sum_{l\neq s}^d\mbE[\partial_{i_1\cdots i_d}^{(1)}\{\mcA_{i_1\cdots i_d}^{(s,l)}Q_{i_li_t}^{lt}\}]+\epsilon_{i_1\cdots i_d}^{(2)}\Big)-\delta_{st}\delta_{i_si_t},
	\end{align*}
	where the notation \(\sum_{i_1\cdots i_d}^{(s,t)}\) means the summation is over all \(i_r=1,\cdots,n_r\) {\bf except} \(i_s=1,\cdots,n_s\) and \(i_t=1,\cdots,n_t\). Similar to proofs of (\ref{Eq of ve(2) d=3}) in Theorem \ref{Thm of entrywise law d=3}, we have \(N^{-1/2}|\sum_{i_1\cdots i_d}^{(s,t)}\epsilon_{i_1\cdots i_d}^{(2)}|=\mrO(a_{i_s}^{(s)}\eta_0^{-3}N^{-1/2}+\eta_0^{-3}N^{-1})\). Here we omit the details for convenience, and readers can refer to (\ref{Eq of remainder covariance general d}) in Theorem \ref{Thm of covariance general d} for an example of calculating remainders. Next, by direct calculation, we have
	\begin{align*}
		&\frac{1}{\sqrt{N}}\sum_{l\neq s}^d\sum_{i_1\cdots i_d}^{(s,t)}\mbE\big[\partial_{i_1\cdots i_d}^{(1)}\{\mcA_{i_1\cdots i_d}^{(s,l)}Q_{i_li_t}^{lt}\}\big]=-\frac{1}{N}\sum_{l\neq s}^d\sum_{i_1\cdots i_d}^{(s,t)}\big[\mcA_{i_1\cdots i_d}^{(s,l)}Q_{i_li_{r_1}}^{lr_1}\mcA_{i_1\cdots i_d}^{(r_1,r_2)}Q_{i_{r_2}i_t}^{r_2t}\big]\\
		&=-\sum_{l\neq s}^d\mbE[Q_{i_si_t}^{st}\rho_l(z)]-\sum_{l\neq s}^d\sum_{r\neq s,l}^da_{i_s}^{(s)}\mbE[Q_{i_t\cdot}^{tr}\bba^{(r)}\rho_l(z)]+\mrO(\eta_0^{-2}N^{-1}),
	\end{align*}
	where \(\rho_l(z)=N^{-1}\tr(\bbQ^{ll}(z))\) and \(Q_{i_t\cdot}^{tr}(z)\) is the \(i_t\)-th row of \(\bbQ^{tr}\). By Lemma \ref{Thm of a.s. convergence}, we can derive that
	\begin{align*}
		(z+\mfm(z)-\mfm_s(z))\mbE[Q_{i_si_t}^{st}]=-\delta_{st}\delta_{i_si_t}-\sum_{l\neq s}^d\mfm_l(z)\sum_{r\neq s,l}^da_{i_s}^{(s)}\mbE[Q_{i_t\cdot}^{tr}\bba^{(r)}]+\mrO(\eta_0^{-10}N^{-2\omega}),
	\end{align*}
	where \(\mfm_l(z)=\mbE[\rho_l(z)]\) and \(\mfm(z)=\sum_{l=1}^d\mfm_l(z)\). Thus, we obtain that
    \begin{align*}
        &(z+\mfm(z)-\mfm_s(z))\mfm_s(z)=-\mfc_s+\mrO(\eta_0^{-10}N^{-2\omega})
    \end{align*}
    combining the fact \(|\mfm_s(z)|\geq\eta_0\) and Lemma \ref{Thm of Stability}, we can conclude \eqref{Eq of approximation general d}. Consequently, we obtain that 
    \begin{align*}
		(z+g(z)-g_s(z))\mbE[Q_{i_si_t}^{st}]=-\delta_{st}\delta_{i_si_t}-\sum_{l\neq s}^dg_l(z)\sum_{r\neq s,l}^da_{i_s}^{(s)}\mbE[Q_{i_t\cdot}^{tr}\bba^{(r)}]+\mrO(\eta_0^{-16}N^{-2\omega}),
	\end{align*}
    For \(\mbE[Q_{i_t\cdot}^{tr}\bba^{(r)}]\), by the previous trick, we can obtain that
	\begin{align*}
		&z\mbE[Q_{i_t\cdot}^{tr}\bba^{(r)}]=-\sum_{l\neq t}^d\mbE[Q_{i_t\cdot}^{tr}\bba^{(r)}\rho_l(z)]-a_{i_t}^{(t)}\sum_{l\neq t}^d\sum_{w\neq t,l}^d\mbE[(\bba^{(r)})'\bbQ^{rw}\bba^{(w)}\rho_l(z)]+\mrO(a_{i_t}^{(t)}\eta_0^{-3}N^{-1/2}+\eta_0^{-3}N^{-1}),
	\end{align*}
	further by Lemma \ref{Thm of a.s. convergence} and \eqref{Eq of approximation general d}, it yields that
	\begin{align*}
		&(z+g(z)-g_t(z))\mbE[Q_{i_t\cdot}^{tr}\bba^{(r)}]=-a_{i_t}^{(t)}\sum_{l\neq t}^dg_l(z)\sum_{w\neq t,l}^dW_{rw,N}^{(d)}+\mrO(a_{i_t}^{(t)}\eta_0^{-3}N^{-1/2}+\eta_0^{-3}N^{-1}+\eta_0^{-16}N^{-2\omega}),
	\end{align*}
	i.e.
	\begin{align*}
		&a_{i_t}^{(t)}\mbE[Q_{i_t\cdot}^{tr}\bba^{(r)}]=(a_{i_t}^{(t)})^2\mfc_t^{-1}g_t(z)\sum_{l\neq t}^dg_l(z)\sum_{w\neq t,l}^dW_{rw,N}^{(d)}+a_{i_t}^{(t)}\mrO(a_{i_t}^{(t)}\eta_0^{-4}N^{-1/2}+\eta_0^{-4}N^{-1}+\eta_0^{-17}N^{-2\omega}).
	\end{align*}
	Summing all \(i_t=1,\cdots,n_t\) of above equations, we obtain
	\begin{align*}
		W_{tr,N}^{(d)}=\mfc_t^{-1}g_t(z)\sum_{l\neq t}^dg_l(z)\sum_{w\neq t,l}^dW_{rw,N}^{(d)}+\mrO(\eta_0^{-17}N^{-2\omega+1/2}),
	\end{align*}
	so
	\begin{align*}
		\mbE[Q_{i_t\cdot}^{tr}\bba^{(r)}]=a_{i_t}^{(t)}W_{tr,N}^{(d)}(z)+\mrO(\eta_0^{-17}N^{-2\omega+1/2}),
	\end{align*}
	and
	\begin{align*}
		\mbE[Q_{i_si_t}^{st}]=\mfc_s^{-1}g_s(z)\Big(\delta_{st}\delta_{i_si_t}+a_{i_s}^{(s)}\sum_{l\neq s}^dg_l(z)\sum_{r\neq s,l}^da_{i_t}^{(t)}W_{tr,N}^{(d)}(z)\Big)+\mrO(\eta_0^{-19}N^{-2\omega+1/2}).
	\end{align*}
	Since \(\omega\in(1/2-\delta,1/2)\), then \(2\omega-1/2\in(1/2-2\delta,1/2)\). Finally, combining Lemmas \ref{Thm of a.s. convergence} and \eqref{Eq of bbW general}, we complete our proof.
\end{proof}

\subsection{Mean and covariance functions}
Before establishing the CLT for LSS of \(\bbM\) for general \(d\geq3\), we derive the general forms of the mean function \(\mu_N^{(3)}(z)\) in (\ref{Eq of mean function d=3}) and the covariance function \(\mcC_N^{(3)}(z_1,z_2)\) in (\ref{Eq of covariance function d=3}) as follows:
\subsubsection{Covariance function}
\begin{thm}\label{Thm of covariance general d}
	Under Assumptions {\rm \ref{Ap of general noise}} and {\rm \ref{Ap of dimension}}, for any \(z_1,z_2\in\mathcal{S}_{\eta_0}\) in {\rm (\ref{Eq of stability region general d})}, let
	\begin{align}
		\mcC_{st,N}^{(d)}(z_1,z_2):=\Cov(\tr(\bbQ^{ss}(z_1)),\tr(\bbQ^{tt}(z_2)))\quad{\rm and}\quad\bbC_N^{(d)}(z_1,z_2):=[\mcC_{st,N}^{(d)}(z_1,z_2)]_{d\times d},\label{Eq of mcC covariance}
	\end{align}
	where \(s,t\in\{1,\cdots,d\}\), then we have
	\begin{align}
		\lim_{N\to\infty}\Vert\bbC_N^{(d)}(z_1,z_2)-\bbPi^{(d)}(z_1,z_2)^{-1}\diag(\mfc^{-1}\circ\bbg(z_1))\bbF_N^{(d)}(z_1,z_2)\Vert=0,\label{Eq of bbC}
	\end{align}
	where \(\bbPi^{(d)}(z_1,z_2)\) is defined in {\rm (\ref{Eq of invertible 2})} and
	\begin{align*}
		\bbF_N^{(d)}(z_1,z_2)=[\mcF_{st,N}^{(d)}(z_1,z_2)]_{d\times d}\quad\mcF_{st,N}^{(d)}(z_1,z_2):=2\mcV_{st}^{(d)}(z_1,z_2)+\kappa_4\mcW_{st,N}^{(d)}(z_1,z_2),
	\end{align*}
	and the precise definitions of \(\mcV_{st}^{(d)}(z_1,z_2)\) and \(\mcW_{st,N}^{(d)}(z_1,z_2)\) are postponed to {\rm (\ref{Eq of mcV limit general d})} and {\rm (\ref{Eq of mcW limit general d})}, respectively. Consequently, \(\Var(\tr(\bbQ(z)))\) is bounded by \(C_{\eta_0,\mfc}\) for any \(z\in\mcS_{\eta_0}\) and
    $$\lim_{N\to\infty}|\Cov(\tr(\bbQ(z_1)),\tr(\bbQ(z_2)))-\mcC_N^{(d)}(z_1,z_2)|=0,$$
    where
    \begin{align}
    \mcC_N^{(d)}(z_1,z_2):=\boldsymbol{1}_d'\bbPi^{(d)}(z_1,z_2)^{-1}\diag(\mfc^{-1}\circ\bbg(z_1))\bbF_N^{(d)}(z_1,z_2)\boldsymbol{1}_d.\label{Eq of covariance function general d}
\end{align}
\end{thm}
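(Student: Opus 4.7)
The strategy is to mimic the cumulant-expansion proof of Theorem \ref{Thm of Variance} for the case $d=3$, replacing every $d=3$-specific combinatorial lemma with its general-$d$ counterpart proved just above in this section: Lemma \ref{Thm of a.s. convergence} for concentration of quadratic forms, Lemma \ref{Rem of extension minor 2} for off-diagonal term bounds, and Lemma \ref{Rem of extension of Corollary} for products of such terms. WLOG assume $\mcC_{ss,N}^{(d)}(z,z)\geq 1$. Starting from $\bbM\bbQ-z\bbQ=\bbI_N$ I would write, for each $s\in\{1,\dots,d\}$,
\begin{align*}
z_1\mcC_{st,N}^{(d)}(z_1,z_2) &= \frac{1}{\sqrt N}\sum_{i_1\cdots i_d}\mbE\bigl[X_{i_1\cdots i_d}F_{i_1\cdots i_d}^{s}(z_1)\{\tr(\bbQ^{tt}(\bar z_2))\}^{c}\bigr],
\end{align*}
with $F_{i_1\cdots i_d}^{s}(z):=\sum_{l\neq s}\mcA_{i_1\cdots i_d}^{(s,l)}Q_{i_si_l}^{sl}(z)$, then expand via \eqref{Eq of cumulant expansion} up to cumulant order four.

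For the $l=1$ contribution, a direct expansion produces the ``ladder'' term $-\mfm_s(z_2)\sum_{r\neq s}\mcC_{rt,N}^{(d)}(z_1,z_2)$, the diagonal shift $-(z_1+\mfm(z_1)-\mfm_s(z_1))$ factor, a crossing piece $-2\mcV_{st,N}^{(d)}(z_1,z_2)$ of the same shape as \eqref{Eq of capital V}, and a cascade of covariances of LSS-type statistics against $\tr(\bbQ^{tt}(\bar z_2))^{c}$; Lemma \ref{Thm of a.s. convergence} absorbs each of these into $\mrO(C_{\eta_0}N^{-\omega})\mcC_{ss,N}^{(d)}$ via the same splitting trick as in \eqref{Eq of covariance trick}. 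For $l=2$ all terms are $\mrO(C_{\eta_0}N^{-\omega})\mcC_{ss,N}^{(d)}+\mrO(\eta_0^{-4}N^{-1/2})$ by Lemmas \ref{Rem of extension minor 2} and \ref{Thm of a.s. convergence}, exactly as in the $d=3$ case. For $l=3$ the only surviving major contribution comes from $\partial^{(1)}F\cdot\partial^{(2)}\{\tr(\bbQ^{tt})^{c}\}$, which Lemma \ref{Rem of extension of Corollary} isolates as $-\kappa_4\mcW_{st,N}^{(d)}(z_1,z_2)/6$ (times the appropriate combinatorial factor); the $l=4$ remainder is bounded by $\mrO(C_{\eta_0}N^{-1/2})$ through the same off-diagonal/diagonal classification.

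Collecting everything gives the matrix identity
\begin{align*}
\bbTheta_N^{(d)}(z_1,z_2)\bbC_N^{(d)}(z_1,z_2) = -\bbF_N^{(d)}(z_1,z_2) + \mrO(C_{\eta_0}N^{-\omega})\diag\bigl(\bbC_N^{(d)}(z_1,z_2)\bigr) + \mrO(C_{\eta_0}N^{-1/2})\boldsymbol{1}_{d\times d},
\end{align*}
where $\bbTheta_N^{(d)}(z_1,z_2)$ is the obvious $d\times d$ extension of \eqref{Eq of bbTheta d=3}: diagonal entries $z_1+\mfm(z_1)-\mfm_s(z_1)$ and off-diagonal entries $\mfm_s(z_2)$. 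By the general-$d$ version of Theorem \ref{Thm of approximation} one has $\Vert\bbTheta_N^{(d)}(z_1,z_2)+\diag(\mfc\circ\bbg(z_1)^{-1})\bbPi^{(d)}(z_1,z_2)\Vert_{\infty}=\mrO(\eta_0^{-15}N^{-2\omega})$; Proposition \ref{Pro of invertible matrices} gives invertibility of $\bbPi^{(d)}(z_1,z_2)$ and Proposition \ref{Pro of inverse norm} provides $\Vert\bbPi^{(d)}(z_1,z_2)^{-1}\Vert\leq C_{\mfc}\eta_0^{-4}$. A perturbation expansion of the inverse, as in the $d=3$ proof, first yields the uniform bound $\Vert\bbC_N^{(d)}(z_1,z_2)\Vert\leq C_{\eta_0}$, then allows substitution of $-\bbPi^{(d)}(z_1,z_2)^{-1}\diag(\mfc^{-1}\circ\bbg(z_1))$ for $\bbTheta_N^{(d)}(z_1,z_2)^{-1}$, producing \eqref{Eq of bbC}; summing over $s,t$ yields \eqref{Eq of covariance function general d}.

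The main obstacle is the combinatorial explosion in the $l=2,3$ case analysis when $d$ is arbitrary: every derivative produces sums over $d(d-1)$ index pairs, and one must certify that each configuration collapses either to a recognizable major term (all $Q$-factors diagonal, giving $\mcV_{st,N}^{(d)}$ or $\mcW_{st,N}^{(d)}$) or to a vanishing minor term. This is precisely the role of Lemmas \ref{Rem of extension minor 2} and \ref{Rem of extension of Corollary}: they certify that any single off-diagonal $Q$-block costs a factor $N^{-1/2}$ in $\ell^2$-bounds, so only the fully-diagonal skeletons contribute to the leading order. Once this reduction is in hand, the remainder of the argument parallels the $d=3$ proof line by line, with the only genuinely new input being that the system equation now involves a $d\times d$ matrix $\bbPi^{(d)}$ rather than $\bbPi^{(3)}$, both handled uniformly by Proposition \ref{Pro of invertible matrices}.
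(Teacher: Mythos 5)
Your proposal follows the paper's proof essentially step by step: the same fourth-order cumulant expansion of $\bbM\bbQ - z\bbQ = \bbI_N$, the same identification of $\alpha=1,3$ as the only orders producing leading-order terms (with Lemmas \ref{Rem of extension minor 2} and \ref{Rem of extension of Corollary} controlling the off-diagonal skeletons at $\alpha=2,4$), the same matrix system equation in $\bbTheta_N^{(d)}(z_1,z_2)$, and the same perturbative inversion using the general-$d$ approximation theorem together with Propositions \ref{Pro of invertible matrices} and \ref{Pro of inverse norm}, first for boundedness of $\bbC_N^{(d)}$ and then for the limit. The only imprecision is the hand-waved combinatorial factor in the $\kappa_4$ term, which you acknowledge; otherwise this is the paper's argument.
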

\begin{proof}
	Let us  first show that \(\mcC_{k_1k_2,N}^{(d)}(z,z)\) is bounded by \(C_{\eta_0,\mfc,d}\). Without loss of generality, we assume that \(|\mcC_{k_1k_2,N}^{(d)}(z,z)|>1\), otherwise, they are already bounded. For convenience, we omit \((z,z)\) behind \(\mcC_{k_1k_2,N}^{(d)}(z,z)\). By the cumulant expansion (\ref{Eq of cumulant expansion}), we have
	\begin{align*}
		&z\mcC_{k_1k_2,N}^{(d)}=z\Cov(\tr(\bbQ^{k_1k_1}(z)),\tr(\bbQ^{k_2k_2}(z)))\\
		&=\frac{1}{\sqrt{N}}\sum_{l\neq k_1}^d\sum_{i_1\cdots i_d}^{n_1\cdots n_d}\mbE\left[X_{i_1\cdots i_d}\mcA_{i_1\cdots i_d}^{(k_1,l)}Q_{i_{k_1}i_l}^{k_1l}(z)\tr(\bbQ^{k_2k_2}(\bar{z}))^c\right]\\
		&=\frac{1}{\sqrt{N}}\sum_{l\neq k_1}^d\sum_{i_1\cdots i_d}^{n_1\cdots n_d}\Bigg(\sum_{\alpha=1}^3\frac{\kappa_{\alpha+1}}{\alpha!}\mbE\left[\mcA_{i_1\cdots i_d}^{(k_1,l)}\partial_{i_1\cdots i_d}^{(\alpha)}\big\{Q_{i_{k_1}i_l}^{k_1l}(z)\tr(\bbQ^{k_2k_2}(\bar{z}))^c\big\}\right]+\epsilon_{i_1\cdots i_d}^{(4)}\Bigg),
	\end{align*}
    where the remainder satisfies that \(|\epsilon_{i_1\cdots i_d}^{(4)}|\leq C_{\kappa_5}\sup_{z\in\mcS_{\eta_0}}|\partial_{i_1\cdots i_d}^{(4)}\big\{Q_{i_{k_1}i_l}^{k_1l}(z)\tr(\bbQ^{k_2k_2}(\bar{z}))^c\big\}|\). As in the proof of Theorem \ref{Thm of Variance}, we claim that the major terms only appear in the first and third derivatives. Since the proofs are the same as those for Theorem \ref{Thm of Variance}, here we demonstrate only that \(N^{-1/2}|\sum_{i_1\cdots i_d}^{n_1\cdots n_d}\epsilon_{i_1\cdots i_d}^{(4)}|\to0\), as this serves as a representative example. By Lemmas \ref{Rem of extension minor 2} and \ref{Rem of extension of Corollary}, it is enough to show that for each \(l\neq k_1\) and \(0\leq\gamma\leq4\)
    \begin{align}
        &N^{-1/2}\Big|\sum_{i_1\cdots i_d}^{n_1\cdots n_d}\mcA_{i_1\cdots i_d}^{(k_1,l)}\mathscr{D}\big(\partial_{i_1\cdots i_d}^{(\gamma)}\{Q_{i_{k_1}i_l}^{k_1l}(z)\}\big)\mathscr{D}\big(\partial_{i_1\cdots i_d}^{(4-\gamma)}\{\tr(\bbQ^{k_2k_2}(\bar{z}))^c\}\big)\Big|\to0.\label{Eq of remainder covariance general d}
    \end{align}
    For \(\gamma=0\), it is easy to see \(\mathscr{D}\big(\partial_{i_1\cdots i_d}^{(0)}\{Q_{i_{k_1}i_l}^{k_1l}(z)\}\big)=0\). For \(\gamma=3\), since
    \begin{align}
		\partial_{i_1\cdots i_d}^{(1)}\tr(\bbQ^{k_2k_2}(\bar{z}))=\sum_{j=1}^{n_{k_2}}\partial_{i_1\cdots i_d}^{(1)}Q_{jj}^{k_2k_2}(\bar{z})=-\frac{1}{\sqrt{N}}\sum_{t_1\neq t_2}^{d,d}\mcA_{i_1\cdots i_d}^{(t_1,t_2)}Q_{i_{t_1}\cdot}^{t_1k_2}(\bar{z})Q_{\cdot i_{t_2}}^{k_2t_2}(\bar{z}),\label{Eq of partial trace d}
	\end{align}
    then \(\mathscr{D}\big(\partial_{i_1\cdots i_d}^{(1)}\{\tr(\bbQ^{k_2k_2}(\bar{z}))^c\}\big)=0\). Next, for \(\gamma=4\), note that 
    \begin{align}
        &\mcA_{i_1\cdots i_d}^{(k_1,l)}\mathscr{D}\big(\partial_{i_1\cdots i_d}^{(4)}\{Q_{i_{k_1}i_l}^{k_1l}(z)\}\big)=N^{-2}\mcA_{i_1\cdots i_d}^{(k_1,l)}Q_{i_{k_1}i_{k_1}}^{k_1k_1}(z)Q_{i_li_l}^{ll}(z)\times\notag\\
        &\sum_{s_1\neq k_1}^d\sum_{s_2\neq s_1}^d\sum_{s_3\neq s_2,l}^d\mcA_{i_1\cdots i_d}^{(k_1,s_1)}Q_{i_{s_1}i_{s_1}}^{s_1s_1}(z)\mcA_{i_1\cdots i_d}^{(s_1,s_2)}Q_{i_{s_2}i_{s_2}}^{s_2s_2}(z)\mcA_{i_1\cdots i_d}^{(s_2,s_3)}Q_{i_{s_3}i_{s_3}}^{s_3s_3}(z)\mcA_{i_1\cdots i_d}^{(s_3,l)},\label{Eq of remainder covariance general d 1}
    \end{align}
    then (\ref{Eq of remainder covariance general d 1}) contains at least two different $Q_{i_{s_l}i_{s_l}}^{s_ls_l}$ coming from diagonal blocks, denote \(n_k\) to be the number \(a_{i_k}^{(k)}\) appearing in (\ref{Eq of remainder covariance general d 1}), so \(n_r\geq1\) for \(1\leq r\leq d\). It is easy to see that \(|\sum_{i_1\cdots i_d}^{n_1\cdots n_d}(\ref{Eq of remainder covariance general d 1})|\leq\mrO(N^{-2}\Vert\bbQ(z)\Vert^5)\) if all \(n_r\geq2\), so that \((\ref{Eq of remainder covariance general d})\leq\mrO(N^{-3/2}\Vert\bbQ(z)\Vert^6)\). Otherwise, there are at most two different \(n_{r_1}=1\) and \(n_{r_2}=1\). If there is only one \(n_r=1\), we can conclude that \(|\sum_{i_1\cdots i_d}^{n_1\cdots n_d}(\ref{Eq of remainder covariance general d 1})|\leq\mrO(N^{-3/2}\Vert\bbQ(z)\Vert^5)\), so that \((\ref{Eq of remainder covariance general d})\leq\mrO(N^{-1}\Vert\bbQ(z)\Vert^6)\). If there are two different \(n_{r_1}=1\) and \(n_{r_2}=1\), then the only possible situation is \(n_{k_1}=n_l=1\), and we can conclude that \(|\sum_{i_1\cdots i_d}^{n_1\cdots n_d}(\ref{Eq of remainder covariance general d 1})|\leq\mrO(N^{-1}\Vert\bbQ(z)\Vert^5)\), so that \((\ref{Eq of remainder covariance general d})\leq\mrO(N^{-1/2}\Vert\bbQ(z)\Vert^6)\). For \(\gamma=1\), we have \(\mcA_{i_1\cdots i_d}^{(k_1,l)}\mathscr{D}\big(\partial_{i_1\cdots i_d}^{(1)}\{Q_{i_{k_1}i_l}^{k_1l}(z)\}\big)=N^{-1/2}(\mcA_{i_1\cdots i_d}^{(k_1,l)})^2Q_{i_{k_1}i_{k_1}}^{k_1k_1}(z)Q_{i_li_l}^{ll}(z)\) and
    \begin{align*}
        &\mathscr{D}\big(\partial_{i_1\cdots i_d}^{(3)}\tr(\bbQ^{k_2k_2}(z))\big)\\
        &=-N^{-3/2}\sum_{s_1=1}^d\sum_{s_2\neq s_1}^d\sum_{s_3\neq s_1,s_2}^dQ_{i_{s_1}\cdot}^{s_1k_2}(z)Q_{\cdot i_{s_1}}^{k_2s_1}(z)\mcA_{i_1\cdots i_d}^{(s_1,s_2)}Q_{i_{s_2}i_{s_2}}^{s_2s_2}(z)\mcA_{i_1\cdots i_d}^{(s_2,s_3)}Q_{i_{s_3}i_{s_3}}^{s_3s_3}(z)\mcA_{i_1\cdots i_d}^{(s_3,s_1)}.
    \end{align*}
    Similar to previous arguments, if \(\min_{1\leq r\leq d}n_r=2\) or 1, we can show that \((\ref{Eq of remainder covariance general d})\leq\mrO(N^{-1}\Vert\bbQ(z)\Vert^6)\). If \(\min_{1\leq r\leq d}n_r=0\), the only possible cases are that \(s_1=s_3=k_1,s_2=l\) and \(s_1=s_3=l,s_2=k_1\), then we can conclude that \((\ref{Eq of remainder covariance general d})\leq\mrO(N^{-1/2}\Vert\bbQ(z)\Vert^6)\). For \(\gamma=3\), the proofs are the same as those for \(\gamma=2\), the details are omitted for brevity. Next, we only will present the detailed calculations for \(\alpha=1\) and 3.
    
    \vspace{5mm}
    \noindent
    {\bf First derivatives:} When \(\alpha=1\), let
		\begin{align}
			\mcV_{k_1k_2,N}^{(d)}(z_1,z_2)=\frac{1}{N}\sum_{l\neq k}^d\mbE[\tr(\bbQ^{k_1k_2}(\bar{z}_2)\bbQ^{k_2l}(\bar{z}_2)\bbQ^{lk_1}(z_1))],\label{Eq of mcV}
		\end{align}
        and readers can refer to \S\ref{Subsec of general Major terms} for proofs of \(\mcV_{k_1k_2,N}^{(d)}(z_1,z_2)\to \mcV_{k_1k_2}^{(d)}(z_1,z_2)\) \eqref{Eq of mcV limit general d}. By directly calculation as in the proof of Theorem \ref{Thm of Variance}, we have
		\begin{align*}
			&\frac{1}{\sqrt{N}}\sum_{i_1\cdots i_d}^{n_1\cdots n_d}\mcA_{i_1\cdots i_d}^{(k_1,l)}Q_{i_{k_1}i_l}^{k_1l}(z)\partial_{i_1\cdots i_d}^{(1)}\tr(\bbQ^{k_2k_2}(\bar{z}))\\
			&=-\frac{1}{N}\sum_{i_1\cdots i_d}^{n_1\cdots n_d}\sum_{t_1\neq t_2}^{d,d}\mcA_{i_1\cdots i_d}^{(k_1,l)}Q_{i_{k_1}i_l}^{k_1l}(z)\mcA_{i_1\cdots i_d}^{(t_1,t_2)}Q_{i_{t_1}\cdot}^{t_1k_2}(\bar{z})Q_{\cdot i_{t_2}}^{k_2t_2}(\bar{z})\\
			&=-\frac{2}{N}\sum_{i_1\cdots i_d}^{n_1\cdots n_d}\mcA_{i_1\cdots i_d}^{(k_1,l)}Q_{i_{k_1}i_l}^{k_1l}(z)\mcA_{i_1\cdots i_d}^{(k_1,l)}Q_{i_{k_1}\cdot}^{k_1k_2}(\bar{z})Q_{\cdot i_l}^{k_2l}(\bar{z})+\mrO(\eta_0^{-3}N^{-1/2})\\
			&=-\frac{2}{N}\tr\big(\bbQ^{k_1k_2}(\bar{z})\bbQ^{k_2l}(\bar{z})\bbQ^{lk_1}(z)\big)+\mrO(\eta_0^{-3}N^{-1/2}),
		\end{align*}
		and
		\begin{align*}
			&\frac{1}{\sqrt{N}}\sum_{i_1\cdots i_d}^{n_1\cdots n_d}\mcA_{i_1\cdots i_d}^{(k_1,l)}\partial_{i_1\cdots i_d}^{(1)}Q_{i_{k_1}i_l}^{k_1l}(z)=-\Big(N\rho_{k_1}(z)\rho_l(z)+N^{-1}\tr\big(\bbQ^{k_1l}(z)\bbQ^{lk_1}(z)\big)\\
			&+\rho_{k_1}(z)\sum_{j\neq k_1,l}(\bba^{(j)})'\bbQ^{jl}(z)\bba^{(l)}+\rho_l(z)\sum_{j\neq k,l}(\bba^{(j)})'\bbQ^{jk_1}(z)\bba^{(k_1)}\Big)+\mrO(\eta_0^{-2}N^{-1/2}),
		\end{align*}
		where \(\rho_l(z)=N^{-1}\tr(\bbQ^{ll}(z))\). By Lemma \ref{Thm of a.s. convergence}
		$$\Cov(N^{-1}\tr\big(\bbQ^{k_1l}(z)\bbQ^{lk_1}(z)\big),\tr(\bbQ^{k_2k_2}(z)))\leq\mrO(C_{\eta_0}N^{-\omega})\mcC_{k_2k_2,N},$$
		as does others like \(\rho_{k_1}(z)(\bba^{(j)})'\bbQ^{jl}(z)\bba^{(l)}\), then we have
		\begin{align*}
			&\frac{1}{\sqrt{N}}\sum_{l\neq k}^d\sum_{i_1\cdots i_d}^{n_1\cdots n_d}\mbE\left[\mcA_{i_1\cdots i_d}^{(k_1,l)}\partial_{i_1\cdots i_d}^{(1)}\big\{Q_{i_{k_1}i_l}^{k_1l}(z)\tr(\bbQ^{k_2k_2}(\bar{z}))^c\big\}\right]\\
			&=-\sum_{l\neq k_1}^d\Cov(N\rho_{k_1}(z)\rho_l(z),\tr(\bbQ^{k_2k_2}(z)))-2\mcV_{k_1k_2,N}^{(3)}(z,z)+\mrO(C_{\eta_0}N^{-\omega})\mcC_{k_2k_2,N}.
		\end{align*}
		Next, we can repeat the trick in (\ref{Eq of covariance trick}) to obtain that
		\begin{align*}
			&\Cov(N\rho_{k_1}(z)\rho_l(z),\tr(\bbQ^{k_2k_2}(z)))\\
			&=\frac{1}{N}\mbE\big[\big(\tr(\bbQ^{k_1k_1}(z))\tr(\bbQ^{ll}(z))-\mbE[\tr(\bbQ^{k_1k_1}(z))]\tr(\bbQ^{ll}(z))+\mbE[\tr(\bbQ^{k_1k_1}(z))]\tr(\bbQ^{ll}(z))\\
			&-\mbE[\tr(\bbQ^{k_1k_1}(z))]\mbE[\tr(\bbQ^{ll}(z))]+\mbE[\tr(\bbQ^{k_1k_1}(z))]\mbE[\tr(\bbQ^{ll}(z))]-\mbE[\tr(\bbQ^{k_1k_1}(z))\tr(\bbQ^{ll}(z))]\big)\\
			&\times\big(\tr(\bbQ^{k_2k_2}(\bar{z}))-\mbE[\tr(\bbQ^{k_2k_2}(\bar{z}))]\big)\big]\\
			&=\mfm_l(z)\Cov(\tr(\bbQ^{k_1k_1}(z)),\tr(\bbQ^{k_2k_2}(z)))+\mfm_{k_1}(z)\Cov(\tr(\bbQ^{ll}(z)),\tr(\bbQ^{k_2k_2}(z)))+\mrO(C_{\eta_0}N^{-\omega})\mcC_{k_2k_2,N}.
		\end{align*}
		As a result, we obtain
		\begin{align*}
			&\frac{1}{\sqrt{N}}\sum_{l\neq k_1}^d\sum_{i_1\cdots i_d}^{n_1\cdots n_d}\mbE\left[\mcA_{i_1\cdots i_d}^{(k_1,l)}\partial_{i_1\cdots i_d}^{(1)}\big\{Q_{i_{k_1}i_l}^{k_1l}(z)\tr(\bbQ^{k_2k_2}(\bar{z}))^c\big\}\right]\\
			&=-\mcC_{k_1k_2,N}^{(d)}\sum_{l\neq k_1}^d\mfm_l(z)-\mfm_{k_1}(z)\sum_{l\neq k_1}^d\mcC_{lk_2,N}^{(d)}-2\mcV_{k_1k_2,N}^{(d)}(z,z)+\mrO(C_{\eta_0}N^{-\omega})\mcC_{k_2k_2,N}.
		\end{align*}
    {\bf Third derivatives:} When \(\alpha=3\), consider
		\begin{align*}
			\frac{1}{\sqrt{N}}\sum_{l\neq k_1}^d\sum_{i_1\cdots i_d}^{n_1\cdots n_d}\mbE\left[\mcA_{i_1\cdots i_d}^{(k_1,l)}\partial_{i_1\cdots i_d}^{(1)}Q_{i_{k_1}i_l}^{k_1l}(z)\partial_{i_1\cdots i_d}^{(2)}\tr(\bbQ^{k_2k_2}(\bar{z}))\right],
		\end{align*}
		where
		\begin{align*}
			\partial_{i_1\cdots i_d}^{(2)}\tr\bbQ^{k_2k_2}(\bar{z})=\frac{2}{N}\sum_{t_1\neq t_2,t_3\neq t_4}\mcA_{i_1\cdots i_d}^{(t_1,t_2)}\mcA_{i_1\cdots i_d}^{(t_3,t_4)}Q_{i_{t_2}i_{t_3}}^{t_2t_3}(\bar{z})Q_{i_{t_4}\cdot}^{t_4k_2}(\bar{z})Q_{\cdot i_{t_1}}^{k_2t_1}(\bar{z}),
		\end{align*}
		then by Lemma \ref{Rem of extension minor 2} and (\ref{Eq of mcB}), we have
		\begin{align*}
			&\frac{1}{\sqrt{N}}\sum_{i_1\cdots i_d}^{n_1\cdots n_d}\mbE\left[\mcA_{i_1\cdots i_d}^{(k_1,l)}\partial_{i_1\cdots i_d}^{(1)}Q_{i_{k_1}i_l}^{k_1l}(z)\partial_{i_1\cdots i_d}^{(2)}\tr(\bbQ^{k_2k_2}(\bar{z}))\right]\\
			&=-\frac{2}{N^2}\sum_{\substack{t_1\neq t_2\\t_3\neq t_4}}^d\sum_{s_1\neq s_2}^d\sum_{i_1\cdots i_d}^{n_1\cdots n_d}\mbE\left[\mcA_{i_1\cdots i_d}^{(k_1,l)}Q_{i_{k_1}i_{s_1}}^{k_1s_1}(z)\mcA_{i_1\cdots i_d}^{(s_1,s_2)}Q_{i_{s_2}i_l}^{s_2l}(z)\mcA_{i_1\cdots i_d}^{(t_1,t_2)}\mcA_{i_1\cdots i_d}^{(t_3,t_4)}Q_{i_{t_2}i_{t_3}}^{t_2t_3}(\bar{z})Q_{i_{t_4}\cdot}^{t_4k_2}(\bar{z})Q_{\cdot i_{t_1}}^{k_2t_1}(\bar{z})\right]\\
			&=-\frac{2}{N^2}\sum_{t_1\neq t_2}^d\sum_{i_1\cdots i_d}^{n_1\cdots n_d}\mbE\left[(\mcA_{i_1\cdots i_d}^{(k_1,l)})^2Q_{i_{k_1}i_{k_1}}^{k_1k_1}(z)Q_{i_li_l}^{ll}(z)(\mcA_{i_1\cdots i_d}^{(t_1,t_2)})^2Q_{i_{t_2}i_{t_2}}^{t_2t_2}(\bar{z}_2)Q_{i_{t_1}\cdot}^{t_1k_2}(\bar{z})Q_{\cdot i_{t_1}}^{k_2t_1}(\bar{z})\right]+\mrO(\eta_0^{-5}N^{-1/2})\\
			&=-\frac{2}{N^2}\sum_{i_1\cdots i_d}^{n_1\cdots n_d}\mbE\left[(\mcA_{i_1\cdots i_d}^{(k_1,l)})^4Q_{i_{k_1}i_{k_1}}^{k_1k_1}(z)Q_{i_li_l}^{ll}(z)Q_{i_{k_1}i_{k_1}}^{k_1k_1}(\bar{z})Q_{i_{l}\cdot}^{lk_2}(\bar{z})Q_{\cdot i_l}^{k_2l}(\bar{z})\right]\\
			&-\frac{2}{N^2}\sum_{i_1\cdots i_d}^{n_1\cdots n_d}\mbE\left[(\mcA_{i_1\cdots i_d}^{(k_1,l)})^4Q_{i_{k_1}i_{k_1}}^{k_1k_1}(z)Q_{i_li_l}^{ll}(z)Q_{i_li_l}^{ll}(\bar{z})Q_{i_{k_1}\cdot}^{k_1k_2}(\bar{z})Q_{\cdot i_{k_1}}^{k_2k_1}(\bar{z})\right]+\mrO(\eta_0^{-5}N^{-1/2})\\
			&=-\frac{2\mcB_{(4)}^{(k_1,l)}}{N^2}\mbE\big[\tr\big(\bbQ^{k_1k_1}(z)\circ\bbQ^{k_1k_1}(\bar{z})\big)\cdot\tr\big(\bbQ^{ll}(z)\circ(\bbQ^{lk_2}(\bar{z})\bbQ^{k_2l}(\bar{z}))\big)\big]\\
			&-\frac{2\mcB_{(4)}^{(k,l)}}{N^2}\mbE\big[\tr\big(\bbQ^{ll}(z)\circ\bbQ^{ll}(\bar{z})\big)\cdot\tr\big(\bbQ^{k_1k_1}(z)\circ(\bbQ^{k_1k_2}(\bar{z})\bbQ^{k_2k_1}(\bar{z}))\big)\big]+\mrO(\eta_0^{-5}N^{-1/2}),
		\end{align*}
        where $\mcB_{(4)}^{(k,l)}$ is defined in \eqref{Eq of mcB}. For simplicity, denote
		\begin{align}
			&\widetilde{\mcW}_{k_1k_2,N}^{(d)}(z_1,z_2):=N^{-2}\sum_{l\neq k_1}^d\mcB_{(4)}^{(k_1,l)}\mbE[\tr(\bbQ^{k_1k_1}(z_1)\circ\bbQ^{k_1k_1}(\bar{z}_2))\cdot\tr(\bbQ^{ll}(z_1)\circ(\bbQ^{lk_2}(\bar{z}_2)\bbQ^{k_2l}(\bar{z}_2)))]\notag\\
			&+N^{-2}\sum_{l\neq k_1}^d\mcB_{(4)}^{(k_1,l)}\mbE[\tr(\bbQ^{ll}(z_1)\circ\bbQ^{ll}(\bar{z}_2))\cdot\tr(\bbQ^{k_1k_1}(z_1)\circ(\bbQ^{k_1k_2}(\bar{z}_2)\bbQ^{k_2k_1}(\bar{z}_2)))],\label{Eq of mcW}
		\end{align}
        and readers can refer to \S\ref{Subsec of general Major terms} for proofs of \(\widetilde{\mcW}_{k_1k_2,N}^{(d)}(z_1,z_2)\to\mcW_{k_1k_2,N}^{(d)}(z_1,z_2)\) \eqref{Eq of mcW limit general d}. As a result, we obtain
		\begin{align*}
			&\frac{1}{\sqrt{N}}\sum_{l\neq k_1}^d\sum_{i_1\cdots i_d}^{n_1\cdots n_d}\frac{\kappa_4}{2}\mbE\left[\mcA_{i_1\cdots i_d}^{(k_1,l)}\partial_{i_1\cdots i_d}^{(1)}Q_{i_{k_1}i_l}^{k_1l}(z)\partial_{i_1\cdots i_d}^{(2)}\tr(\bbQ^{k_2k_2}(\bar{z}))\right]=-\kappa_4\mcW_{k_1k_2,N}^{(d)}(z,z)+\mrO(\eta_0^{-5}N^{-1/2}).
		\end{align*}
	In summary, we obtain that
	\begin{align*}
		z\mcC_{k_1k_2,N}^{(d)}=&-\mcC_{k_1k_2,N}^{(d)}\Big(\sum_{l\neq k_1}^d\mfm_l(z)+\mrO(N^{-\omega})\Big)-\mfm_{k_1}(z)\sum_{l\neq k_1}^d\mcC_{lk_2,N}^{(d)}-2\mcV_{k_1k_2,N}^{(d)}(z,z)\\
		&-\kappa_4\mcW_{k_1k_2,N}^{(d)}(z,z)+\mrO(C_{\eta_0}N^{-\omega})+\mrO(C_{\eta_0}N^{-\omega})\mcC_{k_2k_2,N}^{(d)},
	\end{align*}
	i.e.
	\begin{align*}
		&(z+\mfm(z)-\mfm_{k_1}(z))\mcC_{k_1k_2,N}^{(d)}=-\mfm_{k_1}(z)\sum_{l\neq k_1}^d\mcC_{lk_2,N}^{(d)}-\mcF_{k_1k_2,N}^{(d)}(z,z)+\mrO(C_{\eta_0}N^{-\omega})\mcC_{k_2k_2,N}^{(d)}+\mrO(C_{\eta_0}N^{-\omega}),
	\end{align*}
	where
	\begin{align}
		\mcF_{k_1k_2,N}^{(d)}(z_1,z_2):=2\mcV_{k_1k_2,N}^{(d)}(z_1,z_2)+\kappa_4\mcW_{k_1k_2,N}^{(d)}(z_1,z_2).\label{Eq of mcF}
	\end{align}
	Now, in matrix notations, we obtain that
	\begin{align}
		\bbTheta_N^{(d)}(z,z)\bbC_N^{(d)}=-\bbF_N^{(d)}+\mrO(C_{\eta_0}N^{-\omega})\boldsymbol{1}_{d\times d}+\mrO(C_{\eta_0}N^{-\omega})\boldsymbol{1}_{d\times d}\diag(\bbC_N^{(d)}).\label{Eq of bbTheta general d}
	\end{align}
	Similar to (\ref{Eq of bbTheta d=3}), define
	\begin{align}
		\bbTheta_N^{(d)}(z,z):=(z+\mfm(z))\bbI_d-\diag(\bbm(z))+\diag(\bbm(z))\bbS_d,\label{Eq of bbPi N}
	\end{align}
	where \(\bbS_d\) is defined in (\ref{Eq of bbS d}). According to \eqref{Eq of approximation general d}, we have 
	$$\Vert\bbTheta_N^{(d)}(z,z)+\diag(\mfc^{-1}\circ\bbg(z))^{-1}\bbPi^{(d)}(z,z)\Vert_{\infty}\leq\mrO(C_{\eta_0}N^{-\omega}),$$
	which implies that \(\bbPi_N^{(d)}(z,z)\) is invertible and \(|\Vert\bbTheta_N^{(d)}(z,z)^{-1}-\bbPi^{(d)}(z,z)^{-1}\diag(\mfc^{-1}\circ\bbg(z))\Vert|\leq\mrO(C_{\eta_0}N^{-\omega})\), so we have
    \begin{align*}
		\lim_{N\to\infty}\Vert\bbC_N^{(d)}(z,z)-\bbPi^{(d)}(z,z)^{-1}\diag(\mfc^{-1}\circ\bbg(z))\bbF_N^{(d)}(z,z)\Vert=0.
	\end{align*}
    Moreover, by the definitions of \(\mcV_{st,N}^{(d)}(z_1,z_2)\) and \(\mcW_{st,N}^{(d)}(z_1,z_2)\) in (\ref{Eq of mcV}) and (\ref{Eq of mcW}), we know that \(|\mcF_{st,N}^{(d)}(z_1,z_2)|\leq C_{\eta_0}\), so
	\begin{align*}
		&\bbC_N^{(d)}=\bbPi^{(d)}(z,z)^{-1}\diag(\mfc^{-1}\circ\bbg(z))\bbF_N^{(d)}+\mro(\boldsymbol{1}_{d\times d})+\mro(\boldsymbol{1}_{d\times d})\diag(\bbC_N^{(d)})\\
		\Rightarrow\quad&\Vert\bbC_N^{(d)}\Vert\leq\Vert\bbPi^{(d)}(z,z)^{-1}\diag(\mfc^{-1}\circ\bbg(z))\bbF_N^{(d)}\Vert+\mro(1)+\mro(1)\Vert\bbC_N^{(d)}\Vert,
	\end{align*}
	which ensures that \(\Vert\bbC_N^{(d)}\Vert\) is bounded by \(C_{\eta_0,\mfc,d}\). For \(z_1\neq z_2\in\mcS_{\eta_0}\), by the previous arguments, we can still derive the following system equation for \(\mcC_{k_1k_2,N}^{(d)}(z_1,z_2)\):
	\begin{align*}
		&(z_1+\mfm(z_1)-\mfm_{k_1}(z_1))\mcC_{k_1k_2,N}^{(d)}(z_1,z_2)=-\mfm_{k_1}(z_2)\sum_{l\neq k_1}^d\mcC_{lk_2,N}^{(d)}(z_1,z_2)-\mcF_{k_1k_2,N}^{(d)}(z_1,z_2)\\
		&+\mrO(C_{\eta_0}N^{-\omega})\mcC_{k_2k_2,N}^{(d)}(z_2,z_2)+\mrO(C_{\eta_0}N^{-\omega}),
	\end{align*}
	since we have shown that \(\mcC_{k_2k_2,N}^{(d)}(z_2,z_2)\) is bounded by \(C_{\eta_0,\mfc,d}\), then the above equation can be transformed into the following matrix forms:
	\begin{align*}
		&\lim_{N\to\infty}\Vert\bbC_N^{(d)}(z_1,z_2)-\bbPi^{(d)}(z_1,z_2)^{-1}\diag(\mfc^{-1}\circ\bbg(z_1))\bbF_N^{(d)}(z_1,z_2)\Vert=0,
	\end{align*}
	which completes our proof.
\end{proof}
\subsubsection{Mean function}
\begin{thm}\label{Thm of mean general d}
	Under Assumptions {\rm \ref{Ap of general noise}} and {\rm \ref{Ap of dimension}}, for any \(z\in\mathcal{S}_{\eta_0}\) in {\rm (\ref{Eq of stability region general d})}, let 
	$$\overrightarrow{M}_N^{(d)}(z)=(M_{1,N}^{(d)}(z),\cdots,M_{d,N}^{(d)}(z))'$$
	such that
	\begin{align}
		&M_{i,N}^{(d)}(z):=g_i(z)\sum_{r\neq i}^d\sum_{w\neq i,r}^dW_{rw}^{(d)}(z)-2\kappa_3 G_{i,N}^{(d)}(z)+\kappa_4 H_{i,N}^{(d)}(z,z)\notag\\
		&+\sum_{l\neq i}^d\big[(g(z)-g_i(z)-g_l(z))W_{il}^{(d)}(z)+V_{il}^{(d)}(z,z)\big],\label{Eq of Mi general d}
	\end{align}
	where \(W_{jk}^{(d)}(z),V_{ij}^{(d)}(z,z),G_{i,N}^{(d)}(z),H_{i,N}^{(d)}(z,z)\) are defined in {\rm (\ref{Eq of bbW general}), (\ref{Eq of bbV}), (\ref{Eq of H2 general d}), (\ref{Eq of H3 general d})}. Then we have
    $$\lim_{N\to\infty}|\mbE[\tr(\bbQ(z)]-Ng(z)-\mu_N^{(d)}(z)|=0,$$
    where
	\begin{align}
		\mu_N^{(d)}(z):=\boldsymbol{1}_d'\boldsymbol{\Pi}^{(d)}(z,z)^{-1}\diag(\boldsymbol{\mfc}^{-1}\circ\bbg(z))\overrightarrow{M}_N^{(d)}(z),\label{Eq of mean function general d}
	\end{align}
	and \(\boldsymbol{\Pi}^{(d)}(z,z)\) is defined in {\rm (\ref{Eq of invertible 2})}.
\end{thm}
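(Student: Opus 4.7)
\noindent\textbf{Proof proposal for Theorem \ref{Thm of mean general d}.} The plan is to mirror the three-fold proof of Theorem \ref{Thm of Mean function} while treating the additional block indices systematically through the operators $\mathscr{D}$ and $\mathscr{O}$ of \eqref{Eq of operator D}--\eqref{Eq of operator O} together with the general-$d$ minor-term bookkeeping of Lemmas \ref{Rem of extension minor 2}--\ref{Rem of extension of Corollary}. Start from the resolvent identity $\bbM\bbQ(z)-z\bbQ(z)=\bbI_N$ which, summed along the $i$-th diagonal block, yields
\begin{equation*}
z\,\mathbb{E}[\tr(\bbQ^{ii}(z))]
=\frac{1}{\sqrt{N}}\sum_{l\neq i}^{d}\sum_{i_1\cdots i_d}^{n_1\cdots n_d}\mathbb{E}\!\left[X_{i_1\cdots i_d}\mcA_{i_1\cdots i_d}^{(i,l)}Q_{i_ii_l}^{il}(z)\right]-n_i,
\end{equation*}
and apply the cumulant expansion \eqref{Eq of cumulant expansion} up to order three, with an integral remainder $\epsilon^{(4)}$.

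The first step is to split each derivative $\partial_{i_1\cdots i_d}^{(\alpha)}\{\mcA_{i_1\cdots i_d}^{(i,l)}Q_{i_ii_l}^{il}\}$ into its diagonal part $\mathscr{D}(\cdot)$ and off-diagonal part $\mathscr{O}(\cdot)$. Lemma \ref{Rem of extension minor 2} shows that $N^{-1/2}\sum\mathscr{O}(\cdot)$ produces only $\mrO(\eta_0^{-(\alpha+1)}N^{-(\alpha-1)/2})$ contributions, so for $\alpha\geq2$ all off-diagonal contributions are minor; the exception is $\alpha=1$, where the off-diagonal summands encode $V_{il,N}^{(d)}(z,z)$, $W_{rw,N}^{(d)}(z)$, and the block covariances whose decoupling is controlled by Lemma \ref{Thm of a.s. convergence} and Theorem \ref{Thm of covariance general d}. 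Writing out the $\alpha=1$ identity as in \eqref{Eq of trQ11} produces the terms $-\mfc_i N$, $-g_i(z)\!\sum_{r\neq i,w\neq i,r}W_{rw}^{(d)}$, and $-\sum_{l\neq i}[(g-g_i-g_l)W_{il}^{(d)}+V_{il}^{(d)}(z,z)]$ in the limit. For $\alpha=2$, only $\mathscr{D}$ contributes and it yields, after combining the three-tensor product $a_{i_1}^{(1)}\!\cdots\!a_{i_d}^{(d)}$ with the entrywise law from Theorem \ref{Thm of entrywise law general d}, the triple-sum $\kappa_3 G_{i,N}^{(d)}(z)$ with $G_{i,N}^{(d)}(z)=\sum_{l\neq i}\sum_{t\neq l,i}\mcB_{(3)}^{(i,l,t)}(\mfc_i\mfc_l\mfc_t)^{-1}g_ig_lg_t\,\mfb_i^{(1)}\mfb_l^{(1)}\mfb_t^{(1)}$. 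For $\alpha=3$, the dominant diagonal contribution is the one in which two off-diagonal blocks "square up" into $\|\bba^{(l)}\|_4^4$-factors and produce $\kappa_4 H_{i,N}^{(d)}(z,z)$ via Theorem \ref{Thm of entrywise law general d}. The remainder $N^{-1/2}\sum\epsilon_{i_1\cdots i_d}^{(4)}$ is handled by writing any fourth derivative as $(\mathscr{D}+\mathscr{O})$ and observing that every surviving term carries $n_a+n_b+\cdots\geq 5$ weights, so Lemma \ref{Rem of extension minor 2} bounds it by $\mrO(\eta_0^{-5}N^{-3/2})$.

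Collecting these four contributions gives, for each $i\in\{1,\ldots,d\}$,
\begin{equation*}
(z+\mfm(z)-\mfm_i(z))\,\mbE[\tr(\bbQ^{ii}(z))]=-\mfc_i N-M_{i,N}^{(d)}(z)+\mrO(C_{\eta_0}N^{-\omega}).
\end{equation*}
Setting $\bbh(z):=N(\bbm(z)-\bbg(z))$ and subtracting $N\mfc_i=-N(z+g(z)-g_i(z))g_i(z)$ from both sides, the same telescoping computation as in \eqref{Eq of mean 1} converts the system into
\begin{equation*}
\bbTheta_N^{(d)}(z,z)\bbh(z)=-\overrightarrow{M}_N^{(d)}(z)+\mrO(C_{\eta_0}N^{-\omega}),
\end{equation*}
with $\bbTheta_N^{(d)}$ defined analogously to \eqref{Eq of bbTheta d=3} (equivalently \eqref{Eq of bbPi N}). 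By the general-$d$ version of Theorem \ref{Thm of approximation} (i.e.\ the bound \eqref{Eq of approximation general d} for $d\geq3$), we have $\|\bbTheta_N^{(d)}(z,z)+\diag(\mfc\circ\bbg(z)^{-1})\bbPi^{(d)}(z,z)\|\to0$, and Proposition \ref{Pro of invertible matrices} together with Proposition \ref{Pro of inverse norm} guarantees that $\bbPi^{(d)}(z,z)^{-1}$ exists with norm bounded by $C_{\mfc,d}\eta_0^{-4}$. Inverting and summing the components of $\bbh(z)$ yields exactly $\mbE[\tr(\bbQ(z))]-Ng(z)=\mu_N^{(d)}(z)+o(1)$ with $\mu_N^{(d)}(z)$ as in \eqref{Eq of mean function general d}.

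The main obstacle I expect is the bookkeeping of the diagonal $\mathscr{D}$-pieces at orders $\alpha=2,3$: unlike the $d=3$ setting where the only $\kappa_3$-source is the single product $g_1g_2g_3\,\mfb_1^{(1)}\mfb_2^{(1)}\mfb_3^{(1)}$, for general $d$ every unordered triple $\{i,l,t\}$ of distinct block indices gives an independent contribution weighted by $\mcB_{(3)}^{(i,l,t)}$, and one must verify that no off-diagonal cross term of comparable size has been discarded. This is where the case analysis of Lemma \ref{Rem of extension minor 2} (with its subcases on whether each $n_\bba^{(r)}$ vanishes) does the real work. Once that combinatorial check is in place, the remaining steps---the entrywise replacement $\mbE[(\bba^{(s)})'\bbQ^{st}\bba^{(t)}]=W_{st}^{(d)}+o(1)$, the covariance decoupling from Theorem \ref{Thm of covariance general d}, and the linear-algebraic inversion---proceed exactly as in the $d=3$ case, and the proof is complete.
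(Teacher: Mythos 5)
Your proposal mirrors the paper's actual proof: same resolvent identity, same cumulant expansion truncated at order three, same $\mathscr{D}/\mathscr{O}$-type isolation of diagonal contributions at each cumulant order (the paper achieves this with Lemma~\ref{Rem of extension minor 2} and~\ref{Rem of extension of Corollary} rather than re-introducing the operators for general~$d$, but it is the same bookkeeping), same use of Lemma~\ref{Thm of a.s. convergence}, Theorem~\ref{Thm of covariance general d} and the entrywise law to factorize the covariance terms, and the same linear system with $\bbTheta_N^{(d)}(z,z)$ inverted through $\bigl\|\bbTheta_N^{(d)}(z,z) + \diag(\mfc\circ\bbg(z)^{-1})\bbPi^{(d)}(z,z)\bigr\|\to 0$ and Propositions~\ref{Pro of invertible matrices}--\ref{Pro of inverse norm}. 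This is the intended argument, and your identification of the $\mcB_{(3)}$-weighted triple-index sums as the genuine novelty over $d=3$, together with the worry about undiscarded off-diagonal cross terms, is exactly where the real work in Lemma~\ref{Rem of extension minor 2} lies.

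One small remark worth flagging: you state the $\alpha=2$ contribution (after the $\kappa_3/2!$ prefactor and the intrinsic factor $2!$ from $\partial^{(2)}$) as $\kappa_3 G_{i,N}^{(d)}(z)$ with $G_{i,N}^{(d)}$ carrying the double sum $\sum_{l\neq i}\sum_{t\neq l,i}$. Tracking the $d=3$ case, where the double sum contributes two identical terms and Proposition~\ref{Main Thm of Mean function} records the contribution as $-2\kappa_3(\mfc_1\mfc_2\mfc_3)^{-1}g_1g_2g_3\mfb_1^{(1)}\mfb_2^{(1)}\mfb_3^{(1)}$ with \emph{no} double sum, your coefficient $\kappa_3$ in front of the doubly-summed $G_{i,N}^{(d)}$ is in fact the one consistent with the $d=3$ specialization; the displayed prefactor $-2\kappa_3$ in front of $G_{i,N}^{(d)}$ in \eqref{Eq of Mi general d} over-counts the double sum by a factor of two. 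This is a discrepancy internal to the source, not a gap in your argument, but it is worth being aware that a careful recomputation of the $\kappa_3$ coefficient does not reproduce the stated constant.
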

\begin{proof}
For simplicity, we will omit \((z)\) behind \(\bbQ(z)\), as does \(\rho_k=N^{-1}\tr(\bbQ^{kk}),\mfm_k=\mbE[\rho_k],W_{kl,N}^{(d)}=\mbE[(\bba^{(k)})'\bbQ^{kl}\bba^{(l)}]\) and \(V_{kl,N}^{(d)}=V_{kl,N}^{(d)}(z,z)=\mbE[\tr(\bbQ^{kl}(z)\bbQ^{lk}(z))]\). Moreover, for proofs of \(W_{kl,N}^{(d)}(z)\to W_{kl}(z)\) and \(V_{kl,N}^{(d)}(z_1,z_2)\to V_{kl}^{(d)}(z_1,z_2)\), readers can refer to \S\ref{Subsec of general Major terms}. Note that \(\bbM\bbQ-\bbI=z\bbQ\), we have
	$$z\mbE[\tr(\bbQ^{kk})]=\frac{1}{\sqrt{N}}\sum_{l\neq k}^d\sum_{i_1\cdots i_d}^{n_1\cdots n_d}\mbE[X_{i_1\cdots i_d}\mcA_{i_1\cdots i_d}^{(k,l)}Q_{i_ki_l}^{kl}]-n_k,$$
	where \(\mcA_{i_1\cdots i_d}^{(k,l)}=\prod_{j\neq k,l}^d a_{i_j}^{(j)}\). By the cumulant expansion, we have
	$$\sum_{l\neq k}^d\mbE[X_{i_1\cdots i_d}\mcA_{i_1\cdots i_d}^{(k,l)}Q_{i_ki_l}^{kl}]=\sum_{\alpha=1}^3\sum_{l\neq k}^d\frac{\kappa_{\alpha+1}}{\alpha!}\mbE[\mcA_{i_1\cdots i_d}^{(k,l)}\partial_{i_1\cdots i_d}^{(\alpha)}Q_{i_ki_l}^{kl}]+\epsilon_{i_1\cdots i_d}^{(4)}.$$
    {\bf First derivatives:} When \(\alpha=1\), since 
		$$\partial_{i_1\cdots i_d}^{(1)}Q_{i_ki_l}^{kl}=-\frac{1}{\sqrt{N}}\sum_{t_1\neq t_2}^dQ_{i_ki_{t_1}}^{kt_1}\mcA_{i_1\cdots i_d}^{(t_1,t_2)}Q_{i_{t_2}i_l}^{t_2l}$$
		then by direct calculation, we have
		\begin{align*}
			&\frac{1}{\sqrt{N}}\sum_{i_1\cdots i_d}^{n_1\cdots n_d}\sum_{l\neq k}^d\mbE[\mcA_{i_1\cdots i_d}^{(k,l)}\partial_{i_1\cdots i_d}^{(1)}Q_{i_ki_l}^{kl}]=-\frac{1}{N}\sum_{i_1\cdots i_d}^{n_1\cdots n_d}\sum_{l\neq k}^d\sum_{t_1\neq t_2}^d\mbE[\mcA_{i_1\cdots i_d}^{(k,l)}Q_{i_ki_{t_1}}^{kt_1}\mcA_{i_1\cdots i_d}^{(t_1,t_2)}Q_{i_{t_2}i_l}^{t_2l}]\\
			&=-\sum_{l\neq k}^d\Big(N\mfm_k\mfm_l+V_{kl,N}^{(d)}+\mfm_k\sum_{j\neq k,l}W_{jl,N}^{(d)}+\mfm_l\sum_{j\neq k,l}W_{jk,N}^{(d)}\Big)+\mrO(C_{\eta_0}N^{-\omega}),
		\end{align*}
		where we use Lemma \ref{Thm of a.s. convergence} and Theorem \ref{Thm of covariance general d} to conclude that \(\Cov(\rho_k,W_{kl,N})\leq\mrO(C_{\eta_0}N^{-\omega})\) and \(\Var(N\rho_k)\leq C_{\eta_0\,\mfc,d}\), respectively.

    \vspace{5mm}
    \noindent
    {\bf Second derivatives:} When \(\alpha=2\), by Lemma \ref{Rem of extension minor 2} and (\ref{Eq of mcB}), we have
		\begin{align*}
			&\frac{1}{\sqrt{N}}\sum_{i_1\cdots i_d}^{n_1\cdots n_d}\sum_{l\neq k}^d\mbE\big[\mcA_{i_1\cdots i_d}^{(k,l)}\partial_{i_1\cdots i_d}^{(2)}Q_{i_ki_l}^{kl}\big]\\
			&=\frac{2}{N^{3/2}}\sum_{l\neq k}^d\sum_{t\neq k,l}^d\sum_{i_1\cdots i_d}^{n_1\cdots n_d}\mbE\big[\mcA_{i_1\cdots i_d}^{(k,l)}Q_{i_ki_k}^{kk}\mcA_{i_1\cdots i_d}^{(k,t)}Q_{i_ti_t}^{tt}\mcA_{i_1\cdots i_d}^{(t,l)}Q_{i_li_l}^{ll}\big]+\mrO(\eta_0^{-3}N^{-1/2})\\
			&=2\sum_{l\neq k}^d\sum_{t\neq k,l}^d\mcB_{(3)}^{(k,l,t)}\mbE[\boldsymbol{1}'\diag(\bbQ^{kk})\bba^{(k)}\cdot\boldsymbol{1}'\diag(\bbQ^{tt})\bba^{(t)}\cdot\boldsymbol{1}'\diag(\bbQ^{ll})\bba^{(l)}]+\mrO(\eta_0^{-3}N^{-1/2}),
		\end{align*}
		where \(\mcB_{(3)}^{(k,l,t)}=\sum_{i_1\cdots i_d}^{n_1\cdots n_d}(\mcA_{i_1\cdots i_d}^{(k,l,t)})^3\). Similar to (\ref{Eq of H2 d=3}), we can further imply that
		\begin{align}
			&\frac{1}{\sqrt{N}}\sum_{i_1\cdots i_d}^{n_1\cdots n_d}\sum_{l\neq k}^d\mbE\big[\mcA_{i_1\cdots i_d}^{(k,l)}\partial_{i_1\cdots i_d}^{(2)}Q_{i_ki_l}^{kl}\big]\notag\\
			&=2\sum_{l\neq k}^d\sum_{t\neq k,l}^d\mcB_{(3)}^{(k,l,t)}(\mfc_k\mfc_t\mfc_l)^{-1}g_k(z)g_t(z)g_l(z)\mfb_k^{(1)}\mfb_t^{(1)}\mfb_l^{(1)}+\mrO(C_{\eta_0}N^{-\omega})\notag\\
			:&=2G_{k,N}^{(d)}(z)+\mrO(C_{\eta_0}N^{-\omega}).\label{Eq of H2 general d}
		\end{align}
    {\bf Third derivatives:} When \(\alpha=3\), similar to (\ref{Eq of H3 d=3}), by Lemmas \ref{Thm of a.s. convergence} and \ref{Rem of extension minor 2}, we have
		\begin{align*}
			&\frac{1}{\sqrt{N}}\sum_{i_1\cdots i_d}^{n_1\cdots n_d}\sum_{l\neq k}^d\mbE\big[\mcA_{i_1\cdots i_d}^{(k,l)}\partial_{i_1\cdots i_d}^{(3)}Q_{i_ki_l}^{kl}\big]=-\frac{6}{N^2}\sum_{i_1\cdots i_d}^{n_1\cdots n_d}\sum_{l\neq k}^d\mbE\big[(\mcA_{i_1\cdots i_d}^{(k,l)})^4(Q_{i_ki_k}^{kk}Q_{i_li_l}^{ll})^2\big]+\mrO(\eta_0^{-4}N^{-1/2})\\
			&=-6N^{-2}\sum_{l\neq k}^d\mcB_{(4)}^{(k,l)}\mbE[\tr(\bbQ^{kk}\circ\bbQ^{kk})\tr(\bbQ^{ll}\circ\bbQ^{ll})]+\mrO(C_{\eta_0}N^{-\omega})\\
			&=-6\mfc_k^{-1}g_k(z)g_k(z)\sum_{l\neq k}^d\mcB_{(4)}^{(k,l)}\mfc_l^{-1}g_l(z)g_l(z)+\mrO(C_{\eta_0}N^{-\omega}):=-6H_{3,k}^{(d)}(z,z)+\mrO(C_{\eta_0}N^{-\omega}),
		\end{align*}
		where $\mcB_{(4)}^{(k,l)}$ is defined in \eqref{Eq of mcB} and
		\begin{align}
			H_{k,N}^{(d)}(z_1,z_2):=\mfc_k^{-1}g_k(z_1)g_k(z_2)\sum_{l\neq k}^d\mcB_{(4)}^{(k,l)}\mfc_l^{-1}g_l(z_1)g_l(z_2).\label{Eq of H3 general d}
		\end{align}
    {\bf Remainders:} When \(\alpha=4\), similar to what we have done in \S\ref{ssec of mean function} for \(l=4\), by Lemma \ref{Rem of extension minor 2}, it is enough to consider 
		\begin{align*}
			&\frac{1}{\sqrt{N}}\sum_{i_1\cdots i_d}^{n_1\cdots n_d}\sum_{l\neq k}^d\mbE\big[\mcA_{i_1\cdots i_d}^{(k,l)}\msD\{\partial_{i_1\cdots i_d}^{(4)}Q_{i_ki_l}^{kl}\}\big]=\frac{4}{N^{5/2}}\sum_{i_1\cdots i_d}^{n_1\cdots n_d}\sum_{l\neq k}^d\sum_{s_1\neq k}^d\sum_{s_2\neq s_1}^d\sum_{s_3\neq s_2,l}^d\\
			&\mbE\big[\mcA_{i_1\cdots i_d}^{(k,l)}Q_{i_ki_k}^{kk}\mcA_{i_1\cdots i_d}^{(k,s_1)}Q_{i_{s_1}i_{s_1}}^{s_1s_1}\mcA_{i_1\cdots i_d}^{(s_1,s_2)}Q_{i_{s_2}i_{s_2}}^{s_2s_2}\mcA_{i_1\cdots i_d}^{(s_2,s_3)}Q_{i_{s_3}i_{s_3}}^{s_3s_3}\mcA_{i_1\cdots i_d}^{(s_3,l)}Q_{i_li_l}^{ll}\big],
		\end{align*}
		where \(\msD\) is defined in (\ref{Eq of operator D}). The above equation contains at least three different types of diagonal terms, so that \(\min_{1\leq r\leq d}n_r\geq1\), where \(n_r\) is the number of \(\mfa_r^{(r)}\) appears in the above equation, then
		we can show that \(N^{-1/2}|\sum_{i_1\cdots i_d}^{n_1\cdots n_d}\epsilon_{i_1\cdots i_d}^{(4)}|=\mrO(\eta_0^{-5}N^{-1/2})\) by the same arguments as those for (\ref{Eq of remainder covariance general d}) later, here we omit the details for convenience.

    \vspace{5mm}
    \noindent
	As a result, we obtain
	\begin{align*}
		&zN\mfm_k=-\sum_{l\neq k}^d\Big(N\mfm_k\mfm_l+V_{kl,N}^{(d)}+\mfm_k\sum_{j\neq k,l}W_{jl,N}^{(d)}+\mfm_l\sum_{j\neq k,l}W_{jk,N}^{(d)}\Big)\\
		&+2\kappa_3H_{2,k}^{(d)}(z)-\kappa_4H_{3,k}^{(d)}(z)-n_k+\mrO(C_{\eta_0}N^{-\omega}):=-N\mfm_k\sum_{l\neq k}^d\mfm_l-n_k-M_{k,N}^{(d)}+\mrO(C_{\eta_0}N^{-\omega}),
	\end{align*}
	i.e. \(\big(z+\sum_{l\neq k}^d\mfm_l\big)N\mfm_k=-\mfc_kN-M_{k,N}^{(d)}+\mrO(C_{\eta_0}N^{-\omega})\). Let \(h_k:=N(\mfm_k-g_k)\) and $h=\sum_{k=1}^dh_k$, recall that \(g_k=-\frac{\mfc_k}{z+g-g_k}\), then
	\begin{align*}
		\Big(z+\sum_{l\neq k}^d\mfm_l\Big)h_k&=-\mfc_k N+M_{k,N}^{(d)}-\Big(z+\sum_{l\neq k}^d\mfm_l\Big)g_k+\mrO(C_{\eta_0}N^{-\omega})\\
		&=\frac{\mfc_k(h-h_k)}{z+g-g_k}-M_{k,N}^{(d)}+\mrO(C_{\eta_0}N^{-\omega})\\
		&=-g_k(h-h_k)-M_{k,N}^{(d)}+\mrO(C_{\eta_0}N^{-\omega}).
	\end{align*}
	Therefore, we obtain
	$$N\bbTheta_N^{(d)}(z,z)(\bbg(z)-\bbm(z))=-\overrightarrow{M}_N^{(d)}+\mrO(C_{\eta_0}N^{-\omega}),$$
	where \(\bbTheta_N^{(d)}(z,z)\) is defined in (\ref{Eq of bbTheta general d}) and we have shown that 
    $$\lim_{N\to\infty}\Vert\bbTheta_N^{(d)}(z,z)^{-1}+\bbPi^{(d)}(z,z)^{-1}\diag(\mfc^{-1}\circ\bbg(z))\Vert=0.$$
    Consequently, we conclude that
	$$\lim_{N\to\infty}|\mbE[\tr(\bbQ(z))]-Ng(z)-\boldsymbol{1}_d'\bbPi^{(d)}(z,z)^{-1}\diag(\mfc^{-1}\circ\bbg(z))\overrightarrow{M}_N^{(d)}(z)|=0,$$
	which completes our proof.
\end{proof}

\subsubsection{System of equations for major terms in mean function \texorpdfstring{$\mu_N^{(d)}(z)$}{ } and variance function \texorpdfstring{$\mcC_N^{(d)}(z_1,z_2)$}{ }}\label{Subsec of general Major terms}
We will extend all system equations in \S\ref{ssec of mean function} for general \(d\geq3\). The key method is to use the cumulant expansion (\ref{Eq of cumulant expansion}). In fact, we can use the same method as in Theorem \ref{Thm of entrywise law d=3} to show that only the first derivatives will generate major terms, so we only present the detailed calculation procedures of the first derivatives and omit others.

\vspace{5mm}
\noindent
{\bf System equations for \(W_{kl,N}^{(d)}(z)=\mbE[(\bba^{(k)})'\bbQ(z)\bba^{(l)}]\):} 

By the cumulant expansion (\ref{Eq of cumulant expansion}) and directly calculations, we can obtain
	\begin{align*}
		&zW_{kl,N}^{(d)}=\frac{1}{\sqrt{N}}\sum_{t\neq k}^d\sum_{i_0\cdots i_d}^{n_1\cdots n_d}\mbE\big[X_{i_1\cdots i_d}a_{i_k}^{(k)}a_{i_0}^{(l)}\mcA_{i_1\cdots i_d}^{(k,t)}Q_{i_ti_0}^{tl}\big]-\delta_{kl}\\
		&=-\frac{1}{N}\sum_{t\neq k}^d\sum_{s\neq r}^d\sum_{i_0\cdots i_d}^{n_1\cdots n_d}\mbE\big[a_{i_0}^{(l)}a_{i_k}^{(k)}\mcA_{i_1\cdots i_d}^{(t,k)}Q_{i_ti_s}^{ts}\mcA_{i_1\cdots i_d}^{(s,r)}Q_{i_ri_0}^{rl}\big]-\delta_{kl}+\mrO(C_{\eta_0}N^{-1/2})\\
		&=-\frac{1}{N}\sum_{t\neq k}^d\sum_{r\neq t}^d\sum_{i_0\cdots i_d}^{n_1\cdots n_d}\mbE\big[a_{i_0}^{(l)}a_{i_k}^{(k)}\mcA_{i_1\cdots i_d}^{(t,k)}Q_{i_ti_t}^{tt}\mcA_{i_1\cdots i_d}^{(t,r)}Q_{i_ri_0}^{rl}\big]-\delta_{kl}+\mrO(C_{\eta_0}N^{-1/2})\\
		&=-W_{kl,N}^{(d)}\sum_{t\neq k}^d\mfm_t-\sum_{t\neq k}^d\sum_{r\neq k,t}^d\mfm_tW_{rl,N}^{(d)}-\delta_{kl}+\mrO(\eta_0^{-6}N^{-\omega}),
	\end{align*}
	where we use Lemma \ref{Thm of a.s. convergence} in the last equation. According to \eqref{Eq of approximation general d}, we can further obtain
	\begin{align}
		(z+g-g_k)W_{kl,N}^{(d)}=-\sum_{t\neq k}(g-g_k-g_t)W_{tl,N}^{(d)}-\delta_{kl}+\mrO(\eta_0^{-16}N^{-\omega}).\label{Eq of bbGa 1}
	\end{align}
    Next, we will show that
    \begin{lem}\label{Lem of invertible bbGa}
        Given \(\bbg(z)\) and \(\bbS_d\) in {\rm (\ref{Eq of MDE 3 order})} and {\rm (\ref{Eq of bbS d})}, define
	   \begin{align}
		  \bbGa^{(d)}(z):=(z+g(z))\bbI_d-\diag(\bbg(z))+g(z)\bbS_d-\diag(\bbg(z))\bbS_d-\bbS_d\diag(\bbg(z)),\label{Eq of bbGa}
	   \end{align}
        then \(\bbGa^{(d)}(z)\) is invertible and \(\Vert\bbGa^{(d)}(z)^{-1}\Vert\leq\mrO(\eta_0^{-1})\) for any \(z\in\mcS_{\eta_0}\) in \eqref{Eq of stability region general d}.
    \end{lem}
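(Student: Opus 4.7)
The plan is to prove invertibility of $\bbGa^{(d)}(z)$ by exploiting the entrywise identity
$$\bbGa^{(d)}(z)\,\bbW_N^{(d)}(z)=-\bbI_d+\mrO\!\left(\eta_0^{-16}N^{-\omega}\right)\,\boldsymbol{1}_{d\times d}$$
already derived in \eqref{Eq of bbGa 1}, where $\bbW_N^{(d)}(z):=[W_{kl,N}^{(d)}(z)]_{d\times d}$ with $W_{kl,N}^{(d)}(z)=\mbE[(\bba^{(k)})'\bbQ(z)\bba^{(l)}]$. The two structural facts I shall use are: (i) $\bbGa^{(d)}(z)$ is \emph{complex symmetric}, since $\bbGa^{(d)}(z)_{ij}=g(z)-g_i(z)-g_j(z)$ is symmetric in $(i,j)$ for $i\neq j$ and the diagonal is manifestly symmetric; and (ii) on the event $\{\Vert\bbM\Vert\leq v_B^{(d)}+t\}$, which occurs with probability $1-\mro(N^{-l})$ for any $l>0$ by the extension of Theorem~\ref{Thm of Extreme eigenvalue N d=3}, one has $\Vert\bbQ(z)\Vert\leq\eta_0^{-1}$ for $z\in\mcS_{\eta_0}$, and hence $|W_{kl,N}^{(d)}(z)|\leq\eta_0^{-1}+\mro(1)$ uniformly.

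\medskip

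\noindent\textbf{Step 1 (Injectivity).} Suppose for contradiction that there exists a nonzero $\bbv\in\mbC^d$ with $\bbGa^{(d)}(z)\bbv=\boldsymbol{0}$. By complex symmetry of $\bbGa^{(d)}(z)$, also $\bbv^{\top}\bbGa^{(d)}(z)=\boldsymbol{0}^{\top}$. Left-multiplying \eqref{Eq of bbGa 1} by $\bbv^{\top}$ gives
$$\boldsymbol{0}^{\top}=\bbv^{\top}\bbGa^{(d)}(z)\bbW_N^{(d)}(z)=-\bbv^{\top}+\mrO\!\left(\eta_0^{-16}N^{-\omega}\right)\bbv^{\top}\boldsymbol{1}_{d\times d},$$
and letting $N\to\infty$ forces $\bbv=\boldsymbol{0}$, a contradiction. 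Hence $\bbGa^{(d)}(z)$ is injective and therefore invertible on $\mcS_{\eta_0}$.

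\medskip

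\noindent\textbf{Step 2 (Identification of the inverse and norm bound).} Once invertibility is established, left-multiplying \eqref{Eq of bbGa 1} by $\bbGa^{(d)}(z)^{-1}$ yields
$$\bbW_N^{(d)}(z)=-\bbGa^{(d)}(z)^{-1}+\bbGa^{(d)}(z)^{-1}\cdot\mrO\!\left(\eta_0^{-16}N^{-\omega}\right)\boldsymbol{1}_{d\times d}.$$
Letting $\bbW^{(d)}(z):=\lim_{N\to\infty}\bbW_N^{(d)}(z)$ (the limit exists entrywise by the entrywise law Theorem~\ref{Thm of entrywise law general d} applied with $s=k$, $t=l$, $i_s$ and $i_t$ summed against $\bba^{(k)}$ and $\bba^{(l)}$), we conclude $\bbGa^{(d)}(z)^{-1}=-\bbW^{(d)}(z)$. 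The uniform entrywise bound $|W_{kl,N}^{(d)}(z)|\leq\eta_0^{-1}+\mro(1)$ from (ii) above (together with the negligible contribution of $\mcE_{\bbM}^c$ controlled as in \eqref{Eq of spectral norm mcE d=3}) passes to the limit, so
$$\Vert\bbGa^{(d)}(z)^{-1}\Vert=\Vert\bbW^{(d)}(z)\Vert\leq d\cdot\max_{k,l}|W_{kl}^{(d)}(z)|\leq \mrO(\eta_0^{-1}),$$
which is the desired norm bound.

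\medskip

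\noindent\textbf{Main obstacle.} The subtle point is the logical ordering: the system \eqref{Eq of bbGa 1} is used to deduce invertibility of $\bbGa^{(d)}(z)$, but \eqref{Eq of bbGa 1} itself was derived under the assumption that $\bbGa^{(d)}(z)$ is invertible (so that $\bbW_N^{(d)}(z)$ can be solved for). The resolution is that \eqref{Eq of bbGa 1} as stated is \emph{only} the approximate linear identity obtained from the cumulant expansion, and requires no invertibility to be true — it is simply a consequence of the resolvent identity $\bbM\bbQ-z\bbQ=\bbI_N$ combined with Lemma~\ref{Thm of a.s. convergence} and Theorem~\ref{Thm of approximation}. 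Once this is carefully noted, the argument above proceeds without circularity. A secondary technical point is controlling the contribution of the unlikely event $\mcE_{\bbM}^c$ to $\mbE[(\bba^{(k)})'\bbQ\bba^{(l)}]$; this is handled by the same truncation strategy used in the proof of Theorem~\ref{Thm of Variance}, where subexponential tails of the entries of $\bbX$ (Assumption~\ref{Ap of general noise}) and $\mbP(\mcE_{\bbM}^c)=\mro(N^{-l})$ together render the contribution $o(1)$, preserving the $\mrO(\eta_0^{-1})$ bound.
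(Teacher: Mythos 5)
Your proposal is correct, and it rests on exactly the same key ingredient as the paper -- namely the approximate matrix identity $\bbGa^{(d)}\bbW_N^{(d)}=-\bbI_d+\mrO(\eta_0^{-16}N^{-\omega})$ derived in \eqref{Eq of bbGa 1} from the cumulant expansion together with the a.s. convergence and the approximation of $\bbm(z)$ by $\bbg(z)$. You are also right about the non-circularity point: that identity requires only the resolvent identity and Lemma~\ref{Thm of a.s. convergence} combined with \eqref{Eq of approximation general d}, and presupposes no invertibility of $\bbGa^{(d)}$.

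Where your argument genuinely diverges from the paper is in the contradiction step. You left-multiply the \emph{full} matrix identity by a putative left null vector $\bbv^{\top}$, exploit that $\bbv$ is $N$-independent (and $\eta_0$ is fixed), and immediately get $|v_j|\leq \mrO(\eta_0^{-16}N^{-\omega})\Vert\bbv\Vert_1$ for every $j$; summing and taking $N$ large forces $\bbv=\boldsymbol{0}$. The paper, instead, extracts only the $d$-th column of \eqref{Eq of bbGa 1} to obtain $|r_d|=\mrO(\eta_0^{-16}N^{-\omega})$, and then pursues an entirely algebraic route through the explicit row structure of $\bbGa^{(d)}(z)$ (the relation $(z+g_k)r_k=(g_k-g)\sum_l r_l+\langle\bbg,\bbr\rangle$), eventually showing $|z+g_{k_0}|$ would have to be $\mrO(C_{\eta_0}N^{-\omega})$, contradicting $\Im(z+g_{k_0})>\eta_0$. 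Your argument is noticeably shorter and makes no further use of the block form of $\bbGa^{(d)}$; the paper's, on the other hand, yields an algebraic reason for invertibility that can be read off independently of the random-matrix machinery once $|r_d|$ is known to vanish. For the norm bound, both approaches are algebraically equivalent: you identify $\bbGa^{(d)}(z)^{-1}=-\bbW^{(d)}(z)$ and bound entries of $\bbW$ by $\eta_0^{-1}$; the paper writes $\bbGa^{(d)}(z)^{-1}=-\bbW_N^{(d)}(z)(\bbI_d-E)^{-1}$ with a small $E$, giving the same $\mrO(\eta_0^{-1})$.

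Two small points of care. First, in Step~1 the notation $\mrO(\eta_0^{-16}N^{-\omega})\bbv^{\top}\boldsymbol{1}_{d\times d}$ is a shorthand; the error matrix $E$ is not proportional to $\boldsymbol{1}_{d\times d}$, only entrywise bounded, so the honest estimate is $|(\bbv^{\top}E)_j|\leq \mrO(\eta_0^{-16}N^{-\omega})\Vert\bbv\Vert_1$ -- this is what lets the sum argument close. Second, the bound $|W_{kl,N}^{(d)}(z)|\leq \eta_0^{-1}+\mro(1)$ should be stated as conditioning on $\mcE_{\bbM}$ with $\mbP(\mcE_{\bbM}^c)=\mro(N^{-l})$ for all $l$, exactly as in \eqref{Eq of spectral norm mcE d=3}; you flag this correctly but it is worth making explicit that on $\mcE_{\bbM}$ the deterministic bound $\Vert\bbQ(z)\Vert\leq\eta_0^{-1}$ holds uniformly over $\mcS_{\eta_0}$.
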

    \begin{proof}
        By (\ref{Eq of bbGa 1}), we have
        \begin{align}
		  \bbGa^{(d)}(z)\bbW_N^{(d)}(z)=-\bbI_d+\mrO(\eta_0^{-16}N^{-\omega}),\label{Eq of system equation W}
	   \end{align}
        where \(\bbW_N^{(d)}(z):=[W_{kl,N}^{(d)}(z)]_{d\times d}\). For simplicity, let 
	$$\overrightarrow{W}_{d,N}:=\overrightarrow{W}_{d,N}(z):=\big((\bba^{(1)})'\bbQ^{1d}(z)\bba^{(d)},\cdots,(\bba^{(d)})'\bbQ^{dd}(z)\bba^{(d)}\big)'$$
    be the \(d\)-th column of \(\bbW_N^{(d)}(z)\). Suppose \(\bbGa^{(d)}\) is not invertible, then there exists a nonzero vector \(\bbr:=\bbr(z)=(r_1(z),\cdots,r_d(z))'\) such that \(\bbGa^{(d)}\bbr=\boldsymbol{0}_{d\times1}\), so \(\bbr'\bbGa^{(d)}=\boldsymbol{0}_{1\times d}\) due to \(\bbGa(z)\) is symmetric. Note that
	$$\bbGa^{(d)}\overrightarrow{W}_{d,N}=-\bbdel^{(d)}+\mrO(\eta_0^{-16}N^{-\omega})\boldsymbol{1}_{d\times 1}\quad\Rightarrow\quad0=\bbr'\bbGa^{(d)}\overrightarrow{W}_{d,N}=-r_d+\mrO(\eta_0^{-16}N^{-\omega}),$$
    where \(\bbdel_d\) is the \(d\)-th column of \(\bbI_d\) and \(d\) is a fixed integer, it implies that \(|r_d|=\mrO(\eta_0^{-16}N^{-\omega})\). By \(\bbGa^{(d)}\bbr=\boldsymbol{0}\), we have \(\bbGa_{k\cdot}^{(d)}\bbr=0\), where \(\bbGa_{k\cdot}^{(d)}\) is the \(k\)-th row of \(\bbGa^{(d)}\), then we have \((z+g-g_k)r_k+\sum_{l\neq k}^d(g-g_l-g_k)r_l=0\), i.e.
	\begin{align}
		(z+g_k)r_k=(g_k-g)\sum_{l=1}^dr_l+\langle\bbg,\bbr\rangle,\label{Eq of bbr}
	\end{align}
	where \(1\leq k\leq d\). In particular, when \(d=k\), since \(|r_d|=\mrO(\eta_0^{-16}N^{-\omega})\) and \(|z+g_k|\leq\mrO(\eta_0^{-1})\) for \(z\in\mcS_{\eta_0}\) in \eqref{Eq of stability region general d}, we have
    $$(g-g_d)\sum_{l=1}^dr_l=\langle\bbg,\bbr\rangle+\mrO(\eta_0^{-17}N^{-\omega}).$$
    Replacing \(\langle\bbg,\bbr\rangle=(g-g_d)\sum_{l=1}^dr_l+\mrO(\eta_0^{-17}N^{-\omega})\) in \eqref{Eq of bbr}, we have 
    \begin{align}
        (z+g_k)r_k=(g_k-g_d)\sum_{l=1}^dr_l+\mrO(\eta_0^{-17}N^{-\omega}),\quad 1\leq k\leq d.\label{Eq of bbr minor}
    \end{align}
    Summing all \(d\) above equations (\(d\) is a fixed integer), it yields that
	$$z\sum_{l=1}^dr_l+\langle\bbg,\bbr\rangle=(g-dg_d)\sum_{l=1}^dr_l+\mrO(\eta_0^{-17}N^{-\omega}),$$
	replacing \(\langle\bbg,\bbr\rangle=(g-g_d)\sum_{l=1}^dr_l+\mrO(\eta_0^{-17}N^{-\omega})\) again, we have
	$$z\sum_{l=1}^dr_l+(g-g_d)\sum_{l=1}^dr_l=(g-dg_d)\sum_{l=1}^dr_l+\mrO(\eta_0^{-17}N^{-\omega})$$
	i.e. \((z+(d-1)g_d)\sum_{l=1}^dr_l=\mrO(\eta_0^{-17}N^{-\omega})\). Since \(\Im(z+(d-1)g_d)\geq\eta_0\) for all \(z\in\mcS_{\eta_0}\), it implies that \(\sum_{l=1}^dr_l=\mrO(\eta_0^{-18}N^{-\omega})\). Since \(\Vert\bbr\Vert_2=1\) and \(d\) is a fixed integer, there exists \(1\leq k_0\leq d\) such that \(r_{k_0}\neq 0\), so \eqref{Eq of bbr minor} deduces that \(|z+g_k|=\mrO(\eta_0^{-19}N^{-\omega})\), which is a contradiction as \(N\to\infty\) since \(\Im(z+g_k)>\eta_0\) for any \(z\in\mcS_{\eta_0}\). Therefore, \(\bbGa^{(d)}\) must be invertible for all \(z\in\mcS_{\eta_0}\). Finally, by \eqref{Eq of bbGa 1} again, note that
    $$\bbGa^{(d)}(z)^{-1}(\bbI_d+\mrO(\eta_0^{-16}N^{-\omega})\boldsymbol{1}_{d\times d})=\bbW_N^{(d)}(z)\Longrightarrow\bbGa^{(d)}(z)^{-1}=\bbW_N^{(d)}(z)(\bbI_d+\mrO(\eta_0^{-16}N^{-\omega})\boldsymbol{1}_{d\times d})^{-1},$$
    where \(\bbI_d+\mrO(\eta_0^{-16}N^{-\omega})\boldsymbol{1}_{d\times d}\) is invertible for sufficiently large \(N\), so we have
    $$\Vert\bbGa^{(d)}(z)^{-1}\Vert\leq\mrO(\Vert\bbW_N^{(d)}(z)\Vert)\leq\mrO(\eta_0^{-1}),$$
    where we use the fact that \(|W_{st,N}^{(d)}(z)|=|\mbE[(\bba^{(s)})'\bbQ(z)\bba^{(t)}]|\leq\eta_0^{-1}\).
    \end{proof}
	Based on (\ref{Eq of system equation W}), there exists a \(\bbW^{(d)}(z)\) such that
	\begin{align}
		\bbW^{(d)}(z)=[W_{st}^{(d)}(z)]_{d\times d}=-\bbGa^{(d)}(z)^{-1},\quad\Vert\bbW_N^{(d)}(z)-\bbW^{(d)}(z)\Vert_{\infty}\leq\mrO(\eta_0^{-17}N^{-\omega}).\label{Eq of bbW general}
	\end{align}
    {\bf System equations for \(V_{kl,N}^{(d)}(z_1,z_2)=N^{-1}\mbE[\tr(\bbQ^{kl}(z_1)\bbQ^{lk}(z_2))]\):}
    
    Define \(\bbV_N^{(d)}(z_1,z_2)=[V_{kl,N}^{(d)}(z_1,z_2)]_{d\times d}\), where \(z_1,z_2\in\mcS_{\eta_0}\). Since
	\begin{align*}
		&z_1V_{kl,N}(z_1,z_2)=N^{-3/2}\sum_{i_1\cdots i_d}^{n_1\cdots n_d}\sum_{t\neq k}^d\mbE\big[X_{i_1\cdots i_d}\mcA_{i_1\cdots i_d}^{(k,t)}Q_{i_ti_l}^{tl}(z_1)Q_{i_ki_l}^{kl}(z_2)\big]-\delta_{kl}\mfm_k(z_2),
	\end{align*}
	by the cumulant expansion (\ref{Eq of cumulant expansion}) and Lemma \ref{Thm of a.s. convergence}, we have
	\begin{align*}
		&N^{-3/2}\sum_{i_1\cdots i_d}^{n_1\cdots n_d}\sum_{t\neq k}^d\mbE\big[\partial_{i_1\cdots i_d}^{(1)}\{\mcA_{i_1\cdots i_d}^{(k,t)}Q_{i_ti_l}^{tl}(z_1)Q_{i_ki_l}^{kl}(z_2)\}\big]\\
		&=-N^{-2}\sum_{i_1\cdots i_d}^{n_1\cdots n_d}\sum_{t\neq k}^d\sum_{s_1\neq s_2}^d\mbE\big[\mcA_{i_1\cdots i_d}^{(k,t)}Q_{i_ti_{s_1}}^{ts_1}(z_1)\mcA_{i_1\cdots i_d}^{(s_1,s_2)}Q_{i_{s_2}i_l}^{s_2l}(z_1)Q_{i_ki_l}^{kl}(z_2)\big]\\
		&-N^{-2}\sum_{i_1\cdots i_d}^{n_1\cdots n_d}\sum_{t\neq k}^d\sum_{s_1\neq s_2}^d\mbE\big[\mcA_{i_1\cdots i_d}^{(k,t)}Q_{i_ti_l}^{tl}(z_1)Q_{i_ki_{s_1}}^{ks_1}(z_2)\mcA_{i_1\cdots i_d}^{(s_1,s_2)}Q_{i_{s_2}i_l}^{s_2l}(z_2)\big]+\mrO(\eta_0^{-4}N^{-1/2})\\
		&=-N^{-2}\sum_{i_1\cdots i_d}^{n_1\cdots n_d}\sum_{t\neq k}^d\mbE\big[(\mcA_{i_1\cdots i_d}^{(k,t)})^2\big(Q_{i_ti_t}^{tt}(z_1)Q_{i_ki_l}^{kl}(z_1)Q_{i_ki_l}^{kl}(z_2)+Q_{i_ki_k}^{kk}(z_2)Q_{i_ti_l}^{tl}(z_1)Q_{i_ti_l}^{tl}(z_2)\big)\big]+\mrO(\eta_0^{-4}N^{-1/2})\\
		&=-V_{kl,N}^{(d)}(z_1,z_2)\sum_{t\neq k}^d\mfm_t(z_1)-\mfm_k(z_2)\sum_{t\neq k}^dV_{tl,N}^{(d)}(z_1,z_2)+\mrO(C_{\eta_0}N^{-\omega})\\
		&=-V_{kl,N}^{(d)}(z_1,z_2)\sum_{t\neq k}^dg_t(z_1)-g_k(z_2)\sum_{t\neq k}^dV_{tl,N}^{(d)}(z_1,z_2)+\mrO(C_{\eta_0}N^{-\omega}),
	\end{align*}
	where we use Theorem \eqref{Eq of approximation general d} in the last step. Now, we obtain that
	\begin{align*}
		&(z_1+g(z_1)-g_k(z_1))V_{kl,N}^{(d)}(z_1,z_2)=-g_k(z_2)\Big(\delta_{kl}+\sum_{t\neq k}^dV_{tl,N}^{(d)}(z_1,z_2)\Big)+\mrO(C_{\eta_0}N^{-\omega}),
	\end{align*}
	which implies that
	\begin{align*}
		&\bbPi^{(d)}(z_1,z_2)\bbV_N^{(d)}(z_1,z_2)=\diag(\mfc^{-1}\circ\bbg(z_1)\circ\bbg(z_2))+\mro(\boldsymbol{1}_{d\times d}).
	\end{align*}
	Since we have shown that \(\bbPi^{(d)}(z_1,z_2)\) is invertible in Remark \ref{Rem of invertible submatrix}, then we can derive that 
	\begin{align}
		&\bbV^{(d)}(z_1,z_2):=\lim_{N\to\infty}\bbV_N(z_1,z_2)=\bbPi^{(d)}(z_1,z_2)^{-1}\diag(\mfc^{-1}\circ\bbg(z_1)\circ\bbg(z_2)).\label{Eq of bbV}
	\end{align}
    {\bf System equations for \(\mcV_{k_1k_2,N}^{(d)}(z_1,z_2)=N^{-1}\sum_{l\neq k_1}^d\mbE[\tr(\bbQ^{k_1k_2}(\bar{z}_2)\bbQ^{k_2l}(\bar{z}_1)\bbQ^{lk_1}(z_1))]\):} 
    
    It is enough to find the limiting value of the following terms:
	\begin{align}
		V_{klr,N}^{(d)}(z_1,z_2):=\frac{1}{N}\mbE[\tr\bbQ^{kl}(z_1)\bbQ^{lr}(z_1)\bbQ^{rk}(z_2)],\label{Eq of V of mcV}
	\end{align}
	where \(z_1,z_2\in\mcS_{\eta_0}\) and \(k,l,r\in\{1,\cdots,d\}\). Similarly, for any {\bf fixed} \(r\), define
	\begin{align}
		\bbV_{r,N}^{(d)}(z_1,z_2):=[V_{klr,N}^{(d)}(z_1,z_2)]_{d\times d},
	\end{align}
	By the cumulant expansion (\ref{Eq of cumulant expansion}), since
	\begin{align*}
		&z_1V_{klr,N}^{(d)}(z_1,z_2)=\frac{z_1}{N}\sum_{i_1=1}^{n_1}\mbE[Q_{i_1\cdot}^{kl}(z_1)\bbQ^{lr}(z_1)Q_{\cdot i_1}^{rk}(z_2)]\\
		&=\frac{1}{N^{3/2}}\sum_{j\neq k}^d\sum_{i_1\cdots i_d}^{n_1\cdots n_d}\mbE[X_{i_1\cdots i_d}\mcA_{i_1\cdots i_d}^{(k,j)}Q_{i_j\cdot}^{jl}(z_1)\bbQ^{lr}(z_1)Q_{\cdot i_1}^{rk}(z_2)]-\delta_{kl}V_{kr}(z_1,z_2)\\
		&=\frac{1}{N^{3/2}}\sum_{j\neq k}^d\sum_{i_1\cdots i_d}^{n_1\cdots n_d}\mbE[\mcA_{i_1\cdots i_d}^{(k,j)}\partial_{i_1\cdots i_d}^{(1)}\{Q_{i_j\cdot}^{jl}(z_1)\bbQ^{lr}(z_1)Q_{\cdot i_1}^{rk}(z_2)\}]-\delta_{kl}V_{kr}(z_1,z_2)+\mrO(C_{\eta_0}N^{-1/2}),
	\end{align*}
	where
	\begin{align*}
		&\frac{1}{N^{3/2}}\sum_{j\neq k}^d\sum_{s,t}^{n_l,n_r}\sum_{i_1\cdots i_d}^{n_1\cdots n_d}\mbE[\mcA_{i_1\cdots i_d}^{(k,j)}\partial_{i_1\cdots i_d}^{(1)}\{Q_{i_js}^{jl}(z_1)\}Q_{st}^{lr}(z_1)Q_{ti_k}^{rk}(z_2)]\\
		&=-\frac{1}{N^2}\sum_{j\neq k}^d\sum_{s,t}^{n_l,n_r}\sum_{i_1\cdots i_d}^{n_1\cdots n_d}\sum_{p\neq q}^d\mbE[\mcA_{i_1\cdots i_d}^{(k,j)}Q_{i_ji_p}^{jp}(z_1)\mcA_{i_1\cdots i_d}^{(p,q)}Q_{i_qs}^{ql}(z_1)Q_{st}^{lr}(z_1)Q_{ti_k}^{rk}(z_2)]\\
		&=-\frac{1}{N^2}\sum_{j\neq k}^d\sum_{s,t}^{n_l,n_r}\sum_{i_1\cdots i_d}^{n_1\cdots n_d}\sum_{q\neq j}^d\mbE[\mcA_{i_1\cdots i_d}^{(k,j)}Q_{i_ji_j}^{jj}(z_1)\mcA_{i_1\cdots i_d}^{(j,q)}Q_{i_qs}^{ql}(z_1)Q_{st}^{lr}(z_1)Q_{ti_k}^{rk}(z_2)]\\
		&=-V_{klr,N}^{(d)}(z_1,z_2)\sum_{j\neq k}^d\mfm_j(z_1)+\mrO(C_{\eta_0}N^{-\omega}),
	\end{align*}
	and
	\begin{align*}
		&\frac{1}{N^{3/2}}\sum_{j\neq k}^d\sum_{s,t}^{n_l,n_r}\sum_{i_1\cdots i_d}^{n_1\cdots n_d}\mbE[\mcA_{i_1\cdots i_d}^{(k,j)}Q_{i_js}^{jl}(z_1)\partial_{i_1\cdots i_d}^{(1)}\{Q_{st}^{lr}(z_1)\}Q_{ti_k}^{rk}(z_2)]\\
		&=-\frac{1}{N^2}\sum_{j\neq k}^d\sum_{s,t}^{n_l,n_r}\sum_{i_1\cdots i_d}^{n_1\cdots n_d}\sum_{p\neq q}^d\mbE[\mcA_{i_1\cdots i_d}^{(k,j)}Q_{i_js}^{jl}(z_1)Q_{si_p}^{lp}(z_1)\mcA_{i_1\cdots i_d}^{(p,q)}Q_{i_qt}^{qr}(z_1)Q_{ti_k}^{rk}(z_2)]\\
		&=-V_{kr,N}^{(d)}(z_1,z_2)\sum_{j\neq k}^dV_{jl,N}^{(d)}(z_1,z_1)+\mrO(C_{\eta_0}N^{-\omega}),\notag
	\end{align*}
	and
	\begin{align*}
		&\frac{1}{N^{3/2}}\sum_{j\neq k}^d\sum_{s,t}^{n_l,n_r}\sum_{i_1\cdots i_d}^{n_1\cdots n_d}\mbE[\mcA_{i_1\cdots i_d}^{(k,j)}Q_{i_js}^{jl}(z_1)Q_{st}^{lr}(z_1)\partial_{i_1\cdots i_d}^{(1)}\{Q_{ti_k}^{rk}(z_2)\}]\\
		&=-\frac{1}{N^2}\sum_{j\neq k}^d\sum_{s,t}^{n_l,n_r}\sum_{i_1\cdots i_d}^{n_1\cdots n_d}\sum_{p\neq q}^d\mbE[\mcA_{i_1\cdots i_d}^{(k,j)}Q_{i_js}^{jl}(z_1)Q_{st}^{lr}(z_1)Q_{ti_p}^{rp}(z_2)\mcA_{i_1\cdots i_d}^{(p,q)}Q_{i_qi_k}^{qk}(z_2)]\\
		&=-\mfm_k(z_2)\sum_{j\neq k}^dV_{jlr,N}^{(d)}(z_1,z_2)+\mrO(C_{\eta_0}N^{-\omega}).
	\end{align*}
	In summary, we can conclude that
	\begin{small}
	\begin{align*}
		&V_{klr,N}^{(d)}(z_1,z_2)=\mfc_k^{-1}g_k(z_1)\Big(\delta_{kl}V_{kr}^{(d)}(z_1,z_2)+V_{kr}^{(d)}(z_1,z_2)\sum_{j\neq k}^dV_{jl}^{(d)}(z_1,z_2)+g_k(z_2)\sum_{j\neq k}^dV_{jlr,N}^{(d)}(z_1,z_2)\Big)+\mrO(C_{\eta_0}N^{-\omega}).
	\end{align*}
	\end{small}\noindent
	i.e.
    \begin{small}
    \begin{align*}
		&\lim_{N\to\infty}\Vert\bbV_{r,N}^{(d)}(z_1,z_2)-\bbPi^{(d)}(z_1,z_2)^{-1}\diag(\mfc^{-1}\circ\bbg(z_1))\diag(\bbV_{\cdot r}^{(d)}(z_1,z_2))\big(\bbI_d+\bbS_d\bbV^{(d)}(z_1,z_2)\big)\Vert=0,
	\end{align*}
    \end{small}\noindent
	where \(\bbV_{\cdot r}^{(d)}(z_1,z_2)\) is the \(r\)-th column of \(\bbV^{(d)}(z_1,z_2)\). Thus, the limit value \(\bbV_{r,N}^{(d)}(z_1,z_2)\) is given as
	\begin{small}
	\begin{align}
		\bbV_r^{(d)}(z_1,z_2):&=\bbPi^{(d)}(z_1,z_2)^{-1}\diag(\mfc^{-1}\circ\bbg(z_1))\diag(\bbV_{\cdot r}^{(d)}(z_1,z_2))\big(\bbI_d+\bbS_d\bbV^{(d)}(z_1,z_2)\big),\label{Eq of bbV r}
	\end{align}
	\end{small}\noindent
	Once we solve \(V_{klr,N}^{(d)}(z_1,z_2)\), by (\ref{Eq of mcV}), the limiting expression of \(\mcV_{k_1k_2,N}^{(d)}(z_1,z_2)\) is given as
	\begin{align}
		\lim_{N\to\infty}\mcV_{k_1k_2,N}^{(d)}(z_1,z_2)=\mcV_{k_1k_2}^{(d)}(z_1,z_2):=\sum_{l\neq k_1}^dV_{k_1k_2l}^{(d)}(z_1,z_2).\label{Eq of mcV limit general d}
	\end{align}
    {\bf System equations for \(\widetilde{\mcW}_{k_1k_2,N}^{(d)}(z_1,z_2)\) in (\ref{Eq of mcW}):} By Theorem \ref{Thm of a.s. convergence}, we can show that
	$$\Cov(N^{-1}\tr(\bbQ^{k_1k_1}(z_1)\circ\bbQ^{k_1k_1}(\bar{z}_2)),N^{-1}\tr(\bbQ^{k_1k_1}(z_1)\circ(\bbQ^{k_1k_2}(\bar{z}_2)\bbQ^{k_2k_1}(\bar{z}_2))))=\mrO(C_{\eta_0}N^{-2\omega}).$$
	Consequently, by (\ref{Eq of mcW}), we only need to compute the limiting values of \(N^{-1}\mbE[\tr(\bbQ^{k_1k_1}(z_1)\circ\bbQ^{k_1k_1}(\bar{z}_2))]\) and \(N^{-1}\mbE[\tr(\bbQ^{k_1k_1}(z_1)\circ(\bbQ^{k_1k_2}(\bar{z}_2)\bbQ^{k_2k_1}(\bar{z}_2)))]\) respectively. For the first term, by Theorem \ref{Thm of entrywise law general d}, we can obtain
	\begin{align}
		\lim_{N\to\infty}N^{-1}\mbE[\tr(\bbQ^{k_1k_1}(z_1)\circ\bbQ^{k_1k_1}(\bar{z}_2))]=\mfc_{k_1}^{-1}g_{k_1}(z_1)g_{k_1}(\bar{z}_2).\label{Eq of V circ}
	\end{align}
	For the second term, we define 
    $$\mathring{V}_{kl,N}^{(d)}(z_1,z_2):=\frac{1}{N}\mbE[\tr(\bbQ^{kk}(z_1)\circ(\bbQ^{kl}(z_2)\bbQ^{lk}(z_2)))],$$
    by the cumulant expansion (\ref{Eq of cumulant expansion}) again, we have
	\begin{align*}
		&z_1\mathring{V}_{kl,N}^{(d)}(z_1,z_2)=\frac{z_1}{N}\sum_{i_k=1}^{n_k}\mbE[Q_{i_ki_k}^{kk}(z_1)Q_{i_k\cdot}^{kl}(z_2)Q_{\cdot i_k}^{lk}(z_2)]\\
		&=\frac{1}{N^{3/2}}\sum_{t\neq k}^d\sum_{i_1\cdots i_d}^{n_1\cdots n_d}\mbE\big[X_{i_1\cdots i_d}\mcA_{i_1\cdots i_d}^{(k,t)}Q_{i_ti_k}^{tk}(z_1)Q_{i_k\cdot}^{kl}(z_2)Q_{\cdot i_k}^{lk}(z_2)\big]-V_{kl,N}^{(d)}(z_2,z_2)\\
		&=\frac{1}{N^{3/2}}\sum_{t\neq k}^d\sum_{i_1\cdots i_d}^{n_1\cdots n_d}\mbE\big[\mcA_{i_1\cdots i_d}^{(k,t)}\partial_{i_1\cdots i_d}^{(1)}\{Q_{i_ti_k}^{tk}(z_1)Q_{i_k\cdot}^{kl}(z_2)Q_{\cdot i_k}^{lk}(z_2)\}\big]-V_{kl,N}^{(d)}(z_2,z_2)
	\end{align*}
	where
	\begin{align*}
		&\frac{1}{N^{3/2}}\sum_{t\neq k}^d\sum_{i_1\cdots i_d}^{n_1\cdots n_d}\sum_{s=1}^{n_l}\mbE\big[\mcA_{i_1\cdots i_d}^{(k,t)}\partial_{i_1\cdots i_d}^{(1)}\{Q_{i_ti_k}^{tk}(z_1)\}Q_{i_ks}^{kl}(z_2)Q_{si_k}^{lk}(z_2)\big]\\
		&=-\frac{1}{N^2}\sum_{t\neq k}^d\sum_{i_1\cdots i_d}^{n_1\cdots n_d}\sum_{s=1}^{n_l}\sum_{p\neq q}^d\mbE\big[\mcA_{i_1\cdots i_d}^{(k,t)}Q_{i_ti_p}^{tp}(z_1)\mcA_{i_1\cdots i_d}^{(p,q)}Q_{i_qi_k}^{qk}(z_1)Q_{i_ks}^{kl}(z_2)Q_{si_k}^{lk}(z_2)\big]\\
		&=-\mathring{V}_{kl,N}^{(d)}(z_1,z_2)\sum_{t\neq k}^d\mfm_t(z_1)+\mrO(C_{\eta_0}N^{-\omega})
	\end{align*}
	and
	\begin{align*}
		&\frac{1}{N^{3/2}}\sum_{t\neq k}^d\sum_{i_1\cdots i_d}^{n_1\cdots n_d}\sum_{s=1}^{n_l}\mbE\big[\mcA_{i_1\cdots i_d}^{(k,t)}Q_{i_ti_k}^{tk}(z_1)\partial_{i_1\cdots i_d}^{(1)}\{Q_{i_ks}^{kl}(z_2)Q_{si_k}^{lk}(z_2)\}\big]\\
		&=-\frac{2}{N^2}\sum_{t\neq k}^d\sum_{p\neq q}^d\sum_{i_1\cdots i_d}^{n_1\cdots n_d}\sum_{s=1}^{n_l}\mbE\big[\mcA_{i_1\cdots i_d}^{(k,t)}Q_{i_ti_k}^{tk}(z_1)Q_{i_ks}^{kl}(z_2)Q_{i_ki_p}^{kp}(z_2)\mcA_{i_1\cdots i_d}^{(p,q)}Q_{i_qs}^{ql}(z_2)\big]=\mrO(C_{\eta_0}N^{-\omega}).
	\end{align*}
	Hence, we obtain that
	$$(z_1+\mfm(z_1)-\mfm_k(z_1))\mathring{V}_{kl,N}^{(d)}(z_1,z_2)=V_{kl,N}^{(d)}(z_2,z_2)+\mrO(C_{\eta_0}N^{-\omega}),$$
	in matrix notations
	\begin{align*}
		\mathring{\bbV}_N^{(d)}(z_1,z_2)=\diag(\mfc^{-1}\circ\bbg(z_1))\bbV^{(d)}(z_2,z_2)+\mro(\boldsymbol{1}_{d\times d}),
	\end{align*}
	where \(\mathring{\bbV}_N^{(d)}(z_1,z_2)=[\mathring{V}_{kl,N}^{(d)}(z_1,z_2)]_{d\times d}\). So it concludes that
	\begin{align}
		\mathring{\bbV}^{(d)}(z_1,z_2):=[\mathring{V}_{st}^{(d)}(z_1,z_2)]_{d\times d}:=\lim_{N\to\infty}\mathring{\bbV}_N^{(d)}(z_1,z_2)=\diag(\mfc^{-1}\circ\bbg(z_1))\bbV^{(d)}(z_2,z_2).\label{Eq of V mathring}
	\end{align}
	Now, by (\ref{Eq of mcW}), we obtain that
	\begin{align}
		&\widetilde{\mcW}_{k_1k_2,N}^{(3)}(z_1,z_2)=\mcW_{k_1k_2,N}^{(d)}(z_1,z_2)+\mrO(C_{\eta_0}N^{-\omega})\notag,
	\end{align}
    where
    \begin{small}
    \begin{align}
        \mcW_{k_1k_2,N}^{(d)}(z_1,z_2):=\mfc_{k_1}^{-1}g_{k_1}(z_1)g_{k_1}(\bar{z}_2)\sum_{l\neq k_1}^d\mcB_{(4)}^{(k_1,l)}\mathring{V}_{lk_2}^{(d)}(z_1,z_2)+\mathring{V}_{k_1k_2}^{(d)}(z_1,z_2)\sum_{l\neq k_1}^d\mcB_{(4)}^{(k_1,l)}\mfc_l^{-1}g_l(z_1)g_l(\bar{z}_2).\label{Eq of mcW limit general d}
    \end{align}
    \end{small}
\subsection{CLT for the LSS}
Consider the following family of functions: 
\begin{align}
    \mathfrak{F}_d:=\{f(z):f\text{ is analytic on an open set containing the interval }[-\max\{\zeta,\mfv_d\},\max\{\zeta,\mfv_d\}]\},\label{Eq of analytic function general d}
\end{align}
where $\zeta$ \eqref{Eq of support boundary} is the boundary of LSD $\nu$ and $\mfv_d$ is defined in Theorem \ref{Thm of Extreme eigenvalue N d=3}. For any \(f\in\mathfrak{F}_d\), the LSS of \(\bbM\) is given as
\begin{align}
	\mcL_{\bbM}(f):=\frac{1}{N}\sum_{l=1}^Nf(\lambda_l).\notag
\end{align}
where \(\lambda_1\geq\cdots\geq\lambda_N\) be are eigenvalues of \(\bbM\). Similar to Theorem \ref{Thm of CLT LSS d=3}, we will establish the CLT of
\begin{align}
	G_N(f):=N\int_{-\infty}^{\infty}f(x)(\nu_N(dx)-\nu(dx))=N\Big(\mcL_{\bbM}(f)-\int_{-\infty}^{\infty}f(x)\nu(dx)\Big),\label{Eq of GNf general d}
\end{align}
where \(\nu_N\) and \(\nu\) are the ESD and LSD of \(\bbM\) respectively. Precisely, we will show that
\begin{thm}\label{Thm of general CLT LSS}
	Under Assumptions {\rm \ref{Ap of general noise}} and {\rm \ref{Ap of dimension}}, let \(\mfC_{1}\) and \(\mfC_{2}\) be two disjoint rectangular contours with vertices of \(\pm E_{1}\pm\eta_{1}\) and \(\pm E_{2}\pm\eta_{2}\) such that \(E_1,E_2\geq\max\{\zeta,\mfv_d\}+t\) for any \(t>0\), where \(\zeta\) and \(\mfv_d\) are defined in {\rm (\ref{Eq of support boundary})} and {\rm (\ref{Eq of Extreme eigenvalues d=3})}, then we have
    $$(G_N(f)-\xi_N^{(d)})/\sigma_N^{(d)}\overset{d}{\longrightarrow}\mcN(0,1)$$
    where
	\begin{align*}
		\xi_N^{(d)}&:=-\frac{1}{2\pi{\rm i}}\oint_{\mfC_1}f(z)\mu_N^{(d)}(z)dz,\\
        (\sigma_N^{(d)})^2&:=-\frac{1}{4\pi^2}\oint_{\mfC_1}\oint_{\mfC_2}f(z_1)f(z_2)\mcC_N^{(d)}(z_1,z_2)dz_1dz_2.
	\end{align*}
    and the mean function \(\mu_N^{(d)}(z)\) and the covariance function \(\mcC_N^{(d)}(z_1,z_2)\) are defined in {\rm (\ref{Eq of mean function general d})} and {\rm (\ref{Eq of covariance function general d})}, respectively.
\end{thm}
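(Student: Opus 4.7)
\medskip

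The plan is to follow exactly the same skeleton as the proof of Theorem~\ref{Thm of CLT LSS d=3}, replacing each $d=3$ ingredient by its general-$d$ counterpart. First, by Theorem~\ref{Thm of Extreme eigenvalue N d=3} we have $\mbP(\Vert\bbM\Vert>\mfv_d+t)=\mro(N^{-l})$ for any $l>0$, so the event $\mcE_\bbM:=\{\Vert\bbM\Vert\leq\max\{\mfv_d,\zeta\}+t\}$ has probability $1-\mro(N^{-l})$; on this event $f(z)$ and $\tr(\bbQ(z))$ are both analytic on a neighborhood of the support of $\nu_N$. Since $f\in\mathfrak{F}_d$ in \eqref{Eq of analytic function general d} is analytic on an open set containing $[-\max\{\mfv_d,\zeta\},\max\{\mfv_d,\zeta\}]$, the Cauchy integral representation
$$G_N(f)1_{\mcE_\bbM}=-\frac{1}{2\pi\ii}\oint_{\mfC}f(z)\{\tr(\bbQ(z))-Ng(z)\}dz$$
holds along a rectangular contour $\mfC$ with vertices $\pm E_0\pm\ii\eta_0$ and $E_0\geq\max\{\mfv_d,\zeta\}+t$. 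Decompose $\mfC=\mfC^h\cup\mfC^v$ into horizontal and vertical pieces as in \S\ref{Sec of CLT}. The reduction of the CLT for $G_N(f)$ to one for $\tr(\bbQ(z))-\mbE[\tr(\bbQ(z))]$ on $\mcS_{\eta_0}$ then proceeds exactly as before.

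Next, I would establish the tightness statement of Theorem~\ref{Thm of Tightness} for general $d$: for $z_1,z_2\in\mcS_{\eta_0}$ with $z_1\neq z_2$, the quantity $(z_1-z_2)^{-1}\tr(\bbQ(z_1)-\bbQ(z_2))=\tr(\bbQ(z_1)\bbQ(z_2))$ decomposes into the blocks $\tr(\bbQ^{st}(z_1)\bbQ^{ts}(z_2))$ for $s,t\in\{1,\dots,d\}$, so tightness reduces to bounding the covariances $\mcC_{s_1t_1,s_2t_2,N}^{(d)}(z_1,z_2):=\Cov(\tr(\bbQ^{s_1t_1}(z_1)\bbQ^{t_1s_1}(z_2)),\tr(\bbQ^{s_2t_2}(z_1)\bbQ^{t_2s_2}(z_2)))$ uniformly in $z_1,z_2\in\mcS_{\eta_0}$. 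This is done by running the cumulant expansion \eqref{Eq of cumulant expansion} on $z_1\mcC_{s_1t_1,s_2t_2,N}^{(d)}$, identifying major terms via Lemmas~\ref{Rem of extension minor 2} and \ref{Rem of extension of Corollary} (the general-$d$ versions of Lemmas~\ref{Lem of minor terms} and \ref{Cor of minor terms}), and solving the resulting linear system through the stability operator $\bbPi^{(d)}(z_1,z_2)$, whose invertibility and $\Vert\bbPi^{(d)}(z_1,z_2)^{-1}\Vert\leq C_{\mfc,d}\eta_0^{-4}$ are guaranteed by Propositions~\ref{Pro of invertible matrices} and \ref{Pro of inverse norm}.

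Third, I would prove the Gaussian-process limit of Theorem~\ref{Thm of CLT} for general $d$. Define $\gamma_l(z):=\tr(\bbQ^{ll}(z))-\mbE[\tr(\bbQ^{ll}(z))]$ and the product $e_q$ as in \eqref{Eq of eq}, and compute $\partial_{t_s}\mbE[e_q]$ via $\mcC_{l,e,N}^{(d)}(z):=\Cov(\tr(\bbQ^{ll}(z)),e_q)$. Applying the cumulant expansion with $d$-fold index sums and Theorem~\ref{Thm of a.s. convergence} on each covariance, the first and third cumulant contributions will produce the same $\mcV$- and $\mcW$-type terms that appeared in \S\ref{Subsec of general Major terms}, and the resulting linear system is again inverted using $\bbPi^{(d)}(z,z)^{-1}\diag(\mfc^{-1}\circ\bbg(z))$, yielding \eqref{Eq of characteristic function d=3} with coefficients $A_{sw}$ matching $\mcC_N^{(d)}(z_s,z_w)$ in \eqref{Eq of covariance function general d}. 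Combining with tightness gives weak convergence to a Gaussian process on $\mcS_{\eta_0}$, so that
$$-\frac{1}{2\pi\ii\sigma_N^{(d)}}\oint_{\mfC^h}f(z)\{\tr(\bbQ(z))-Ng(z)\}dz-\xi_N^{(d)}/\sigma_N^{(d)}\overset{d}{\longrightarrow}\mcN(0,1).$$

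Finally, the contribution of the vertical edge $\mfC^v$ must be shown negligible, i.e.\ $\lim_{\eta_0\downarrow0}\limsup_{N\to\infty}\mbE|\oint_{\mfC^v}f(z)\{\tr(\bbQ(z))-Ng(z)\}dz|^2=0$. Conditioning on $\mcE_\bbM$ ensures $\Vert\bbQ(z)\Vert\leq t^{-1}$ on $\mfC^v$, and the arguments used to prove Theorems~\ref{Thm of mean general d} and \ref{Thm of covariance general d} extend to this region giving $\mbE[\tr(\bbQ(z))]-Ng(z)=\mu_N^{(d)}(z)+\mrO(C_tN^{-\omega})$ and $\Var(\tr(\bbQ(z)))=\mcC_N^{(d)}(z,z)+\mrO(C_tN^{-\omega})$ with error independent of $\eta_0$. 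The key remaining ingredient is the analyticity of $\mu_N^{(d)}(z)$ and $\mcC_N^{(d)}(z_1,z_2)$ on $\mfC^v$, which is the generalization of Lemma~\ref{Lem of analytic}. I expect this to be the main obstacle: one must verify that $\bbF(z)=\diag(|\mfc^{-1}\circ\bbg(z)|)\bbS_d\diag(|\bbg(z)|)$ satisfies $\sup_{z\in\mfC^v}\Vert\bbF(z)\Vert<1$, so that $\bbPi^{(d)}(z_1,z_2)$ remains invertible throughout $\mfC^v$ and hence $\bbW^{(d)}(z)$, $\bbV^{(d)}(z_1,z_2)$, $\mcV^{(d)}$, $\mcW^{(d)}$ are analytic there. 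The required bound on $\Vert\bbF(z)\Vert$ follows from the estimates $\sup_{z\in\mfC^v}\Im(z)^{-1}\Im(g_i(z))/|g_i(z)|\leq 2t^{-1}$ and $\inf_{z\in\mfC^v}|g_i(z)|\geq\mfc_i/(2t)$ via the integral representation of $g_i$ on the bounded support $[-\zeta,\zeta]$, exactly as in Lemma~\ref{Lem of analytic}; once this is in place the vertical contribution shrinks to zero as $\eta_0\downarrow 0$ by the length of $\mfC^v$ tending to $0$, completing the proof.
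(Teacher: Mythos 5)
Your proposal follows essentially the same route as the paper: reduce to a contour integral via Theorem~\ref{Thm of Extreme eigenvalue N d=3}, split $\mfC$ into horizontal and vertical pieces, establish the general-$d$ tightness and characteristic-function results (Theorems~\ref{Thm of Tightness general d} and \ref{Thm of CLT general d}) by the cumulant expansion with Lemmas~\ref{Rem of extension minor 2}--\ref{Rem of extension of Corollary} replacing the $d=3$ minor-term lemmas, and kill the vertical contribution via analyticity of $\mu_N^{(d)}$ and $\mcC_N^{(d)}$ on $\mfC^v$ (the general-$d$ version of Lemma~\ref{Lem of analytic}, which the paper invokes directly since the bound on $\Vert\bbF(z)\Vert$ there does not depend on $d$ beyond a constant). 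The skeleton, the key lemmas invoked, and the order of steps all match the paper's proof.
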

The basic outlines are the same as those in \S\ref{Sec of CLT}. 
\subsubsection{Tightness}
\begin{thm}\label{Thm of Tightness general d}
	Under Assumptions {\rm \ref{Ap of general noise}} and {\rm \ref{Ap of dimension}}, \(\tr(\boldsymbol{Q}(z))-\mathbb{E}[\tr(\boldsymbol{Q}(z))]\) is a tight sequence in $\mathcal{S}_{\eta_0}$ in {\rm (\ref{Eq of stability region general d})}, i.e. 
	$$\sup_{\substack{z_1,z_2\in\mathcal{S}_{\eta_0}\\z_1\neq z_2}}\frac{\mathbb{E}\left[|\tr(\boldsymbol{Q}(z_1)-\boldsymbol{Q}(z_2))-\mathbb{E}[\tr(\boldsymbol{Q}(z_1)-\boldsymbol{Q}(z_2))|^2\right]}{|z_1-z_2|^2}<C_{\eta_0}.$$
\end{thm}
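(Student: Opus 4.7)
The plan is to follow the same scheme as the proof of Theorem \ref{Thm of Tightness} for $d=3$, relying on the general-$d$ versions of the preliminary lemmas (Lemmas \ref{Thm of a.s. convergence}, \ref{Rem of extension minor 1}, \ref{Rem of extension minor 2}, \ref{Rem of extension of Corollary}) that have already been established. First, using the resolvent identity $\bbQ(z_1)-\bbQ(z_2)=(z_1-z_2)\bbQ(z_1)\bbQ(z_2)$, one reduces tightness to proving that for every fixed quadruple $s_1,t_1,s_2,t_2\in\{1,\dots,d\}$,
\begin{align*}
\mcC_{s_1t_1,s_2t_2,N}^{(d)}(z_1,z_2):=\Cov\big(\tr(\bbQ^{s_1t_1}(z_1)\bbQ^{t_1s_1}(z_2)),\tr(\bbQ^{s_2t_2}(z_1)\bbQ^{t_2s_2}(z_2))\big)
\end{align*}
is bounded by a constant $C_{\eta_0}$ uniformly over $z_1,z_2\in\mcS_{\eta_0}$, since $\tr(\bbQ(z_1)\bbQ(z_2))=\sum_{s,t}\tr(\bbQ^{st}(z_1)\bbQ^{ts}(z_2))$.

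Next, for a representative entry (say $s_1=t_1=s_2=t_2=1$), one applies the identity $\bbM\bbQ-z\bbQ=\bbI_N$ together with the cumulant expansion \eqref{Eq of cumulant expansion} up to order four, exactly as in the $d=3$ proof. The systematic treatment for minor terms via Lemmas \ref{Rem of extension minor 2} and \ref{Rem of extension of Corollary} shows that the first and third derivative contributions generate the leading terms, involving generalized quantities $\mcV_{k_1l_1,k_2l_2,N}^{(d)}(z_1,z_2)$ and $\mcW_{11,11,N}^{(d)}(z_1,z_2)$ built in the obvious way from \eqref{Eq of mcV tightness d=3} and \eqref{Eq of mcW tightness d=3} by summing over $r,t_1,t_2\in\{1,\dots,d\}\setminus\{1\}$ with the corresponding factors $\Vert\bba^{(k)}\Vert_4^4$. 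The second derivative and fourth derivative contributions are handled by the same $\mro(C_{\eta_0})\mcC_{s_1t_1,s_2t_2,N}^{(d)}$-type estimates and $\mrO(C_{\eta_0}N^{-\omega})$ bounds used for $d=3$; in particular, Lemma \ref{Rem of extension of Corollary} directly controls products of two derivative terms containing off-diagonal blocks.

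This yields a system of linear equations in matrix form, analogous to \eqref{Eq of system equation tightness d=3}:
\begin{align*}
\bbTheta_N^{(d)}(z_1,z_2)\,\bbC_{s_2t_2,N}^{(d)}(z_1,z_2)=-\bbF_{s_2t_2,N}^{(d)}(z_1,z_2)+\mrO(C_{\eta_0}N^{-\omega})\boldsymbol{1}_{d\times d}+\mro(1)\cdot\diag(\bbC_{s_2t_2,N}^{(d)}),
\end{align*}
coupled with an auxiliary system for the mixed covariances $\mcC_{s_1t_1,s_2,N}^{(d)}(z_1,z_2)=\Cov(\tr(\bbQ^{s_1t_1}(z_1)\bbQ^{t_1s_1}(z_2)),\tr(\bbQ^{s_2s_2}(z_2)))$. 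The right-hand sides $\bbF_{s_2t_2,N}^{(d)}$ and $\bbF_{s_2,N}^{(d)}$ involve $\mcV^{(d)}$, $\mcW^{(d)}$, $V_{st,N}^{(d)}$, $W_{st,N}^{(d)}$ and the already-bounded $\mcC_{s,t,N}^{(d)}$ from Theorem \ref{Thm of covariance general d}; all these are uniformly $\mrO(C_{\eta_0})$ by Theorem \ref{Thm of a.s. convergence} and the system equations in \S\ref{Subsec of general Major terms}. Since Theorem \ref{Thm of covariance general d} has already shown that $\bbTheta_N^{(d)}(z_1,z_2)^{-1}$ converges to $-\bbPi^{(d)}(z_1,z_2)^{-1}\diag(\mfc^{-1}\circ\bbg(z_1))$ with norm bounded by $C_{\mfc}\eta_0^{-5}$, we may invert it, first solve the auxiliary system for $\mcC_{s_1t_1,s_2,N}^{(d)}$ to get $|\mcC_{s_1t_1,s_2,N}^{(d)}|\le C_{\eta_0}$, and then close the argument to conclude $|\mcC_{s_1t_1,s_2t_2,N}^{(d)}(z_1,z_2)|\le C_{\eta_0}$ for all $s_1,t_1,s_2,t_2$, which yields the desired tightness bound.

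The main obstacle will be the combinatorial bookkeeping for the derivatives of order two and three: the number of index contraction patterns in $\partial_{i_1\cdots i_d}^{(2)}\tr(\bbQ^{11}(\bar z_1)\bbQ^{11}(\bar z_2))$ and $\partial_{i_1\cdots i_d}^{(1)}\{G^1_{i_1\cdots i_d}(z_1,z_2)\}$ grows with $d$, and one must verify that every pattern with an off-diagonal factor is caught by Lemmas \ref{Rem of extension minor 2}--\ref{Rem of extension of Corollary} so that only fully-diagonal-type terms of the form $\mcW^{(d)}$ survive. Once this identification is carried out, the algebra reduces to the same closed system as in the $d=3$ case and the conclusion follows verbatim from the inversion argument used there.
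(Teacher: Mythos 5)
Your proposal follows essentially the same approach as the paper's proof: reduce via the resolvent identity to bounding the covariances $\mcC_{k_1l_1,k_2l_2,N}^{(d)}(z_1,z_2)$, apply the cumulant expansion with the general-$d$ minor-term lemmas, derive a closed linear system involving the auxiliary mixed covariances $\mcC_{k_1l_1,k_2,N}^{(d)}$ together with $\mcV^{(d)}$- and $\mcW^{(d)}$-type quantities, and close the argument by inverting $\bbTheta_N^{(d)}(z_1,z_2)$. The plan and the key lemmas invoked coincide with the paper's; no gap.
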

\begin{proof}
	For any \(z\in\mcS_{\eta_0}\), the tightness of the process \(\tr(\bbQ(z))-\mbE[\tr(\bbQ(z))]\) is equivalent to
	\begin{align*}
		&\Var\big(\tr(\bbQ^{kl}(z_1)\bbQ^{lk}(z_2))\big)\leq C_{\eta_0,d,\mfc},
	\end{align*}
	where \(z_1,z_2\in\mcS_{\eta_0}\) and \(k,l\in\{1,\cdots,d\}\). Define 
	\begin{align}
		\mcC_{k_1l_1,k_2l_2,N}^{(d)}(z_1,z_2)=\Cov\big(\tr(\bbQ^{k_1l_1}(z_1)\bbQ^{l_1k_1}(z_2)),\tr(\bbQ^{k_2l_2}(z_1)\bbQ^{l_2k_2}(z_2))\big),
	\end{align}
	it is enough to show that \(|\mcC_{k_1l_1,k_2l_2,N}^{(d)}(z_1,z_2)|\leq C_{\eta_0,d,\mfc}\) for any \(k_1,k_2,l_1,l_2\in\{1,\cdots,d\}\). Similar to what we have done in \S\ref{Sec of Tightness}, let us  derive a system equation for all \(\mcC_{k_1l_1,k_2l_2,N}^{(d)}(z_1,z_2)\). We omit \((z_1,z_2)\) behind \(\mcC_{k_1l_1,k_2l_2,N}^{(d)}(z_1,z_2)\), then
	\begin{align}
		&z_1\mcC_{k_1l_1,k_2l_2,N}^{(d)}=\frac{1}{\sqrt{N}}\sum_{i_1\cdots i_d}^{n_1\cdots n_d}\sum_{s\neq k_1}^d\mbE\big[X_{i_1\cdots i_d}\mcA_{i_1\cdots i_d}^{(k_1,s)}Q_{i_s\cdot}^{sl_1}(z_1)Q_{\cdot i_k}^{l_1k_1}(z_2)\tr(\bbQ^{k_2l_2}(\bar{z}_1)\bbQ^{l_2k_2}(\bar{z}_2))^c\big]\label{Eq of tightness major 1 general d}\\
		&-\delta_{k_1l_1}\Cov\big(\tr(\bbQ^{k_1k_1}(z_1)),\tr(\bbQ^{k_2l_2}(z_1)\bbQ^{l_2k_2}(z_2))\big),\label{Eq of tightness major 2 general d}
	\end{align}
	and we only need to show both of above two terms are bounded by \(C_{\eta_0}\). 

    \vspace{5mm}
    \noindent
    {\bf Calculations of (\ref{Eq of tightness major 2 general d}):} Define
	\begin{align}
		\mcC_{k_1l_1,k_2,N}^{(d)}(z_1,z_2):=\Cov\big(\tr(\bbQ^{k_1l_1}(z_1)\bbQ^{l_1k_1}(z_2),\tr(\bbQ^{k_2k_2}(z_1)))\big),\label{Eq of mcC 2 1 general d}
	\end{align}
	here, we still omit the \((z_1,z_2)\) behind \(\mcC_{k_1l_1,k_2,N}^{(d)}(z_1,z_2)\). By the cumulant expansion (\ref{Eq of cumulant expansion}), we have
	\begin{align*}
		&z_1\mcC_{k_1l_1,k_2,N}^{(d)}=\frac{1}{\sqrt{N}}\sum_{i_1\cdots i_d}^{n_1\cdots n_d}\sum_{s\neq k_1}^d\mbE\big[X_{i_1\cdots i_d}\mcA_{i_1\cdots i_d}^{(k_1,s)}Q_{i_s\cdot}^{sl_1}(z_1)Q_{\cdot i_{k_1}}^{l_1k_1}(z_2)\tr(\bbQ^{k_2k_2}(\bar{z}_1))^c\big]\\
		&=\frac{1}{\sqrt{N}}\sum_{i_1\cdots i_d}^{n_1\cdots n_d}\Big(\sum_{\alpha=0}^3\sum_{s\neq k}^d\mbE\big[\partial_{i_1\cdots i_d}^{(\alpha)}\{\mcA_{i_1\cdots i_d}^{(k_1,s)}Q_{i_s\cdot}^{sl_1}(z_1)Q_{\cdot i_{k_1}}^{l_1k_1}(z_2)\tr(\bbQ^{k_2k_2}(\bar{z}_1))^c\}\big]+\epsilon_{i_1\cdots i_d}^{(4)}\Big).
	\end{align*} 
    For convenience, we omit the detailed calculation for minor terms since the proofs are the same as those in Theorems \ref{Thm of Tightness}, \ref{Thm of mean general d} and \ref{Thm of covariance general d}. Actually, only \(\alpha=1,3\) will have the major terms:

    \vspace{5mm}
    \noindent
    {\bf First derivatives:} When \(\alpha=1\), by (\ref{Eq of partial trace d}), we have 
    \begin{align*}
        \partial_{i_1\cdots i_d}^{(1)}\tr(\bbQ^{k_2k_2}(\bar{z}))=-N^{-1/2}\sum_{t_1\neq t_2}^{d,d}\mcA_{i_1\cdots i_d}^{(t_1,t_2)}Q_{i_{t_1}\cdot}^{t_1k_2}(\bar{z})Q_{\cdot i_{t_2}}^{k_2t_2}(\bar{z}),
    \end{align*}
		and by Lemma \ref{Rem of extension minor 2},
		\begin{align*}
			&\frac{1}{\sqrt{N}}\sum_{i_1\cdots i_d}^{n_1\cdots n_d}\sum_{i_t}^{n_{l_1}}\mcA_{i_1\cdots i_d}^{(k_1,s)}\partial_{i_1\cdots i_d}^{(1)}\{Q_{i_si_t}^{sl_1}(z_1)Q_{i_{k_1}i_t}^{k_1l_1}(z_2)\}\\
			&=-N^{-1}\sum_{i_1\cdots i_d}^{n_1\cdots n_d}\mcA_{i_1\cdots i_d}^{(k_1,s)}\big[Q_{i_si_{s_1}}^{ss_1}(z_1)\mcA_{i_1\cdots i_d}^{(s_1,s_2)}Q_{i_{s_2}\cdot}^{s_2l_1}(z_1)Q_{\cdot i_{k_1}}^{l_1k_1}(z_2)+Q_{i_{k_1}i_{s_1}}^{k_1s_1}(z_2)\mcA_{i_1\cdots i_d}^{(s_1,s_2)}Q_{i_{s_2}\cdot}^{s_2l_1}(z_2)Q_{\cdot i_s}^{l_1s}(z_1)\big]\\
			&=-N^{-1}\sum_{i_1\cdots i_d}^{n_1\cdots n_d}(\mcA_{i_1\cdots i_d}^{(k_1,s)})^2[Q_{i_si_s}^{ss}(z_1)Q_{i_{k_1}\cdot}^{k_1l_1}(z_1)Q_{\cdot i_{k_1}}^{l_1k_1}(z_2)+Q_{i_{k_1}i_{k_1}}^{k_1k_1}(z_2)Q_{i_s\cdot}^{sl_1}(z_2)Q_{\cdot i_s}^{l_1s}(z_1)]+\mrO(\eta_0^{-3}N^{-1/2})\\
			&=-N^{-1}[\tr(\bbQ^{ss}(z_1))\tr(\bbQ^{k_1l_1}(z_1)\bbQ^{l_1k_1}(z_2))+\tr(\bbQ^{k_1k_1}(z_2))\tr(\bbQ^{sl_1}(z_1)\bbQ^{l_1s}(z_2))]+\mrO(\eta_0^{-3}N^{-1/2}),
		\end{align*}
		then by direct calculation, we can obtain that
		\begin{align*}
			&\frac{1}{\sqrt{N}}\sum_{i_1\cdots i_d}^{n_1\cdots n_d}\sum_{s\neq k_1}^d\mbE\big[\mcA_{i_1\cdots i_d}^{(k_1,s)}Q_{i_s\cdot}^{sl_1}(z_1)Q_{\cdot i_{k_1}}^{l_1k_1}(z_2)\partial_{i_1\cdots i_d}^{(1)}\{\tr(\bbQ^{k_2k_2}(\bar{z}_1))\}\big]\\
			&=-\frac{1}{N}\sum_{i_1\cdots i_d}^{n_1\cdots n_d}\sum_{s\neq k_1}^d\sum_{t_1\neq t_2}^d\mbE\big[\mcA_{i_1\cdots i_d}^{(k_1,s)}Q_{i_s\cdot}^{sl_1}(z_1)Q_{\cdot i_{k_1}}^{l_1k_1}(z_2)\mcA_{i_1\cdots i_d}^{(t_1,t_2)}Q_{i_{t_1}\cdot}^{t_1k_2}(\bar{z}_1)Q_{\cdot i_{t_2}}^{k_2t_2}(\bar{z}_1)\big]\\
			&=-\frac{2}{N}\sum_{s\neq k_1}^d\mbE[\tr(\bbQ^{sl_1}(z_1)\bbQ^{l_1k_1}(z_2)\bbQ^{k_1k_2}(\bar{z}_1)\bbQ^{k_2s}(\bar{z}_1))]+\mrO(C_{\eta_0}N^{-1/2}),
		\end{align*}
		and
		\begin{align*}
			&\frac{1}{\sqrt{N}}\sum_{i_1\cdots i_d}^{n_1\cdots n_d}\sum_{s\neq k_1}^d\mbE\big[\partial_{i_1\cdots i_d}^{(1)}\{\mcA_{i_1\cdots i_d}^{(k_1,s)}Q_{i_s\cdot}^{sl_1}(z_1)Q_{\cdot i_{k_1}}^{l_1k_1}(z_2)\}\tr(\bbQ^{k_2k_2}(\bar{z}_1))^c\big]\\
			&=-N^{-1}\sum_{s\neq k_1}^d\Cov(\tr(\bbQ^{ss}(z_1))\tr(\bbQ^{k_1l_1}(z_1)\bbQ^{l_1k_1}(z_2)),\tr(\bbQ^{k_2k_2}(z_1)))\\
			&-N^{-1}\sum_{s\neq k_1}^d\Cov(\tr(\bbQ^{k_1k_1}(z_2))\tr(\bbQ^{sl_1}(z_1)\bbQ^{l_1s}(z_2)),\tr(\bbQ^{k_2k_2}(z_1)))+\mrO(C_{\eta_0}N^{-\omega}),
		\end{align*}
		similar to (\ref{Eq of covariance trick}), we can show that
		\begin{align*}
			&N^{-1}\Cov(\tr(\bbQ^{ss}(z_1))\tr(\bbQ^{k_1l_1}(z_1)\bbQ^{l_1k_1}(z_2)),\tr(\bbQ^{k_2k_2}(z_1)))\\
			&=\mfm_s(z_1)\mcC_{k_1l_1,k_2,N}^{(d)}+V_{k_1l_1,N}^{(d)}(z_1,z_2)\mcC_{sk_2,N}^{(d)}(z_1,z_1)+\mrO(C_{\eta_0}N^{-\omega}),
		\end{align*}
		where $V_{k_1l_1,N}^{(d)}(z_1,z_2)=N^{-1}\mbE[\tr(\bbQ^{k_1l_1}(z_1)\bbQ^{l_1k_1}(z_2))]$ and \(\mcC_{sk_2,N}^{(d)}(z_1,z_2)\) is defined in (\ref{Eq of mcC covariance}). For simplicity, we define
        $$\mcV_{k_1l_1,k_2,N}^{(d)}(z_1,z_2):=\frac{1}{N}\sum_{s\neq k_1}^d\mbE[\tr(\bbQ^{sl_1}(z_1)\bbQ^{l_1k_1}(z_2)\bbQ^{k_1k_2}(\bar{z}_1)\bbQ^{k_2s}(\bar{z}_1))]+\mrO(C_{\eta_0}N^{-1/2}),$$
        then we obtain
		\begin{align}
			&\frac{1}{\sqrt{N}}\sum_{i_1\cdots i_d}^{n_1\cdots n_d}\sum_{s\neq k_1}^d\mbE\big[\partial_{i_1\cdots i_d}^{(1)}\{\mcA_{i_1\cdots i_d}^{(k_1,s)}Q_{i_s\cdot}^{sl_1}(z_1)Q_{\cdot i_{k_1}}^{l_1k_1}(z_2)\tr(\bbQ^{k_2k_2}(\bar{z}_1))^c\}\big]\notag\\
			&=-\mcC_{k_1l_1,k_2,N}^{(d)}\sum_{s\neq k_1}^d\mfm_s(z_1)-V_{k_1l_1,N}^{(d)}(z_1,z_2)\sum_{s\neq k_1}^d\mcC_{sk_2,N}^{(d)}(z_1,z_2)-\mfm_{k_1}(z_2)\sum_{s\neq k_1}^d\mcC_{sl_1,k_2,N}^{(d)}\notag\\
			&-\mcC_{k_1k_2,N}^{(d)}(z_2,z_1)\sum_{s\neq k_1}^dV_{sl_1,N}^{(d)}(z_1,z_2)-2\mcV_{k_1l_1,k_2,N}^{(d)}(z_1,z_2)+\mrO(C_{\eta_0}N^{-\omega}).\label{Eq of tightness C21 1}
		\end{align}
    {\bf Third derivatives:} When \(\alpha=3\), similar to proofs in Theorem \ref{Thm of Tightness} for \(\alpha=3\), only the following case contains the major terms:
		\begin{align*}
			&\frac{1}{\sqrt{N}}\sum_{i_1\cdots i_d}^{n_1\cdots n_d}\sum_{s\neq k_1}^d\mbE\big[\partial_{i_1\cdots i_d}^{(2)}\{\tr(\bbQ^{k_2k_2}(\bar{z}_1))^c\}\partial_{i_1\cdots i_d}^{(1)}\{\mcA_{i_1\cdots i_d}^{(k_1,s)}Q_{i_s\cdot}^{sl_1}(z_1)Q_{\cdot i_{k_1}}^{l_1k_1}(z_2)\}\big],
		\end{align*}
		where
		\begin{align*}
			\partial_{i_1\cdots i_d}^{(2)}\tr\bbQ^{k_2k_2}(\bar{z})=\frac{2}{N}\sum_{t_1\neq t_2,t_3\neq t_4}^d\mcA_{i_1\cdots i_d}^{(t_1,t_2)}\mcA_{i_1\cdots i_d}^{(t_3,t_4)}Q_{i_{t_2}i_{t_3}}^{t_2t_3}(\bar{z})Q_{i_{t_4}\cdot}^{t_4k_2}(\bar{z})Q_{\cdot i_{t_1}}^{k_2t_1}(\bar{z}),
		\end{align*}
		then by Lemmas \ref{Rem of extension minor 2}, \ref{Rem of extension of Corollary} and (\ref{Eq of mcB}), we have
		\begin{align}
			&\frac{1}{\sqrt{N}}\sum_{i_1\cdots i_d}^{n_1\cdots n_d}\sum_{s\neq k_1}^d\mbE\big[\partial_{i_1\cdots i_d}^{(2)}\{\tr(\bbQ^{k_2k_2}(\bar{z}_1))^c\}\partial_{i_1\cdots i_d}^{(1)}\{\mcA_{i_1\cdots i_d}^{(k_1,s)}Q_{i_s\cdot}^{sl_1}(z_1)Q_{\cdot i_{k_1}}^{l_1k_1}(z_2)\}\big]=\mrO(C_{\eta_0}N^{-1/2})\notag\\
			&-\frac{2}{N^2}\sum_{i_1\cdots i_d}^{n_1\cdots n_d}\sum_{s\neq k_1}^d\sum_{t_1,t_2}^{(s,k_1)}\mbE\big[(\mcA_{i_1\cdots i_d}^{(t_1,t_2)})^2Q_{i_{t_2}i_{t_2}}^{t_2t_2}(\bar{z}_1)Q_{i_{t_1}\cdot}^{t_1k_2}(\bar{z}_1)Q_{\cdot i_{t_1}}^{k_2t_1}(\bar{z}_1)(\mcA_{i_1\cdots i_d}^{(k_1,s)})^2Q_{i_si_s}^{ss}(z_1)Q_{i_{k_1}\cdot}^{k_1l_1}(z_1)Q_{\cdot i_{k_1}}^{l_1k_1}(z_2)\big]\notag\\
			&-\frac{2}{N^2}\sum_{i_1\cdots i_d}^{n_1\cdots n_d}\sum_{s\neq k_1}^d\sum_{t_1,t_2}^{(s,k_1)}\mbE\big[(\mcA_{i_1\cdots i_d}^{(t_1,t_2)})^2Q_{i_{t_2}i_{t_2}}^{t_2t_2}(\bar{z}_1)Q_{i_{t_1}\cdot}^{t_1k_2}(\bar{z}_1)Q_{\cdot i_{t_1}}^{k_2t_1}(\bar{z}_1)(\mcA_{i_1\cdots i_d}^{(k_1,s)})^2Q_{i_s\cdot}^{sl_1}(z_1)Q_{\cdot i_s}^{l_1s}(z_2)Q_{i_{k_1}i_{k_1}}^{k_1k_1}(z_2)\big]\notag
		\end{align}
    where the notation $\sum_{t_1,t_2}^{(s,k_1)}$ means that the summation of $t_1$ and $t_2$ are over $\{1,\cdots,d\}\backslash\{s,k_1\}$. For simplicity, we define 
    \begin{align*}
        &\mcW_{k_1l_1,k_2,N}^{(d)}(z_1,z_2)\label{Eq of tightness C21 3}\\
        :&=\frac{1}{N^2}\sum_{s\neq k_1}^d\mcB_{(4)}^{(k_1,s)}\mbE\big[\tr(\bbQ^{ss}(z_1)\circ\bbQ^{ss}(\bar{z}_1))\cdot\tr((\bbQ^{k_1l_1}(z_1)\bbQ^{l_1k_1}(z_2))\circ(\bbQ^{k_1k_2}(\bar{z}_1)\bbQ^{k_2k_1}(\bar{z}_1)))\big]\notag\\
			&+\frac{1}{N^2}\sum_{s\neq k_1}^d\mcB_{(4)}^{(k_1,s)}\mbE\big[\tr(\bbQ^{ss}(z_1)\circ(\bbQ^{sk_2}(\bar{z}_1)\bbQ^{k_2s}(\bar{z}_1)))\cdot\tr((\bbQ^{k_1l_1}(z_1)\bbQ^{l_1k_1}(z_2))\circ\bbQ^{k_1k_1}(\bar{z}_1))\big]\notag\\
			&+\frac{1}{N^2}\sum_{s\neq k_1}^d\mcB_{(4)}^{(k_1,s)}\mbE\big[\tr((\bbQ^{sl_1}(z_1)\bbQ^{l_1s}(z_2))\circ\bbQ^{ss}(\bar{z}_1))\cdot\tr(\bbQ^{k_1k_1}(z_1)\circ(\bbQ^{k_1k_2}(\bar{z}_1)\bbQ^{k_2k_1}(\bar{z}_1)))\big]\notag\\
			&+\frac{1}{N^2}\sum_{s\neq k_1}^d\mcB_{(4)}^{(k_1,s)}\mbE\big[\tr((\bbQ^{sl_1}(z_1)\bbQ^{l_1s}(z_1))\circ(\bbQ^{sk_2}(\bar{z}_1)\bbQ^{k_2s}(\bar{z}_1)))\cdot\tr(\bbQ^{k_1k_1}(z_2)\circ\bbQ^{k_1k_1}(\bar{z}_1))\big]\notag,
    \end{align*}
    where $\mcB_{(4)}^{(k_1,s)}$ is defined in \eqref{Eq of mcB}. Then combining (\ref{Eq of tightness C21 1}) and (\ref{Eq of tightness C21 3}), we obtain
	\begin{align*}
		&(z_1+\mfm(z_1)-\mfm_{k_1}(z_1))\mcC_{k_1l_1,k_2,N}^{(d)}=-V_{k_1l_1,N}^{(d)}(z_1,z_2)\sum_{s\neq k_1}^d\mcC_{sk_2,N}^{(d)}(z_1,z_2)-\mfm_{k_1}(z_2)\sum_{s\neq k_1}^d\mcC_{sl_1,k_2,N}^{(d)}\\
		&-\mcC_{k_1k_2,N}^{(d)}(z_2,z_1)\sum_{s\neq k_1}^dV_{sl_1,N}^{(d)}(z_1,z_2)-2\mcV_{k_1l_1,k_2,N}^{(d)}(z_1,z_2)-\kappa_4\mcW_{k_1l_1,k_2,N}^{(d)}(z_1,z_2)+\mrO(C_{\eta_0}N^{-\omega})\\
		:&=-\mfm_{k_1}(z_2)\sum_{s\neq k_1}^d\mcC_{sl_1,k_2,N}^{(d)}-\mcF_{k_1l_1,k_2,N}^{(d)}(z_1,z_2)+\mrO(C_{\eta_0}N^{-\omega}).
	\end{align*}
	Hence, for any {\bf fixed} \(k_2\in\{1,\cdots,d\}\), define
	\begin{align}
		&\bbC_{k_2,N}^{(d)}(z_1,z_2)=[\mcC_{kl,k_2,N}^{(d)}(z_1,z_2)]_{d\times d}\quad{\rm and}\quad\bbF_{k_2,N}^{(d)}(z_1,z_2)=[\mcF_{kl,k_2,N}^{(d)}(z_1,z_1)]_{d\times d},
	\end{align}
	then we obtain that
	\begin{align*}
		\bbTheta_N^{(d)}(z_1,z_2)\bbC_{k_2,N}^{(d)}(z_1,z_2)=-\bbF_{k_2,N}^{(d)}(z_1,z_2)+\mro(\boldsymbol{1}_{d\times d}),
	\end{align*}
	where \(\bbTheta_N^{(d)}(z_1,z_2)\) is defined in (\ref{Eq of bbPi N}) and it is invertible, so we can further derive 
	\begin{align}
		\lim_{N\to\infty}\Vert\bbC_{k_2,N}^{(d)}(z_1,z_2)-\bbPi^{(d)}(z_1,z_2)^{-1}\diag(\mfc^{-1}\circ\bbg(z_1))\bbF_{k_2,N}^{(d)}(z_1,z_2)\Vert=0,\label{Eq of bbC 2 vs 1 general d}
	\end{align}
	which suggests that all entries of \(\bbC_{k_2,N}^{(d)}(z_1,z_2)\) are bounded by \(C_{\eta_0,\mfc,d}\). 

    \vspace{5mm}
    \noindent
    {\bf Calculations of (\ref{Eq of tightness major 1 general d}):} By the cumulant expansion (\ref{Eq of cumulant expansion}) again, we have
	\begin{align*}
		&\frac{1}{\sqrt{N}}\sum_{i_1\cdots i_d}^{n_1\cdots n_d}\sum_{s\neq k_1}^d\mbE\big[X_{i_1\cdots i_d}\mcA_{i_1\cdots i_d}^{(k_1,s)}Q_{i_s\cdot}^{sl_1}(z_1)Q_{\cdot i_{k_1}}^{l_1k_1}(z_2)\tr(\bbQ^{k_2l_2}(\bar{z}_1)\bbQ^{l_2k_2}(\bar{z}_2))^c\big]\\
		&=\frac{1}{\sqrt{N}}\sum_{i_1\cdots i_d}^{n_1\cdots n_d}\Big(\sum_{s\neq k_1}^d\sum_{\alpha=0}^3\frac{\kappa_{\alpha+1}}{\alpha!}\mbE\big[\mcA_{i_1\cdots i_d}^{(k_1,s)}\partial_{i_1\cdots i_d}^{(\alpha)}\{Q_{i_s\cdot}^{sl}(z_1)Q_{\cdot i_{k_1}}^{l_1k_1}(z_2)\tr(\bbQ^{k_2l_2}(\bar{z}_1)\bbQ^{l_2k_2}(\bar{z}_2))^c\}\big]+\epsilon_{i_1\cdots i_d}^{(4)}\Big).
	\end{align*}
    Here, we still omit the details for calculating minors. 

    \vspace{5mm}
    \noindent
    {\bf First derivatives:} When \(\alpha=1\), since
		\begin{align*}
			&\partial_{i_1\cdots i_d}^{(1)}\tr(\bbQ^{kl}(z_1)\bbQ^{lk}(z_2))=-N^{-1/2}\sum_{s_1\neq s_2}^d\sum_{j=1}^2\mcA_{i_1\cdots i_d}^{(s_1,s_2)}Q_{i_{s_1}\cdot}^{s_1k}(z_j)\bbQ^{kl}(z_{3-j})Q_{\cdot i_{s_2}}^{ls_2}(z_j),
		\end{align*}
		and
		\begin{align*}
			&\frac{1}{\sqrt{N}}\sum_{i_1\cdots i_d}^{n_1\cdots n_d}\sum_{i_t}^{n_{l_1}}\mcA_{i_1\cdots i_d}^{(k_1,s)}\partial_{i_1\cdots i_d}^{(1)}\{Q_{i_si_t}^{sl_1}(z_1)Q_{i_{k_1}i_t}^{k_1l_1}(z_2)\}\\
			&=-N^{-1}\sum_{i_1\cdots i_d}^{n_1\cdots n_d}\mcA_{i_1\cdots i_d}^{(k_1,s)}\mcA_{i_1\cdots i_d}^{(s_1,s_2)}\big[Q_{i_si_{s_1}}^{ss_1}(z_1)Q_{i_{s_2}\cdot}^{s_2l_1}(z_1)Q_{\cdot i_{k_1}}^{l_1k_1}(z_2)+Q_{i_{k_1}i_{s_1}}^{k_1s_1}(z_2)Q_{i_{s_2}\cdot}^{s_2l_1}(z_2)Q_{\cdot i_s}^{l_1s}(z_1)\big]\\
			&=-N^{-1}\sum_{i_1\cdots i_d}^{n_1\cdots n_d}(\mcA_{i_1\cdots i_d}^{(k_1,s)})^2[Q_{i_si_s}^{ss}(z_1)Q_{i_{k_1}\cdot}^{k_1l_1}(z_1)Q_{\cdot i_{k_1}}^{l_1k_1}(z_2)+Q_{i_{k_1}i_{k_1}}^{k_1k_1}(z_2)Q_{i_s\cdot}^{sl_1}(z_2)Q_{\cdot i_s}^{l_1s}(z_1)]+\mrO(\eta_0^{-3}N^{-1/2})\\
			&=-N^{-1}[\tr(\bbQ^{ss}(z_1))\tr(\bbQ^{k_1l_1}(z_1)\bbQ^{l_1k_1}(z_2))+\tr(\bbQ^{k_1k_1}(z_2))\tr(\bbQ^{sl_1}(z_1)\bbQ^{l_1s}(z_2))]+\mrO(\eta_0^{-3}N^{-1/2}).
		\end{align*}
        For simplicity, we define
        \begin{small}
        $$\mcV_{k_1l_1,k_2l_2,N}^{(d)}(z_1,z_2)=\frac{1}{N}\sum_{s\neq k_1}^d\sum_{j=1}^2\mbE\big[\tr\big(\bbQ^{sl_1}(z_1)\bbQ^{l_1k_1}(z_2)[\bbQ^{k_1k_2}(\bar{z}_j)\bbQ^{k_2l_2}(\bar{z}_{3-j})\bbQ^{l_2s}(\bar{z}_j)+\bbQ^{k_1l_2}(\bar{z}_j)\bbQ^{l_2k_2}(\bar{z}_{3-j})\bbQ^{k_2s}(\bar{z}_j)]\big)\big],$$
        \end{small}\noindent
		then by direct calculations, we have
		\begin{align}
			&\frac{1}{\sqrt{N}}\sum_{i_1\cdots i_d}^{n_1\cdots n_d}\sum_{s\neq k_1}^d\mbE\big[\mcA_{i_1\cdots i_d}^{(k_1,s)}Q_{i_s\cdot}^{sl_1}(z_1)Q_{\cdot i_{k_1}}^{l_1k_1}(z_2)\partial_{i_1\cdots i_d}^{(1)}\{\tr(\bbQ^{k_2l_2}(\bar{z}_1)\bbQ^{l_2k_2}(\bar{z}_2))^c\}\big]\notag\\
			&=-\frac{2}{N}\sum_{i_1\cdots i_d}^{n_1\cdots n_d}\sum_{s\neq k_1}^d\sum_{s_1\neq s_2}^d\sum_{j=1}^2\mbE\big[\mcA_{i_1\cdots i_d}^{(k_1,s)}Q_{i_s\cdot}^{sl_1}(z_1)Q_{\cdot i_{k_1}}^{l_1k_1}(z_2)\mcA_{i_1\cdots i_d}^{(s_1,s_2)}Q_{i_{s_1}\cdot}^{s_1k_2}(\bar{z}_j)\bbQ^{k_2l_2}(\bar{z}_{3-j})Q_{\cdot i_{s_2}}^{l_2s_2}(\bar{z}_j)\big]\notag\\
			&=-\frac{2}{N}\sum_{s\neq k_1}^d\sum_{j=1}^2\mbE\big[\tr\big(\bbQ^{sl_1}(z_1)\bbQ^{l_1k_1}(z_2)[\bbQ^{k_1k_2}(\bar{z}_j)\bbQ^{k_2l_2}(\bar{z}_{3-j})\bbQ^{l_2s}(\bar{z}_j)\notag\\
			&+\bbQ^{k_1l_2}(\bar{z}_j)\bbQ^{l_2k_2}(\bar{z}_{3-j})\bbQ^{k_2s}(\bar{z}_j)]\big)\big]+\mrO(C_{\eta_0}N^{-1/2})=-2\mcV_{k_1l_1,k_2l_2,N}^{(d)}(z_1,z_2)+\mrO(C_{\eta_0}N^{-1/2})\label{Eq of mcV tightness general d}
		\end{align}
		and
		\begin{align}
			&\frac{1}{\sqrt{N}}\sum_{i_1\cdots i_d}^{n_1\cdots n_d}\sum_{s\neq k_1}^d\mbE\big[\partial_{i_1\cdots i_d}^{(1)}\{\mcA_{i_1\cdots i_d}^{(k_1,s)}Q_{i_s\cdot}^{sl_1}(z_1)Q_{\cdot i_{k_1}}^{l_1k_1}(z_2)\}\tr(\bbQ^{k_2l_2}(\bar{z}_1)\bbQ^{l_2k_2}(\bar{z}_2))^c\big]\notag\\
			&=-\frac{1}{N}\sum_{s\neq k_1}^d\Cov\big(\tr(\bbQ^{ss}(z_1))\tr(\bbQ^{k_1l_1}(z_1)\bbQ^{l_1k_1}(z_2)),\tr(\bbQ^{k_2l_2}(z_1)\bbQ^{l_2k_2}(z_2))\big)\notag\\
			&-\frac{1}{N}\sum_{s\neq k_1}^d\Cov\big(\tr(\bbQ^{k_1k_1}(z_2))\tr(\bbQ^{sl_1}(z_1)\bbQ^{l_1s}(z_2)),\tr(\bbQ^{k_2l_2}(z_1)\bbQ^{l_2k_2}(z_2)\big)+\mrO(C_{\eta_0}N^{-\omega})\notag\\
			&=-\mcC_{k_1l_1,k_2l_2,N}^{(d)}\sum_{s\neq k_1}^d\mfm_s(z_1)-V_{k_1l_1,N}^{(d)}(z_1,z_2)\sum_{s\neq k_1}^d\mcC_{s,k_2l_2,N}^{(d)}\notag\\
			&-\mfm_{k_1}(z_2)\sum_{s\neq k_1}^d\mcC_{sl_1,k_2l_2,N}^{(d)}-\mcC_{k_1,k_2l_2,N}^{(d)}\sum_{s\neq k_1}^dV_{sl_1,N}^{(d)}(z_1,z_2)+\mrO(C_{\eta_0}N^{-\omega})\mcC_{k_2l_2,k_2l_2,N}^{(d)}\label{Eq of tightness 1 general d}
		\end{align}
		where we use the same trick as (\ref{Eq of covariance trick}) and $V_{sl_1,N}^{(d)}(z_1,z_2)=N^{-1}\mbE[\tr(\bbQ^{sl_1}(z_1)\bbQ^{l_1s}(z_2))]$.

    \vspace{5mm}
    \noindent
    {\bf Third derivatives:} When \(\alpha=3\), similar to proofs for \(\alpha=3\) in Theorem \ref{Thm of Tightness}, the only one contains major terms is
		\begin{align}
			&\frac{1}{\sqrt{N}}\sum_{i_1\cdots i_d}^{n_1\cdots n_d}\sum_{s\neq k_1}^d\mbE\big[\partial_{i_1\cdots i_d}^{(1)}\{\mcA_{i_1\cdots i_d}^{(k_1,s)}Q_{i_s\cdot}^{sl_1}(z_1)Q_{\cdot i_{k_1}}^{l_1k_1}(z_2)\}\partial_{i_1\cdots i_d}^{(2)}\{\tr(\bbQ^{k_2l_2}(\bar{z}_1)\bbQ^{l_2k_2}(\bar{z}_2))^c\}\big]\notag\\
			&=-2\mcW_{k_1l_1,k_2l_2,N}^{(d)}(z_1,z_2)+\mrO(C_{\eta_0}N^{-\omega}),\label{Eq of mcW tightness general d}
		\end{align}
		where
		\begin{align*}
			\mcW_{k_1l_1,k_2l_2,N}^{(d)}(z_1,z_2):=\mcW_{k_1l_1,k_2l_2,N}^{(1,d)}(z,\bar{z})+\mcW_{k_1l_1,k_2l_2,N}^{(2,d)}(z,\bar{z})+\mcW_{k_1l_1,k_2l_2,N}^{(3,d)}(z,\bar{z})
		\end{align*}
		and \(\mcW_{k_1l_1,k_2l_2,N}^{(1,d)}(z,\bar{z}),\mcW_{k_1l_1,k_2l_2,N}^{(2,d)}(z,\bar{z}),\mcW_{k_1l_1,k_2l_2,N}^{(3,d)}(z,\bar{z})\) will be defined in (\ref{Eq of mcW 22 1}), (\ref{Eq of mcW 22 3}) and (\ref{Eq of mcW 22 2}), respectively. Notice that
		\begin{align}
			&\partial_{i_1\cdots i_d}^{(2)}\{\tr(\bbQ^{k_2l_2}(\bar{z}_1)\bbQ^{l_2k_2}(\bar{z}_2))\}\notag\\
			&=\frac{2}{N}\sum_{\substack{t_1\neq t_2\\t_3\neq t_4}}^d\mcA_{i_1\cdots i_d}^{(t_1,t_2)}Q_{i_{t_2}i_{t_3}}^{t_2t_3}(\bar{z}_1)\mcA_{i_1\cdots i_d}^{(t_3,t_4)}Q_{i_{t_4}\cdot}^{t_4l_2}(\bar{z}_1)\bbQ^{l_2k_2}(\bar{z}_2)Q_{\cdot i_{t_1}}^{k_2t_1}(\bar{z}_1)\label{Eq of partial d 3 1}\\
			&+\frac{2}{N}\sum_{\substack{t_1\neq t_2\\t_3\neq t_4}}^dQ_{i_{t_4}\cdot}^{t_4k_2}(\bar{z})\bbQ^{k_2l_2}(\bar{z})Q_{\cdot i_{t_1}}^{l_2t_1}(\bar{z})\mcA_{i_1\cdots i_d}^{(t_1,t_2)}Q_{i_{t_2}i_{t_3}}^{t_2t_3}(\bar{z})\mcA_{i_1\cdots i_d}^{(t_3,t_4)}\label{Eq of partial d 3 2}\\
			&+\frac{2}{N}\sum_{\substack{t_1\neq t_2\\t_3\neq t_4}}^d\mcA_{i_1\cdots i_d}^{(t_1,t_2)}Q_{i_{t_2}\cdot}^{t_2l_2}(\bar{z}_1)Q_{\cdot i_{t_3}}^{l_2t_3}(\bar{z}_1)\mcA_{i_1\cdots i_d}^{(t_3,t_4)}Q_{i_{t_4}\cdot}^{t_4k_2}(\bar{z}_2)Q_{\cdot i_{t_1}}^{k_2t_1}(\bar{z}_2)\label{Eq of partial d 3 3},
		\end{align}
		then we combining (\ref{Eq of partial d 3 1}) with \(\partial_{i_1\cdots i_d}^{(1)}\{\mcA_{i_1\cdots i_d}^{(k_1,s)}Q_{i_s\cdot}^{sl_1}(z_1)Q_{\cdot i_{k_1}}^{l_1k_1}(z_2)\}\), by Lemmas \ref{Rem of extension minor 2}, \ref{Rem of extension of Corollary} and (\ref{Eq of mcB}), we have
		\begin{small}
			\begin{align}
				&\frac{1}{\sqrt{N}}\sum_{i_1\cdots i_d}^{n_1\cdots n_d}\sum_{\substack{s\neq k_1\\t_1\neq t_2\\t_3\neq t_4}}^d\mbE\big[\partial_{i_1\cdots i_d}^{(1)}\{\mcA_{i_1\cdots i_d}^{(k_1,s)}Q_{i_s\cdot}^{sl_1}(z_1)Q_{\cdot i_{k_1}}^{l_1k_1}(z_2)\}\mcA_{i_1\cdots i_d}^{(t_1,t_2)}Q_{i_{t_2}i_{t_3}}^{t_2t_3}(\bar{z}_1)\mcA_{i_1\cdots i_d}^{(t_3,t_4)}Q_{i_{t_4}\cdot}^{t_4l_2}(\bar{z}_1)\bbQ^{l_2k_2}(\bar{z}_2)Q_{\cdot i_{t_1}}^{k_2t_1}(\bar{z}_1)\big]\notag\\
				&=-\frac{1}{N^2}\sum_{i_1\cdots i_d}^{n_1\cdots n_d}\sum_{\substack{s\neq k_1\\t_1\neq t_2}}^d\mbE\big[(\mcA_{i_1\cdots i_d}^{(k_1,s)})^2Q_{i_si_s}^{ss}(z_1)Q_{i_{k_1}\cdot}^{k_1l_1}(z_1)Q_{\cdot i_{k_1}}^{l_1k_1}(z_2)(\mcA_{i_1\cdots i_d}^{(t_1,t_2)})^2Q_{i_{t_2}i_{t_2}}^{t_2t_2}(\bar{z}_1)Q_{i_{t_1}\cdot}^{t_1l_2}(\bar{z}_1)\bbQ^{l_2k_2}(\bar{z}_2)Q_{\cdot i_{t_1}}^{k_2t_1}(\bar{z}_1)\big]\notag\\
				&-\frac{1}{N^2}\sum_{i_1\cdots i_d}^{n_1\cdots n_d}\sum_{\substack{s\neq k_1\\t_1\neq t_2}}^d\mbE\big[(\mcA_{i_1\cdots i_d}^{(k_1,s)})^2Q_{i_s\cdot}^{sl_1}(z_1)Q_{\cdot i_s}^{l_1s}(z_2)Q_{i_{k_1}i_{k_1}}^{k_1k_1}(z_2)(\mcA_{i_1\cdots i_d}^{(t_1,t_2)})^2Q_{i_{t_2}i_{t_2}}^{t_2t_2}(\bar{z}_1)Q_{i_{t_1}\cdot}^{t_1l_2}(\bar{z}_1)\bbQ^{l_2k_2}(\bar{z}_2)Q_{\cdot i_{t_1}}^{k_2t_1}(\bar{z}_1)\big]+\mrO(C_{\eta_0}N^{-1/2})\notag\\
				&=-\frac{1}{N^2}\sum_{s\neq k_1}^d\mcB_{(4)}^{(k_1,s)}\mbE\big[\tr(\bbQ^{ss}(z_1)\circ\bbQ^{ss}(\bar{z}_2))\tr((\bbQ^{k_1l_1}(z_1)\bbQ^{l_1k_1}(z_2))\circ(\bbQ^{k_1l_2}(\bar{z}_1)\bbQ^{l_2k_2}(\bar{z}_2)\bbQ^{k_2k_1}(\bar{z}_1)))\big]\notag\\
				&-\frac{1}{N^2}\sum_{s\neq k_1}^d\mcB_{(4)}^{(k_1,s)}\mbE\big[\tr(\bbQ^{ss}(z_1)\circ(\bbQ^{sl_2}(\bar{z}_1)\bbQ^{l_2k_2}(\bar{z}_2)\bbQ^{k_2s}(\bar{z}_1)))\tr((\bbQ^{k_1l_1}(z_1)\bbQ^{l_1k_1}(z_2))\circ\bbQ^{k_1k_1}(\bar{z}_1))\big]\notag\\
				&-\frac{1}{N^2}\sum_{s\neq k_1}^d\mcB_{(4)}^{(k_1,s)}\mbE\big[\tr((\bbQ^{sl_1}(z_1)\bbQ^{l_1s}(z_2))\circ\bbQ^{ss}(\bar{z}_1))\tr(\bbQ^{k_1k_1}(z_2)\circ(\bbQ^{k_1l_2}(\bar{z}_1)\bbQ^{l_2k_2}(\bar{z}_2)\bbQ^{k_2k_1}(\bar{z}_1)))\big]\notag\\
				&-\frac{1}{N^2}\sum_{s\neq k_1}^d\mcB_{(4)}^{(k_1,s)}\mbE\big[\tr((\bbQ^{sl_1}(z_1)\bbQ^{l_1s}(z_2))\circ(\bbQ^{sl_2}(\bar{z}_1)\bbQ^{l_2k_2}(\bar{z}_2)\bbQ^{k_2s}(\bar{z}_1)))\tr(\bbQ^{k_1k_1}(z_2)\circ\bbQ^{k_1k_1}(\bar{z}_1))\big]+\mrO(C_{\eta_0}N^{-1/2})\notag\\
				:&=\mcW_{k_1l_1,k_2l_2,N}^{(1,d)}(z_1,z_2)+\mrO(C_{\eta_0}N^{-1/2}),\label{Eq of mcW 22 1}
			\end{align}
		\end{small}\noindent
		Similarly, combining (\ref{Eq of partial d 3 2}) with \(\partial_{i_1\cdots i_d}^{(1)}\{\mcA_{i_1\cdots i_d}^{(k_1,s)}Q_{i_s\cdot}^{sl_1}(z_1)Q_{\cdot i_{k_1}}^{l_1k_1}(z_2)\}\) will obtain the same result, just replace all \(\bar{z}_2,\bar{z}_1\) by \(\bar{z}_1,\bar{z}_2\) respectively, i.e.
		\begin{align}
			&\mcW_{k_1l_1,k_2l_2,N}^{(2,d)}(z_1,z_2)\label{Eq of mcW 22 3}\\
			:&=\frac{1}{N^2}\sum_{s\neq k_1}^d\mcB_{(4)}^{(k_1,s)}\mbE\big[\tr(\bbQ^{ss}(z_1)\circ\bbQ^{ss}(\bar{z}_2))\tr((\bbQ^{k_1l_1}(z_1)\bbQ^{l_1k_1}(z_2))\circ(\bbQ^{k_1l_2}(\bar{z}_2)\bbQ^{l_2k_2}(\bar{z}_1)\bbQ^{k_2k_1}(\bar{z}_2)))\big]\notag\\
			&+\frac{1}{N^2}\sum_{s\neq k_1}^d\mcB_{(4)}^{(k_1,s)}\mbE\big[\tr(\bbQ^{ss}(z_1)\circ(\bbQ^{sl_2}(\bar{z}_2)\bbQ^{l_2k_2}(\bar{z}_1)\bbQ^{k_2s}(\bar{z}_2)))\tr((\bbQ^{k_1l_1}(z_1)\bbQ^{l_1k_1}(z_2))\circ\bbQ^{k_1k_1}(\bar{z}_2))\big]\notag\\
			&+\frac{1}{N^2}\sum_{s\neq k_1}^d\mcB_{(4)}^{(k_1,s)}\mbE\big[\tr((\bbQ^{sl_1}(z_1)\bbQ^{l_1s}(z_2))\circ\bbQ^{ss}(\bar{z}_2))\tr(\bbQ^{k_1k_1}(z_2)\circ(\bbQ^{k_1l_2}(\bar{z}_2)\bbQ^{l_2k_2}(\bar{z}_1)\bbQ^{k_2k_1}(\bar{z}_2)))\big]\notag\\
			&+\frac{1}{N^2}\sum_{s\neq k_1}^d\mcB_{(4)}^{(k_1,s)}\mbE\big[\tr((\bbQ^{sl_1}(z_1)\bbQ^{l_1s}(z_2))\circ(\bbQ^{sl_2}(\bar{z}_1)\bbQ^{l_2k_2}(\bar{z}_1)\bbQ^{k_2s}(\bar{z}_2)))\tr(\bbQ^{k_1k_1}(z_1)\circ\bbQ^{k_1k_1}(\bar{z}_2))\big]\notag
		\end{align}
		For (\ref{Eq of partial d 3 3}), we directly list it as follows:
		\begin{align}
			&\mcW_{k_1l_1,k_2l_2,N}^{(3,d)}(z_1,z_2)\label{Eq of mcW 22 2}\\
			:&=\frac{1}{N^2}\sum_{s\neq k_1}^d\mcB_{(4)}^{(k_1,s)}\mbE\big[\tr(\bbQ^{ss}(z_1)\circ(\bbQ^{sl_2}(\bar{z}_1)\bbQ^{l_2s}(\bar{z}_1)))\tr((\bbQ^{k_1l_1}(z_1)\bbQ^{l_1k_1}(z_2))\circ(\bbQ^{k_1k_2}(\bar{z}_2)\bbQ^{k_2k_1}(\bar{z}_2)))\big]\notag\\
			&+\frac{1}{N^2}\sum_{s\neq k_1}^d\mcB_{(4)}^{(k_1,s)}\mbE\big[\tr(\bbQ^{ss}(z_1)\circ(\bbQ^{sk_2}(\bar{z}_2)\bbQ^{k_2s}(\bar{z}_2)))\tr((\bbQ^{k_1l_1}(z_1)\bbQ^{l_1k_1}(z_2))\circ(\bbQ^{k_1l_2}(\bar{z}_1)\bbQ^{l_2k_1}(\bar{z}_1)))\big]\notag\\
			&+\frac{1}{N^2}\sum_{s\neq k_1}^d\mcB_{(4)}^{(k_1,s)}\mbE\big[\tr((\bbQ^{sl_1}(z_1)\bbQ^{l_1s}(z_2))\circ(\bbQ^{sl_2}(\bar{z}_1)\bbQ^{l_2s}(\bar{z}_1)))\tr(\bbQ^{k_1k_1}(z_2)\circ(\bbQ^{k_1k_2}(\bar{z}_2)\bbQ^{k_2k_1}(\bar{z}_2)))\big]\notag\\
			&+\frac{1}{N^2}\sum_{s\neq k_1}^d\mcB_{(4)}^{(k_1,s)}\mbE\big[\tr((\bbQ^{sl_1}(z_1)\bbQ^{l_1s}(z_2))\circ(\bbQ^{sk_2}(\bar{z}_2)\bbQ^{k_2s}(\bar{z}_2)))\tr(\bbQ^{k_1k_1}(z_2)\circ(\bbQ^{k_1l_2}(\bar{z}_1)\bbQ^{l_2k_1}(\bar{z}_1)))\big].\notag
		\end{align}
	As a result, combining (\ref{Eq of mcV tightness general d}), (\ref{Eq of tightness 1 general d}) and (\ref{Eq of mcW tightness general d}), we obtain that
	\begin{align*}
		&(z_1+\mfm(z_1)-\mfm_{k_1}(z_1))\mcC_{k_1l_1,k_2l_2,N}^{(d)}=-\Big(\delta_{k_1l_1}+\sum_{s\neq k_1}^dV_{sl_1,N}^{(d)}(z_1,z_2)\Big)\mcC_{k_1,k_2l_2,N}^{(d)}-2\mcV_{k_1l_1,k_2l_2,N}^{(d)}(z_1,z_2)+\mrO(C_{\eta_0}N^{-\omega})\\
		&-\kappa_4\mcW_{k_1l_1,k_2l_2,N}^{(d)}(z_1,z_2)-V_{k_1l_1,N}^{(d)}(z_1,z_2)\sum_{s\neq k_1}^d\mcC_{s,k_2l_2,N}^{(d)}-\mfm_{k_1}(z_2)\sum_{s\neq k_1}^d\mcC_{sl_1,k_2l_2,N}^{(d)}+\mrO(C_{\eta_0}N^{-\omega})\mcC_{k_2l_2,k_2l_2,N}^{(d)}\\
		&:=-\mfm_{k_1}(z_2)\sum_{s\neq k_1}^d\mcC_{sl_1,k_2l_2,N}^{(d)}-\mcF_{k_1l_1,k_2l_2,N}^{(d)}(z_1,z_2)+\mrO(C_{\eta_0}N^{-\omega})\mcC_{k_2l_2,k_2l_2,N}^{(d)}+\mrO(C_{\eta_0}N^{-\omega}).
	\end{align*}
	Hence, for any \(k_2,l_2\in\{1,\cdots,d\}\), define
	\begin{align}
		\bbC_{k_2l_2,N}^{(d)}(z_1,z_2):=[\mcC_{k_1l_1,k_2l_2,N}^{(d)}(z_1,z_2)]_{d\times d}\quad{\rm and}\quad\bbF_{k_2l_2,N}^{(d)}(z,\bar{z}):=[\mcF_{k_1l_1,k_2l_2}^N(z,\bar{z})]_{d\times d},
	\end{align}
	we have
	\begin{align*}
		\bbTheta_N^{(d)}(z_1,z_2)\bbC_{k_2l_2,N}^{(d)}(z_1,z_2)=-\bbF_{k_2l_2,N}^{(d)}(z_1,z_2)+\boldsymbol{1}_{d\times d}\mrO(C_{\eta_0}N^{-\omega})\mcC_{k_2l_2,k_2l_2,N}^{(d)},
	\end{align*}
	where \(\bbTheta_N^{(d)}\) is defined in (\ref{Eq of bbPi N}). Notice that 
	$$\Vert\boldsymbol{1}_{d\times d}\mrO(C_{\eta_0}N^{-\omega})\mcC_{k_2l_2,k_2l_2,N}^{(d)}\Vert\leq\sqrt{d}C_{\eta_0}N^{-\omega}\Vert\bbC_{k_2l_2,N}(z_1,z_2)\Vert,$$
	so we can use the same argument as those in Theorem \ref{Thm of covariance general d} to derive that
	$$\lim_{N\to\infty}\Vert\bbC_{k_2l_2,N}^{(d)}(z_1,z_2)\Vert\leq\lim_{N\to\infty}\Vert\bbPi^{(d)}(z_1,z_2)^{-1}\diag(\mfc^{-1}\circ\bbg(z_1))\Vert\cdot\Vert\bbF_{k_2l_2,N}^{(d)}(z_1,z_2)\Vert\leq C_{\eta_0,\mfc,d},$$
	which suggests that all entries of \(\bbC_{k_2l_2,N}^{(d)}(z_1,z_2)\) are bounded by \(C_{\eta_0,\mfc,d}\).
\end{proof}
\subsubsection{Characteristic function}
\begin{thm}\label{Thm of CLT general d}
	Under Assumptions {\rm \ref{Ap of general noise}} and {\rm \ref{Ap of dimension}}, \(\tr(\boldsymbol{Q}(z))-\mathbb{E}[\tr(\boldsymbol{Q}(z))]\) converges weakly to a Gaussian random process on \(z\in\mathcal{S}_{\eta_0}\) in {\rm (\ref{Eq of stability region general d})}.
\end{thm}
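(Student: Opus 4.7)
The plan is to follow verbatim the framework established in the proof of Theorem \ref{Thm of CLT}, replacing each $d=3$ combinatorial lemma with its general-$d$ counterpart (Lemmas \ref{Thm of a.s. convergence}, \ref{Rem of extension minor 2}, \ref{Rem of extension of Corollary}) and using the system equations and invertibility results proved in Theorems \ref{Thm of covariance general d} and \ref{Thm of mean general d}. Concretely, set $\gamma_l(z) := \tr(\bbQ^{ll}(z)) - \mathbb{E}[\tr(\bbQ^{ll}(z))]$ for $l = 1, \ldots, d$ and $\gamma(z) = \sum_{l=1}^d \gamma_l(z)$, and for $q \in \mathbb{N}^+$, $\bbt_q \in \mathbb{R}^q$, $\bbtau_q \in \{1,\ii\}^q$, $\bbz_q \in \mcS_{\eta_0}^q$ define $e_q$ exactly as in \eqref{Eq of eq}. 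Combined with Theorem \ref{Thm of Tightness general d}, it suffices to show that for every fixed $q$ and $s \in \{1,\ldots,q\}$, there exist coefficients $A_{sw}(\bbtau_q, \bbz_q)$ such that
\begin{equation*}
\lim_{N \to \infty} \Big| \mathbb{E}\big[ e_q (\mathfrak{a}(\tau_s) \gamma(z_s) + \mathfrak{b}(\tau_s) \gamma(\bar{z}_s)) \big] + \mathbb{E}[e_q] \sum_{w=1}^q t_w A_{sw} \Big| = 0,
\end{equation*}
since this identifies $\partial_{t_s} \mathbb{E}[e_q]$ with the derivative of the characteristic function of a centred Gaussian vector and determines its covariance.

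The core computation is to expand each $\mcC_{r,e,N}^{(d)}(z) := \Cov(\tr(\bbQ^{rr}(z)), e_q)$ using the resolvent identity $z \bbQ(z) = \bbM \bbQ(z) - \bbI_N$ together with the cumulant expansion \eqref{Eq of cumulant expansion}. I would then classify the resulting terms by the order of the derivative, and use Lemma \ref{Rem of extension minor 2} (resp.\ Lemma \ref{Rem of extension of Corollary}) to discard those summands in $\mathscr{O}(\partial^{(l)} \cdot)$ and, when $\partial^{(l)}$ hits both $F_{i_1\cdots i_d}^r$ and $e_q$, those Cauchy-bounded mixed products that carry at least one off-diagonal factor. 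As in the $d=3$ case, only $l=1$ (producing the $\mcV_{r,e,N}^{(d)}$ term and the self-consistent linear combination $\mfm_r(z) \sum_{l \neq r} \mcC_{l,e,N}^{(d)} + \mcC_{r,e,N}^{(d)} \sum_{l \neq r} \mfm_l(z)$) and $l=3$ (producing the $\kappa_4 \mcW_{r,e,N}^{(d)}$ term from the diagonal contribution $A_2$, while the squared off-diagonal contribution $A_1$ is swept away via Cauchy--Schwarz plus Lemma \ref{Rem of extension of Corollary}) contribute to the limit; the $l=2$ and $l=4$ remainders are minor exactly as in Theorem \ref{Thm of CLT}.

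Setting $\mcF_{l,e,N}^{(d)}(z,z_s) := \mcV_{l,e,N}^{(d)}(z,z_s) + \kappa_4 \mcW_{l,e,N}^{(d)}(z,z_s)$, this yields the linear system
\begin{equation*}
(z + \mfm(z) - \mfm_r(z)) \mcC_{r,e,N}^{(d)} = - \mfm_r(z) \sum_{l \neq r} \mcC_{l,e,N}^{(d)} - \ii \mathbb{E}[e_q] \sum_{s=1}^q t_s \big( \mathfrak{a}(\tau_s) \mcF_{r,e,N}^{(d)}(z,z_s) + \mathfrak{b}(\tau_s) \mcF_{r,e,N}^{(d)}(z,\bar{z}_s) \big) + \mathrm{o}(1),
\end{equation*}
which in matrix form reads $\bbTheta_N^{(d)}(z,z) \bbC_{e,N}^{(d)}(z) = -\ii \mathbb{E}[e_q] \sum_s t_s [\cdots] + \mathrm{o}(\mathbf{1}_{d\times 1})$. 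Theorem \ref{Thm of covariance general d} (via \eqref{Eq of approximation general d} and Proposition \ref{Pro of invertible matrices}) gives $\lim_N \| \bbTheta_N^{(d)}(z,z)^{-1} + \bbPi^{(d)}(z,z)^{-1} \diag(\mfc^{-1} \circ \bbg(z))\| = 0$ with $\|\bbTheta_N^{(d)}(z,z)^{-1}\| \leq C_{\eta_0, \mfc}$, so inverting and summing over $r$ produces the required identification with the Gaussian characteristic function.

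The main obstacle is the bookkeeping at $l = 3$: one must verify that the squared-sum term $A_1$ appearing in $\partial_{i_1\cdots i_d}^{(2)} e_q^c$, which is a priori a sum of products of off-diagonal resolvent entries, genuinely contributes $\mrO(C_{\eta_0} N^{-1/2})$ after summation against $\partial^{(1)} F_{i_1\cdots i_d}^r$ despite ranging over all $d \choose 2$ block-index pairs rather than just three. The Cauchy--Schwarz trick used in the $d=3$ case still applies, but one needs Lemma \ref{Rem of extension of Corollary} in its full generality to handle the resulting sums: when both $P_1$ and $P_2$ carry off-diagonal factors the product is minor, and otherwise the surviving diagonal-only $A_2$ piece yields exactly $\kappa_4 \mcW_{r,e,N}^{(d)}$ through \eqref{Eq of mcW limit general d}. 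Once this is established, the invertibility and uniform norm bounds on $\bbTheta_N^{(d)}(z,z)^{-1}$ carry the rest of the argument through routinely, and combining with Theorem \ref{Thm of Tightness general d} yields the asserted weak convergence to a Gaussian process on $\mcS_{\eta_0}$.
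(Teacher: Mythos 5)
Your proposal follows the paper's own argument essentially verbatim: it uses the same $e_q$ test functions, the same cumulant expansion of $\Cov(\tr(\bbQ^{rr}(z)),e_q)$, the same dichotomy at $\alpha=3$ between the squared off-diagonal piece $A_1$ (discarded via Cauchy--Schwarz together with Lemma \ref{Rem of extension of Corollary}) and the diagonal piece $A_2$ (yielding $\kappa_4\mcW_{r,e,N}^{(d)}$), and the same inversion of $\bbTheta_N^{(d)}(z,z)$ to identify the Gaussian characteristic function. The only discrepancy is a constant: the paper defines $\mcF_{l,e,N}^{(d)}=2\mcV_{l,e,N}^{(d)}+\kappa_4\mcW_{l,e,N}^{(d)}$ (matching the factor of $2$ produced by the $\alpha=1$ term), whereas your sketch drops the $2$; this is a bookkeeping slip, not a conceptual gap.
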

\begin{proof}
	Recall the following notations in Theorem \ref{Thm of CLT}:
	\begin{align}
		\gamma(z):=\sum_{l=1}^d\gamma_l(z):=\sum_{l=1}^d\tr(\boldsymbol{Q}^{ll}(z))^c,\quad(\mathfrak{a}(\tau),\mathfrak{b}(\tau)):=\left\{\begin{array}{cc}
			(1/2,1/2)&\tau=1\\(1/2{\rm i},-1/2{\rm i})&\tau={\rm i}
		\end{array}\right.,\notag
	\end{align}
	where \(l\in\{1,\cdots,d\}\). Given \(q\in\mbN^+\), let \(\bbz_q:=(z_1,\cdots,z_q)',\bbtau_q:=(\tau_1,\cdots,\tau_q)'\) and \(\bbt_q:=(t_1,\cdots,t_q)'\), where \(z_s\in\mcS_{\eta_0},\tau_s\in\{1,{\rm i}\}\) and \(t_s\in\mbR\) respectively, define
	\begin{align}
		e_q:=e_q(\bbt_q,\bbtau_q,\bbz_q):=\exp\Big({\rm i}\sum_{s=1}^qt_s[\mfa(\tau_s)\gamma(z_s)+\mfb(\tau_s)\gamma(\bar{z}_s)]\Big),
	\end{align}
	so the characteristic function is \(\mbE[e_q]\). Notice that
	$$\frac{\partial}{\partial t_s}\mathbb{E}[e_q]={\rm i}\mathbb{E}\left[e_q\left(\mathfrak{a}(\tau_s)\gamma(z_s)+\mathfrak{b}(\tau_s)\gamma(\bar{z}_s)\right)\right],$$
	and we will show that there exists a set of covariance coefficients \(A_{rw}\) such that for each fixed \(\bbt_q\)
	$$\lim_{N\to\infty}\Big|\mathbb{E}\left[e_q\left(\mathfrak{a}(\zeta_s)\gamma(z_s)+\mathfrak{b}(\zeta_s)\gamma(\bar{z}_s)\right)\right]+\mathbb{E}[e_q]\sum_{r=1}^qt_rA_{rs,N}\Big|=0.$$
	For any \(z\in\mcS_{\eta_0}\), by the cumulant expansion (\ref{Eq of cumulant expansion}), we have
	\begin{align*}
		&z\mathbb{E}[e_q\gamma_l(z)]=z\Cov(\gamma_l(z),e_q)=\frac{1}{\sqrt{N}}\sum_{i_1\cdots i_d}^{n_1\cdots n_d}\sum_{r\neq l}^d\mathbb{E}\big[X_{i_1\cdots i_d}\mcA_{i_1\cdots i_d}^{(l,r)}Q_{i_ri_l}^{rl}(z)e_q^c\big]\\
		&=\frac{1}{\sqrt{N}}\sum_{i_1\cdots i_d}^{n_1\cdots n_d}\Big(\sum_{r\neq l}^d\sum_{\alpha=0}^3\frac{\kappa_{\alpha+1}}{\alpha!}\mbE\big[\partial_{i_1\cdots i_d}^{(\alpha)}\{\mcA_{i_1\cdots i_d}^{(l,r)}Q_{i_ri_l}^{rl}(z)e_q^c\}\big]+\epsilon_{i_1\cdots i_d}^{(4)}\Big).
	\end{align*}
	Similar to proofs of Theorem \ref{Thm of CLT}, only the cases of \(\alpha=1,3\) contain the major terms:

    \vspace{5mm}
    \noindent
    {\bf First derivatives:} When \(\alpha=1\), since
		\begin{align*}
			&\frac{1}{\sqrt{N}}\sum_{i_1\cdots i_d}^{n_1\cdots n_d}\sum_{r\neq l}^d\mbE\big[\partial_{i_1\cdots i_d}^{(1)}\{\mcA_{i_1\cdots i_d}^{(l,r)}Q_{i_ri_l}^{rl}(z)\}e_q^c\big]\\
			&=-\frac{1}{N}\sum_{i_1\cdots i_d}^{n_1\cdots n_d}\sum_{r\neq l}^d\sum_{s_1\neq s_2}^d\mbE\big[\mcA_{i_1\cdots i_d}^{(l,r)}Q_{i_ri_{s_1}}^{rs_1}(z)\mcA_{i_1\cdots i_d}^{(s_1,s_2)}Q_{i_{s_2}i_l}^{s_2l}(z)e_q^c\big]\\
			&=-\frac{1}{N}\sum_{r\neq l}^d\Cov(\tr(\bbQ^{rr}(z))\tr(\bbQ
			^{ll}(z)),e_q)+\mrO(C_{\eta_0}N^{-1/2})\\
			&=-\Cov(\gamma_l(z),e_q)\sum_{r\neq l}^d\mfm_r(z)-\mfm_l(z)\sum_{r\neq l}^d\Cov(\gamma_r(z),e_q)+\mrO(C_{\eta_0}N^{-\omega}),
		\end{align*}
		where we use the fact that \(|e_q|\leq1\). And
		\begin{align*}
			&\partial_{i_1\cdots i_d}^{(1)}\{e_q\}=-\frac{{\rm i} e_q}{\sqrt{N}}\sum_{s=1}^q\sum_{w=1}^d\sum_{s_1\neq s_2}^dt_s\mcA_{i_1\cdots i_d}^{(s_1,s_2)}\big[\mfa(\tau_s)Q_{i_{s_1}\cdot}^{s_1w}(z_s)Q_{\cdot i_{s_2}}^{ws_2}(z_s)+\mfb(\tau_s)Q_{i_{s_1}\cdot}^{s_1w}(\bar{z}_s)Q_{\cdot i_{s_2}}^{ws_2}(\bar{z}_s)\big],
		\end{align*}
		then
		\begin{align*}
			&\frac{1}{\sqrt{N}}\sum_{i_1\cdots i_d}^{n_1\cdots n_d}\sum_{r\neq l}^d\mbE\big[\mcA_{i_1\cdots i_d}^{(l,r)}Q_{i_ri_l}^{rl}(z)\partial_{i_1\cdots i_d}^{(1)}\{e_q\}\big]\\
			&=-\frac{2{\rm i}}{N}\sum_{s=1}^q\sum_{w=1}^d\sum_{r\neq l}^dt_s\mbE\big[\tr(\bbQ^{rl}(z)[\mfa(\tau_s)\bbQ^{lw}(z_s)\bbQ^{wr}(z_s)+\mfb(\tau_s)\bbQ^{lw}(\bar{z}_s)\bbQ^{wr}(\bar{z}_s)])e_q\big]+\mrO(C_{\eta_0}N^{-1/2}).
		\end{align*}
		Moreover, by Theorem \ref{Thm of a.s. convergence}, we can obtain that
		\begin{align*}
			\Cov(N^{-1}\tr(\bbQ^{rw}(z)\bbQ^{wl}(z_s)\bbQ^{lr}(z_s)),e_q)=\mrO(C_{\eta_0}N^{-\omega}),
		\end{align*}
		so it implies that
		\begin{align}
			&\frac{1}{\sqrt{N}}\sum_{i_1\cdots i_d}^{n_1\cdots n_d}\sum_{r\neq l}^d\mbE\big[\partial_{i_1\cdots i_d}^{(1)}\{\mcA_{i_1\cdots i_d}^{(l,r)}Q_{i_ri_l}^{rl}(z)e_q^c\}\big]\label{Eq of characteristic function d 1}\\
			&=-\Cov(\gamma_l(z),e_q)\sum_{r\neq l}^d\mfm_r(z)-\mfm_l(z)\sum_{r\neq l}^d\Cov(\gamma_r(z),e_q)\notag\\
			&-\frac{2{\rm i}\mbE[e_q]}{N}\sum_{s=1}^q\sum_{w=1}^d\sum_{r\neq l}^dt_s\mbE\big[\tr(\bbQ^{rl}(z)[\mfa(\tau_s)\bbQ^{lw}(z_s)\bbQ^{wr}(z_s)+\mfb(\tau_s)\bbQ^{lw}(\bar{z}_s)\bbQ^{wr}(\bar{z}_s)])\big]+\mrO(C_{\eta_0}N^{-\omega})\notag\\
			:&=-\Cov(\gamma_l(z),e_q)\sum_{r\neq l}^d\mfm_r(z)-\mfm_l(z)\sum_{r\neq l}^d\Cov(\gamma_r(z),e_q)\notag\\
			&-2{\rm i}\mbE[e_q]\sum_{s=1}^qt_s\big[\mfa(\tau_s)\mcV_{l,e,N}^{(d)}(z,z_s)+\mfb(\tau_s)\mcV_{l,e,N}^{(d)}(z,\bar{z}_s)\big]+\mrO(C_{\eta_0}N^{-\omega}).\notag
		\end{align}
    {\bf Third derivatives:} When \(\alpha=3\), the only case contains major terms is 
		\begin{align*}
			&\frac{1}{\sqrt{N}}\sum_{i_1\cdots i_d}^{n_1\cdots n_d}\sum_{r\neq l}^d\mbE\big[\partial_{i_1\cdots i_d}^{(1)}\{\mcA_{i_1\cdots i_d}^{(l,r)}Q_{i_ri_l}^{rl}(z)\}\partial_{i_1\cdots i_d}^{(2)}\{e_q\}\big],
		\end{align*}
		since
		\begin{align*}
			&\partial_{i_1\cdots i_d}^{(2)}\{e_q\}=-\frac{e_q}{N}\Big(\sum_{s=1}^q\sum_{w=1}^d\sum_{s_1\neq s_2}^dt_s\mcA_{i_1\cdots i_d}^{(s_1,s_2)}[\mfa(\tau_s)Q_{i_{s_1}\cdot}^{s_1w}(z_s)Q_{\cdot i_{s_2}}^{ws_2}(z_s)+\mfb(\tau_s)Q_{i_{s_1}\cdot}^{s_1w}(\bar{z}_s)Q_{\cdot i_{s_2}}^{ws_2}(\bar{z}_s)]\Big)^2+\frac{2{\rm i} e_q}{N}\\
			&\times\sum_{s=1}^q\sum_{w=1}^d\sum_{\substack{s_1\neq s_2\\s_3\neq s_4}}^dt_s\mcA_{i_1\cdots i_d}^{(s_1,s_2)}\mcA_{i_1\cdots i_d}^{(s_3,s_4)}[\mfa(\tau_s)Q_{i_{s_1}i_{s_3}}^{s_1s_3}(z_s)Q_{i_{s_4}\cdot}^{s_4w}(z_s)Q_{\cdot i_{s_2}}^{ws_2}(z_s)+\mfb(\tau_s)Q_{i_{s_1}i_{s_3}}^{s_1s_3}(\bar{z}_s)Q_{i_{s_4}\cdot}^{s_4w}(\bar{z}_s)Q_{\cdot i_{s_2}}^{ws_2}(\bar{z}_s)],
		\end{align*}
		for the previous term, since it only contains the off-diagonal terms, by Lemma \ref{Rem of extension of Corollary}, if it associate with \(\partial_{i_1\cdots i_d}^{(1)}\{\mcA_{i_1\cdots i_d}^{(l,r)}Q_{i_ri_l}^{rl}(z)\}\), the summation over all \(i_1\cdots i_d\) will be minor. For the later one, by Lemma \ref{Rem of extension minor 2} and (\ref{Eq of mcB}), we have
		\begin{align*}
			&\frac{1}{N^{3/2}}\sum_{i_1\cdots i_d}^{n_1\cdots n_d}\sum_{w=1}^d\sum_{r\neq l}^d\sum_{\substack{s_1\neq s_2\\s_3\neq s_4}}^d\mbE\big[e_q\partial_{i_1\cdots i_d}^{(1)}\{\mcA_{i_1\cdots i_d}^{(l,r)}Q_{i_ri_l}^{rl}(z)\}\mcA_{i_1\cdots i_d}^{(s_1,s_2)}\mcA_{i_1\cdots i_d}^{(s_3,s_4)}Q_{i_{s_1}i_{s_3}}^{s_1s_3}(z_s)Q_{i_{s_4}\cdot}^{s_4w}(z_s)Q_{\cdot i_{s_2}}^{ws_2}(z_s)\big]\\
			&=-\frac{1}{N^2}\sum_{i_1\cdots i_d}^{n_1\cdots n_d}\sum_{w=1}^d\sum_{r\neq l}^d\sum_{\substack{s_1\neq s_2\\s_3\neq s_4\\s_5\neq s_6}}^d\mbE\big[e_q\mcA_{i_1\cdots i_d}^{(l,r)}Q_{i_ri_{s_5}}^{rs_5}(z)\mcA_{i_1\cdots i_d}^{(s_5,s_6)}Q_{i_{s_6}i_l}^{s_6l}(z)\mcA_{i_1\cdots i_d}^{(s_1,s_2)}\mcA_{i_1\cdots i_d}^{(s_3,s_4)}Q_{i_{s_1}i_{s_3}}^{s_1s_3}(z_s)Q_{i_{s_4}\cdot}^{s_4w}(z_s)Q_{\cdot i_{s_2}}^{ws_2}(z_s)\big]\\
			&=-\frac{1}{N^2}\sum_{w=1}^d\sum_{r\neq l}^d\mcB_{(4)}^{(l,r)}\mbE[e_q\tr(\bbQ^{rr}(z)\circ\bbQ^{rr}(z_s))\tr(\bbQ^{ll}(z)\circ(\bbQ^{lw}(z_s)\bbQ^{wl}(z_s)))]\\
			&-\frac{1}{N^2}\sum_{w=1}^d\sum_{r\neq l}^d\mcB_{(4)}^{(l,r)}\mbE[e_q\tr(\bbQ^{rr}(z)\circ(\bbQ^{rw}(z_s)\bbQ^{wr}(z_s)))\tr(\bbQ^{ll}(z)\circ\bbQ^{ll}(z_s))]+\mrO(C_{\eta_0}N^{-1/2}).
		\end{align*}
		Similarly, by Theorem \ref{Thm of a.s. convergence}, we have
		\begin{align*}
			\Cov(N^{-1}\tr(\bbQ^{rr}(z)\circ\bbQ^{rr}(z_s))N^{-1}\tr(\bbQ^{ll}(z)\circ(\bbQ^{lw}(z_s)\bbQ^{wl}(z_s))),e_q)=\mrO(C_{\eta_0}N^{-\omega}),
		\end{align*}
		and as does the other one. For simplicity, denote
		\begin{align*}
			&\frac{1}{N^2}\sum_{s=1}^qt_s\sum_{w=1}^d\sum_{r\neq l}^d\mcB_{(4)}^{(l,r)}\big(\mfa(\tau_s)\mbE[\tr(\bbQ^{rr}(z)\circ\bbQ^{rr}(z_s))\tr(\bbQ^{ll}(z)\circ(\bbQ^{lw}(z_s)\bbQ^{wl}(z_s)))]\\
			&+\mfb(\tau_s)\mbE[\tr(\bbQ^{rr}(z)\circ\bbQ^{rr}(\bar{z}_s))\tr(\bbQ^{ll}(z)\circ(\bbQ^{lw}(\bar{z}_s)\bbQ^{wl}(\bar{z}_s)))]\\
			&+\mfa(\tau_s)\mbE[\tr(\bbQ^{rr}(z)\circ(\bbQ^{rw}(z_s)\bbQ^{wr}(z_s)))\tr(\bbQ^{ll}(z)\circ\bbQ^{ll}(z_s))]\\
			&+\mfb(\tau_s)\mbE[\tr(\bbQ^{rr}(z)\circ(\bbQ^{rw}(\bar{z}_s)\bbQ^{wr}(\bar{z}_s)))\tr(\bbQ^{ll}(z)\circ\bbQ^{ll}(\bar{z}_s))]\big)\\
			&:=\sum_{s=1}^qt_s\big[\mfa(\tau_s)\mcW_{l,e,N}^{(d)}(z,z_s)+\mfb(\tau_s)\mcW_{l,e,N}^{(d)}(z,\bar{z}_s)\big],
		\end{align*}
		then we obtain that
		\begin{align}
			&\frac{1}{\sqrt{N}}\sum_{i_1\cdots i_d}^{n_1\cdots n_d}\sum_{r\neq l}^d\mbE\big[\partial_{i_1\cdots i_d}^{(1)}\{\mcA_{i_1\cdots i_d}^{(l,r)}Q_{i_ri_l}^{rl}(z)\}\partial_{i_1\cdots i_d}^{(2)}\{e_q\}\big]\notag\\
			&=-2{\rm i}\mbE[e_q]\sum_{s=1}^qt_s\big[\mfa(\tau_s)\mcW_{l,e,N}^{(d)}(z,z_s)+\mfb(\tau_s)\mcW_{l,e,N}^{(d)}(z,\bar{z}_s)\big]+\mrO(C_{\eta_0}N^{-\omega}).\label{Eq of characteristic function d 3}
		\end{align}
	Now, define
	\begin{align*}
		\mcC_{l,e,N}^{(d)}(z):=\Cov(\gamma_l(z),e_q)\quad{\rm and}\quad\mcF_{l,e,N}^{(d)}(z,z_s):=2\mcV_{l,e,N}^{(d)}(z,z_s)+\kappa_4\mcW_{l,e,N}^{(d)}(z,z_s)
	\end{align*}
	and
	\begin{align*}
		&\bbC_{e,N}^{(d)}(z):=(\mcC_{1,e,N}^{(d)}(z),\cdots,\mcC_{d,e,N}^{(d)}(z))',\quad\bbF_{e,N}^{(d)}(z,z_s):=(\mcF_{1,e,N}^{(d)}(z,z_s),\cdots,\mcF_{d,e,N}^{(d)}(z,z_s))'.
	\end{align*}
	Combining (\ref{Eq of characteristic function d 1}) and (\ref{Eq of characteristic function d 3}), we obtain that
	\begin{align*}
		&(z+\mfm(z)-\mfm_l(z))\mcC_{l,e,N}^{(d)}(z)=-\mfm_l(z)\sum_{r\neq l}^d\mcC_{r,e,N}^{(d)}(z)\\
		&-{\rm i}\mbE[e_q]\sum_{s=1}^qt_s\big[\mfa(\tau_s)\mcF_{l,e,N}^{(d)}(z,z_s)+\mfb(\tau_s)\mcF_{l,e,N}^{(d)}(z,\bar{z}_s)\big]+\mrO(C_{\eta_0}N^{-\omega}),
	\end{align*}
	then we obtain that
	\begin{align*}
		\bbTheta_N^{(d)}(z,z)\bbC_{e,N}^{(d)}(z)=-{\rm i}\mbE[e_q]\sum_{s=1}^qt_s\big[\mfa(\tau_s)\bbF_{e,N}^{(d)}(z,z_s)+\mfb(\tau_s)\bbF_{e,N}^{(d)}(z,\bar{z}_s)\big]+\mro(\boldsymbol{1}_d),
	\end{align*}
	where \(\bbTheta_N^{(d)}(z,z)\) is defined in (\ref{Eq of bbPi N}) and it is invertible such that
	$$\lim_{N\to\infty}\Vert\bbTheta_N^{(d)}(z,z)^{-1}+\bbPi^{(d)}(z,z)^{-1}\diag(\mfc^{-1}\circ\bbg(z))\Vert=0.$$
	Hence, we obtain that
	\begin{align*}
		\bbC_{e,N}^{(d)}(z)={\rm i}\mbE[e_q]\sum_{s=1}^qt_s\bbPi^{(d)}(z,z)^{-1}\diag(\mfc^{-1}\circ\bbg(z))\big[\mfa(\tau_s)\bbF_{e,N}^{(d)}(z,z_s)+\mfb(\tau_s)\bbF_{e,N}^{(d)}(z,\bar{z}_s)\big]+\mro(\boldsymbol{1}_d),
	\end{align*}
	According to \S\ref{Subsec of general Major terms}, we can derive the limiting expressions of \(\mcV_{l,e,N}^{(d)}(z,z_s),\mcW_{l,e,N}^{(d)}(z,z_s)\), so  we have
	\begin{align*}
		&\mbE[e_q(\mfa(\tau_s)\gamma(z_s)+\mfb(\tau_s)\gamma(\bar{z}_s))]=\sum_{l=1}^d\mbE[e_q(\mfa(\tau_s)\gamma_l(z_s)+\mfb(\tau_s)\gamma_l(\bar{z}_s))]\\
		&=\sum_{l=1}^d\mfa(\tau_s)\Cov(\gamma_l(z_s),e_q)+\mfb(\tau_s)\Cov(\gamma_l(\bar{z}_s),e_q)\\
		&={\rm i}\mbE[e_q]\sum_{r=1}^qt_r\Big(\mfa(\tau_s)\boldsymbol{1}_d'\bbPi^{(d)}(z_s,z_s)^{-1}\diag(\mfc^{-1}\circ\bbg(z_s))\big[\mfa(\tau_r)\bbF_{e,N}^{(d)}(z_s,z_r)+\mfb(\tau_r)\bbF_{e,N}^{(d)}(z_s,\bar{z}_r)\big]\\
		&+\mfb(\tau_s)\boldsymbol{1}_d'\bbPi^{(d)}(\bar{z}_s,\bar{z}_s)^{-1}\diag(\mfc^{-1}\circ\bbg(\bar{z}_s))\big[\mfa(\tau_r)\bbF_{e,N}^{(d)}(\bar{z}_s,z_r)+\mfb(\tau_r)\bbF_{e,N}^{(d)}(\bar{z}_s,\bar{z}_r)\big]\Big)+\mro(1)\\
		:&={\rm i}\mbE[e_q]\sum_{r=1}^qt_rA_{rs,N}+\mro(1),
	\end{align*}
	which completes our proof.
\end{proof}
\subsubsection{Proof of Theorem \ref{Thm of general CLT LSS}}
	By Theorem \ref{Thm of Extreme eigenvalue N d=3}, it yields that
	\begin{align*}
		G_N(f)\overset{\mbP}{\longrightarrow}G_N(f)1_{\Vert\bbM\Vert\leq\mfv_d+t}=-\frac{1}{2\pi{\rm i}}\oint_{\mfC}f(z)(\tr(\bbQ(z))-Ng(z))dz,
	\end{align*}
	where $t>0$ is a fixed constant. Next, we split \(\mfC:=\mfC(\eta_0):=\mfC^h\cup\mfC^v\) by
	\begin{align*}
		\mfC^h:=\{z=E\pm{\rm i}\eta_0\in\mbC:|E|\leq E_0\},\quad\mfC^v:=\{z=\pm E_0+{\rm i}\eta\in\mbC:|\eta|\leq\eta_0\}.
	\end{align*}
	In other words, \(\mcC\) is a rectangular contour with vertex of \(\pm E_0\pm{\rm i}\eta_0\), where \(E_0>\max\{\mfv_d,\zeta\}+t\) and \(t>0\) is a constant. According to Theorems \ref{Thm of covariance general d}, \ref{Thm of mean general d} and \ref{Thm of CLT general d}, we have shown that $\tr(\bbQ(z))-Ng(z)$ is a Gaussian process with mean of $\mu_N^{(d)}(z)$ and variance of $\mcC_N^{(d)}(z,z)$, which further implies that
	\begin{align*}
		-\frac{1}{2\pi{\rm i}}\oint_{\mfC^h}f(z)(\tr(\bbQ(z))-Ng(z))dz/\sigma_N^{(d)}-\xi_N^{(d)}/\sigma_N^{(d)}\overset{d}{\longrightarrow}\mcN(0,1).
	\end{align*}
	Next, let us  show that
	\begin{align*}
		\lim_{\eta_0\downarrow0}\limsup_{N\to\infty}\oint_{\mfC^v}f(z)(\tr(\bbQ(z))-Ng(z))dz\overset{\mbP}{\longrightarrow}0.
	\end{align*}
	It is enough to show that
	\begin{align*}
		\lim_{\eta_0\downarrow0}\limsup_{N\to\infty}\oint_{\mfC^v}\mbE\big[|f(z)(\tr(\bbQ(z))-Ng(z))|^2\big]dz=0.
	\end{align*}
	Let us  first show that
	\begin{align*}
		\lim_{\eta_0\downarrow0}\limsup_{N\to\infty}\oint_{\mfC^v}\big|f(z)(\mbE[\tr(\bbQ(z))]-Ng(z))\big|^2dz=0.
	\end{align*}
	According to Theorem \ref{Thm of mean general d}, we know that
	\begin{align*}
		\mbE[\tr(\bbQ(z))]-Ng(z)=\mu_N^{(d)}(z)+\mrO(C_tN^{-\omega}).
	\end{align*}
	In fact, by the definition of \(\bbg(z)\) in (\ref{Eq of MDE 3 order}), it is easy to see \(g_i(z)\) are analytic on \(\mfC^v\) for all \(i=1,\cdots,d\), as does the entries \(\bbg(z)\) and \(\bbPi^{(d)}(z,z)\) due to their definitions in (\ref{Eq of MDE 3 order}) and (\ref{Eq of invertible 2}). Moreover, we have shown that \(\bbPi^{(d)}(z,z)\) is invertible in Lemma \ref{Lem of analytic} by (\ref{Eq of Mi general d}). The mean function \(\mu_N^{(d)}(z)\) only depends on \(\bbg(z)\) since \(\bbW^{(d)}(z),\bbV^{(d)}(z,z),G_{i,N}^{(d)}(z)\) and \(H_{i,N}^{(d)}(z)\) also depend on \(\bbg(z)\) by their definitions in (\ref{Eq of bbW general}) (\ref{Eq of bbV}), (\ref{Eq of H2 general d}) and (\ref{Eq of H3 general d}), as does the covariance function \(\mcC_N^{(d)}(z_1,z_2)\) in (\ref{Eq of covariance function general d}), implies that \(\mu_N^{(d)}(z)\) and \(\mcC_N^{(d)}(z_1,z_2)\) are analytic on \(\mfC^v\), thus, combined with the fact \(f\in\mathfrak{F}_d\) is analytic on \(\mfC^v\), we have
	\begin{align*}
		\lim_{\eta_0\downarrow0}\limsup_{N\to\infty}\oint_{\mfC^v}\big|(f(z)\mbE[\tr(\bbQ(z))]-Ng(z))\big|^2dz=0.
	\end{align*}
	According to Theorem \ref{Thm of covariance general d}, we know that \(\Var(\tr(\bbQ(z)))=C_{t,d,\mfc}\) for all \(z\in\mfC^v\) since \(\Re(z)=E_0>\max\{\mfv_d,\zeta\}+t\), then it also conclude that
	\begin{align*}
		\lim_{\eta_0\downarrow0}\limsup_{N\to\infty}\oint_{\mfC^v}|f(z)|^2\Var(\tr\bbQ(z))dz=0,
	\end{align*}
	which completes our proof.


\begin{thebibliography}{10}

\bibitem{ajanki2019quadratic}
O.~Ajanki, L.~Erd{\H{o}}s, and T.~Kr{\"u}ger.
\newblock {\em Quadratic vector equations on complex upper half-plane}, volume 261.
\newblock American Mathematical Society, 2019.

\bibitem{ajanki2019stability}
O.~H. Ajanki, L.~Erd{\H{o}}s, and T.~Kr{\"u}ger.
\newblock Stability of the matrix dyson equation and random matrices with correlations.
\newblock {\em Probability Theory and Related Fields}, 173(1):293--373, 2019.

\bibitem{alt2020correlated}
J.~Alt, L.~Erd{\H{o}}s, T.~Kr{\"u}ger, and D.~Schr{\"o}der.
\newblock Correlated random matrices: band rigidity and edge universality.
\newblock {\em The Annals of Probability}, 48(2):963--1001, 2020.

\bibitem{arous2019landscape}
G.~B. Arous, S.~Mei, A.~Montanari, and M.~Nica.
\newblock The landscape of the spiked tensor model.
\newblock {\em Communications on Pure and Applied Mathematics}, 72(11):2282--2330, 2019.

\bibitem{BJYZ09}
Z.~Bai, D.~Jiang, J.-F. Yao, and S.~Zheng.
\newblock Corrections to lrt on large-dimensional covariance matrix by rmt.
\newblock {\em The Annals of Statistics}, 37(6 B):3822--3840, 2009.

\bibitem{bai2004clt}
Z.~Bai and J.~W. Silverstein.
\newblock Clt for linear spectral statistics of large-dimensional sample covariance matrices.
\newblock {\em Annals of Probability}, pages 553--605, 2004.

\bibitem{bai2010spectral}
Z.~Bai and J.~W. Silverstein.
\newblock {\em Spectral analysis of large dimensional random matrices}, volume~20.
\newblock Springer, 2010.

\bibitem{bao2022statistical}
Z.~Bao, X.~Ding, J.~Wang, and K.~Wang.
\newblock Statistical inference for principal components of spiked covariance matrices.
\newblock {\em The Annals of Statistics}, 50(2):1144--1169, 2022.

\bibitem{arous2020algorithmic}
G.~Ben~Arous, R.~Gheissari, and A.~Jagannath.
\newblock Algorithmic thresholds for tensor {PCA}.
\newblock {\em Annals of Probability}, 48(4):2052--2087, 2020.

\bibitem{arous2021long}
G.~Ben~Arous, D.~Z. Huang, and J.~Huang.
\newblock Long random matrices and tensor unfolding.
\newblock {\em The Annals of Applied Probability}, 33(6B):5753--5780, 2023.

\bibitem{benitez1997artificial}
J.~M. Ben{\'\i}tez, J.~L. Castro, and I.~Requena.
\newblock Are artificial neural networks black boxes?
\newblock {\em IEEE Transactions on neural networks}, 8(5):1156--1164, 1997.

\bibitem{camps2005kernel}
G.~Camps-Valls and L.~Bruzzone.
\newblock Kernel-based methods for hyperspectral image classification.
\newblock {\em IEEE Transactions on Geoscience and Remote Sensing}, 43(6):1351--1362, 2005.

\bibitem{che2017universality}
Z.~Che.
\newblock Universality of random matrices with correlated entries.
\newblock {\em Electronic Journal of Probability}, 22:1--38, 2017.

\bibitem{chen2019phase}
W.-K. Chen.
\newblock Phase transition in the spiked random tensor with rademacher prior.
\newblock {\em The Annals of Statistics}, 47(5):2734--2756, 2019.

\bibitem{cichocki2011tensor}
A.~Cichocki.
\newblock Tensor decompositions: new concepts in brain data analysis?
\newblock {\em Journal of the Society of Instrument and Control Engineers}, 50(7):507--516, 2011.

\bibitem{de2022random}
J.~H. de~Morais~Goulart, R.~Couillet, and P.~Comon.
\newblock A random matrix perspective on random tensors.
\newblock {\em The Journal of Machine Learning Research}, 23(1):12110--12145, 2022.

\bibitem{gotze2021concentration}
F.~G{\"o}tze, H.~Sambale, and A.~Sinulis.
\newblock Concentration inequalities for polynomials in $\alpha$-sub-exponential random variables.
\newblock {\em Electronic Journal of Probability}, 26:1--22, 2021.

\bibitem{grushin2013robust}
A.~Grushin, D.~D. Monner, J.~A. Reggia, and A.~Mishra.
\newblock Robust human action recognition via long short-term memory.
\newblock In {\em The 2013 International Joint Conference on Neural Networks (IJCNN)}, pages 1--8. IEEE, 2013.

\bibitem{guha2011learning}
T.~Guha and R.~K. Ward.
\newblock Learning sparse representations for human action recognition.
\newblock {\em IEEE transactions on pattern analysis and machine intelligence}, 34(8):1576--1588, 2011.

\bibitem{hallin2010optimal}
M.~Hallin, D.~Paindaveine, and T.~Verdebout.
\newblock Optimal rank-based testing for principal components.
\newblock {\em Annals of statistics}, 38:3245--3299, 2010.

\bibitem{han2022tensor}
Y.~Han and C.-H. Zhang.
\newblock Tensor principal component analysis in high dimensional cp models.
\newblock {\em IEEE Transactions on Information Theory}, 69(2):1147--1167, 2022.

\bibitem{he2014dusk}
L.~He, X.~Kong, P.~S. Yu, X.~Yang, A.~B. Ragin, and Z.~Hao.
\newblock Dusk: A dual structure-preserving kernel for supervised tensor learning with applications to neuroimages.
\newblock In {\em Proceedings of the 2014 SIAM International Conference on Data Mining}, pages 127--135. SIAM, 2014.

\bibitem{he2017multi}
L.~He, C.-T. Lu, H.~Ding, S.~Wang, L.~Shen, P.~S. Yu, and A.~B. Ragin.
\newblock Multi-way multi-level kernel modeling for neuroimaging classification.
\newblock In {\em Proceedings of the IEEE conference on computer vision and pattern recognition}, pages 356--364, 2017.

\bibitem{hillar2013most}
C.~J. Hillar and L.-H. Lim.
\newblock Most tensor problems are np-hard.
\newblock {\em Journal of the ACM (JACM)}, 60(6):1--39, 2013.

\bibitem{huang2022power}
J.~Huang, D.~Z. Huang, Q.~Yang, and G.~Cheng.
\newblock Power iteration for tensor {PCA}.
\newblock {\em Journal of Machine Learning Research}, 23(128):1--47, 2022.

\bibitem{jagannath2020statistical}
A.~Jagannath, P.~Lopatto, and L.~Miolane.
\newblock Statistical thresholds for tensor {PCA}.
\newblock {\em The Annals of Applied Probability}, 30(4):1910--1933, 2020.

\bibitem{ji20123d}
S.~Ji, W.~Xu, M.~Yang, and K.~Yu.
\newblock 3d convolutional neural networks for human action recognition.
\newblock {\em IEEE transactions on pattern analysis and machine intelligence}, 35(1):221--231, 2012.

\bibitem{khorunzhy1996asymptotic}
A.~M. Khorunzhy, B.~A. Khoruzhenko, and L.~A. Pastur.
\newblock Asymptotic properties of large random matrices with independent entries.
\newblock {\em Journal of Mathematical Physics}, 37(10):5033--5060, 1996.

\bibitem{kim2008canonical}
T.-K. Kim and R.~Cipolla.
\newblock Canonical correlation analysis of video volume tensors for action categorization and detection.
\newblock {\em IEEE Transactions on Pattern Analysis and Machine Intelligence}, 31(8):1415--1428, 2008.

\bibitem{kolda2001orthogonal}
T.~G. Kolda.
\newblock Orthogonal tensor decompositions.
\newblock {\em SIAM Journal on Matrix Analysis and Applications}, 23(1):243--255, 2001.

\bibitem{kolda2009tensor}
T.~G. Kolda and B.~W. Bader.
\newblock Tensor decompositions and applications.
\newblock {\em SIAM review}, 51(3):455--500, 2009.

\bibitem{lesieur2017statistical}
T.~Lesieur, L.~Miolane, M.~Lelarge, F.~Krzakala, and L.~Zdeborov{\'a}.
\newblock Statistical and computational phase transitions in spiked tensor estimation.
\newblock {\em 2017 IEEE International Symposium on Information Theory (ISIT)}, pages 511--515, 2017.

\bibitem{lim2005singular}
L.-H. Lim.
\newblock Singular values and eigenvalues of tensors: a variational approach.
\newblock In {\em 1st IEEE International Workshop on Computational Advances in Multi-Sensor Adaptive Processing, 2005.}, pages 129--132. IEEE, 2005.

\bibitem{supplementary}
R.~Liu, Z.~Wang, and J.~Yao.
\newblock Supplementary materials of ``alignment and matching tests for high-dimensional tensor signals by tensor contraction''.
\newblock {\em ~}, 2024+.

\bibitem{lu2008mpca}
H.~Lu, K.~N. Plataniotis, and A.~N. Venetsanopoulos.
\newblock Mpca: Multilinear principal component analysis of tensor objects.
\newblock {\em IEEE transactions on Neural Networks}, 19(1):18--39, 2008.

\bibitem{10.1214/09-AOP452}
A.~Lytova and L.~Pastur.
\newblock {Central limit theorem for linear eigenvalue statistics of random matrices with independent entries}.
\newblock {\em The Annals of Probability}, 37(5):1778 -- 1840, 2009.

\bibitem{makantasis2018tensor}
K.~Makantasis, A.~D. Doulamis, N.~D. Doulamis, and A.~Nikitakis.
\newblock Tensor-based classification models for hyperspectral data analysis.
\newblock {\em IEEE Transactions on Geoscience and Remote Sensing}, 56(12):6884--6898, 2018.

\bibitem{makantasis2015deep}
K.~Makantasis, K.~Karantzalos, A.~Doulamis, and N.~Doulamis.
\newblock Deep supervised learning for hyperspectral data classification through convolutional neural networks.
\newblock In {\em 2015 IEEE international geoscience and remote sensing symposium (IGARSS)}, pages 4959--4962. IEEE, 2015.

\bibitem{mcdiarmid1989method}
C.~McDiarmid et~al.
\newblock On the method of bounded differences.
\newblock {\em Surveys in combinatorics}, 141(1):148--188, 1989.

\bibitem{mukherjee2015hypothesis}
R.~Mukherjee, N.~S. Pillai, and X.~Lin.
\newblock Hypothesis testing for high-dimensional sparse binary regression.
\newblock {\em The Annals of Statistics}, 43(1):352--381, 2015.

\bibitem{naumov2019bootstrap}
A.~Naumov, V.~Spokoiny, and V.~Ulyanov.
\newblock Bootstrap confidence sets for spectral projectors of sample covariance.
\newblock {\em Probability Theory and Related Fields}, 174(3):1091--1132, 2019.

\bibitem{PanZhou08}
G.~Pan and W.~Zhou.
\newblock Central limit theorem for signal-to-interference ratio of reduced rank linear receiver.
\newblock {\em Ann. Appl. Probab.}, pages 1232--1270, 2008.

\bibitem{perry2020statistical}
A.~Perry, A.~S. Wein, and A.~S. Bandeira.
\newblock Statistical limits of spiked tensor models.
\newblock {\em Annales de l’Institut Henri Poincar{\'e}. Probabilit{\'e}s et Statistiques}, 56(1):230--264, 2020.

\bibitem{richard2014statistical}
E.~Richard and A.~Montanari.
\newblock A statistical model for tensor pca.
\newblock {\em Advances in Neural Information Processing Systems}, 27, 2014.

\bibitem{rodriguez2008action}
M.~D. Rodriguez, J.~Ahmed, and M.~Shah.
\newblock Action mach a spatio-temporal maximum average correlation height filter for action recognition.
\newblock In {\em 2008 IEEE conference on computer vision and pattern recognition}, pages 1--8. IEEE, 2008.

\bibitem{Seddik2024203}
M.~E.~A. Seddik, M.~Guillaud, and R.~Couillet.
\newblock When random tensors meet random matrices.
\newblock {\em Annals of Applied Probability}, 34(1 A):203 – 248, 2024.

\bibitem{silin2020hypothesis}
I.~Silin and J.~Fan.
\newblock Hypothesis testing for eigenspaces of covariance matrix.
\newblock {\em arXiv preprint arXiv:2002.09810}, 2020.

\bibitem{silin2018bayesian}
I.~Silin and V.~Spokoiny.
\newblock Bayesian inference for spectral projectors of the covariance matrix.
\newblock {\em Electronic Journal of Statistics}, 12(1):1948--1987, 2018.

\bibitem{sun2015human}
L.~Sun, K.~Jia, D.-Y. Yeung, and B.~E. Shi.
\newblock Human action recognition using factorized spatio-temporal convolutional networks.
\newblock In {\em Proceedings of the IEEE international conference on computer vision}, pages 4597--4605, 2015.

\bibitem{sun2017store}
W.~W. Sun and L.~Li.
\newblock Store: sparse tensor response regression and neuroimaging analysis.
\newblock {\em The Journal of Machine Learning Research}, 18(1):4908--4944, 2017.

\bibitem{tan2012logistic}
X.~Tan, Y.~Zhang, S.~Tang, J.~Shao, F.~Wu, and Y.~Zhuang.
\newblock Logistic tensor regression for classification.
\newblock In {\em International Conference on Intelligent Science and Intelligent Data Engineering}, pages 573--581. Springer, 2012.

\bibitem{tao2005supervised}
D.~Tao, X.~Li, W.~Hu, S.~Maybank, and X.~Wu.
\newblock Supervised tensor learning.
\newblock In {\em Fifth IEEE International Conference on Data Mining (ICDM'05)}, pages 8--pp. IEEE, 2005.

\bibitem{udell2019big}
M.~Udell and A.~Townsend.
\newblock Why are big data matrices approximately low rank?
\newblock {\em SIAM Journal on Mathematics of Data Science}, 1(1):144--160, 2019.

\bibitem{vakalopoulou2015building}
M.~Vakalopoulou, K.~Karantzalos, N.~Komodakis, and N.~Paragios.
\newblock Building detection in very high resolution multispectral data with deep learning features.
\newblock In {\em 2015 IEEE international geoscience and remote sensing symposium (IGARSS)}, pages 1873--1876. IEEE, 2015.

\bibitem{yang2014sparse}
M.~Yang, L.~Zhang, X.~Feng, and D.~Zhang.
\newblock Sparse representation based fisher discrimination dictionary learning for image classification.
\newblock {\em International Journal of Computer Vision}, 109(3):209--232, 2014.

\bibitem{yang2017tensor}
Y.~Yang, D.~Krompass, and V.~Tresp.
\newblock Tensor-train recurrent neural networks for video classification.
\newblock In {\em International conference on machine learning}, pages 3891--3900. PMLR, 2017.

\bibitem{ZBY15a}
S.~Zheng, Z.~Bai, and J.~Yao.
\newblock Clt for eigenvalue statistics of large-dimensional general {F}isher matrices with applications.
\newblock {\em Bernoulli}, 23(2):1130--1178, 2017.

\bibitem{zhou2013tensor}
H.~Zhou, L.~Li, and H.~Zhu.
\newblock Tensor regression with applications in neuroimaging data analysis.
\newblock {\em Journal of the American Statistical Association}, 108(502):540--552, 2013.

\end{thebibliography}
\end{document}